\theoremstyle{plain}
\newtheorem{theorem}{Theorem}[chapter]
\newtheorem{lemma}{Lemma}[chapter]
\newtheorem{corollary}{Corollary}[chapter]
\newenvironment{customthm}[1]{\innercustomthm}{\endinnercustomthm}
\theoremstyle{definition}
\newtheorem{definition}{Definition}[chapter]
\theoremstyle{remark}
\newtheorem{remark}{Remark}[chapter]
\DeclareMathOperator*{\E}{\mathbb{E}}
\newcommand{\ip}[2]{\langle{#1},{#2}\rangle}
\DeclareMathOperator{\poly}{poly}
\DeclareMathOperator{\sign}{sign}
\DeclareMathOperator*{\argmin}{arg\,min}
\newcommand{\norm}[1]{\left\lVert #1 \right\rVert}
\DeclareMathOperator{\dist}{dist}
\DeclareMathOperator{\simil}{sim}
\newcommand{\LSH}{\mathcal{H}}
\newcommand{\LSF}{\mathcal{F}}
\newcommand{\DSH}{\mathcal{D}}
\newcommand{\ALSH}{\mathcal{A}}
\newcommand{\LSM}{\mathcal{M}}
\newcommand{\cube}[1]{\{-1,1\}^{#1}}
\newcommand{\bitcube}[1]{\{0,1\}^{#1}}
\newcommand{\F}{\mathbf{F}}
\newcommand{\Q}{\mathbf{Q}}
\newcommand{\U}{\mathbf{U}}
\newcommand{\RR}{\mathcal{V}}
\newcommand{\TLSF}{\mathbf{F}^{\otimes \tau}\!}  
\newcommand*\conc{\mathbin{\|}}
\DeclareMathOperator{\id}{\mathrm{I}}
\newcommand{\real}{\mathbb{R}}
\newcommand{\sphere}[1]{\mathbb{S}^{#1}}
\newcommand{\corrsub}[1]{\substack{(x, y) \\ {#1}\text{-correlated}}}
\newcommand*\mcap{\mathbin{\mathpalette\mcapinn\relax}}
\newcommand*\mcapinn[2]{\vcenter{\hbox{$\mathsurround=0pt\ifx\displaystyle#1\textstyle\else#1\fi\bigcap$}}}
\newcommand{\bit}[1]{\langle{#1}\rangle}
\newcommand{\Var}{\mathrm{Var}}
\newcommand{\1}{\mathds{1}}
\newcommand{\colvec}[1]{\begin{bmatrix}#1\end{bmatrix}} 
\newcommand{\K}{\mathbb{K}}
\newcommand{\sen}[1]{\K_{\LSH}(#1)} 
\newcommand{\family}{There exists a randomized data structure that takes as input positive integers $u$, $r$, $k$, $t$ and selects a family of functions $\mathcal{F}$ from $[u]$ to $[r]$. 
In the word RAM model with word size $w$ the data structure satisfies the following:}
\newcommand{\R}{\mathbb{R}}
\newcommand{\A}{\mathcal{A}}
\newcommand{\GF}{\mathbb{F}}
\DeclareMathOperator{\diag}{diag}
\newcommand{\simjoin}{{\;\bowtie_\lambda\;}}
\newcommand{\cp}{\textsc{Chosen Path}\xspace}
\newcommand{\cpsj}{\textsc{CPSJoin}\xspace}
\newcommand{\mh}{\textsc{MinHash}\xspace}
\newcommand{\blsh}{\textsc{BayesLSH}\xspace}
\newcommand{\all}{\textsc{AllPairs}\xspace}
\newcommand{\pp}{\textsc{PPJoin}\xspace}
\newcommand{\sectionquote}[1]{
\vspace{-1.5cm}
\begin{center}
\begin{minipage}{0.8\textwidth}
\textit{`#1'} 
\end{minipage}
\end{center}
\vspace{0.66cm}
}
\begin{document}
\frontmatter

\thetitlepage
\newpage

\begin{abstract}
\noindent We study the problem of approximate near neighbor (ANN) search and show the following results: 
\begin{itemize}
	\item An improved framework for solving the ANN problem using locality-sensitive hashing, 
		reducing the number of evaluations of locality-sensitive hash functions and the word-RAM complexity compared to the standard framework.
	\item A framework for solving the ANN problem with space-time tradeoffs as well as tight upper and lower bounds for the space-time tradeoff of framework solutions to the ANN problem under cosine similarity.
	\item A novel approach to solving the ANN problem on sets along with a matching lower bound, improving the state of the art.
		  A self-tuning version of the algorithm is shown through experiments to outperform existing similarity join algorithms. 
	\item Tight lower bounds for asymmetric locality-sensitive hashing which has applications to the approximate furthest neighbor problem, orthogonal vector search, and annulus queries.
	\item A proof of the optimality of a well-known Boolean locality-sensitive hashing scheme.
\end{itemize}
We study the problem of efficient algorithms for producing high-quality pseudorandom numbers and obtain the following results:
\begin{itemize}
	\item A deterministic algorithm for generating pseudorandom numbers of arbitrarily high quality in constant time using near-optimal space.
	\item A randomized construction of a family of hash functions that outputs pseudorandom numbers of arbitrarily high quality with space usage and running time nearly matching known cell-probe lower bounds.
\end{itemize}
\end{abstract}

\newpage

\begin{otherlanguage}{danish}
\begin{abstract}
	Vi undersøger et grundlæggende problem indenfor approksimativ søgning: tilnærmelsesvis nær nabo (TNN) problemet, og viser følgende resultater:
	\begin{itemize}
		\item En forbedret generel løsning af TNN problemet som reducerer antal evalueringer af afstandsfølsomme spredefunktioner.
		\item En generel løsning af TNN problemet som giver mulighed for tid-plads afvejning samt tætte øvre og nedre grænser for TNN problemet med tid-plads afvejning under kosinuslighed.
		\item En ny tilgang til løsning af TNN problemet på mængder samt en matchende nedre grænse. 
			  En adaptiv version af algoritmen til approksimativ sammenføjning vises ved eksperimenter at være konkurrencedygtig. 
		\item Tætte nedre grænser for asymmetrisk afstandsfølsom spredning som har anvendelser til approksimativ søgning efter fjerne naboer, ortogonale vektorer, og annulus forespørgsler.
		\item Et optimalitetsbevis for en velkendt familie af Boolske afstandsfølsomme spredefunktioner.
	\end{itemize}
	Vi undersøger problemet at finde effektive algoritmer til produktion af pseudotilfældighed af høj kvalitet og opnår følgende resultater:
	\begin{itemize}
		\item En deterministisk algoritme til generation af pseudotilfældige tal af vilkårlig høj kvalitet i konstant tid og med tæt på optimalt pladsforbrug.
		\item En randomiseret konstruktion af en familie af spredefunktioner som afbilder til pseudotilfældige tal af vilkårlig høj kvalitet, med evalueringstid og pladsforbrug tæt på den nedre grænse.
	\end{itemize}
\end{abstract}
\end{otherlanguage}

\newpage

\begin{acknowledgements}
I am very grateful to Rasmus Pagh for advising me for the past almost five years.
Most of my favorite results have come through my collaboration with Rasmus where his overview and technical strength complements my intuition.
I always feel that Rasmus is ready to listen to my ideas, and to encourage me and guide my research in the right direction.
I cannot imagine a better advisor.

I would like to thank my collegues in the 4B corridor at ITU for creating a friendly academic environment where I enjoy spending my time.
In particular I would like to thank my current and former office mates Johan Sivertsen, Matteo Dusefante, Thomas Ahle, Martin Aumüller, Morten Stöckel, and Ninh Pham.
Thore Husfeldt also deserves special thanks for his stimulating lunch discussions on issues ranging from superintelligence to immigration policy. 

I would like to thank Greg Valiant for hosting my stay at Stanford in the fall of 2015 and Michael Mitzenmacher for hosting my stay at Harvard in the fall of 2017.
Specifically I would like to thank Josh Alman, Michael Kim, Zhao Song, and Aviad Rubinstein for making my time spent abroad a pleasant and social experience.  

Finally I want to thank my parents Tom and Kirsten for supporting me, 
my brother Anders for tolerating living with a somewhat disorganized PhD student, 
and my girlfriend Elisabeth for listening to me and helping me through difficult times.
\end{acknowledgements}

\newpage
\begin{center}
\begin{minipage}{0.57\textwidth}
\begin{itshape} 
All that is gold does not glitter, \\
Not all those who wander are lost; \\
The old that is strong does not wither, \\
Deep roots are not reached by the frost. \\ 

From the ashes, a fire shall be woken, \\
A light from the shadows shall spring; \\
Renewed shall be blade that was broken, \\
The crownless again shall be king. \\
\end{itshape}
\end{minipage}
\end{center}
\begin{minipage}{0.85\textwidth}
\begin{flushright} \footnotesize
		J. R. R. Tolkien (1892--1973)
\end{flushright}
\end{minipage}

\cleardoublepage
\setcounter{tocdepth}{1}
\tableofcontents

\mainmatter

\midsloppy
\sloppybottom

\chapter{Introduction}
\section{Part I: Similarity search}
Similarity search in large collections of high-dimensional objects is a problem that is well-motivated by numerous applications.
Consider for example the representation of an image by a $d$-dimensional feature vector $x$, where each entry $x_i$ denotes the fraction of pixels of color $i$ in the image. 
Given a collection of images $P$ and a query image $q$, we could for example be interested in finding the nearest neighbor of $q$:
the image $x \in P$ such that the distance $\dist(q,x)$ is minimized, for some appropriate choice of distance function. 
Applications of near neighbor search include:
\begin{itemize}
	\item Classification: Given a collection $P$ of labelled objects and an unlabelled object $q$, classify $q$ according to the label of its nearest neighbor in $P$.
	\item Recommender systems: Find similar users, movies, songs, books etc.\ to be used for recommendation.
	\item Duplicate detection: Remove near-identical objects from a collection, for example duplicate web pages from the index of a search engine.  
\end{itemize}
The trivial solution to the near neighbor problem would be to iterate through every $x \in P$ and compute $\dist(q, x)$ while keeping track of the nearest neighbor found so far.
If we let $n = |P|$ denote the size of our collection and assume that it takes time $O(d)$ to compute the distance between a pair of objects, then the trivial solution uses time $O(dn)$.

Suppose we are interested in preprocessing the collection $P$ into a data structure that supports answering queries faster than the trivial solution.
In two-dimensional Euclidean space there exists a solution based on the Voronoi diagram of $P$ with space usage $O(n)$ and query time $O(\log n)$~\cite{berg2008}.
In higher dimensions, the best known solutions to the nearest neighbor problem either suffer from space usage or query time that is exponential in $d$~\cite{har-peled2012}. 
This phenomenon is known as the ``curse of dimensionality'' and has recently been substantiated by conditional hardness results~\cite{alman2015, david2016curse, williams2018difference, rubinstein2018hardness}, showing for example that the problem of finding all nearest neighbors in a collection of $n$ points in $d$-dimensional Euclidean space cannot be solved in time subquadratic in $n$ when $d = \omega(\log n)$ unless the Strong Exponential Time Hypothesis (SETH) is false~\cite{williams2004}. 

In order to efficiently solve similarity search problems in high dimensional spaces, researchers and practitioners have turned to approximate solutions. 
Instead of finding the exact nearest neighbor of a query point, we settle for finding a point that is some approximation factor $c > 1$ times further away than the nearest neighbor.
The algorithms and data structures for similarity search in this thesis are primarily aimed at providing efficient solutions to the \emph{approximate near neighbor} problem defined as follows:
\begin{definition}\label{intro:def:ann}
Let $P \subseteq X$ be a collection of $|P| = n$ points in a distance space $(X, \dist)$.
A solution to the \emph{$(r, cr)$-near neighbor} problem is a data structure that supports the following query operation:
Given a query $q \in X$, if there exists $x \in P$ with $\dist(q, x) \leq r$ return $x' \in P$ with $\dist(q, x') < cr$.  
\end{definition}
The $(r, cr)$-near neighbor problem differs from the nearest neighbor problem by searching for any point within a fixed radius $r$ of the query point and allowing us to return points at distance up to $cr$ even though better candidates exist.
It will be convenient to also define the \emph{$(s_1, s_2)$-similarity problem} as the natural equivalent of the $(r, cr)$-near neighbor problem where we measure similarities rather than distances, 
i.e., we wish to report find a point with similarity $\simil(q, x) \geq s_1$ and we are willing to accept points with similarity $s_2 < s_1$. 

By allowing an approximation factor $c > 1$ it is possible to solve the $(r, cr)$-near neighbor problem in Euclidean space (and many other spaces) with query time that is sublinear in $n$ and polynomial in $d$ using space polynomial in $d$ and $n$~\cite{indyk1998, datar2004}.
However, even approximation has its limits when it comes to alleviating the curse of dimensionality. 
Rubinstein~\cite{rubinstein2018hardness} has recently shown that unless SETH is false, for every choice of constants $\gamma, \delta > 0$ there exists $\varepsilon > 0$ such that a solution to the $(1+\varepsilon)$-approximate near neighbor problem with $O(n^\gamma)$ preprocessing time must use query time $\Omega(n^{\delta})$. 
\subsection{Locality-sensitive hashing}
One of the most successful approaches for finding solutions to the approximate near neighbor problem in various spaces is known as \emph{locality-sensitive hashing}, commonly abbreviated as LSH (see~\cite{andoni2008, wang2014} for more information).
The idea behind locality-sensitive hashing is to construct a distribution $\LSH$ over functions $h \colon X \to R$ that are used to partition the space $X$.
This randomized partitioning scheme is locality-sensitive in the sense that close points $x, y \in X$ are more likely to hash to the same part of a randomly sampled partition.
When discussing locality-sensitive hashing, we will sometimes refer to a distribution of locality-sensitive hash functions as a family.
\begin{definition}[Locality-sensitive hashing {\cite{indyk1998}}]\label{intro:def:lsh}
Let $(X, \dist)$ be a distance space and let $\LSH$ be a distribution over functions $h \colon X \to R$.
We say that $\LSH$ is \emph{$(r, cr, p_1, p_2)$-sensitive} if for $x, y \in X$ and $h \sim \LSH$ we have that:
\begin{itemize}
\item If $\dist(x, y) \leq r$ then $\Pr[h(x) = h(y)] \geq p_1$.
\item If $\dist(x, y) \geq cr$ then $\Pr[h(x) = h(y)] \leq p_2$.
\end{itemize}
\end{definition}
We can speed up approximate near neighbor searches at the cost of some additional preprocessing by partitioning the set of points $P$ according to $L$ randomly sampled locality-sensitive hash functions $h_1, \dots, h_L$.
A query for a point $q$ proceeds by considering the points of $P$ that collide with $q$ under $h_1, \dots, h_L$.
Intuitively we want to sample enough hash functions such that the ball of radius $r$ around every potential query point $q \in X$ is covered by the union of the parts $h_{1}^{-1}(q), \dots, h_{L}^{-1}(q)$.
This approach yields the following general LSH framework for solving the approximate near neighbor problem (for more details see Chapter~\ref{ch:fast}).
\begin{theorem}[Indyk-Motwani {\cite{indyk1998, har-peled2012}, simplified}]\label{intro:thm:lsh_im_simple}
Let $\LSH$ be $(r, cr, p_1, p_2)$-sensitive and let $\rho = \frac{\log(1/p_1)}{\log(1/p_2)}$, then there exists a solution to the $(r, cr)$-near neighbor problem using $O(n^{1+\rho})$ words of space and with query time dominated by $O(n^{\rho} \log n)$ evaluations of functions from~$\LSH$.
\end{theorem}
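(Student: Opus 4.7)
The plan is to carry out the classical two-level amplification of the locality-sensitive family $\LSH$: concatenation to suppress false-positive collisions, and independent repetition to boost the true-positive probability to a constant.

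First I would form a new family $\mathcal{G}$ of functions $g = (h_1, \dots, h_k)$, where each $h_i$ is drawn independently from $\LSH$ and $g(x) = (h_1(x), \dots, h_k(x))$. By independence, $\Pr[g(x)=g(y)] \geq p_1^k$ when $\dist(x,y) \leq r$ and $\leq p_2^k$ when $\dist(x,y) \geq cr$. I would then set $k = \lceil \log_{1/p_2} n \rceil$ so that $p_2^k \leq 1/n$, which makes the expected number of far-away collisions between $q$ and $P$ under a single $g$ at most $1$. With this choice, $p_1^k = n^{-\rho}$ by the definition of $\rho$.

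Next I would sample $L = \lceil n^{\rho} \rceil$ independent functions $g_1, \dots, g_L$ from $\mathcal{G}$ and build $L$ hash tables $T_1, \dots, T_L$, where $T_\ell$ stores every $x \in P$ under key $g_\ell(x)$ using a standard dictionary (e.g.\ universal hashing) so that each insertion/lookup costs $O(1)$ expected time independent of the key length. The query algorithm computes $g_1(q), \dots, g_L(q)$, scans the corresponding buckets, and returns the first candidate $x$ verified to satisfy $\dist(q,x) < cr$, aborting after inspecting $3L$ total candidates to protect the worst case. Correctness follows from two observations: if some $x^\ast$ with $\dist(q,x^\ast) \leq r$ exists, it collides with $q$ in at least one table with probability $\geq 1 - (1-n^{-\rho})^{n^{\rho}} \geq 1 - 1/e$; and the expected number of spurious far collisions summed across tables is at most $n \cdot p_2^k \cdot L \leq L$, so by Markov's inequality the $3L$-cap is hit with probability $\leq 1/3$. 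A union bound gives constant overall success probability, which can be boosted by standard repetition.

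The accounting is then straightforward: the data structure uses $L$ tables each holding $n$ entries for $O(n^{1+\rho})$ words, and each query performs $Lk = O(n^{\rho}\log n)$ evaluations of functions from $\LSH$ plus $O(L)$ work for dictionary lookups and distance verifications, which is dominated by the LSH evaluations.

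The main obstacle I would anticipate is not any single step but the bookkeeping needed to justify the ``dominated by'' phrasing cleanly: one must argue that hashing the $k$-tuple keys down to machine words, performing dictionary operations, and verifying candidate distances are all absorbed by the LSH-evaluation count, and one must handle the degenerate regime in which $p_2$ is very close to $1$ so that $k$ is not $O(\log n)$ in the naive sense. Both issues are usually handled by treating $\log(1/p_2)$ as a positive constant (as the simplified statement implicitly does) and by using universal hashing on the keys; the rest of the argument is the probabilistic amplification sketched above.
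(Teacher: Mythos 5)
Your proposal is correct and follows essentially the same route as the paper: powering with $k = \lceil \log_{1/p_2} n \rceil$, taking $L = O(n^{\rho})$ independent tables so a near neighbor collides with constant probability, bounding the expected number of far collisions by $L$, and using Markov plus independent repetition to control the worst case, with universal hashing handling the $k$-tuple keys just as the paper does. The only cosmetic difference is that the paper sets $L = \lceil (\ln 2)/p_1^{k} \rceil$ to get failure probability exactly $1/2$ (and notes $p_1^{k} \geq p_1 n^{-\rho}$ rather than exactly $n^{-\rho}$), which does not change the asymptotics under the simplified constant-$p_1,p_2$ assumption.
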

\subsection{Examples} \label{intro:sec:examples}
To further introduce locality-sensitive hashing and the approach to solving the approximate near neighbor problem used in this thesis, 
we will present three simple and powerful families of locality-sensitive hash functions: 
Bit-sampling by Indyk and Motwani~\cite{indyk1998}, MinHash by Broder~\cite{broder1997syntactic}, and SimHash by Charikar~\cite{charikar2002}.
Indyk, Broder, and Charikar received the 2012 ACM Paris Kanellakis Theory and Practice Award ``\emph{for their groundbreaking work on Locality-Sensitive Hashing that has had great impact in many fields of computer science including computer vision, databases, information retrieval, machine learning, and signal processing}''~\cite{kanellakis}.
We proceed by describing each of these families in turn, introducing relevant notation as we go along.

\paragraph{Bit-sampling.}
Indyk and Motwani introduced a simple family of locality-sensitive hash functions $\LSH_{H}$ for the $d$-dimensional Boolean hypercube $\bitcube{d}$ under Hamming distance $\dist_{H}(x, y) = |\{ i \in [d] \mid x_i \neq y_i \}|$ where $[d]$ denotes the set $\{1, 2, \dots, d \}$.
We sample a function $h \sim \LSH$ by sampling $i$ uniformly at random in $[d]$ and setting $h(x) = x_i$.
It is easy to see that a pair of points fail to collide under a random hash function $h(x) = x_i$ if and only if $i$ is sampled from the set of coordinates where $x$ and $y$ differ. 
\begin{equation*}
	\Pr_{h \sim \LSH_{H}}[h(x) = h(y)] = 1 - \Pr_{h \sim \LSH_{H}}[h(x) \neq h(y)] =  1 - \dist_{H}(x,y)/d.
\end{equation*}
Suppose we want to use this function to solve the $(r, cr)$-near neighbor problem in Hamming space $(\bitcube{d}, \dist_{H})$.
Then, from Theorem \ref{intro:thm:lsh_im_simple} we optain a query exponent of
\begin{equation*}
	\rho = \frac{\log(1/(1 - r/d))}{\log(1/(1 - cr/d))} \leq 1/c 
\end{equation*}
where the details behind the last inequality can be found in~\cite{har-peled2012}.
In conclusion, bit-sampling gives a solution to the $(r, cr)$-near neighbor problem in Hamming space with query time roughly $n^{1/c}$ and space usage and preprocessing time roughly $n^{1 + 1/c}$.

\paragraph{MinHash.}
MinHash is a family of locality-sensitive hash functions with applications to similarity search and similarity estimation on sets under Jaccard similarity.
Given sets $x, y \subseteq [d]$ their Jaccard similarity is defined by $\simil_{J}(x, y) = |x \cap y| / |x \cup y|$.

A random hash function $h$ from the MinHash distribution $\LSH_{J}$ is specified by a random permutation of $[d]$ and hashes a set $x$ to the first element of $x$ in this permutation.
The permutation can be specified by a uniformly random hash function $f \colon [d] \to [0,1]$ where $[0,1]$ denotes the closed interval from $0$ to $1$.
Specifically, we sample a random $h \sim \LSH_{J}$ by sampling a uniformly random hash function $f \colon [d] \to [0,1]$ and setting 
\begin{equation*}
	h(x) = \argmin_{i \in x} f(i).
\end{equation*}
Two sets $x$ and $y$ collide under a random hash function $h \sim \LSH_{J}$ if and only if the smallest element of $x \cup y$ is contained in $x \cap y$.
Otherwise, the smallest element of $x$ is in $x {\setminus} y$ or the smallest element of $y$ is in $y {\setminus} x$ and there is no way the sets hash to the same element.
Since the smallest element of $x \cup y$ is uniformly distributed we get that
\begin{equation*}
	\Pr_{h \sim \LSH_{J}}[h(x) = h(y)] =  \frac{|x \cap y|}{|x \cup y|} = \simil_{J}(x, y).
\end{equation*}
MinHash gives a solution to the $(s_1, s_2)$-similarity problem with exponent $\rho = \log(1/s_1) / \log(1/s_2)$.

\paragraph{SimHash.}
SimHash is a family of Boolean-valued locality-sensitive hash functions for $\real^d$ under cosine similarity $\simil_{C}(x, y) = \cos(\theta(x, y))$ where $\theta(x, y)$ denotes the angle between $x$ and $y$.
We sample a function $h \sim \LSH_{C}$ by sampling a $d$-dimensional standard normal random variable $z \sim \mathcal{N}^{d}(0, 1)$ and setting
\begin{equation*}
	h(x) = \sign(\ip{x}{z}).
\end{equation*}
Intuitively, we sample a random hyperplane that goes through the origin and hash points depending on which side of the hyperplane they are on (the sign of the inner product $\ip{x}{y} = \sum_i x_i z_u$).
Due to the rotational invariance of the standard normal distribution the properties of this scheme can be analyzed in two dimensions.
The probability that two points on the unit circle are separated by a random line through the origin is exactly
\begin{equation*}
	\Pr_{h \sim \LSH_{C}}[h(x) = h(y)] = 1 - \theta(x, y)/\pi.
\end{equation*}
This scheme yields a solution to the $(s_1, s_2)$-similarity problem under cosine similarity with $\rho = \log(1 - \arccos(s_1)/\pi) / \log(1 - \arccos(s_2)/\pi)$. 
\subsection{Lower bounds}
Given a space $(X, \dist)$ and distance thresholds $r$, $cr$ we are interested in finding a $(r, cr, p_1, p_2)$-sensitive family with a value of $\rho = \log(1/p_1)/\log(1/p_2)$ that is as small as possible.
The primary technique for deriving locality-sensitive hashing lower bounds has been Fourier analysis of Boolean functions under noisy inputs (see the excellent book by O'Donnell for a comprehensive introduction~\cite{odonnell2014analysis}).
Lower bounds for locality-sensitive hashing schemes (distributions over functions) often follow from lower bounds on the behaviour of a single function  $f \colon \cube{d} \to R$ under randomly $\alpha$-correlated inputs, defined as follows:
\begin{definition}\label{intro:def:correlation}
	For $-1 \leq \alpha \leq 1$ and $x, y \in \cube{d}$ we say that $(x, y)$ is randomly $\alpha$-correlated 
	if the pairs $(x_i, y_i)$ are i.i.d.\ with $x_i$ uniform in $\cube{d}$ and 
	\begin{equation*}
		y_i =
		\begin{cases}
			x_i & \text{with probability } \frac{1 + \alpha}{2}, \\
			-x_i & \text{with probability } \frac{1 - \alpha}{2}. 
		\end{cases}
	\end{equation*}
\end{definition}
\noindent If two vectors $(x, y)$ are randomly $\alpha$-correlated their expected cosine similarity is $\alpha$, and their expected Hamming distance is given by $(1-\alpha)d/2$. 
As the dimensionality increases, the empirical correlation between $x$ and $y$ will be tightly concentrated around $\alpha$.

Let $0 \leq \beta < \alpha < 1$ and consider a $((1-\alpha)d/2, (1-\beta)d/2, p_1, p_2)$-sensitive family $\LSH$ for Hamming space $(\cube{d}, \dist_{H})$.
Combining lower bounds by O'Donnell et al.~\cite{odonnell2014optimal} and Andoni and Razenshteyn~\cite{andoni2016tight} (building on work by Motwani et al.~\cite{motwani2007}), we have that
\begin{equation} \label{eq:combined_lower}
	\rho = \frac{\log(1/p_1)}{\log(1/p_2)} \geq \max\left( \frac{\log(1/\alpha)}{\log(1/\beta)}, \frac{1 - \alpha}{1 + \alpha - 2\beta} \right) - o_{d}(1).
\end{equation}
The lower bounds require that $p_2$ is not too small as a function of $d$. 
In particular, only the trivial lower bound of $\rho \geq 0$ holds if $p_2$ can be exponentially small in $d$, but such families are typically not of interest for high-dimensional similarity search where we want $p_2 \approx 1/n$. 
For a more comprehensive discussion of this issue see~\cite{odonnell2014optimal}.

Compared against different constructions of locality-sensitive hash families, the two lower bounds comprising equation \eqref{eq:combined_lower} reveal interesting properties of the Boolean hypercube.
As $\alpha, \beta$ approach $1$ the lower bound of $\log(1/\alpha)/\log(1/\beta)$ is the larger of the two bounds. 
If we convert the lower bound to Hamming distance we get that $\rho \geq \log(1/(1 - 2r/d))/\log(1/(1 - 2cr/d) \approx 1/c$ for an $(r, cr, p_1, p_2)$-sensitive family when $r, cr \ll d$.
This lower bound is tight against the bit-sampling LSH of Indyk and Motwani.
The bit-sampling family can be described as randomly partitioning the Boolean hypercube into subcubes, 
so in a sense subcubes are an optimal ``shape'' for distinguishing between very short random walks and slightly longer random walks in the Boolean hypercube.
The lower bound of $1/c$ in Hamming space gives a lower bound of $1/c^p$ for $\ell_p^d$-spaces (vectors in $\real^d$ under the $\ell_p$-norm $\norm{x-y}_p = (\sum_i |x_i - y_i|^p)^{1/p}$).
This follows from a direct embedding of the Boolean hypercube in $\ell_p$-space. 

As $\beta$ approaches $0$ the lower bound of $(1 - \alpha)/(1 + \alpha - 2\beta)$ dominates.
Converted to Hamming distance this bound becomes $\rho \geq 1/(2c - 1)$.
For $\beta = 0$ this is tight against existing constructions that use balls to partition the hypercube~\cite{dubiner2010bucketing, andoni2014beyond, andoni2015practical}.
Loosely speaking, in this regime we see that balls in Hamming space are optimal for simultaneously minimizing volume (capturing $0$-correlated points) while maximizing the probability of capturing positively correlated points. 

In Hamming space, the family of locality-sensitive hash functions that give the best known upper bound on the $\rho$-value can essentially be described as follows: 
We sample a function $h \sim \LSH$ by sampling a sequence of $d$ balls of radius slightly below $d/2$ with the center of each ball being sampled uniformly at random from $\cube{d}$. 
A point $x \in \cube{d}$ is then hashed to the index of the first ball in the sequence that contains $x$.
As we increase $d$ and decrease the radius of the balls, this scheme has a $\rho$-value for the $((1 - \alpha)d/2, (1- \beta)d/2)$-near neighbor problem of
\begin{equation} \label{eq:best_upper}
	\rho = \left. \frac{1 - \alpha}{1 + \alpha} \middle/ \frac{1-\beta}{1 + \beta} \right. + o_{d}(1).
\end{equation}
This scheme also works on the unit sphere if we replace the balls by spherical caps~\cite{terasawa2007spherical, andoni2015practical}.
The size of the gap between the lower bound in equation \eqref{eq:combined_lower} and the upper bound \eqref{eq:best_upper} is shown in Figure \ref{fig:gap}.
Since the gap is less than $0.06$ it is difficult to argue that closing the gap would have huge practical implications,
especially since the lower order terms in existing constructions exceed this for most realistic applications~\cite{andoni2015practical}. 
Nevertheless, considering the tools that have gone into proving the existing lower bounds, 
we believe that it is of fundamental mathematical interest to understand how to best separate $\beta$-correlated points from $\alpha$-correlated points.
\begin{figure} 
	\centering
	\includegraphics[width=0.85\textwidth]{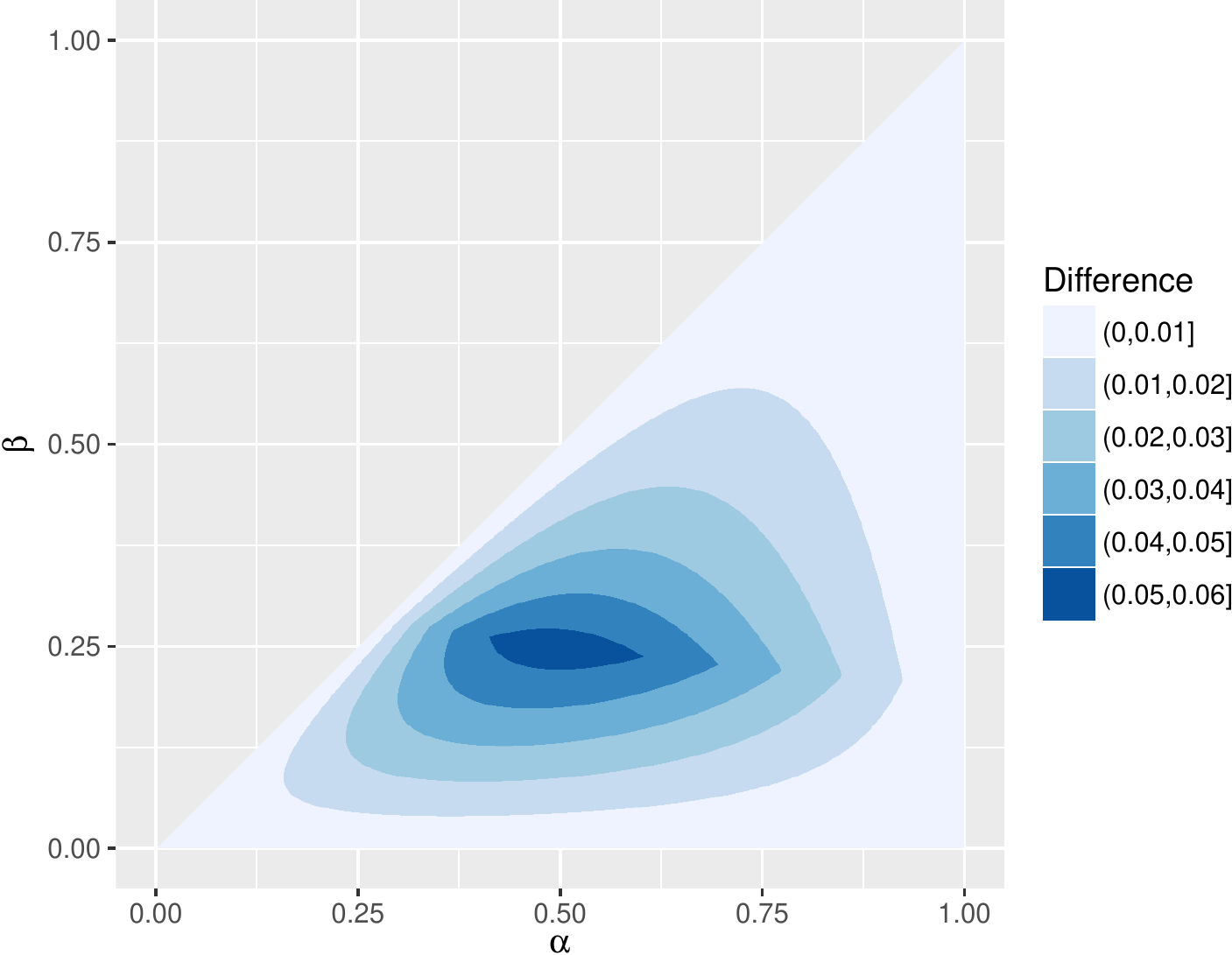}
	\caption{The gap in the $\rho$-value between the best known upper and lower bounds for families of $((1-\alpha)d/2, (1-\beta)d/2, p_1, p_2)$-sensitive hash functions in $d$-dimensional Hamming space.}
	\label{fig:gap}
\end{figure}
\subsection{Beyond locality-sensitive hashing}
A common theme among recent advances in the area of theoretical approximate similarity search has been to move beyond standard locality-sensitive hashing~\cite{andoni2014beyond, laarhoven2015, shrivastava2014asymmetric, becker2016, andoni2017optimal, christiani2017framework, christiani2017set, aumuller2017distance}.
The results in this direction usually modify part of the framework, for example by constructing the locality-sensitive family by looking at the data, 
but the underlying approach of using locality-sensitive mappings from points to buckets remains the same.
This thesis explores several variations of standard locality-sensitive hashing and we therefore briefly introduce some of this work here.

\paragraph{Data-dependent locality-sensitive hashing.}
A sequence of papers~\cite{andoni2014beyond, andoni2015optimal, andoni2016tight, andoni2017practical, andoni2017optimal} has explored the idea of data-dependent locality-sensitive hashing: 
If we allow the construction of $\LSH$ to depend on the set of data points $P$, how fast can we then solve the approximate near neighbor problem?
Andoni and Razenshteyn was able to show matching upper and lower bounds of $\rho = 1/(2c^2 - 1) + o_{d}(1)$ in Euclidean space~\cite{andoni2015optimal, andoni2016tight}. 
This matches standard LSH upper and lower bounds in the case of random instances on the unit sphere, and indeed the construction by Andoni and Razenshteyn is based on a reduction to this case. 
Unfortunately the construction and its analysis is complicated and suffer from large lower order terms~\cite{andoni2017optimal}, 
although recent work has found some success in striking a balance between algorithmic simplicity and theoretical optimality using data-dependence in Hamming space~\cite{andoni2017practical}. 

\paragraph{Asymmetric locality-sensitive hashing.}
Asymmetric locality-sensitive hashing extends the concept of standard locality-sensitive hashing to cover distributions over pairs of functions $(h, g) \sim \ALSH$ and studies how the probability of collision between pairs of points can be made to depend on the distance/similarity between the points~\cite{shrivastava2014asymmetric, aumuller2017distance}.
This modification to standard locality-sensitive hashing opens up new applications such as approximate search for furthest neighbors, orthogonal vectors~\cite{vijayanarasimhan2014hashing}, and annulus queries (see \cite{aumuller2017distance} for an overview).
In Chapter~\ref{ch:asymmetric} we show lower bounds for asymmetric locality-sensitive hashing.

\paragraph{Space-time tradeoffs.}
The standard locality-sensitive hashing framework offers a balanced space-time tradeoff that is the result of a symmetric query and update procedure: 
Every data point is stored in $O(n^\rho)$ buckets and during queries we probe $O(n^\rho)$ buckets.
A line of work has investigated how the query and update parts of the algorithm can be modified to yield different tradeoffs between space usage and query time~\cite{panigrahy2006, lv2007, andoni2009, kapralov2015, laarhoven2015, christiani2017framework, andoni2017optimal}.
Typically the performance of such solutions is expressed by two exponents: $\rho_u$ and $\rho_q$. 
During updates we store points in $O(n^{\rho_u})$ buckets and during queries we probe $O(n^{\rho_q})$ buckets.

Early work in this area focused on how to modify the standard locality-sensitive hashing query and update algorithms using an idea known as multi-probing~\cite{lv2007}.
Regular locality-sensitive hashing uses $L = O(n^\rho)$ hash functions $h_1, \dots, h_L$.
Suppose $h_{l}(q)$ denotes the $l$th bucket to be probed during the standard LSH query algorithm.
By inspecting buckets in the neighborhood of $h_{l}(q)$, for example by adding some noise $z$ to $q$ and probing $h(q + z)$, we can increase the probability of finding a near neighbor of $q$, which in turn allows us to reduce $L$ while maintaining correctness. 

Recent breakthroughs in this area have come by abandoning the locality-sensitive hashing framework in favor of a more direct approach based on locality-sensitive filtering~\cite{laarhoven2015, christiani2017framework}.
Finally, Andoni et al.~\cite{andoni2017optimal} have combined their data-dependent approach to locality-sensitive hashing with the best known space-time tradeoff solutions for random data to obtain optimal space-time tradeoffs, as shown by matching lower bounds.
The optimal trade-off between $\rho_q, \rho_u \geq 0$ for the $(r, cr)$-near neighbor problem in Euclidean space can be described by the equation
\begin{equation*}
	c^2 \sqrt{\rho_q} + (c^2 - 1)\sqrt{\rho_u} = \sqrt{2c^2 - 1}.
\end{equation*}
For a balanced tradeoff this collapses to $1/(2c^2 -1 )$ which is tight for data-dependent locality-sensitive hashing, but the bound has been shown to be tight for every choice of $\rho_q, \rho_u$ that satisfies the equation.

\paragraph{Locality-sensitive filters and maps.}
Locality-sensitive filtering~\cite{becker2016} differs from locality-sensitive hashing in that it uses locality-sensitive subsets of space (filters) rather than locality-sensitive partitions (hash functions) to solve the approximate near neighbor problem.
An example of a locality-sensitive filter family is the distribution over balls of some fixed radius in Hamming space.
This idea is further extended to allow asymmetry by using different filters for queries and updates~\cite{laarhoven2015, christiani2017framework}. 
It turns out that the filter family of consisting of pairs of concentric balls in Hamming space can be used to solve the approximate near-neighbor problem with optimal space-time tradeoffs, matching the lower bound of Andoni et al.~\cite{andoni2017optimal} for random data. 
Chapter~\ref{ch:filters} further introduces locality-sensitive filtering and space-time tradeoffs.

In even greater generality we can think of locality-sensitive hashing and filtering as being approaches to constructing randomized mappings $M \colon X \to 2^{R}$ (where $2^R$ denotes the power set of $R$) from a space $(X, \dist)$ to a collection of $|R|$ buckets that satisfy certain properties. 
Recent work on set similarity search (Chapter \ref{ch:sparse}) and improvements to the standard locality-sensitive hashing framework (Chapter \ref{ch:fast}) explores these ideas and obtains efficient search algorithms by deviating from the standard approach. 
\section{Part II: Pseudorandom hashing and generators}
The second part of this thesis contains results on efficient pseudorandom hash functions and pseudorandom number generators.
We are interested in replacing the use of true randomness in randomized algorithms and data structures with the output of a pseudorandom hash function or generator, 
stretching a small random seed into a much larger output of pseudorandom values, while retaining guarantees on the performance of these algorithms.
For a primer on the general study of pseudorandom generators see~\cite{goldreich2010}.

\paragraph{Universal hashing.}
The pseudorandomness part of this thesis focuses on one specific type of pseudorandomness known as $k$-wise independence or $k$-independence,
first introduced to the field of computer science through the concept of universal hashing by Carter and Wegman~\cite{carter1977}.
\begin{definition} \label{intro:def:kindependence}
Let $k$ be a positive integer and let $\mathcal{F}$ be a family of functions from $U$ to $R$.
We say that~$\mathcal{F}$ is a \emph{$k$-independent} family of functions if for every choice of $k$ distinct keys $x_{1}, \dots, x_{k}$ and arbitrary values $y_{1}, \dots, y_{k}$ we have that 
\begin{equation*}
\Pr_{f \sim \mathcal{F}}[f(x_{1}) = y_{1} \land f(x_{2}) = y_{2} \land \dots \land f(x_{k}) = y_{k}] = |R|^{-k}.
\end{equation*}
Furthermore, we say that $f$ is $k$-independent when it is selected uniformly at random from a family of $k$-independent functions.
\end{definition}
We can sample a $k$-independent hash function $f(x) = \sum_{i = 0}^{k-1} a_i x^i \bmod p$ by sampling each $a_i$ uniformly at random from the set $\{0, 1, \dots, p-1\}$ where $p$ is prime. 
In fact, the family of polynomials of degree at most $k-1$ over a finite field is $k$-independent~\cite{joffe1974}.
We are typically interested in applications where the size of the universe $u = |U|$ is much larger than the degree of independence $k$.

Different types of hashing-based dictionaries work for $k$-independent hash functions with $k$ much smaller than the number of elements in the dictionary which we denote by $n$.
For example, it was shown that $5$-independence suffices for linear probing to ensure expected constant time per operation~\cite{pagh2011linear}.
It is known that $\Theta(\log n)$-independence suffices for Cuckoo hashing~\cite{pagh2004cuckoo}, but $5$-independence is not enough to ensure constant amortized cost per operation~\cite{cohen2009bounds}.
For a brief introduction to the use of random hashing in algorithms and data structures see~\cite{dietzfelbinger2012}. 

\paragraph{Fast hashing and lower bound.}
For applications that require super-constant independence, the time to evalute the hash function can be a performance bottleneck.
A $k$-independent polynomial hash function can be stored using $O(k)$ words and evaluated using time $O(k)$ on a word-RAM, assuming constant time arithmetic over the finite field.  
What if we are willing to use more space to represent a $k$-independent hash function $f \sim \mathcal{F}$ in order to reduce the evaluation time? 
Siegel~\cite{siegel2004} gave a powerful cell-probe lower bound for this problem, showing that for a $k$-independent hash function with domain size $u$, even if we use space roughly $O(ku^{1/t})$ for some $t \geq 1$ the evaluation time has to be $\Omega(t)$.

Siegel also showed the existence of a matching upper bound based on highly unbalanced bipartite expander graphs $G = (U \cup V, E)$ with left vertex set $U$ corresponding to the domain of the hash function,
right vertex set $V$ of size $|V| = O(k u^{1/t})$, and left outdegree $d = O(t)$.
Given an appropriate expander graph $G$ we can sample a $k$-independent hash function $f \colon U \to R$ by associating each vertex $v \in V$ with a random element from $R$ where we assume that $(R, +)$ is an abelian group, such as the the integers under modular arithmetic.
To compute $f(x)$ we take the sum of the random elements associated with the neighbors of the vertex $x \in U$ and return the result.

Unfortunately we only know of the existence of such optimal expander graphs by the probablistic method: a random bipartite graph has the right properties for optimal $k$-independent hashing with overwhelming probability if we parameterize the graph generation process correctly. 
Several works, Siegel's original paper included, attempt to approach the performance of such optimal bipartite expander graph by the use of probabilistic constructions~\cite{dietzfelbinger2003, pagh2008, thorup2013, christiani2015}.
In Chapter \ref{ch:expanders} we show a probablistic construction with space usage and evaluation time that almost matches the lower bound.
Finding optimal explicit constructions remains a major open problem.

Other approaches to the problem of finding fast hash functions with theoretical guarantees include the study of tabulation hashing and its variations which has guarantees beyond what can be derived from the degree of $k$-independence~\cite{thorup2017fast}, to simulate uniformly random hashing in constant time on a subset of the universe~\cite{pagh2008}, reusing randomness by splitting the problem into sub-problems that share a single highly random hash function~\cite{dietzfelbinger2009applications}, or extracting additional randomness from the input to the hash function~\cite{mitzenmacher2013}.

\paragraph{Generating $k$-independent random variables.}
The generation of $k$-independent random variables differs from random hashing by allowing the algorithm designer to specify where to evaluate a $k$-independent function $f$ in order to generate a sequence of variables $f(x_1), f(x_2), \dots$ that is $k$-independent.   
The problem of generating a sequence of $k$-independent random variables is therefore easier than the problem of constructing a data structure to represent a random $k$-independent hash function that an adversary can choose to evaluate in an arbitrary point.

We can take a standard $k$-independent polynomial hash function and evaluate it in $k$ points in time $k \poly \log k$ using fast multipoint evaluation algorithms~\cite{bostan2005, gathen2013}, giving us a generator of $k$-independent random variables with generation time $\poly \log k$ per variable that uses space $O(k)$.
This in itself shows that the task of generation is easier than hashing, 
as it would be impossible to evaluate a $k$-independent hash function in time $\poly \log k$ using space $O(k)$, if for example $k = \Theta(\log u)$.
In Chapter \ref{ch:rng} we show how to generate $k$-independent variables in constant time, independent of $k$, using space $k \poly \log k$.
\section{Overview and contributions}
This thesis is divided into two parts. The first part presents algorithms and lower bounds for various problems related to similarity search. 
The second part presents algorithms for the efficient generation of high-quality pseudorandom numbers, as well as efficient hash functions. 
The chapters are based on the following papers:
\begin{itemize}
	\item[I.] Similarity search. 
	\item[2.] Tobias Christiani: Fast locality-sensitive hashing frameworks for approximate near neighbor search~\cite{christiani2017fast}. 2017. Unpublished. 
	\item[3.] Tobias Christiani: A Framework for Similarity Search with Space-Time Tradeoffs using Locality-Sensitive Filtering~\cite{christiani2017framework}. SODA 2017.
	\item[4.] Tobias Christiani and Rasmus Pagh: Set similarity search beyond MinHash~\cite{christiani2017set}. STOC 2017.
	\item[5.] Tobias Christiani, Rasmus Pagh and Johan Sivertsen: Scalable and robust set similarity join~\cite{christiani2017scalable}. ICDE 2018.
	\item[6.] Martin Aumüller, Tobias Christiani, Rasmus Pagh and Francesco Silvestri: Distance-sensitive hashing~\cite{aumuller2017distance}. PODS 2018.
	\item[7.] Tobias Christiani: Optimal Boolean locality-sensitive hashing. 2018. Unpublished.
	\item[II.] Pseudorandomness.
	\item[8.] Tobias Christiani and Rasmus Pagh: Generating $k$-independent variables in constant time~\cite{christiani2014}. FOCS 2014.
	\item[9.] Tobias Christiani, Rasmus Pagh and Mikkel Thorup: From Independence to Expansion and Back Again~\cite{christiani2015}. STOC 2015.
\end{itemize}
We proceed by giving a brief description of the contribution of each chapter.
\subsection{Part I: Similarity search}
\paragraph{Chapter \ref{ch:fast}: Fast locality-sensitive hashing frameworks.}
This chapter begins by surveying different techniques for constructing a solution to the approximate near neighbor problem from a family of locality-sensitive hash functions. 
Given a family $\LSH$ of locality-sensitive hash functions, the standard Indyk-Motwani framework (Theorem \ref{intro:thm:lsh_im_simple}) uses $O(n^\rho \log n)$ functions from $\LSH$ to solve the approximate near neighbor problem.
During a query all of these hash functions are evaluated, dominating the query time.
For many LSH schemes the time to evaluate a single function is $O(d)$ or greater, as witnessed for example by SimHash or MinHash, further exacerbating the problem.
Building on recent work by Dahlgaard et al.~\cite{dahlgaard2017fast} we show that the number of locality-sensitive hash functions can be reduced to $O(\log^2 n)$ in general, yielding an improved LSH framework. 
We combine this result with a technique from another LSH framework by Andoni and Indyk~\cite{andoni2006efficient} to reduce the word-RAM complexity of this improved framework by a logarithmic factor to $O(n^\rho)$.

\paragraph{Chapter \ref{ch:filters}: Space-time tradeoffs for similarity search.}
This chapter introduces a framework for solving the approximate near neighbor problem with space-time tradeoffs using locality-sensitive filtering. 
We show concrete solutions on the unit sphere under cosine similarity with extensions to $\ell_p$-space for every $0 < p \leq 2$.
These results improve and generalize prior work~\cite{laarhoven2015, kapralov2015}. 
We also include a lower bound on space-time tradeoff that is tight, but suffers from some important restrictions.
A paper by Andoni et al.~\cite{andoni2017optimal} has since shown a strengthened lower bound and an improved upper bound through the use of data-dependent techniques.
An early version of the paper behind this chapter formed part of my master's thesis.
At the end of the chapter we have added an improved locality-sensitive filtering framework compared to the one in the main text, building on ideas introduced in Chapter \ref{ch:fast} and \ref{ch:sparse}.

\paragraph{Chapter \ref{ch:sparse}: Set similarity search beyond MinHash.}
In this chapter we consider the problem of set similarity search under Braun-Blanquet similarity $\simil_{B}(x, y) = |x \cap y| / \max(|x|, |y|)$.
We show that the $(s_1, s_2)$-similarity problem in this setting can be solved with an exponent of $\rho = \log(1/s_1)/\log(1/s_2)$ and that this is tight among solutions based on data-independent locality-sensitive maps. 
The upper bound is based on a novel construction inspired by branching processes and interestingly, although it is data-independent, it outperforms the best known data-dependent techinques for a large portion of the parameter space $0 \leq s_2 < s_1 < 1$.
The lower bound follows from a reduction to the standard $(r, cr)$-near neighbor problem in Hamming space for $r, cr \ll d/2$.
In this setting the lower bound by O'Donnell et al.~\cite{odonnell2014optimal} is tight and we are able to show that it extends to Braun-Blanquet similarity for \emph{every} choice of $0 \leq s_2 < s_1 < 1$. 
This is interesting in the light of the gap in our knowledge when it comes to the usual $(\alpha, \beta)$-similarity problem for cosine similarity, as explained in the introduction.

\paragraph{Chapter \ref{ch:asymmetric}: Lower bounds for asymmetric locality-sensitive hashing.}
In this chapter we derive lower bounds (on the $\rho$-value) for asymmetric locality-sensitive hashing.
Our lower bound covers the case of asymmetric families for approximate near neighbor search, as well as the case of approximate furthest neighbor search where we are interested in having the collision probability of $(h, g) \sim \ALSH$ increase in the distance between points.
We show that our lower bounds are tight against existing symmetric constructions in the case of the application to near neighbor search, and that this construction can easily be modified to yield an optimal asymmetric construction for furthest neighbor search. 

\paragraph{Chapter \ref{ch:bool}: Optimal Boolean locality-sensitive hashing.}
In this chapter we show that, among the class of Boolean locality-sensitive hash functions $h \colon \cube{d} \to \{-1, 1\}$, 
bit-sampling is an optimal LSH (minimizes the $\rho$-value) for the $((1-\alpha)d/2, (1-\beta)d/2)$-near neighbor problem in Hamming space for every choice of $0 \leq \beta < \alpha < 1$.
This stands in contrast to the lower bound by O'Donnell et al.~\cite{odonnell2014optimal} which is unrestricted with respect to the range of the locality-sensitive hash functions.
Bit-sampling only matches this unrestricted lower bound in the case where $\alpha, \beta$ approach $1$.
Our result settles the question of optimal Boolean locality-sensitive hashing for Hamming space and shows that we have to look towards families of hash functions with a larger range in order to further improve the $\rho$-value compared to bit-sampling.
Andoni et al.~\cite{andoni2015practical} have shown lower bounds on the $\rho$-value on the unit sphere as a function of the size of the range of the hash function. 
\subsection{Part II: Pseudorandom hashing and number generation}
\paragraph{Chapter \ref{ch:rng}: Generating $k$-independent random variables in constant time.}
We investigate the problem of efficiently generating $k$-independent random variables and give an explicit generator of $k$-independent random variables with constant generation time, independent of $k$. 
The explicit construction combines multipoint evaluation of polynomials over finite fields with a cascading construction of explicit bipartite expander graphs by Capalbo et al.~\cite{capalbo2002}.
The space usage of this construction is $k \poly \log k$ with a very large exponent in the polynomial.
We also show a randomized version of the same construction that uses a randomly generated bipartite graph.
This reduces the space overhead to $O(\log^3 k)$ at the cost of introducing an error probability (the generated sequence may fail to be $k$-independent) that is polynomially small in $k$. 
We implement a version of the generator that combines a random bipartite expander with fast multipoint evaluation of polynomials over $\GF_{2^{64}}$ and show that it scales well, even for generating $k = 2^{20}$-independent variables.

\paragraph{Chapter \ref{ch:expanders}: Near-optimal $k$-independent hashing.}
In this chapter we attack the problem of constructing fast $k$-independent random hash functions.
We use the fact that there is a sort of duality between randomized bipartite expander graphs and $k$-independent random hash functions.
A bipartite expander graph that expands on subsets of size $k$ can be used to construct a $k$-independent family of functions, and a $k$-independent function is likely to represent a bipartite expander that expands on subsets of size $k$.
We take a small bipartite expander graph and apply an inefficient graph product that preserves its expansion properties while increasing the size of the left vertex set (the size of the domain of the resulting hash function). 
Then we use this resulting bipartite expander graph to construct a $k$-independent random hash function that now represents a new expander on a larger domain with optimal properties. 
By applying this strategy recursively using different graph products we are able to give randomized constructions of $k$-independent hash functions in the word-RAM model that almost match Siegel's cell probe lower bound~\cite{siegel2004}.
\section{Conclusion and open problems}
\subsection{Similarity search}
We have shown new upper and lower bounds for problems related to approximate similarity search in high-dimensional spaces,
showing improved locality-sensitive hashing frameworks, lower bounds for Boolean locality-sensitive hashing, 
and going beyond locality-sensitive hashing in several different directions with asymmetric locality-sensitive hashing, 
space-time tradeoffs through locality-sensitive filtering, and locality-sensitive maps for set similarity search.

\paragraph{Optimal data-independent locality-sensitive hashing.}
It remains open to close the gap between the upper and lower bounds on the $\rho$-value of $((1-\alpha)d/2, (1-\beta)d/2, p_1, p_2)$-sensitive families in Hamming space (shown in Figure \ref{fig:gap}).
Existing lower bounds seem to have explored the limits of what can be shown with our current understanding of hypercontractive inequalities and Fourier analysis of Boolean functions. 
We conjecture that the ball-based LSH construction with the $\rho$-value given in equation \ref{eq:best_upper} is asymptotically optimal for every choice of $0 \leq \beta < \alpha < 1$. 

\paragraph{Orthogonal search.}
Suppose we are interested in an asymmetric locality-sensitive hashing scheme for the unit sphere under cosine similarity that can be used to search for orthogonal vectors.
For this purpose we want the probability of collison to be as high as possible for $0$-correlated (orthogonal) vectors and have the probability of collision decrease at the correlation becomes positive or negative. 
Let $p(\alpha)$ denote the probability of collision of the asymmetric locality-sensitive hashing scheme for a pair of $\alpha$-correlated vectors.
The current best upper bound on $\rho = \log(1/p(0))/\log(1/\max(p(\alpha), p(-\alpha)))$ is given by $(1 - \alpha^2)/(1 + \alpha^2)$~\cite{aumuller2017distance}.
The lower bound presented in Chapter~\ref{ch:asymmetric} only implies $\rho \geq (1 - |\alpha|)/(1 + |\alpha|)$. 
Obtaining a ``two-sided'' lower bound that simultanously relates $p(0)$ to both $p(\alpha)$ and $p(-\alpha)$ has close ties to the open symmetric Gaussian problem~\cite{odonnell2012open}.
It is conjectured that the upper bound is tight.

\paragraph{Simple data-dependent constructions.}
It is an important open problem to find simpler data-dependent solutions to approximate near neighbor search.
Despite the intuitive appeal of using the data to inform the construction of the solution, 
relatively few people have succeded in making theoretical progress in this area~\cite{andoni2014beyond, andoni2015optimal, andoni2017practical}.
Perhaps by relaxing the problem slightly, for example by only requiring that queries that follow a specific distribution succeed with constant probability, progress can be made.
An example of such a query distribution could be to sample one of the $n$ data points uniformly at random and sample the query from a ball around the data point.
Attacking the problem for data structures that use near-linear space in $n$ also seems like a promising approach. 
\subsection{$k$-independent hashing and generation}
We have shown near-optimal results for $k$-independent hashing and generation.

\paragraph{Optimal explicit unbalanced bipartite expander graphs.}
The main open problem in this area is the explicit construction of highly unbalanced bipartite expander graphs with optimal properties.
We would like to be able to evaluate the neighbor function $\Gamma \colon U \to V^d$ of a left $d$-regular bipartite expander graph with optimal parameters (matching Siegel's lower bound for $k$-independent hashing) using time that is at most polynomial in the bit-length of the input. 
For the application to random hashing we would furthermore like to be able to list the $d$ neighbors of a vertex in time $O(d)$.
The construction in Chapter~\ref{ch:expanders} is essentially able to solve this task in time $O(d \log d)$, 
so it would require a very clean explicit construction to yield an improvement to the efficiency of random hashing in practice. 
Results on the construction of explicit bipartite expanders by Guruswami et al.~\cite{guruswami2009} and preprocessing polynomials~\cite{kedlaya2008} are based directly on results such as the fundamental theorem of algebra and the Chinese remainder theorem and give hope that there exists a simple explicit construction. 

\paragraph{Constant time generators with minimal space.}
The fast generators in Chapter \ref{ch:rng} uses polynomials over finite fields and require space $k \poly \log k$.
Through the sequential evaluation of hash functions presented in Chapter \ref{ch:expanders} we can remove the need for arithmetic over finite fields, but it seems that if we want to use minimal space the evalution time will still be $O(\log u)$ with space usage $k \poly(\log u, \log k)$.
Is it possible to get constant-time generation in a restricted word-RAM model without multiplication using space $O(k)$?

\part{Similarity search}
\chapter{Fast locality-sensitive hashing frameworks} \label{ch:fast} 
\sectionquote{Renewed shall be blade that was broken}
\noindent The Indyk-Motwani Locality-Sensitive Hashing (LSH) framework (STOC 1998) is a general technique for constructing a data structure to answer approximate near neighbor queries by using a distribution $\LSH$ over locality-sensitive hash functions that partition space.
For a collection of $n$ points, after preprocessing, the query time is dominated by $O(n^{\rho} \log n)$ evaluations of hash functions from $\LSH$ and $O(n^{\rho})$ hash table lookups and distance computations where $\rho \in (0,1)$ is determined by the locality-sensitivity properties of $\LSH$. 
It follows from a recent result by Dahlgaard et al.~(FOCS 2017) that the number of locality-sensitive hash functions can be reduced to $O(\log^2 n)$, leaving the query time to be dominated by $O(n^{\rho})$ distance computations and $O(n^{\rho} \log n)$ additional word-RAM operations.
We state this result as a general framework and provide a simpler analysis showing that the number of lookups and distance computations closely match the Indyk-Motwani framework. 
Using ideas from another locality-sensitive hashing framework by Andoni and Indyk (SODA 2006) we are able to reduce the number of additional word-RAM operations to $O(n^\rho)$.
\section{Introduction}
The $(r_1, r_2)$-approximate near neighbor problem is the problem of preprocessing a collection $P$ of $n$ points in a space $(X, \dist)$ into a data structure that after preprocessing supports the following query operation: Given a query point $q \in X$, if there exists a point $x \in P$ with $\dist(q, x) \leq r_1$, then the data structure is guaranteed to return a point $x' \in P$ such that $\dist(q, x') < r_2$.

Indyk and Motwani \cite{indyk1998} introduced a general framework for constructing solutions to the approximate near neighbor problem using a technique known as locality-sensitive hashing~(LSH). 
The framework takes a distribution over hash functions $\LSH$ with the property that near points are more likely to collide under a random $h \sim \LSH$. 
During preprocessing a number of locality-sensitive hash functions are sampled from $\LSH$ and used to hash the points of $P$ into buckets. 
The query algorithm evaluates the same hash functions on the query point and looks into the associated buckets to find an approximate near neighbor.

The locality-sensitive hashing framework of Indyk and Motwani has had a large impact in both theory and practice (see surveys \cite{andoni2008} and \cite{wang2014} for an introduction), and many of the best known (data-independent) solutions to the approximate near neighbor problem in high-dimensional spaces, such as Euclidean space \cite{andoni2006}, the unit sphere under inner product similarity \cite{andoni2015practical}, and sets under Jaccard similarity \cite{broder2000} come in the form of families of locality-sensitive hash functions that can be plugged into the Indyk-Motwani LSH framework.
Recent work on data-dependent locality-sensitive hashing has further improved solutions for $\ell_p$-spaces and cosine similarity~\cite{andoni2014beyond,andoni2015optimal,andoni2017optimal}, 
but these solutions typically do not come directly in the form of a distribution over locality-sensitive hash functions and as such it is unclear whether the techniques in this paper can yield further speedups to these results.
\begin{definition}[Locality-sensitive hashing {\cite{indyk1998}}]\label{def:lsh}
Let $(X, \dist)$ be a distance space and let $\LSH$ be a distribution over functions $h \colon X \to R$.
We say that $\LSH$ is $(r_1, r_2, p_1, p_2)$-sensitive if for $x, y \in X$ and $h \sim \LSH$ we have that:
\begin{itemize}
\item If $\dist(x, y) \leq r_1$ then $\Pr[h(x) = h(y)] \geq p_1$.
\item If $\dist(x, y) \geq r_2$ then $\Pr[h(x) = h(y)] \leq p_2$.
\end{itemize}
\end{definition}

\noindent
The Indyk-Motwani framework takes a $(r_1, r_2, p_1, p_2)$-sensitive family $\LSH$ and constructs a data structure that solves the approximate near neighbor problem for parameters $r_1 < r_2$ with some positive constant probability of success. 
We will refer to this randomized approximate version of the near neighbor problem as the $(r_1, r_2)$-near neighbor problem, where we require queries to succeed with probability at least $1/2$ (see Definition \ref{def:ann}).  
To simplify the exposition we will assume throughout the introduction, unless otherwise stated, that $0 < p_1 < p_2 < 1$ are constant, that a hash function $h \in \LSH$ can be stored in $n / \log n$ words of space, and for $\rho = \log(1/p_1) / \log(1/p_2) \in (0,1)$ that a point $x \in X$ can be stored in $O(n^\rho)$ words of space.
The assumption of a constant gap between $p_1$ and $p_2$ allows us to avoid performing distance computations by instead using the $1$-bit sketching scheme of Li and K{\"o}nig~\cite{li2011theory} together with the family $\LSH$ to approximate distances (see Section \ref{sec:sketching} for details).
In the remaining part of the paper we will state our results without any such assumptions to ensure, for example, that our results hold in the important case where $p_1, p_2$ may depend on $n$ or the dimensionality of the space~\cite{andoni2006, andoni2015practical}.  
\begin{theorem}[Indyk-Motwani {\cite{indyk1998, har-peled2012}, simplified}]\label{thm:lsh_im_simple}
Let $\LSH$ be $(r_1, r_2, p_1, p_2)$-sensitive and let $\rho = \frac{\log(1/p_1)}{\log(1/p_2)}$, then there exists a solution to the $(r_1, r_2)$-near neighbor problem using $O(n^{1+\rho})$ words of space and with query time dominated by $O(n^{\rho} \log n)$ evaluations of functions from~$\LSH$.
\end{theorem}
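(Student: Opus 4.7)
The plan is to invoke the standard AND-then-OR amplification of $\LSH$. Form an amplified family $\G$ by concatenation: a random $g \sim \G$ is given by $g(x)=(h_1(x),\dots,h_k(x))$ with $h_1,\dots,h_k \sim \LSH$ independent, so under $g$ near points collide with probability at least $p_1^k$ and far points with probability at most $p_2^k$. Choose $k = \lceil \log_{1/p_2} n \rceil$, which makes $p_2^k \le 1/n$ and $p_1^k \ge p_1 \cdot n^{-\rho}$, and set $L = \lceil n^{\rho} \rceil$. Sample $g_1,\dots,g_L \sim \G$ independently, build $L$ hash tables, and insert each $x \in P$ into bucket $g_\ell(x)$ of table $\ell$. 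A query on $q$ scans $\ell = 1,\dots,L$, computes $\dist(q,x)$ for each $x$ found in bucket $g_\ell(q)$, returns $x$ as soon as $\dist(q,x) < r_2$, and aborts after a total of $3L$ distance computations have been performed.

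For correctness, fix any $x^{*} \in P$ with $\dist(q,x^{*}) \le r_1$. The probability that $x^{*}$ fails to collide with $q$ in every table is at most $(1-p_1^k)^L \le \exp(-L p_1^k)$, which is bounded away from $1$ by an absolute constant. Independently, each of the at most $n$ points $y \in P$ with $\dist(q,y) \ge r_2$ collides with $q$ under a single $g_\ell$ with probability at most $p_2^k \le 1/n$, so the expected number of such spurious candidates summed over all $L$ tables is at most $L$; by Markov's inequality the count exceeds $3L$ with probability at most $1/3$. A union bound then guarantees with positive constant probability that $x^{*}$ is stored in at least one of the inspected buckets and that the early-termination threshold is not reached before $x^{*}$ is examined, in which case a valid answer is returned. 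Iterating the whole data structure a constant number of times boosts the success probability to the required $1/2$.

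For the resource bounds, each of the $L$ hash tables stores $n$ entries in $O(n)$ words, yielding total space $O(nL) = O(n^{1+\rho})$; storing the sampled $g_\ell$ contributes only $O(Lk)$ additional words, which is lower order under the simplifying conventions of the theorem. A query evaluates the $L$ functions $g_1,\dots,g_L$, costing $Lk = O(n^{\rho} \log n)$ evaluations of hash functions from $\LSH$, together with $O(n^\rho)$ bucket lookups and at most $3L = O(n^\rho)$ distance computations, so the LSH evaluations dominate as claimed. The main subtle point is the coupling of the two probabilistic events -- survival of $x^{*}$ in some inspected bucket and the cap on spurious collisions -- which is why $k$ is tuned to the borderline $p_2^k \approx 1/n$ and the abort threshold is a constant factor above the expectation $L$; a smaller budget would let the two constant failure probabilities combine to overwhelm the required constant success probability.
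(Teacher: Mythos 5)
Your proof is essentially the paper's own argument: powering with $k=\lceil \log n/\log(1/p_2)\rceil$, $L=\Theta(n^{\rho})$ independently sampled tables, the bound $(1-p_1^k)^L\le e^{-Lp_1^k}$ for the near point, expected at most $L$ collisions with far points, and Markov's inequality plus constant repetition to replace the expectation by a worst-case cap --- exactly the route taken in the paper's Indyk--Motwani section and its exact-bounds theorem. The only quibble is quantitative: with $L=\lceil n^{\rho}\rceil$ your two failure terms $e^{-p_1}$ and $1/3$ can sum to more than $1$ when $p_1$ is a small constant, which the paper sidesteps by taking $L=\lceil(\ln 2)/p_1^{k}\rceil$ (equivalently, scale $L$ or the abort threshold by $O(1/p_1)$); under the theorem's simplifying assumption that $p_1$ is constant this costs only a constant factor and leaves all stated bounds unchanged.
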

The query time of the Indyk-Motwani framework is dominated by the number of evaluations of locality-sensitive hash functions. 
To make matters worse, almost all of the best known and most widely used locality-sensitive families have an evalution time that is at least linear in the dimensionality of the underlying space~\cite{broder2000, charikar2002, datar2004, andoni2006, andoni2015practical}.
Significant effort has been devoted to the problem of reducing the evaluation complexity of locality-sensitive hash families~\cite{terasawa2007spherical, eshgi2008, dasgupta2011fast, andoni2015practical, kennedy2017fast, shrivastava2016simple, shrivastava2017optimal, dahlgaard2017fast}, while the question of how many independent locality-sensitive hash functions are actually needed to solve the $(r_1, r_2)$-near neighbor problem has received relatively little attention~\cite{andoni2006efficient, dahlgaard2017fast}.  

This paper aims to bring attention to, strengthen, generalize, and simplify results that reduce the number of locality-sensitive hash functions used to solve the $(r_1, r_2)$-near neighbor problem. In particular, we will extract a general framework from a technique introduced by Dahlgaard et al.~\cite{dahlgaard2017fast} in the context of set similarity search under Jaccard similarity, showing that the number of locality-sensitive hash functions can be reduced to $O(\log^2 n)$ in general. 
Reducing the number of locality-sensitive hash functions allows us to spend time $O(n^\rho / \log^2 n)$ per hash function evaluation without increasing the overall complexity of the query algorithm --- something which is particularly useful in Euclidean space where the best known LSH upper bounds offer a tradeoff between the $\rho$-value that can be achieved and the evaluation complexity of the locality-sensitive hash function~\cite{andoni2006, andoni2015practical, kennedy2017fast}.

The main technical contribution of this paper is to reduce the word-RAM complexity of the general LSH framework from $O(n^\rho \log n)$ to $O(n^\rho)$ by combining techniques from Dahlgaard et al. and Andoni and Indyk~\cite{andoni2006efficient}.
\subsection{Related work}
\paragraph{Indyk-Motwani.}
The Indyk-Motwani framework uses $L = O(n^\rho)$ independent partitions of space, each formed by overlaying $k = O(\log n)$ random partitions induced by $k$ random hash functions from a locality-sensitive family $\LSH$.
The parameter $k$ is chosen such that a random partition has the property that a pair of points $x,y \in X$ with $\dist(x, y) \leq r_1$ has probability $n^{-\rho}$ of ending up in the same part of the partition, while a pair of points with $\dist(x, y) \geq r_2$ has probability $n^{-1}$ of colliding. 
By randomly sampling $L = O(n^\rho)$ such partitions we are able to guarantee that a pair of near points will collide with constant probability in at least one of them. 
Applying these $L$ partitions to our collection of data points $P$ and storing the result of each partition of $P$ in a hash table we obtain a data structure that solves the $(r_1, r_2)$-near neighbor problem as outlined in Theorem \ref{thm:lsh_im_simple} above. 
Section \ref{sec:frameworks} and \ref{sec:im} contains a more complete description of LSH-based frameworks and the Indyk-Motwani framework.

\paragraph{Andoni-Indyk.}
As previously mentioned, many locality-sensitive hash functions happen to have a super-constant evaluation time. 
This motivated Andoni and Indyk to introduce a replacement to the Indyk-Motwani framework in a paper on substring near neighbor search~\cite{andoni2006efficient}.  
The key idea is to re-use hash functions from a small collection of size $m \ll L$ by forming all combinations of $\binom{m}{t}$ hash functions. 
This technique is also known as tensoring and has seen some use in the work on alternative solutions to the approximate near neighbor problem, in particular the work on locality-sensitive filtering~\cite{dubiner2010bucketing, becker2016, christiani2017framework}.
By applying the tensoring technique the Andoni-Indyk framework reduces the number of hash functions to $O(\exp(\sqrt{\rho \log n \log \log n})) = n^{o(1)}$ as stated in Theorem \ref{thm:lsh_ai_simple}.
\begin{theorem}[Andoni-Indyk {\cite{andoni2006efficient}}, simplified]\label{thm:lsh_ai_simple}
	Let $\LSH$ be $(r_1, r_2, p_1, p_2)$-sensitive and let $\rho = \frac{\log(1/p_1)}{\log(1/p_2)}$, then there exists a solution to the $(r_1, r_2)$-near neighbor problem using $O(n^{1+\rho})$ words of space and with query time dominated by $O(\exp(\sqrt{\rho \log n \log \log n}))$ evaluations of functions from $\LSH$ and $O(n^\rho)$ other word-RAM operations.
\end{theorem}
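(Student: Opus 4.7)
The plan is to adopt the tensoring technique of Andoni and Indyk. Rather than sampling $L = \Theta(n^\rho)$ fully independent super-hash functions, each a concatenation of $k \approx \log_{1/p_2} n$ base hash functions (hence $kL = \Theta(n^\rho \log n)$ base evaluations in the Indyk--Motwani framework), I would assemble all super-hash functions from a much smaller pool of base hash functions that are shared across them. The aim is to drive the base $\LSH$-evaluation count down to a subpolynomial quantity while keeping the effective number of probed buckets at $n^{\rho + o(1)}$, matching the space and query asymptotics of Theorem~\ref{thm:lsh_im_simple}.

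Concretely, fix an integer $t$ to be optimized, set $k = \lceil \log_{1/p_2} n \rceil$, and let $m = \lceil c \log t \cdot p_1^{-k/t} \rceil$ for a suitable constant $c$. For each block $i \in [t]$, independently sample $m$ ``block functions'' $h_{i,1}, \dots, h_{i,m}$, each itself a concatenation of $k/t$ independent samples from $\LSH$. For every tuple $J = (j_1, \dots, j_t) \in [m]^t$ define $g_J(x) = (h_{1,j_1}(x), \dots, h_{t,j_t}(x))$, yielding $L' = m^t$ super-hash functions that we plug into the Indyk--Motwani construction. A near pair $(x,y)$ with $\dist(x,y) \leq r_1$ collides under some $g_J$ iff in every block $i$ at least one $h_{i,j}$ collides on $(x,y)$; since blocks are sampled independently,
\[
  \Pr\bigl[\exists J : g_J(x) = g_J(y)\bigr] \;=\; \prod_{i=1}^{t} \Bigl(1 - (1-p_1^{k/t})^m\Bigr) \;=\; \Omega(1)
\]
by the choice of $m$. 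For a far pair, the expected number of colliding $g_J$ is at most $L' \cdot p_2^k \leq L'/n$, so by linearity the total number of spurious distance computations is $O(L')$.

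Next I would optimize parameters and handle the word-RAM bookkeeping. The total number of base $\LSH$ evaluations per query is $t \cdot m \cdot (k/t) = mk = n^{\rho/t}\log n \cdot O(\log t)$; setting $t = \lceil \sqrt{\rho \log n / \log \log n}\,\rceil$ balances $\rho \log n / t$ against $\log \log n$ in the exponent, giving $\exp(O(\sqrt{\rho \log n \log \log n}))$ evaluations. The same choice of $t$ makes $L' = m^t = n^{\rho}\cdot (\log t)^t = n^{\rho + o(1)}$, which yields $O(n^{1+\rho})$ space and $O(n^\rho)$ probed buckets after absorbing subpolynomial factors. A naive evaluation of the $L'$ super-hash keys costs $\Theta(t)$ word-RAM operations per tuple and would blow up to $\Theta(t \cdot n^\rho)$; to reach $O(n^\rho)$ I would enumerate the tuples $J$ in a Gray-code-like order so consecutive tuples differ in a single coordinate, and maintain a packed combined key in a single machine word that is updated incrementally by XOR-ing out the old block contribution and XOR-ing in the new one (using precomputed word-sized hashes of each $h_{i,j}(q)$).

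The main obstacle is the loss of independence among the $m^t$ super-hash functions, which kills the clean union-bound analysis of Indyk--Motwani. The construction above handles this by exploiting independence \emph{across blocks} rather than across super-hash functions, at the cost of an extra $\log t$ factor in $m$ that only contributes a subpolynomial overhead to $L'$. A secondary technical point is achieving $O(n^\rho)$ rather than $O(t \cdot n^\rho)$ word-RAM operations during enumeration, for which the incremental Gray-code-style update of the super-hash key is essential.
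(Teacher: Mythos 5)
Your construction is the same tensoring idea the paper uses (independent blocks of functions from $\LSH^{k/t}$, Cartesian-product super-hash functions, independence across blocks for the close-pair bound, linearity of expectation for far pairs, and incremental enumeration of the product to keep the bookkeeping at $O(1)$ amortized per bucket). However, there is a genuine gap in how you boost the success probability, and it breaks exactly the bounds the theorem claims. You inflate each block to $m = \lceil c\log t\cdot p_1^{-k/t}\rceil$ so that a \emph{single} scheme succeeds with constant probability. Then the number of super-hash functions is $L' = m^t = \bigl(\Theta(\log t)\bigr)^t\, p_1^{-k}$, and with $t = \Theta\bigl(\sqrt{\rho\log n/\log\log n}\bigr)$ the factor $(\Theta(\log t))^t = \exp\bigl(\Theta(t\log\log t)\bigr)$ is $n^{o(1)}$ but unbounded. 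Consequently the number of probed buckets, the expected number of spurious distance computations, the other word-RAM operations, and the space are all $n^{\rho+o(1)}$ and $n^{1+\rho+o(1)}$, not $O(n^{\rho})$ and $O(n^{1+\rho})$ as the theorem states; "absorbing subpolynomial factors" into $O(n^{\rho})$ is not a valid step. The paper avoids this by putting the boosting \emph{outside} the product: each block has only $m_1 = \lceil 1/(t\,p_1^{k_1})\rceil$ functions, so one scheme succeeds with probability only about $(1/t)^t/4e$, and the whole scheme is repeated $\eta = \lceil \ln(2)/\varphi\rceil = O(t^t)$ times. The point is that $\eta\, m_1^t$ cancels the $t^{\pm t}$ factors, giving $L = \eta\, m_1^t m_2 = O(1/p_1^{k}) = O(n^\rho)$ lookups and $O(n^{1+\rho})$ space, while the hash-function count picks up the $t^t$ factor, which still fits inside $\exp\bigl(O(\sqrt{\rho\log n\log\log n})\bigr)$.

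A secondary issue: you treat $k/t$ as an integer throughout. The paper explicitly handles the integer constraints by setting $k_1 = \lfloor k/t\rfloor$, adding a remainder block of $m_2 = \lceil 1/p_1^{k_2}\rceil$ functions from $\LSH^{k_2}$ with $k_2 = k - t k_1$; without some such correction the total concatenation length $t\lfloor k/t\rfloor$ can undershoot $k$ by up to $t-1$, which inflates the far-pair collision probability by $p_2^{-(t-1)}$, again a superconstant (though subpolynomial) factor in the number of distance computations. Both problems are repairable within your framework — adopt the small-$m_1$-plus-$\eta$-repetitions parameterization and a remainder block — but as written the proposal proves only the weaker $n^{\rho+o(1)}$/$n^{1+\rho+o(1)}$ statement, not Theorem~\ref{thm:lsh_ai_simple} as claimed.
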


The paper by Andoni and Indyk did not state this result explicitly as a theorem in the same form as the Indyk-Motwani framework; the analysis made some implicit restrictive assumptions on $p_1, p_2$ and ignored integer constraints. 
Perhaps for these reasons the result does not appear to have received much attention, although it has seen some limited use in practice~\cite{sundaram2013streaming}.
In Section \ref{sec:ai} we present a slightly different version of the Andoni-Indyk framework together with an analysis that satisfies integer constraints, providing a more accurate assessment of the performance of the framework in the general, unrestricted case.   

\paragraph{Dahlgaard-Knudsen-Throup.}
The paper by Dahlgaard et al.~\cite{dahlgaard2017fast} introduced a different technique for constructing the $L$ hash functions/partitions from a smaller collection of $m$ hash functions from $\LSH$. 
Instead of forming all combinations of subsets of size $t$ as the Andoni-Indyk framework they instead sample $k$ hash functions from the collection to form each of the $L$ partitions.
The paper focused on a particular application to set similarity search under Jaccard similarity, and stated the result in terms of a solution to this problem. 
In Section \ref{sec:dkt} we provide a simplified and tighter analysis to yield a general framework:
\begin{theorem}[Dahlgaard-Knudsen-Thorup {\cite{dahlgaard2017fast}}, simplified]\label{thm:lsh_dkt_simple}
	Let $\LSH$ be $(r_1, r_2, p_1, p_2)$-sensitive and let $\rho = \frac{\log(1/p_1)}{\log(1/p_2)}$, then there exists a solution to the $(r_1, r_2)$-near neighbor problem using $O(n^{1+\rho})$ words of space and with query time dominated by $O(\log^2 n)$ evaluations of functions from $\LSH$ and $O(n^\rho \log n)$ other word-RAM operations.
\end{theorem}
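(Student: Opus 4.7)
The plan is to reduce the number of distinct hash function evaluations by sampling a pool of $m$ functions from $\LSH$ once, and constructing each of the $L = \Theta(n^\rho)$ partitions by reusing $k$ functions drawn from this pool. Concretely, with $k = \Theta(\log n)$ chosen so that $p_1^k = \Theta(n^{-\rho})$ and $p_2^k = O(n^{-1})$, I would sample $h_1, \ldots, h_m$ independently from $\LSH$ and, for each $\ell \in [L]$, sample indices $i_{\ell,1}, \ldots, i_{\ell,k}$ uniformly with replacement from $[m]$, setting $g_\ell(x) = (h_{i_{\ell,1}}(x), \ldots, h_{i_{\ell,k}}(x))$. At preprocessing time each $x \in P$ is inserted into the bucket $g_\ell(x)$ of $L$ hash tables, and a query evaluates the pool on $q$ once, assembles the $L$ composite keys from the precomputed pool values, and probes the resulting buckets.

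The collision analysis turns on the empirical collision rate $\hat{p}(x,y) = \frac{1}{m}\sum_{j=1}^m \1[h_j(x) = h_j(y)]$. Conditioning on the pool and using the independence of the $k$ sampled indices, the probability that $g_\ell(x) = g_\ell(y)$ is exactly $\hat{p}(x,y)^k$. Marginalizing over the pool, I would expand this expectation according to the number $d$ of distinct indices among $i_{\ell,1},\ldots,i_{\ell,k}$, obtaining $\E[\hat{p}(x,y)^k] = \sum_{d=1}^k a_{k,d}\, p(x,y)^d$ where $a_{k,d}$ is the probability of exactly $d$ distinct values. For a near pair the sum is at least $p(x,y)^k \geq p_1^k$ (the $d=k$ term alone suffices, matching Jensen), so across $L = \Theta(n^\rho)$ independent index draws the expected collision count is $\Omega(1)$; a second-moment argument that exploits the conditional independence of the $L$ repetitions given the pool then promotes this to a constant success probability. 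For a far pair the crude bound $a_{k,d} \leq (k^2/m)^{k-d}$ turns the expansion into a geometric series in $k^2/(m p_2)$, which is $O(1)$ whenever $m = \Omega(k^2/p_2)$ and yields $\E[\hat{p}(x,y)^k] = O(p_2^k)$. Under the constant-$p_2$ regime this is $m = \Theta(\log^2 n)$. The expected number of far collisions across all $L$ tables and all $n$ points is then $O(L n p_2^k) = O(n^\rho)$, which by Markov's inequality controls the candidate work.

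The main obstacle is establishing the far-pair bound at the tight pool size $m = O(\log^2 n)$ without going through pointwise concentration of $\hat{p}(x,y)$ for every data point. A standard Chernoff route would demand multiplicative deviations of size $1/k$, costing an extra $\log n$ factor and giving only $m = O(\log^3 n)$; the sharper analysis avoids this by working directly with $\E[\hat{p}(x,y)^k]$ and observing that ``accidental'' coincidences among the $k$ sampled indices inflate the expectation by at most a geometric factor in $k^2/(m p_2)$, which is summable to a constant. Once the two moment bounds are in place, a constant number of repetitions amplifies the success probability above $1/2$, and the total query cost decomposes into $m = O(\log^2 n)$ evaluations of functions from $\LSH$, $O(Lk) = O(n^\rho \log n)$ word-RAM operations to assemble composite keys and perform table lookups, and $O(n^\rho)$ candidate checks handled by the sketching scheme noted after Theorem~\ref{thm:lsh_im_simple}, which together give the claimed bounds.
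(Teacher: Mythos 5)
Your proposal is correct under the simplified (constant $p_1,p_2$) hypotheses of the theorem, and it shares the paper's high-level idea of reusing a small pool of LSH functions across the $L$ composite functions with a second-moment argument, but the construction and analysis differ in a substantive way. The paper splits the pool into $k$ coordinate-wise \emph{disjoint} collections of $m$ functions and selects the index for coordinate $i$ of $g_l$ via a pairwise independent $f_i \colon [L] \to [m]$; this makes each $g_l$ marginally distributed exactly as $\LSH^k$, so the far-pair analysis is identical to Indyk-Motwani (collision probability exactly $p^k \leq p_2^k$, expected one far point per table), and the only new work is the covariance term $\E[Y_{l,i}Y_{l',i}] = p/m + (1-1/m)p^2$, which with $m = O(k/p_1)$ per coordinate (total $O(k^2/p_1)$ functions) gives $\E[Z_lZ_{l'}] \leq (1+\varepsilon)p^{2k}$ and then Cantelli. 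Your single pool sampled with replacement destroys the clean marginal, so you must separately show $\E[\hat p^k] = O(p_2^k)$ for far pairs and, for the near-pair second moment, $\E[\hat p^{2k}] = O(p^{2k})$ for $p \geq p_1$ — the latter is the one step you leave implicit, but it follows from your own coincidence-counting bound applied to $2k$ index draws, and both are covered once $m = \Omega(k^2/p_2)$. What your route buys is a conceptually transparent conditional-i.i.d.\ structure (given the pool, the $L$ indicators are i.i.d.\ Bernoulli$(\hat p^k)$) and no need for the pairwise-independent selector machinery; what the paper's route buys is a trivial far-pair bound, a pool-size requirement depending on $1/p_1$ rather than $1/p_2$ (which matters in the general, non-constant-probability version, Theorem \ref{thm:lsh_dkt_exact}), and tighter constants. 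One wording slip: the $d=k$ term alone gives only $a_{k,k}p^k < p^k$; it is the full sum (equivalently Jensen) that yields $\E[\hat p^k] \geq p_1^k$, as you in fact use.
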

The analysis of \cite{dahlgaard2017fast} indicates that the Dahlgaard-Knudsen-Thorup framework, when compared to the Indyk-Motwani framework, would use at least $50$ times as many partitions (and a corresponding increase in the number of hash table lookups and distance computations) to solve the $(r_1, r_2)$-near neighbor problem with success probability at least $1/2$. 
Using elementary tools, the analysis in this paper shows that we only have to use twice as many partitions as the Indyk-Motwani framework to obtain the same guarantee of success.

\paragraph{Number of hash functions.}
To provide some idea of the number of hash functions $H$ used by the different frameworks, Figure \ref{fig:comparison} shows the value of $\log_2 H$ that is obtained by the Indyk-Motwani (IM), Andoni-Indyk (AI), and Dahlgaard-Knudsen-Thorup (DKT) frameworks according to the analysis in Section \ref{sec:frameworks} for $p_1 = 1/2$ and every value of $0 < p_2 < 1/2$ for a solution to the $(r_1, r_2)$-near neighbor problem on a collection of $n = 2^{30}$ points with success probability at least $1/2$. 
Note that Figure \ref{fig:comparison} shows an upper bound on the number of hash functions used by the frameworks according to the analysis in order to provide a solution with theoretical guarantees to the approximate near neighbor problem for any data set, and not the actual setting required for a particular data set (we haven't actually performed an experiment on $2^{30}$ points).  
In the analysis behind Figure \ref{fig:comparison} we have attempted to minimize $H$ within each respective framework.
\begin{figure} 
	\centering
	\includegraphics[width=0.85\textwidth]{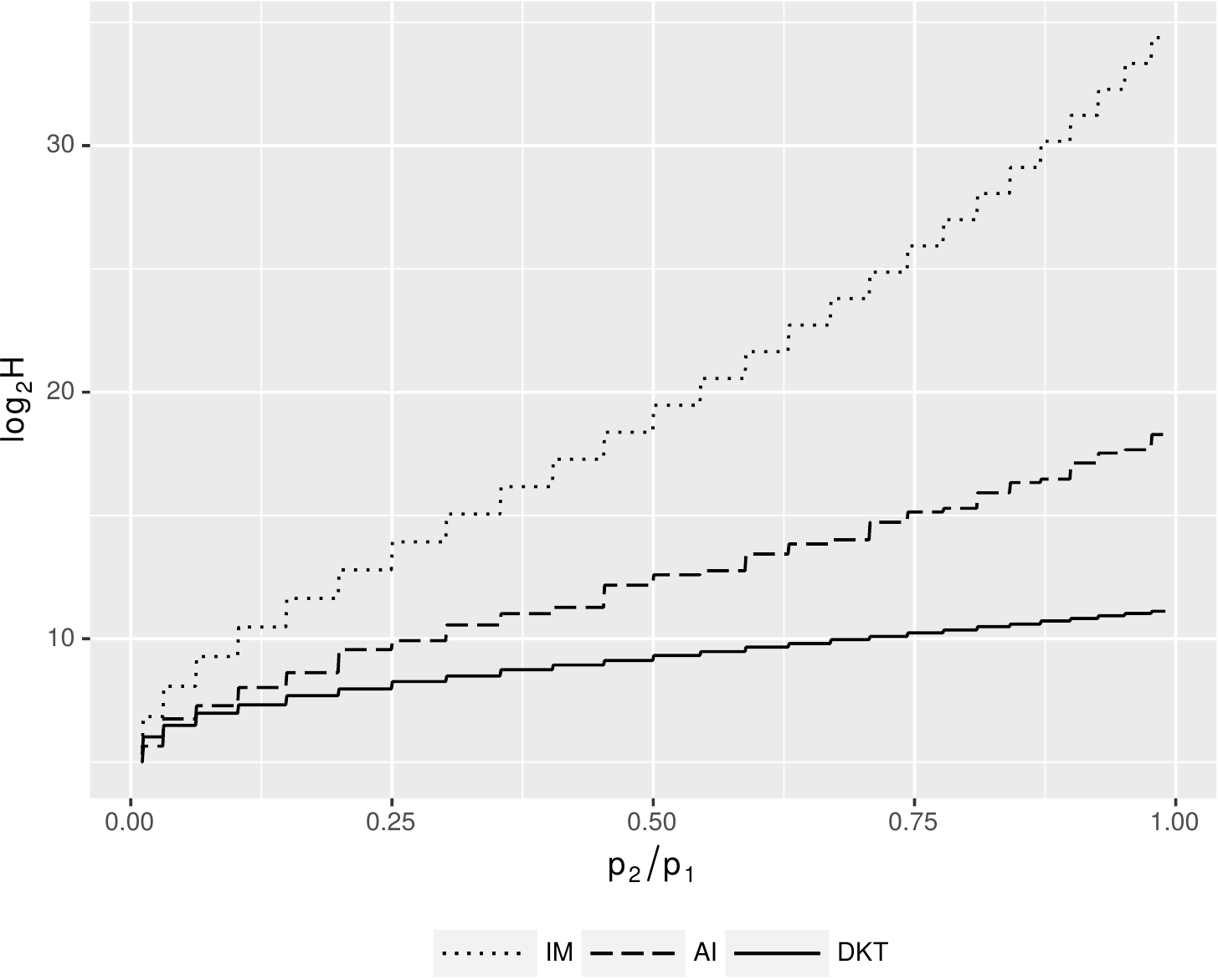}
	\caption{Upper bounds on the number of locality-sensitive hash functions from a $(r_1, r_2, 0.5, p_2)$-sensitive family used by different frameworks to solve the $(r_1, r_2)$-near neighbor problem on a collection of $2^{30}$ points according to the analysis in this paper.}
	\label{fig:comparison}
\end{figure}

Figure \ref{fig:comparison} reveals that the number of hash functions used by the Indyk-Motwani framework exceeds $2^{30}$, the size of the collection of points $P$, as $p_2$ approaches $p_1$. 
In addition, locality-sensitive hash functions used in practice such as Charikar's SimHash~\cite{charikar2002} and $p$-stable LSH~\cite{datar2004} have evaluation time $O(d)$ for points in $\mathbb{R}^d$. 
These two factors might help explain why a linear scan over sketches of the entire collection of points is a popular approach to solve the approximate near neighbor problem in practice~\cite{weiss2008spectral, gong2012angular}.
The Andoni-Indyk framework reduces the number of hash functions by several orders of magnitude, and the Dahlgaard-Knudsen-Thorup framework presents another improvement of several orders of magnitude. 
Since the word-RAM complexity of the DKT framework matches the the number of hash functions used by the IM framework, the gap between the solid line (DKT) and the dotted line (IM) gives some indication of the time we can spend on evaluating a single hash function in the DKT framework without suffering a noticeable increase in the query time. 
\subsection{Contribution}
\paragraph{Improved word-RAM complexity.}
In addition to our work on the Andoni-Indyk and Dahlgaard-Knudsen-Thorup frameworks as mentioned above, we show how the word-RAM complexity of the DKT framework can be reduced by a logarithmic factor. 
The solution is a simple combination of the DKT sampling technique and the AI tensoring technique: 
First we use the DKT sampling technique twice to construct two collections of $\sqrt{L}$ partitions.
Then we use the AI tensoring technique to form $L = \sqrt{L} \times \sqrt{L}$ pairs of partitions from the two collections. 
Below we state our main Theorem \ref{thm:lsh_dkt_ram} in its general form where we make no implicit assumptions about $\LSH$ ($p_1$ and $p_2$ are not assumed to be constant and can depend on for example $n$) or about the complexity of storing a point or a hash function, or computing the distance between pairs of points in the space $(X, \dist)$.
\begin{theorem}\label{thm:lsh_dkt_ram}
	Let $\LSH$ be $(r_1, r_2, p_1, p_2)$-sensitive and let $\rho = \log(1/p_1) / \log(1/p_2)$, then there exists a solution to the $(r_1, r_2)$-near neighbor with the following properties:
\begin{itemize}
	\item The query complexity is dominated by $O(\log_{1/p_2}^{2}(n)/p_1)$ evaluations of functions from $\LSH$, $O(n^\rho)$ distance computations, and $O(n^{\rho} / p_1)$ other word-RAM operations.
	\item The solution uses $O(n^{1 + \rho} /p_1)$ words of space in addition to the space required to store the data and $O(\log_{1/p_2}^{2}(n)/p_1)$ functions from $\LSH$.
\end{itemize}
\end{theorem}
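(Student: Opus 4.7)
The plan is to combine the DKT sampling idea of Theorem~\ref{thm:lsh_dkt_simple} with the AI tensoring idea of Theorem~\ref{thm:lsh_ai_simple}: rather than sampling $L$ partition functions, build two independent DKT-style collections of size roughly $\sqrt{L}$ and then pair them up via tensoring. Tensoring alone (AI) shrinks the total number of evaluated hash functions to roughly $\sqrt{L}\cdot k$, which is still $n^{\Omega(1)}$; DKT alone gives $O(\log^2 n)$ evaluations but pays an extra $\log n$ factor in word-RAM work because each of the $L$ partition keys must be assembled from $k = \Theta(\log_{1/p_2} n)$ samples out of the pool. Tensoring two DKT constructions preserves the $O(\log^2 n)$ evaluation count and reduces the per-partition assembly cost to $O(1)$ word-RAM operations, which is the logarithmic-factor saving claimed.

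Concretely, set $k=\lceil \log_{1/p_2}(n)\rceil$, $L=\Theta(1/p_1^k)=\Theta(n^{\rho}/p_1)$, and write $k=k_1+k_2$ with $k_i \approx k/2$ and $L=L_1L_2$ with $L_i \approx \sqrt{L}$. For $i\in\{1,2\}$, sample independently a pool $\mathcal{H}_i\subseteq \LSH$ of size $m_i=O(k_i/p_1)$ and, for each $j\in[L_i]$, form a partial partition $g_{i,j}$ by drawing $k_i$ indices into $\mathcal{H}_i$ (without replacement) according to the DKT rule. Define the $L$ composite partitions $g_{(j_1,j_2)}(x)=(g_{1,j_1}(x),g_{2,j_2}(x))$ and, during preprocessing, store each data point in the $L$ hash tables indexed by $g_{(j_1,j_2)}$. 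A query $q$ first evaluates all $m_1+m_2=O(k/p_1)=O(\log_{1/p_2}^{2}(n)/p_1)$ functions in the two pools; then, for each $i$ and each $j_i\in[L_i]$, assembles $g_{i,j_i}(q)$ in time $O(k_i)$ by reading off the precomputed indices; finally, for each of the $L_1L_2=L$ pairs, it concatenates the two already-computed sub-keys in $O(1)$ word-RAM operations, probes the corresponding bucket, and resolves candidates using the $1$-bit sketches of Section~\ref{sec:sketching}, terminating as soon as a $(r_1,r_2)$-valid neighbor is verified.

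For correctness, the DKT analysis used in Theorem~\ref{thm:lsh_dkt_simple} (which I will restate in the sharper form promised in the introduction, bringing the overhead from the $50\times$ of \cite{dahlgaard2017fast} down to a factor $2$) guarantees that, for the right choice of $m_i$, a close pair collides under each partial partition $g_{i,j_i}$ with probability at least $p_1^{k_i}(1-o(1))$ and that the collisions across $j_i\in[L_i]$ are sufficiently negatively correlated for a union-bound–style argument. By independence of the two pools, the collision probability under a composite partition is at least $p_1^{k}(1-o(1))$, and the probability of at least one collision among the $L$ composite partitions is $1-(1-\Omega(p_1^k))^{L}=\Omega(1)$, which can be amplified to $\ge 1/2$ by a constant in $L$. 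For cost, expected bucket probes and far-point collisions per query are both $O(L\cdot p_2^k n)=O(L)$, giving $O(n^\rho/p_1)$ word-RAM operations and, after sketch filtering, $O(n^\rho)$ distance computations; the space bound follows from storing $n$ pointers in each of the $L$ tables plus the $m_1+m_2$ functions in the pools.

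The main obstacle is the correctness analysis of the tensored DKT construction: partitions in the same collection share hash functions through the pool, and tensoring multiplies these dependencies, so the naive union bound across the $L$ composite partitions is lossy. The hard part will be showing that the shared-pool correlations within each half still yield a second-moment (or negative-association) bound strong enough that only a constant blow-up in $L$ is needed relative to the fully independent Indyk–Motwani case; once that is in hand for a single DKT collection of size $\sqrt{L}$, independence between the two pools lifts the bound cleanly to the tensored product.
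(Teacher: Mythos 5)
Your construction is the paper's: split $k=\lceil \log_{1/p_2} n\rceil$ as $k_1+k_2$ with $k_i\approx k/2$, build two independent DKT collections of sizes $L_1,L_2\approx\sqrt{L}$, and tensor them into the $L=L_1L_2$ composite partitions, which gives the claimed $O(\log_{1/p_2}^2(n)/p_1)$ hash evaluations, $O(n^{\rho}/p_1)$ lookups and word-RAM operations, and $O(n^\rho)$ distance computations. Two remarks on correctness. First, the estimate in your third paragraph is not justified: $1-(1-\Omega(p_1^k))^{L}$ treats the $L$ composite partitions as independent, but composites sharing an index $j_1$ share an entire half of their key, and partitions within a collection share the pool. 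Second, the ``hard part'' you defer at the end does not require any new correlation analysis across the tensored product: by construction the query succeeds as soon as \emph{each} collection separately contains one function colliding on the close pair, since the corresponding composite partition then collides. That per-collection statement is exactly the sharpened DKT second-moment (Cantelli) bound you already planned to restate --- Lemma~\ref{lem:dkt_success} --- applied with $\varepsilon=1/6$ and $L_i=\lceil 6(1/p_1)^{k_i}\rceil$, giving failure probability at most $1/4$ per collection; a union bound over the two collections then yields success probability at least $1/2$. This is precisely the paper's proof, so replacing your paragraph-three estimate with this two-event argument completes the proposal. (Minor bookkeeping: each DKT collection uses a $k_i\times m_i$ pool with $m_i=O(k_i/p_1)$ functions per coordinate, so the total number of $\LSH$ evaluations is $k_1m_1+k_2m_2=O(k^2/p_1)$; your line ``$m_1+m_2=O(k/p_1)=O(\log^2_{1/p_2}(n)/p_1)$'' conflates the per-row count with the total.)
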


Under the same simplifying assumptions used in the statements of Theorem \ref{thm:lsh_im_simple}, \ref{thm:lsh_ai_simple}, and \ref{thm:lsh_dkt_simple}, our main Theorem \ref{thm:lsh_dkt_ram} can be stated as Theorem \ref{thm:lsh_dkt_simple} with the word-RAM complexity reduced by a logarithmic factor to $O(n^\rho)$. 
This improvement in the word-RAM complexity comes at the cost of a (rather small) constant factor increase in the number of hash functions, lookups, and distance computations compared to the DKT framework.
By varying the size $m$ of the collection of hash functions from $\LSH$ and performing independent repetitions we can obtain a tradeoff between the number of hash functions and the number of lookups.
In Section~\ref{sec:corner} we remark on some possible improvements in the case where $p_2$ is large.

\paragraph{Distance sketching using LSH.}
Finally, we combine Theorem \ref{thm:lsh_dkt_ram} with the 1-bit sketching scheme of Li and K{\"o}nig~\cite{li2011theory} where we use the locality-sensitive hash family to create sketches that allow us to leverage word-level parallelism and avoid direct distance computations. 
This sketching technique is well known and has been used before in combination with LSH-based approximate similarity search~\cite{christiani2017scalable}, but we believe there is some value in the simplicity of the analysis and in a clear statement of the combination of the two results as given in Theorem~\ref{thm:lsh_dkt_ram_sketch}, for example in the important case where $0 < p_2 < p_1 < 1$ are constant.
\begin{theorem}\label{thm:lsh_dkt_ram_sketch}
	Let $\LSH$ be $(r_1, r_2, p_1, p_2)$-sensitive and let $\rho = \log(1/p_1) / \log(1/p_2)$, then there exists a solution to the $(r_1, r_2)$-near neighbor with the following properties:
\begin{itemize}
	\item The complexity of the query operation is dominated by $O(\log^2(n)/(p_1 - p_2)^2)$ evaluations of hash functions from $\LSH$ and $O(n^{\rho}/(p_1 - p_2)^2)$ other word-RAM operations.
	\item The solution uses $O(n^{1 + \rho}/p_1 + n/(p_1 - p_2)^2)$ words of space in addition to the space required to store the data and $O(\log^2(n)/(p_1 - p_2)^2)$ hash functions from $\LSH$.
\end{itemize}
\end{theorem}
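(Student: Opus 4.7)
The plan is to combine Theorem \ref{thm:lsh_dkt_ram} with a one-bit sketching scheme built \emph{from the LSH family itself}, so that the $O(n^\rho)$ candidate distance computations in Theorem \ref{thm:lsh_dkt_ram} are replaced by word-parallel comparisons of short bit-sketches. The underlying space $(X,\dist)$ is otherwise a black box, so the sketch must be derivable from $\LSH$ alone.

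First I would define the sketch. Draw $s = \Theta(\log n / (p_1-p_2)^2)$ independent hash functions $h_1,\dots,h_s$ from $\LSH$ together with independent uniformly random ``bit functions'' $g_i$ mapping the range of $h_i$ to $\{0,1\}$. The sketch of a point $x$ is $\sigma(x) = (g_1(h_1(x)),\dots,g_s(h_s(x))) \in \{0,1\}^s$. For each coordinate,
\begin{equation*}
\Pr[g_i(h_i(x)) = g_i(h_i(y))] = \tfrac{1}{2} + \tfrac{1}{2}\Pr[h_i(x)=h_i(y)],
\end{equation*}
so near pairs ($\dist(x,y)\le r_1$) agree in each bit with probability $\ge \tfrac{1}{2}+\tfrac{p_1}{2}$ while far pairs ($\dist(x,y)\ge r_2$) agree with probability $\le \tfrac{1}{2}+\tfrac{p_2}{2}$. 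A Chernoff bound then shows that thresholding the sketch agreement at $\tfrac{1}{2}+\tfrac{p_1+p_2}{4}$ correctly separates the two cases with error probability $n^{-c}$ for any constant $c$ when $s = \Theta(\log n/(p_1-p_2)^2)$ with a sufficiently large constant.

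Next I would describe preprocessing and querying. Build the data structure of Theorem \ref{thm:lsh_dkt_ram} and, in addition, store $\sigma(x)$ for every $x\in P$. Packing bits into $\Omega(\log n)$-bit words, the total sketch storage is $O(n/(p_1-p_2)^2)$ words, matching the second space term. On a query $q$, compute $\sigma(q)$ with $s$ additional hash evaluations, run the Theorem \ref{thm:lsh_dkt_ram} query to obtain $O(n^\rho)$ candidate points, and for each candidate $x$ estimate $\dist(q,x)$ by XOR-ing $\sigma(q)$ with $\sigma(x)$ and popcounting in $O(s/w) = O(1/(p_1-p_2)^2)$ word-RAM operations. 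Return the first candidate whose sketch agreement exceeds the threshold; no exact distance needs to be computed.

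Finally I would do the accounting and sum failure probabilities by a union bound. The DKT layer finds a near neighbor (when one exists) with constant probability. Independently, the sketch layer misclassifies any specific candidate with probability $n^{-c}$, so by a union bound over the $O(n^\rho)$ candidates the filter correctness holds with probability $1-o(1)$; adjusting constants keeps the overall success probability at least $1/2$. Hash evaluations total $O(\log^2_{1/p_2}(n)/p_1) + O(\log n/(p_1-p_2)^2)$; using $\log(1/p_2) \ge 1-p_2 \ge p_1-p_2$ and $p_1 \ge p_1-p_2$ this is bounded by $O(\log^2(n)/(p_1-p_2)^2)$. Word-RAM work is $O(n^\rho/p_1)$ from the framework plus $O(n^\rho/(p_1-p_2)^2)$ from sketch comparisons, giving the claimed bound. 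The only mildly delicate point, and the place I would spend the most care, is calibrating the sketch length and the DKT amplification so that the two independent error sources compose cleanly; everything else is a routine concatenation of Theorem \ref{thm:lsh_dkt_ram} with a standard Li--K\"onig-style Chernoff argument.
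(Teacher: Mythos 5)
Your proposal is exactly the paper's own argument: the paper builds the same Li--K\"onig-style one-bit sketches $s(x)_i = f_i(h_i(x))$ from $\LSH$ (so each bit agrees with probability $(1+\Pr[h_i(x)=h_i(y)])/2$), takes sketch length $b = O(\log(n)/(p_1-p_2)^2)$, separates near from far pairs with a threshold and Hoeffding's inequality (Lemma \ref{lem:sketching}), and composes this with Theorem \ref{thm:lsh_dkt_ram}, replacing distance computations by word-parallel sketch comparisons. The only cosmetic difference is that the paper union-bounds the sketch failure over all $n$ points of $P$ rather than over the $O(n^\rho)$ candidates, which changes nothing.
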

\section{Preliminaries}\label{fast:sec:preliminaries}
\paragraph{Problem and dynamization.}
We begin by defining the version of the approximate near neighbor problem that the frameworks presented in this paper will be solving:
\begin{definition}\label{def:ann}
Let $P \subseteq X$ be a collection of $|P| = n$ points in a distance space $(X, \dist)$.
A solution to the $(r_1, r_2)$-near neighbor problem is a data structure that supports the following query operation:
Given a query point $q \in X$, if there exists a point $x \in P$ with $\dist(q, x) \leq r_1$, 
then, with probability at least $1/2$, return a point $x' \in P$ such that $\dist(q, x') < r_2$.  
\end{definition}
We aim for solutions with a failure probability that is upper bounded by $1/2$.
The standard trick of using $\eta$ independent repetitions of the data structure allows us to reduce the probability of failure to $1/2^\eta$.
For the sake of simplicity we restrict our attention to static solutions, meaning that we do not concern ourselves with the complexity of updates to the underlying set $P$, although it is simple to modify the static solutions presented in this paper to dynamic solutions where the update complexity essentially matches the query complexity~\cite{overmars1981, har-peled2012} 

\paragraph{LSH powering.}
The Indyk-Motwani framework and the Andoni-Indyk framework will make use of the following standard powering technique described in the introduction as ``overlaying partitions''.
Let $k \geq 1$ be an integer and let $\LSH$ denote a locality-sensitive family of hash functions as in Definition \ref{def:lsh}. 
We will use the notation $\LSH^k$ to denote the distribution over functions $h' \colon X \to R^k$ where
\begin{equation*}
	h'(x) = (h_1(x), \dots, h_k(x))
\end{equation*}
and $h_1, \dots, h_k$ are sampled independently at random from $\LSH$.
It is easy to see that $\LSH^k$ is $(r_1, r_2, p_1^k, p_2^k)$-sensitive.
To deal with some special cases we define $\LSH^0$ to be the family consisting of a single constant function.

\paragraph{Model of computation.}
We will work in the standard word-RAM model of computation~\cite{hagerup1998} with a word length of $\Theta(\log n)$ bits where $n$ denotes the size of the collection $P$ to be searched in the $(r_1, r_2)$-near neighbor problem.
During the preprocessing stage of our solutions we will assume access to a source of randomness that allows us to sample independently from a family $\LSH$ and to seed pairwise independent hash functions~\cite{carter1979, wegman1981}. 
The latter can easily be accomplished by augmenting the model with an instruction that generates a uniformly random word in constant time and using that to seed the tables of a Zobrist hash function~\cite{zobrist1970new}.
\section{Frameworks}\label{sec:frameworks}
\paragraph{Overview.}
We will describe frameworks that take as input a $(r_1, r_2, p_1, p_2)$-sensitive family $\LSH$ and a collection $P$ of $n$ points and constructs a data structure that solves the $(r_1, r_2)$-near neighbor problem.
The frameworks described in this paper all use the same high-level technique of constructing $L$ hash functions $g_{1},\dots,g_{L}$ that are used to partition space such that a pair of points $x, y$ with $\dist(x, y) \leq r_1$ will end up in the same part of one of the $L$ partitions with probability at least $1/2$. 
That is, for $x, y$ with $\dist(x, y) \leq r_1$ we have that $\Pr[\exists l \in [L] \colon g_{l}(x) = g_{l}(y)] \geq 1/2$ where $[L]$ is used to denote the set $\{1,2,\dots,L\}$.
At the same time we ensure that the expected number of collisions between pairs of points $x,y$ with $\dist(x, y) \geq r_2$ is at most one in each partition.

\paragraph{Preprocessing and queries.}
During the preprocessing phase, for each of the $L$ hash functions $g_{1}, \dots,g_{L}$ we compute the partition of the collection of points $P$ induced by $g_{l}$ and store it in a hash table in the form of key-value pairs $(z, \{ x \in P \mid g_{l}(x) = z \})$.
To reduce space usage we store only a single copy of the collection $P$ and store references to $P$ in our $L$ hash tables.
To guarantee lookups in constant time we can use the perfect hashing scheme by Fredman et al.~\cite{fredman1984storing} to construct our hash tables.
We will assume that hash values $z = g_{l}(x)$ fit into $O(1)$ words. 
If this is not the case we can use universal hashing~\cite{carter1977} to operate on fingerprints of the hash values.

We perform a query for a point $q$ as follows: for $l = 1, \dots, L$ we compute $g_{l}(q)$, 
retrieve the set of points $\{ x \in P \mid g_{l}(x) = g_{l}(q) \}$, and compute the distance between $q$ and each point in the set.
If we encounter a point $x'$ with $\dist(q, x') < r_2$ then we return $x'$ and terminate.
If after querying the $L$ sets no such point is encountered we return a special symbol $\varnothing$ and terminate.

We will proceed by describing and analyzing the solutions to the $(r_1, r_2)$-near neighbor problem for different approaches to sampling, storing, and computing the $L$ hash functions $g_{1}, \dots, g_{L}$, resulting in the different frameworks as mentioned in the introduction.
\subsection{Indyk-Motwani}\label{sec:im}
To solve the $(r_1, r_2)$-near neighbor problem using the Indyk-Motwani framework we sample $L$ hash functions $g_{1}, \dots,g_{L}$ independently at random from the family $\LSH^k$ where we set $k = \lceil \log(n) / \log(1/p_2) \rceil$ and $L = \lceil (\ln 2)/p_1^k \rceil$.
Correctness of the data structure follows from the observation that the probability that a pair of points $x, y$ with $\dist(x,y) \leq r_1$ does not collide under a randomly sampled $g_l \sim \LSH^k$ is at most $1 - p_1^k$. 
We can therefore upper bound the probability that a near pair of points does not collide under any of the hash functions by $(1-p_1^k)^L \leq \exp(-p_1^k L) \leq 1/2$ using a standard bound stated as Lemma~\ref{lem:exp_upper} in Appendix~\ref{app:inequalities}.

In the worst case, the query operation computes $L$ hash functions from $\LSH^k$ corresponding to $Lk$ hash functions from $\LSH$. 
For a query point $q$ the expected number of points $x' \in P$ with $\dist(q, x') \geq r_2$ that collide with $q$ under a randomly sampled $g_l \sim \LSH^k$ is at most $np_2^k \leq np_2^{\log(n) / \log (1/p_2)} = 1$.
It follows from linearity of expectation that the total expected number of distance computations during a query is at most $L$.
The result is summarized in Theorem \ref{thm:lsh_im_exact} from which the simplified Theorem \ref{thm:lsh_im_simple} follows.
\begin{theorem}[Indyk-Motwani {\cite{indyk1998, har-peled2012}}]\label{thm:lsh_im_exact}
Given a $(r_1, r_2, p_1, p_2)$-sensitive family $\LSH$ we can construct a data structure that solves the $(r_1, r_2)$-near neighbor problem such that for
$k = \lceil \log(n) / \log(1/p_2) \rceil$ and $L = \lceil (\ln 2)/p_1^k \rceil$ the data structure has the following properties:
\begin{itemize}
\item The query operation uses at most $Lk$ evaluations of hash functions from $\LSH$, 
	expected $L$ distance computations, and $O(Lk)$ other word-RAM operations.
\item The data structure uses $O(nL)$ words of space in addition to the space required to store the data and $Lk$ hash functions from $\LSH$.
\end{itemize}
\end{theorem}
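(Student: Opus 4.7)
The plan is to instantiate the generic framework of Section~\ref{sec:frameworks} with $L$ hash functions $g_1,\ldots,g_L$ sampled independently from the powered family $\LSH^k$, and to choose $k$ and $L$ so that (i) every near pair collides in some $g_l$ with probability at least $1/2$ and (ii) the expected number of far-point collisions per table is at most $1$. The only quantitative work is to verify that the rounded-up choices $k=\lceil \log(n)/\log(1/p_2)\rceil$ and $L=\lceil(\ln 2)/p_1^k\rceil$ make both inequalities hold simultaneously.

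For correctness, I would fix a query $q$ admitting a witness $x\in P$ with $\dist(q,x)\leq r_1$. Since $\LSH$ is $(r_1,r_2,p_1,p_2)$-sensitive and the $k$ sub-hashes inside a single $g\sim\LSH^k$ are independent, $\Pr[g(q)=g(x)]\geq p_1^k$. By the independence of $g_1,\ldots,g_L$, the probability that $x$ fails to collide with $q$ in every one of the $L$ tables is at most $(1-p_1^k)^L\leq \exp(-p_1^k L)$, using Lemma~\ref{lem:exp_upper}. The chosen value of $L$ ensures $p_1^k L\geq \ln 2$, so this probability is at most $1/2$, matching the guarantee required by Definition~\ref{def:ann}.

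For the query cost I would rely on linearity of expectation. Any fixed far point $y\in P$ with $\dist(q,y)\geq r_2$ collides with $q$ under a single $g_l$ with probability at most $p_2^k$. The choice $k\geq \log(n)/\log(1/p_2)$ gives $p_2^k\leq 1/n$, so the expected number of far collisions in one table is at most $np_2^k\leq 1$, and summed over all $L$ tables at most $L$. Each such collision contributes one distance computation, yielding the stated bound of expected $L$ distance computations; if the algorithm finds a valid $x'$ earlier it only terminates sooner, so this is an upper bound.

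The remaining accounting follows directly from the construction. Evaluating $g_l(q)$ requires $k$ evaluations from $\LSH$, and each of the $L$ tables contributes one hash-table lookup, giving $Lk$ total evaluations and $O(Lk)$ other word-RAM operations (hash values are assumed to fit in $O(1)$ words, using the universal-hashing fingerprinting trick from Section~\ref{fast:sec:preliminaries} if necessary). Using the Fredman-Komlós-Szemerédi perfect hashing scheme mentioned in the preamble, each table stores $n$ references in $O(n)$ words, for $O(nL)$ words in addition to the data and the $Lk$ sampled hash functions. The only step requiring any real care is confirming that the integer ceilings in the definitions of $k$ and $L$ preserve both inequalities, which is immediate: rounding $k$ up only strengthens $np_2^k\leq 1$, and $L$ is chosen precisely so that $p_1^k L\geq \ln 2$.
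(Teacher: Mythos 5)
Your proposal is correct and follows essentially the same route as the paper: the same choices of $k$ and $L$, the bound $(1-p_1^k)^L \leq \exp(-p_1^k L) \leq 1/2$ via Lemma~\ref{lem:exp_upper}, linearity of expectation giving $np_2^k \leq 1$ far collisions per table and hence expected $L$ distance computations, and the same accounting of $Lk$ hash evaluations, $O(Lk)$ word-RAM operations, and $O(nL)$ words via the framework's perfect-hashing tables. No gaps to report.
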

Theorem \ref{thm:lsh_im_exact} gives a bound on the expected number of distance computations while the simplified version stated in Theorem \ref{thm:lsh_im_simple} uses Markov's inequality and independent repetitions to remove the expectation from the bound by treating an excessive number of distance computations as a failure.
\subsection{Andoni-Indyk}\label{sec:ai}
In 2006 Andoni and Indyk, as part of a paper on the substring near neighbor problem, introduced an improvement to the Indyk-Motwani framework that reduces the number of locality-sensitive hash functions~\cite{andoni2006efficient}.
Their improvement comes from the use of a technique that we will refer to as tensoring: setting the hash functions $g_1, \dots, g_L$ to be all $t$-tuples from a collection of $m$ functions sampled from $\LSH^{k/t}$ where $m \ll L$.
The analysis in~\cite{andoni2006efficient} shows that by setting $m = n^{\rho/t}$ and repeating the entire scheme $t!$ times, the total number of hash functions can be reduced to $O(\exp(\sqrt{\rho \log n \log \log n}))$ when setting $t = \sqrt{\frac{\rho \log n}{\log \log n}}$.
This analysis ignores integer constraints on $t$, $k$, and $m$, and implicitly place restrictions on $p_1$ and $p_2$ in relation to $n$ (e.g.\ $0 < p_2 < p_1 < 1$ are constant).
We will introduce a slightly different scheme that takes into account integer constraints and analyze it without restrictions on the properties of $\LSH$.

Assume that we are given a $(r_1, r_2, p_1, p_2)$-sensitive family $\LSH$.
Let $\eta, t, k_1, k_2, m_1, m_2$ be non-negative integer parameters.
Each of the $L$ hash functions $g_1, \dots, g_L$ will be formed by concatenating one hash function from each of $t$ collections of $m_1$ hash functions from $\LSH^{k_1}$ and concatenating a last hash function from a collection of $m_2$ hash functions from $\LSH^{k_2}$.
We take all $m_1^t m_2$ hash functions of the above form and repeat $\eta$ times for a total of $L = \eta m_1^t m_2$ hash functions constructed from a total of $H = \eta(m_1 k_1 t + m_2 k_2)$ hash functions from $\LSH$.
In Appendix \ref{app:ai} we set parameters, leaving $t$ variable, and provide an analysis of this scheme, showing that $L$ matches the Indyk-Motwani framework bound of $O(1/p_1^k)$ up to a constant where $k = \lceil \log(n) / \log(1/p_2) \rceil$ as in Theorem \ref{thm:lsh_im_exact}.

\paragraph{Setting $t$.}
It remains to show how to set $t$ to obtain a good bound on the number of hash functions $H$.
Note that in practice we can simply set $t = \argmin_t H$ by trying $t = 1,\dots,k$. 
If we ignore integer constraints and place certain restrictions of $\LSH$ as in the original tensoring scheme by Andoni and Indyk we want to set $t$ to minimize the expression $t^t n^{\rho/t}$. 
This minimum is obtained when setting $t$ such that $t^2 \log t = \rho \log n$.
We therefore cannot do much better than setting $t = \sqrt{\rho \log(n) / \log \log n}$ which gives the bound $H = O(\exp(\sqrt{\rho \log(n) \log \log n}))$ as shown in \cite{andoni2006efficient}. 
To allow for easy comparison with the Indyk-Motwani framework without placing restrictions on $\LSH$ we set $t = \lceil \sqrt{k} \rceil$, resulting in Theorem \ref{thm:tensoring}. 
\begin{theorem}\label{thm:tensoring}
Given a $(r_1, r_2, p_1, p_2)$-sensitive family $\LSH$ we can construct a data structure that solves the $(r_1, r_2)$-near neighbor problem such that for $k = \lceil \log(n) / \log(1/p_2) \rceil$, $H = k(\sqrt{k}/p_1)^{\sqrt{k}}$, and $L = \lceil 1/p_1^k \rceil$ the data structure has the following properties:
\begin{itemize}
	\item The query operation uses $O(H)$ evaluations of functions from $\LSH$, $O(L)$ distance computations, and $O(L + H)$ other word-RAM operations.
	\item The data structure uses $O(nL)$ words of space in addition to the space required to store the data and $O(H)$ hash functions from $\LSH$.
\end{itemize}
\end{theorem}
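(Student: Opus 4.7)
The plan is to specialize the general tensoring scheme of the preceding paragraph to $t = \lceil \sqrt{k} \rceil$ and verify the resulting resource bounds. First, I would fix slot depths $k_1 = \lfloor k/t \rfloor$ and $k_2 = k - k_1 t$ so that $k_1 t + k_2 = k$ and every composed hash function $g_l$ uses exactly $k$ functions from $\LSH$; by the powering construction of Section~\ref{fast:sec:preliminaries}, each such $g_l$ is $(r_1, r_2, p_1^k, p_2^k)$-sensitive. I would sample $t$ independent collections $C_1, \dots, C_t$ of $m_1$ functions from $\LSH^{k_1}$ and one collection $C_{t+1}$ of $m_2$ functions from $\LSH^{k_2}$, form all $m_1^t m_2$ tuples by picking one function from each collection, and repeat the whole construction $\eta$ times, producing $\eta m_1^t m_2$ hash functions $g_l$.

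Next, I would verify correctness. Fix a near pair $(q, x)$ with $\dist(q, x) \leq r_1$ and a single copy. Let $A_i$ index the functions in $C_i$ that map $q$ and $x$ to the same value. The $A_i$ are independent because the $C_i$ are sampled independently, and the copy produces at least one colliding $g_l$ iff every $A_i$ is non-empty. By Lemma~\ref{lem:exp_upper} the per-slot miss probability $(1-p_1^{k_i})^{m_i} \leq \exp(-m_i p_1^{k_i})$ can be made small by taking $m_1$ a modest multiple of $1/p_1^{k_1}$ and $m_2$ a modest multiple of $1/p_1^{k_2}$; a union bound over the $t+1$ slots then keeps the per-copy failure probability bounded away from one, and $\eta$ independent repetitions push the global failure probability below $1/2$.

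Finally, I would count resources and plug in $t = \lceil \sqrt{k} \rceil$. The total number of $\LSH$ evaluations is $\eta(m_1 k_1 t + m_2 k_2)$, which with $k_1 \leq \sqrt{k}$ and the appendix's choice of $m_1, m_2, \eta$ simplifies to the stated $k(\sqrt{k}/p_1)^{\sqrt{k}}$ bound. The number of composed $g_l$'s is $\eta m_1^t m_2$, which matches $\lceil 1/p_1^k \rceil$ up to the $O$-constant; since $k = \lceil \log(n)/\log(1/p_2) \rceil$ ensures $n p_2^k \leq 1$, linearity of expectation bounds the expected number of distance computations by $O(L)$, and the remaining bookkeeping contributes $O(L + H)$ word-RAM operations. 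The space bound $O(nL)$ follows from storing a perfect hash table over $P$ for each of the $L$ composed functions.

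The main obstacle is the tight simultaneous balance among $m_1$, $\eta$, and the integer $t = \lceil \sqrt{k} \rceil$: to keep the per-copy union bound over $t+1$ slots from blowing up, $m_1$ must be enlarged slightly beyond $1/p_1^{k_1}$, and any such enlargement is amplified to the $t$-th power inside $\eta m_1^t m_2$. Reconciling this tension while respecting the ceilings in $k_1, k_2, m_1, m_2$ without asymptotic shortcuts is the technical heart of the argument and is precisely what the Appendix~\ref{app:ai} analysis formalizes.
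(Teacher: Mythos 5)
Your construction and bookkeeping match the paper's, but the correctness step is where the argument breaks, and the break is not cosmetic. You propose taking $m_1$ to be a ``modest multiple'' of $1/p_1^{k_1}$ so that a union bound over the $t+1$ collections keeps the per-copy failure probability bounded away from one. With a constant multiple this simply fails: each collection misses with constant probability (about $e^{-c}$), and a union bound over $t+1 = \Theta(\sqrt{k})$ collections gives a per-copy failure bound of $(t+1)e^{-c}$, which is not bounded away from one unless $c = \Omega(\log t)$, i.e.\ unless $m_1 = \Omega(\log(t)/p_1^{k_1})$. But if you do enlarge $m_1$ by a $\Theta(\log t)$ factor, that factor is raised to the $t$-th power in the number of composed functions $\eta\, m_1^t m_2$, inflating the number of lookups and distance computations to roughly $(\log t)^{t}/p_1^{k} = (\log k)^{\Theta(\sqrt{k})}/p_1^k$. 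This contradicts the theorem you are trying to prove, which promises $O(L)$ distance computations with $L = \lceil 1/p_1^k \rceil$. So your route either loses correctness or proves a strictly weaker statement; the ``tension'' you flag at the end is real under your strategy, and it is not what Appendix~\ref{app:ai} resolves.

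The paper's analysis dissolves the tension by moving in the opposite direction: it sets $m_1 = \lceil 1/(t\,p_1^{k_1}) \rceil$ (smaller than $1/p_1^{k_1}$, not larger) and makes no attempt to keep the per-copy success probability constant. The per-copy success probability is $\varphi = (1-(1-p_1^{k_1})^{m_1})^t (1-(1-p_1^{k_2})^{m_2})$, which is lower bounded via Lemmas~\ref{lem:exp_upper} and~\ref{lem:exp_taylor} by $(p_1^{k_1} m_1)^t/4e \geq (1/t)^t/4e$ --- allowed to be tiny --- and correctness is restored by taking $\eta = \lceil \ln(2)/\varphi \rceil$ repetitions, since $(1-\varphi)^\eta \leq 1/2$. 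The crucial algebraic point, which is missing from your proposal, is the cancellation $\eta\, m_1^t m_2 \leq (4e/(p_1^{k_1} m_1)^t + 1)\, m_1^t (1/p_1^{k_2}+1) \leq 16e/p_1^k$: the factor $m_1^t$ cancels against $1/\varphi$, so $L$ stays $O(\lceil 1/p_1^k \rceil)$ no matter how small $\varphi$ is, while the hash-function count $H = \eta(m_1 k_1 t + m_2 k_2) \leq 8e\, t^t\bigl(k/(t\,p_1^{k/t}) + (t-1)/p_1^{t-1}\bigr)$ absorbs the $t^t$ blow-up, which is exactly what the stated bound $H = k(\sqrt{k}/p_1)^{\sqrt{k}}$ reflects at $t = \lceil \sqrt{k}\rceil$. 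In short: the theorem trades a huge number of \emph{repetitions} (hurting only $H$) for keeping each collection barely large enough, rather than trading collection size (which hurts $L$ exponentially in $t$) for few repetitions.
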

Thus, compared to the Indyk-Motwani framework we have gone from using $O(k(1/p_1)^k)$ locality-sensitive hash functions to $O(k(\sqrt{k}/p_1)^{\sqrt{k}})$ locality-sensitive hash functions. 
Figure~\ref{fig:comparison} shows the actual number of hash functions of the revised version of the Andoni-Indyk scheme as analyzed in Appendix \ref{app:ai} when $t$ is set to minimize $H$.
\subsection{Dahlgaard-Knudsen-Thorup}\label{sec:dkt}
In a recent paper Dahlgaard et al.~\cite{dahlgaard2017fast} introduce a different technique for reducing the number of locality-sensitive hash functions.
The idea is to construct each hash value $g_{l}(x)$ by sampling and concatenating $k$ hash values from a collection of $km$ pre-computed hash functions from~$\LSH$.
Dahlgaard et al.\ applied this technique to provide a fast solution the approximate near neighbor problem for sets under Jaccard similarity.
In this paper we use the same technique to derive a general framework solution that works with every family of locality-sensitive hash functions, reducing the number of locality-sensitive hash functions compard to the Indyk-Motwani and Andoni-Indyk frameworks.

Let $[n]$ denote the set of integers $\{1,2,\dots,n\}$.
For $i \in [k]$ and $j \in [m]$ let $h_{i,j} \sim \LSH$ denote a hash function in our collection.
To sample from the collection we use $k$ pairwise independent hash functions~\cite{wegman1981} of the form $f_i \colon [L] \to [m]$ and set
\begin{equation*}
g_{l}(x) = (h_{1,f_{1}(l)}(x), \dots, h_{k,f_{k}(l)}(x)).
\end{equation*}
To show correctness of this scheme we will use make use of an elementary one-sided version of Chebyshev's inequality stating that for a random variable $Z$ with mean $\mu > 0$ and variance $\sigma^2 < \infty$ we have that $\Pr[Z \leq 0] \leq \sigma^2 / (\mu^2 + \sigma^2)$. 
For completeness we have included the proof of this inequality in Lemma~\ref{lem:cantelli} in Appendix~\ref{app:inequalities}.
We will apply this inequality to lower bound the probability that there are no collisions between close pairs of points.
For two points $x$ and $y$ let $Z_l = \1\{g_{l}(x) = g_{l}(y)\}$ so that $Z = \sum_{l = 1}^L Z_l$ denotes the sum of collisions under the $L$ hash functions. 
To apply the inequality we need to derive an expression for the expectation and the variance of the random variable $Z$.
Let $p = \Pr_{h \sim \LSH}[h(x) = h(y)]$ then by linearity of expectation we have that $\mu = \E[Z] = L p^k$.
To bound $\sigma^2 = \E[Z^2] - \mu^2$ we proceed by bounding $\E[Z^2]$ where we note that $Z_l = \Pi_{i = 1}^{k} Y_{l, i}$ for $Y_{l, i} = 1\{h_{i,f_{i}(l)}(x) = h_{i,f_{i}(l)}(x)\}$ and make use of the independence between $Y_{l,i}$ and $Y_{l', i'}$ for $i \neq i'$.
\begin{align*}
	\E[Z^2] &= \sum_{\substack{l, l' \in [L] \\ l \neq l'}} \E[Z_l Z_{l'}] + \sum_{l = 1}^L \E[Z_l] \\ 
			&= (L^2 - L) \E[Z_l Z_{l'}] + \mu \\
			&\leq L^2 \E\left[ \Pi_{i = 1}^{k} Y_{l, i} Y_{l', i} \right] + \mu \\ 
			&= L^2 \left( \E[Y_{l, i} Y_{l', i}] \right)^k + \mu. 
\end{align*}
We have that $\E[Y_{l, i} Y_{l', i}] = \Pr[f_i(l) = f_i(l')] p + \Pr[f_i(l) \neq f_i(l')] p^2 = (1/m)p + (1-1/m)p^2$ which follows from the pairwise independence of $f_i$. 
Let $\varepsilon > 0$ and set $m = \lceil \frac{1-p_1}{p_1} \frac{k}{\ln(1+\varepsilon)} \rceil$ then for $p \geq p_1$ we have that $\left( \E[Y_{l, i} Y_{l', i}] \right)^k \leq (1 + \varepsilon)p^{2k}$.
This allows us to bound the variance of $Z$ by $\sigma^2 \leq \varepsilon \mu^2 + \mu$ resulting in the following lower bound on the probability of collision between similar points.
\begin{lemma} \label{lem:dkt_success}
For $\varepsilon > 0$ let $m \geq \lceil \frac{1-p_1}{p_1} \frac{k}{\ln(1+\varepsilon)} \rceil$, then for every pair of points $x, y$ with $\dist(x, y) \leq r_1$ we have that 
\begin{equation*}
	\Pr[\exists l \in [L] \colon g_{l}(x) = g_{l}(y)] \geq \frac{1 + \varepsilon \mu}{1 + (1 + \varepsilon)\mu}.
\end{equation*}
\end{lemma}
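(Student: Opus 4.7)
The plan is to apply the one-sided Chebyshev (Cantelli) inequality---recalled as Lemma~\ref{lem:cantelli}---to the random variable $Z = \sum_{l=1}^{L} Z_l$, using exactly the mean $\mu = Lp^k$ and the variance bound $\sigma^2 \leq \varepsilon\mu^2 + \mu$ that have been derived in the paragraphs immediately preceding the statement.

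First I would rewrite the probability of interest. Because $Z$ takes values in the non-negative integers, the event $\{\exists l \in [L] : g_l(x) = g_l(y)\}$ coincides with $\{Z \geq 1\}$, whose probability equals $1 - \Pr[Z \leq 0]$. Invoking Cantelli's inequality with deviation $t = \mu$ gives $\Pr[Z \leq 0] \leq \sigma^2/(\sigma^2 + \mu^2)$. Since $x \mapsto x/(x + \mu^2)$ is monotonically increasing, substituting the upper bound $\sigma^2 \leq \varepsilon\mu^2 + \mu$ preserves the inequality and produces
\begin{equation*}
\Pr[Z \leq 0] \;\leq\; \frac{\varepsilon\mu^2 + \mu}{(1+\varepsilon)\mu^2 + \mu} \;=\; \frac{1 + \varepsilon\mu}{1 + (1+\varepsilon)\mu}
\end{equation*}
after dividing numerator and denominator by $\mu$. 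Combining this with the identity $\Pr[Z \geq 1] = 1 - \Pr[Z \leq 0]$ then yields the stated bound.

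Finally, I would verify that the hypothesis on $m$ actually supports the variance estimate being invoked. Direct expansion gives $\mathbb{E}[Y_{l,i} Y_{l',i}] = p^2\bigl(1 + (1-p)/(pm)\bigr)$, and coupling $p \geq p_1$ with the elementary inequality $1 + x \leq e^x$ reduces the required bound $(\mathbb{E}[Y_{l,i} Y_{l',i}])^k \leq (1+\varepsilon) p^{2k}$ to $k(1-p_1)/(p_1 m) \leq \ln(1+\varepsilon)$, which is precisely what $m \geq \lceil \frac{1-p_1}{p_1} \cdot \frac{k}{\ln(1+\varepsilon)} \rceil$ guarantees. There is no substantive technical obstacle here---the heavy lifting has already been done in the mean and variance computations in the main text---but the one small pitfall is to check that the monotonicity of $x/(x+\mu^2)$ runs in the direction needed to turn the upper bound on $\sigma^2$ into an upper bound on $\sigma^2/(\sigma^2+\mu^2)$, rather than accidentally weakening the bound in the wrong direction.
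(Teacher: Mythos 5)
Your proposal is correct and follows exactly the route the paper intends: the lemma is stated as an immediate consequence of Cantelli's inequality (Lemma~\ref{lem:cantelli}) applied to $Z=\sum_{l}Z_l$ with $\mu = Lp^k$ and the variance bound $\sigma^2 \leq \varepsilon\mu^2+\mu$, whose validity under the stated choice of $m$ you verify correctly via $\E[Y_{l,i}Y_{l',i}] = p^2\bigl(1+(1-p)/(pm)\bigr)$, $p\geq p_1$, and $1+x\leq e^x$.

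One small point of bookkeeping: your final complementation gives
\begin{equation*}
\Pr[\exists l\colon g_l(x)=g_l(y)] \;=\; 1-\Pr[Z\leq 0] \;\geq\; 1-\frac{1+\varepsilon\mu}{1+(1+\varepsilon)\mu} \;=\; \frac{\mu}{1+(1+\varepsilon)\mu},
\end{equation*}
which is not literally the expression printed in the lemma; the quantity $\frac{1+\varepsilon\mu}{1+(1+\varepsilon)\mu}$ is the upper bound on the \emph{failure} probability $\Pr[Z\leq 0]$ that you derived just before. The printed statement appears to have the success and failure events conflated, and your derivation matches the intended content --- it is exactly what the paper uses afterwards (with $\varepsilon=1/4$ and $\mu\geq 2\ln 2$ the failure probability is at most $1/2$). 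So the substance of your proof is right; just be explicit that the displayed bound is for $\Pr[Z\leq 0]$, or equivalently state the collision probability bound as $\mu/(1+(1+\varepsilon)\mu)$.
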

By setting $\varepsilon = 1/4$ and $L = \lceil (2 \ln(2))/p_1^k \rceil$ we obtain an upper bound on the failure probability of $1/2$.
Setting the size of each of the $k$ collections of pre-computed hash values to $m = \lceil 5k/p_1 \rceil$ is sufficient to yield the following solution to the $(r_1, r_2)$-near neighbor problem where provide exact bounds on the number of lookups $L$ and hash functions $H$:
\begin{theorem}[Dahlgaard-Knudsen-Thorup {\cite{dahlgaard2017fast}}]\label{thm:lsh_dkt_exact}
Given a $(r_1, r_2, p_1, p_2)$-sensitive family $\LSH$ we can construct a data structure that solves the $(r_1, r_2)$-near neighbor problem such that for $k = \lceil \log(n) / \log(1/p_2) \rceil$, $H = k \lceil 5k / p_1 \rceil$, and $L = \lceil (2 \ln(2))/p_1^k \rceil$ the data structure has the following properties:
\begin{itemize}
\item The query operation uses at most $H$ evaluations of hash functions from $\LSH$, 
	expected $L$ distance computations, and $O(Lk)$ other word-RAM operations.
\item The data structure uses $O(nL)$ words of space in addition to the space required to store the data and $H$ hash functions from $\LSH$.
\end{itemize}
\end{theorem}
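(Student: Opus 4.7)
The plan is to choose parameters $\varepsilon$, $m$, and $L$ to activate Lemma \ref{lem:dkt_success} and then account for the query cost and space separately. Setting $\varepsilon = 1/4$, I would first verify that $m = \lceil 5k/p_1 \rceil$ satisfies the lemma's precondition $m \geq \lceil (1-p_1)k/(p_1 \ln(1+\varepsilon)) \rceil$; since $5\ln(5/4) > 1 \geq 1 - p_1$ for $p_1 \in (0,1]$, this is immediate. With $L = \lceil 2\ln(2)/p_1^k \rceil$, the Cantelli-based argument already set up in the excerpt, applied to $Z = \sum_{l=1}^{L} \1\{g_l(x)=g_l(y)\}$ with $\mu = L p_1^k \geq 2\ln 2$ and $\sigma^2 \leq \varepsilon\mu^2 + \mu$, yields a lower bound on $\Pr[Z \geq 1]$ that exceeds $1/2$ whenever $(1-\varepsilon)\mu \geq 1$; the inequality $2\ln 2 > 4/3$ confirms this for every close pair $x,y$.

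The remaining bounds are essentially bookkeeping. The choice $k = \lceil \log n / \log(1/p_2) \rceil$ ensures $p_2^k \leq 1/n$, so each point $x$ with $\dist(q,x) \geq r_2$ collides with $q$ under a fixed $g_l$ with probability at most $1/n$; summing over the $n$ points and $L$ hash functions bounds the expected number of far-pair distance computations by $L$. The total number of $\LSH$ evaluations is exactly $km = H$, performed once to build the pool $\{h_{i,j}\}$; each $g_l(q)$ is then assembled from the pool in $O(k)$ word-RAM operations using the pairwise independent $f_i$'s, and hash-table lookups cost $O(1)$ under the perfect-hashing assumption of Section \ref{fast:sec:preliminaries}, giving the $O(Lk)$ other-operations bound. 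Storing $L$ hash tables as references into a single shared copy of $P$ uses $O(nL)$ words beyond $P$ and the $H$ retained hash functions.

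The main technical substance sits in the variance inequality $\sigma^2 \leq \varepsilon\mu^2 + \mu$, which is already derived in the paragraph preceding Lemma \ref{lem:dkt_success} using pairwise independence of the $f_i$'s; the theorem itself is then a quantitative packaging of that inequality with $k$ chosen large enough to suppress far-pair collisions and $m$ chosen just large enough to push the second-moment loss below the threshold $(1-\varepsilon)$. The only potential obstacle is ensuring that the numerical constants line up so that the chosen $m$ and $L$ simultaneously satisfy both the lemma's precondition and the Cantelli threshold, which is what motivates the specific choice $\varepsilon = 1/4$.
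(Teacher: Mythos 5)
Your proposal is correct and follows essentially the same route as the paper: it instantiates Lemma \ref{lem:dkt_success} with $\varepsilon = 1/4$, $m = \lceil 5k/p_1 \rceil$, and $L = \lceil 2\ln(2)/p_1^k \rceil$, and then handles the far-point collisions, the $H = km$ evaluations, the $O(Lk)$ assembly cost, and the $O(nL)$ space exactly as the surrounding text does. Your explicit check that Cantelli's bound gives success probability at least $1/2$ precisely when $(1-\varepsilon)\mu \geq 1$, together with $2\ln 2 > 4/3$, is the right quantitative verification of the parameter choices.
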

Compared to the Indyk-Motwani framework we have reduced the number of locality-sensitive hash functions $H$ from $O(k (1/p_1)^k)$ to $O(k^2 / p_1)$ at the cost of using twice as many lookups. 
To reduce the number of lookups further we can decrease $\varepsilon$ and perform several independent repetitions.
This comes at the cost of an increase in the number of hash functions $H$.
\section{Reducing the word-RAM complexity} \label{section:word-RAM}
One drawback of the DKT framework is that each hash value $g_{l}(x)$ still takes $O(k)$ word-RAM operations to compute, even after the underlying locality-sensitive hash functions are known. 
This results in a bound on the total number of additional word-RAM operations of $O(Lk)$.
We show how to combine the DKT universal hashing technique with the AI tensoring technique to ensure that the running time is dominated by $O(L)$ distance computations and $O(H)$ hash function evaluations. 
The idea is to use the DKT scheme to construct two collections of respectively $L_1$ and $L_2$ hash functions, and then to use the AI tensoring approach to form $g_1, \dots, g_L$ as the $L = L_1 \times L_2$ combinations of functions from the two collections.
The number of lookups can be reduced by applying tensoring several times in independent repetitions, but for the sake of simplicity we use a single repetition.
For the usual setting of $k = \lceil \log(n) / \log(1/p_2) \rceil$ let $k_1 =  \lceil k/2 \rceil$ and $k_2 = \lfloor k/2 \rfloor$.
Set $L_1 = \lceil 6 (1/p_1)^{k_1} \rceil$ and $L_2 = \lceil 6 (1/p_1)^{k_2} \rceil$.
According to Lemma \ref{lem:dkt_success} if we set $\varepsilon = 1/6$ the success probability of each collection is at least $3/4$ and by a union bound the probability that either collection fails to contain a colliding hash function is at most $1/2$.
This concludes the proof of our main Theorem \ref{thm:lsh_dkt_ram}.
\subsection{Sketching}\label{sec:sketching}
The theorems of the previous section made no assumptions on the word-RAM complexity of distance computations and instead stated the number of distance computations as part of the query complexity. 
We can use a $(r_1, r_2, p_1, p_2)$-sensitive family $\LSH$ to create sketches that allows us to efficiently approximate the distance between pairs of points, provided that the gap between $p_1$ and $p_2$ is sufficiently large. 
In this section we will re-state the results of Theorem~\ref{thm:lsh_dkt_ram} when applying the family $\LSH$ to create sketches using the 1-bit sketching scheme of Li and König~\cite{li2011theory}.
Let $b$ be a positive integer denoting the length of the sketches in bits.
The advantage of this scheme is that we can use word level parallelism to evaluate a sketch of $b$ bits in time $O(b/\log n)$ in our word-RAM model with word length $\Theta(\log n)$. 

For $i = 1, \dots, b$ let $h_i \colon X \to R$ denote a randomly sampled locality-sensitive hash function from $\LSH$ and let $f_i \colon R \to \{0,1\}$ denote a randomly sampled universal hash function.
We let $s(x) \in \{0,1\}^b$ denote the sketch of a point $x \in X$ where we set the $i$th bit of the sketch $s(x)_i = f_i(h(x))$.
For two points $x, y \in X$ the probability that they agree on the $i$th bit is $1$ if the points collide under $h_i$ and $1/2$ otherwise. 
\begin{align*}
	\Pr[s(x)_i = s(y)_i] &= \Pr[h_i(x) = h_i(y)] + (1 - \Pr[h_i(x) = h_i(y)])/2 \\
				 &= (1 + \Pr[h_i(x) = h_i(y)])/2.
\end{align*}
We will apply these sketches during our query procedure instead of direct distance computations when searching through the points in the $L$ buckets, comparing them to our query point $q$.
Let $\lambda \in (0,1)$ be a parameter that will determine whether we report a point or not.
For sketches of length $b$ we will return a point $x$ if $\norm{s(q) - s(x)}_1 > \lambda b$.
An application of Hoeffiding's inequality gives us the following properties of the sketch:
\begin{lemma}\label{lem:sketching}
	Let $\LSH$ be a $(r_1, r_2, p_1, p_2)$-sensitive family and let $\lambda = (1+p_2)/2 + (p_1-p_2)/4$, 
	then for sketches of length $b \geq 1$ and for every pair points $x, y \in X$:  
	\begin{itemize}
		\item If $\dist(x ,y) \leq r_1$ then $\Pr[\norm{s(x) - s(y)}_1 \leq \lambda b] \leq e^{-b(p_1 - p_2)^2 / 8 }$.
		\item If $\dist(x ,y) \geq r_2$ then $\Pr[\norm{s(x) - s(y)}_1 > \lambda b] \leq e^{-b(p_1 - p_2)^2 / 8 }$.
	\end{itemize}
\end{lemma}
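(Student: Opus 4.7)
The plan is to apply Hoeffding's inequality to the $b$ independent per-coordinate contributions to $\norm{s(x)-s(y)}_1$ and show that the chosen threshold $\lambda b$ separates the ``close'' and ``far'' regimes by a margin of exactly $(p_1-p_2)/4$ in the coordinate mean.

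First I would write $\norm{s(x)-s(y)}_1 = \sum_{i=1}^{b} X_i$ where the $X_i$ are independent indicators of the event that coordinate $i$ of the two sketches matches (or differs, depending on one's sign convention; the analysis is symmetric). Using the pre-computed identity in the text, $\Pr[s(x)_i = s(y)_i] = (1 + \Pr[h_i(x) = h_i(y)])/2$, where independence across $i$ follows because the $h_i$ and $f_i$ are independently sampled. Thus each $X_i$ is a Bernoulli variable whose expectation depends monotonically on $\Pr[h_i(x)=h_i(y)]$. Invoking the $(r_1,r_2,p_1,p_2)$-sensitivity: if $\dist(x,y) \leq r_1$ the per-coordinate mean is at least $(1+p_1)/2$, and if $\dist(x,y) \geq r_2$ it is at most $(1+p_2)/2$.

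Next I would observe that the chosen threshold $\lambda = (1+p_2)/2 + (p_1-p_2)/4$ lies exactly midway between $(1+p_2)/2$ and $(1+p_1)/2$, so it is separated from either relevant mean by $(p_1-p_2)/4$. Applying the one-sided Hoeffding bound to the sum of $b$ independent $[0,1]$-valued variables with deviation $t = (p_1-p_2)/4$ gives a tail probability of at most $\exp(-2b t^2) = \exp(-b(p_1-p_2)^2/8)$. In the close-points case the relevant tail is the event that the empirical mean falls below $\lambda$ (i.e.\ by at least $(p_1-p_2)/4$ below its lower-bounded mean $(1+p_1)/2$), and in the far-points case it is the event that the empirical mean exceeds $\lambda$ (i.e.\ by at least $(p_1-p_2)/4$ above its upper-bounded mean $(1+p_2)/2$). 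Both bounds match the claim.

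The only subtle point is making sure that when the true expectation is strictly better than the worst-case value $(1\pm p_j)/2$, the deviation we bound is still at least $(p_1-p_2)/4$; this is immediate from monotonicity of Hoeffding's bound in the deviation, so no difficulty arises there. I expect no real obstacle: the lemma is essentially a direct application of Hoeffding's inequality once the per-coordinate collision probability has been computed, and the main care is bookkeeping the constants so that $2 \cdot ((p_1-p_2)/4)^2 = (p_1-p_2)^2/8$ appears in the exponent as stated.
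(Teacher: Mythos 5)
Your proof is correct and takes essentially the same route as the paper, which obtains the lemma by a direct application of Hoeffding's inequality to the $b$ independent sketch coordinates, using that $\Pr[s(x)_i = s(y)_i] = (1+\Pr[h_i(x)=h_i(y)])/2$ and that the threshold $\lambda$ lies at distance $(p_1-p_2)/4$ from the worst-case per-coordinate means $(1+p_1)/2$ and $(1+p_2)/2$, giving $e^{-2b((p_1-p_2)/4)^2} = e^{-b(p_1-p_2)^2/8}$. The sign-convention caveat you flag is present in the paper's own statement as well (with $0/1$ sketches, $\norm{s(x)-s(y)}_1$ literally counts disagreements, so the stated thresholds implicitly refer to the agreement count), and your monotonicity remark correctly disposes of the fact that the true means may be better than the worst-case bounds.
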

If we replace the exact distance computations with sketches we want to avoid two events: 
Failing to report a point with $\dist(q,x) \leq r_1$ and reporting a point $x$ with $\dist(q, x) \geq r_2$.
By setting $b = O(\ln(n) / (p_1 - p_2)^2)$ and applying a union bound over the $n$ events that the sketch fails for a point in our collection $P$ we obtain Theorem~\ref{thm:lsh_dkt_ram_sketch}. 
\section{The number of hash functions in corner cases}\label{sec:corner}
When the collision probabilities of the $(r_1, r_2, p_1, p_2)$-sensitive family $\LSH$ are close to one we get the behavior displayed in Figure \ref{fig:p1_090} where we have set $p_1 = 0.9$. 
Here it may be possible to reduce the number of hash functions by applying the DKT framework to the family $\LSH^\tau$ for some positive integer $\tau$. That is, instead of applying the DKT technique directly to $\LSH$ we first apply the powering trick to produce the family $\LSH^\tau$.
The number of locality-sensitive hash functions from $\LSH$ used by the DKT framework is given by $H = O( (\log(n) / \log(1/p_2))^2 / p_1)$.
If we instead use the family $\LSH^\tau$ the expression becomes $H = O( \tau(\log(n) / \log(1/p_2^\tau))^2 / p_1^\tau) = O((\log(n) / \log(1/p_2))^2 / \tau p_1^\tau)$. 
Ignoring integer constraints, the value of $\tau$ that maximizes $\tau p_1^\tau$, thereby minimizing $H$, is given by $\tau = 1 / \ln(1/p_1)$.
Discretizing, the resulting number of hash functions when setting $\tau = \lceil 1 / \ln(1/p_1)\rceil$ is given by $H = O(\rho (\log n)^2 / (p_1 \log(1/p_2)))$.
For constant $\rho$ and large $p_2$ this reduces the number of hash functions by a factor $1/\log(1/p_2)$.
\begin{figure}
	\centering
	\includegraphics[width=0.85\textwidth]{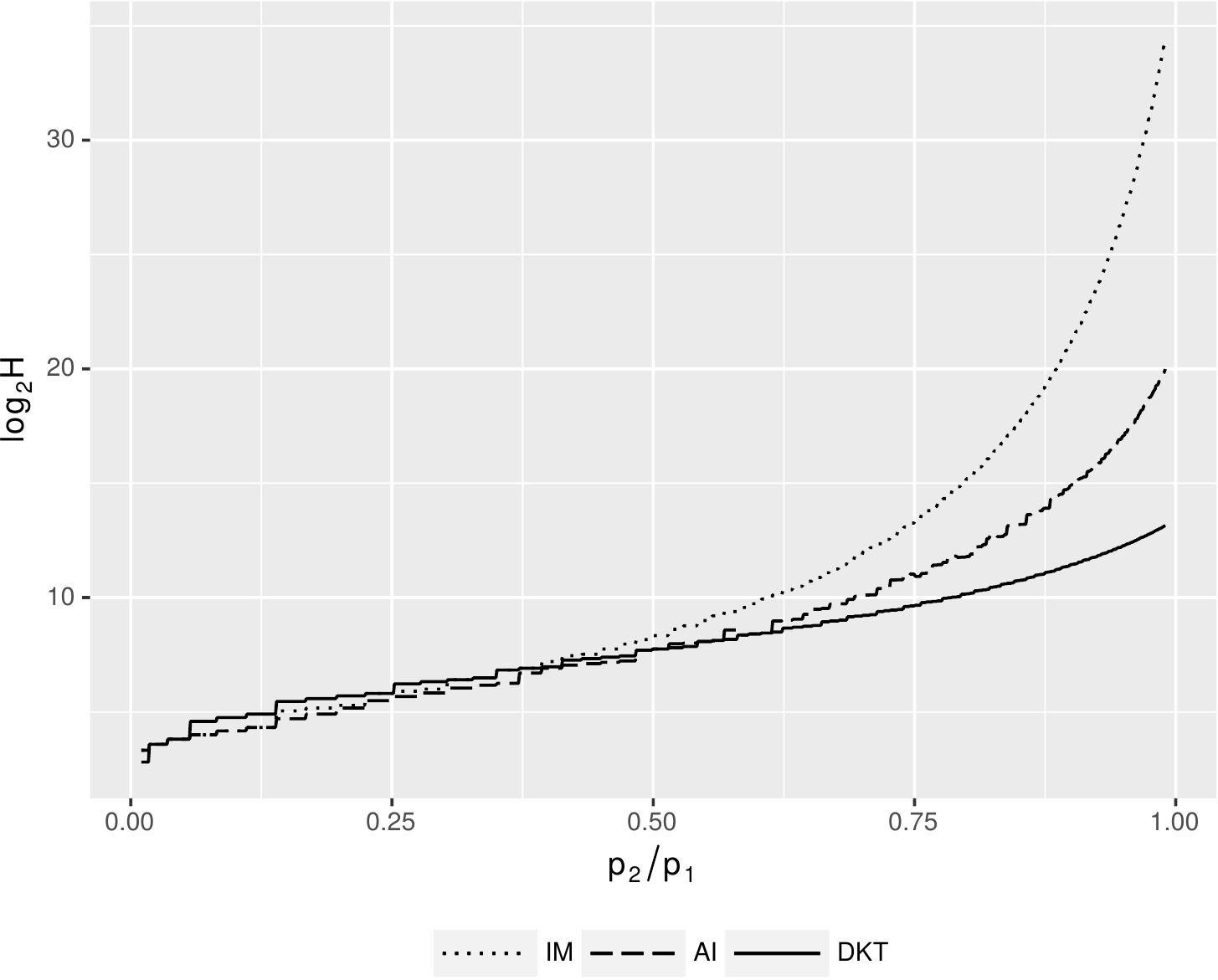}
	\caption{The number of locality-sensitive hash functions from a $(r_1, r_2, 0.9, p_2)$-sensitive family used by different frameworks to solve the $(r_1, r_2)$-near neighbor problem on a collection of $2^{30}$ points.}
	\label{fig:p1_090}
\end{figure}
The behavior for small values of $p_1$ is displayed in Figure \ref{fig:p1_010} where we have set $p_1 = 0.1$.
\begin{figure}
	\centering
	\includegraphics[width=0.85\textwidth]{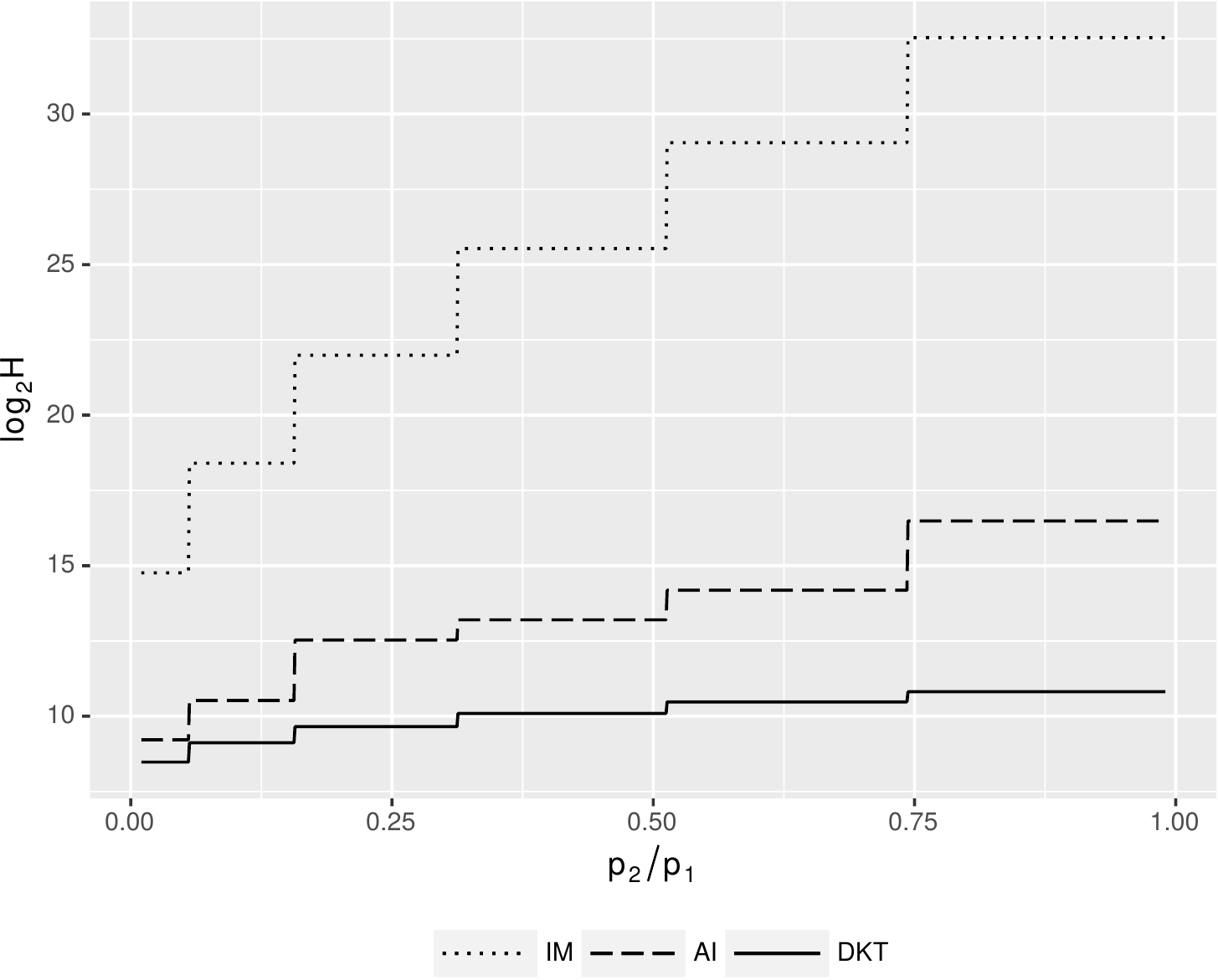}
\caption{The number of locality-sensitive hash functions from a $(r_1, r_2, 0.1, p_2)$-sensitive family used by different frameworks to solve the $(r_1, r_2)$-near neighbor problem on a collection of $2^{30}$ points.}
\label{fig:p1_010}
\end{figure}
\section{Conclusion and open problems}
We have shown that there exists a simple and general framework for solving the $(r_1, r_2)$-near neighbor problem using only few locality-sensitive hash functions and with a reduced word-RAM complexity matching the number of lookups.
The analysis in this paper indicates that the performance of the Dahlgaard-Knudsen-Thorup framework is highly competitive compared to the Indyk-Motwani framework in practice, especially when locality-sensitive hash functions are expensive to evaluate, as is often the case.

An obvious open problem is to provide a framework that uses fewer than $O(k^2 / p_1)$ locality-sensitive hash function.
Another direction would be to find a lower bound on the number of independent locality-sensitive hash functions required to solve the ANN problem using LSH in a suitably restricted model.
\section*{Acknowledgement}
I want to thank Rasmus Pagh commenting on an earlier version of this manuscript and for making me aware of the application of the tensoring technique in~\cite{sundaram2013streaming} that led me to the Andoni-Indyk framework~\cite{andoni2006efficient}.
\section{Appendix: Inequalities}\label{app:inequalities}
We make use of the following standard inequalities for the exponential function. 
See \cite[Chapter 3.6.2]{mitrinovic1970} for more details.
\begin{lemma}\label{lem:exp_upper}
	Let $n, t \in \mathbb{R}$ such that $n \geq 1$ and $|t| \leq n$ then $e^{-t}(1-t^2 /n) \leq (1 - t/n)^n \leq e^{-t}$.
\end{lemma}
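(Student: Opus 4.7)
The plan is to establish the two inequalities separately using only the pointwise bound $1 + x \leq e^x$ (valid for all real $x$) together with Bernoulli's inequality $(1+x)^n \geq 1 + nx$ for real $n \geq 1$ and $x \geq -1$. The hypothesis $|t| \leq n$ is used precisely to keep the bases nonnegative so that real-exponent powers behave monotonically.

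For the upper bound I would substitute $x = -t/n$ into $1 + x \leq e^x$, obtaining $1 - t/n \leq e^{-t/n}$. Since $|t| \leq n$ implies $1 - t/n \geq 0$, I can raise both sides to the $n$-th power (with $n \geq 1$) and preserve the inequality, which gives $(1 - t/n)^n \leq e^{-t}$.

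For the lower bound I would use the algebraic identity $(1 - t/n)(1 + t/n) = 1 - t^2/n^2$, so that
\begin{equation*}
(1 - t/n)^n \, (1 + t/n)^n \;=\; (1 - t^2/n^2)^n.
\end{equation*}
Applying Bernoulli's inequality with $x = -t^2/n^2 \geq -1$ (this uses $|t| \leq n$) yields $(1 - t^2/n^2)^n \geq 1 - t^2/n$, and the upper bound already established (applied with $t$ replaced by $-t$) gives $(1 + t/n)^n \leq e^{t}$. Provided $1 + t/n > 0$, I can divide to conclude
\begin{equation*}
(1 - t/n)^n \;\geq\; \frac{1 - t^2/n}{e^{t}} \;=\; e^{-t}\bigl(1 - t^2/n\bigr).
\end{equation*}

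The only real subtlety is the boundary case $t = -n$, where $1 + t/n = 0$ and the division step is illegal; but then both sides of the target inequality reduce to explicit values ($(1-t/n)^n = 2^n$ and $e^{-t}(1-t^2/n) = e^{n}(1-n)$, which is negative for $n>1$), so the inequality holds trivially. I would mention this briefly and then rely on the two standard ingredients ($1+x\leq e^x$ and Bernoulli) for the main computation; no step is expected to be a genuine obstacle.
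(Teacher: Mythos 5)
The paper never proves this lemma---it is stated as a standard fact with a pointer to Mitrinovi\'c's \emph{Analytic Inequalities}, Chapter 3.6.2---so there is no in-paper argument to compare against; your proposal supplies a self-contained elementary proof, which is a reasonable thing to do. The upper bound is correct as written: $1-t/n\le e^{-t/n}$ from $1+x\le e^x$, both sides are nonnegative because $|t|\le n$, and raising to the power $n\ge 1$ preserves the inequality.

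For the lower bound there is one sign issue you should patch. From $(1-t/n)^n(1+t/n)^n=(1-t^2/n^2)^n\ge 1-t^2/n$ (Bernoulli, using $t^2/n^2\le 1$) and $(1+t/n)^n\le e^{t}$ you conclude $(1-t/n)^n\ge e^{-t}(1-t^2/n)$, but the step that replaces the denominator $(1+t/n)^n$ by the larger quantity $e^{t}$ only decreases the fraction when the numerator $1-t^2/n$ is nonnegative. Under the hypotheses $n\ge 1$, $|t|\le n$ one can perfectly well have $t^2>n$ (e.g.\ $n=4$, $t=3$), and then the chain as written is invalid. The repair is exactly the observation you already make at the boundary $t=-n$: whenever $1-t^2/n\le 0$ the claimed lower bound is trivial, since $(1-t/n)^n\ge 0\ge e^{-t}(1-t^2/n)$. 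So a single case distinction on the sign of $1-t^2/n$ (which also subsumes your separate treatment of $t=-n$) closes the gap; the remaining ingredients---Bernoulli's inequality for real exponent $n\ge 1$ and monotonicity of $x\mapsto x^n$ on $[0,\infty)$---are used correctly.
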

\begin{lemma}\label{lem:exp_taylor}
	For $t \geq 0$ we have that $e^{-t} \leq 1 - t + t^2 / 2$.
\end{lemma}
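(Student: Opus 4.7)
The plan is to prove the inequality by the standard calculus trick of defining $f(t) = 1 - t + t^2/2 - e^{-t}$ and showing that $f(t) \geq 0$ for all $t \geq 0$. This reduces the problem to checking a boundary value at $t = 0$ together with a monotonicity property obtained by differentiation.

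First, I would compute $f(0) = 1 - 0 + 0 - 1 = 0$ and $f'(t) = -1 + t + e^{-t}$, so that $f'(0) = -1 + 0 + 1 = 0$. Next I would differentiate once more to obtain $f''(t) = 1 - e^{-t}$, which is clearly non-negative for $t \geq 0$ since $e^{-t} \leq 1$ in that range. Therefore $f'$ is non-decreasing on $[0,\infty)$, and combined with $f'(0) = 0$ this gives $f'(t) \geq 0$ for all $t \geq 0$. Consequently $f$ itself is non-decreasing on $[0,\infty)$, and since $f(0) = 0$ we conclude $f(t) \geq 0$, which is exactly the stated inequality.

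An equivalent approach would be to invoke Taylor's theorem with Lagrange remainder, writing $e^{-t} = 1 - t + t^2/2 - e^{-\xi} t^3/6$ for some $\xi \in (0,t)$; since $e^{-\xi} t^3/6 \geq 0$ for $t \geq 0$, the inequality follows immediately. I would likely prefer the elementary two-derivatives argument above since it avoids invoking the remainder form of Taylor's theorem and is self-contained.

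There is no real obstacle here: the only thing to be slightly careful about is the direction of the inequality in $f''$ (observing that $e^{-t} \leq 1$ on $[0,\infty)$ is what makes the argument work and would fail for $t < 0$), and the fact that both $f(0)$ and $f'(0)$ happen to equal zero, which is precisely why the bound is sharp at the origin.
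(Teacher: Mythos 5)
Your proof is correct: the two-derivative argument ($f(0)=f'(0)=0$ and $f''(t)=1-e^{-t}\geq 0$ on $[0,\infty)$) establishes the inequality cleanly, and the Taylor--Lagrange alternative you mention is equally valid. The paper itself does not prove this lemma at all --- it simply cites it as a standard inequality from Mitrinovi\'c's book --- so your self-contained elementary argument is a perfectly good (indeed, the expected) way to fill in that reference.
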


We make use of a one-sided version of Chebyshev's inequality to show correctness of the Dahlgaard-Knudsen-Thorup LSH framework. 
\begin{lemma}[Cantelli's inequality]\label{lem:cantelli}
Let $Z$ be a random variable with $\E[Z] = \mu > 0$ and $\Var[Z] = \sigma^2 < \infty$ then $\Pr[Z \leq 0] \leq \sigma^2/(\mu^2 + \sigma^2)$. 
\end{lemma}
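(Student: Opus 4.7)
The plan is to prove Cantelli's inequality by a standard ``shift and square'' trick that reduces the one-sided tail bound to an application of Markov's inequality on a nonnegative quantity, followed by optimization over the shift parameter.

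First I would introduce a free parameter $t \geq 0$ and observe that the event $\{Z \leq 0\}$ is contained in the event $\{\mu - Z + t \geq \mu + t\}$, which (since $\mu + t > 0$) is in turn contained in the event $\{(\mu - Z + t)^2 \geq (\mu + t)^2\}$. This allows me to pass to a nonnegative random variable so that Markov's inequality becomes applicable.

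Next I would apply Markov's inequality to obtain
\begin{equation*}
  \Pr[Z \leq 0] \leq \frac{\E[(\mu - Z + t)^2]}{(\mu + t)^2} = \frac{\sigma^2 + t^2}{(\mu + t)^2},
\end{equation*}
using the fact that $\E[\mu - Z] = 0$ so that $\E[(\mu - Z + t)^2] = \Var[Z] + t^2 = \sigma^2 + t^2$.

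Finally I would minimize the right-hand side over $t \geq 0$. Differentiating with respect to $t$ gives a critical point at $t = \sigma^2/\mu$ (which is nonnegative since $\mu > 0$), and substituting this value yields
\begin{equation*}
  \frac{\sigma^2 + \sigma^4/\mu^2}{(\mu + \sigma^2/\mu)^2} = \frac{\sigma^2(\mu^2 + \sigma^2)/\mu^2}{(\mu^2 + \sigma^2)^2/\mu^2} = \frac{\sigma^2}{\mu^2 + \sigma^2},
\end{equation*}
which is the desired bound. No step is really an obstacle here; the only mild care needed is to verify $\mu + t > 0$ so that squaring preserves the inequality direction, which is immediate from $\mu > 0$ and $t \geq 0$.
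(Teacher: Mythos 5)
Your proof is correct and follows essentially the same route as the paper's own argument: the shift-and-square reduction to Markov's inequality with the bound $(\sigma^2+t^2)/(\mu+t)^2$, optimized at $t=\sigma^2/\mu$. The only difference is cosmetic — you explicitly justify that $\mu+t>0$ before squaring, which the paper leaves implicit.
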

\begin{proof}
	For every $s \in \mathbb{R}$ we have that 
\begin{equation*}
\Pr[Z \leq 0] = \Pr[-(Z - \mu) + s \geq \mu + s] \leq  \Pr[(-(Z - \mu) + s)^2 \geq (\mu + s)^2]. 
\end{equation*}
Next we apply Markov's inequality 
\begin{align*}
\Pr[(-(Z - \mu) + s)^2 \geq (\mu + s)^2] &\leq \E[(-(Z - \mu) + s)^2]/ (\mu + s)^2 \\
										 &= (\sigma^2 + s^2)/ (\mu + s)^2 
\end{align*}
Set $s = \sigma^2 / \mu$ and use that $\sigma^2 = s\mu$ to simplify 
\begin{equation*}
(\sigma^2 + s^2)/ (\mu + s)^2 = (s\mu + s^2)/ (\mu + s)^2 = \sigma^2 / (\mu^2 + \sigma^2).
\end{equation*}
\end{proof}

To analyze the 1-bit sketching scheme by Li and K{\"o}nig we make use of Hoeffding's inequality:
\begin{lemma}[{Hoeffding \cite[Theorem 1]{hoeffding1963}}] \label{fast:lem:hoeffding}
	Let $X_1, X_2, \dots, X_n$ be independent random variables satisfying $0 \leq X_i \leq 1$ for $i \in [n]$.
	Define $\bar{X} = (X_1 + X_2 + \dots + X_n)/n$ and $\mu = \E[\bar{X}]$, then:
	\begin{itemize}
		\item[-] For $0 < \varepsilon < 1 - \mu$ we have that $\Pr[\bar{X} - \mu \geq \varepsilon] \leq e^{-2n\varepsilon^{2}}$.
		\item[-] For $0 < \varepsilon < \mu$ we have that $\Pr[\bar{X} - \mu \leq - \varepsilon] \leq e^{-2n\varepsilon^{2}}$.
	\end{itemize}
\end{lemma}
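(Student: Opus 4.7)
The plan is to prove Hoeffding's inequality via the classical Chernoff-style argument combined with a moment generating function (MGF) bound for bounded random variables (Hoeffding's lemma).

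First I would establish the upper tail. Fix a parameter $t > 0$ and apply Markov's inequality to the exponentiated deviation:
\[
\Pr[\bar{X} - \mu \geq \varepsilon] = \Pr\!\left[e^{t \sum_{i=1}^n (X_i - \E[X_i])} \geq e^{tn\varepsilon}\right] \leq e^{-tn\varepsilon} \prod_{i=1}^n \E\!\left[e^{t(X_i - \E[X_i])}\right],
\]
where the product decomposition uses independence of the $X_i$. The task then reduces to bounding each centered MGF $\E[e^{t(X_i - \E[X_i])}]$.

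Next, the key auxiliary lemma is: for any random variable $Y$ with $\E[Y] = 0$ and $a \leq Y \leq b$, one has $\E[e^{tY}] \leq e^{t^2 (b-a)^2 / 8}$. To prove this I would exploit convexity of $y \mapsto e^{ty}$ to bound $e^{tY} \leq \frac{b-Y}{b-a}\, e^{ta} + \frac{Y-a}{b-a}\, e^{tb}$, take expectations (using $\E[Y] = 0$), and reduce the problem to showing $\phi(s) \leq s^2/8$ where $\phi(s) := -sp + \log(1 - p + p e^s)$ and $p := -a/(b-a)$, with $s := t(b-a)$. A short Taylor-expansion argument gives $\phi(0) = \phi'(0) = 0$, while $\phi''(s)$ can be rewritten as the variance of a Bernoulli random variable with tilted parameter, which is at most $1/4$; integrating twice yields the desired bound.

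Applying this lemma with $Y_i = X_i - \E[X_i]$, where $X_i \in [0,1]$ gives range $b - a = 1$, produces $\E[e^{tY_i}] \leq e^{t^2/8}$ for every $i$. Substituting back gives
\[
\Pr[\bar{X} - \mu \geq \varepsilon] \leq e^{-tn\varepsilon + nt^2/8},
\]
and optimizing by choosing $t = 4\varepsilon$ yields the claimed bound $e^{-2n\varepsilon^2}$. The main technical obstacle is the second-derivative estimate in the MGF lemma, where one must recognize $\phi''(s)$ as a bounded variance and invoke the elementary inequality $\Var(Z) \leq \bigl((b-a)/2\bigr)^2$ for $Z \in [a,b]$. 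Finally, the lower tail follows by a symmetry trick: apply the upper-tail inequality to $X_i' := 1 - X_i$, which are also in $[0,1]$ with mean $1 - \E[X_i]$, and observe that $\Pr[\bar{X} - \mu \leq -\varepsilon] = \Pr[\bar{X}' - (1-\mu) \geq \varepsilon]$, so the same exponential bound applies.
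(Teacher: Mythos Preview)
Your proof is correct and follows the standard Chernoff--Hoeffding argument. However, the paper does not prove this lemma at all: it is stated in the appendix of inequalities purely as a citation of Hoeffding's original result~\cite[Theorem 1]{hoeffding1963}, to be used as a black-box tool in the analysis of the 1-bit sketching scheme. So there is no ``paper's own proof'' to compare against; your write-up simply supplies what the paper omits by reference.
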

\section{Appendix: Analysis of the Andoni-Indyk framework}\label{app:ai}
Let $\varphi$ denote the probability that a pair of points $x, y$ with $\dist(x,y) \leq r_1$ collide in a single repetition of the scheme.
A collision occurs if and only if there there exists at least one hash function in each of the underlying $t + 1$ collections where the points collide.
It follows that
\begin{equation*}
	\varphi = (1 - (1-p_1^{k_1})^{m_1})^t (1 - (1-p_1^{k_2})^m_2).
\end{equation*}
To guarantee a collision with probability at least $1/2$ it suffices to set $\eta = \lceil \ln(2) / \varphi \rceil$.

We will proceed by analyzing this scheme where we let $t \geq 1$ be variable and set parameters as followers:
\begin{align*}
	k &= \lceil \log(n) / \log(1/p_2) \rceil \\
	k_1 &= \lfloor k/t \rfloor \\
	k_2 &= k - tk_1 \\
	m_1 &= \lceil 1/ t p_1^{k_1} \rceil \\
	m_2 &= \lceil 1/ p_1^{k_2} \rceil \\
    \eta &= \lceil \ln(2) / \varphi \rceil.
\end{align*}
To upper bound $L$ we begin by lower bounding $\varphi$.
The second part of $\varphi$ can be lower bounded using Lemma \ref{lem:exp_upper} to yield $(1 - (1-p_1^{k_2})^{m_2}) \geq 1 - 1/e$.
To lower bound $(1 - (1-p_1^{k_1})^{m_1})^t$ we first note that in the case where $p_1^{k_1} > 1/t$ we have $m_1 = 1$ and the expression can be lower bounded by $p_1^{k_1 t} = (p_1^{k_1} m_1)^t \geq  (p_1^{k_1} m_1)^t / 2e$.
The same lower bound holds in the case there $t = 1$.
In the case where $p_1^{k_1} \leq 1/t$ and $t \geq 2$ we make use of Lemma \ref{lem:exp_upper} and \ref{lem:exp_taylor} to derive the lower bound. 
\begin{align*}
	1 - (1-p_1^{k_1})^{m_1} &\geq 1 - e^{-p_1^{k_1 m_1}} \\
 &\geq 1 - (1 - p_1^{k_1} m_1 + (p_1^{k_1} m_1)^2/2) \\
 &\geq p_1^{k_1} m_1 (1 - p_1^{k_1}(1/t p_1^{k_1} + 1)/2) \\
 &\geq p_1^{k_1} m_1 (1 - 1/t). 
\end{align*}
Using the bound $(p_1^{k_1} m_1 (1 - 1/t))^t \geq (p_1^{k_1} m_1)^t / 2e$ we have that 
\begin{equation*}
	\varphi \geq (p_1^{k_1} m_1)^t / 4e \geq (1/t)^t / 4e.
\end{equation*}
We can then bound the number of lookups and the expected number of distance computations
\begin{equation*}
	L = \eta m_1^t m_2 \leq (4e / (p_1^{k_1} m_1)^t + 1) m_1^t (1/p_1^{k_2} + 1) \leq 16e (1/p_1^k). 
\end{equation*}
Note that this matches the upper bound of the Indyk-Motwani LSH framework up to a constant factor.

To bound the number of hash functions from $\LSH$ we use that $k_1 \leq k/t \leq k$ and $k_2 < t$.
\begin{equation*}
	H = \eta (m_1 k_1 t + m_2 k_2) \leq 8e t^t \left(\frac{k}{t p_1^{k/t}} + \frac{t-1}{p_1^{t-1}}\right).
\end{equation*}
\chapter{Space-time tradeoffs for similarity search} \label{ch:filters}
\sectionquote{All that is gold does not glitter}
\noindent We present a framework for similarity search based on Locality-Sensitive Filtering~(LSF),
generalizing the Indyk-Motwani (STOC 1998) Locality-Sensitive Hashing~(LSH) framework to support space-time tradeoffs.    
Given a family of filters, defined as a distribution over pairs of subsets of space that satisfies certain locality-sensitivity properties, 
we can construct a dynamic data structure that solves the approximate near neighbor problem on a collection of $n$ points in $d$-dimensional space 
with query time $dn^{\rho_q + o(1)}$, update time $dn^{\rho_u + o(1)}$, 
and space usage $dn + n^{1 + \rho_u + o(1)}$.
The space-time tradeoff is tied to the tradeoff between query time and update time (insertions/deletions), 
controlled by the exponents $\rho_q, \rho_u$ that are determined by the filter family. \\   
Locality-sensitive filtering was introduced by Becker et al. (SODA 2016) together with a framework yielding a single, 
balanced, tradeoff between query time and space, further relying on the assumption of an efficient oracle for the filter evaluation algorithm.
We extend the LSF framework to support space-time tradeoffs 
and through a combination of existing techniques we remove the oracle assumption. \\
Laarhoven (arXiv 2015), building on Becker et al., introduced a family of filters with space-time tradeoffs 
for the high-dimensional unit sphere under inner product similarity and analyzed it for the important special case of random data.
We show that a small modification to the family of filters gives a simpler analysis that we use, 
together with our framework, to provide guarantees for worst-case data. 
Through an application of Bochner's~Theorem from harmonic analysis by Rahimi \& Recht (NIPS 2007), 
we are able to extend our solution on the unit sphere to $\real^d$ under the class of 
similarity measures corresponding to real-valued characteristic functions.
For the characteristic functions of $s$-stable distributions we obtain a solution 
to the $(r, cr)$-near neighbor problem in $\ell_s^d$-spaces with query and update exponents 
$\rho_q = \frac{c^s (1+\lambda)^2}{(c^s + \lambda)^2}$ and $\rho_u = \frac{c^s (1-\lambda)^2}{(c^s + \lambda)^2}$
where $\lambda \in [-1,1]$ is a tradeoff parameter. 
This result improves upon the space-time tradeoff of Kapralov (PODS 2015) and is shown to be optimal in the case of a balanced tradeoff, 
matching the LSH lower bound by O'Donnell et al.~(ITCS 2011) and a similar LSF lower bound proposed in this paper.
Finally, we show a lower bound for the space-time tradeoff on the unit sphere that matches Laarhoven's and our own upper bound in the case of random data.
\section{Introduction}
Let $(X, \dist)$ denote a space over a set~$X$ equipped with a symmetric measure of dissimilarity~$\dist$ (a distance function in the case of metric spaces).
We consider the \emph{$(r, cr)$-near neighbor problem} first introduced by Minsky and Papert~\cite[p. 222]{minsky1969} in the~1960's.  
A solution to the $(r, cr)$-near neighbor problem for a set $P$ of $n$ points in $(X, \dist)$ 
takes the form of a data structure that supports the following operation: 
given a query point $x \in X$, if there exists a data point $y \in P$ such that $\dist(x, y) \leq r$ 
then report a data point $y' \in P$ such that $\dist(x, y') \leq cr$.
In some spaces it turns out to be convenient to work with a measure of similarity rather than dissimilarity.
We use $\simil$ to denote a symmetric measure of similarity and define the \emph{$(\alpha, \beta)$-similarity problem}
to be the $(-\alpha, -\beta)$-near neighbor problem in~$(X, -\simil)$.

A solution to the $(r, cr)$-near neighbor problem can be viewed as a fundamental building block that yields solutions 
to many other similarity search problems such as the $c$-approximate \emph{nearest} neighbor problem \cite{indyk2004, har-peled2012}.
In particular, the $(r, cr)$-near neighbor problem is well-studied in $\ell_s^d$-spaces  
where the data points lie in~$\real^d$ and distances are measured by $\dist(x, y) = \norm{x - y}_s = (\sum_{i=1}^{d}|x_i - y_i|^s)^{1/s}$.
Notable spaces include the Euclidean space~$(\real^d, \norm{\cdot}_2)$, Hamming space~$(\{0,1\}^{d}, \norm{\cdot}_1)$, 
and the $d$-dimensional unit sphere~$\sphere{d} = \{ x \in \real^d \mid \norm{x}_2 = 1 \}$
under inner product (cosine) similarity \mbox{$\simil(x, y) = \ip{x}{y} = \sum_{i=1}^{d}x_iy_i$}.

\paragraph{Curse of dimensionality.}
All known solutions to the $(r, cr)$-near neighbor problem for $c = 1$ (the exact near neighbor problem) 
either suffer from a space usage that is exponential in $d$ or a query time that is linear in $n$ \cite{har-peled2012}.
This phenomenon is known as the ``curse of dimensionality'' and has been observed both in theory and practice.
For example, Alman and Williams \cite{alman2015} recently showed that the existence of an algorithm for determining 
whether a set of $n$ points in $d$-dimensional Hamming space contains a pair of points that are exact near neighbors 
with a running time strongly subquadratic in $n$ would refute the Strong Exponential Time Hypothesis (SETH) \cite{williams2004}.
This result holds even when $d$ is rather small, $d = O(\log n)$.
From a practical point of view, Weber et al. \cite{weber1998} showed that the performance of many of the tree-based approaches to similarity search 
from the field of computational geometry \cite{berg2008} degrades rapidly to a linear scan as the dimensionality increases.

\paragraph{Approximation to the rescue.}
If we allow an approximation factor of $c > 1$ then there exist solutions to the $(r, cr)$-near neighbor problem 
with query time that is strongly sublinear in~$n$ and space polynomial in~$n$
where both the space and time complexity of the solution depends only polynomially on $d$.
Techniques for overcoming the curse of dimensionality through approximation were discovered 
independently by Kushilevitz et al. \cite{kushilevitz2000} and Indyk and Motwani \cite{indyk1998}.
The latter, classical work by Indyk and Motwani \cite{indyk1998, har-peled2012} introduced a general framework 
for solving the $(r, cr)$-near neighbor problem known as Locality-Sensitive Hashing (LSH). 
The introduction of the LSH framework has inspired an extensive literature (see e.g. \cite{andoni2008, wang2014} for surveys) 
that represents the state of the art in terms of solutions to the $(r, cr)$-near neighbor problem in high-dimensional spaces \cite{indyk1998, charikar2002, datar2004, panigrahy2006, andoni2006, andoni2008, andoni2009, andoni2015practical, andoni2015optimal, kapralov2015, andoni2016tight, becker2016, laarhoven2015}.

\paragraph{Hashing and filtering frameworks.}
The LSH framework and the more recent LSF framework 
introduced by Becker et al. \cite{becker2016} produce data structures that solve the $(r, cr)$-near neighbor problem 
with query and update time $dn^{\rho + o(1)}$ and space usage $dn + n^{1 + \rho + o(1)}$. 
The LSH (LSF) framework takes as input a distribution over partitions (subsets) of space with the locality-sensitivity property 
that close points are more likely to be contained in the same part (subset) of a randomly sampled element from the distribution.
The frameworks proceed by constructing a data structure that associates each point in space with a number of memory locations
or ``buckets'' where data points are stored. 
During a query operation the buckets associated with the query point are searched by computing the distance to every data point in the bucket, 
returning the first suitable candidate.
The set of memory locations associated with a particular point is independent of whether an update operation or a query operation is being performed.
This symmetry between the query and update algorithm results in solutions to the near neighbor problem with a balanced space-time tradeoff.
The exponent~$\rho$ is determined by the locality-sensitivity properties of the family of partitions/hash functions (LSH) or subsets/filters (LSF) 
and is typically upper bounded by an expression that depends only on the aproximation factor $c$.
For example, Indyk and Motwani \cite{indyk1998} gave a simple locality-sensitive family of hash functions for Hamming space 
with an exponent of~$\rho \leq 1/c$. 
This exponent was later shown to be optimal by O'Donnell et al. \cite{odonnell2014optimal} who gave a lower bound of $\rho \geq 1/c - o_{d}(1)$ 
in the setting where $r$ and $cr$ are small compared to $d$.
The advantage of having a general framework for similarity search lies in the reduction of the $(r, cr)$-near neighbor problem 
to the, often simpler and easier to analyze, problem of finding a locality-sensitive family of hash functions or filters for the space of interest.    

\paragraph{Space-time tradeoffs.}
Space-time tradeoffs for solutions to the $(r, cr)$-near neighbor problem is an active line of research that can be 
motivated by practical applications where it is desirable to choose the tradeoff between query time and update time (space usage) 
that is best suited for the application and memory hierarchy at hand~\cite{panigrahy2006, lv2007, andoni2009, kapralov2015, laarhoven2015}. 
Existing solutions typically have query time $dn^{\rho_q + o(1)}$, update time (insertions/deletions) $dn^{\rho_u + o(1)}$, and use space $dn + n^{1 + \rho_u + o(1)}$ 
where the query and update exponents $\rho_q, \rho_u$ that control the space-time tradeoff depend on the approximation factor $c$ and on a tradeoff parameter $\lambda \in [-1,1]$.

This paper combines a number of existing techniques \cite{becker2016, laarhoven2015, dubiner2010bucketing} to provide a general framework for similarity search with space-time tradeoffs. 
The framework is used to show improved upper bounds on the space-time tradeoff in the well-studied setting of $\ell_s$-spaces and the unit sphere under inner product similarity.
Finally, we show a new lower bound on the space-time tradeoff for the unit sphere that matches an upper bound for random data on the unit sphere by Laarhoven \cite{laarhoven2015}.
We proceed by stating our contribution and briefly surveying the relevant literature in terms of frameworks, 
upper bounds, and lower bounds as well as some recent developments.
See table Table \ref{lsf:tab:results} for an overview.
\begin{table*}[t]
	\centering
\setlength{\tabcolsep}{0.2em}
	\renewcommand\arraystretch{2.3}
	\caption{Overview of data-independent locality-sensitive hashing (LSH) and filtering (LSF) results}
	\tiny
    \begin{tabular}{|c|c|c|c|c|} \hline
		\textbf{Reference} & \textbf{Setting} &  $\rho_q$ & $\rho_u$ \\ \hline\hline
		LSH \cite{indyk1998, har-peled2012}, LSF \cite{becker2016} &
		\multirow{2}{*}{$(X, \dist)$, $(X, \simil)$}  
		& \multicolumn{2}{c|}{$\dfrac{\log (1/p)}{\log (1/q)}$} \\[3pt] \cline{1-1} \cline{3-4} 
		\textbf{Theorem \ref{thm:lsfvanilla}} & 
		& $\dfrac{\log (p_q /p_1)}{\log (p_q / p_2)}$ & $\dfrac{\log (p_u /p_1)}{\log (p_q / p_2)}$ \\[4pt] 
	\hline\hline 
	
	Cross-poly.\@ LSH \cite{andoni2015practical} & $(\alpha, \beta)$-sim.\@, $(\sphere{d},\ip{\cdot}{\cdot})$ & 
	\multicolumn{2}{c|}{
		$\left. \dfrac{1-\alpha}{1+\alpha}  \middle/  \dfrac{1-\beta}{1+\beta} \right.$
	} \\[3pt]  \hline
	
	Spherical cap LSF \cite{laarhoven2015} & $(\alpha, o_{d}(1))$-sim.\@, $(\sphere{d},\ip{\cdot}{\cdot})$ & 
	$\dfrac{(1-\alpha^{1+\lambda})^2}{1-\alpha^2}$ &
	$\dfrac{(\alpha^{\lambda} -\alpha)^2}{1-\alpha^2}$ 
	\\[2pt]  \hline
	
	\textbf{Theorem \ref{thm:spherevanilla}} & $(\alpha, \beta)$-sim.\@, $(\sphere{d},\ip{\cdot}{\cdot})$ & 
	$\left. \dfrac{(1-\alpha^{1+\lambda})^2}{1-\alpha^2}  \middle/  \dfrac{(1- \alpha^{\lambda} \beta)^2}{1-\beta^2} \right.$ &
	$\left. \dfrac{(\alpha^{\lambda} - \alpha)^2}{1-\alpha^2}  \middle/  \dfrac{(1-\alpha^{\lambda} \beta)^2}{1-\beta^2} \right.$ 
	\\[3pt]  \hline \hline

	Ball-carving LSH \cite{andoni2006} & \multirow{3}{*}{ $(r, cr)$-nn.\@ in $\ell_2^d$}  
	& \multicolumn{2}{c|}{$1/c^2$}   \\  \cline{1-1} \cline{3-4} 
	
	Ball-search LSH* \cite{kapralov2015} && 
	$\dfrac{c^2 (1 + \lambda)^2}{(c^2 + \lambda)^2 - c^2 (1+\lambda^2)/2 - \lambda^2}$ & 
	$\dfrac{c^2 (1 - \lambda)^2}{(c^2 + \lambda)^2 - c^2 (1+\lambda^2)/2 - \lambda^2}$ \\[2pt] \cline{1-1} \cline{3-4} 
	
	\textbf{Theorem \ref{thm:lsvanilla}}  && 
	$\dfrac{c^2 (1 + \lambda)^2}{(c^2 + \lambda)^2}$ & 
	$\dfrac{c^2 (1 - \lambda)^2}{(c^2 + \lambda)^2}$ \\[3pt] 
	\hline \hline

	Lower bound \cite{odonnell2014optimal} & LSH in $\ell_2^d$ &
	\multicolumn{2}{c|}{$\geq 1/c^2$} \\  \hline
	\textbf{Theorem \ref{thm:lsflshvanilla}} & LSF in $\ell_2^d$ &
	\multicolumn{2}{c|}{$\geq 1/c^2$} \\  \hline
	Lower bound \cite{motwani2007, andoni2016tight} & LSH in $(\sphere{d},\ip{\cdot}{\cdot})$ &
	\multicolumn{2}{c|}{
		$\geq \dfrac{1-\alpha}{1+\alpha}$
	} \\[2pt]  \hline
	\textbf{Theorem \ref{thm:lowertradeoffvanilla}}, \cite{andoni2017optimal} & LSF in $(\sphere{d},\ip{\cdot}{\cdot})$ &
	$\geq \dfrac{(1-\alpha^{1+\lambda})^2}{1-\alpha^2}$ &
	$\geq \dfrac{(\alpha^{\lambda} -\alpha)^2}{1-\alpha^2}$ 
	\\[2pt]  \hline
    \end{tabular}
\begin{minipage}{1\textwidth} 
\scriptsize
\vspace{0.6em}
\textsc{Table notes:} Space-time tradeoffs for dynamic randomized solutions to similarity search problems in the LSH and LSF frameworks 
with query time $dn^{\rho_q + o(1)}$, update time $dn^{\rho_u + o(1)} + dn^{o(1)}$ and space usage $dn + n^{1 + \rho_u + o(1)}$.
Lower bounds are for the exponents $\rho_q, \rho_u$ within their respective frameworks.
Here $\varepsilon > 0$ denotes an arbitrary constant and $\lambda \in [-1,1]$ controls the space-time tradeoff. 
We have hidden $o_{n}(1)$ terms in the upper bounds and $o_d(1)$ terms in the lower bounds.
\newline *Assumes $c^2 \geq (1+\lambda)^2 / 2 + \lambda + \varepsilon$.
\end{minipage}
\label{lsf:tab:results}
\end{table*}

\subsection{Contribution}
Before stating our results we give a definition of locality-sensitive filtering that supports asymmetry in the framework query and update algorithm, 
yielding space-time tradeoffs.
\begin{definition} \label{def:lsf}
	Let $(X, \dist)$ be a space and let $\LSF$ be a probability distribution over \mbox{$\{(Q, U) \mid Q \subseteq X, U \subseteq X \}$}.
	We say that $\LSF$ is $(r, cr, p_1, p_2, p_q, p_u)$-sensitive if for all points $x, y \in X$ 
	and $(Q, U)$ sampled randomly from $\LSF$ the following holds:
	\begin{itemize}
		\item If $\dist(x, y) \leq r$ then $\Pr[x \in Q, y \in U] \geq p_1$.
		\item If $\dist(x, y) > cr$ then $\Pr[x \in Q, y \in U] \leq p_2$.
		\item $\Pr[x \in Q] \leq p_q$ and $\Pr[x \in U] \leq p_u$.
	\end{itemize}
	We refer to $(Q, U)$ as a filter and to $Q$ as the query filter and $U$ as the update filter.
	%
\end{definition}

Our main contribution is a general framework for similarity search with space-time tradeoffs 
that takes as input a locality-sensitive family of filters.
\begin{theorem}\label{thm:lsfvanilla}
	Suppose we have access to a family of filters that is $(r, cr, p_1, p_2, p_q, p_u)$-sensitive.
	Then we can construct a fully dynamic data structure 
	that solves the $(r, cr)$-near neighbor problem with 
	query time $dn^{\rho_q + o(1)}$, update time $dn^{\rho_u + o(1)}$, and space usage $dn + n^{1 + \rho_u + o(1)}$ where
	$\rho_q = \frac{\log(p_{q} / p_{1}) }{\log(p_{q} / p_{2})}$ and $\rho_u = \frac{\log(p_{u} / p_{1})}{\log(p_{q} / p_{2}) }$. 
\end{theorem}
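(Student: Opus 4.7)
The plan is to adapt the standard LSH-style construction to the asymmetric filter setting. I would first define a \emph{composite filter} as the $k$-fold intersection $(Q_1 \cap \cdots \cap Q_k,\; U_1 \cap \cdots \cap U_k)$ of independent samples from $\LSF$. By independence this composite family is $(r, cr, p_1^k, p_2^k, p_q^k, p_u^k)$-sensitive. Setting $k = \lceil \log n / \log(p_q / p_2) \rceil$ makes the composite probabilities scale as clean powers of $n$: for any base probability $p_a$ we have $(p_a/p_1)^k = n^{\log(p_a/p_1)/\log(p_q/p_2)}$. I would then sample $m = \Theta(1/p_1^k)$ such composite filters $\{(Q_i,U_i)\}_{i \in [m]}$ and maintain, for each $i$, a hash-table bucket containing $\{y \in P : y \in U_i\}$. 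A query on $x$ enumerates every $i$ with $x \in Q_i$, scans the corresponding bucket, and returns any point within distance $cr$.

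The three required complexities would be verified by direct expectation calculations against these parameter identities. For correctness, the probability that a near pair collides in at least one filter is $1 - (1-p_1^k)^m \geq 1 - e^{-\Omega(1)}$, and $O(\log n)$ independent repetitions drive this to $1/2$ while contributing only to the $n^{o(1)}$ factor. Each data point lives in expected $m \cdot p_u^k = (p_u/p_1)^k = n^{\rho_u}$ buckets, giving total space $n^{1+\rho_u+o(1)}$ and update time $n^{\rho_u + o(1)}$ per point. On the query side the number of probed buckets is $m \cdot p_q^k = n^{\rho_q}$ in expectation, and since $\rho_q - 1 = \log(p_2/p_1)/\log(p_q/p_2)$, the expected number of wasted distance computations against far points is $n \cdot m \cdot p_2^k = n^{\rho_q}$ as well.

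The main obstacle is that $m$ can be much larger than $n^{\rho_q}$, so naively testing membership in each $Q_i$ is too slow and would require an oracle. I would remove the oracle via an Andoni-Indyk style tensoring trick: organize the $m$ filters as a Cartesian product of $t$ independent collections of $B = m^{1/t}$ ``level filters,'' each itself built from $k/t$ independent samples of $\LSF$. Composite filter $i$ is interpreted as a $t$-tuple selecting one level filter per collection, with its $Q_i$ and $U_i$ being the intersection across levels. To enumerate $\{i : x \in Q_i\}$ one first evaluates the $tB$ level filters on $x$ (cost $O(tB \cdot (k/t) \cdot d)$), then emits the Cartesian product of the matches across levels, which runs in time proportional to the output size. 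Choosing $t = \Theta(\log \log n)$ makes $B = m^{1/t} = n^{o(1)}$, so the level evaluations sit inside the $n^{o(1)}$ factor while the enumeration is linear in the $n^{\rho_q}$ matching filters actually probed. The same construction symmetrically handles update enumeration on the $U$-side, and concentration (Markov plus repetitions) turns the ``expected'' bounds on bucket sizes and query work into high-probability bounds, yielding the claimed query time $dn^{\rho_q + o(1)}$, update time $dn^{\rho_u + o(1)}$, and space $dn + n^{1+\rho_u + o(1)}$.
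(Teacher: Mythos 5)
Your overall architecture (powering with $k = \lceil \log n/\log(p_q/p_2)\rceil$, a bank of $m = \Theta(1/p_1^k)$ amplified filters, expectation bookkeeping giving $n^{\rho_u}$ buckets per point and $n^{\rho_q}$ probes plus far-point work per query, and a tensoring trick to avoid the filter-evaluation oracle) is essentially the route the paper takes. But there is a genuine gap in the correctness step once you impose the tensor structure. Your collision bound $1-(1-p_1^k)^m$ is computed as if the $m$ composite filters were \emph{independent}; after you organize them as the Cartesian product of $t$ collections of $B = m^{1/t}$ level filters, a near pair $(x,y)$ is caught by some composite filter only if \emph{every} one of the $t$ levels contains at least one level filter capturing the pair. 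Each level succeeds with probability $1-(1-p_1^{k/t})^{B}$, and with $B = m^{1/t} \approx 1/p_1^{k/t}$ this is a constant bounded away from $1$ (about $1-1/e$), so the overall success probability is $e^{-\Theta(t)}$, not $\Omega(1)$. The expectation calculations for bucket counts and far-point collisions do survive the product structure (by independence across levels), but the near-pair guarantee does not, and the proposal never revisits it; the ``$O(\log n)$ repetitions'' remark is attached to the pre-tensoring scheme and is not sized for the $e^{-\Theta(t)}$ loss.

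The gap is repairable within the theorem's $n^{o(1)}$ slack: either inflate each level to $B = \Theta(t/p_1^{k/t})$ filters so every level succeeds with probability $1-e^{-\Omega(t)}$ (an extra $t^{t} = n^{o(1)}$ factor in the number of composite filters for $t=\Theta(\log\log n)$), or keep your parameters and repeat the whole structure $(\log n)^{O(1)} = n^{o(1)}$ times. The paper instead sidesteps the exponential-in-$t$ loss with a different tensoring: it takes \emph{all} $\binom{m_1}{\tau}$ intersections of $\tau$-subsets of a single collection of $m_1 = \lceil \tau/p_1^{\kappa_1}\rceil$ powered filters, and the median-versus-mean property of the binomial gives a near-pair collision probability of at least $1/2$ directly, at the cost of an $e^{\tau}$ blow-up in the number of simulated filters (absorbed by an explicit assumption that this is $n^{o(1)}$); a second, purely powered collection handles integer constraints. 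Two smaller omissions: you should state the mild assumptions under which the level-filter evaluation cost and $m$ itself are $n^{o(1)}$ (the paper makes these explicit in its appendix version of the theorem), and you should say how the partially dynamic structure becomes fully dynamic --- the paper invokes the Overmars--van Leeuwen decomposition, which costs only constant factors here.
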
  

We give a worst-case analysis of a slightly modified version of Laarhoven's \cite{laarhoven2015} filter family for the unit sphere 
and plug it into our framework to obtain the following theorem.
\begin{theorem}\label{thm:spherevanilla}
	For every choice of $0 \leq \beta < \alpha < 1$ and $\lambda \in [-1,1]$ 
	there exists a solution to the $(\alpha, \beta)$-similarity problem in $(\sphere{d}, \ip{\cdot}{\cdot})$ 
	that satisfies the guarantees from Theorem \ref{thm:lsfvanilla} with exponents  
	$\rho_q = \left. \frac{(1-\alpha^{1+\lambda})^2}{1-\alpha^2}  \middle/  \frac{(1- \alpha^{\lambda} \beta)^2}{1-\beta^2} \right.$ and  
	$\rho_u = \left. \frac{(\alpha^{\lambda} - \alpha)^2}{1-\alpha^2}  \middle/  \frac{(1-\alpha^{\lambda} \beta)^2}{1-\beta^2} \right.$. 
\end{theorem}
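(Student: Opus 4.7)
The plan is to construct an asymmetric spherical cap filter family and feed its parameters into Theorem \ref{thm:lsfvanilla}. First, I would sample a filter by drawing a standard Gaussian $z \sim \mathcal{N}^d(0, I)$ and fixing two thresholds $t_q = \alpha^{\lambda} t$ and $t_u = t$, where $t = t(n) \to \infty$ is a scale parameter to be chosen. The query filter is $Q = \{x \in \sphere{d} \mid \ip{x}{z} \geq t_q\}$ and the update filter is $U = \{x \in \sphere{d} \mid \ip{x}{z} \geq t_u\}$. The asymmetry of the thresholds is precisely what produces the space-time tradeoff; setting $\lambda = 0$ recovers a balanced symmetric construction, $\lambda > 0$ shrinks $Q$ relative to $U$ (fast queries, large space), and $\lambda < 0$ does the opposite.

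Next I would compute the six probabilities in Definition \ref{def:lsf}. Since $z$ is rotationally invariant, for a fixed unit vector $x$ the inner product $\ip{x}{z}$ is a standard univariate Gaussian, so the marginal probabilities $p_q$ and $p_u$ follow from the standard tail bound $\Pr[N(0,1) \geq t] = e^{-t^2/2 \pm O(\log t)}$. For a pair $x, y \in \sphere{d}$ with inner product $\sigma \in \{\alpha, \beta\}$, the pair $(\ip{x}{z}, \ip{y}{z})$ is bivariate Gaussian with correlation $\sigma$, so by the standard bivariate tail estimate
\begin{equation*}
\Pr[\ip{x}{z} \geq t_q,\, \ip{y}{z} \geq t_u] = \exp\!\left(-\frac{t_q^2 - 2 \sigma t_q t_u + t_u^2}{2(1-\sigma^2)} \pm O(\log t)\right).
\end{equation*}
Plugging $t_q = \alpha^{\lambda} t$ and $t_u = t$ into this and subtracting $\log(1/p_q)$ or $\log(1/p_u)$ gives, after algebraic simplification,
\begin{align*}
\log(p_q/p_1) &= \tfrac{t^2}{2(1-\alpha^2)}(1 - \alpha^{1+\lambda})^2 \pm O(\log t), \\
\log(p_u/p_1) &= \tfrac{t^2}{2(1-\alpha^2)}(\alpha^{\lambda} - \alpha)^2 \pm O(\log t), \\
\log(p_q/p_2) &= \tfrac{t^2}{2(1-\beta^2)}(1 - \alpha^{\lambda}\beta)^2 \pm O(\log t).
\end{align*}
Taking the ratios prescribed by Theorem \ref{thm:lsfvanilla}, the common $t^2/2$ factor cancels and the claimed $\rho_q$ and $\rho_u$ drop out.

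Finally, I would tune $t$ so that $p_2$ is not too small (say $p_2 \geq n^{-O(1)}$) while also growing $t \to \infty$ fast enough that the additive $O(\log t)$ terms contribute only a $1 + o(1)$ factor to each of the three ratios above. Setting $t = \Theta(\sqrt{\log n})$, which is the regime where $\log(1/p_2) = \Theta(\log n)$, accomplishes both and lets the polynomial correction factors be absorbed into the $n^{o(1)}$ slack of Theorem \ref{thm:lsfvanilla}. Since the collision probability $\Pr[\ip{x}{z} \geq t_q, \ip{y}{z} \geq t_u]$ is monotone in the inner product $\ip{x}{y}$, this analysis is genuinely worst-case: points at similarity $\geq \alpha$ collide with probability $\geq p_1$ and points at similarity $\leq \beta$ collide with probability $\leq p_2$, as required by Definition \ref{def:lsf}.

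The main technical obstacle is controlling the polynomial prefactors in the bivariate Gaussian tail estimate uniformly in $\lambda \in [-1,1]$ and in the full parameter range $0 \leq \beta < \alpha < 1$; in particular, for $\lambda$ close to $-1$ or $\alpha$ close to $1$ the thresholds $t_q$ and $t_u$ become very close and the quadratic form in the exponent becomes nearly degenerate, so one has to verify that the bivariate tail asymptotics remain sharp enough to give a $1 + o(1)$ multiplicative error in each of $\log(p_q/p_1)$, $\log(p_u/p_1)$, and $\log(p_q/p_2)$. This is the step where Laarhoven's original analysis assumed random data to avoid precisely this technicality; the modification (using $\alpha^{\lambda}$ rather than $\alpha^{(1+\lambda)/2}$-type thresholds) simplifies the quadratic form enough to push through the worst-case argument.
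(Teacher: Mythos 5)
Your filter family and tail estimates are exactly the paper's route: the same Gaussian family $\mathcal{G}$ with query threshold $\alpha^{\lambda}t$ and update threshold $t$, the same bivariate-tail exponent $\frac{t_q^2 - 2\sigma t_q t_u + t_u^2}{2(1-\sigma^2)}$ (which, after the algebra you indicate, gives $p_1, p_2, p_q, p_u$ matching Lemma \ref{lem:gaussianlsf}), and within the admissible range $\lambda \in [-1,1]$, $0 \le \beta < \alpha$ the corner $(t_q, t_u)$ is indeed the dominating point, so the two-sided estimate with a $\mathrm{poly}(t)$ prefactor holds (the paper gets the upper bound from the Lu--Li convex-domain bound and the lower bound by conditioning on $Z_1 \geq t$, yielding the explicit prefactor $2\pi(1+t/\alpha)^2$).

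The genuine gap is the last step, where you ``tune $t$'' and plug into Theorem \ref{thm:lsfvanilla}. Choosing $t = \Theta(\sqrt{\log n})$ so that $\log(1/p_2) = \Theta(\log n)$ is precisely the regime the framework cannot accept: the full version of the framework (Theorem \ref{thm:lsf}) explicitly assumes $1/p_1 = n^{o(1)}$, whereas your scaling gives $\log(1/p_1) = \bigl(1 + \tfrac{(\alpha^{\lambda}-\alpha)^2}{1-\alpha^2}\bigr)t^2/2 = \Theta(\log n)$, i.e.\ $1/p_1 = n^{\Theta(1)}$. This is not a cosmetic violation: the framework amplifies a \emph{weak} family internally via powering ($\kappa$) and tensoring ($\tau$), and it is exactly this amplification that keeps the number of stored and evaluated base filters at $n^{o(1)} \cdot n^{\min(\rho_q,\rho_u)}$. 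If the base family already achieves the $n$-scale separation, the framework degenerates ($\kappa_1 = 1$, $\tau = O(1)$), tensoring buys nothing, and correctness forces $\Theta(1/p_1)$ explicitly stored filters that must all be evaluated at query time; one checks that $1/p_1 = n^{\rho_u + \frac{(1-\beta^2)\alpha^{2\lambda}}{(1-\alpha^{\lambda}\beta)^2}}$, which exceeds both $n^{\rho_q}$ and $n^{\rho_u}$ by a constant power, so the claimed $dn^{\rho_q + o(1)}$ query time does not follow. (Your scaling is the right one only in the oracle/list-of-points model of Becker et al.\ and Laarhoven, i.e.\ if one assumes free decoding of the relevant filters --- the assumption this framework is designed to remove.) The paper instead sets $t^2/2 = \frac{1-\beta^2}{(1-\alpha^{\lambda}\beta)^2}(\ln n)^{\varepsilon}$ for a small constant $\varepsilon$, which is just large enough that $t \to \infty$ and the $\ln\bigl(2\pi(1+t/\alpha)^2\bigr)$ prefactor is $o(t^2)$, while keeping all of $1/p_1$, $p_q/p_1$, $p_u/p_1$ at $n^{o(1)}$ so the framework's hypotheses hold and its internal powering/tensoring produces the $n$-scale separation; this also requires the mild separation condition $\alpha - \beta \geq (\ln n)^{-\zeta}$, $\zeta < 1/2$, when $\alpha, \beta$ may depend on $n$.
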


We show how an elegant and powerful application of Bochner's Theorem \cite{rudin1990} by Rahimi and Recht \cite{rahimi2007} 
allows us to extend the solution on the unit sphere to a large class of similarity measures, yielding as a special case solutions for $\ell_s$-space.
\begin{theorem} \label{thm:lsvanilla}
	For every choice of $c \geq 1$, $s \in (0, 2]$, and $\lambda \in [-1,1]$ 
	there exists a solution to the~$(r, cr)$-near neighbor problem in $\ell_s^d$ 
	that satisfies the guarantees from Theorem \ref{thm:lsfvanilla} with exponents
	$\rho_q = \frac{c^s (1 + \lambda)^2}{(c^s + \lambda)^2}$ and $\rho_u = \frac{c^s (1 - \lambda)^2}{(c^s + \lambda)^2}$.		
\end{theorem}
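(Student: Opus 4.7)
The plan is to reduce the $(r,cr)$-near neighbor problem in $\ell_s^d$ to an $(\alpha,\beta)$-similarity problem on the unit sphere by randomly embedding $\real^d$ into a higher-dimensional Euclidean space so that inner products concentrate around the characteristic function of an $s$-stable law, and then invoking Theorem~\ref{thm:spherevanilla}. The key analytic input is Bochner's theorem: the map $z \mapsto \exp(-t \norm{z}_s^s)$ is a continuous, real-valued, positive definite function on $\real^d$ for every $s \in (0,2]$ and $t>0$, since it is the characteristic function of a scaled $d$-dimensional vector with i.i.d.\ symmetric $s$-stable coordinates. By Rahimi and Recht's random features construction, if $W \in \real^d$ has i.i.d.\ $s$-stable coordinates and $B \sim \mathrm{Unif}[0,2\pi]$, then $\varphi_{W,B}(x) = \sqrt{2}\cos(t^{1/s}\ip{W}{x} + B)$ satisfies $\E[\varphi_{W,B}(x)\varphi_{W,B}(y)] = \exp(-t\norm{x-y}_s^s)$.

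The first step is to fix a parameter $t = t(n) > 0$ tending to zero slowly, draw $D = n^{o(1)}$ independent copies $(W_i, B_i)$, and define a feature map $\Phi \colon \real^d \to \real^D$ by $\Phi(x)_i = D^{-1/2} \varphi_{W_i,B_i}(x)$. By Hoeffding's inequality applied to the bounded random variables $\varphi_{W_i,B_i}(x)\varphi_{W_i,B_i}(y)$ and $\varphi_{W_i,B_i}(x)^2$, simultaneously for all $n^{O(1)}$ relevant pairs of query and data points, we get $\norm{\Phi(x)}_2 = 1 \pm o(1)$ and $\ip{\Phi(x)}{\Phi(y)} = \exp(-t \norm{x-y}_s^s) \pm o(1)$ with high probability, after choosing $D$ polylogarithmic in $n$ divided by the squared gap we need. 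A standard normalization $\tilde\Phi(x) = \Phi(x)/\norm{\Phi(x)}_2$ places the image on $\sphere{D-1}$ while preserving inner products up to an additive $o(1)$ term.

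The second step is to instantiate Theorem~\ref{thm:spherevanilla} with $\alpha = \exp(-tr^s)$ and $\beta = \exp(-t(cr)^s)$, applied to the embedded points $\tilde\Phi(P)$ on $\sphere{D-1}$; the additive $o(1)$ slack in inner products is absorbed into the sub-polynomial factors of the framework by enlarging the similarity gap slightly. It remains to evaluate the exponents $\rho_q, \rho_u$ of Theorem~\ref{thm:spherevanilla} in the limit $t \to 0$. Writing $\epsilon = tr^s$ and expanding, $\alpha = 1-\epsilon+O(\epsilon^2)$ and $\beta = 1 - c^s \epsilon + O(\epsilon^2)$, so $1-\alpha^{1+\lambda} = (1+\lambda)\epsilon + O(\epsilon^2)$, $1-\alpha^2 = 2\epsilon + O(\epsilon^2)$, $1-\alpha^\lambda\beta = (c^s + \lambda)\epsilon + O(\epsilon^2)$, $1-\beta^2 = 2c^s\epsilon + O(\epsilon^2)$, and $\alpha^\lambda - \alpha = (1-\lambda)\epsilon + O(\epsilon^2)$. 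Substituting gives
\begin{equation*}
\rho_q \to \frac{c^s (1+\lambda)^2}{(c^s + \lambda)^2}, \qquad \rho_u \to \frac{c^s (1-\lambda)^2}{(c^s+\lambda)^2},
\end{equation*}
as desired. Picking $t \to 0$ slowly enough (e.g.\ $t = 1/(r^s \log \log n)$) makes the deviation from the limit $o(1)$, compatible with the $n^{o(1)}$ slack allowed by Theorem~\ref{thm:lsfvanilla}.

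The main technical obstacle is ensuring that the random embedding is simultaneously accurate enough for a polynomial number of pairs while keeping $D$ and $t$ in the regime where the asymptotic expansion above dominates lower-order error terms. In particular, one must verify that the union bound over at most $n^2$ data/query pairs, combined with the dynamic nature of the data structure (new queries arriving after the random features are fixed), still yields the stated guarantees with constant probability; this is where care with the scaling of $t$ versus $D$ versus the similarity gap $\alpha - \beta \asymp (c^s-1)\epsilon$ is needed, since shrinking $t$ shrinks the gap and forces $D$ to grow. A clean choice of $t$ decaying as a sufficiently slow function of $n$ lets all error terms collapse into the $n^{o(1)}$ factors inherited from Theorems~\ref{thm:lsfvanilla} and~\ref{thm:spherevanilla}.
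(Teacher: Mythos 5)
Your proposal is correct and follows essentially the same route as the paper: a Rahimi--Recht random-feature embedding based on Bochner's theorem for the $s$-stable characteristic function $e^{-\norm{x-y}_s^s}$, mapping onto the unit sphere with $n^{o(1)}$ features and additive $o(1)$ error, followed by Theorem~\ref{thm:spherevanilla} with $\alpha \approx 1-\epsilon$, $\beta \approx 1-c^s\epsilon$ and a first-order expansion of the sphere exponents (the paper rescales the data so $r^s = o(1)$ rather than introducing a bandwidth $t$, but this is the same device). Your Taylor computations and the noted tension between the shrinking gap $\alpha-\beta \asymp (c^s-1)\epsilon$ and the framework's slack match the paper's treatment, so no gap.
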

This result improves upon the state of the art for every choice of asymmetric query/update exponents $\rho_q \neq \rho_u$ \cite{panigrahy2006, andoni2006, andoni2009, kapralov2015}.
We conjecture that this tradeoff is optimal among the class of algorithms that independently of the data determine which locations in memory to probe during queries and updates.
In the case of a balanced space-time tradeoff where we set $\rho_q = \rho_u$ our approach matches existing, optimal \cite{odonnell2014optimal}, 
data-independent solutions in $\ell_s$-spaces~\cite{indyk1998, datar2004, andoni2006, nguyen2014}.

The LSF framework is very similar to the LSH framework, 
especially in the case where the filter family is symmetric ($Q = U$ for every filter in $\LSF$).
In this setting we show that the LSH lower bound by O'Donnell et al.\ applies to the LSF framework as well \cite{odonnell2014optimal}, 
confirming that the results of Theorem \ref{thm:lsvanilla} are optimal when we set $\rho_q = \rho_u$.
\begin{theorem}[informal] \label{thm:lsflshvanilla}
	Every filter family that is symmetric and $(r, cr, p_1, p_2, p_q, p_u)$-sensitive in $\ell_{s}^{d}$ 
	must have $\rho = \frac{\log (p_u /p_1)}{\log (p_q / p_2)} \geq 1/c^{s} - o_{d}(1)$ when $r = \omega_{d}(1)$ is chosen to be sufficiently small.
\end{theorem}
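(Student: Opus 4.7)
\textbf{Proof plan for Theorem \ref{thm:lsflshvanilla}.}
The strategy is to reduce the symmetric LSF lower bound to the LSH lower bound of O'Donnell, Wu, and Zhou~\cite{odonnell2014optimal} via a standard ``first-hit'' hashing construction. Given a symmetric $(r, cr, p_1, p_2, p, p)$-sensitive family $\LSF$ in $\ell_s^d$ (writing $p = p_q = p_u$), I would sample i.i.d.\ filters $U_1, U_2, \ldots \sim \LSF$ and define
\[
h(x) = \min\{i : x \in U_i\}.
\]
A short geometric-series calculation shows, up to a negligible truncation error,
\[
\Pr[h(x) = h(y)] = \frac{\Pr[x \in U, y \in U]}{\Pr[x \in U] + \Pr[y \in U] - \Pr[x \in U, y \in U]}.
\]
After a symmetrization step that makes $\Pr[x \in U] = p$ uniform in $x$ (discussed below), the right-hand side equals $P_{\text{both}}/(2p - P_{\text{both}})$, and since $t \mapsto t/(2p - t)$ is monotone increasing on $[0, p]$, the induced family $\LSH'$ is $(r, cr, p_1', p_2')$-sensitive with $p_i' = p_i/(2p - p_i)$.

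Applying the O'Donnell--Wu--Zhou lower bound to $\LSH'$ in $\ell_s^d$ yields $\log(1/p_1')/\log(1/p_2') \geq 1/c^s - o_d(1)$. Writing $\log((2p - p_i)/p_i) = \log(p/p_i) + \log(2 - p_i/p)$, the additive correction lies in $[0, \log 2]$, so the LSF-exponent $\rho = \log(p/p_1)/\log(p/p_2)$ satisfies
\[
\rho \geq \frac{1}{c^s} - O\!\left(\frac{1}{\log(p/p_2)}\right) - o_d(1).
\]
The hypothesis that $r = \omega_d(1)$ is chosen sufficiently small forces $p_2$ to decay with $d$ (otherwise the family is not meaningful as a similarity-search primitive), so $\log(p/p_2) \to \infty$ and the correction is absorbed into $o_d(1)$. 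If needed, preapplying a $k$-fold tensor $\LSF^{\otimes k}$ (filters obtained by intersecting $k$ independent copies) preserves $\rho$ while multiplying $\log(p/p_2)$ by $k$, providing arbitrary slack.

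\textbf{Main obstacle.} The delicate step is the symmetrization. On $\cube{d}$ one can average over the group of sign-flip translations, which acts transitively and preserves $\ell_s$-distances; one first embeds the relevant bounded subregion of $\ell_s^d$ into such a discretized cube, as in the standard reduction behind the O'Donnell--Wu--Zhou bound. A further subtlety is that naive averaging replaces $p$ by some $\tilde p \leq p$, and since $a \mapsto \log(a/x)/\log(a/y)$ is decreasing in $a$ for $x < y$, the symmetrized exponent can only exceed the original, which would not suffice as a lower bound on the original $\rho$. I would resolve this by selecting an extremal pair $(x, y)$ witnessing the worst-case far-pair collision and applying the reduction pointwise at that pair, using only $\Pr[x \in U], \Pr[y \in U] \leq p$ together with the sensitivity bounds on $P_{\text{both}}$, rather than averaging the entire family. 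This is precisely the regime in which the Fourier-analytic argument of O'Donnell--Wu--Zhou operates, and the small-$r$ hypothesis is what makes the Hamming embedding tight enough to transfer their bound.
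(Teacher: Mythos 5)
Your core route is the same as the paper's: the first‑hit construction $h(x)=\min\{i \mid x\in F_i\}$ with collision probability $\Pr[x\in F \wedge y\in F]/\Pr[x\in F \vee y\in F]$ is exactly Lemma~\ref{lem:lsftolsh}, the appeal to the O'Donnell et al.\ bound is Theorem~\ref{thm:odlower}, and your ``$k$-fold tensor'' fallback (intersecting independent filters) is precisely the powering of Lemma~\ref{lem:powering}, which the paper applies with $\kappa=\omega_d(1)$ so that the constant-factor loss (your $\log 2$ corrections) disappears in the $o_d(1)$ term. So in architecture this matches Theorem~\ref{thm:lower1}.

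Two auxiliary steps would not survive as written. First, ``$r=\omega_d(1)$ sufficiently small forces $p_2$ to decay with $d$'' is unsupported: nothing in the definition of a $(r,cr,p_1,p_2,p_q,p_u)$-sensitive family prevents $p_2$ from being a constant fraction of $p_q$, so the $O(1/\log(p_q/p_2))$ correction cannot be absorbed that way; the powering is not optional but the actual mechanism, and one must also keep $\kappa$ small enough that the powered far-pair bound $(p_2/p_q)^{\kappa}$ still satisfies the $q\ge 2^{-o(d)}$ requirement of Theorem~\ref{thm:odlower}. Second, your resolution of the ``symmetrization obstacle'' --- applying the reduction pointwise at one extremal far pair --- is not sound: Theorem~\ref{thm:odlower} is a statement about families whose collision bounds hold for \emph{every} pair at distances $r$ and $cr$, so it cannot be invoked for a single pair. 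Moreover, the subtlety that led you there rests on a monotonicity slip: with $p_1>p_2$ the map $a\mapsto \log(a/p_1)/\log(a/p_2)$ is \emph{increasing} in $a$, so averaging over isometries, which replaces $p_q$ by some $\tilde p\le p_q$ while preserving the $p_1,p_2$ guarantees, can only \emph{decrease} the exponent; a lower bound for the symmetrized family therefore does transfer to the original one, and the workaround you devised is unnecessary. For comparison, the paper performs no symmetrization at all: Lemma~\ref{lem:lsftolsh} bounds $\Pr[x\in F\vee y\in F]\le p_q+p_u=2p_q$ for near pairs and takes $q=p_2/p_q$ for far pairs, then powers. (Your instinct does touch a point the paper leaves informal --- the far-pair bound $p_2/p_q$ tacitly treats the containment probability as essentially uniform --- but the fixes you sketch do not repair this, whereas the averaging you dismissed would.)
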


Finally we show a lower bound on the space-time tradeoff that can be obtained in the LSF framework.
Our lower bound suffers from two important restrictions. 
First the filter family must be regular, meaning that all query filters and all update filters are of the same size. 
Secondly, the size of the query and update filter cannot differ by too much.
\begin{theorem}[informal] \label{thm:lowertradeoffvanilla}
    Every regular filter family that is $((1-\alpha)d/2, (1-\beta)d/2, p_1, p_2, p_q, p_u)$-sensitive in $d$-dimensional Hamming space 
	with asymmetry controlled by $\lambda \in [-1,1]$ 
	cannot simultanously have that
	\mbox{$\rho_q < \frac{(1-\alpha^{1+\lambda})^2}{1-\alpha^2} - o_{d}(1)$} 
	and $\rho_u < \frac{(\alpha^{\lambda} -\alpha)^2}{1-\alpha^2} - o_{d}(1)$. 
\end{theorem}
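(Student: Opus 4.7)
My plan is to adapt the Fourier-analytic LSH lower bound of Motwani--Naor--Panigrahy~\cite{motwani2007} (refined in~\cite{odonnell2014optimal, andoni2016tight}) to the asymmetric filter setting by invoking the \emph{two-function} hypercontractive inequality on the Boolean cube $\cube{d}$ to trace out the Pareto tradeoff between $\rho_q$ and $\rho_u$. Throughout I identify Hamming distance $(1-\alpha)d/2$ on $\cube{d}$ with $\alpha$-correlated pairs in the sense of Definition~\ref{intro:def:correlation}. For each filter $(Q, U) \sim \LSF$ I write $f = \I_Q$ and $g = \I_U$; the regularity hypothesis pins down $\|f\|_p = p_q^{1/p}$ and $\|g\|_q = p_u^{1/q}$ \emph{deterministically}, which is precisely what lets me pull the hypercontractive bound through the expectation over $\LSF$.

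The core double estimate is on the quantity
\[
\Phi_\alpha \;:=\; \E\bigl[\I(x \in Q)\,\I(y \in U)\bigr] \;=\; \E_{\LSF}\langle f,\, T_\alpha g\rangle,
\]
taken over $(Q, U) \sim \LSF$ together with an independent $\alpha$-correlated pair $(x, y) \in \cube{d} \times \cube{d}$; here $T_\alpha$ is the usual noise operator. On the combinatorial side, $\alpha$-correlated pairs concentrate at Hamming distance $(1-\alpha)d/2 \pm o(d)$, so a standard slackening of the radius yields $\Phi_\alpha \geq (1 - o_d(1))\, p_1$. On the analytic side, the two-function Bonami--Beckner inequality (see~\cite{odonnell2014analysis}) states that for every $p, q > 1$ with $(p-1)(q-1) \geq \alpha^2$ and \emph{every} realization of $(Q, U)$,
\[
\langle f,\, T_\alpha g\rangle \;\leq\; \|f\|_p\,\|g\|_q \;=\; p_q^{1/p}\, p_u^{1/q}.
\]
Combining these and taking logarithms yields $\log(1/p_1) \geq (1/p)\log(1/p_q) + (1/q)\log(1/p_u) - o_d(1)$. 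For the denominator, uniformly random pairs on $\cube{d}$ are $O(d^{-1/2})$-correlated with overwhelming probability, so regularity combined with the far-pair filter property gives $p_2 \geq p_q p_u\,(1 - o_d(1))$, hence $\log(p_q/p_2) \leq \log(1/p_u) + o_d(1)$.

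To trace out the Pareto curve I parameterize $p = 1 + \alpha t$ and $q = 1 + \alpha/t$ for $t > 0$ (saturating the hypercontractive constraint). Each choice of $t$ produces one simultaneous linear inequality in $(\rho_q, \rho_u)$: the family cannot have both $\rho_q$ below one quantity and $\rho_u$ below another. Tuning $t$ to match the asymmetry parameter $\lambda$ via the ansatz $(\alpha + t^\ast)/(1 + \alpha t^\ast) = \alpha^\lambda$ singles out $t^\ast = (\alpha^\lambda - \alpha)/(1 - \alpha^{1+\lambda})$, giving the closed forms $1/p = (1-\alpha^{1+\lambda})/(1-\alpha^2)$ and $1/q = (1-\alpha^{1-\lambda})/(1-\alpha^2)$. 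Substituting into $\rho_q = (\log(1/p_1) - \log(1/p_q))/\log(p_q/p_2)$ and $\rho_u = (\log(1/p_1) - \log(1/p_u))/\log(p_q/p_2)$ and simplifying by direct algebra collapses the two bounds to exactly $(1-\alpha^{1+\lambda})^2/(1-\alpha^2)$ and $(\alpha^\lambda - \alpha)^2/(1-\alpha^2)$, up to the $o_d(1)$ slack.

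The main obstacle is controlling the $o_d(1)$ slack cleanly. Passing from an $\alpha$-correlated pair (a probabilistic object) to a pair at Hamming distance exactly $(1-\alpha)d/2$ (the combinatorial object $\LSF$ is sensitive to) requires that the sensitivity degrade gracefully under small radius perturbations, which is the analogue of the usual ``$p_2$ not too small in $d$'' assumption from~\cite{odonnell2014optimal}. Regularity is essential in a subtler second way: without it, applying the hypercontractive inequality filter-by-filter would only bound $\E_{\LSF}[\|f\|_p \|g\|_q]$, which is in general strictly larger than $p_q^{1/p}\, p_u^{1/q}$ by Jensen. Finally, the hypothesis that $\log p_q$ and $\log p_u$ not differ by too much is precisely what forces $t^\ast \in (0, \infty)$, i.e., that the optimal hypercontractive exponents lie in the valid regime parameterized by $\lambda \in [-1, 1]$.
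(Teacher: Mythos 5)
Your optimization step does not deliver the claimed bounds: the Hölder exponents are attached to the wrong filters, i.e.\ your $t^{\ast}$ is the reciprocal of the true optimizer. To see this, note first that the asymmetry parameter $\lambda$ must enter through the volume relation $|Q|/2^{d}=(|U|/2^{d})^{\alpha^{2\lambda}}$ (your write-up never states it, but you need it to express $\log(1/p_q)$ in terms of $\log(1/p_u)$). With that relation your chain of inequalities gives $\rho_q \geq \alpha^{2\lambda}(1/p-1)+1/q-o_d(1)$ and $\rho_u \geq \alpha^{2\lambda}/p+(1/q-1)-o_d(1)$, so the quantity to maximize subject to $(p-1)(q-1)=\alpha^{2}$ is $\alpha^{2\lambda}/p+1/q$. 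The maximizer is $t^{\ast}=(1-\alpha^{1+\lambda})/(\alpha^{\lambda}-\alpha)$ — the reciprocal of your choice — which puts $1/p=(1-\alpha^{1-\lambda})/(1-\alpha^{2})$ on $\mathds{1}_{Q}$ and $1/q=(1-\alpha^{1+\lambda})/(1-\alpha^{2})$ on $\mathds{1}_{U}$, yields $\alpha^{2\lambda}/p+1/q=(1+\alpha^{2\lambda}-2\alpha^{1+\lambda})/(1-\alpha^{2})$, and then collapses exactly to the two stated bounds. With your values the "direct algebra" does not collapse: one gets $\rho_q \geq \frac{(1-\alpha^{1+\lambda})^{2}}{1-\alpha^{2}}-\frac{\alpha^{1-\lambda}(1-\alpha^{2\lambda})^{2}}{1-\alpha^{2}}$, and an analogously weakened bound for $\rho_u$, which is strictly below the target whenever $\lambda\neq 0$, so the theorem does not follow as written. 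The culprit is your ansatz $(\alpha+t^{\ast})/(1+\alpha t^{\ast})=\alpha^{\lambda}$; the stationarity condition is $(\alpha+t)/(1+\alpha t)=\alpha^{-\lambda}$, so your choice in effect swaps the roles of $Q$ and $U$.

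Apart from this, your route is essentially the paper's own: the black box used in Appendix~\ref{app:lowertradeoffproof}, O'Donnell's generalized small-set expansion theorem (Lemma~\ref{lem:odhyper}), is precisely the two-function hypercontractive inequality evaluated at the optimal exponents above, and the surrounding steps — regularity plus averaging to lower bound $p_q$, $p_u$, $p_2$, minimizing over $p_q$, and a Hoeffding argument to pass between correlated pairs and pairs at the prescribed distance — coincide with the paper's. Two smaller corrections there: you cannot get $\Phi_{\alpha}\geq(1-o_d(1))p_1$ by "slackening the radius," since the radius is fixed by hypothesis and an $\alpha$-correlated pair lies within it only with probability about $1/2$; the paper instead boosts the correlation to $\alpha+2\sqrt{\ln(d)/d}$, which is also what produces the admissible range for $\lambda$ and part of the $o_d(1)$ loss. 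And your step $p_2\geq(1-o_d(1))\,p_q p_u$ needs the non-degeneracy assumption $|U|/2^{d}\geq 2^{-o(d)}$ so that near pairs contribute negligibly to the average over independent pairs.
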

Together our upper and lower bounds imply that the filter family of concentric balls in Hamming space is asymptotically optimal for random data.

\paragraph{Techniques.}
The LSF framework in Theorem \ref{thm:lsfvanilla} relies on a careful combination of ``powering'' and ``tensoring'' techniques.
For positive integers $m$ and $\tau$ with $m \gg \tau$ the tensoring technique, 
a variant of which was introduced by Dubiner \cite{dubiner2010bucketing}, allows us to simulate a collection of $\binom{m}{\tau}$ filters 
from a collection of $m$ filters by considering the intersection of all $\tau$-subsets of filters.
Furthermore, given a point $x \in X$ we can efficiently list the simulated filters that contain $x$. 
This latter property is crucial as we typically need $\poly(n)$ filters to split our data 
into sufficiently small buckets for the search to be efficient.
The powering technique lets us amplify the locality-sensitivity properties of a filter family 
in the same way that powering is used in the LSH framework \cite{indyk1998, andoni2008, odonnell2014optimal}.

To obtain results for worst-case data on the unit sphere we analyze a filter family based on standard normal projections using the same techniques as Andoni et al. \cite{andoni2015practical} together with existing tail bounds on bivariate Gaussians.
The approximate kernel embedding technique by Rahimi and Recht \cite{rahimi2007} is used to extend the solution on the unit sphere 
to a large class of similarity measures, yielding Theorem \ref{thm:lsvanilla} as a special case. 

The lower bound in Theorem \ref{thm:lsflshvanilla} relies on an argument of contradiction against the LSH lower bounds 
by O'Donnell \cite{odonnell2014optimal} and uses a theoretical, inefficient, construction of a locality-sensitive family of hash functions 
from a locality-sensitive family of filters that is similar to the spherical LSH by Andoni et al. \cite{andoni2014beyond}.

Finally, the space-time tradeoff lower bound from Theorem \ref{thm:lowertradeoffvanilla} is obtained through an application of an 
isoperimetric inequality by O'Donnell \cite[Ch. 10]{odonnell2014analysis} and is similar in spirit to the LSH lower bound by Motwani et al.~\cite{motwani2007}.
\subsection{Related work}
The LSH framework takes a distribution $\LSH$ over hash functions that partition space with the property 
that the probability of two points landing in the same partition is an increasing function of their similarity.
\begin{definition} \label{filters:def:lsh}
	Let $(X, \dist)$ be a space and let $\LSH$ be a probability distribution over functions $h \colon X \to R$.
	We say that $\LSH$ is $(r, cr, p, q)$-sensitive if for all points $x, y \in X$ and $h$ sampled randomly from $\LSH$ the following holds:
	\begin{itemize}
		\item If $\dist(x, y) \leq r$ then $\Pr[h(x) = h(y)] \geq p$.
		\item If $\dist(x, y) > cr$ then $\Pr[h(x) = h(y)] \leq q$.
	\end{itemize}
\end{definition}
The properties of $\LSH$ determines a parameter $\rho < 1$ that governs the space and time complexity 
of the solution to the $(r, cr)$-near neighbor problem.
\begin{theorem}[LSH framework \cite{indyk1998, har-peled2012}]\label{filters:thm:lsh}
	Suppose we have access to a $(r, cr, p, q)$-sensitive hash family.
	Then we can construct a fully dynamic data structure 
	that solves the $(r, cr)$-near neighbor problem with query time $dn^{\rho + o(1)}$, update time~$dn^{\rho + o(1)}$, 
	and with a space usage of $dn + n^{1 + \rho + o(1)}$ where \mbox{$\rho = \frac{\log (1 / p) }{\log (1 / q) }$}. 
\end{theorem}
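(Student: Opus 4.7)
The plan is to apply the classical Indyk--Motwani amplification scheme: first sharpen the gap between the ``close'' and ``far'' collision probabilities by concatenation, then union over many independent repetitions so that at least one repetition separates close pairs from noise.

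First I would define the powered family $\LSH^k$ whose elements are tuples $g = (h_1, \dots, h_k)$ with each $h_i$ drawn independently from $\LSH$, and where $g(x) = g(y)$ iff $h_i(x)=h_i(y)$ for all $i$. By independence this family is $(r, cr, p^k, q^k)$-sensitive. Setting $k = \lceil \log(n)/\log(1/q) \rceil$ forces $q^k \le 1/n$, so that in expectation at most one point at distance greater than $cr$ collides with any fixed query in a given repetition. Next, I would draw $L = \lceil 1/p^k \rceil = O(n^{\rho})$ independent functions $g_1, \dots, g_L$ from $\LSH^k$ and, for each $\ell$, store the partition of $P$ induced by $g_\ell$ in a hash table $T_\ell$. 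Insertion and deletion are the obvious operations on these $L$ tables.

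The query algorithm for $q$ iterates $\ell = 1, \dots, L$, retrieves the bucket $g_\ell(q)$ in $T_\ell$, computes $\dist(q,x)$ for each candidate $x$ there, and returns the first $x$ with $\dist(q,x) \le cr$. To convert the \emph{expected} candidate count into a \emph{worst-case} time bound, I would truncate the scan after $3L$ distance computations in total, reporting failure otherwise. For correctness, if some $y \in P$ has $\dist(q,y) \le r$ then the probability that $y$ fails to collide with $q$ in every table is at most $(1-p^k)^L \le e^{-1}$; meanwhile the expected number of colliding ``far'' points across all tables is at most $L \cdot n q^k \le L$, so Markov bounds the truncation event by $1/3$. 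A union bound gives constant success probability, which is boosted to $1-1/n$ (needed so that per-query failure cost is negligible) by running $O(\log n)$ parallel copies of the whole structure.

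For the resource bounds: each $g_\ell$ costs $O(dk)$ to evaluate, so query and update time are $O(dkL) = d n^{\rho + o(1)}$ after absorbing the $k = O(\log n)$ and the $O(\log n)$ repetitions into $n^{o(1)}$. The space is $O(dn)$ to store the data points once, plus $O(nL) = n^{1+\rho+o(1)}$ words for pointers in the $L$ hash tables, plus $O(kL)$ words for hash-function descriptions (a lower-order term under the assumption that an $h \in \LSH$ is representable in $n^{o(1)}$ words). The main subtlety I expect is bookkeeping rather than mathematics: certifying that the choice of $k$, the truncation threshold, and the parallel repetitions together give \emph{worst-case} query time while preserving the $n^{\rho+o(1)}$ exponent, and verifying that $k$ and $L$ remain well-defined when $p$ or $q$ depend on $n$ or $d$ (the assumption in the theorem statement is simply that $\LSH$ is $(r,cr,p,q)$-sensitive, so one must state mildly that $k, L \ge 1$, which follows from $p>q$).
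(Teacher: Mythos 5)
Your proposal is correct and follows essentially the same route as the paper: it is the standard Indyk--Motwani amplification ($\LSH^k$ with $k = \lceil \log(n)/\log(1/q)\rceil$, $L = O(1/p^k)$ tables, correctness via $(1-p^k)^L \leq e^{-1}$ and $nq^k \leq 1$), with Markov's inequality plus independent repetitions to turn the expected candidate count into a worst-case query bound, exactly as in the paper's treatment of the Indyk--Motwani framework (Theorem~\ref{thm:lsh_im_exact} and its simplification). The paper handles the ``fully dynamic'' claim by noting that updates to the $L$ hash tables are straightforward (citing standard dynamization), which matches your direct insert/delete handling, and your closing caveats about $p,q$ depending on $n$ are the same mild assumptions the paper makes when stating the simplified $n^{\rho+o(1)}$ bounds.
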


The LSF framework by Becker et al. \cite{becker2016} takes a symmetric $(r, cr, p_1, p_2, p_q, p_u)$-sensitive 
filter family $\LSF$ and produces a data structure that solves the 
$(r, cr)$-near neighbor problem with the same properties as the one produced by the LSH framework 
where instead we have \mbox{$\rho = \frac{\log(p_{q}/p_{1})}{\log(p_{q}/p_{2})}$}.
In addition, the framework assumes access to an oracle that is able to efficiently list the relevant filters containing a point $x \in X$
out of a large collection of filters.
The LSF framework in this paper removes this assumption, showing how to construct an efficient oracle as part of the framework. 

In terms of frameworks that support space-time tradeoffs, Panigrahy \cite{panigrahy2006} developed a 
framework based on LSH that supports the two extremes of the space-time tradeoff. 
In the language of Theorem \ref{thm:lsfvanilla}, 
Panigrahy's framework supports either setting $\rho_u = 0$ for a solution that uses near-linear space at the cost of a slower query time,
or setting $\rho_q = 0$ for a solution with query time $n^{o(1)}$ at the cost of a higher space usage.
To obtain near-linear space the framework stores every data point in $n^{o(1)}$ partitions 
induced by randomly sampled hash functions from a $(r, cr, p, q)$-sensitive LSH family $\LSH$.
In comparison, the standard LSH framework from Theorem \ref{filters:thm:lsh} uses $n^{\rho}$ such partitions where $\rho$ is determined by $\LSH$.
For each partition induced by $h \in \LSH$ the query algorithm in Panigrahy's framework generates a number of random points $z$ in a ball 
around the query point $x$ and searches the parts of the partition $h(z)$ that they hash to.
The query time is bounded by $n^{\hat{\rho} + o(1)}$ where $\hat{\rho} = \frac{I(h(z) | x, h)}{\log(1/q)}$ 
and $I(h(z) | x, h)$ denotes conditional entropy, i.e. the query time is determined by how hard it is to guess where 
$z$ hashes to given that we know $x$ and $h$.
Panigrahy's technique was used in a number of follow-up works that improve on solutions for specific spaces,
but to our knowledge none of them state a general framework with space-time tradeoffs \cite{lv2007, andoni2009, kapralov2015}.

\paragraph{Upper bounds.}
As is standard in the literature we state results in $\ell_s$-spaces in terms of the properties of a solution to the $(r, cr)$-near neighbor problem.
For results on the unit sphere under inner product similarity $(\sphere{d}, \ip{\cdot}{\cdot})$
we instead use the $(\alpha, \beta)$-similarity terminology, defined in the introduction, as we find it to be 
cleaner and more intuitive while aligning better with the analysis.
The $\ell_s$-spaces, particularly $\ell_1$ and $\ell_2$, as well as $(\sphere{d}, \ip{\cdot}{\cdot})$ are some of 
most well-studied spaces for similarity search and are also widely used in practice \cite{wang2014}. 
Furthermore, fractional norms ($\ell_s$ for $s \neq 1, 2$) have been shown to perform better than the standard norms 
in certain use cases \cite{aggarwal2001} which motivates finding efficient solutions to the near neighbor problem in general $\ell_s$-space. 

In the case of a balanced space-time tradeoff the best data-independent upper bound for the $(r, cr)$-near neighbor problem in $\ell_s^d$ 
are solutions with an LSH exponent of $\rho = 1/c^s$ for $0 < s \leq 2$.
This result is obtained through a combination of techniques.
For \mbox{$0 < s \leq 1$} the LSH based on $s$-stable distributions by Datar et al. \cite{datar2004} 
can be used to obtain an exponent of $(1 + \varepsilon)/c^s$ for an arbitrarily small constant $\varepsilon > 0$.
For \mbox{$1 < s \leq 2$} the ball-carving LSH by Andoni and Indyk~\cite{andoni2006} for Euclidean space can be extended to $\ell_s$
using the technique described by Nguyen \cite[Section 5.5]{nguyen2014}.
Theorem \ref{thm:lsvanilla} matches (and potentially improves in the case of $0 < s < 1$) 
these results with a single unified technique and analysis that we find to be simpler.

For space-time tradeoffs in Euclidean space (again extending to $\ell_s$ for $1 < s < 2$) Kapralov \cite{kapralov2015},
improving on Panigrahy's results \cite{panigrahy2006} in Euclidean space and using similar techniques, obtains a solution with 
query exponent \mbox{$\rho_q = \frac{c^2 (1 + \lambda)^2}{(c^2 + \lambda)^2 - c^2 (1+\lambda^2)/2 - \lambda^2}$}
and update exponent \mbox{$\rho_u = \frac{c^2 (1 - \lambda)^2}{(c^2 + \lambda)^2 - c^2 (1+\lambda^2)/2 - \lambda^2}$} 
under the condition that $c^2 \geq (1+\lambda)^2 / 2 + \lambda + \varepsilon$ where $\varepsilon > 0$ is an arbitrary positive constant.
Comparing to our Theorem \ref{thm:lsvanilla} it is easy to see that we improve upon Kapralov's space-time tradeoff 
for all choices of $c$ and $\lambda$. 
In addition, Theorem \ref{thm:lsvanilla} represents the first solution to the $(r, cr)$-near neighbor problem in Euclidean space 
that for every choice of constant $c > 1$ obtains sublinear query time ($\rho_q < 1$) using only near-linear space ($\rho_u = 0$). 
Due to the restrictions on Kapralov's result he is only able to obtain sublinear query time for $c > \sqrt{3}$ 
when the space usage is restricted to be near-linear. 
It appears that our improvements can primarily be attributed to our techniques allowing a more direct analysis.
Kapralov uses a variation of Panigrahy's LSH-based technique of, depending on the desired space-time tradeoff, 
either querying or updating additional memory locations around a point $x \in X$ in the partition induced by $h \in \LSH$.
For a query point $x$ and a near neighbor $y$ his argument for correctness is based on guaranteeing that both the query algorithm 
and update algorithm visit the part $h(z)$ where $z$ is a point lying between $x$ and $y$, 
possibly leading to a loss of efficiency in the analysis.
More details on the comparison of Theorem \ref{thm:lsvanilla} to Kapralov's result can be found in Appendix~\ref{app:kapralov}. 

In terms of space-time tradeoffs on the unit sphere, Laarhoven \cite{laarhoven2015} modifies a filter family 
introduced by Becker et al. \cite{becker2016} to support space-time tradeoffs,
obtaining a solution for random data on the unit sphere (the $(\alpha, \beta)$-similarity problem with $\beta = o_{d}(1)$) 
with query exponent \mbox{$\rho_q = \frac{(1 - \alpha^{1 + \lambda})^2}{1 - \alpha^2}$} 
and update exponent \mbox{$\rho_u = \frac{(\alpha^{\lambda} - \alpha)^2}{1 - \alpha^2}$}.
Theorem \ref{thm:spherevanilla} extends this result to provide a solution to the $(\alpha, \beta)$-similarity problem on the unit sphere
for every choice of $0 \leq \beta < \alpha < 1$.
This extension to worst case data is crucial for obtaining our results for $\ell_s$-spaces in Theorem \ref{thm:lsvanilla}. 
We note that there exist other data-independent techniques (e.g. Valiant \cite[Alg. 25]{valiant2015}) 
for extending solutions on the unit sphere to $\ell_2$, but they also require a solution for worst-case data on the unit sphere to work.

\paragraph{Lower bounds}
The performance of an LSH-based solution to the near neighbor problem in a given space that uses
a $(r, cr, p, q)$-sensitive family of hash functions $\LSH$ is summarized by the value of the exponent $\rho = \frac{\log(1/p)}{\log(1/q)}$. 
It is therefore of interest to lower bound $\rho$ in terms of the approximation factor $c$.
Motwani et al.~\cite{motwani2007} proved the first lower bound for LSH families in $d$-dimensional Hamming space. 
They show that for every choice of $c \geq 1$ then for some choice of $r$ it must hold that $\rho \geq 0.462/c$ as $d$ goes to infinity
under the assumption that $q$ is not too small ($q \geq 2^{-o(d)}$). 

As part of an effort to show lower bounds for data-dependent locality-sensitive hashing,
Andoni and Razenshteyn \cite{andoni2016tight} strengthened the lower bound by Motwani et al.\ to $\rho \geq 1/(2c - 1)$ in Hamming space.
These lower bounds are initially shown in Hamming space and can then be extended to $\ell_s$-space and the unit sphere 
by the fact that a solution in these spaces can be used to yield a solution in Hamming space, 
contradicting the lower bound if $\rho$ is too small.
Translated to $(\alpha, \beta)$-similarity on the unit sphere, 
which is the primary setting for the lower bounds on LSF space-time tradeoffs in this paper,
the lower bound by Andoni and Razenshteyn shows that an LSH on the unit sphere must have $\rho \geq \frac{1-\alpha}{1+\alpha}$ 
which is tight in the case of random data~\cite{andoni2015practical}. 

The lower bound uses properties of random walks over a partition of Hamming space: 
A random walk starting from a random point $x \in \cube{d}$ is likely to ``walk out'' 
of the the part identified by $h(x)$ in the partition induced by $h$.
The space-time tradeoff lower bound in Theorem \ref{thm:lowertradeoffvanilla} relies on a similar argument 
that lower bounds the probability that a random walk starting from a subset $Q$ ends up in another subset $U$, 
corresponding nicely to query and update filters in the LSF framework. 

Using related techniques O'Donnell \cite{odonnell2014optimal} showed tight LSH lower bounds for $\ell_s$-space of $\rho \geq 1/c^s$.
The work by Andoni et al.~\cite{andoni2006lower} and Panigrahy et al.~\cite{panigrahy2008geometric, panigrahy2010} 
gives cell probe lower bounds for the $(r, cr)$-near neighbor problem, 
showing that in Euclidean space a solution with a query complexity of $t$ probes require space at least $n^{1 + \Omega(1/tc^2)}$. 
For more details on these lower bounds and how they relate to the upper bounds on the unit sphere see \cite{andoni2017optimal, laarhoven2015}.

\paragraph{Data-dependent solutions}
The solutions to the $(r, cr)$-near neighbor problems considered in this paper are all \emph{data-independent}.
For the LSH and LSF frameworks this means that the choice of hash functions or filters used by the data structure,
determining the mapping between points in space and the memory locations that are searched during the query and update algorithm, 
is made without knowledge of the data.
Data-independent solutions to the $(r, cr)$-near neighbor problem for worst-case data have been the state of the art
until recent breakthroughs by Andoni et al.~\cite{andoni2014beyond} and Andoni and Razenshteyn~\cite{andoni2015optimal}
showing improved solutions to the $(r, cr)$-near neighbor problem in Euclidean space using \emph{data-dependent} techniques.
In this setting the solution obtained by Andoni and Razenshteyn has an exponent of $\rho = 1/(2c^2 - 1)$ 
compared to the optimal data-independent exponent of $\rho = 1/c^2$.
Furthermore, they show that this exponent is optimal for data-dependent solutions in a restricted model \cite{andoni2016tight}.

\paragraph{Recent developments} 
Recent work by Andoni et al.~\cite{andoni2017optimal}, done independently of and concurrently with this paper, 
shows that Laarhoven's upper bound for random data on the unit sphere can be combined with data-dependent techniques~\cite{andoni2015optimal} 
to yield a space-time tradeoff in Euclidean space with $\rho_u, \rho_q$ satisfying $c^2 \sqrt{\rho_q} + (c - 1)\sqrt{\rho_u} = \sqrt{2c^2 - 1}$.
This improves the result of Theorem~\ref{thm:lsvanilla} and matches the lower bound in Theorem~\ref{thm:lowertradeoffvanilla}.
In the same paper they also show a lower bound matching our lower bound in Theorem \ref{thm:lowertradeoffvanilla}.
Their lower bound is set in a more general model that captures both the LSH and LSF framework 
and they are able to remove some of the technical restrictions such as the filter family being regular that weaken the lower bound in this paper.
In spite of these results we still believe that this paper presents an important contribution 
by providing a general and simple framework with space-time tradeoffs as well as improved data-independent solutions 
to nearest neighbor problems in $\ell_s$-space and on the unit sphere. 
We would also like to point out the simplicity and power of using Rahimi and Recht's~\cite{rahimi2007} result 
to extend solutions on the unit sphere to spaces with similarity measures corresponding to real-valued characteristic functions, 
further described in Appendix~\ref{app:characteristic}.
\section{A framework with space-time tradeoffs}
We use a combination of powering and tensoring techniques to amplify the locality-sensitive properties of our initial filter family, 
and to simulate a large collection of filters that we can evaluate efficiently.
We proceed by stating the relevant properties of these techniques which we then combine to yield our Theorem \ref{thm:lsfvanilla}. 
\begin{lemma}[powering]\label{lem:powering}
	Given a $(r, cr, p_1, p_2, p_q, p_u)$-sensitive filter family $\LSF$ for $(X, \dist)$ 
	and a positive integer $\kappa$ define the family $\LSF^{\kappa}$ as follows:
	we sample a filter $F = (Q, U)$ from $\LSF^{\kappa}$ by sampling 
	$(Q_1, U_1), \dots, (Q_\kappa, U_\kappa)$ independently from $\LSF$
	and setting $(Q, U) = (\bigcap_{i=1}^{\kappa} Q_i, \bigcap_{i=1}^{\kappa} U_i)$. 
	The family $\LSF^{\kappa}$ is $(r, cr, p_{1}^{\kappa}, p_{2}^{\kappa}, p_{q}^{\kappa}, p_{u}^{\kappa})$-sensitive for $(X, \dist)$.
\end{lemma}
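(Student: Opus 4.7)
The plan is to verify each of the four conditions in Definition~\ref{def:lsf} for the product family $\LSF^{\kappa}$, and the key observation is that the $\kappa$ component filters $(Q_1, U_1), \dots, (Q_\kappa, U_\kappa)$ are drawn independently from $\LSF$, so the events ``$x \in Q_i$'', ``$y \in U_i$'', and ``$x \in Q_i, y \in U_i$'' for different values of $i$ are mutually independent. Once independence is in hand, every required inequality reduces to a product of $\kappa$ copies of the corresponding inequality for a single draw from $\LSF$.

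Concretely, I would fix arbitrary $x, y \in X$ and unfold the definitions. Since $Q = \bigcap_i Q_i$ and $U = \bigcap_i U_i$, the event $\{x \in Q, y \in U\}$ is exactly the intersection $\bigcap_{i=1}^{\kappa} \{x \in Q_i, y \in U_i\}$, and by independence of the draws
\begin{equation*}
\Pr[x \in Q, y \in U] \;=\; \prod_{i=1}^{\kappa} \Pr[x \in Q_i, y \in U_i].
\end{equation*}
If $\dist(x,y) \leq r$, each factor is at least $p_1$, giving the lower bound $p_1^{\kappa}$; if $\dist(x,y) > cr$, each factor is at most $p_2$, giving the upper bound $p_2^{\kappa}$. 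The marginal filter-size conditions follow identically: $\{x \in Q\} = \bigcap_i \{x \in Q_i\}$ factors as a product of at most $p_q$ terms, and likewise for $U$ and $p_u$.

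There is essentially no obstacle here; the only thing that needs a moment of care is that the two marginal bounds ($\Pr[x \in Q] \leq p_q^\kappa$ and $\Pr[x \in U] \leq p_u^\kappa$) use only the marginal guarantees of $\LSF$, not the joint one, and that the collision bounds use only the joint guarantee, so the four properties of $\LSF^{\kappa}$ inherit cleanly and independently from the four properties of $\LSF$. Since no distance thresholds are rescaled, the resulting parameters are exactly $(r, cr, p_1^{\kappa}, p_2^{\kappa}, p_q^{\kappa}, p_u^{\kappa})$, which is the claim.
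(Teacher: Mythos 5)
Your proof is correct, and it is exactly the standard independence argument the paper has in mind: the lemma is stated without proof there precisely because the four sensitivity bounds factor into $\kappa$ independent copies of the corresponding bounds for $\LSF$, as you show. Nothing is missing, and your remark that the joint and marginal guarantees inherit separately is the only point requiring care.
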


Let $\F$ denote a collection (indexed family) of $m$ filters and let $\Q$ and $\U$ denote the corresponding collections of query and update filters, 
that is, for $i \in \{1, \dots, m\}$ we have that $\F_i = (\Q_i, \U_i)$. 
Given a positive integer $\tau \leq m$ (typically $\tau \ll m$) we define $\TLSF$ to be the 
collection of filters formed by taking all the intersections of $\tau$-combinations of filters from $\F$, 
that is, for every $I \subseteq \{1, \dots, m \}$ with~$|I| = \tau$ we have that 
\begin{equation*}
\F_{I}^{\otimes \tau}\! = \left(\mcap_{i \in I} \Q_{i}, \mcap_{i \in I} \U_{i}\right) 
\end{equation*}
The following properties of the tensoring technique will be used to provide correctness, running time, 
and space usage guarantees for the LSF data structure that will be introduced in the next subsection.
We refer to the evaluation time of a collection of filters $\F$ as the time it takes, given a point $x \in X$ 
to prepare a list of query filters $\Q(x) \subseteq \Q$ containing $x$ and a list of update filters $\U(x) \subseteq \U$ containing $x$
such that the next element of either list can be reported in constant time.
We say that a pair of points $(x, y)$ is contained in a filter $(Q, U)$ if~$x \in Q$ and~$y \in U$.
\begin{lemma}[tensoring]\label{lem:tensoring}
	Let $\LSF$ be a filter family that is $(r, cr, p_1, p_2, p_q, p_u)$-sensitive in $(X, \dist)$.
	Let $\tau$ be a positive integer and let $\F$ denote a collection of \mbox{$m = \lceil \tau / p_1 \rceil$} 
	independently sampled filters from~$\LSF$.
	Then the collection $\TLSF$ of $\binom{m}{\tau}$ filters has the following properties:
	\begin{itemize}
		\item If $(x, y)$ have distance at most $r$ then with probability at least $1/2$ 
			there exists a filter in $\TLSF$ containing~$(x, y)$.
		\item If $(x, y)$ have distance greater than $cr$ then the expected number of filters in $\TLSF$ 
			containing $(x, y)$ is at most~$p_{2}^{\tau}\binom{m}{\tau}$.
		\item In expectation, a point $x$ is contained in at most $p_{q}^{\tau}\binom{m}{\tau}$ query filters 
			and at most $p_{u}^{\tau}\binom{m}{\tau}$ update filters in $\TLSF$.
		\item The evaluation time and space complexity of $\TLSF$ is dominated by the time 
			it takes to evaluate and store $m$ filters from $\LSF$. 
	\end{itemize}
\end{lemma}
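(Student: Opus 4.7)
The plan is to reduce each of the four claims to a calculation at the level of the $m$ base filters in $\F$. For each $i \in \{1,\dots,m\}$ let $A_i$ be the indicator of the event that the pair $(x,y)$ lies in $\F_i$ (i.e., $x \in \Q_i$ and $y \in \U_i$), and let $B_i, C_i$ be the indicators of $x \in \Q_i$ and $x \in \U_i$ respectively. The variables across different $i$ are mutually independent by the independence of the samples from $\LSF$, and by the sensitivity hypothesis we have $\mathbb{E}[A_i] \geq p_1$ when $\dist(x,y) \leq r$, $\mathbb{E}[A_i] \leq p_2$ when $\dist(x,y) > cr$, and $\mathbb{E}[B_i] \leq p_q$, $\mathbb{E}[C_i] \leq p_u$.

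For the second and third claims I would apply linearity of expectation and independence. A subset $I \subseteq \{1,\dots,m\}$ of size $\tau$ causes $(x,y)$ to lie in $\F_I^{\otimes \tau}$ precisely when $\prod_{i \in I} A_i = 1$, and $\mathbb{E}[\prod_{i \in I} A_i] \leq p_2^{\tau}$ for a far pair, giving total expected count at most $\binom{m}{\tau} p_2^{\tau}$. The analogous calculation with $B_i$ and $C_i$ yields the bounds $\binom{m}{\tau} p_q^{\tau}$ and $\binom{m}{\tau} p_u^{\tau}$ on the expected number of query and update filters of $\TLSF$ containing $x$.

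The main obstacle is the first claim. Setting $S = \sum_{i=1}^m A_i$, the number of filters in $\TLSF$ containing a close pair $(x,y)$ is exactly $\binom{S}{\tau}$, which is at least $1$ iff $S \geq \tau$. Since $A_i$ stochastically dominates a $\mathrm{Bernoulli}(p_1)$ variable, the count $S$ stochastically dominates a binomial random variable $T \sim \mathrm{Bin}(m, p_1)$, so it suffices to show $\Pr[T \geq \tau] \geq 1/2$. From $m = \lceil \tau/p_1 \rceil$ I get $m p_1 \geq \tau$, and since $\tau$ is an integer this yields $\lfloor m p_1 \rfloor \geq \tau$. I would then invoke the classical fact that every median of $\mathrm{Bin}(m, p_1)$ lies in $\{\lfloor m p_1 \rfloor, \lceil m p_1 \rceil\}$, so the median is at least $\tau$ and $\Pr[T \geq \tau] \geq 1/2$ as required.

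For the fourth claim, the key observation is that the tensored family is never materialized: it is represented implicitly by the $m$ base filters in $\F$, so its storage is precisely that of $\F$. To enumerate the query filters of $\TLSF$ containing a point $x$, I would first invoke the evaluation procedure for $\F$ to obtain the sets $\{i : x \in \Q_i\}$ and $\{i : x \in \U_i\}$. A tensored query filter $\F_I^{\otimes \tau}$ contains $x$ iff $I$ is a $\tau$-subset of $\{i : x \in \Q_i\}$, so the relevant tensored filters are produced by listing $\tau$-subsets in lexicographic order, each reportable in constant amortized time; update filters are handled symmetrically. The overhead beyond evaluating $\F$ is therefore dominated by reporting cost, matching the stated evaluation bound.
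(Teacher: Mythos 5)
Your proof is correct and follows essentially the same route as the paper: claims two and three by linearity of expectation, claim one by reducing to the event that at least $\tau$ of the $m$ base filters contain the pair and invoking the classical fact that the median of a binomial is at least its mean rounded down, and claim four by the implicit representation of $\TLSF$ through $\F$. Your write-up merely makes explicit the stochastic-domination step and the lexicographic enumeration that the paper treats as immediate.
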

\begin{proof}
	To prove the first property we note that there exists a filter in $\TLSF$ containing 
	$(x, y)$ if at least $\tau$ filters in $\F$ contain $(x, y)$.
	The binomial distribution has the property that the median is at least as great as the mean rounded down \cite{buhrman1980}.
	By the choice of $m$ we have that the expected number of filters in $\F$ containing $(x, y)$ is at least $\tau$ and the result follows. 
	The second and third properties follow from the linearity of expectation and the fourth is trivial.
\end{proof}
\subsection{The LSF data structure} \label{sec:lsfds}
We will introduce a dynamic data structure that solves the $(r, cr)$-near neighbor problem on a set of points $P \subseteq X$.
The data structure has access to a $(r, cr, p_1, p_2, p_q, p_u)$-sensitive filter family $\LSF$ in the sense that
it knows the parameters of the family and is able to sample, store, and evaluate filters from $\LSF$ in time $dn^{o(1)}$.

The data structure supports an initialization operation that initializes a collection of filters $\F$ 
where for every filter we maintain a (possibly empty) set of points from $X$.
After initialization the data structure supports three operations: \textsc{insert}, \textsc{delete}, and \textsc{query}.
The \textsc{insert} (\textsc{delete}) operation takes as input a point $x \in X$ and adds (removes) the point from the set of 
points associated with each update filter in $\F$ that contains $x$.
The \textsc{query} operation takes as input a point $x \in X$. 
For each query filter in $\F$ that contains $x$ we proceed by 
computing the dissimilarity $\dist(x, y)$ to every point $y$ associated with the filter.
If a point $y$ satisfying $\dist(x, y) \leq cr$ is encountered, then $y$ is returned and the query algorithm terminates.
If no such point is found, the query algorithm returns a special symbol ``$\varnothing$'' and terminates.

The data structure will combine the powering and tensoring techniques in order to simulate the collection of filters $\F$ from two smaller collections:
$\F_1$ consisting of $m_1$ filters from $\LSF^{\kappa_1}$ and $\F_2$ consisting of $m_2$ filters from $\LSF^{\kappa_2}$.
The collection of simulated filters $\F$ is formed by taking all filters $(Q_1 \cap Q_2, U_1 \cap U_2)$ 
where $(Q_1, U_1)$ is a member of $\F_{1}^{\otimes \tau}\!$ and $(Q_2, U_2)$ is a member of $\F_2$.
It is due to the integer constraints on the parameter $\tau$ in the tensoring technique 
and the parameter $\kappa$ in the powering technique that we simulate our filters from two underlying collections 
instead of just one.
This gives us more freedom to hit a target level of amplification of the simulated filters which in turn
makes it possible for the framework to support efficient solutions for a wider range of parameters of LSF families. 

The initialization operation takes $\LSF$ and parameters $m_1, \kappa_1, \tau, m_2, \kappa_2$ and samples and stores $\F_1$ and $\F_2$.
The filter evaluation algorithm used by the insert, delete, and query operation takes a point $x \in X$ and computes for $\F_1$ and $\F_2$, 
depending on the operation, the list of update or query filters containing $x$.  
From these lists we are able to generate the list of filters in $\F$ containing~$x$.

Setting the parameters of the data structure to guarantee correctness while balancing the contribution to the query time from 
the filter evaluation algorithm, the number of filters containing the query point, and the number of distant points examined, 
we obtain a partially dynamic data structure that solves the $(r, cr)$-near neighbor problem with failure probability $\delta \leq 1/2 + 1/e$.
Using a standard dynamization technique by Overmars and Leeuwen \cite[Thm. 1]{overmars1981} we obtain a fully dynamic data structure 
resulting in Theorem \ref{thm:lsfvanilla}. The details of the proof have been deferred to Appendix~\ref{app:framework}.
\section{Gaussian filters on the unit sphere} \label{sec:gaussianlsf}
In this section we show properties of a family of filters for the unit sphere $\sphere{d}$ under inner product similarity.
Later we will show how to make use of this family to solve the near neighbor problem in other spaces, including $\ell_s$ for $0 < s \leq 2$.
\begin{lemma} \label{lem:gaussianlsf}
	For every choice of $0 \leq \beta < \alpha < 1$, $\lambda \in [-1, 1]$, and $t > 0$ 
	let $\mathcal{G}$ denote the family of filters defined as follows:
	we sample a filter $(Q, U)$ from $\mathcal{G}$ by sampling $z \sim \mathcal{N}^{d}(0,1)$ and setting
\begin{align*}
	Q &= \{ x \in \real^d \mid \ip{x}{z} > \alpha^{\lambda} t \}, \\ 
	U &= \{ x \in \real^d \mid \ip{x}{z} >  t \}.
\end{align*}
Then $\mathcal{G}$ is locality-sensitive 
on the unit sphere under inner product similarity with exponents
\small
\begin{align*}
	\rho_q &\leq \left.\left( \frac{(1-\alpha^{1+\lambda})^2}{1-\alpha^2} + 
\frac{\ln(2\pi(1+t/\alpha)^{2})}{t^2 / 2}\right) \middle/  \frac{(1-\alpha^{\lambda} \beta)^2}{1-\beta^2} \right., \\
\rho_u &\leq \left.\left( \frac{(\alpha^{\lambda} - \alpha)^2}{1-\alpha^2} + 
\frac{\ln(2\pi(1+t/\alpha)^{2})}{t^2 / 2}\right) \middle/  \frac{(1-\alpha^{\lambda} \beta)^2}{1-\beta^2} \right..
\end{align*}
\normalsize
\end{lemma}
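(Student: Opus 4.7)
The plan is to reduce each of the four probabilities $p_q, p_u, p_1, p_2$ to a standard or bivariate Gaussian tail probability via the rotational invariance of $z \sim \mathcal{N}^d(0,1)$, and then plug the resulting expressions into $\rho_q = \log(p_q/p_1)/\log(p_q/p_2)$ and $\rho_u = \log(p_u/p_1)/\log(p_q/p_2)$. For any unit vector $x$, the projection $\ip{x}{z}$ is standard normal, so $p_q = \Pr[N > \alpha^{\lambda} t]$ and $p_u = \Pr[N > t]$. For unit $x, y$ with $\ip{x}{y} = \sigma$, the pair $(\ip{x}{z}, \ip{y}{z})$ is bivariate standard normal with correlation $\sigma$; hence $p_1$ and $p_2$ are bivariate upper-tail probabilities at correlations $\sigma \geq \alpha$ and $\sigma \leq \beta$, respectively. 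I would then use the fact that the bivariate upper tail $\Pr[X > t_1, Y > t_2]$ is monotone increasing in $\sigma$ on $[0,1)$ (verified either by writing $Y = \sigma X + \sqrt{1-\sigma^2} W$ with $W \sim N(0,1)$ independent and differentiating in $\sigma$, or by Slepian's inequality) to reduce to the extremal correlations $\sigma = \alpha$ (worst case for $p_1$) and $\sigma = \beta$ (worst case for $p_2$).

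Next I would invoke standard two-sided Gaussian tail estimates: the Mills-ratio bound $\frac{1}{\sqrt{2\pi}(s+1/s)} e^{-s^2/2} \leq \Pr[N > s] \leq \frac{1}{\sqrt{2\pi}\, s} e^{-s^2/2}$ for the univariate tails, and the analogous bivariate asymptotic
\begin{equation*}
\Pr[X > t_1, Y > t_2] = \frac{(1-\sigma^2)^{3/2}}{2\pi (t_1-\sigma t_2)(t_2 - \sigma t_1)} \exp\!\left( -\frac{t_1^2 + t_2^2 - 2\sigma t_1 t_2}{2(1-\sigma^2)} \right) (1 \pm o(1)),
\end{equation*}
valid here since $t_1 = \alpha^{\lambda} t$, $t_2 = t$, and both $t_2 > \sigma t_1$ and $t_1 > \sigma t_2$ hold for $\sigma \in \{\alpha, \beta\}$ once $t$ is sufficiently large in terms of $\alpha$. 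Substituting and taking logarithms gives $\log(1/p_q) \approx \alpha^{2\lambda} t^2/2$, $\log(1/p_u) \approx t^2/2$, and $\log(1/p_1) \approx \frac{t^2(\alpha^{2\lambda} + 1 - 2\alpha^{1+\lambda})}{2(1-\alpha^2)}$, $\log(1/p_2) \approx \frac{t^2(\alpha^{2\lambda} + 1 - 2\alpha^{\lambda}\beta)}{2(1-\beta^2)}$. Elementary algebra then collapses the exponent of $\log(p_q/p_1)$ to $\frac{t^2(1-\alpha^{1+\lambda})^2}{2(1-\alpha^2)}$, of $\log(p_u/p_1)$ to $\frac{t^2(\alpha^{\lambda}-\alpha)^2}{2(1-\alpha^2)}$, and of $\log(p_q/p_2)$ to $\frac{t^2(1-\alpha^{\lambda}\beta)^2}{2(1-\beta^2)}$; these are exactly the main terms in the claimed $\rho_q$ and $\rho_u$.

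The main obstacle will be bookkeeping so that all logarithmic prefactor slack collapses into the single explicit additive correction $\ln(2\pi(1+t/\alpha)^2)/(t^2/2)$. To control $\rho_q$ from above I need an upper bound on $\log(p_q/p_1)$, which requires an upper bound on $p_q$ combined with a lower bound on $p_1$, and simultaneously a lower bound on $\log(p_q/p_2)$, which requires a lower bound on $p_q$ and an upper bound on $p_2$. The trick is to pick matched Mills-ratio constants so that the product of the $1/\sqrt{2\pi}$ factors yields exactly the $2\pi$ inside the logarithm, and so that the prefactors $(t_1 - \sigma t_2)(t_2 - \sigma t_1)$ and $s + 1/s$ are uniformly bounded in terms of $t$ and $\alpha$ by $(1+t/\alpha)^2$ up to a constant. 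After dividing by $t^2/2$ (the natural normalization of the logarithms), this polynomial slack becomes the stated additive error, and the division by the denominator $(1-\alpha^{\lambda}\beta)^2/(1-\beta^2)$ yields both bounds in the lemma simultaneously.
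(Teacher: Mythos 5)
Your overall skeleton matches the paper's: reduce $p_q,p_u$ to univariate and $p_1,p_2$ to bivariate Gaussian tails by rotational invariance, pass to the extremal correlations $\alpha$ and $\beta$, and then do the exponent algebra $(1-\alpha^{2\lambda})(1-\alpha^2)+(\alpha^\lambda-\alpha)^2=(1-\alpha^{1+\lambda})^2$ etc. The gap is in the core estimates. The lemma is a non-asymptotic statement for \emph{every} $t>0$ and every $\lambda\in[-1,1]$ with an exact additive correction $\ln(2\pi(1+t/\alpha)^2)/(t^2/2)$, but your key tool is the bivariate tail asymptotic with a $(1\pm o(1))$ factor, ``valid once $t$ is sufficiently large''. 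That expansion requires $t_1>\sigma t_2$ and $t_2>\sigma t_1$ strictly and with margins growing; here $t_1=\alpha^\lambda t$, $t_2=t$, $\sigma=\alpha$, so $t_1-\sigma t_2=(\alpha^\lambda-\alpha)t\to 0$ as $\lambda\to 1$ and $t_2-\sigma t_1=(1-\alpha^{1+\lambda})t\to 0$ as $\lambda\to -1$; at and near these endpoints the prefactor $1/((t_1-\sigma t_2)(t_2-\sigma t_1))$ blows up and the formula is simply not valid (e.g.\ at $\lambda=1$ the truth is $p_1\approx\tfrac12\Pr[X\ge t]$). So the proposed lower bound on $p_1$ is neither rigorous for finite $t$ nor uniform in $\lambda$, and the ``bookkeeping'' step that is supposed to collapse all prefactor slack into exactly $2\pi(1+t/\alpha)^2$ is asserted rather than proved — the bivariate prefactor cannot be bounded by $(1+t/\alpha)^2$ uniformly over $\lambda\in[-1,1]$.

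The paper avoids this entirely with finite-$t$ bounds: for the upper bounds it uses $\Pr[Z>s]\le e^{-s^2/2}$ and the Lu--Li bound $\Pr[z\in D]\le e^{-\Delta^2/2}$ for a convex domain $D$ at distance $\Delta$ from the origin (giving $p_2\le e^{-(1+\frac{(\alpha^\lambda-\beta)^2}{1-\beta^2})t^2/2}$ for all correlations $\le\beta$); for the lower bound on $p_1$ it conditions and factorizes, $\Pr[X\ge t\wedge Y\ge\alpha^\lambda t]\ge\Pr[Z_1\ge t]\cdot\Pr[\alpha t+\sqrt{1-\alpha^2}\,Z_2\ge\alpha^\lambda t]$, applies the Szarek--Werner bound $\Pr[Z\ge s]\ge\frac{1}{\sqrt{2\pi}(s+1)}e^{-s^2/2}$ to both factors, and uses $(\alpha^\lambda-\alpha)t/\sqrt{1-\alpha^2}\le t/\alpha$ to get the uniform prefactor $1/(2\pi(1+t/\alpha)^2)$ — valid for all $t>0$ and all $\lambda\in[-1,1]$, including the degenerate endpoints. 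A second, minor point: in the LSF framework $p_q$ is a single exhibited parameter (an upper bound on $\Pr[x\in Q]$) that appears in both the numerator and the denominator of $\rho_q$; you do not need a lower bound on the true probability $\Pr[x\in Q]$, and using two different values of ``$p_q$'' in the two places would be inconsistent with the sensitivity definition. To repair your proof you should replace the asymptotic bivariate estimate with finite-$t$ two-sided bounds of the above type and use the exhibited $p_q=e^{-\alpha^{2\lambda}t^2/2}$ throughout.
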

Laarhoven's filter family~\cite{laarhoven2015} is identical to $\mathcal{G}$ 
except that he normalizes the projection vectors $z$ to have unit length. 
The properties of $\mathcal{G}$ can easily be verified with a simple back-of-the-envelope analysis using two facts:
First, for a standard normal random variable $Z$ we have that $\Pr[Z > t] \approx e^{-t^{2}/2}$.
Secondly, the invariance of Gaussian projections $\ip{x}{z}$ to rotations, allowing us to analyze the projection of arbitrary points 
$x, y \in \sphere{d}$ with inner product $\ip{x}{y} = \alpha$ in a two-dimensional setting $x = (1, 0)$ 
and $y = (\alpha, \sqrt{1-\alpha^{2}})$ without any loss of generality.
The proof of Lemma~\ref{lem:gaussianlsf} as well as the proof of Theorem~\ref{thm:spherevanilla} has been deferred to Appendix~\ref{app:gaussian}.
\section{Space-time tradeoffs under kernel similarity} \label{sec:kernel}
In this section we will show how to combine the Gaussian filters for the unit sphere with kernel approximation techniques in order to solve the 
$(\alpha, \beta)$-similarity problem over $(\real^d, S)$ for the class of similarity measures of the form $S(x, y) = k(x-y)$ 
where $k \colon \real^d \times \real^d \to \real$ is a real-valued characteristic function \cite{ushakov1999}.
For this class of functions there exists a feature map $\psi$ into a (possibly infinite-dimensional) 
dot product space such that $k(x, y) = \ip{\psi(x)}{\psi(y)}$.
Through an elegant combination of Bochner's Theorem and Euler's Theorem, detailed in Appendix \ref{app:characteristic},
Rahimi and Recht \cite{rahimi2007} show how to construct approximate feature maps, 
i.e., for every $k$ we can construct a function $v$ with the property that $\ip{v(x)}{v(y)} \approx \ip{\psi(x)}{\psi(y)} = k(x - y)$.
We state a variant of their result for a mapping onto the unit sphere.
\begin{lemma} \label{lem:rahimirecht}
	For every real-valued characteristic function $k$ and every positive integer $l$ there exists
	a family of functions $\RR \subseteq \{ v \mid v \colon \real^{d} \to \sphere{l} \}$ such that 
	for every $x, y \in \real^d$ and $\varepsilon > 0$ we have that 
	\begin{equation*}
	\Pr_{v \sim \RR}[|\ip{v(x)}{v(y)} - k(x, y)| \geq \varepsilon] \leq e^{-\Omega(l \varepsilon^2)}.
	\end{equation*}
\end{lemma}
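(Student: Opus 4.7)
\textbf{Proof plan for Lemma \ref{lem:rahimirecht}.} The plan is to build $v$ in two stages: first construct the standard Rahimi--Recht random Fourier feature map $\tilde{v} \colon \real^d \to \real^l$, whose inner product approximates $k$ in expectation; then radially project $\tilde{v}(x)$ onto $\sphere{l}$ and argue the inner product is still preserved. The random family $\RR$ is indexed by $l$ frequency vectors and $l$ phases.

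First I invoke Bochner's theorem: since $k$ is a real-valued characteristic function, there is a probability density $p$ on $\real^d$ with $k(x-y) = \int p(\omega)\cos(\omega^\top (x-y))\, d\omega$. Using the identity $\cos(a-b) = 2\, \E_{b \sim U[0,2\pi]}[\cos(\omega^\top x + b)\cos(\omega^\top y + b)]$, I obtain the unbiased representation
\begin{equation*}
k(x-y) = \E_{\omega \sim p,\, b \sim U[0,2\pi]}\bigl[2\cos(\omega^\top x + b)\cos(\omega^\top y + b)\bigr].
\end{equation*}
Sampling $(\omega_i, b_i)_{i=1}^{l}$ i.i.d.\ and defining $\tilde{v}(x)_i = \sqrt{2/l}\,\cos(\omega_i^\top x + b_i)$ gives $\E[\ip{\tilde{v}(x)}{\tilde{v}(y)}] = k(x-y)$. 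The summands $2\cos(\omega_i^\top x + b_i)\cos(\omega_i^\top y + b_i)$ lie in $[-2, 2]$, so Hoeffding's inequality (Lemma \ref{fast:lem:hoeffding}, after rescaling) yields
\begin{equation*}
\Pr\bigl[\,|\ip{\tilde{v}(x)}{\tilde{v}(y)} - k(x-y)| \geq \varepsilon\,\bigr] \leq 2e^{-\Omega(l\varepsilon^2)}.
\end{equation*}

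Next I define $v(x) = \tilde{v}(x)/\norm{\tilde{v}(x)}_2$ so that $v \colon \real^d \to \sphere{l}$ (handling $\tilde{v}(x) = 0$ by sending it to an arbitrary fixed point of $\sphere{l}$; this event has vanishing probability). To control the denominators, apply the same Fourier identity with $y = x$: $\E[\norm{\tilde{v}(x)}_2^2] = k(0) = 1$, and since each coordinate-squared summand lies in $[0,2]$, Hoeffding gives $\Pr[\,|\norm{\tilde{v}(x)}_2^2 - 1| \geq \varepsilon\,] \leq 2e^{-\Omega(l\varepsilon^2)}$, and similarly for $y$. A union bound over the three events (numerator deviating, and each of the two norms deviating, all by $\varepsilon/4$ say) and a routine algebraic manipulation using $|k(x-y)| \leq 1$ to turn ``$\ip{\tilde{v}(x)}{\tilde{v}(y)}$ close and $\norm{\tilde{v}(x)},\norm{\tilde{v}(y)}$ close to $1$'' into ``$\ip{v(x)}{v(y)}$ close to $k(x-y)$'' yields the claimed tail bound with a possibly worse constant in $\Omega(l\varepsilon^2)$.

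The only non-routine step is the last algebraic conversion from the unnormalised to the normalised inner product; the main obstacle is making sure the constants in the exponent stay linear in $l\varepsilon^2$ (rather than degenerating to $l\varepsilon^4$) when one writes $\ip{v(x)}{v(y)} - k(x-y) = \ip{\tilde{v}(x)}{\tilde{v}(y)}/(\norm{\tilde{v}(x)}\norm{\tilde{v}(y)}) - k(x-y)$ and expands, but bounding $\ip{\tilde{v}(x)}{\tilde{v}(y)}$ by $1 + \varepsilon$ in absolute value and $\norm{\tilde{v}(x)}\norm{\tilde{v}(y)} \geq 1 - O(\varepsilon)$ is enough to keep the deviation $O(\varepsilon)$. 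Finally, rescaling $\varepsilon$ by a constant absorbs the loss and produces the lemma as stated.
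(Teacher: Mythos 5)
Your proposal is correct and matches the paper's own proof (Lemma \ref{lem:rahimirecht2} in the appendix): both construct the Rahimi--Recht cosine features $\sqrt{2/l}\cos(\ip{\omega}{x}+b)$ via Bochner's theorem, apply Hoeffding with the same $[-2,2]$ boundedness to the cross inner product and to the two squared norms (using $k(x,x)=1$), take a union bound over these three events, and then normalize onto $\sphere{l}$ with a constant-factor rescaling of $\varepsilon$ absorbed into the $\Omega(l\varepsilon^2)$ exponent.
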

Theorem \ref{thm:characteristiclsf} in Appendix \ref{app:characteristic} shows that Theorem \ref{thm:spherevanilla} holds
with the space $(\sphere{d}, \ip{\cdot}{\cdot})$ replaced by $(\real^d, k)$.
\subsection{Tradeoffs in $\ell_s^d$-space} \label{sec:ls}
Consider the $(r, cr)$-near neighbor problem in $\ell_s^d$ for $0 < s \leq 2$. 
We solve this problem by first applying the approximate feature map from Lemma \ref{lem:rahimirecht} for the characteristic function 
of a standard $s$-stable distribution~\cite{zolotarev1986}, mapping the data onto the unit sphere, 
and then applying our solution from Theorem \ref{thm:spherevanilla} to solve the appropriate $(\alpha, \beta)$-similarity problem on the unit sphere.  
The characteristic functions of $s$-stable distributions take the following form:
\begin{lemma}[{Lévy \cite{levy1925}}]\label{lem:levy}
	For every positive integer $d$ and $0 < s \leq 2$ there exists a characteristic function 
	$k \colon \real^d \times \real^d \to [0,1]$ of the form
	\begin{equation*}
	k(x, y) = k(x-y) = e^{-\norm{x-y}_{s}^{s}}.
	\end{equation*}
\end{lemma}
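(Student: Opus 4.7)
The plan is to construct the required distribution by tensoring one-dimensional symmetric $s$-stable distributions, and then read off the characteristic function in $\real^d$ by independence.

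First, I would invoke the classical one-dimensional result of Lévy: for every $s \in (0, 2]$ there exists a probability distribution on $\real$, the standard symmetric $s$-stable law, whose characteristic function is
\begin{equation*}
    \varphi_s(t) = \E[e^{itX}] = e^{-|t|^s}, \qquad t \in \real.
\end{equation*}
For $s = 2$ this is the Gaussian with variance $2$; for $s = 1$ it is the standard Cauchy law; for other $s \in (0, 2)$ existence follows because $t \mapsto e^{-|t|^s}$ is positive definite on $\real$, so Bochner's theorem supplies a probability measure with that characteristic function. This is the only non-routine ingredient and is the main potential obstacle; however, it is a well-known result that I would cite rather than reprove.

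Next, let $X_1, \dots, X_d$ be independent copies of this one-dimensional law and set $X = (X_1, \dots, X_d)$. By independence and the one-dimensional identity, for every $t \in \real^d$,
\begin{equation*}
    \E[e^{i \ip{t}{X}}] = \prod_{j=1}^{d} \E[e^{i t_j X_j}] = \prod_{j=1}^{d} e^{-|t_j|^s} = e^{-\sum_{j=1}^{d} |t_j|^s} = e^{-\norm{t}_s^s}.
\end{equation*}
Thus the function $\varphi(t) = e^{-\norm{t}_s^s}$ is a characteristic function on $\real^d$.

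Finally, I would define $k(u) = e^{-\norm{u}_s^s}$ and set $k(x, y) = k(x - y)$; this is real-valued and depends only on the difference $x - y$ as required. Since $\norm{u}_s^s \geq 0$ we have $k(u) \in (0, 1]$, so the range is contained in $[0, 1]$, completing the verification. The overall argument is short once the one-dimensional existence statement is granted; the product structure of the $\ell_s^s$ quasi-norm is exactly what makes the tensorization work.
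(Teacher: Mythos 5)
Your proposal is correct and matches the route the paper takes: the paper states this lemma as a classical fact cited to Lévy and does not reprove it, relying implicitly on exactly the two ingredients you use, namely the one-dimensional symmetric $s$-stable characteristic function $e^{-|t|^s}$ and the product property of characteristic functions for independent coordinates (stated in the paper as part of Lemma~\ref{lem:characteristicproperties}), so your tensorization argument is the standard filling-in of that citation. One small caution: your justification of the one-dimensional case ("$e^{-|t|^s}$ is positive definite, so Bochner's theorem applies") is circular as phrased, since positive definiteness is precisely what must be established, and for $1 < s \leq 2$ it does not follow from easy criteria such as Pólya's (the function is not convex there) --- but since you defer to the classical result of Lévy rather than reprove it, the argument stands as a proof modulo that citation, which is exactly the paper's position.
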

A result by Chambers et al. \cite{chambers1976} shows how to sample efficiently from an $s$-stable distributions. 

To sketch the proof of Theorem \ref{thm:lsvanilla} we proceed by upper bounding the exponents 
$\rho_q$, $\rho_u$ from Theorem \ref{thm:spherevanilla} when applying Lemma \ref{lem:rahimirecht} 
to get $\alpha \geq e^{-r^s} - \varepsilon$ and $\beta \leq e^{-c^s r^s} - \varepsilon$.
We make use of the following standard fact (see e.g. \cite{savage1962}) that can be derived from the 
Taylor expansion of the exponential function: for $x \geq 0$ it holds that $1 - x \leq e^{-x} \leq 1 - x + x^{2}/2$.
Scaling the data points such that $r^s = o(1)$ and inserting the above values of $\alpha \approx 1 - r^s$ and $\beta \approx 1 - c^s r^s$ 
into the expressions for $\rho_q$, $\rho_u$ in Lemma \ref{lem:gaussianlsf} 
we can set parameters $t$ and $l$ such that Theorem \ref{thm:lsvanilla} holds. 
\section{Lower bounds}
We begin by stating the lower bound on the LSH exponent $\rho = \log(1/p) / \log(1/q)$ by O'Donnell et al.~\cite{odonnell2014optimal}.
\begin{theorem}[O'Donnell et al. \cite{odonnell2014optimal}]\label{thm:odlower}
	Fix $d \in \mathbb{N}$, $1 < c < \infty$, $0 < s < \infty$ and $0 < q < 1$. 
	Then for a certain choice of $r = \omega_{d}(1)$ and under the assumption that $q \geq 2^{-o(d)}$ 
	we have that every $(r, cr, p, q)$-sensitive family of hash functions for $\ell_s^d$ 
	must satisfy 
	\begin{equation*}
		\rho = \frac{\log(1/p)}{\log(1/q)} \geq \frac{1}{c^{s}} - o_{d}(1).
	\end{equation*}	
\end{theorem}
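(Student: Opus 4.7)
My plan is to reduce the statement to the Boolean hypercube under Hamming distance and then invoke hypercontractivity and Fourier analysis on the cube. The natural inclusion $\bitcube{d} \hookrightarrow \ell_s^d$ sends Hamming distance $R$ to $\ell_s$-distance $R^{1/s}$, so any $(r, cr, p, q)$-sensitive family for $\ell_s^d$ restricts to an $(r^s, c^s r^s, p, q)$-sensitive family on $d$-dimensional Hamming space. Hence it suffices to prove $\rho \geq 1/C - o_d(1)$ in Hamming space with approximation factor $C$, and then apply this with $C = c^s$.

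Now I switch to the $\pm 1$ representation under the uniform measure. For a hash function $h \colon \cube{d} \to R$ and bucket label $v$, let $f_{h,v}(x) = \mathds{1}[h(x)=v]$; these are $\{0,1\}$-valued and satisfy $\sum_v \E[f_{h,v}] = 1$. Writing $T_\sigma$ for the standard noise operator, a $\sigma^2$-correlated pair $(x,y)$ satisfies
\begin{equation*}
    \Pr_{(x,y)}[h(x) = h(y)] = \sum_v \ip{f_{h,v}}{T_{\sigma^2} f_{h,v}} = \sum_v \norm{T_\sigma f_{h,v}}_2^2 =: K_h(\sigma^2).
\end{equation*}
I pick $r$ so that Hamming distance $r$ (resp.\ $Cr$) corresponds to correlation $\alpha = 1 - 2r/d$ (resp.\ $\beta = 1 - 2Cr/d$), with both $\alpha,\beta$ tending to $1$ slowly enough that $\log(1/\alpha)/\log(1/\beta) = 1/C + o_d(1)$. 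Standard concentration (a random $\alpha$-correlated pair has Hamming distance $(1-\alpha)d/2 \pm O(\sqrt{d})$) lets me pass between ``distance $\leq r$'' and ``$\alpha$-correlated,'' so $p \leq \E_h K_h(\alpha)(1+o_d(1))$ and $q \geq \E_h K_h(\beta)(1 - o_d(1))$.

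The core step is a per-hash inequality $K_h(\alpha) \leq K_h(\beta)^{\log(1/\alpha)/\log(1/\beta)} (1+o_d(1))$, derived from the Bonami--Beckner hypercontractive inequality $\norm{T_\sigma f}_2 \leq \norm{f}_{1+\sigma^2}$ for $f \geq 0$. Applied to each indicator $f_{h,v}$ with mass $\mu_{h,v} = \E[f_{h,v}]$ it gives $\norm{T_\sigma f_{h,v}}_2^2 \leq \mu_{h,v}^{2/(1+\sigma^2)}$, so that $K_h(\sigma^2)$ is controlled by the moment sum $\sum_v \mu_{h,v}^{2/(1+\sigma^2)}$; H\"older's inequality with matched exponents, together with the constraint $\sum_v \mu_{h,v} = 1$, converts the ratio of these moment sums at $\sigma^2 = \alpha$ and $\sigma^2 = \beta$ into the claimed exponent. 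Averaging over $h$ and applying Jensen on the concave map $t \mapsto t^{\log(1/\alpha)/\log(1/\beta)}$ yields $p \leq q^{\log(1/\alpha)/\log(1/\beta)} (1+o_d(1))$, and taking logarithms then using the choice of $r$ completes the proof.

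The main obstacle is controlling the $o_d(1)$ slack. Hypercontractivity degrades sharply for indicators of exponentially small sets, which is exactly why the hypothesis $q \geq 2^{-o(d)}$ is imposed: together with $r = \omega_d(1)$ it guarantees that the multiplicative slop in the hypercontractive step is negligible on the scale of $\log(1/q)$. A related subtlety is the H\"older bookkeeping, since the single-hash inequality is tight only for a specific distribution of bucket masses $\mu_{h,v}$, and one must check that every other distribution only strengthens the bound.
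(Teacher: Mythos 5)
The step that fails is the central per-hash inequality. Hypercontractivity only gives \emph{upper} bounds: from $\norm{T_\sigma f_{h,v}}_2^2 \leq \mu_{h,v}^{2/(1+\sigma^2)}$ you can bound both $K_h(\alpha)$ and $K_h(\beta)$ from above by bucket-mass moment sums, but to prove $K_h(\alpha) \leq K_h(\beta)^{\log(1/\alpha)/\log(1/\beta)}$ you need a \emph{lower} bound on $K_h(\beta)$ in terms of those same moments, and Bonami--Beckner points the wrong way. The only generic lower bound available is the degree-zero truncation $K_h(\beta) \geq \sum_v \mu_{h,v}^2$, and combining it with the hypercontractive upper bound at correlation $\alpha$ (via the interpolation $\sum_v \mu_{h,v}^{2/(1+\alpha)} \leq (\sum_v \mu_{h,v}^2)^{(1-\alpha)/(1+\alpha)}$ that your H\"older step actually yields under $\sum_v \mu_{h,v}=1$) gives $\rho \gtrsim \frac{1-\alpha}{1+\alpha}$, i.e.\ the Motwani--Naor--Panigrahy-type bound. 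In the regime of this theorem ($r=\omega_d(1)$ but $r=o(d)$, so $\alpha,\beta \to 1$) that bound is $O(r/d)=o_d(1)$ and hence vacuous; it does not give $1/c^s - o_d(1)$. So the hypercontractive route, as sketched, cannot produce the claimed exponent, and the "H\"older bookkeeping'' you flag as a subtlety is in fact where the argument breaks.

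The inequality you want is true, but the tool is different and simpler: writing $\K_{\LSH}(t)$ for the collision probability of an $e^{-t}$-correlated pair, the Fourier expansion gives $\K_{\LSH}(t)=\E_h\sum_v\sum_S e^{-t|S|}\hat{f}_{h,v}(S)^2$, a nonnegative mixture of exponentials $e^{-tk}$, hence a log-convex function of $t$ with $\K_{\LSH}(0)=1$. Convexity of $\log \K_{\LSH}$ through the point $t=0$ immediately yields $\log(1/\K_{\LSH}(t))/\log(1/\K_{\LSH}(ct)) \geq 1/c$, which is exactly the exponent $\log(1/\alpha)/\log(1/\beta)$ you need; this is the O'Donnell--Wu--Zhou argument, and it is how the paper proves its simplified version (Lemma~\ref{lem:owzsimple} via Lemma~\ref{lem:logconvexity}). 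Your surrounding architecture is right and matches that proof: the embedding of $\cube{d}$ into $\ell_s^d$ turning approximation $c$ into $c^s$, the Chernoff-type passage between ``distance at most $r$'' and ``$\alpha$-correlated'' giving $\K_{\LSH}(t_{p}) \geq p - \delta$ and $\K_{\LSH}(t_{q}) \leq q + \delta$, and the final transfer to $\rho$. One correction on the hypothesis: $q \geq 2^{-o(d)}$ is not there to tame hypercontractive slack; it is needed so that the \emph{additive} Chernoff error $\delta = e^{-\Omega(\mathrm{poly}(d))}$ is negligible relative to $q$ (and $p$) when you replace $p,q$ by $\K_{\LSH}$ values in the logarithms.
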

The following lemma shows how to use a filter family $\LSF$ to construct a hash family $\LSH$.
\begin{lemma}\label{lem:lsftolsh}
	Given a symmetric family of filters that is $(r, cr, p_1, p_2, p_q, p_u)$-sensitive in $(X, \dist)$
	we can construct a $(r, cr, p_{1}/(2p_{q}), p_{2}/p_{q})$-sensitive family of hash functions in $(X, \dist)$. 
\end{lemma}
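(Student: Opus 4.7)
The plan is to construct $\LSH$ by geometric repetition of samples from the symmetric filter family. Since $\LSF$ is symmetric, let $F$ denote the common value of $Q$ and $U$ for a random filter from $\LSF$, and draw an i.i.d.\ sequence $F_1, F_2, \dots \sim \LSF$. Define the hash function by $h(x) = \min\{i : x \in F_i\}$, assigning an independent fresh tag on the (vanishing-probability) event that no sampled filter contains $x$.

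Next I would compute the collision probability of $h$ on a fixed pair $x, y \in X$. Writing $p(x,y) = \Pr[x \in F \wedge y \in F]$ and $q(x,y) = \Pr[x \in F \vee y \in F]$, the event $\{h(x) = h(y) = i\}$ requires that $F_i$ contains both points and that no earlier filter contains either. Independence across repetitions then yields
\begin{equation*}
\Pr[h(x) = h(y)] \;=\; \sum_{i \geq 1}(1 - q(x,y))^{i-1}\, p(x,y) \;=\; \frac{p(x,y)}{q(x,y)}.
\end{equation*}

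The near-pair bound is then immediate from the union bound $q(x,y) \leq \Pr[x \in F] + \Pr[y \in F] \leq 2 p_q$: for $\dist(x, y) \leq r$ we have $p(x,y) \geq p_1$, so $\Pr[h(x) = h(y)] \geq p_1 / (2 p_q)$, matching the claim.

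The far-pair bound is where the main work lies, since the claim $\Pr[h(x) = h(y)] \leq p_2 / p_q$ requires $q(x, y) \geq p_q$, whereas Definition~\ref{def:lsf} only provides the upper bound $\Pr[x \in F] \leq p_q$. I would handle this by first reducing to a regular version of the filter family in which $\Pr[x \in F] = p_q$ exactly for every $x \in X$, which can be arranged by a standard padding argument without weakening the other sensitivity parameters. Granted regularity, inclusion-exclusion gives $q(x, y) = 2 p_q - p(x,y)$, and since $p(x,y) \leq \Pr[x \in F] = p_q$ always, we may assume $p_2 \leq p_q$ without loss of generality; then $q(x, y) \geq 2 p_q - p_2 \geq p_q$ for far pairs, yielding $\Pr[h(x) = h(y)] \leq p_2 / p_q$ as required.
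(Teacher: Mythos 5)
Your construction is exactly the paper's: sample an i.i.d.\ sequence of filters, let $h(x)$ be the index of the first filter containing $x$, and compute $\Pr[h(x)=h(y)] = \Pr[x\in F\land y\in F]/\Pr[x\in F\lor y\in F]$; the near-pair bound via the union bound $\Pr[x\in F\lor y\in F]\le 2p_q$ is also the intended one. You are right to flag that the far-pair bound needs a lower bound $\Pr[x\in F\lor y\in F]\ge p_q$ which Definition~\ref{def:lsf} does not literally supply (it only gives $\Pr[x\in Q]\le p_q$); the paper's proof is silent on this and in effect reads $p_q$ as the actual containment probability of every point, i.e.\ the same regularity assumption it makes explicit in Theorem~\ref{thm:lower2}.

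The gap is your claim that regularity ``can be arranged by a standard padding argument without weakening the other sensitivity parameters.'' Raising $\Pr[x\in F]$ to $p_q$ means putting $x$ into more filters, and the natural padding---independently inserting each point with its deficit probability---preserves $p_1$ but can destroy the $p_2$ bound: if a far pair $(x,y)$ has marginals $p_q/2$ and joint probability $p_2 \ll p_q^2$, the independent padding by itself creates a co-occurrence probability of order $p_q^2/4 \gg p_2$, so the resulting hash family's far-pair collision probability can be of order $p_q$ rather than $p_2/p_q$. Paddings designed to avoid new far co-occurrences (e.g.\ mixing in filters of diameter at most $cr$, or singletons) instead dilute the near-pair probability by the mixture weight, which must be comparable to $p_q$ when the original marginals are far below it, so they weaken $p_1$ instead. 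In short, regularity cannot simply be manufactured for free; the honest fix is to read the lemma (as it is used in Theorem~\ref{thm:lower1}) under the convention that $p_q$ is the exact containment probability, or to add a regularity hypothesis, rather than to derive it. Granting that convention, your argument coincides with the paper's and is complete---indeed more explicit than the paper's one-line ``the result follows from the properties of $\LSF$.''
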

\begin{proof}
	Given the filter family $\LSF$ we sample a random function $h$ from the hash family $\LSH$  
	taking an infinite sequence of independently sampled filters $(F_i)_{i=0}^{\infty}$ from~$\LSF$ 
	and setting $h(x) = \min \, \{i \mid x \in F_i\}$.
	The probability of collision is given by
	\begin{equation*}	
	\Pr_{h \sim \LSH}[h(x) = h(y)] = \frac{\Pr_{F \sim \LSF}[x \in F \land y \in F]}{\Pr_{F \sim \LSF}[x \in F \lor y \in F]}
	\end{equation*}
	and the result follows from the properties of $\LSF$.
\end{proof}
If the LSH family in Lemma \ref{lem:lsftolsh} had $p = p_{1}/p_{q}$ and $q = p_{2}/p_{q}$ then the lower bound would follow immediately.
We apply the powering technique from Lemma \ref{lem:powering} to the underlying filter family 
in order make the factor $2$ in $p_{1}/(2p_{q})$ disappear in the statement of $\rho$ as $d$ tends to infinity.
\begin{customthm}{1.4} \label{thm:lower1}
	Every symmetric $(r, cr, p_1, p_2, p_q, p_u)$-sensitive filter family $\LSF$ for $\ell_s^d$ must satisfy the lower bound
	of Theorem \ref{thm:odlower} with $p = p_{1}/p_{q}$ and $q = p_{2}/p_{q}$.
\end{customthm}
\begin{proof}
	Given a family $\LSF$ that satisfies the requirements from Theorem \ref{thm:odlower} there exists an integer $\kappa = \omega_{d}(1)$
	such the hash family $\LSH$ that results from applying Lemma \ref{lem:lsftolsh} to the powered family $\LSF^{\kappa}$ 
	also satisfies the requirements from Theorem \ref{thm:odlower}.
	The constructed family $\LSH$ is $(r, cr, p, q)$-sensitive for $p = (1/2) \cdot (p_{1}/p_{q})^{\kappa}$ and $q = (p_{2}/p_{q})^{\kappa}$.
	By our choice of $\kappa$ we have that $\log(1/p)/\log(1/q) = \log(p_{q}/p_{1})/\log(p_{q}/p_{2}) + o_{d}(1)$ 
	and the lower bound on $\log(1/p)/\log(1/q)$ from Theorem \ref{thm:odlower} applies.
\end{proof}
\subsection{Asymmetric lower bound}
The lower bound is based on an isoperimetric-type inequality that holds for randomly correlated points in Hamming space.
We say that the pair of points $(x, y)$ is $\alpha$-correlated if $x$ is a random point in $\cube{d}$ 
and $y$ is formed by taking $x$ and independently flipping each bit with probability $(1-\alpha)/2$.    
We are now ready to state O'Donnell's generalized small-set expansion theorem.
Notice the similarity to the value of $p_1$ for the Gaussian filter family described in 
Section~\ref{sec:gaussianlsf} and Appendix~\ref{app:gaussian}.
\begin{lemma}[{\cite[p. 285]{odonnell2014analysis}}] \label{lem:odhyper}
	For every \mbox{$0 \leq \alpha < 1$}, $-1 \leq \lambda \leq 1$, and \mbox{$Q, U \subseteq \cube{d}$} 
	satisfying that \mbox{$|Q|/2^{d} = (|U|/2^{d})^{\alpha^{2\lambda}}$} we have 
	\begin{equation*}
		\Pr_{\corrsub{\alpha}}[x \in Q, y \in U] \leq (|U|/2^{d})^\frac{1 + \alpha^{2\lambda} - 2 \alpha^{1 + \lambda}}{1 - \alpha^{2}}.
	\end{equation*}
\end{lemma}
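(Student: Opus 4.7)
The plan is to convert the probability into a noise-operator inner product and then invoke the Hypercontractive Inequality (Chapters 9--10 of O'Donnell's book), followed by an elementary one-variable optimization. First, I would rewrite the left-hand side in Fourier-analytic language. Let $f = \1_Q$ and $g = \1_U$, viewed as functions from $\cube{d}$ to $\{0,1\}$, and let $T_\alpha$ denote the noise operator with correlation parameter $\alpha$. Because the distribution on $(x,y)$ in Definition \ref{intro:def:correlation} is exactly that of a uniform $x$ together with $y$ obtained by independently resampling each coordinate with probability $(1-\alpha)/2$, we have
\begin{equation*}
	\Pr_{\corrsub{\alpha}}[x \in Q, y \in U] = \E[f(x) g(y)] = \ip{f}{T_\alpha g}.
\end{equation*}

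Next I would apply Hölder's inequality with a pair of conjugate exponents $p, p'$ satisfying $1/p + 1/p' = 1$, giving $\ip{f}{T_\alpha g} \leq \norm{f}_p \norm{T_\alpha g}_{p'}$. Then I would invoke the Hypercontractive Theorem: for any $1 \leq q \leq p'$ with $\alpha \leq \sqrt{(q-1)/(p'-1)}$, one has $\norm{T_\alpha g}_{p'} \leq \norm{g}_q$. Because $f$ and $g$ are $\{0,1\}$-valued, their $L^r$ norms collapse to $\norm{f}_r = \mu_Q^{1/r}$ and $\norm{g}_r = \mu_U^{1/r}$, where $\mu_Q = |Q|/2^{d}$ and $\mu_U = |U|/2^{d}$. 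Substituting the hypothesis $\mu_Q = \mu_U^{\alpha^{2\lambda}}$ yields
\begin{equation*}
	\Pr_{\corrsub{\alpha}}[x \in Q, y \in U] \leq \mu_U^{\alpha^{2\lambda}/p + 1/q}.
\end{equation*}

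The remaining task is to pick $p, q$ so as to maximize the exponent of $\mu_U$ (since $\mu_U \leq 1$, a larger exponent gives a stronger bound). Saturating the hypercontractivity constraint by setting $q - 1 = \alpha^{2}(p' - 1)$ and writing $u = p - 1$, the exponent becomes
\begin{equation*}
	\frac{\alpha^{2\lambda}}{1 + u} + \frac{u}{u + \alpha^{2}}.
\end{equation*}
Differentiating in $u$ and equating to zero gives the unique positive critical point $u^{\ast} = \alpha(1 - \alpha^{1+\lambda})/(\alpha^{\lambda} - \alpha)$, at which a short algebraic simplification $(1 + u^{\ast}) = \alpha^{\lambda}(1-\alpha^{2})/(\alpha^{\lambda}-\alpha)$ and $(u^{\ast} + \alpha^{2}) = \alpha(1-\alpha^{2})/(\alpha^{\lambda}-\alpha)$ recovers the claimed exponent $(1 + \alpha^{2\lambda} - 2\alpha^{1+\lambda})/(1 - \alpha^{2})$.

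The main obstacle is really just the ansatz: one has to guess that the hypercontractivity constraint should be saturated with equality and that the Hölder exponents should be coupled to $\lambda$ in a way that produces exactly $\alpha^{2\lambda}$ in the numerator of the final exponent. Once that ansatz is in place, everything reduces to careful bookkeeping of conjugate exponents and single-variable calculus; the boundary cases $\lambda = \pm 1$ recover the classical (symmetric) small-set expansion bound and provide a useful consistency check on the algebra.
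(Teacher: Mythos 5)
Your proof is correct: the critical point $u^{\ast} = \alpha(1-\alpha^{1+\lambda})/(\alpha^{\lambda}-\alpha)$ and the resulting exponent $(1+\alpha^{2\lambda}-2\alpha^{1+\lambda})/(1-\alpha^{2})$ check out, the implied exponents satisfy $1 \leq q \leq p'$ for $\lambda \in (-1,1)$, and the endpoint cases $\lambda = \pm 1$ (and $\alpha = 0$) are trivially true. The paper itself offers no proof of this lemma --- it is quoted verbatim from O'Donnell's book --- and your argument is essentially the standard derivation given there: two-function hypercontractivity (Hölder combined with the hypercontractive inequality, with the noise constraint saturated), followed by optimizing the conjugate exponents.
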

The argument for the lower bound assumes a regular $(r, cr, p_1, p_2, p_q, p_u)$-sensitive filter family $\LSF$ for Hamming space
where we set $r = (1-\alpha)d/2$ and \mbox{$cr = (1-\beta)d/2$} for some choice of $0 < \beta < \alpha < 1$.
We then proceed by deriving constraints on $p_1$, $p_2$, $p_q$, $p_u$, and minimize $\rho_q$ and $\rho_u$ subject to those constrains.
The proof of Theorem \ref{thm:lower2} is provided in Appendix~\ref{app:lowertradeoffproof}.
\begin{customthm}{1.5} \label{thm:lower2}
Fix $0 < \beta < \alpha < 1$. 
Then for every regular $((1-\alpha)d/2, (1-\beta)d/2, p_1, p_2, p_q, p_u)$-sensitive filter family in $d$-dimensional Hamming space 
with and $|Q|/2^d = (|U|/2^d)^{\alpha^{2\lambda}}$ where $\lambda$ satisfies 
$\alpha + 2\sqrt{\ln(d) / d} \leq \alpha^{\lambda} \leq 1/(\alpha - 2\sqrt{\ln(d)/d})$ it must hold that
\begin{align*}
	\rho_q &= \frac{\log(p_q / p_1)}{\log(p_q / p_2)} \geq \frac{(1-\alpha^{1+\lambda})^{2}}{1-\alpha^{2}} - o_{d}(1), \\ 
	\rho_u &= \frac{\log(p_u / p_1)}{\log(p_q / p_2)} \geq \frac{(\alpha^{\lambda} - \alpha)^{2}}{1-\alpha^{2}} - o_{d}(1) 
\end{align*}
when $p_q$ is set to minimize $\rho_q$ and we assume that $|U|/2^{d} \geq 2^{-o_{d}(1)}$.
\end{customthm}
\section{Open problems}
An important open problem is to find simple and practical data-dependent solutions to the $(r, cr)$-near neighbor problem.
Current solutions, the Gaussian filters in this paper included, suffer from $o(1)$ terms in the exponents that decrease very slowly in $n$.
A lower bound for the unit sphere by Andoni et al. \cite{andoni2015practical} indicates that this might be unavoidable.

Another interesting open problem is finding the shape of provably exactly optimal filters in different spaces.
In the random data setting in Hamming space, this problem boils down to maximizing the number of pairs of points 
below a certain distance threshold that is contained in a subset of the space of a certain size.
This is a fundamental problem in combinatorics that has been studied by among others \cite{kahn1988}, but a complete answer remains elusive.
The LSH and LSF lower bounds~\cite{motwani2007, odonnell2014optimal, andoni2016tight}, 
along with classical isoperimetric inequalities such as Harper's Theorem 
and more recent work summarized in the book by O'Donnell \cite{odonnell2014analysis} hints that the answer is somewhere between a subcube and a generalized sphere.

A recent result by Chierichetti and Kumar \cite{chierichetti2015} characterizes the set of 
transformations of LSH-able similarity measures as the set of probability-generating functions. 
This seems to have deep connections to result of this paper that uses characteristic functions that allow well-known kernel transformations. 
It seems possible that this paper can be viewed as a semi-explicit construction of their result, 
or that their result can be described as an application of Bochner's Theorem.  
\section*{Acknowledgment}
I would like to thank Rasmus Pagh for suggesting the application of Rahimi \& Recht's result \cite{rahimi2007} 
and the MinHash-like~\cite{broder1998} connection between LSF and LSH used in Theorem \ref{thm:lower1}. 
I would also like to thank Gregory Valiant and Udi Wieder for useful discussions about locality-sensitive filtering 
and the analysis of boolean functions during my stay at Stanford.
Finally, I would like to thank the Scalable Similarity Search group at the IT University of Copenhagen for feedback during the writing process, 
and in particular Martin Aumüller for pointing out the importance of a general framework for locality-sensitive filtering with space-time tradeoffs. 
\section{Appendix: Framework} \label{app:framework}
We state a version of Theorem \ref{thm:lsfvanilla} where the parameters of the filter family are allowed to depend on $n$.
\begin{customthm}{\ref{thm:lsfvanilla}} \label{thm:lsf}
	Suppose we have access to a filter family that is $(r, cr, p_1, p_2, p_q, p_u)$-sensitive.
	Then we can construct a fully dynamic data structure that solves the $(r, cr)$-near neighbor problem. 
	Assume that $1/p_{1}$, $1 / \log(p_{q}/p_{2})$, and $\exp(\log(1/p_{1})/\log(\min(p_{q},p_{u})/p_{1}))$ are $n^{o(1)}$, then the data structure has
	\begin{itemize}
		\item[--] query time $dn^{\rho_q + o(1)}$,
		\item[--] update time $n^{\rho_u + o(1)} + dn^{o(1)}$, 
		\item[--] space usage $n^{1 + \rho_u + o(1)} + dn + dn^{o(1)}$
	\end{itemize}
	where
	\begin{equation*}
		\rho_q = \frac{\log p_{q} / p_{1} }{\log p_{q} / p_{2} }, \quad  
		\rho_u = \frac{\log p_{u} / p_{1} }{\log p_{q} / p_{2} }. 
	\end{equation*}
\end{customthm}

To prove Theorem \ref{thm:lsf}, we begin by setting the parameters mentioned in the description of 
the LSF data structure in Section \ref{sec:lsfds}. 
\begin{align*}
\kappa_{1} &= \left \lceil \frac{ \min(\rho_q, \rho_u) \log n}{\log(1/p_{1})} \right \rceil \\
\tau  &= \left \lfloor \frac{\log n}{\kappa_1 \log(p_{q}/p_{2})} \right \rfloor \leq \frac{\log(1/p_1)}{\log(\min(p_{q}, p_{u})/p_{1})} \\
m_{1} &= \lceil \tau/p_{1}^{\kappa_{1}}  \rceil \\
\kappa_{2} &= \max(0, \lceil \log(n) / \log(p_{q}/p_{2}) \rceil - \tau \kappa_1) \\
m_{2} &= \lceil 1/p_{1}^{\kappa_{2}} \rceil
\end{align*}
We will now briefly explain the reasoning behind the parameter settings.
Begin by observing that the powering and tensoring techniques both amplify the filters from $\LSF$. 
Let $m = \binom{m_{1}}{\tau} \cdot m_{2}$ denote the number of simulated filters in our collection $\F$ 
and let $a = \tau\kappa_1 + \kappa_2$ be an integer denoting the number of times each filter has been amplified.
Ignoring the time it takes to evaluate the filters, the query time is determined by the sum of the number of filters that contain a query point 
and the number of distant points associated with those filters that the query algorithm inspects.
The expected number of activated filters is given by $mp_{q}^{a}$ while the worst case expected number of distant points to be inspected by the query algorithm is given by $nmp_{2}^{a}$.
Balancing the contribution to the query time from these two effects (ignoring the $O(d)$ factor from distance computations) 
results in a target value of $a = \lceil \log(n) / \log(p_{q}/p_{2}) \rceil$.
Compared to having an oracle that is able to list the filters from a collection that contains a point, 
there is a small loss in efficiency from using the tensoring technique due to the increase in the number of filters required to guarantee correctness.
The parameters of the LSF data structure are therefore set to minimize the use of tensoring such that
the time spent evaluating our collection of filters roughly matches the minimum of the query and update time.

Consider the initialization operation of the LSF data structure with the parameters setting from above.
We have that $\kappa_2 \leq \kappa_1$ implying that $m_2 = O(m_1)$. 
The initialization time and the space usage of the data structure prior to any insertions 
is dominated by the time and space used to sample and store the filters in $\F_1$.
By the assumption that a filter from $\LSF$ can be sampled in $O(d)$ operations and stored using $O(d)$ words, 
we get a space and time bound on the initialization operation of
\begin{equation*}
	O(d \kappa_1 m_1) = O\left(dn^{\min(\rho_q, \rho_u)} \frac{p_{1} \log(n)}{\log(p_{q}/p_{2})}\right).
\end{equation*}
Importantly, this bound also holds for the running time of the filter evaluation algorithm, that is, 
the preprocessing time required for constant time generation of the next element in the list of filters in $\F$ containing a point.
In the following analysis of the update and query time we will temporarily ignore the running time of the filter evaluation algorithm.

The expected time to insert or delete a point is dominated by the number of update filters in $\F$ that contains it.
The probability that a particular update filter in $\F$ contains a point is given by $p_{u}^{a}$.
Using a standard upper bound on the binomial coefficient we get that $m = O(e^{\tau}/p_{1}^{a})$ resulting in an expected update time of
\begin{equation*}
O(m p_{u}^{a} + d) = O(n^{\rho_{u}} (p_{u}/p_{1}) e^{\tau} + d).
\end{equation*}
In the worst case where every data point is at distance greater than $cr$ from the query point and has collision probablity $p_2$
the expected query time can be upper bounded by
\begin{equation*}
O(mp_{q}^{a} + dnmp_{2}^{a}) = O(n^{\rho_q}e^{\tau}(p_{q}/p_{1} + d)). 
\end{equation*}
With respect to the correctness of the query algorithm, if a near neighbor $y$ to the query point $x$ exists in $P$,
then it is found by the query algorithm if $(x, y)$ is contained in a filter in $\F_{1}^{\otimes \tau}\!$ as well as in a filter in $\F_2$.
By Lemma \ref{lem:tensoring} the first event happens with probability at least $1/2$ and by the choice of $m_2$, 
the second event happens with probability at least $1 - (1-p_{1}^{\kappa_2})^{p_{1}^{\kappa_2}} \geq 1 - 1/e$.
From the independence between $\F_1$ and $\F_2$ we can upper bound the failure probability $\delta \leq (1/2)(1+1/e)$.
This completes the proof of Theorem \ref{thm:lsf}.
\section{Appendix: Gaussian filters} \label{app:gaussian}
In this section we upper and lower bound the probability mass in the tail of the bivariate standard normal distribution
when the correlation between the two standard normals is at most $\beta$ (upper bound) or at least $\alpha$ (lower bound).
We make use of the following upper and lower bounds on the univariate standard normal as well as an upper bound for the multivariate case.
\begin{lemma}[Follows Szarek \& Werner \cite{szarek1999}] \label{lem:univariatebounds}
	Let $Z$ be a standard normal random variable. Then, for every $t \geq 0$ we have that
	\small
	\begin{equation*}
		\frac{1}{\sqrt{2\pi}}\frac{1}{t+1}e^{-t^{2}/2} \leq \Pr[Z \geq t] \leq \frac{1}{\sqrt{\pi}}\frac{1}{t+1}e^{-t^{2}/2}. 
	\end{equation*}
	\normalsize
\end{lemma}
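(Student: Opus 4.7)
The plan is to rewrite both bounds in terms of Mills' ratio $M(t) = e^{t^2/2}\int_t^\infty e^{-x^2/2}\,dx$. Since $\Pr[Z \geq t] = \frac{1}{\sqrt{2\pi}}\int_t^\infty e^{-x^2/2}\,dx$, the stated inequalities are equivalent to
\begin{equation*}
1 \;\leq\; (t+1)M(t) \;\leq\; \sqrt{2} \qquad \text{for every } t \geq 0,
\end{equation*}
and I would analyze the single auxiliary function $\varphi(t) = (t+1)M(t)$. Differentiating under the integral sign gives the standard identity $M'(t) = tM(t) - 1$, from which
\begin{equation*}
\varphi'(t) = M(t) + (t+1)M'(t) = (t^2+t+1)M(t) - (t+1).
\end{equation*}
Consequently, at any interior critical point $t^\star \geq 0$ of $\varphi$ one has $M(t^\star) = (t^\star+1)/((t^\star)^2+t^\star+1)$, and hence
\begin{equation*}
\varphi(t^\star) = \frac{(t^\star+1)^2}{(t^\star)^2+t^\star+1}.
\end{equation*}

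Next I would evaluate $\varphi$ at the two boundaries. Directly, $\varphi(0) = M(0) = \sqrt{\pi/2}$. For the limit at infinity, a single integration by parts applied to $\int_t^\infty e^{-x^2/2}\,dx = \int_t^\infty x^{-1}\cdot(xe^{-x^2/2})\,dx$ yields the classical Mills inequality $M(t) \leq 1/t$ for $t > 0$, and iterating once more refines this to $M(t) = 1/t - O(1/t^3)$, so that $\varphi(t) = (t+1)M(t) \to 1$. Since $\varphi$ is continuous on $[0,\infty)$ with a well-defined limit at infinity, it extends continuously to the compact set $[0,\infty]$ and therefore attains both its maximum and its minimum there; any such extremum occurs either at the endpoint $0$, in the limit at infinity, or at an interior critical point. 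It thus suffices to bound $\varphi$ at these three candidate locations.

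For the upper bound, the rational function $g(t) = (t+1)^2/(t^2+t+1)$ has derivative proportional to $(t+1)(1-t)$, so it increases on $[0,1]$ and decreases on $[1,\infty)$ and is globally bounded by $g(1) = 4/3$. Therefore $\varphi(t^\star) \leq 4/3$ at every critical point, and combined with $\varphi(0) = \sqrt{\pi/2} < 4/3$ and the limit value $1$ I conclude $\varphi(t) \leq 4/3 \leq \sqrt{2}$ for every $t \geq 0$. For the lower bound, the polynomial identity $(t+1)^2 - (t^2+t+1) = t \geq 0$ shows $g(t) \geq 1$ for every $t \geq 0$, so $\varphi(t^\star) \geq 1$ at every critical point, and together with $\varphi(0) = \sqrt{\pi/2} \geq 1$ and the limit value $1$ this yields $\varphi(t) \geq 1$ everywhere.

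The only real obstacle is noticing that both sides fall out of studying the single quantity $(t+1)M(t)$ rather than chasing separate integrating factors for each inequality; once the critical-point identity $\varphi(t^\star) = (t^\star+1)^2/((t^\star)^2+t^\star+1)$ is in hand, the rest reduces to two elementary polynomial comparisons plus a compactness argument.
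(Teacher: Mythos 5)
Your proof is correct: the reduction to $1 \leq (t+1)M(t) \leq \sqrt{2}$ is the right reformulation, the identity $M'(t) = tM(t)-1$ and hence $\varphi'(t) = (t^2+t+1)M(t)-(t+1)$ are computed correctly, the critical-point value $\varphi(t^\star) = (t^\star+1)^2/((t^\star)^2+t^\star+1)$ is right, and the boundary data $\varphi(0)=\sqrt{\pi/2}$ and $\varphi(t)\to 1$ together with the compactness argument legitimately confine the extrema to the three candidate locations, where the elementary bounds $1 \leq g(t) \leq 4/3 \leq \sqrt{2}$ close both inequalities. Note, however, that the thesis does not prove this lemma at all: it is stated as ``Follows Szarek \& Werner'' and used as a black box, the intended derivation being a simple weakening of the sharper Mills-ratio estimates in that reference (their two-sided bounds of the form $2/(t+\sqrt{t^2+4})$ and $4/(3t+\sqrt{t^2+8})$ imply the $1/(t+1)$ and $\sqrt{2}/(t+1)$ bounds by crude comparisons). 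So your argument is a genuinely different route: a short, self-contained calculus proof via the single auxiliary function $(t+1)M(t)$, which buys independence from the cited result at the cost of not recovering the sharper constants that Szarek--Werner provide (indeed your critical-point analysis even shows the upper constant $\sqrt{2}$ can be replaced by $4/3$, which is closer to the true supremum $\approx 1.32$ of $(t+1)M(t)$).
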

\begin{lemma}[Lu \& Li \cite{lu2009}]\label{lem:luli}
Let $z$ be a $d$-dimensional vector of i.i.d. standard normal random variables 
and let $D \subset \real^d$ be a closed convex domain that does not contain the origin.
Let~$\Delta$ denote the Euclidean distance to the unique closest point in $D$, then we have that 
\begin{equation*}
	\Pr[z \in D] \leq e^{-\Delta^{2}/2}.
\end{equation*}
\end{lemma}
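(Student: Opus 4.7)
The plan is to reduce the convex-domain probability to a one-dimensional Gaussian tail bound by enclosing $D$ in a suitable half-space. Let $x^* \in D$ denote the (unique, by strict convexity of the squared Euclidean norm together with convexity of $D$) closest point to the origin, so that $\|x^*\| = \Delta > 0$.

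The first step is to invoke the standard supporting hyperplane characterization of the metric projection onto a closed convex set: for every $y \in D$, the vector $y - x^*$ makes a non-obtuse angle with $x^* - 0$, i.e., $\langle y - x^*, x^* \rangle \geq 0$, which rearranges to $\langle y, x^* \rangle \geq \|x^*\|^2 = \Delta^2$. This shows $D \subseteq H$, where $H = \{y \in \real^d : \langle y, x^* \rangle \geq \Delta^2\}$ is the closed half-space whose bounding hyperplane passes through $x^*$ orthogonal to $x^*$. (I would give a one-line derivation via the first-order optimality condition for the minimizer of $\tfrac{1}{2}\|y\|^2$ over $D$.)

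Next, I would bound $\Pr[z \in D] \leq \Pr[z \in H] = \Pr[\langle z, x^* \rangle \geq \Delta^2]$. Since $z$ has i.i.d.\ standard normal coordinates, the linear functional $\langle z, x^*\rangle$ is a centered Gaussian with variance $\|x^*\|^2 = \Delta^2$, so $\langle z, x^*\rangle / \Delta$ is a standard normal $Z$. Hence
\begin{equation*}
\Pr[z \in D] \leq \Pr[Z \geq \Delta].
\end{equation*}
Finally, I would apply the standard Gaussian tail bound $\Pr[Z \geq t] \leq e^{-t^2/2}$ for $t \geq 0$, which follows either from the Chernoff/moment-generating-function argument $\Pr[Z \geq t] \leq \inf_{s \geq 0} e^{-st} \mathbb{E}[e^{sZ}] = \inf_{s \geq 0} e^{-st + s^2/2} = e^{-t^2/2}$, or directly from the univariate upper bound in Lemma \ref{lem:univariatebounds} by discarding the $1/(\sqrt{\pi}(t+1))$ prefactor (which is at most $1$ for $t \geq 1/\sqrt{\pi} - 1$, with the low-$t$ regime handled separately by the trivial bound $\Pr[Z \geq t] \leq 1$ when $e^{-t^2/2}$ is close to $1$).

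There is no real obstacle here: the only subtle point is the supporting-hyperplane step, which needs the convexity of $D$ in an essential way (without it, the enclosing half-space $H$ need not exist and the bound can fail). The uniqueness of $x^*$ is a convenience for the statement but not strictly needed for the inequality, since any closest point yields the same half-space bound.
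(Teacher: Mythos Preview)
Your proof is correct. The paper does not actually prove this lemma; it simply cites it from Lu and Li~\cite{lu2009} and uses it as a black box in the subsequent tail upper bound (Lemma~\ref{lem:upper}). Your supporting-hyperplane reduction to a one-dimensional Gaussian tail, followed by the Chernoff bound $\Pr[Z \geq t] \leq e^{-t^2/2}$, is the standard and cleanest route.

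One small remark: your parenthetical worry about the prefactor $1/(\sqrt{\pi}(t+1))$ from Lemma~\ref{lem:univariatebounds} is unnecessary, since for every $t \geq 0$ we have $1/(\sqrt{\pi}(t+1)) \leq 1/\sqrt{\pi} < 1$, so that bound already suffices without any case split. The Chernoff argument you give first is in any case the more direct way to get exactly $e^{-t^2/2}$.
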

\begin{lemma}[Tail upper bound] \label{lem:upper}
	For $\alpha, \lambda, t, \beta$ satisfying \mbox{$0 < \alpha < 1$}, $-1 \leq \lambda \leq 1$, $t > 0$, and $-1 < \beta < \alpha$ 
	every pair of standard normal random variables $(X, Y)$ with correlation $\beta' \leq \beta$ satisfies 
\begin{equation*}
	\Pr[X \geq t \land Y \geq \alpha^{\lambda} t] \leq e^{-\Delta^{2}/2}
\end{equation*}
where $\Delta^2 = (1 + \frac{(\alpha^{\lambda} - \beta)^2}{1-\beta^2})t^2$.    
\begin{proof}
For $\beta' = -1$ the result is trivial. 
For values of $\beta'$ in the range \mbox{$-1 < \beta' \leq \beta$} we use the $2$-stability of the normal distribution to analyze a tail bound for 
$(X, Y)$ in terms of a Gaussian projection vector $z = (Z_1, Z_2)$ applied to unit vectors \mbox{$x, y \in \real^2$}.
That is, we can define \mbox{$X = \ip{z}{x}$} and $Y = \ip{z}{x}$ for some appropriate choice of $x$ and $y$.
Without loss of generality we set $x = (1, 0)$ and note that for $\E[XY] = \beta'$ we must have that \mbox{$y = (\beta', \sqrt{1-\beta'^2})$}. 
If we consider the region of $\real^2$ where $z$ satisfies $X \geq t \land Y \geq \alpha^{\lambda} t$ 
we get a closed domain $D$ defined by $z = (Z_1, Z_2)$ such that $Z_1 \geq t$ 
and \mbox{$Z_2 \geq (\alpha^{\lambda} t - \beta' Z_1)/(\sqrt{1-\beta'^2})$}.
The squared Euclidean distance from the origin to the closest point in $D$ at least 
$\Delta^2$ as can be seen by the fact that $\Delta^2$ decreasing in $\beta$.
Combining this observation with Lemma \ref{lem:luli} we get the desired result.
\end{proof}
\end{lemma}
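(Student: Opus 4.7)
My plan is to follow the hint already given in the lemma statement and reduce the bivariate tail bound to an application of Lemma \ref{lem:luli} (Lu and Li) for a two-dimensional standard normal vector hitting a closed convex domain. By the $2$-stability (rotational invariance) of the standard normal, any jointly Gaussian pair $(X,Y)$ with unit marginals and correlation $\beta'$ can be represented as $X = \ip{z}{x}$, $Y = \ip{z}{y}$ for a standard Gaussian vector $z = (Z_1, Z_2) \in \mathbb{R}^2$ and unit vectors $x, y \in \mathbb{R}^2$ with $\ip{x}{y} = \beta'$. Choosing coordinates so that $x = (1,0)$ and $y = (\beta', \sqrt{1 - \beta'^2})$, the event $\{X \geq t,\ Y \geq \alpha^\lambda t\}$ becomes $\{z \in D\}$ where $D$ is the intersection of the two half-planes $\{Z_1 \geq t\}$ and $\{\beta' Z_1 + \sqrt{1-\beta'^2}\, Z_2 \geq \alpha^\lambda t\}$. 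This set is closed and convex, and for the parameter ranges in the lemma it does not contain the origin.

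The next step is to compute the Euclidean distance $\Delta(\beta')$ from the origin to $D$. The closest point is either the foot of a perpendicular onto one of the two bounding lines, or the corner where both constraints are tight. Since $\beta' \leq \beta < \alpha \leq \alpha^\lambda$ (using $\alpha < 1$ and handling the sign of $\lambda$ to check $\alpha^\lambda \geq \alpha$ always suffices for the relevant regime, together with $\beta' \alpha^\lambda < 1$), the perpendicular foot on each individual bounding line fails the other constraint, so the closest point is the corner. Solving the two equalities gives $Z_1 = t$ and $Z_2 = (\alpha^\lambda - \beta')t/\sqrt{1-\beta'^2}$, hence
\begin{equation*}
\Delta(\beta')^2 \;=\; t^2 + \frac{(\alpha^\lambda - \beta')^2}{1-\beta'^2}\, t^2 \;=\; \left(1 + \frac{(\alpha^\lambda - \beta')^2}{1-\beta'^2}\right) t^2.
\end{equation*}

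To finish, I would show monotonicity of $f(\beta') = (\alpha^\lambda - \beta')^2/(1-\beta'^2)$ in the relevant range. A direct differentiation gives $f'(\beta') = 2(\alpha^\lambda - \beta')(\alpha^\lambda \beta' - 1)/(1-\beta'^2)^2$, and under the hypotheses $\beta' < \alpha^\lambda$ and $\alpha^\lambda \beta' < 1$ both factors have opposite signs, so $f'(\beta') < 0$. Hence $f$ is decreasing on this range, which means $\Delta(\beta')^2 \geq \Delta(\beta)^2$ whenever $\beta' \leq \beta$. Applying Lemma \ref{lem:luli} to the convex domain $D$ yields
\begin{equation*}
\Pr[X \geq t,\ Y \geq \alpha^\lambda t] \;=\; \Pr[z \in D] \;\leq\; e^{-\Delta(\beta')^2/2} \;\leq\; e^{-\Delta(\beta)^2/2},
\end{equation*}
which is exactly the claimed bound.

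The routine parts are the change of coordinates, identifying the corner as the nearest point, and the monotonicity computation. The one technical nuisance I anticipate is verifying the corner case rigorously for the full range $-1 \leq \lambda \leq 1$ and $-1 < \beta' \leq \beta < \alpha$: one has to check that $\alpha^\lambda > \beta'$ and $\alpha^\lambda \beta' < 1$ hold throughout so that (i) the two half-planes really have an exterior corner closest to the origin and (ii) $f$ is monotone in the asserted direction. The case $\beta' = -1$ is degenerate (the two standard normals are a.s.\ related by $Y = -X$, which makes the joint event have probability zero as soon as $t > 0$ and $\alpha^\lambda \geq 0$), and I would dispatch it separately, as the lemma statement itself does.
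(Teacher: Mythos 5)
Your proposal is correct and follows essentially the same route as the paper's proof: represent $(X,Y)$ via a two-dimensional Gaussian projection onto unit vectors with inner product $\beta'$, identify the event as a closed convex domain $D$, bound the distance from the origin to $D$ using monotonicity of $(\alpha^{\lambda}-\beta')^2/(1-\beta'^2)$ in $\beta'$, and apply Lemma \ref{lem:luli}. The only difference is that you spell out the corner-point and monotonicity computations that the paper leaves implicit.
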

\begin{lemma}[Tail lower bound] \label{lem:lower}
For $\alpha, \lambda, t$ satisfying $0 < \alpha < 1$, $-1 \leq \lambda \leq 1$, 
and $t > 0$ every pair of standard normal random variables $(X, Y)$ with correlation $\alpha' \geq \alpha$ satisfies 
\begin{equation*}
	\Pr[X \geq t \land Y \geq \alpha^{\lambda} t] \geq \frac{e^{-\Delta^{2}/2}}{2\pi(1+ t/\alpha)^{2}}    
\end{equation*}
where $\Delta^2 = (1 + \frac{(\alpha^{\lambda} - \alpha)^2}{1-\alpha^2})t^2$.    
\end{lemma}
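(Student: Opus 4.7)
The plan is to mirror the proof of the companion upper bound (Lemma 3.22) by reducing to a two-dimensional Gaussian calculation and then applying the univariate Szarek--Werner lower bound from Lemma 3.20. First I would invoke the 2-stability of the normal distribution together with rotational invariance to write, without loss of generality, $X = \langle z,x\rangle$ and $Y = \langle z,y\rangle$ for a standard Gaussian $z=(Z_1,Z_2)$ and unit vectors $x=(1,0)$, $y=(\alpha',\sqrt{1-\alpha'^2})$; equivalently $X=Z_1$ and $Y = \alpha' Z_1 + \sqrt{1-\alpha'^2}\, Z_2$ with $Z_1,Z_2$ independent standard normals.

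Since we want a lower bound that holds uniformly in $\alpha'\geq\alpha$, I would reduce to the worst case $\alpha'=\alpha$. The quickest justification is Slepian's inequality: orthant probabilities for bivariate Gaussians are nondecreasing in the correlation, so the minimum over $\alpha'\in[\alpha,1]$ is attained at $\alpha'=\alpha$. (Alternatively, one can work with the $\alpha'$-dependent rectangle introduced below and check directly, using the sign of the derivative of $(\alpha^\lambda-x)/\sqrt{1-x^2}$ on $[\alpha,1)$ and the trivial bound $\Pr[Z_2\geq s]\geq 1/2$ when $s\leq 0$, that the worst case is $\alpha'=\alpha$.)

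The heart of the argument is then to lower bound the probability by the Gaussian mass of an axis-aligned rectangle contained in the event. Set $s=(\alpha^\lambda-\alpha)t/\sqrt{1-\alpha^2}$; the assumptions $0<\alpha<1$ and $\lambda\leq 1$ give $\alpha^\lambda\geq\alpha$, hence $s\geq 0$, and a short calculation yields $s\leq t\sqrt{1-\alpha^2}/\alpha\leq t/\alpha$. On $\{Z_1\geq t\}\cap\{Z_2\geq s\}$ we have
\begin{equation*}
Y \;=\; \alpha Z_1 + \sqrt{1-\alpha^2}\,Z_2 \;\geq\; \alpha t + \sqrt{1-\alpha^2}\,s \;=\; \alpha^\lambda t,
\end{equation*}
so this rectangle is contained in the event. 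By independence of $Z_1$ and $Z_2$ the probability factors, and applying the univariate lower bound of Lemma 3.20 to each factor gives
\begin{equation*}
\Pr[X\geq t,\,Y\geq\alpha^\lambda t] \;\geq\; \frac{1}{2\pi(t+1)(s+1)}\,e^{-(t^2+s^2)/2}.
\end{equation*}
Finally, using $t^2+s^2 = t^2\!\left(1+\tfrac{(\alpha^\lambda-\alpha)^2}{1-\alpha^2}\right)=\Delta^2$ together with $(t+1)(s+1)\leq(1+t/\alpha)^2$ (which follows from $s\leq t/\alpha$ and $1\leq 1/\alpha$) produces the claimed bound.

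The main obstacle is the reduction to $\alpha'=\alpha$: the natural rectangle depends on $\alpha'$ through both its location and the representation of $Y$, so a clean statement demands either invoking Slepian or a careful case analysis splitting on whether $\alpha'\leq\alpha^\lambda$ or not and on whether $\alpha^\lambda\leq 1$ or not. Everything else is a routine Gaussian tail computation whose only delicate step is checking that the prefactor $(t+1)(s+1)$ indeed collapses into the advertised $(1+t/\alpha)^2$.
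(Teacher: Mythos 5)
Your proposal is correct and follows essentially the same route as the paper's proof: reduce to two dimensions via 2-stability, drop to the rectangle $\{Z_1\geq t\}\cap\{Z_2\geq (\alpha^\lambda-\alpha)t/\sqrt{1-\alpha^2}\}$ (the paper phrases this as conditioning on $Z_1\geq t$ and replacing $Z_1$ by $t$), apply the Szarek--Werner lower bound to each factor, and use $(\alpha^\lambda-\alpha)t/\sqrt{1-\alpha^2}\leq t/\alpha$ to absorb the prefactor into $2\pi(1+t/\alpha)^2$. The only difference is cosmetic: where you invoke Slepian's inequality to reduce $\alpha'\geq\alpha$ to the worst case $\alpha'=\alpha$, the paper treats $\alpha'=1$ separately and otherwise relies on monotonicity of the bound in the correlation, so your treatment is, if anything, slightly more explicit on that step.
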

\begin{proof}
	For $\alpha' = 1$ the result follows directly from Lemma \ref{lem:univariatebounds}. 
	For $\alpha' < 1$ we use the trick from the proof of Lemma \ref{lem:upper} and define $X = \ip{z}{x}$ and $Y = \ip{z}{x}$ 
	where $x = (1, 0)$ and $y = (\alpha, \sqrt{1-\alpha^2})$ and $z = (Z_1, Z_2)$ is a vector of two i.i.d. standard normal random variables.
	This allows us to rewrite the probability as follows:
	\begin{align*}
		&\Pr[Z_1 \geq t \land \alpha Z_1 + \sqrt{1-\alpha^2}Z_2 \geq \alpha^{\lambda} t] \\
		&= \Pr[Z_1 \geq t]\Pr[\alpha Z_1 + \sqrt{1-\alpha^2}Z_2 \geq \alpha^{\lambda} t \mid Z_1 \geq t] \\
		&\geq \Pr[Z_1 \geq t]\Pr[\alpha t + \sqrt{1-\alpha^2}Z_2 \geq \alpha^{\lambda} t]  
	\end{align*}
	By the restrictions on $\alpha$ and $\lambda$ we have that \mbox{$(\alpha^{\lambda} - \alpha)t / \sqrt{1-\alpha^2} \leq t/\alpha$}.
	The result follows from applying the lower bound from Lemma \ref{lem:univariatebounds} and noting that the bound is increasing in $\alpha$.
\end{proof}
\subsection{Space-time tradeoffs on the unit sphere}
Summarizing the bound from the previous section, the family $\mathcal{G}$ from Lemma \ref{lem:gaussianlsf} satisfies that
\begin{align*}
	p_1 &\geq \frac{e^{-(1 + \frac{(\alpha^{\lambda} - \alpha)^2}{1-\alpha^2})t^2/2}}{2\pi(1+ t/\alpha)^{2}} \\
	p_2 &\leq e^{-(1 + \frac{(\alpha^{\lambda} - \beta)^2}{1-\beta^2})t^2/2} \\
	p_q &\leq e^{-\alpha^{2\lambda}t^{2}/2} \\
	p_u &\leq e^{-t^{2}/2}.
\end{align*}

We combine the Gaussian filters with Theorem \ref{thm:lsf} to show that we can solve the $(\alpha, \beta)$-similarity problem efficiently 
for the full range of space/time tradeoffs, even when $\alpha, \beta$ are allowed to depend on $n$, 
as long as the gap $\alpha - \beta$ is not too small. 
\begin{customthm}{\ref{thm:spherevanilla}}\label{thm:sphere}
	For every choice of $0 \leq \beta < \alpha < 1$ and $\lambda \in [-1, 1]$ 
	we can construct a fully dynamic data structure that solves the $(\alpha, \beta)$-similarity problem in $(\sphere{d}, \ip{\cdot}{\cdot})$.
	Suppose that $\alpha - \beta \geq (\ln n)^{-zeta}$ for some constant $zeta < 1/2$, 
	that satisfies the guarantees from Theorem \ref{thm:lsfvanilla} with exponents  
	$\rho_q = \left. \frac{(1-\alpha^{1+\lambda})^2}{1-\alpha^2}  \middle/  \frac{(1- \alpha^{\lambda} \beta)^2}{1-\beta^2} \right.$ and  
	$\rho_u = \left. \frac{(\alpha^{\lambda} - \alpha)^2}{1-\alpha^2}  \middle/  \frac{(1-\alpha^{\lambda} \beta)^2}{1-\beta^2} \right.$. 
\end{customthm}
\begin{proof}
	Assuming that $\alpha - \beta \geq (\ln n)^{-zeta}$ there exists a constant $\varepsilon > 0$ where by setting
	the parameter $t$ of $\mathcal{G}$ such that $t^2 / 2 = \frac{1-\beta^2}{(1-\alpha^{\lambda} \beta)^2}(\ln n)^{\varepsilon}$
	the family of filters satisfies the assumptions in Theorem \ref{thm:lsf} while guaranteeing 
	that the second term in $\rho_q$ and $\rho_u$ from Lemma \ref{lem:gaussianlsf} are $o(1)$.
\end{proof}
\begin{remark}
	Theorem \ref{thm:sphere} aims for simplicity and generality while allowing $\alpha$ and $\beta$ to depend on $n$. 
	For specific values of $\alpha, \beta, \lambda$ it is easy to find better bounds on the probabilties 
	(e.g. the bounds by Savage \cite{savage1962}) and to adjust $t$ in Lemma \ref{lem:gaussianlsf} 
	to avoid powering (setting $\kappa_{1} = 1,  \kappa_{2} = 0$) in the LSF framework.
\end{remark}
\section{Appendix: Approximate feature maps, characteristic functions, and Bochner's Theorem} \label{app:characteristic}
We begin by defining what a characteristic function is and listing some properties that are useful for our application.
More information about characteristic functions can be found in the books by Lukacs~\cite{lukacs1970} and Ushakov~\cite{ushakov1999}.
\begin{lemma}[\cite{lukacs1970, ushakov1999}] \label{lem:characteristicproperties}
	Let $Z$ denote a random variable with distribution function $\mu$.
	Then the characteristic function $k(\Delta)$ of $Z$ is defined as 
		\begin{equation*}
			k(\Delta) = \int_{-\infty}^{\infty}\mu(t)e^{i \Delta t} dt
		\end{equation*}
	and it has the following properties:
	\begin{itemize}
		\item[-] A distribution function is symmetric if and only if its characteristic function is real and even.
		\item[-] Every characteristic function $k(\Delta)$ is uniformly continuous, has $k(0) = 1$, and $|k(\Delta)| \leq 1$ for all real $\Delta$.
		\item[-] Suppose that $k(\Delta)$ denotes the characteristic function of an absolutely continuous distribution
			     then \mbox{$\lim_{\Delta \rightarrow \infty}|k(\Delta)| = 0$}.
		\item[-] Let $X$ and $Y$ be independent random variables
			with characteristic functions $k_{X}$ and $k_{Y}$. 
			Then the characteristic function of $Z = (X,Y)$ is given by 
			$k(x, y) = k_{X}(x) k_{Y}(y)$. 
	\end{itemize}
\end{lemma}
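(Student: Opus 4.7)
\medskip

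The plan is to verify each of the four properties by direct manipulation of the defining integral $k(\Delta) = \int \mu(t) e^{i\Delta t}\,dt$, appealing to standard tools from real and Fourier analysis (dominated convergence, Fubini, and the Riemann–Lebesgue lemma). Since these are classical facts, no single step is conceptually hard; the main care is to justify interchanges of limit and integral with the right integrability hypothesis.

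First, for the symmetry property, I would compute $k(-\Delta) = \int \mu(t) e^{-i\Delta t}\,dt$ and perform the change of variable $u = -t$: if $\mu$ is symmetric, then $\mu(-u) = \mu(u)$ and this integral equals $k(\Delta)$, so $k$ is even; taking real and imaginary parts separately shows that the imaginary part $\int \mu(t)\sin(\Delta t)\,dt$ vanishes by symmetry, so $k$ is real. For the converse direction I would invoke uniqueness of the Fourier transform (equivalently, the inversion formula) to conclude that $\mu$ must equal its reflection. For the bound $|k(\Delta)|\le 1$ and $k(0)=1$, these follow immediately from $|e^{i\Delta t}| = 1$ and $\int \mu(t)\,dt = 1$. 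Uniform continuity is the one step requiring a little care: from
\begin{equation*}
|k(\Delta+h) - k(\Delta)| \le \int \mu(t)\,|e^{iht}-1|\,dt,
\end{equation*}
the right side is independent of $\Delta$, is dominated by the integrable function $2\mu(t)$, and tends to $0$ as $h \to 0$ by dominated convergence, yielding uniform (not just pointwise) continuity.

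For the third property (vanishing at infinity for absolutely continuous distributions), I would apply the Riemann–Lebesgue lemma directly: if $\mu \in L^{1}(\mathbb{R})$, its Fourier transform vanishes at infinity. A self-contained proof proceeds by first establishing the claim for indicator functions of intervals (explicit computation), extending to simple functions by linearity, and then to general $L^1$ functions by density of simple functions in $L^1$. Finally, for the fourth property, I would write out the joint characteristic function of $Z=(X,Y)$ as $\mathbb{E}[e^{i(xX+yY)}] = \mathbb{E}[e^{ixX} e^{iyY}]$ and use independence of $X$ and $Y$ to factor the expectation as $\mathbb{E}[e^{ixX}]\,\mathbb{E}[e^{iyY}] = k_X(x)\,k_Y(y)$; independence of $e^{ixX}$ and $e^{iyY}$ follows from the fact that measurable functions of independent random variables are independent.

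The mildly subtle step is the uniform continuity claim and the Riemann–Lebesgue part, both of which hinge on a dominated convergence argument; the remaining items are essentially one-line calculations. Since the lemma is quoted from Lukacs and Ushakov, the cleanest writeup would simply cite those references for the standard parts and include the short computations above for self-containment.
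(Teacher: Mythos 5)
Your proposal is correct: the paper does not prove this lemma at all but simply quotes it from Lukacs and Ushakov, and the arguments you sketch (change of variables plus Fourier uniqueness for symmetry, dominated convergence for uniform continuity, Riemann--Lebesgue for the decay of absolutely continuous distributions, and factoring $\E[e^{ixX}e^{iyY}]$ by independence) are exactly the standard textbook proofs those references contain. The only thing to watch is that the paper's statement writes $k(\Delta)=\int\mu(t)e^{i\Delta t}\,dt$ with $\mu$ called a ``distribution function'' while your computations treat $\mu$ as a density (using $\int\mu(t)\,dt=1$); this looseness is inherited from the paper's own phrasing, and for the third property one genuinely needs the density formulation (or $\mu\in L^1$), which you correctly assume.
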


Bochner's Theorem reveals the relation between characteristic functions and the class of real-valued functions 
$k(x, y)$ that admit a feature space representation $k(x, y) = \ip{\phi(x)}{\phi(y)}$
\begin{theorem}[Bochner's Theorem \cite{rudin1990}]
	A function $k : \real^d \times \real^d \to [0,1]$ is positive definite if and only if it can be written on the form
	\begin{equation*}
		k(x, y) = \int_{\real^d} \mu(v) e^{i \ip{v}{x-y}}  dv
	\end{equation*}
	where $\mu$ is the probability density function of a symmetric distribution.
\end{theorem}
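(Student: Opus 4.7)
The plan is to handle the two directions of the biconditional separately. The forward (sufficient) direction is essentially a one-line computation, while the converse (necessary) direction requires real work from harmonic analysis.

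For the easy direction, suppose $k$ admits the stated integral representation. To verify positive definiteness, I fix any points $x_1, \dots, x_n \in \real^d$ and complex coefficients $c_1, \dots, c_n$, then factor $e^{i \ip{v}{x_j - x_l}} = e^{i \ip{v}{x_j}} \overline{e^{i \ip{v}{x_l}}}$ to obtain
\begin{equation*}
\sum_{j,l} c_j \overline{c_l}\, k(x_j, x_l) = \int_{\real^d} \mu(v) \left| \sum_{j=1}^n c_j e^{i \ip{v}{x_j}} \right|^2 dv \geq 0,
\end{equation*}
where nonnegativity follows from $\mu \geq 0$. The normalization $k(0,0) = 1$ becomes $\int \mu = 1$, and $k$ being real-valued forces $\mu$ to be symmetric (even), as recorded in Lemma \ref{lem:characteristicproperties}.

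For the converse, the plan is to use regularization plus a compactness argument. Given a continuous positive definite $k$ with $k(0) = 1$, the goal is to construct a symmetric probability measure whose characteristic function is $k$; that this measure admits a density is an additional regularity feature of the statement as phrased. First I would upgrade the finite-sum positive-definiteness condition to its continuous version: for every Schwartz function $f \in \mathcal{S}(\real^d)$, a Riemann-sum approximation yields
\begin{equation*}
\int\!\!\int k(x - y)\, f(x)\, \overline{f(y)}\, dx\, dy \geq 0.
\end{equation*}
Next I would regularize by setting $k_\varepsilon(x) = k(x)\, e^{-\varepsilon \norm{x}^2 / 2}$, which lies in $L^1 \cap L^\infty$, and define the candidate density $\mu_\varepsilon(v) = (2\pi)^{-d} \int k_\varepsilon(x)\, e^{-i \ip{v}{x}}\, dx$ via Fourier inversion. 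The key step is verifying $\mu_\varepsilon \geq 0$; the cleanest way is to apply the continuous positive-definiteness inequality with $f(x) = e^{-i\ip{v}{x}} g_\varepsilon(x)$, where $g_\varepsilon$ is the Gaussian of variance $1/\varepsilon$, and invoke Parseval to identify the resulting nonnegative quantity with $\mu_\varepsilon(v)$ up to a positive constant.

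The hard part, and the main obstacle, is passing to the limit $\varepsilon \to 0$. Each $\mu_\varepsilon$ is a positive measure with total mass $k_\varepsilon(0) \to 1$, so by Banach--Alaoglu (or Helly's selection theorem) some subsequence converges weakly-$*$ to a positive measure $\mu$ of total mass at most $1$. Tightness, needed to prevent mass from escaping to infinity in the limit, comes from the uniform continuity of $k$ at the origin, which itself follows from positive definiteness via the elementary $2 \times 2$ Gram-matrix bound $|k(x) - k(0)|^2 \leq 2\, k(0)\, \operatorname{Re}(k(0) - k(x))$. Testing weak-$*$ convergence against the bounded continuous function $v \mapsto e^{i \ip{v}{x}}$ identifies the Fourier transform of $\mu$ with $k$, and symmetry of $\mu$ is then forced by $k$ being real-valued and even. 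The hypothesis that $\mu$ has a density is automatic whenever $k \in L^1(\real^d)$ and should be read as a mild additional regularity assumption on the statement; it is precisely the feature exploited in Section \ref{sec:kernel}, since sampling efficiently from the density $\mu$ is what makes the approximate feature map of Lemma \ref{lem:rahimirecht} algorithmically useful.
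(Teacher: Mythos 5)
The paper does not prove this statement at all: Bochner's Theorem is imported verbatim from Rudin~\cite{rudin1990} and used as a black box in the appendix, so there is no internal proof to compare against. Your argument is the standard harmonic-analysis proof and its outline is sound: the forward direction via $\sum_{j,l} c_j \overline{c_l}\, k(x_j,x_l) = \int \mu(v) \bigl| \sum_j c_j e^{i\ip{v}{x_j}} \bigr|^2 dv \geq 0$ is exactly right, and the converse via Gaussian regularization, nonnegativity of the regularized Fourier transform through the continuous positive-definiteness inequality, and a tightness/weak-$*$ limit argument (essentially L\'evy's continuity theorem) is the classical route. Two small points: your Gaussian test-function computation produces the Fourier transform of $k(u)e^{-\varepsilon\norm{u}^2/4}$ rather than of $k_\varepsilon$ as you defined it (the autocorrelation of a Gaussian doubles the variance), which is harmless but should be fixed by choosing the test Gaussian with variance $1/(2\varepsilon)$; and the identification $\int \mu_\varepsilon = k_\varepsilon(0)$ quietly uses Fourier inversion, which deserves a word (Abel/Gauss summability) in a full write-up.

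Your closing observation about the density is the most valuable part of the review: as literally stated in the paper, with $\mu$ required to be a probability \emph{density}, the ``only if'' direction is false (take $k \equiv 1$, whose representing measure is a point mass at the origin), so the correct classical statement involves a finite symmetric probability \emph{measure}, with a density guaranteed only under extra hypotheses such as $k \in L^1(\real^d)$. This imprecision is inconsequential for the paper, since the kernels actually used in Section~\ref{sec:ls} are characteristic functions of $s$-stable distributions, which do admit densities from which Lemma~\ref{lem:rahimirecht2} can sample, but you are right to flag it.
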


Rahimi \& Recht's \cite{rahimi2007} family of approximate feature maps $\RR$ is constructed from Bochner's Theorem by making use of Euler's Theorem as follows:
\begin{align*}
	k(x, y) &= \int_{\real^d} \mu(v) e^{i \ip{v}{x-y}}  dv \\
			  &= \int_{\real^d} \mu(v)( \cos(\ip{v}{x-y}) + i \sin(\ip{v}{x-y})) dv \\
	&= \E_{v}[\cos(\ip{v}{x-y})]  \\ 
	&= \E_{v, b}[\cos(\ip{v}{x-y}) + \cos(\ip{v}{x} + \ip{v}{y}+ 2b)]  \\ 
	&= 2\E_{v, b}[\cos(\ip{v}{x} + b) \cdot \cos(\ip{v}{y} + b)]. 
\end{align*}
Where the third equality makes use of the fact that $k(x, y)$ is real-valued to remove the complex part of the integral
and the fifth equality uses that $2\cos(x)\cos(y) = \cos(x + y) + \cos(x - y)$.

Now that we have an approximate feature map onto the sphere for the class of shift-invariant kernels, 
we will take a closer look at what functions this class contains, and what their applications are for similarity search. 
Given an arbitrary similarity function, we would like to be able to determine whether it is indeed a characteristic function.
Unfortunately, there are no known simple techniques for answering this question in general.
However, the machine learning literature contains many applications of different shift-invariant kernels \cite{scholkopf2002} 
and many common distributions have real characteristic functions (see Appendix B in \cite{ushakov1999} for a long list of examples).
Characteristic functions are also well studied from a mathematical perspective \cite{lukacs1970, ushakov1999},
and a number of different necessary and sufficient conditions are known.
A classical result by Pólya \cite{polya1949} gives simple sufficient conditions for a function to be a characteristic function.
Through the vectorization property from Lemma \ref{lem:characteristicproperties}, 
Pólya's conditions directly imply the existence of a large class of similarity measures on $\real^d$ that can fit into the above framework. 
\begin{theorem}[Pólya \cite{polya1949}]
	Every even continuous function $k : \real \to \real$ satisfying the properties
	\begin{itemize}
		\item[-] $k(0) = 1$ 
		\item[-] $\lim_{\Delta \to \infty}k(\Delta) = 0$ 
		\item[-] $k(\Delta)$ is convex for $\Delta > 0$ 
	\end{itemize}
	is a characteristic function.
\end{theorem}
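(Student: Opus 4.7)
My plan is to follow the classical proof of Pólya's criterion by representing $k$ as a convex mixture of tent functions, each of which is individually a characteristic function.

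First, I would establish regularity properties of $k$ on $[0, \infty)$. Since $k$ is convex, continuous, satisfies $k(0)=1$, and tends to $0$ at infinity, it must be non-increasing (otherwise convexity would force it to $+\infty$) and therefore non-negative. Convexity guarantees that the right derivative $k'_+$ exists everywhere and is non-decreasing, so $-k'_+$ is non-negative, non-increasing, and has $-k'_+(\Delta) \to 0$ as $\Delta \to \infty$. Moreover, the second derivative $k''$ exists as a positive Borel measure on $(0, \infty)$ in the Stieltjes sense.

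Next, I would define a measure $\mu$ on $(0, \infty)$ by $d\mu(a) = a \, k''(da)$ and verify the identity
\begin{equation*}
k(\Delta) = \int_0^\infty T_a(\Delta) \, d\mu(a), \qquad \Delta \geq 0,
\end{equation*}
where $T_a(\Delta) = \max(1 - \Delta/a, 0)$ is the tent function of width $a$, along with the normalization $\mu((0, \infty)) = 1$. Both statements follow from two integrations by parts; the boundary terms vanish because the convexity inequality $-a\, k'_+(a) \leq k(0) - k(a)$ forces $a\, k'_+(a) \to 0$ as $a \to 0^+$, while $-a\, k'_+(a) \leq k(a/2) - k(a)$ (using monotonicity of $-k'_+$) forces the same limit as $a \to \infty$.

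Finally, I would invoke the classical Fejér identity: for each $a > 0$, the even extension of $T_a$ is the characteristic function of the absolutely continuous distribution on $\real$ with density $f_a(x) = (1 - \cos(ax))/(\pi a x^2)$. Since the class of characteristic functions is closed under convex mixtures---if $\Theta \sim \mu$ is independent of $Y$ with conditional characteristic function $k_\Theta$, then $\E[k_\Theta(\cdot)]$ is the characteristic function of $Y$---the integral representation above exhibits $k$ (even-extended) as the characteristic function of a mixture of Fejér distributions, completing the proof.

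The main obstacle I anticipate is the integration-by-parts step, since $k$ is merely convex rather than $C^2$, so $k''$ must be handled as a Stieltjes measure and the boundary terms at $0$ and $\infty$ require the two convexity inequalities noted above. A secondary item is the Fejér computation identifying $T_a$ as the characteristic function of $f_a$, which is a standard Fourier inversion calculation and can simply be cited.
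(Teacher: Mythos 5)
Your proposal is correct, but note that the paper does not prove this statement at all: it is quoted as a classical black-box result with the citation to P\'olya (1949), so there is no in-paper proof to compare against. What you give is the standard classical argument for P\'olya's criterion --- derive monotonicity and non-negativity of $k$ from convexity and the limit at infinity, represent $k$ on $[0,\infty)$ as a mixture $k(\Delta)=\int_0^\infty \max(1-\Delta/a,0)\,a\,k''(da)$ of tent functions via two Stieltjes integrations by parts, and then use the Fej\'er identity that each tent is the characteristic function of the density $(1-\cos(ax))/(\pi a x^2)$ together with closure of characteristic functions under convex mixtures. All the key points are in place, including the two boundary estimates needed because $k$ is only convex rather than $C^2$. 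One harmless slip: your bound at infinity should read $-a\,k'_+(a)\leq 2\bigl(k(a/2)-k(a)\bigr)$ (the monotonicity of $-k'_+$ on $[a/2,a]$ gives $k(a/2)-k(a)\geq (a/2)\bigl(-k'_+(a)\bigr)$), but the missing factor of $2$ does not affect the conclusion that $a\,k'_+(a)\to 0$, so the argument goes through as written.
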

Based on the results of Section \ref{sec:ls} one could hope for the existence of characteristic functions of the form 
$k(\Delta) = e^{-|\Delta|^s}$ for $s > 2$ but it is known that such functions cannot exist \cite[Theorem D.8]{benyamini1998}.
Furthermore, Marcinkiewicz \cite{marcinkiewicz1939} shows that a function of the form $k(\Delta) = \exp(-\poly(\Delta))$ 
cannot be a characteristic function if the degree of the polynomial is greater than two.

We state a more complete, constructive version of Lemma \ref{lem:rahimirecht} as well as the proof here.
\begin{lemma} \label{lem:rahimirecht2}
	Let $k$ be a real-valued characteristic function with associated distribution function $\mu$ and let $l$ be a positive integer.
	Consider the family of functions $\RR \subseteq \{ v \mid v \colon \real^{d} \to \sphere{l} \}$ 
	where a randomly sampled function $v$ is defined by, independently for $j = 1,\dots,l$, 
	sampling $v$ from $\mu$ and $b$ uniformly on $[0, 2\pi]$, letting $\hat{v}(x)_{j} = \sqrt{(2/l)}\cos(\ip{v}{x} + b)$
	and normalizing $v(x)_{j} = \frac{\hat{v}(x)}{\norm{\hat{v}(x)}}$.
	The family $\RR$ has the property that for every $x, y \in \real^d$ and $\varepsilon > 0$ we have that 
	\begin{equation*}
		\Pr_{v \sim \RR}[|\ip{v(x)}{v(y)} - k(x, y)| \geq \varepsilon] \leq 6e^{-l \varepsilon^2 / 128}.
	\end{equation*}
\end{lemma}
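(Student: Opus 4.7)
The plan is to combine the computation already displayed just before the lemma (the five-line derivation using Bochner's Theorem and Euler's identity) with three applications of Hoeffding's inequality (Lemma~\ref{fast:lem:hoeffding}), followed by an elementary ratio estimate to handle the normalization.

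First I would record the two expectation identities that follow directly from the pre-lemma derivation and the fact that $\mathbb{E}_b[\cos^2(\ip{v}{x}+b)]=1/2$: for every $j$,
\begin{equation*}
\E[\hat v(x)_j\,\hat v(y)_j] \;=\; \tfrac{1}{l}\,k(x-y), \qquad \E[\hat v(x)_j^{2}] \;=\; \tfrac{1}{l}.
\end{equation*}
Summing over $j$ gives $\E[\ip{\hat v(x)}{\hat v(y)}]=k(x-y)$ and $\E[\norm{\hat v(x)}^{2}]=1$. The three quantities we need to control are $N:=\ip{\hat v(x)}{\hat v(y)}$, $A:=\norm{\hat v(x)}^{2}$ and $B:=\norm{\hat v(y)}^{2}$, each a sum of $l$ independent bounded terms (the products lie in $[-2/l,2/l]$ and the squared cosines in $[0,2/l]$).

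Next I would apply Hoeffding's inequality to each of $N$, $A$, $B$. With a parameter $\delta\in(0,1)$ to be fixed, the bounded-difference form of Lemma~\ref{fast:lem:hoeffding} yields
\begin{equation*}
\Pr[|N-k(x-y)|\geq \delta] \;\leq\; 2e^{-l\delta^{2}/8}, \qquad \Pr[|A-1|\geq\delta],\;\Pr[|B-1|\geq\delta] \;\leq\; 2e^{-l\delta^{2}/2}.
\end{equation*}
Let $E$ be the event that all three deviations are below $\delta$; by a union bound $\Pr[E^{c}]\leq 6e^{-l\delta^{2}/8}$.

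On the event $E$, the ratio $\ip{v(x)}{v(y)} = N/\sqrt{AB}$ can be compared to $k(x-y)$ using $|N-k(x-y)|\leq\delta$, $|k(x-y)|\leq 1$, and $\sqrt{AB}\in[1-\delta,1+\delta]$:
\begin{equation*}
\left| \frac{N}{\sqrt{AB}} - k(x-y) \right| \;\leq\; \frac{|N-k(x-y)| + |k(x-y)|\,|\sqrt{AB}-1|}{\sqrt{AB}} \;\leq\; \frac{2\delta}{1-\delta}.
\end{equation*}
Choosing $\delta=\varepsilon/4$ (the lemma is trivial when $\varepsilon\geq 1$, since the left-hand side is always at most $2$) makes the right-hand side at most $\varepsilon$, and substituting into the union bound gives a failure probability of at most $6 e^{-l\varepsilon^{2}/128}$, as required.

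I do not expect any serious obstacle; the only minor subtlety is the normalization step, which could conceivably be absorbed into the probability bound in a tighter way but is handled cleanly here by the elementary ratio estimate. The constant $128$ in the exponent is loose (a careful choice of the two deviation parameters for $N$ versus $A,B$ gives $32$), which is consistent with the lemma being stated with room to spare.
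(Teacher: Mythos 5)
Your proposal is correct and follows essentially the same route as the paper: three applications of Hoeffding's inequality to the numerator $\ip{\hat v(x)}{\hat v(y)}$ and the two squared norms (using $k(x,x)=k(y,y)=1$), a union bound over the three deviation events with $\hat\varepsilon=\varepsilon/4$, and an elementary estimate to transfer the deviation through the normalization, yielding $6e^{-l\varepsilon^2/128}$. The only difference is that you spell out the ratio estimate that the paper leaves implicit (and note the sharper exponent available for the norm terms), which is a harmless refinement rather than a different argument.
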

\begin{proof}
	Since $l \cdot \hat{v}(x)_j \hat{v}(y)_j$ is bounded between $2$ and $-2$, and we have independence for different values of $j$, 
	Hoeffding's inequality \cite{hoeffding1963} can be applied to show that for every fixed pair of points $x, y$
	and $\hat{\varepsilon} > 0$ it holds that
	\begin{equation*}
		\Pr[|\ip{\hat{v}(x)}{\hat{v}(y)} - k(x, y)| \geq \hat{\varepsilon}] \leq 2e^{-l\hat{\varepsilon}^{2}/8}.
	\end{equation*}
	From the properties of characteristic functions we have that $k(x, x) = 1$ and $k(x, y) \leq 1$.
	The bound on the deviation of 
	\begin{equation*}
		\ip{v(x)}{v(y)} = \frac{\ip{\hat{v}(x)}{\hat{v}(y)}}{\sqrt{\ip{\hat{v}(x)}{\hat{v}(x)}\ip{\hat{v}(y)}{\hat{v}(y)}}}
	\end{equation*}
	from $k(x, y)$ follows from setting $\hat{\varepsilon} = \varepsilon/4$ 
	and using a union bound over the probabilities that the deviation of one of the inner products is too large.
\end{proof}
Combining the approximate feature map onto the unit sphere with Theorem \ref{thm:spherevanilla} we obtain the following:
\begin{theorem} \label{thm:characteristiclsf}
	Let $k \colon \real^d \to \real$ be a characteristic function and define the similarity measure $S(x, y) = k(x-y)$.
	Assume that we have access to samples from the distribution associated with $k$, 
	then Theorem \ref{thm:sphere} holds with $(\sphere{d}, \ip{\cdot}{\cdot})$ replaced by $(\real^d, S)$.
\end{theorem}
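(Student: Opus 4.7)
The plan is to reduce the $(\alpha, \beta)$-similarity problem in $(\real^{d}, S)$ to a slightly shrunk $(\alpha', \beta')$-similarity problem in $(\sphere{l}, \ip{\cdot}{\cdot})$ via the approximate feature map $\RR$ of Lemma \ref{lem:rahimirecht2}, and then invoke Theorem \ref{thm:sphere} on the embedded point set. First, sample a single $v \sim \RR$ and store $v(x) \in \sphere{l}$ for each data point, using $v(q) \in \sphere{l}$ as the effective query. Choose a slack parameter $\varepsilon = \tfrac{1}{3}(\alpha - \beta)$, which under the assumption $\alpha - \beta \geq (\ln n)^{-\zeta}$ with $\zeta < 1/2$ is at most polylogarithmically small. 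Then set the embedding dimension $l = \Theta(\log(n) \, \varepsilon^{-2}) = \poly \log n$ so that by Lemma \ref{lem:rahimirecht2} together with a union bound over the at most $n+1$ relevant pairs $(q, x)$, we have
\begin{equation*}
|\ip{v(q)}{v(x)} - S(q, x)| < \varepsilon \quad \text{simultaneously for every } x \in P,
\end{equation*}
with probability at least, say, $9/10$.

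Conditioned on this event, every true near point $x$ (with $S(q,x) \geq \alpha$) satisfies $\ip{v(q)}{v(x)} \geq \alpha - \varepsilon =: \alpha'$, and every true far point (with $S(q,x) \leq \beta$) satisfies $\ip{v(q)}{v(x)} \leq \beta + \varepsilon =: \beta'$. Hence a solution to the $(\alpha', \beta')$-similarity problem on $\sphere{l}$ yields a point $x$ with $\ip{v(q)}{v(x)} > \beta'$, which in turn has $S(q,x) > \beta' - \varepsilon = \beta$, i.e.\ a valid answer in $(\real^d, S)$. Since $\alpha' - \beta' = \alpha - \beta - 2\varepsilon = \tfrac{1}{3}(\alpha - \beta) \geq \tfrac{1}{3}(\ln n)^{-\zeta}$, the hypothesis of Theorem \ref{thm:sphere} is satisfied (after absorbing the constant into a slightly smaller $\zeta'< 1/2$), so we obtain a fully dynamic data structure with exponents
\begin{equation*}
\rho_q = \left. \frac{(1-\alpha'^{\,1+\lambda})^2}{1-\alpha'^{\,2}} \middle/ \frac{(1-\alpha'^{\,\lambda}\beta')^2}{1-\beta'^{\,2}} \right., \qquad
\rho_u = \left. \frac{(\alpha'^{\,\lambda} - \alpha')^2}{1-\alpha'^{\,2}} \middle/ \frac{(1-\alpha'^{\,\lambda}\beta')^2}{1-\beta'^{\,2}} \right.,
\end{equation*}
and since $\alpha' = \alpha - o(1)$ and $\beta' = \beta + o(1)$, these match the exponents claimed (up to $o(1)$, which is already accommodated in the statement of Theorem \ref{thm:sphere}).

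The only things to verify are that the extra overhead introduced by the embedding is swallowed in the $n^{o(1)}$ factors of Theorem \ref{thm:sphere}: evaluating $v$ on a point costs $O(dl) = d \cdot \poly\log n$, the feature map itself is stored in $d \cdot \poly \log n$ words, and each distance computation on $\sphere{l}$ takes $O(l) = \poly \log n$ time. All of these contribute only $n^{o(1)}$ multiplicative factors, so the stated query time $dn^{\rho_q + o(1)}$, update time $dn^{\rho_u + o(1)}$, and space $dn + n^{1+\rho_u + o(1)}$ carry over unchanged. The overall failure probability is at most $1/2 + 1/e + 1/10 < 1$, which can be boosted to the desired $1/2$ by a constant number of independent repetitions. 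The main subtlety is controlling the approximation error $\varepsilon$ against the gap $\alpha - \beta$; the lower bound $\alpha - \beta \geq (\ln n)^{-\zeta}$ with $\zeta < 1/2$ is precisely what is needed to keep $l$ polylogarithmic and preserve the $n^{o(1)}$ slack.
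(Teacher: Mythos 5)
Your overall route is the same as the paper's (embed via the Rahimi--Recht map of Lemma~\ref{lem:rahimirecht2}, union bound over the data points, solve the $(\alpha-\varepsilon,\beta+\varepsilon)$-similarity problem on the sphere via Theorem~\ref{thm:sphere}, and absorb the $O(dl)$ embedding overhead into the $n^{o(1)}$ factors), but there is a genuine error in how you set the slack. You choose $\varepsilon = \tfrac{1}{3}(\alpha-\beta)$ and then claim ``$\alpha' = \alpha - o(1)$ and $\beta' = \beta + o(1)$''. The hypothesis $\alpha - \beta \geq (\ln n)^{-\zeta}$ is a \emph{lower} bound on the gap, not an upper bound, so it does not make $\varepsilon$ small: in the typical case where $\alpha$ and $\beta$ are constants, your $\varepsilon$ is a fixed positive constant, $\alpha'$ and $\beta'$ differ from $\alpha,\beta$ by constants, and the exponents $\rho_q(\alpha',\beta')$, $\rho_u(\alpha',\beta')$ you obtain are strictly worse than the claimed $\rho_q(\alpha,\beta)$, $\rho_u(\alpha,\beta)$ by constant amounts --- this cannot be hidden in the $o(1)$ of Theorem~\ref{thm:sphere}. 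As written, your argument proves a weaker statement than the theorem.

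The fix is to decouple $\varepsilon$ from the gap and take it genuinely vanishing, e.g.\ $\varepsilon = 1/\ln n$ (or $(\alpha-\beta)/\log n$). Lemma~\ref{lem:rahimirecht2} then still only requires $l = \Theta(\varepsilon^{-2}\log n) = \poly\log n = n^{o(1)}$ to make the union bound over the $n$ pairs $(q,x)$ work, the residual gap satisfies $\alpha' - \beta' \geq (\ln n)^{-\zeta} - 2/\ln n \geq (\ln n)^{-\zeta'}$ for some constant $\zeta' < 1/2$, so Theorem~\ref{thm:sphere} still applies, and now $\alpha' = \alpha - o(1)$, $\beta' = \beta + o(1)$ honestly, which is the level at which the paper argues (the paper takes $\varepsilon = o(1)$ and $l = n^{o(1)}$ and additionally notes that the tradeoff parameter $\lambda$ may need adjusting to remain in its admissible range after the perturbation). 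The rest of your write-up --- the per-query union bound, the correctness chain $S(q,x) > \beta' - \varepsilon = \beta$ for the returned point, the accounting of the $d\cdot\poly\log n$ embedding cost, and boosting the constant failure probability by repetition --- is fine.
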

\begin{proof}
	According to Lemma \ref{lem:rahimirecht2}, we can set $l = n^{o(1)}$ to obtain a map $v \colon \real^d \to \sphere{l}$
	such that the the inner product on $\sphere{l}$ preserves the pairwise similarity between $n^{O(1)}$ 
	points with additive error $\varepsilon = o(1)$.
	This map has a space and time complexity of $O(dl) = dn^{o(1)}$.
	After applying $v$ to the data we can solve the $(\alpha, \beta)$-similarity problem on $(\real^d, k(x-y))$ by solving the 
	$(\alpha - \varepsilon, \beta + \varepsilon)$-similarity problem on $(\sphere{d}, \ip{\cdot}{\cdot})$.
	We can use Theorem \ref{thm:sphere} to construct a fully dynamic data structure for solving this problem, 
	adjusting the parameter $\lambda$ so that it continues to lie in the admissible range.
	The space and time complexities follow.
\end{proof}
\section{Appendix: Proof of tradeoff lower bound} \label{app:lowertradeoffproof}
Consider $\rho_q = \frac{\log(p_{q} / p_{1})}{\log(p_{q} / p_{2})}$.
Subject to the (implicit) LSF constraint that $p_q, p_u > p_1 > p_2 > 0$ we see that $\rho_q$ is minimized by setting 
$p_q, p_2$ as small as possible and $p_1$ as large as possible.
We will therefore derive lower bounds on $p_q, p_2$ and an upper bound on $p_1$.
For every value of $p_1$ and $p_2$ we minimize $\rho_q, \rho_u$ by choosing $p_q$ as small as possible.

For a random point $x \in \cube{d}$ it must hold that $\Pr_{\LSF}[x \in Q] = |Q|/2^{d}$.
This implies the existence of a fixed point $y \in \cube{d}$ with the property that $\Pr_{\LSF}[y \in Q] \geq |Q|/2^{d}$.
A regular filter family must therefore satisfy that $p_q \geq |Q|/2^{d}$ and $p_u \geq |U|/2^{d}$. 
Let $\lambda$ be defined as in Lemma \ref{lem:odhyper} then by a similar argument we have that $p_2 \geq (U/2^d)^{1+\alpha^{2\lambda}}$. 

In order to upper bound $p_1$ we make use of Lemma \ref{lem:odhyper} together with the following lemma that follows directly 
from an application of Hoeffding's inequality~\cite{hoeffding1963}.
\begin{lemma}\label{lem:correlationconcentration}
For every $0 < \varepsilon < (1-\alpha)/2$ we have that
\begin{equation*}
	\Pr_{\substack{(x, y) \\ \alpha+\varepsilon\text{-correlated}}}
	\left[\frac{1}{d}\sum_{i=1}^{d} x_i y_i \leq \alpha\right] \leq e^{-\varepsilon^{2}d/2}.
\end{equation*}
\end{lemma}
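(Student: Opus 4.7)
The plan is to rewrite the empirical correlation $\frac{1}{d}\sum_{i=1}^{d} x_i y_i$ as an average of independent bounded random variables and then invoke Hoeffding's inequality (Lemma \ref{fast:lem:hoeffding}). By Definition \ref{intro:def:correlation}, for $(x,y)$ that are $(\alpha+\varepsilon)$-correlated in $\cube{d}$, the coordinate products $W_i := x_i y_i$ are i.i.d.\ random variables taking the value $+1$ with probability $(1+\alpha+\varepsilon)/2$ and $-1$ with probability $(1-\alpha-\varepsilon)/2$, so $\E[W_i] = \alpha+\varepsilon$.

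Next I would rescale to the unit interval by setting $X_i = (W_i + 1)/2 \in \{0,1\} \subseteq [0,1]$. Then $\bar{X} = \frac{1}{d}\sum_{i=1}^{d} X_i$ has mean $\mu = (1+\alpha+\varepsilon)/2$, and the event $\frac{1}{d}\sum_{i=1}^d x_i y_i \leq \alpha$ is precisely the event $\bar{X} \leq (1+\alpha)/2 = \mu - \varepsilon/2$. The assumption $\varepsilon < (1-\alpha)/2$ ensures $\varepsilon/2 < \mu$, so the lower-tail hypothesis of Lemma \ref{fast:lem:hoeffding} is satisfied with deviation parameter $\varepsilon/2$.

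Applying that lemma then gives
\begin{equation*}
\Pr\!\left[\tfrac{1}{d}\textstyle\sum_i x_i y_i \leq \alpha\right] \;=\; \Pr[\bar{X}-\mu \leq -\varepsilon/2] \;\leq\; e^{-2d(\varepsilon/2)^2} \;=\; e^{-\varepsilon^2 d/2},
\end{equation*}
which is exactly the claimed bound.

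The argument is essentially mechanical; there is no real obstacle, only the bookkeeping around the affine rescaling from $\{-1,1\}$ to $\{0,1\}$. The one thing to be careful about is that Hoeffding's inequality as stated in Lemma \ref{fast:lem:hoeffding} requires variables in $[0,1]$ (not $[-1,1]$), which halves the effective deviation and hence turns the naive exponent $2\varepsilon^2 d$ into $\varepsilon^2 d/2$, matching the constant in the lemma.
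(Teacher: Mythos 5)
Your proof is correct and follows the same route as the paper, which simply states that the lemma "follows directly from an application of Hoeffding's inequality"; your write-up just supplies the mechanical rescaling from $\{-1,1\}$ to $\{0,1\}$ and the resulting deviation $\varepsilon/2$, which indeed yields the exponent $\varepsilon^2 d/2$. One tiny remark: the hypothesis $\varepsilon < (1-\alpha)/2$ is really there to guarantee $\alpha+\varepsilon < 1$ so that $(\alpha+\varepsilon)$-correlation is well-defined, while the lower-tail condition $\varepsilon/2 < \mu$ of Hoeffding's inequality holds automatically; this does not affect the validity of your argument.
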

In the following derivation, assume that $\alpha, \varepsilon$ satisfies $0 < \varepsilon < (1-\alpha)/2$, 
let $x, y$ denote randomly $(\alpha + \varepsilon)$-correlated vectors in $\cube{d}$, 
and assume that $\alpha + \varepsilon \leq \alpha^{\lambda} \leq 1/(\alpha + \varepsilon)$, then
\begin{align*}
	&(|U|/2^{d})^\frac{1 + \alpha^{2\lambda} - 2 \alpha^{\lambda} (\alpha + \varepsilon)}{1 - (\alpha+\varepsilon) ^{2}} 
	\geq \Pr[x \in Q, y \in U] \\
	&\quad \geq \Pr[x \in Q, y \in U \mid \ip{x}{y} \geq \alpha] \Pr[\ip{x}{y} \geq \alpha] \\
	&\quad \geq  p_1 (1 - e^{-\varepsilon^{2} d/2})   
\end{align*}
Summarizing the bounds:
\begin{align*}
	p_1 &\leq \frac{(|U|/2^{d})^\frac{1 + \alpha^{2\lambda} - 2\alpha^{\lambda}(\alpha+\varepsilon)}{1-\alpha^{2}}}{1-e^{-\varepsilon^{2}d/2}} \\ 
	p_2 &\geq (|U|/2^{d})^{1 + \alpha^{2\lambda}}\\ 
	p_q &\geq |Q|/2^{d} \\
	p_u &\geq |U|/2^{d}.
\end{align*}

When minimizing $\rho_q$ we have that \mbox{$\log(p_{q}/p_{2}) = -\log(|U|/2^{d})$}.
Setting $\varepsilon = 2\sqrt{\ln(d) /d}$ results in 
$\log(1/p_1) \geq -\frac{1+\alpha^{2\lambda} - 2\alpha^{\lambda}(\alpha + \varepsilon)}{1-\alpha^{2}}\log(|U|/2^{d}) - O(1/d^{2})$.
Putting things together:
\begin{align*}
	\frac{\log(p_q / p_1)}{\log(p_q / p_2)} &\geq 
	-\frac{\alpha^{2\lambda} \log(|U|/2^{d})}{\log(|U|/2^d)} \\ 
	&\quad +  
	\frac{\frac{1+\alpha^{2\lambda} - 2\alpha^{\lambda}(\alpha + \varepsilon)}{1-\alpha^{2}}\log(|U|/2^{d})
	+ O(1/d^{2})}{\log(|U|/2^d)} \\
	&= \frac{(1 - \alpha^{1 + \lambda})^{2} -  2\alpha^{\lambda}\varepsilon}{1-\alpha^{2}} + \frac{ O(1/d^{2})}{\log(|U|/2^d)} \\
	&= \frac{(1 - \alpha^{1 + \lambda})^{2}}{1-\alpha^{2}} - O(\sqrt{\log(d)/d}).
\end{align*}
The derivation of the lower bound for $\rho_u$ is almost the same and the resulting expression is
\begin{equation*}
	\frac{\log(p_u / p_1)}{\log( p_q / p_2)} \geq \frac{(\alpha^{\lambda} - \alpha)^{2}}{1-\alpha^{2}} - O(\sqrt{\log(d)/d}).
\end{equation*}
\section{Appendix: Comparison to Kapralov} \label{app:kapralov}
Kapralov uses $\alpha$ to denote a parameter controlling the space-time tradeoff 
for his solution to the $(r, cr)$-near neighbor problem in Euclidean space.
For every choice of tradeoff parameter $\alpha \in [0,1]$, 
assuming that \mbox{$c^2 \geq 3(1-\alpha)^2 - \alpha^2 + \varepsilon$} for arbitrarily small constant $\varepsilon > 0$, 
Kapralov \cite{kapralov2015} obtains query and update exponents
\begin{align*}
	\rho_q &= \frac{4(1-\alpha)^2}{c^2 + (1-\alpha)^2 - 3\alpha^2}, \\
	\rho_u &= \frac{4\alpha^2}{c^2 + (1-\alpha)^2 - 3\alpha^2}.
\end{align*}
We convert Kapralov's notation to our own by setting $\lambda = 1 - 2\alpha$.
To compare, Kapralov sets $\alpha = 0$ for near-linear space and we set $\lambda = 1$.
We want to write Kapralov's exponents on the form
\begin{equation*}
	\rho_q = \frac{c^2(1+\lambda)^2}{(c^2 + \lambda)^2 + x}, \quad \rho_u = \frac{c^2(1-\lambda)^2}{(c^2 + \lambda)^2 + x}
\end{equation*}
for some $x$ that we will proceed to derive.
We have that $(1-\alpha)^2 = (1+\lambda)^{2}/4$ and $\alpha^2 = (1-\lambda)^{2}/4$.
Multiplying the numerator and denominator in Kapralov's exponents by $c^2$ we can write Kapralov's exponents as
\begin{align*}
	\rho_q &= \frac{c^2(1+\lambda)^2}{c^4 + c^2 (1+\lambda)^2 / 4 - 3c^2 (1-\lambda)^2 / 4}, \\
	\rho_u &= \frac{c^2(1-\lambda)^2}{c^4 + c^2 (1+\lambda)^2 / 4 - 3c^2 (1-\lambda)^2 / 4}.
\end{align*}
We have that 
\begin{align*}
	x &= c^4 + c^2 (1+\lambda)^2 / 4 - 3c^2 (1-\lambda)^2 / 4 - (c^2 + \lambda)^2\\ 
	  &= -c^2(1+\lambda^2)/2 - \lambda^2. 
\end{align*}
For every choice of $\lambda \in [-1,1]$, and under the assumption that $c^2 \geq (1+\lambda)^2 / 2 + \lambda + \varepsilon$ 
for an arbitrarily small constant $\varepsilon > 0$, this allows us to write Kapralov's exponents as 
\begin{align*}
	\rho_q &= \frac{c^2(1+\lambda)^2}{(c^2 + \lambda)^2 - c^2(1+\lambda^2)/2 - \lambda^2}, \\
	\rho_u &= \frac{c^2(1-\lambda)^2}{(c^2 + \lambda)^2 - c^2(1+\lambda^2)/2 - \lambda^2}.
\end{align*}
To compare Kapralov's result against our own for search in $\ell_s$-spaces we consider the exponents from Theorem \ref{thm:lsvanilla},
ignoring additive $o(1)$ terms:
\begin{equation*}
		\rho_q = \frac{c^s (1 + \lambda)^2}{(c^s + \lambda)^2}, \quad 
		\rho_u = \frac{c^s (1 - \lambda)^2}{(c^s + \lambda)^2}.		
\end{equation*}
Setting $\lambda = 1$ we obtain a data structure that uses near-linear space and we get a query exponent $\rho_q = 16/25$ 
while Kapralov obtains an exponent of $\rho_q = 16/20$, ignoring $o(1)$ terms.
At the other end of the tradeoff, setting $\lambda = -1$, we get a data structure with query time $n^{o(1)}$ and update exponent $\rho_u = 16/9$ 
while Kapralov gets an update exponent of $\rho_u = 4$, again ignoring additive $o(1)$ terms.

The assumption made by Kapralov that $c^2 \geq (1+\lambda)^2 / 2 + \lambda + \varepsilon$ means that in the case of a 
near-linear space data structure ($\lambda = 1$) sublinear query time can only be obtained for $c > \sqrt{3}$.
In contrast, Theorem \ref{thm:lsvanilla} gives sublinear query time for every constant $c > 1$.
\section{Appendix: Details about dynamization and the model of computation}
In order to obtain fully dynamic data structures we apply a powerful dynamization 
result of Overmars and Leeuwen \cite{overmars1981} for decomposable searching problems.
Their result allows us to turn a partially dynamic data structure into a fully dynamic data structure, 
supporting arbitrary sequences of queries and updates, at the cost of a constant factor in the space and running time guarantees. 
Suppose we have a partially dynamic data structure that solves the $(r, cr)$-near neighbor problem on a set of $n$ points.
By partially dynamic we mean that, after initialization on a set $P$ of $n$ points, 
the data structure supports $\Theta(n)$ updates without changing the query time by more than a constant factor.
Let $T_{q}(n)$, $T_{u}(n)$, and $T_{c}(n)$ denote the query time, update time, and construction time of such a data structure containing $n$ points.
Then, by Theorem 1 of Overmars and Leeuwen \cite{overmars1981}, there exists a fully dynamic version of the data structure 
with query time $O(T_{q}(n))$ and update time $O(T_{u}(n) + T_{c}(n)/n)$ that uses only a constant factor additional space.   
The data structures presented in this paper, as well as most related constructions from the literature, 
have the property that $T_{c}(n)/n = O(T_{u}(n))$, allowing us to go from a partially dynamic 
to a fully dynamic data structure ``for free'' in big O notation.  

In terms of guaranteeing that the query operation solves the $(r, cr)$-near neighbor problem on the set of points $P$ 
currently inserted into the data structure, we allow a constant failure probability $\delta < 1$, 
typically around $1/2$, and omit it from our statements.
We make the standard assumption that the adversary does not have knowledge of the randomness used by the data structure.
Say we have a data structure with constant failure probability and a bound on the expected space usage. 
Then, for every positive integer $T$ we can create a collection of $O(\log T)$ independent repetitions of the data structure such that
for every sequence of $T$ operations it holds with high probability in $T$ that the space usage will never 
exceed the expectation by more than a constant factor and no query will fail.

\subsection{Model of computation}
We use the standard word RAM model as defined by Hagerup \cite{hagerup1998} with a word size of $\Theta(\log n)$ bits. 
Unless otherwise stated, we make the assumption that a point in $(X, D)$ can be stored in $d$ words and 
that the dissimilarity between two arbitrary points can be computed in $d$ operations where $d$ is a positive integer 
that corresponds to the dimension in the various well-studied settings mentioned in the main text.
Furthermore, when describing framework-based solutions to the $(r, cr)$-near neighbor problem, 
we make the assumption that we can sample, evaluate, and represent elements from $\LSF$ and $\LSH$ 
with neglible error using space and time $dn^{o(1)}$.

Many of the LSH and LSF families rely on random samples from the standard normal distribution.
We will ignore potential problems resulting from rounding due to the fact that our model only supports finite precision arithmetic.
This approach is standard in the literature and can be justified by noting that the error introduced by rounding is neglible.
Furthermore, there exists small pseudorandom standard normal distributions that support sampling 
using only few uniformly distributed bits as noted by Charikar \cite{charikar2002}. 
In much of the related literature the model of computation is left unspecified and statements 
about the complexity of solutions to the $(r, cr)$-near neighbor problem are usually made with respect to particular operations 
such as the hash function computations, distance computations, etc., leaving out other details \cite{indyk1998, har-peled2012}.  
\section{Addendum: An improved framework}
The LSF framework in Theorem \ref{thm:lsf} suffers from large lower-order terms that depend on the $(r, cr, p_1, p_2, p_q, p_u)$-sensitivity properties of $\LSF$.
With the parameterization in Appendix \ref{app:framework} the framework uses $O(\tau n^{\min(\rho_q, \rho_u)})$ filters from $\LSF$ where $\tau \leq \log(1/p_1)/\log(\min(p_q, p_u)/p_1)$.
In addition, the query and update time have a multiplicative factor $e^\tau$ which can potentially be very large and where we have to assume explicitly that $e^\tau = n^{o(1)}$.
We will use a combination of techniques in recent work on set similarity search~\cite{christiani2017set} and fast locality-sensitive hashing frameworks~\cite{dahlgaard2017fast, christiani2017fast} to give an improved LSF framework with more precise complexity bounds. 

The data structure produced by the framework follows the high-level approach as outlined in Section \ref{sec:lsfds}: queries and updates are mapped to a collection of buckets that are searched for similar points in the case of a query, or updated to store a reference to the point in the case of an update.
Let $V \colon X \to R$ denote the mapping from query points to buckets and $W \colon X \to R$ denote the corresponding map for updates. 
The set of buckets $V(x)$ will be identified by the ``survivors'' of $w$ branching processes through $k$ collections of $m$ filters, similarly to the Chosen Path algorithm~\cite{christiani2017set}.

The data structure is initialized by sampling $k$ collections of $m$ filters.
We will use the notation $Q_{i,j}$ ($U_{i,j}$) to denote the $j$th query (update) filter in the $i$th collection.
For $i = 1, \dots, k$ let $h_i \colon [w] \times [m]^{i} \to [0,1]$ denote a pairwise independent random hash function.
Let $\lambda \in [0,1]$ be a parameter to be determined later and let $\circ$ denote vector concatenation,
then the locality-sensitive map $V$ is defined recursively as follows:
\begin{equation*}
	V_{i}(x) = 
	\begin{cases}
		\{ p \circ j \mid p \in V_{i-1}(x) \land h_{i}(p \circ j) < \lambda \land x \in Q_{i,j} \} & \text{if $i > 0$} \\ 
		[w]	& \text{if $i = 0$.}
	\end{cases}
\end{equation*}
The map $W$ is defined in the same way except it uses $U_{i,j}$ instead of $Q_{i,j}$.

\paragraph{Properties.}
To show that the maps $V, W$ provide an efficient solution to the $(r, cr)$-near neighbor problem we need to show the following:
\begin{itemize}
	\item An upper bound on the expected size of $V(x)$ and $W(x)$ to bound the expected number of buckets probed during queries/updates.
	\item An upper bound on the expected size of $V(x) \cap W(y)$ when $\dist(x, y) > cr$ to bound the expected number of distant points that will be encountered during the linear scan part of the query algorithm.
	\item That $V(x) \cap W(y)$ is non-empty with constant probability when $\dist(x, y) \leq r$ to guarantee that the query algorithm encounters a point at distance at most $r$ with constant probability, provided such a point exists. 
\end{itemize}
By the independence between the different levels $i = 1, \dots, k$ in the branching process we have that
\begin{equation*}
	\E[|V_i(x)|] = \E[|V_{i-1}(x)|] m \lambda p_q = w (m \lambda p_q)^{i}.
\end{equation*}
Given $x, y \in X$ define $Z_i = V_i(x) \cap W_i(y)$. Define $p = \Pr[x \in Q, y \in U]$ where $(Q, U)$ is sampled from $\LSF$.
The expected number of collisions between $x$ and $y$ at level $i$ is then given by
\begin{equation*}
	\E[|Z_i|] = \E[|Z_{i-1}|] m \lambda p = w(m \lambda p)^{i}.  
\end{equation*}
To show correctness of the scheme we will use Chebyshev's inequality to show that with constant probability we have $|Z_i| > 0$ for points $x, y$ with $\dist(x, y) \leq r$. 
We proceed by upper bounding $\E[|Z_i|^2]$ in order to bound the variance $\Var[|Z_i|] = \E[|Z_i|^2] - \E[|Z_i|]^2$.
To ease the derivation we define $Y_{p,j} = \1\{ h_i(p \circ j) < \lambda \land (x,y) \in (Q_{i,j}, U_{i,j}) \}$ where we suppress the subscript $i$.
Without loss of generality we can assume that $p = p_1$ since $\dist(x, y) \leq r$.
\begin{align*}
\E[|Z_i|^2] &= \E\left[\left(\sum_{p \in Z_{i-1}} \sum_{j \in [m]} Y_{p,j} \right)^2 \right] \\
	        &= \E\left[ \sum_{p \neq p'} \left(\sum_{j \neq j'} Y_{p,j} Y_{p',j'} +  \sum_{j = j'} Y_{p,j} Y_{p',j'} \right)  \right] \\
	&\qquad +  \E\left[ \sum_{p = p'} \left(\sum_{j \neq j'} Y_{p,j} Y_{p',j'} +  \sum_{j = j'} Y_{p,j} Y_{p',j'} \right)  \right] \\	
	&= \E[|Z_{i-1}|^2 - |Z_{i-1}|]((m^2 - m) \lambda^2 p_1^{2} + m\lambda^2 p_1) \\
	&\qquad + \E[|Z_{i-1}|]((m^2 - m)\lambda^2 p_{1}^{2} + m\lambda p_1) \\
	&\leq \E[|Z_{i-1}|^2](1 + 1/m p_1)(m \lambda p_1)^2 + \E[|Z_{i-1}|]m \lambda p_1. 
\end{align*}
Since $i \leq k$ if we set $m \geq \ln(1 + \varepsilon) k / p_1$ we have that
\begin{equation*}
	\E[|Z_i|^2] \leq (1+\varepsilon)w^2 (m \lambda p_1)^{2i} + (1+\varepsilon) \E[|Z_i|] \sum_{s = 0}^{i-1} (m \lambda p_1)^{s}.
\end{equation*}
We will set the parameters in order to give a simple upper bound the worst-case performance of the data structure.
The constants can be improved.
\begin{align*}
	\varepsilon &= 1/10, \\
	k &= \lceil \log(n) / \log(p_q / p_2) \rceil, \\ 
	m &= \lceil \ln(1 + \varepsilon) k / p_1 \rceil, \\ 
	\lambda &= 1/(m \min(p_q, p_u)), \\
	w &= \lceil 10 k (\min(p_q, p_u) / p_1)^{k} \rceil.
\end{align*}
We can now bound the variance of $|Z_k|$ as follows:
\begin{align*}
	\Var[|Z_k|] &= \E[|Z_k|^2] - \E[|Z_k|]^2 \\
				&\leq  \E[|Z_k|]^{2}/10 + (1 + 1/10) \E[|Z_k|] \sum_{s = 0}^{k-1} (p_1 / \min(p_q, p_u))^{s} \\
				&\leq  \E[|Z_k|]^{2}/10 + (1 + 1/10) k \E[|Z_k|] 
\end{align*}
where we use the fact that $p_1 \leq \min(p_q, p_u)$.
By Chebyshev's inequality we have that 
\begin{align*}
	\Pr[|Z_{k}| < 0] &\leq \Var[|Z_k|] / \E[|Z_k|]^2 \\
					 &\leq 1/10 + (1 + 1/10) k / \E[|Z_k|] 
\end{align*}
By our parameter setting we have $\E[|Z_k|] = w(m \lambda p_1)^k \geq 10 k$ so $x, y$ collide with probability at least $7/10$ under $V, W$, ensuring correctness.

\subsection{Fast evaluation}
We will use a hashing trick to compute $V_k(x)$ in expected time $O(k \E[|V_k(x)|])$. 
This technique is only briefly mentioned in~\cite{christiani2017set}. 
Observe that for the correctness argument to hold, it suffices that the hash functions $h_1, \dots, h_k$ are sampled independently from a pairwise independent family~\cite{carter1979, wegman1981}.
At the $i$th step in the computation of $V_{k}(x)$ we wish to determine, for each $p \in V_{i-1}(x)$ the set of $j \in [m]$ satisfying $h_i(p \circ j) < \lambda$ and $x \in Q_{i,j}$. 
In order to answer this efficiently we will make use of the property that a pairwise independent hash function can be decomposed as
\begin{equation*}
	h_i(p \circ j) = g_i(p) \oplus f_i(j)
\end{equation*}
where $g_i, f_i$ are pairwise independent and $\oplus$ denotes addition in an abelian group.
For concreteness assume that $g_i, f_i$ map to $b$-bit strings and let $\oplus$ denote the exclusive-or operator.
If we view the $b$-bit output of $h_i(p \circ j)$ as an integer in the set $0, 1,  \dots, 2^{b} - 1$ using the standard base two representation, the original condition $h_i(p \circ j) < \lambda$ can be transformed into the condition $h_i(p \circ j) < \lambda M$ where $M = 2^b - 1$. 
By choosing $b = \Theta(\log n)$ we can with high probability determine whether the condition is satisfied without reading more than $b$ bits, so we can effectively treat the output of the hash function as a real number at the cost of a small increase in the failure probability of the data structure. 

Continuing with the new representation, in order for $h_i(p \circ j) < \lambda M$ we must have that the leading $\kappa = b - \lceil \log_{2}(\lambda M) \rceil - 1$ bits of the output of $g_i(p) \oplus f_i(j)$ is all zeroes. 
Given the leading $\kappa$ bits of $g_i(p)$ we can restrict our attention to $j \in [m]$ with the same value in the leading $\kappa$ bits of $f_i(j)$.
At the beginning of the query algorithm, for each $i \in k$ we determine the subset $J \subseteq [m]$ such that $x \in Q_{i,j}$
We then create a table with $2^\kappa$ linked lists and for each $j \in J$ we append $j$ to the linked list at the table entry given by the leading $\kappa$ bits of $f_i(j)$.
The running time and space usage of preparing these additional data structures is dominated by the complexity of evaluating and storing $O(k^2 / p_1)$ filters from $\LSF$.

Now, given $V_{i-1}(x)$ we can compute $V_{i}(x)$ in expected time $O(m \lambda p_q |V_{i-1}(x)|)$ by, for each $p \in V_i(x)$, looking up the relevant table entry (given by the leading $\kappa$ bits of $g_i(p)$) and verifying whether the elements of the linked list satisfy the hashing condition.
Every element of the linked list found in this way satisfies the hashing condition with constant probability by our setting of $\kappa$.
To implement $g_i$ and $f_i$ we can use simple tabulation hashing~\cite{zobrist1970new}.

One problem remains: long paths $p \in [w] \times [m]^i$ can take super constant time to hash. 
To prevent this we again use hashing to create $O(\log n)$-bit fingerprints of the paths that we work on instead.
A conservative upper bound on the expected time to compute $V_k(x)$ is $O(k \E[|V_k(x)|])$ since $\E[|V_i(x)|]$ is non-decreasing in $i$ and the expected time spent at level $i$ is upper bounded by $O(\E[|V_{i}(x)|])$.
We use the same approach to compute $W_k(x)$.

\subsection{Framework}
We are now ready to state the properties of the new framework.
\begin{theorem}\label{thm:lsf_new}
	Given a $(r, cr, p_q, p_u, p_1, p_2)$-sensitive family $\LSF$ we can construct a fully dynamic data structure that solves the $(r, cr)$-near neighbor problem. Define $k = \lceil \log(n) / \log(p_q / p_2) \rceil$, then:
\begin{itemize}
	\item The data structure uses $O(kn(p_u/p_1)^k)$ words of space in addition to the space required to store $n$ data points and $O( k^2 / p_1)$ filters from $\LSF$.
	\item The query operation uses $O(k^2 (p_q / p_1)^k))$ word-RAM operations, $O(k (p_q / p_1)^k)$ distance computations, and $O(k^2 / p_1)$ filter evaluations.
	\item The update operation uses $O(k^2 (p_u / p_1)^k))$ word-RAM operations and $O(k^2 / p_1)$ filter evaluations.
\end{itemize}
\end{theorem}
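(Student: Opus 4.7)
The plan is to leverage the construction and correctness analysis of the addendum --- which already establishes via Chebyshev that $V_k(x) \cap W_k(y) \neq \emptyset$ with probability at least $7/10$ when $\dist(x,y) \leq r$ --- and focus the proof on the four resource bounds: expected sizes of $V_k(x), W_k(x)$, expected number of far-pair collisions, filter-sampling/evaluation counts, and overall storage. After these expectation bounds are in hand, I would apply Markov's inequality together with a constant number of independent repetitions of the data structure to meet the stated guarantees with constant probability, and then apply the Overmars--van Leeuwen transformation for decomposable searching problems to go from the partially dynamic solution to a fully dynamic one with only constant-factor overhead.

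For the expected sizes I would iterate the identity $\E[|V_i(x)|] = m\lambda p_q \cdot \E[|V_{i-1}(x)|]$ already derived in the addendum, obtaining $\E[|V_k(x)|] = w \cdot (p_q / \min(p_q,p_u))^k$; substituting the parameter choice $w = \lceil 10k(\min(p_q,p_u)/p_1)^k \rceil$ collapses this to $O(k(p_q/p_1)^k)$, and a symmetric argument yields $\E[|W_k(x)|] = O(k(p_u/p_1)^k)$. For far pairs $(x,y)$ with $\dist(x,y) > cr$, the same recursion gives $\E[|V_k(x) \cap W_k(y)|] \leq w(p_2/\min(p_q,p_u))^k$; summing over the $n$ candidates and using the defining identity $n \cdot p_2^k \leq p_q^k$ from $k = \lceil \log n / \log(p_q/p_2)\rceil$ yields the expected $O(k(p_q/p_1)^k)$ distance-computation bound. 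The count of filter evaluations is simply $k$ levels times $m = O(k/p_1)$ filters per level, producing $O(k^2/p_1)$ evaluations and stored filters. The auxiliary storage for bucket references is the sum over data points of $\E[|W_k(x)|]$, giving $O(nk(p_u/p_1)^k)$ words.

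The word-RAM bounds require invoking the fast-evaluation scheme from the addendum: by decomposing each pairwise-independent $h_i$ as $g_i(p) \oplus f_i(j)$, precomputing for each level a bucketed table indexed by the leading $\kappa$ bits of $f_i$ over those $j$ whose filter contains $x$, and storing $\Theta(\log n)$-bit fingerprints of partial paths instead of raw paths, the update and query procedures can enumerate $V_k(x)$ and $W_k(x)$ in expected time proportional to $k$ times the corresponding expected size. Since $\E[|V_i(x)|]$ is non-decreasing in $i$ (because $m\lambda p_q = p_q/\min(p_q,p_u) \geq 1$), the total work telescopes to $O(k \E[|V_k(x)|]) = O(k^2(p_q/p_1)^k)$, and symmetrically $O(k^2(p_u/p_1)^k)$ for updates.

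The main obstacle I expect lies in making the fast-evaluation step rigorous: one has to show that truncating $h_i$ to $\Theta(\log n)$ bits and identifying paths by $\Theta(\log n)$-bit fingerprints introduces only a polynomially small failure probability (absorbable into the $7/10$ correctness slack via a union bound), and that the per-level work really is $O(\E[|V_i(x)|])$ rather than being inflated by spurious lookups in the $f_i$-buckets. A secondary nuisance is bookkeeping in the Overmars--van Leeuwen reduction to confirm that the construction time $T_c(n)$ satisfies $T_c(n)/n = O(T_u(n))$, so that full dynamization costs nothing asymptotically; this follows from the fact that initialization just samples and stores the $O(k^2/p_1)$ filters and the $k$ hash functions, which is dominated by the cost of $n$ insertions.
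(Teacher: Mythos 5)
Your proposal is correct and follows essentially the same route as the paper: the theorem is exactly the assembly of the addendum's branching-process expectation recursions, the parameter choices ($k$, $m$, $\lambda$, $w$), the Chebyshev correctness bound, and the hashing/fingerprinting trick giving $O(k\,\E[|V_k(x)|])$ evaluation time, with the far-pair count controlled via $n p_2^k \leq p_q^k$ and dynamization handled by the standard Overmars--van Leeuwen reduction already used in the framework. The bookkeeping you supply (space as $\sum_x \E[|W_k(x)|]$, filter evaluations as $km = O(k^2/p_1)$, monotonicity of $\E[|V_i(x)|]$) matches the paper's argument.
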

Compared to the usual formulation where the query time is stated as $n^{\rho_q + o(1)}$ Theorem \ref{thm:lsf_new} offers a more precise statement of the complexity and can be converted to the other formulation.
The lower order terms are now confined to the multiplicative factor $k$ which is a standard expression that also appears in the LSH framework as $k = \lceil \log(n) / \log(1/p_2) \rceil$ where $p_2$ is an upper bound on the collision probability between pairs of points $x, y$ with $\dist(x, y) > cr$. 
The analysis can be tightened further by not using $k$ as an upper bound for $\sum_{s = 0}^{k-1} (p_1 / \min(p_q, p_u))^{s}$ when bounding the variance, 
but removing the multiplicative dependence on $k$ entirely as in the improved LSH framework~\cite{christiani2017fast} is an interesting open problem.

\chapter{Set similarity search beyond MinHash} \label{ch:sparse}
\sectionquote{From the ashes, a fire shall be woken}
\noindent We consider the problem of approximate set similarity search under Braun-Blanquet similarity $B(x, y) = |x \cap y| / \max(|x|, |y|)$.
The $(b_1, b_2)$-approximate Braun-Blanquet similarity search problem is to preprocess a collection of sets $P$ such that, given a query set $q$, 
if there exists $x \in P$ with $B(q, x) \geq b_1$, then we can efficiently return $x' \in P$ with $B(q, x') > b_2$.

We present a simple data structure that solves this problem with space usage $O(n^{1+\rho}\log n + \sum_{x \in P}|x|)$ and query time $O(|q|n^{\rho} \log n)$ where $n = |P|$ and $\rho = \log(1/b_1)/\log(1/b_2)$.
Making use of existing lower bounds for locality-sensitive hashing by O'Donnell et al.~\cite{odonnell2014optimal} we show that this value of $\rho$ is tight across the parameter space, i.e., for every choice of constants $0 < b_2 < b_1 < 1$. 

In the case where all sets have the same size our solution strictly improves upon the value of $\rho$ that can be obtained through the use of state-of-the-art data-independent techniques in the Indyk-Motwani locality-sensitive hashing framework~\cite{indyk1998} such as Broder's MinHash~\cite{broder1997syntactic} for Jaccard similarity and Andoni et al.'s cross-polytope LSH~\cite{andoni2015practical} for cosine similarity.
Surprisingly, even though our solution is data-independent, for a large part of the parameter space we outperform the currently best data-\emph{dependent} method by Andoni and Razenshteyn~\cite{andoni2015optimal}.
\section{Introduction}
In this paper we consider the approximate set similarity problem or, 
equivalently, the problem of approximate Hamming near neighbor search in sparse vectors. 
Data that can be represented as sparse vectors is ubiquitous --- a typical example is the representation of text documents as \emph{term vectors}, 
where non-zero vector entries correspond to occurrences of words (or shingles).
In order to perform identification of near-identical text documents in web-scale collections, 
Broder et al.~\cite{bro97b,bro97a} designed and implemented \emph{MinHash} (a.k.a.~min-wise hashing), 
now understood as a locality-sensitive hash function~\cite{har-peled2012}.
This allowed approximate answers to similarity queries to be computed much faster than by other methods, 
and in particular made it possible to cluster the web pages of the AltaVista search engine (for the purpose of eliminating near-duplicate search results).
Almost two decades after it was first described, 
MinHash remains one of the most widely used locality-sensitive hashing methods as witnessed by thousands of citations of~\cite{bro97b,bro97a} 
as well as the ACM Paris Kanellakis Theory and Practice Award that Broder shared with Indyk and Charikar in 2012. 

A \emph{similarity measure} maps a pair of vectors to a similarity score in $[0,1]$.
It will often be convenient to interpret a vector $x\in\{0,1\}^d$ as the set $\{ i \; | \; x_i=1\}$.
With this convention the \emph{Jaccard similarity} of two vectors can be expressed as $J(x,y) = |x \cap y|/|x \cup y|$.
In \emph{approximate similarity search} we are interested the problem of searching a data set $P\subseteq \{0,1\}^d$ for a vector 
of similarity at least~$j_1$ with a query vector $q \in \{0,1\}^d$, but allow the search algorithm to return a vector of similarity  $j_2 < j_1$. 
To simplify the exposition we will assume throughout the introduction that all vectors are $t$-sparse, i.e., have the same Hamming weight $t$.

Recent theoretical advances in data structures for approximate \emph{near neighbor} search in Hamming space~\cite{andoni2015optimal} make it possible to beat the asymptotic performance of MinHash-based Jaccard similarity search (using the LSH framework of~\cite{har-peled2012}) in cases where the similarity threshold $j_2$ is not too small.
However, numerical computations suggest that MinHash is always better when $j_2<1/45$. 

In this paper we address the problem: Can similarity search using MinHash be improved \emph{in general}?
We give an affirmative answer in the case where all sets have the same size $t$ by introducing \textsc{Chosen Path}: a simple data-independent search method that strictly improves MinHash, 
and is always better than the data-dependent method of~\cite{andoni2015optimal} when $j_2<1/9$.
Similar to data-independent locality-sensitive filtering (LSF) methods~\cite{becker2016, laarhoven2015, christiani2017framework} our method works 
by mapping each data (or query) vector to a set of keys that must be stored (or looked up).
The name \textsc{Chosen Path} stems from the way the mapping is constructed: As paths in a layered random graph where the vertices at each layer is identified with the set $\{1,\dots,d\}$ of dimensions, 
and where a vector $x$ is only allowed to choose paths that stick to non-zero components $x_i$.
This is illustrated in Figure~\ref{fig:branchingprocess}.
\begin{figure}   
	\centering
	\includegraphics[width=0.5\textwidth]{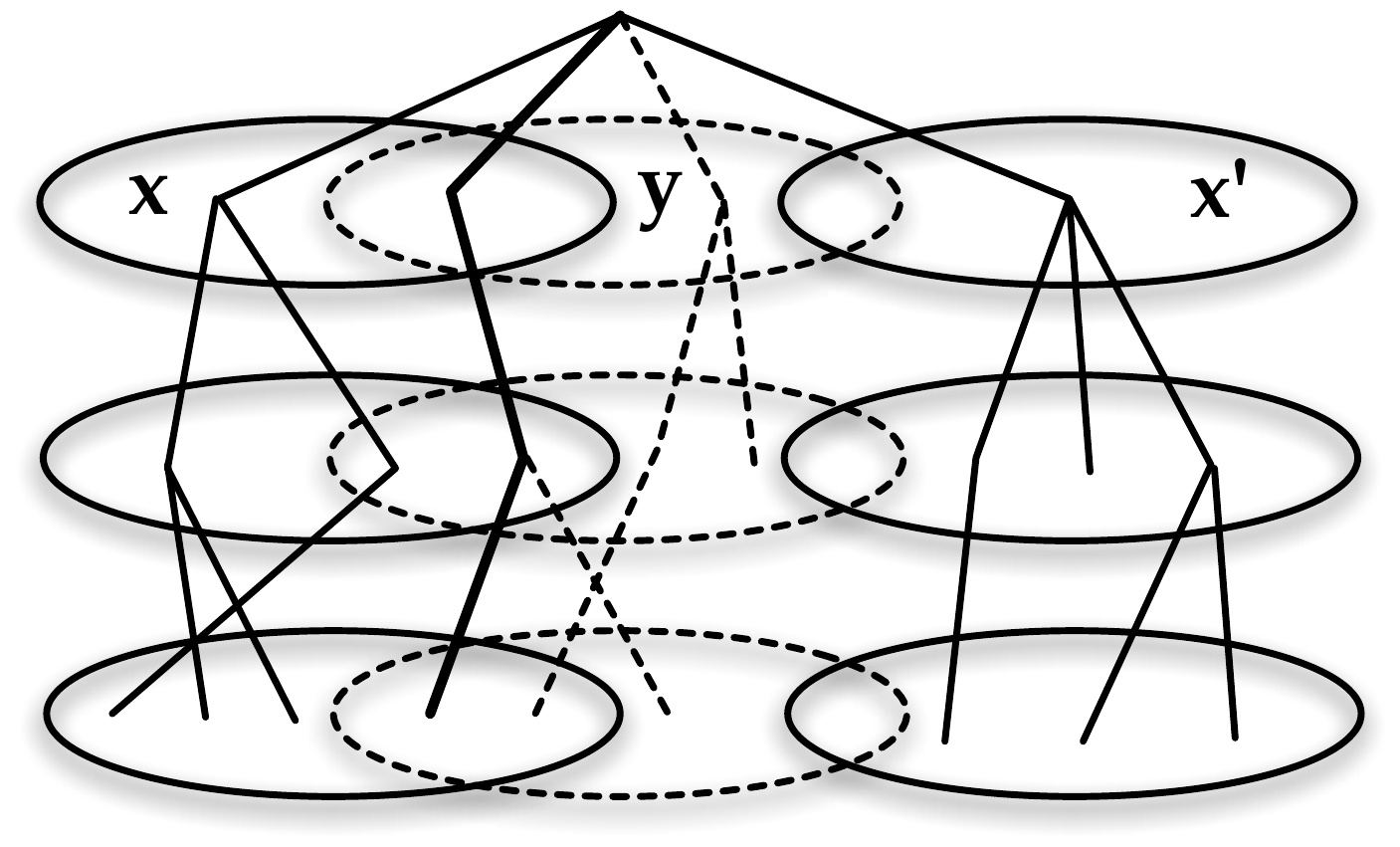}
	\caption{\textsc{ Chosen Path} uses a branching process to associate each vector $x\in \bitcube{d}$ with a set $M_k(x) \subseteq \{1,\dots,d\}^k$ of paths of lengtk~$k$ (in the picture $k=3$). 
	The paths associated with $x$ are limited to indices in the set $\{ i \; | \; x_i=1\}$, represented by an ellipsoid at each level in the illustration. 
	In the example the set sizes are: $|M_3(x)| = 4$ and $|M_3(y)| = |M_3(x')| = 3$. 
	Parameters are chosen such that a query~$y$ that is similar to $x \in P$ is likely to have a common path in $x \cap y$ (shown as a bold line), 
	whereas it shares few paths in expectation with vectors such as $x'$ that are not similar.}
	\label{fig:branchingprocess}
\end{figure}
\subsection{Related Work}
High-dimensional approximate similarity search methods can be characterized in terms of their \emph{$\rho$-value} which is the exponent for which queries can be answered in time $O(dn^\rho)$, 
where $n$ is the size of the set $P$ and $d$ denotes the dimensionality of the space.
Here we focus on the ``balanced'' case where we aim for space $O(n^{1+\rho} + dn)$, 
but note that there now exist techniques for obtaining other trade-offs between query time and space overhead~\cite{andoni2017optimal,christiani2017framework}.

\paragraph{Locality-sensitive hashing methods.}
We begin by describing results for Hamming space, which is a special case of similarity search on the unit sphere (many of the results cited apply to the more general case).
In Hamming space the focus has traditionally been on the $\rho$-value that can be obtained for solutions to the \emph{$(r, cr)$-approximate near neighbor} problem: 
Preprocess a set of points $P \subseteq \bitcube{d}$ such that, given a query point $q$, if there exists $x \in P$ with $\norm{x - q}_{1} \leq r$, then return $x' \in P$ with $\norm{x' - q}_1 < cr$.
In the literature this problem is often presented as the $c$-approximate near neighbor problem where bounds for the $\rho$-value are stated in terms of $c$ and, 
in the case of upper bounds, hold for every choice of $r$, while lower bounds may only hold for specific choices of $r$. 

O'Donnell et al.~\cite{odonnell2014optimal} have shown that the value $\rho=1/c$ for $c$-approximate near neighbor search in Hamming space, 
obtained in the seminal work of Indyk and Motwani~\cite{indyk1998}, is the best possible in terms of $c$ for schemes based on Locality-Sensitive Hashing (LSH). 
However, the lower bound only applies when the distances of interest, $r$ and~$cr$, are relatively small compared to $d$, and better upper bounds are known for large distances.
Notably, other LSH schemes for angular distance on the unit sphere such as cross-polytope LSH~\cite{andoni2015practical} give lower $\rho$-values for large distances.
Extensions of the lower bound of~\cite{odonnell2014optimal} to cover more of the parameter space were recently given in~\cite{andoni2017optimal,christiani2017framework}.
Until recently the best $\rho$-value known in terms of $c$ was $1/c$, 
but in a sequence of papers Andoni et al.~\cite{andoni2014beyond,andoni2015optimal} have shown how to use \emph{data-dependent} LSH techniques to achieve $\rho = 1/(2c-1) + o_n(1)$, 
bypassing the lower bound framework of~\cite{odonnell2014optimal} which assumes the LSH to be independent of data.
\medskip

\paragraph{Set similarity search.}
There exists a large number of different measures of set similarity with various applications for which it would be desirable to have efficient approximate similarity search algorithms~\cite{choi2010survey}.  
Given a measure of similarity assume that we have access to a family $\LSH$ of locality-sensitive hash functions (defined in Section \ref{sparse:sec:preliminaries})
such that for every pair of sets $x, y$ it holds that
\begin{equation*}
	\Pr[h(x) = h(y)] = \simil(x, y). 
\end{equation*}
when $h$ is sampled randomly from $\LSH$.
We will refer to a family of locality-sensitive hash functions with this specific property as a \emph{similarity-sensitive} family. 
Given a similarity-sensitive family we can use the LSH framework to construct a solution for the $(s_1, s_2)$-approximate similarity search problem with exponent $\rho = \log(1/s_1)/\log(1/s_2)$.

Regarding the existence of similarity-sensitive families it was shown by Charikar~\cite{charikar2002} that if the similarity measure $\simil(x, y)$ admits a similarity-sensitive LSH, 
then $1-\simil(x, y)$ must be a metric.
Recently, Chierichetti and Kumar \cite{chierichetti2015} showed that,
given a similarity $\sim$ that admits a similarity-sensitive LSH, 
the transformed similarity $f(\sim)$ will continue to admit an LSH if $f(\cdot)$ is a probability generating function.
The existence of an LSH that admits a similarity measure $\simil$ will therefore give rise to the existence of solutions to the approximate similarity search problem for the much larger class of similarities $f(\simil)$.
However, this still leaves open the problem of finding efficient explicit constructions, and as it turns out,
the property of similarity-sensitive families $\Pr[h(x) = h(y)] = \simil(x, y)$, while intuitively appealing and useful for similarity estimation,
does not necessarily imply that the LSH is optimal for solving the approximate search problem. 
In fact, it was recently shown~\cite{chierichetti2017distortion} that for Braun-Blanquet there does not exist a LSH scheme with $\Pr[h(x) = h(y)] = B(x, y) = |x \cap y| / \max(|x|, |y|)$.
Moreover, it was shown that MinHash achieves a two-approximation to Braun-Blanquet similarity and that this is optimal for LSH schemes. 

The problem of finding tight upper and lower bounds on the $\rho$-value that can be obtained through the LSH framework for data-independent $(s_1, s_2)$-approximate similarity search 
across the entire parameter space $(s_1, s_2)$ remains open for two of the most common measures of set similarity: 
Jaccard similarity $J(x, y) = |x \cap y| / |x \cup y|$ and cosine similarity $C(x, y) = |x \cap y|/\sqrt{|x||y|}$.

A random function from the MinHash family $\LSH_{\text{minhash}}$ hashes a set $x \subseteq \{1, \dots, d \}$
to the first element of $x$ in a random permutation of the set $\{1, \dots, d \}$.
For $h \sim \LSH_{\text{minhash}}$ we have that $\Pr[h(x) = h(y)] = J(x, y)$, yielding an LSH solution to the approximate Jaccard similarity search problem.
For cosine similarity the SimHash family $\LSH_{\text{simhash}}$, introduced by Charikar~\cite{charikar2002}, 
works by sampling a random hyperplane in $\real^{d}$ that passes through the origin and hashing $x$ according to what side of the hyperplane it lies on.
For $h \sim \LSH_{\text{simhash}}$ we have that $\Pr[h(x) = h(y)] = 1 - \arccos(C(x, y))/\pi$, which can be used to derive a solution for cosine similarity, 
although not the clean solution that we could have hoped for in the style of MinHash for Jaccard similarity.
There exists a number of different data-independent LSH approaches~\cite{terasawa2007spherical, andoni2014beyond, andoni2015practical} that improve upon the $\rho$-value of SimHash.
Perhaps surprisingly, it turns out that these approaches yield lower $\rho$-values for the $(j_1, j_2)$-approximate Jaccard similarity search problem compared to MinHash for certain combinations of $(j_1, j_2)$.  
Unfortunately, while asymptotically superior these techniques suffer from a non-trivial $o_{n}(1)$-term in the exponent that only decreases very slowly with $n$.
In comparison, both MinHash and SimHash are simple to describe and have closed expressions for their $\rho$-values.
Furthermore, MinHash and SimHash both have the advantage of being efficient in the sense that a hash function can be represented using space $O(d)$ and the time to compute $h(x)$ is $O(|x|)$. 
\begin{table*}
\caption{Overview of $\rho$-values for similarity search with Hamming vectors of equal weight $t$.}
\label{tab:comparison}
\renewcommand\arraystretch{2}
\scriptsize
\begin{center}
\begin{tabular}{|l|c|c|c|}
\hline
\textbf{{\diagbox{Ref.}{Measure}}} & 
\thead{Hamming\\ $r_1 < r_2$} &
\thead{Braun-Blanquet\\$b_1>b_2$} &  
\thead{Jaccard\\$j_1>j_2$} \\ 
\hline
\hline
Bit-sampling~\cite{indyk1998} & 
$r_1/r_2$ &
$\tfrac{1-b_1}{1-b_2}$ &
$\tfrac{1-j_1}{1+j_1}/\tfrac{1-j_2}{1+j_2}$ \\ 
\hline 
MinHash~\cite{bro97b} & 
$\log \tfrac{1 - r_{1}}{1 + r_{1}} / \log \tfrac{1 - r_{2}}{1 + r_{2}}$ & 
$\log\tfrac{b_1}{2-b_1} / \log\tfrac{b_2}{2-b_2}$ & 
$\log(j_1)/\log(j_2)$ \\
\hline
Cross-poly.~\cite{andoni2015practical} & 
$\frac{r_1}{r_2}\frac{1 - r_{2}/2}{1 - r_{1}/2}$ &
$\tfrac{1-b_1}{1+b_1}/\tfrac{1-b_2}{1+b_2}$ &
$\tfrac{1-j_1}{1+3j_1}/\tfrac{1-j_2}{1+3j_2}$ \\
\hline
Data-dep.~\cite{andoni2015optimal} & 
$\frac{r_1}{r_2}\frac{1}{2 - r_{1}/r_{2}}$ &
$\tfrac{1-b_1}{1 + b_1 - 2 b_2}$ & 
$\frac{(1-j_{1})(1+j_{2})}{1 - j_{1}j_{2} + 3(j_{1}-j_{2})}$ \\
\hline
\textbf{Theorem~\ref{thm:upper}} & 
$\log(1-r_{1}) / \log(1-r_{2})$ &
$\log(b_1)/\log(b_2)$ &
$\log\tfrac{2j_1}{1+j_1}/\log\tfrac{2j_2}{1+j_2}$ \\
\hline
\end{tabular}
\end{center}
\textbf{Notes:} While most results in the literature are stated for a single measure, 
the fixed weight restriction gives a 1-1 correspondence that makes it possible to express the results in terms of other similarity measures.
Hamming distances are normalized by a factor $2t$ to lie in $[0,1]$. 
Lower order terms of $\rho$-values are suppressed, and for bit-sampling LSH we assume that the Hamming distances 
are small relative to the dimensionality of the space, i.e., that $2r_{1}t/d = o(1)$.
\end{table*}
In Table \ref{tab:comparison} we show how the upper bounds for similarity search under different measures of set similarity relate to each other in the case where all sets are $t$-sparse.
In addition to Hamming distance and Jaccard similarity, we consider Braun-Blanquet similarity~\cite{braunblanquet1932} defined as
\begin{equation}\label{eq:B}
B(x, y) = |x \cap y| / \max(|x|, |y|), 
\end{equation}
which for $t$-sparse vectors is identical to cosine similarity.
When the query and the sets in $P$ can have different sizes the picture becomes muddled, 
and the question of which of the known algorithms is best for each measure of similarity is complicated and can depend on $(s_1, s_2)$.
In Section \ref{sec:equivalence} we treat the problem of different set sizes and provide a brief discussion for Jaccard similarity, 
specifically in relation to our upper bound for Braun-Blanquet similarity.

Similarity search under set similarity and the batched version often referred to as \emph{set similarity join}~\cite{arasu2006efficient,bayardo2007scaling}
have also been studied extensively in the information retrieval and database literature, but mostly without providing theoretical guarantees on performance.
Recently the notion of containment search, where the similarity measure is the (unnormalized) intersection size, was studied in the LSH framework~\cite{shrivastava2015asymmetric}.
This is a special case of \emph{maximum inner product} search~\cite{shrivastava2015asymmetric,ahle2015}.
However, these techniques do not give improvements in our setting.

\paragraph{Similarity estimation.}
Finally, we mention that another application of MinHash~\cite{bro97b,bro97a} is the (easier) problem of \emph{similarity estimation}, 
where the task is to condense each vector $x$ into a short signature $s(x)$ in such a way that the similarity $J(x,y)$ can be estimated from $s(x)$ and~$s(y)$.
A related similarity estimation technique was independently discovered by Cohen~\cite{cohen1997size}.
Thorup~\cite{thorup2013bottom} has shown how to perform similarity estimation using just a small amount of randomness in the definition of the function $s(\cdot)$.
In another direction, Mitzenmacher et al.~\cite{mitzenmacher2014} showed that it is possible to improve the performance of MinHash for similarity estimation when the Jaccard similarity is close to~1,
but for smaller similarities it is known that succinct encodings of MinHash such as the one in~\cite{li2011theory} comes within a constant factor of the optimal space for storing~$s(x)$~\cite{pagh2014min}.
Curiously, our improvement to MinHash in the context of similarity \emph{search} comes when the similarity is neither too large nor too small.
Our techniques do not seem to yield any improvement for the similarity \emph{estimation} problem.

\subsection{Contribution}
We show the following upper bound for approximate similarity search under Braun-Blanquet similarity: 
\begin{theorem}\label{thm:upper}
	For every choice of constants \mbox{$0 < b_{2} < b_{1} < 1$} we can solve the $(b_{1}, b_{2})$-approximate similarity search problem under Braun-Blanquet similarity 
	with query time $O(|q|n^{\rho} \log n)$ and space usage $O(n^{1+\rho}\log n + \sum_{x \in P}|x|)$ where $\rho = \log(1/b_1) / \log(1/b_2)$.
\end{theorem}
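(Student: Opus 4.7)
The plan is to construct and analyze the \textsc{Chosen Path} data structure suggested by Figure~\ref{fig:branchingprocess}. For parameters $k$, $w$, and survival probability $p^\star$, define a random map $M_k \colon 2^{[d]} \to 2^{[d]^k}$ via a $w$-rooted branching process: for every label $p \in [d]^{\leq k}$ sample an independent (or merely pairwise-independent) coin $c(p)$ of bias $p^\star$, and declare $p \circ j \in M_i(x)$ iff $p \in M_{i-1}(x)$, $j \in x$, and $c(p \circ j) = 1$. Because the coins depend only on path labels and not on $x$, the key identity $M_k(x) \cap M_k(y) = M_k(x \cap y)$ holds, and in the uniform-weight case $|x|=|y|=t$ with $p^\star = 1/t$ one obtains $\E[|M_k(x)|] = w$ and $\E[|M_k(x) \cap M_k(y)|] = w\, B(x,y)^k$.

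I would then set $k = \lceil \log n / \log(1/b_2) \rceil$, so that $b_2^k \leq 1/n$ and $b_1^k = n^{-\rho}$, and choose $w = \Theta(n^\rho \log n)$. The data structure stores, for every $x \in P$, the list $M_k(x)$ in a hash table keyed by path, using space $O(nw + \sum_x |x|) = O(n^{1+\rho}\log n + \sum_x |x|)$ as claimed. A query enumerates $M_k(q)$ by walking the branching process for $q$ top-down; using a pairwise-independent hashing trick analogous to the one in the addendum to Chapter~\ref{ch:filters} the per-node work can be made $O(1)$ amortised, so the total expected enumeration cost per level is $O(w)$ and across all $k$ levels is $O(kw) = O(n^\rho \log^2 n)$. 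The expected number of candidate distance computations is bounded by $\sum_{x \in P}\E[|M_k(q) \cap M_k(x)|] \leq nwb_2^k = O(w)$, each of cost $O(|q|)$, which dominates the enumeration and gives the claimed $O(|q| n^\rho \log n)$ query time.

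The delicate step is showing that a near neighbor $x^\star$ with $B(q, x^\star) \geq b_1$ collides with $q$ on at least one path with constant probability, since paths within the same tree share prefixes and are strongly correlated. I would apply the second moment method to $Z := |M_k(q) \cap M_k(x^\star)| = |M_k(q \cap x^\star)|$. The number of ordered pairs of length-$k$ paths in $(q \cap x^\star)^k$ that agree on their first $j$ coordinates is at most $|q \cap x^\star|^{2k-j}$, and their joint survival probability is $(p^\star)^{2k-j}$, so each of the $w$ independent trees contributes at most $\sum_{j=0}^{k} B(q, x^\star)^{2k-j} = O(B(q, x^\star)^k)$ to the second-moment-of-a-single-tree (the geometric sum converges because $b_1 < 1$ is a constant bounded away from $1$). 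Combined with independence across trees this yields $\Var[Z] = O(\E[Z])$ while $\E[Z] = wb_1^k = \Theta(\log n)$, so Cantelli's inequality $\Pr[Z = 0] \leq \sigma^2/(\mu^2 + \sigma^2)$ gives a failure probability of $O(1/\log n)$, well below the $1/2$ required by the problem.

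Finally, to remove the uniform-weight assumption I would partition $P$ into $O(\log d)$ geometric weight classes $P_i = \{x : 2^{i-1} \leq |x| < 2^i\}$ and build one \textsc{Chosen Path} structure per class with survival probability $p^\star = 2^{-i}$. A short case analysis on $\max(|q|, |x|)$ shows that $B(q, x) \geq b_1$ forces $|x| \in [b_1|q|, |q|/b_1]$, so a query consults only $O(\log(1/b_1)) = O(1)$ classes, each with effective $t = 2^i = \Theta(|q|)$, and the uniform-weight bounds transfer with only constant-factor losses. The main obstacle in the entire argument is the tree-correlated variance bound in the previous paragraph, since paths in the same tree are highly dependent through shared prefixes and a naive Markov bound is insufficient; once the $\E[Z_1^2] = O(\E[Z_1])$ estimate is in place, the rest follows the standard LSH/LSF blueprint and yields Theorem~\ref{thm:upper}.
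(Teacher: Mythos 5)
The most serious gap is your reduction from arbitrary set sizes to the uniform-weight case. With dyadic weight classes and a fixed survival probability $p^\star = 2^{-i}$ per class, the per-level collision probability of a pair $(q,x)$ is only within a constant factor of its uniform-case value (e.g.\ two sets of size $2^{i-1}$, at the bottom of class $i$, with similarity $B(q,x)=b_1$ have per-level collision probability $b_1 2^{i-1}\cdot 2^{-i} = b_1/2$). That slack is incurred at \emph{every} level and compounds over $k = \Theta(\log n/\log(1/b_2))$ levels: the $k$-level collision probability becomes $(b_1/2)^k$ rather than $b_1^k \approx n^{-\rho}$, i.e.\ off by a factor $n^{\Theta(1)}$, so with $w=\Theta(n^\rho\log n)$ roots the success probability is polynomially small, and compensating destroys the exponent $\rho$. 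So ``the uniform-weight bounds transfer with only constant-factor losses'' is false; the losses exponentiate. The paper avoids weight classes altogether by making the survival threshold depend on the set being evaluated: a child $p\circ j$ survives for $x$ iff $h_i(p\circ j) < x_j/(b_1|x|)$, so the joint per-level collision probability of any pair is exactly $\sum_j \min(x_j,y_j)/(b_1\max(|x|,|y|)) = B(x,y)/b_1$, with no rounding and no case analysis on sizes. Your scheme could be repaired the same way (set-dependent thresholds through a single uniform hash), but as written the general (non-uniform) case, which is what the theorem asserts, does not go through.

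Two further problems, both less fatal but real. First, in the second-moment step you claim each tree contributes $\sum_{j=0}^{k} B^{2k-j} = O(B^k)$ ``because $b_1<1$''; the ratio of that geometric series is $B(q,x^\star)$ itself, which is only \emph{lower}-bounded by $b_1$ and may be arbitrarily close to $1$, in which case the sum is $\Theta(k B^k)$. The correct conclusion is $\Var[Z] = O(k\,\E[Z])$, which with $\E[Z]=\Theta(\log n)=\Theta(k)$ still yields constant failure probability via Cantelli if the constant in $w$ is chosen large enough, but not the $O(1/\log n)$ you state; the paper's induction produces exactly this $w\,k\,(B/b_1)^{2k}$-type variance term and settles for failure probability $k/(w+k)$. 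Second, your critical parameterization ($p^\star = 1/t$, $w=\Theta(n^\rho\log n)$) forces the query to touch $\Theta(w)$ expected nodes at each of $k$ levels, i.e.\ $\Theta(n^\rho\log^2 n)$ work even with the pairwise-independent lookup trick, which exceeds the claimed $O(|q|\,n^\rho\log n)$ whenever $|q| = o(\log n)$. The paper instead runs a supercritical process (expected $1/b_1$ children per surviving path) with only $w=2k$ roots, so the level sizes grow geometrically and the total evaluation cost telescopes to $O(|q|\,n^\rho\log n)$ for every $|q|$.
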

In the case where the sets are $t$-sparse our Theorem \ref{thm:upper} gives the first strict improvement on the $\rho$-value 
for approximate Jaccard similarity search compared to the data-independent LSH approaches of MinHash and Angular LSH.
Figure~\ref{fig:two-approx} shows an example of the improvement for a slice of the parameter space.
The improvement is based on a new locality-sensitive mapping that considers a specific random collection of length-$k$ paths on the vertex set $\{1,\dots,d\}$, 
and associates each vector $x$ with the paths in the collection that only visits vertices in $\{ i \; | \; x_i=1 \}$.
Our data structure exploits that similar vectors will be associated with a common path with constant probability, while vectors with low similarity have a negligible probability of sharing a path.
However, the collection of paths has size superlinear in~$n$, so an efficient method is required for locating the paths associated with a particular vector.
Our choice of the collection of paths balances two opposing constraints:
It is random enough to match the filtering performance of a truly random collection of sets, and at the same time it is structured enough to allow efficient search for sets matching a given vector.
The search procedure is comparable in simplicity to the classical techniques of bit sampling, MinHash, SimHash, and $p$-stable LSH, and we believe it might be practical.
This is in contrast to most theoretical advances in similarity search in the past ten years that suffer from $o(1)$ terms in the exponent of complexity bounds.

\begin{figure}   
	\centering
	\includegraphics[width=0.85\textwidth]{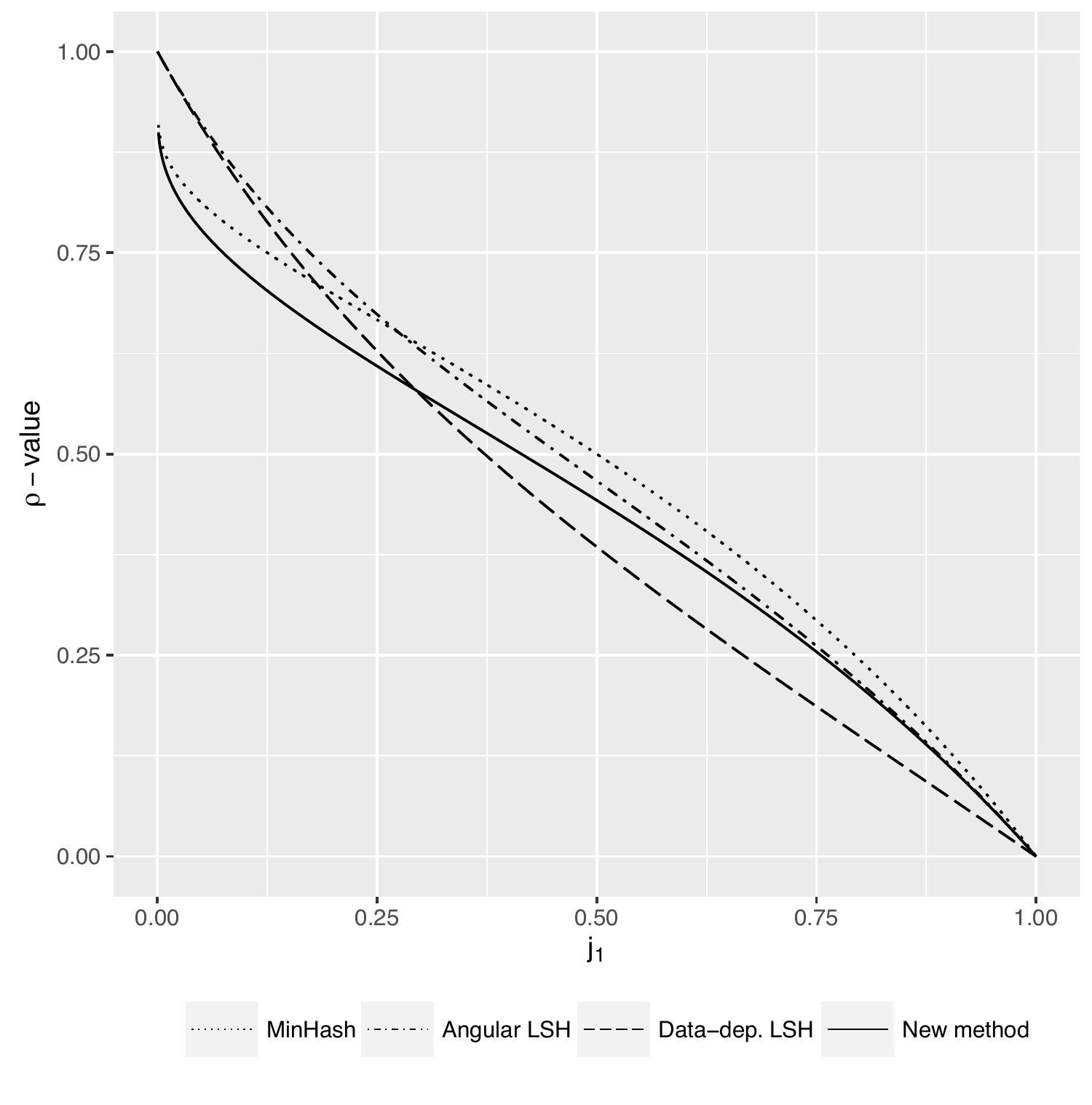}
	\caption{Exponent when searching for a vector with Jaccard similarity~$j_{1}$ with approximation factor~2 (i.e., guaranteed to return a vector with Jaccard similarity $j_{1}/2$) for various methods in the setting where all sets have the same size. 
	Our new method is the best data-independent method, and is better than data-dependent LSH up to about $j_1 \approx 0.3$.}
\label{fig:two-approx}
\end{figure}

\medskip

\paragraph{Intuition.} 
Recall that we will think of a vector $x \in \bitcube{d}$ also as a set, $\{ i \;|\; x_i=1\}$.
MinHash can be thought of as a way of sampling an element $i_{x}$ from $x$, namely, we let $i_{x} = \argmin_{i \in x} h(i)$ where $h$ is a random hash function.
For sets $x$ and $y$ the probability that $i_{x}=i_{y}$ equals their Jaccard similarity $J(x,y)$, which is much higher than if the samples had been picked independently.
Consider the case in which $|x|=|y|=t$, so $J(x,y) = \frac{|x \cap y|}{2t-|x \cap y|}$.
Another way of sampling is to compute $I_{x} = x \cap {b}$, where $\Pr[i\in {b}]=1/t$, independently for each $i\in [d]$.
The expected size of $I_{x}$ is 1, so this sample has the same expected ``cost'' as the MinHash-based sample.
But if the Jaccard similarity is small, the latter samples are more likely to overlap: 
$$\Pr[I_{x}\cap I_{y} \neq \emptyset] = 1-(1-1/t)^{|x \cap y|} \approx 1 - e^{-|x \cap y|/t} \approx |x \cap y|/t,$$ 
nearly a factor of 2 improvement.
In fact, whenever $|x \cap y| < 0.6\, t$ we have 
$\Pr[I_{x}\cap I_{y} \ne \emptyset] > \Pr[i_{x}=i_{y}]$.
So in a certain sense, MinHash is not the best way of collecting evidence for the similarity of two sets.
(This observation is not new, and has been made before e.g.~in~\cite{cohen2009leveraging}.)

\medskip

\paragraph{Locality-sensitive maps.}
The intersection of the samples $I_{x}$ does not correspond directly to hash collisions, so it is not clear how to turn this insight into an algorithm in the LSH framework.
Instead, we will consider a generalization of both the locality sensitive filtering (LSF) and LSH frameworks where we define a distribution $\LSM$ over maps $M \colon \bitcube{d} \to 2^{R}$.
The map $M$ performs the same task as the LSH data structure: It takes a vector $x$ and returns a set of memory locations $M(x) \subseteq \{1, \dots, R \}$.
A randomly sampled map $M \sim \LSM$ has the property that if a pair of points $x,y$ are close then $M(x) \cap M(y) \neq \emptyset$ with constant probability, 
while if $x, y$ are distant then the expected size $M(x) \cap M(y)$ is small (much smaller than $1$).
It is now straightforward to see that this distribution can be used to construct a data structure for similarity search by storing each data point $x \in P$ in the set of memory locations or buckets $M(x)$.
A query for a point $y$ is performed by computing the similarity between $y$ and every point $x$ contained in the set buckets $M(y)$, reporting the first sufficiently similar point found.

\medskip

\paragraph{\textsc{Chosen Path}.}
It turns out that to most efficiently filter out vectors of low similarity in the setting where all sets have equal size, 
we would like to map each data point $x \in \bitcube{d}$ to a collection $M(x)$ of random subsets of $\bitcube{d}$ that are contained in $x$.
Furthermore, to best distuinguish similar from dissimilar vectors when solving the approximate similarity search problem, we would like the random subsets of $\bitcube{d}$ to have size $\Theta(\log n)$.
This leads to another obstacle: The collection of subsets of $\bitcube{d}$ required to ensure that $M(x) \cap M(y) \neq \emptyset$ for similar points,
i.e., that $M$ maps to a subset contained in $x \cap y$, is very large.
The space usage and evaluation time of a locality-sensitive map $M$ to fully random subsets of $\bitcube{d}$ would far exceed $n$, rendering the solution useless. 
To overcome this we create the samples in a gradual, correlated way using a pairwise independent branching process that turns out to yield ``sufficiently random'' samples for the argument to go through.

\medskip

\paragraph{Lower bound.}
On the lower bound side we show that our solution for Braun-Blanquet similarity is best possible in terms of parameters $b_1$ and $b_2$ within the class of solutions that can be characterized as data-independent locality-sensitive maps.
The lower bound works by showing that a family of locality-sensitive maps for Braun-Blanquet similarity with a $\rho$-value below $\log(1/b_1)/\log(1/b_2)$ 
can be used to construct a locality-sensitive hash family for the $c$-approximate near neighbor problem in Hamming space with a $\rho$-value below $1/c$, 
thereby contradicting the LSH lower bound by O'Donnell et al.~\cite{odonnell2014optimal}.
We state the lower bound here in terms of locality-sensitive hashing, formally defined in Section \ref{sparse:sec:preliminaries}. 
\begin{theorem} \label{thm:lower}
For every choice of constants $0 < {b_2} < {b_1} < 1$ any $({b_1}, {b_2}, p_1, p_2)$-sensitive hash family $\LSH_{B}$ for $\bitcube{d}$ under Braun-Blanquet similarity must satisfy
\begin{equation*}
	\rho(\LSH_{B}) = \frac{\log(1/p_{1})}{\log(1/p_{2})} \geq \frac{\log(1/{b_1})}{\log(1/{b_2})} - O\left(\frac{\log(d/p_2)}{d}\right)^{1/3}. 
\end{equation*}
\end{theorem}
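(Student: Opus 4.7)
The plan is to reduce the claim to the O'Donnell et al.~lower bound for locality-sensitive hashing in Hamming space in the small-distance regime, as captured by the first branch of the combined inequality \eqref{eq:combined_lower}. Given a $(b_1, b_2, p_1, p_2)$-sensitive family $\LSH_{B}$ for Braun-Blanquet on $\bitcube{d}$, I would build an auxiliary Hamming LSH on a smaller cube $\bitcube{d'}$ whose $\rho$-value equals that of $\LSH_{B}$ and can therefore be bounded from below by the Hamming lower bound.

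First I would use the ``doubling and complementing'' embedding $\phi \colon \bitcube{d'} \to \bitcube{2d'}$ defined by $\phi(x)_i = x_i$ for $i \leq d'$ and $\phi(x)_i = 1 - x_{i-d'}$ for $i > d'$; the image has constant weight $d'$ and a one-line computation gives $B(\phi(x), \phi(y)) = 1 - \|x-y\|_1/d'$. Composing this with $\LSH_{B}$ already yields a Hamming LSH, but only places us in the regime where the natural Hamming thresholds correspond to $\alpha = 2b_1 - 1$, $\beta = 2b_2 - 1$, which is not useful for arbitrary constants $b_1, b_2 \in (0,1)$ (and indeed fails entirely once $b_1 \leq 1/2$). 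To push into the regime that matches the desired quantity $\log(1/b_1)/\log(1/b_2)$, I would next apply a $k$-fold tensor power, setting $\Phi_k(x) = \phi(x)^{\otimes k}$. Tensoring multiplies intersection sizes and weights, so $B(\Phi_k(x), \Phi_k(y)) = (1 - \|x-y\|_1/d')^k$. Whenever $(2d')^k \leq d$, padding by zeros allows us to use $\LSH_{B}$ on $\Phi_k$; the composed family $\bar h(x) = h(\Phi_k(x))$ is $(r, cr, p_1, p_2)$-sensitive for Hamming on $\bitcube{d'}$ with $r = d'(1 - b_1^{1/k})$ and $cr = d'(1 - b_2^{1/k})$, and crucially $\rho(\bar h) = \rho(\LSH_{B})$.

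Writing $\alpha = 1 - 2r/d' = 2b_1^{1/k}-1$ and $\beta = 2b_2^{1/k}-1$, both are pushed arbitrarily close to $1$ as $k$ grows, which is exactly the regime where the first branch of \eqref{eq:combined_lower} dominates. Applying that branch gives $\rho(\LSH_{B}) \geq \log(1/\alpha)/\log(1/\beta) - o_{d'}(1)$, and a Taylor expansion $2b^{1/k} - 1 = 1 - (2\log(1/b))/k + O(1/k^2)$ followed by $-\log(1-t) = t + O(t^2)$ gives
\begin{equation*}
	\frac{\log(1/(2b_1^{1/k}-1))}{\log(1/(2b_2^{1/k}-1))} = \frac{\log(1/b_1)}{\log(1/b_2)} + O(1/k),
\end{equation*}
with a constant depending only on $b_1, b_2$. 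Combining these two estimates yields $\rho(\LSH_{B}) \geq \log(1/b_1)/\log(1/b_2) - O(1/k) - o_{d'}(1)$.

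The final step is to choose $k$ and $d'$ to minimize the total additive loss. Saturating the capacity of the embedding as $d' = \lfloor d^{1/k}/2 \rfloor$, we are left to balance the $O(1/k)$ Taylor loss against the explicit form of the $o_{d'}(1)$ term in O'Donnell's inequality, which is a power of $\log(d'/p_2)/d'$ and requires $p_2 \geq 2^{-o(d')}$. Optimizing the sum of the two error terms in $k$ (using $d' = d^{1/k}$) produces the claimed additive loss of $O((\log(d/p_2)/d)^{1/3})$; the exponent $1/3$ arises from the constrained optimization where the embedding capacity links $k$ and $d'$ through $d'^k \lesssim d$. The main obstacle, beyond careful bookkeeping of integer rounding in $r, cr, d'$, is this three-way balancing: one must verify that the chosen $k$ is simultaneously large enough to push $(\alpha,\beta)$ into the first-branch regime of \eqref{eq:combined_lower}, small enough that $d' = d^{1/k}$ remains large enough for O'Donnell's bound to be non-trivial under the given $p_2$, and matched so the two error terms are of the same order.
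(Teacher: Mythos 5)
There is a genuine gap, and it is quantitative rather than conceptual: your reduction is essentially the paper's own proof sketch (tensoring to translate Hamming distance into Braun-Blanquet similarity and then invoking the O'Donnell et al.\ bound), but the way you fit the tensored vectors into the universe $[d]$ destroys the claimed error term. Because you embed $\bitcube{d'}$ \emph{injectively} via $\Phi_k$, you need $(2d')^k \leq d$, i.e.\ $k \leq \log d/\log(2d')$, so your Taylor loss satisfies $O(1/k) = \Omega(\log(2d')/\log d) = \Omega(1/\log d)$ for every admissible choice of $k$ and $d'$. No balancing can beat this: the total additive loss of your scheme is at least $c/\log d$, whereas the theorem claims a loss of $O\left((\log(d/p_2)/d)^{1/3}\right)$, which is polynomially small in $d$ (already for $p_2 \geq 1/d$ it is at most $(2\log d/d)^{1/3} \ll 1/\log d$). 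So the final optimization step you describe, ``producing the claimed additive loss with exponent $1/3$ from the constraint $d'^k \lesssim d$,'' cannot work out; carried through honestly, your argument proves the inequality only with an additive error of order $\log\log d/\log d$. A second, related weakening: you apply the Hamming lower bound in dimension $d' \leq d^{1/k}$ (at the optimum, polylogarithmic in $d$), so you must assume $\log(1/p_2) = o(d')$, a far stronger restriction on $p_2$ than the statement's error term, in which $\log(1/p_2)$ is only compared against $d$ itself. You also invoke the first branch of \eqref{eq:combined_lower} as a black box for $\alpha,\beta \to 1$; the paper's formal ingredient is only the fixed-radius version (Lemma~\ref{lem:owzsimple}, $r=\sqrt{D}$), so you would additionally need a quantitative version of that branch with explicit dependence on $1-\alpha$ and $p_2$ — but even granting it, the capacity constraint above is fatal to the stated bound.

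The paper avoids this trap by running the reduction through an \emph{exponentially larger} Hamming cube rather than a smaller one: it takes $D = 2^d$, tensors each $x \in \bitcube{D}$ implicitly, and then maps back into $[d]$ not injectively but by randomly sampling $\tau$ coordinates per block and hashing the resulting pattern into $[l]$ buckets, with $t = \lfloor d/l \rfloor$ independent blocks, so that $B(T(x),T(y))$ \emph{concentrates} around $(1-\|x-y\|_1/D)^{\tau}$ instead of equalling it. The price is only a concentration failure probability $\delta = e^{-\Omega(d\varepsilon^3)}$ (Hoeffding over the $t$ blocks, with thresholds perturbed to $b_1+\varepsilon$, $b_2-\varepsilon$), plus an O'Donnell error $O(D^{-1/4}) = 2^{-\Omega(d)}$ that is negligible because the Hamming dimension is $2^d$. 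Balancing $\varepsilon$ against $\delta/p_2$ gives $\varepsilon^3 = \Theta(\log(d/p_2)/d)$, which is exactly where the exponent $1/3$ in the theorem comes from. In short: the $1/3$ arises from trading the accuracy of a \emph{randomized} embedding against its failure probability, not from trading tensor depth against universe capacity, and an exact tensor embedding cannot reproduce it.
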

The details showing how this LSH lower bound implies a lower bound for locality-sensitive maps are given in Section \ref{sec:lower}. 
\section{Preliminaries}\label{sparse:sec:preliminaries}
As stated above we will view $x \in \bitcube{d}$ both as a vector and as a subset of $[d] = \{1,\dots,d\}$.
Define $x$ to be \emph{$t$-sparse} if $|x| = t$; 
we will be interested in the setting where $t \leq d/2$, and typically the sparse setting $t \ll d$.
Although many of the concepts we use hold for general spaces, for simplicity we state definitions in the same setting as our results: 
the boolean hypercube $\bitcube{d}$ under some measure of similarity $\simil \colon \bitcube{d} \times \bitcube{d} \rightarrow [0,1]$. 

\begin{definition} (Approximate similarity search)
	Let $P \subset \bitcube{d}$ be a set of $|P| = n$ data vectors, let $\simil \colon \bitcube{d} \times \bitcube{d} \rightarrow [0,1]$ be a similarity measure, and let $s_1, s_2 \in [0,1]$ such that $s_1 > s_2$.
	A solution to the \emph{$(s_1,s_2)$-similarity search problem} is a data structure that supports the following query operation: 
	on input $q \in \bitcube{d}$ for which there exists a vector $x \in P$ with $\simil(x,q) \geq s_1$, return $x' \in P$ with $\simil(x',q) > s_2$.
\end{definition}
Our data structures are randomized, and queries succeed with probability at least~$1/2$ (the probability can be made arbitrarily close to~$1$ by independent repetition).
Sometimes similarity search is formulated as searching for vectors that are near $q$ according to the distance measure $\dist(x, y) = 1 - \simil(x,y)$. 
For our purposes it is natural to phrase conditions in terms of similarity, but we  compare to solutions originally described as ``near neighbor'' methods.

Many of the best known solutions to approximate similarity search problems are based on a technique of randomized space partitioning. 
This technique has been formalized in the locality-sensitive hashing framework \cite{indyk1998} and the closely related locality-sensitive filtering framework~\cite{becker2016, christiani2017framework}.
\begin{definition} (Locality-sensitive hashing {\cite{indyk1998}})
	A $({s_1}, {s_2}, p_1, p_2)$-sensitive family of hash functions for a similarity measure 
	$\simil \colon \bitcube{d} \times \bitcube{d} \to [0,1]$ is a distribution $\LSH_{\simil}$ over functions $h \colon \bitcube{d} \to R$ 
	such that for all $x,y \in \bitcube{d}$ and random $h$ sampled according to $\LSH_{\simil}$: 
	\begin{itemize}
		\item If $\simil(x,y) \geq s_1$ then $\Pr[h(x) = h(y)] \geq p_1$.
		\item If $\simil(x,y) \leq s_2$ then $\Pr[h(x) = h(y)] \leq p_2$.
	\end{itemize}
\end{definition}
The range $R$ of the family will typically be fairly small such that an element of $R$ can be represented in a constant number of machine words.
In the following we assume for simplicity that the family of hash functions is \emph{efficient} such that $h(x)$ can be computed in time~$O(|x|)$.
Furthermore, we will assume that the time to compute the similarity $\sim(x, y)$ can be upper bounded by the time it takes to compute the size of the intersection of preprocessed sets, i.e.,~$O(\min(|x|, |y|))$.

Given a locality-sensitive family it is quite simple to obtain a solution to the approximate similarity search problem, 
essentially by hashing points to buckets such that close points end up in the same bucket while distant points are kept apart. 
\begin{lemma}[LSH framework {\cite{indyk1998, har-peled2012}}] 
	Given a $(s_1, s_2, p_1, p_2)$-sensitive family of hash functions it is possible to solve the $(s_1, s_2)$-similarity search problem
	with query time $O(|q| n^{\rho} \log n)$ and space usage $O(n^{1 + \rho} + \sum_{x \in P}|x|)$ where $\rho = \log(1/p_1) / \log(1/p_2)$.
\end{lemma}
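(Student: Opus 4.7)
The plan is to carry out the classical Indyk--Motwani AND-OR amplification. First I would amplify the gap between the two collision probabilities by concatenation: for an integer $k$ to be chosen, sample $h_1,\dots,h_k$ independently from $\LSH_\simil$ and define $g(x) = (h_1(x),\dots,h_k(x))$. Then $g$ is $(s_1,s_2,p_1^k,p_2^k)$-sensitive. Choosing $k = \lceil \log n / \log(1/p_2)\rceil$ ensures $p_2^k \leq 1/n$, so the expected number of points in $P$ with similarity at most $s_2$ that collide with a query $q$ under a single $g$ is at most $1$.

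Next, to guarantee that close points collide with good probability, sample $L$ independent copies $g_1,\dots,g_L$ of such a concatenated hash function and build $L$ hash tables, where the $i$th table stores $P$ bucketed by $g_i$. For a fixed $x$ with $\simil(q,x)\geq s_1$, the probability of collision in a single table is at least $p_1^k \geq n^{-\rho}/p_1$. Taking $L = \lceil (\ln 2)/p_1^k\rceil = O(n^\rho)$ makes the probability that $x$ misses every table at most $(1-p_1^k)^L \leq e^{-p_1^k L}\leq 1/2$, which gives the required correctness guarantee.

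The query algorithm iterates $i=1,\dots,L$, computes $g_i(q)$, retrieves the bucket, and for each candidate $x'$ checks whether $\simil(q,x')>s_2$, returning the first such point. The running time splits into two parts: the $Lk$ hash evaluations cost $O(Lk\cdot |q|) = O(|q|n^\rho \log n)$, and the similarity evaluations cost $O(|q|)$ per candidate. Since the expected number of candidates with $\simil(q,x')\leq s_2$ in bucket $g_i(q)$ is at most $np_2^k\leq 1$, the expected total number of such distance computations is $O(L)=O(n^\rho)$. To convert this expectation into a worst-case query time bound I would abort the scan after inspecting a fixed $cL$ candidates; by Markov's inequality this aborts with probability at most $1/c$, and running a constant number of independent repetitions of the whole data structure restores success probability at least $1/2$.

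For space, the $L$ tables together hold $nL = O(n^{1+\rho})$ references (points themselves are stored once in $O(\sum_{x\in P}|x|)$ space), and the $Lk = O(n^\rho\log n)$ stored hash functions are absorbed into the $O(n^{1+\rho})$ term. There is no deep obstacle in this proof; the one mild subtlety is converting the expected linear-scan cost into a worst-case query time, which the truncation-and-repetition trick handles cleanly without changing the $\rho$ exponent.
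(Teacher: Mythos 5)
Your proposal is correct and is essentially the paper's (cited) argument: it is the standard Indyk--Motwani AND-OR amplification with $k = \lceil \log n/\log(1/p_2)\rceil$ and $L = O(n^{\rho})$ tables, with Markov's inequality plus repetition to turn the expected candidate count into a worst-case query bound, exactly as spelled out in the Indyk--Motwani analysis the thesis itself gives in the fast-frameworks chapter (Theorem on the Indyk--Motwani framework). One harmless slip: the correct bound is $p_1^{k} \geq p_1\, n^{-\rho}$ (not $n^{-\rho}/p_1$), which still gives $L = O(n^{\rho}/p_1) = O(n^{\rho})$ since $p_1$ is a constant here.
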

The upper bound presented in this paper does not quite fit into the existing frameworks. 
However, we would like to apply existing LSH lower bound techniques to our algorithm.
Therefore we define a more general framework that captures solutions constructed using the LSH and LSF framework, as well as the upper bound presented in this paper.
\begin{definition}[Locality-sensitive map]
	A $(s_1, s_2, m_1, m_2)$-sensitive family of maps for a similarity measure $\simil \colon \bitcube{d} \times \bitcube{d} \to [0,1]$ is a distribution $\LSM_{\simil}$ over mappings $M \colon \bitcube{d} \to 2^{R}$ (where $2^{R}$ denotes the power set of $R$) such that for all $x, y \in \bitcube{d}$ and random $M \in \LSM_{\simil}$:
	\begin{enumerate}
		\item $\E[|M(x)|] \leq m_1$.	
		\item If $\simil(x, y) \leq s_2$ then $\E[|M(x) \cap M(y)|] \leq m_2$.
		\item If $\simil(x, y) \geq s_1$ then $\Pr[M(x) \cap M(y) \neq \emptyset] \geq 1/2$.
	\end{enumerate}
 \end{definition}
Once we have a family of locality-sensitive maps $\LSM$ we can use it to obtain a solution to the $(s_1, s_2)$-similarity search problem.
\begin{lemma}
	Given a $(s_1, s_2, m_1, m_2)$-sensitive family of maps $\LSM$ we can solve the $(s_1, s_2)$-similarity search problem 
	with query time $O(m_{1} + nm_{2}|q| + T_{M})$ and space usage $O(nm_{1} + \sum_{x \in P}|x|)$ where $T_{M}$ is the time to evaluate a map $M \in \LSM$.
\end{lemma}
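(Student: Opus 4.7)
The plan is to build the standard reduction from a locality-sensitive object to a similarity search data structure, adapted from the LSH/LSF framework and accounting for the fact that the bounds in the definition of a locality-sensitive map are stated in expectation.

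\textbf{Preprocessing.} First I would sample a single $M \sim \LSM_{\simil}$ and, for each $x \in P$, compute $M(x)$ and insert a pointer to $x$ into a hash table (implemented e.g.\ via the perfect hashing of Fredman--Komlós--Szemerédi, or just universal hashing) at every key $r \in M(x)$. A single canonical copy of each $x$ is stored separately, giving the $\sum_{x \in P} |x|$ term. The expected total number of pointers deposited in the hash table is $\sum_{x \in P} \E[|M(x)|] \le n m_1$, so the expected space is $O(n m_1 + \sum_{x \in P} |x|)$; if desired, I can make this hold deterministically by resampling $M$ until $\sum_x |M(x)| \le 4 n m_1$, which fails only with probability at most $1/4$ by Markov.

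\textbf{Query.} Given $q$, compute $M(q)$ in time $T_M$, then iterate over its elements $r \in M(q)$ in any order; for each $r$, walk the bucket, and for each pointer to some $x \in P$ found there compute $\simil(q, x)$ in time $O(|q|)$. As soon as a candidate $x'$ with $\simil(q, x') > s_2$ is discovered, return it and halt; otherwise, after all of $M(q)$ has been scanned, return nothing. Correctness follows because if some $x^* \in P$ has $\simil(q, x^*) \ge s_1$, then by property 3 we have $M(q) \cap M(x^*) \neq \emptyset$ with probability at least $1/2$, in which case the scan either returns $x^*$ itself or an even earlier witness of similarity exceeding $s_2$.

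\textbf{Running time.} The cost decomposes into three parts: (i) evaluating $M(q)$, which costs $T_M$; (ii) enumerating $M(q)$ and the associated bucket headers, which costs $O(|M(q)|)$ and is $O(m_1)$ in expectation by property 1; and (iii) the distance computations. For every $x \in P$ with $\simil(q, x) \le s_2$, the expected number of keys in $M(q) \cap M(x)$ is at most $m_2$ by property 2, so by linearity the total expected number of ``distant'' distance computations is at most $n m_2$, each of cost $O(|q|)$. Distance computations for points with $\simil(q,x) > s_2$ terminate the query, so at most one such computation is ever charged. Altogether the expected query cost is $O(T_M + m_1 + n m_2 |q|)$.

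\textbf{Converting to the stated bound.} The one mildly subtle step is turning the expected query cost and expected space into the deterministic bounds written in the lemma while keeping success probability at least $1/2$. The standard trick suffices: apply Markov's inequality to abort any query in which $|M(q)|$ or the number of candidate pointers examined exceeds a constant multiple of its expectation. Choosing the constants so that each aborting event has probability at most $1/8$ and combining with the collision event of probability $\ge 1/2$ via a union bound yields overall success probability bounded below by a positive constant, which can then be boosted to $1/2$ by a constant number of independent repetitions of the whole data structure (absorbed into the $O(\cdot)$). I do not expect any real obstacle here — the only thing to be careful about is that a single sampled $M$ governs both the space bound and the per-query expectations, so the Markov/union-bound bookkeeping must be done across independent repetitions rather than within one.
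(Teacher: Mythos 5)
Your proposal is correct and follows essentially the same route as the paper: sample a single $M \sim \LSM$, bucket each $x \in P$ under the keys in $M(x)$, answer a query by evaluating $M(q)$ and scanning the associated buckets with early termination, and then remove the expectations via Markov's inequality and independent repetitions. The paper states this argument only in a few lines; your write-up just fills in the same bookkeeping in more detail.
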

\begin{proof}
	We construct the data structure by sampling a map $M$ from $\LSM$ and use it to place points in $P$ into buckets.
	To run a query for a point $q$ we proceed by evaluating $M(q)$ and computing the similarity between $q$ and the points in the buckets associated with $M(q)$.
	If a sufficiently similar point is found we return it. 
	We get rid of the expectation in the guarantees by independent repetitions and applying Markov's inequality.
\end{proof}

\paragraph{Model of computation.}
We assume the standard word RAM model \cite{hagerup1998} with word size $\Theta(\log n)$, where $n=|P|$.
In order to be able to draw random functions from a family of functions we augment the model with an instruction that generates a machine word uniformly at random in constant time.
\section{Upper Bound} \label{sec:upper}
We will describe a family of locality-sensitive maps $\LSM_{B}$ for solving the $(b_1, b_2)$-similarity search problem under Braun-Blanquet similarity~(\ref{eq:B}).
After describing $\LSM_{B}$ we will give an efficient implementation of $M \in \LSM_{B}$ and show how to set parameters to obtain our Theorem \ref{thm:upper}. 
\subsection{Chosen Path}
The \textsc{Chosen Path} family $\LSM_{B}$ is defined by $k$ random hash functions $h_1, \dots, h_k$ where $h_{i} \colon [w] \times [d]^{i} \to [0,1]$ and $w$ is a positive integer.
The evaluation of a map $M_{k} \in \LSM_{B}$ proceeds in a sequence of $k+1$ steps that can be analyzed as a Galton-Watson branching process, 
originally devised to investigate population growth under the assumption of identical and independent offspring distributions. 
In the first step $i = 0$ we create a population of $w$ starting points
\begin{equation}\label{eq:M0}
M_{0}(x) = [w].
\end{equation}
In subsequent steps, every path that has survived so far produces offspring according to a random process that depends on $h_i$ and the element $x \in \bitcube{d}$ being evaluated. 
We use $p \circ j$ to denote concatenation of a path $p$ with a vertex $j$. 
\begin{equation}\label{eq:Mi}
M_i(x) = \left\{ p \circ j \mid p \in M_{i-1}(x) \land h_i(p\circ j) < \frac{x_{j}}{b_{1}|x|} \right\}. 
\end{equation}
Observe that $h_i(p\circ j) < \frac{x_{j}}{b_{1}|x|}$ can only hold when $x_{j}=1$, so the paths in $M_i(x)$ are constrained to $j \in x$.
The set $M(x) = M_{k}(x)$ is given by the paths that survive to the $k$th step.
We will proceed by bounding the evaluation time of $M \in \LSM_{B}$ as well as showing the locality-sensitive properties of~$\LSM_{B}$.
In particular, for similar points $x, y \in \bitcube{d}$ with $B(x, y) \geq b_1$ we will show that with probability at least $1/2$ there will be a path that is chosen by both $x$ and $y$. 
\begin{lemma}[Properties of {\textsc{Chosen Path}}]\label{lem:cp}
	For all $x, y \in \bitcube{d}$, integer $i\geq 0$, and random $M \in \LSM_{B}$:
	\begin{enumerate}
		\item $\E[|M_{i}(x)|] \leq (1/b_{1})^{i}w$.	
		\item If $B(x, y) < b_2$ then $\E[|M_{i}(x) \cap M_{i}(y)|] \leq (b_{2}/b_{1})^{i}w$.
		\item If $B(x, y) \geq b_1$ then $\Pr[M_{i}(x) \cap M_{i}(y) \neq \emptyset] \geq w/(i + w)$.
	\end{enumerate}
\end{lemma}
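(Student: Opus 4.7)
My plan is to view $M_i(x)$ as the $i$th generation of a branching process with $w$ roots, where each path $p \in M_{i-1}(x)$ independently produces children $\{p \circ j : j \in x,\ h_i(p \circ j) < 1/(b_1 |x|)\}$. Since $h_i$ is uniform on $[0,1]$ and independent across distinct inputs, the conditional expected number of children of $p$ is $|x| \cdot 1/(b_1|x|) = 1/b_1$. A straightforward induction on $i$, using linearity of expectation and the tower property, then gives $\E[|M_i(x)|] \leq (1/b_1)^i w$, proving Part~1. For Part~2 I would repeat the same argument restricted to the intersection: a path $p \circ j$ belongs to $M_i(x) \cap M_i(y)$ iff $p \in M_{i-1}(x) \cap M_{i-1}(y)$, $j \in x \cap y$, and $h_i(p \circ j) < \min(x_j/(b_1|x|),\, y_j/(b_1|y|)) = 1/(b_1 \max(|x|,|y|))$. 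So the conditional expected number of common children per common parent is $|x \cap y|/(b_1 \max(|x|,|y|)) = B(x,y)/b_1 < b_2/b_1$, and induction yields the claim.

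The interesting part is Part~3, which I would attack with a second-moment (Paley--Zygmund) calculation. Let $Z_i = |M_i(x) \cap M_i(y)|$ and $\mu = B(x,y)/b_1$, so $\mu \geq 1$ by hypothesis. The number of common children of a single common parent is $\mathrm{Binomial}(|x \cap y|,\, 1/(b_1 \max(|x|,|y|)))$, which has mean $\mu$ and variance $\sigma^2 \leq \mu$; given $Z_{i-1}$ the offspring of distinct paths are independent by the independence of $h_i$ on distinct inputs. The law of total variance then gives the recursion
\begin{equation*}
V_i := \Var[Z_i] = \sigma^2 \E[Z_{i-1}] + \mu^2 V_{i-1} = \sigma^2 w \mu^{i-1} + \mu^2 V_{i-1},
\end{equation*}
which unrolls (using $V_0 = 0$ and $\sigma^2 \leq \mu$) to $V_i \leq w \mu^i \sum_{j=0}^{i-1} \mu^j$. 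Substituting into Paley--Zygmund,
\begin{equation*}
\Pr[Z_i > 0] \geq \frac{\E[Z_i]^2}{\E[Z_i]^2 + V_i} \geq \frac{w^2 \mu^{2i}}{w^2 \mu^{2i} + w \mu^i \sum_{j=0}^{i-1} \mu^j} = \frac{w}{w + \sum_{j=1}^{i} \mu^{-j}},
\end{equation*}
and since $\mu \geq 1$ the geometric sum is at most $i$, yielding the desired bound $w/(w + i)$.

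The main obstacle will be Part~3: when $B(x,y)$ is close to $b_1$ the branching process is essentially critical, so any survival-probability lower bound really does have to decay with $i$, and the variance has to be controlled rather delicately. A first instinct might be to thin the common offspring to force exact criticality ($\mu = 1$) and then quote the classical Paley--Zygmund bound for critical Galton--Watson processes, but I think running the variance recursion directly on $Z_i$ and exploiting the Binomial-variance inequality $\sigma^2 \leq \mu$ is cleaner and delivers the uniform bound $w/(w+i)$ in one shot for every $\mu \geq 1$. A secondary point to verify is that the argument only uses pairwise uniformity of the hash values on distinct inputs, so it is compatible with the pairwise-independent implementation of the $h_i$ used later for fast evaluation.
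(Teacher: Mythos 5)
Your proposal is correct and follows essentially the same route as the paper: parts 1 and 2 by induction with linearity of expectation, and part 3 by a second-moment argument in which your law-of-total-variance recursion reproduces the paper's inductive bound $\E[|Z_i|^2] \leq \E[|Z_i|]\sum_{s=0}^{i-1}\mu^s + \E[|Z_i|]^2$, followed by the same one-sided Chebyshev (Cantelli/Paley--Zygmund) step giving $w/(w+i)$. The only point you gloss over, which the paper handles in one sentence, is the degenerate case $\max(|x|,|y|) \leq 1/b_1$, where the activation threshold exceeds $1$ so the per-child probability is capped (your ``Binomial'' parameter and the identity $\E[Z_i]=w\mu^i$ would not literally hold), but there every common path survives deterministically and the bound is trivially true.
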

\begin{proof}
We prove each property by induction on $i$. 
The base cases $i=0$ follow from (\ref{eq:M0}). 
Now consider the inductive step for property 1.
Let $\1\{\mathcal{P}\}$ denote the indicator function for predicate~$\mathcal{P}$. 
Using independence of the hash functions $h_i$ we get:
\begin{align*}
	\E[|M_{i}(x)|] 
	&= \E\left[ \sum_{p \in M_{i-1}(x)} \sum_{j \in [d]} \1\! \left\{ h_{i}(p\circ j) < \frac{x_{j}}{b_{1} |x|} \right\} \right] \\ 
	&= \E\left[ \sum_{p \in M_{i-1}(x)} 1 \right] \E\left[\sum_{j \in [d]} \1\! \left\{ h_{i}(p\circ j) < \frac{x_{j}}{b_{1} |x|} \right\} \right] \\ 
				    &\leq \E[|M_{i-1}(x)|] /b_{1} \\
					&\leq (1/b_{1})^{i}w \enspace .
\end{align*}
The last inequality uses the induction hypothesis.
We use the same approach for the second property where we let $X_{i}=M_{i}(x) \cap M_{i}(y)$.
\begin{align*}
	\E[|X_{i}|] 
	&= \E\left[ \sum_{p \in X_{i-1}} \sum_{j \in [d]} \1\! \left\{ h_{i}(p\circ j) < \frac{x_{j}}{b_{1} |x|} \land h_{i}(p\circ j) < \frac{y_{j}}{b_{1} |y|}  \right\} \right] \\ 
	&= \E\left[ \sum_{p \in X_{i-1}} 1 \right] \sum_{j \in [d]} \Pr\left[ h_{i}(p\circ j) < \frac{\min(x_{j},y_{j})}{b_{1} \max(|x|,|y|)} \right] \\ 
			    &\leq \E[|X_{i-1}|] (B(x, y)/b_{1}) \\
				&\leq (B(x, y)/b_{1})^{i}w \enspace . 	
\end{align*}

To prove the third property we bound the variance of $|X_{i}|$ and apply Chebyshev's inequality to bound the probability of $X_{i}=\emptyset$.
First consider the case where $|x| \leq 1/b_{1}$ and $|y| \leq 1/b_1$. 
Here it must hold that $X_{i}>0$ as intersecting paths exist ($b_1>0$) and always activate. 
In all other cases we have that 
$$\E[|X_{i}|] = (B(x, y)/b_{1})^{i}w \enspace .$$
Knowing the expected value we can apply Chebyshev's inequality once we have an upper bound for $\Var[|X_{i}|] = \E[|X_{i}|^2] - \E[|X_{i}|]^{2}$.
Specifically we show that $\E[|X_{i}|^2]\leq wi (B(x, y)/b_{1})^{2i}$, by induction on $i$.
To simplify notation we define the indicator variable
$$Y_{p,j} = \1\! \left\{ h_i(p\circ j) < \frac{x_{j}}{b_{1} |x|} \land h_i(p\circ j) < \frac{y_{j}}{b_{1} |y|} \right\}$$
where we suppress the subscript~$i$.
First observe that
$$\E[Y_{p,j}]= 1 / (b_1 \max(|x|,|y|))\enspace .$$
By (\ref{eq:Mi}) we see that $|X_i| = \sum_{p\in X_{i-1}} \sum_{j\in [d]} Y_{p,j}$,
which means:
\begin{align*}
	\E[|X_{i}|^{2}] &= \E\left[ \left( \sum_{p \in X_{i-1}} \sum_{j \in [d]} Y_{p,j}  \right)^{2} \right] \\ 
   				    & = \E\left[ \sum_{p \in X_{i-1}} \sum_{j\in [d]} Y_{p,j}^{2}\right] 
					 + \E\left[\sum_{p,p' \in X_{i-1}} \sum_{j,j'\in [d]} Y_{p,j} Y_{p',j'} \1\{(p,j)\ne (p',j')\} \right] \\
					&< \E[|X_{i-1}|] (B(x, y)/b_1) + \E[|X_{i-1}|^2] (B(x, y)/b_1)^2 \\
				    &\leq \sum_{s = 1}^{i}\E[|X_{i-s}|] (B(x, y)/b_1)^{2s-1} +   \E[|X_{0}|]^{2}(B(x, y)/b_1)^{2i} \\
					& = \E[|X_i|] \sum_{s = 0}^{i-1}(B(x, y)/b_{1})^{s} + \E[|X_i|]^2 \\
				    &\leq wi(B(x,y)/b_1)^{2i} +  \E[|X_i|]^2 \enspace .
\end{align*}
The third property now follows from a one-sided version of Chebychev's inequality applied to $|X_i|$.
\end{proof}
\subsection{Implementation details}
Lemma \ref{lem:cp} continues to hold when the hash functions $h_1, \dots, h_k$ are individually \emph{2-independent} (and mutually independent) 
since we only use bounds on the first and second moment of the hash values.
We can therefore use a simple and practical scheme such as Zobrist hashing \cite{zobrist1970new} 
that hashes strings of $\Theta(\log n)$ bits to strings of $\Theta(\log n)$ bits in $O(1)$ time using space, say, $O(n^{1/2})$.
It is not hard to see that the domain and range of $h_1, \dots, h_k$ can be compressed to $O(\log n)$ bits (causing a neglible increase in the failure probability of the data structure). 
We simply hash the paths $p \in M_{i}(x)$ to intermediate values of $O(\log n)$ bits, avoiding collisions with high probability, 
and in a similar vein, with high probability $O(\log n)$ bits of precision suffice to determine whether $h_i(p\circ j) < \frac{x_{j}}{b_{1}|x|}$. 

We now consider how to parameterize $\LSM_{B}$ to solve the $(b_{1}, b_{2})$-similarity problem for Braun-Blanquet similarity on a set $P$ of $|P| = n$ points for every choice of constant parameters $0 < b_2 < b_1 < 1$, independent of $n$.
Note that we exclude $b_{1} = 1$ (which would correspond to identical vectors that can be found in time $O(1)$ by resorting to standard hashing) and $b_{2} = 0$ (for which every data point would be a valid answer to a query).
We set parameters 
\begin{align*}
k &= \lceil \log(n) / \log (1/b_{2}) \rceil, \\
w &= 2k
\end{align*}
from which it follows that $\LSM_{B}$ is $(b_1, b_2, m_1, m_2)$-sensitive with $m_1 = n^{\rho}w/b_1$ and $m_2 = n^{\rho - 1}w$ where $\rho = \log(1/b_1) / \log(1/b_2)$.
To bound the expected evaluation time of $M_{k}$ we use Zobrist hashing as well as intermediate hashes for the paths as described above.
In the $i$th step in the branching process the expected number of hash function evaluations is bounded by $|q|$ times the number of paths alive at step $i - 1$. 
We can therefore bound the expected time to compute $M_{k}(q)$ by
\begin{equation} \label{eq:hashtime}
	\sum_{i=0}^{k-1}\E[|q||M_{i}(q)|] \leq \frac{b_{1}^{-k}-1}{b_{1}^{-1}-1}|q|w = O(|q|n^{\rho}w).
\end{equation}
This completes the proof of Theorem \ref{thm:upper}.\footnote{We know of a way of replacing the multiplicative factor $|q|$ in equation \eqref{eq:hashtime} by an additive term of $O(|q|k)$ 
by choosing the hash functions $h_i$ carefully, but do not discuss this improvement here since $|q|$ can be assumed to be polylogarithmic and our focus is on the exponent of $n$.}
\subsection{Comparison}\label{sec:comparison}
We will proceed by comparing our Theorem \ref{thm:upper} to results that can be achieved using existing techniques.
Again we focus on the setting where data points and query points are exactly $t$-sparse.
An overview of different techniques for three measures of similarity is shown in Table~\ref{tab:comparison}.
To summarize: The \textsc{Chosen Path} algorithm of Theorem \ref{thm:upper} improves upon all existing data-independent results over the entire $0 < b_2 < b_1 < 1$ parameter space.
Furthermore, we improve upon the best known \emph{data-dependent} techniques \cite{andoni2015optimal} for a large part of the parameter space (see Figure~\ref{fig:data}). 
The details of the comparisons are given in Appendix \ref{app:comparison}.

\paragraph{MinHash.} 
For $t$-sparse vectors there is a 1-1 mapping between Braun-Blanquet and Jaccard similarity. 
In this setting $J(x,y)=B(x,y)/(2-B(x,y))$.
Let $b_1 = 2j_1/(j_1+1)$ and $b_2 = 2j_2/(j_2+1)$ be the Braun-Blanquet similarities corresponding to Jaccard similarities $j_1$ and $j_2$.
The LSH framework using MinHash achieves $\rho_\text{minhash} = \log\left(\tfrac{b_1}{2-b_1}\right) / \log\left(\tfrac{b_2}{2-b_2}\right)$; 
this should be compared to $\rho = \log(b_1)/\log(b_2)$ achieved in Theorem~\ref{thm:upper}.
Since the function $f(z) = \log(\tfrac{z}{2-z})/\log z$ is monotonically increasing in $[0,1]$ we have that $\rho / \rho_\text{minhash} = f(b_2)/f(b_1) < 1$, i.e., $\rho$ is always smaller than $\rho_\text{minhash}$.
As an example, for $j_1 = 0.2$ and $j_2 = 0.1$ we get $\rho = 0.644...$ while $\rho_\text{minhash} = 0.698...$.
Figure~\ref{fig:minhash} shows the difference for the whole parameter space.
\begin{figure}
	\centering
	\includegraphics[width=\textwidth]{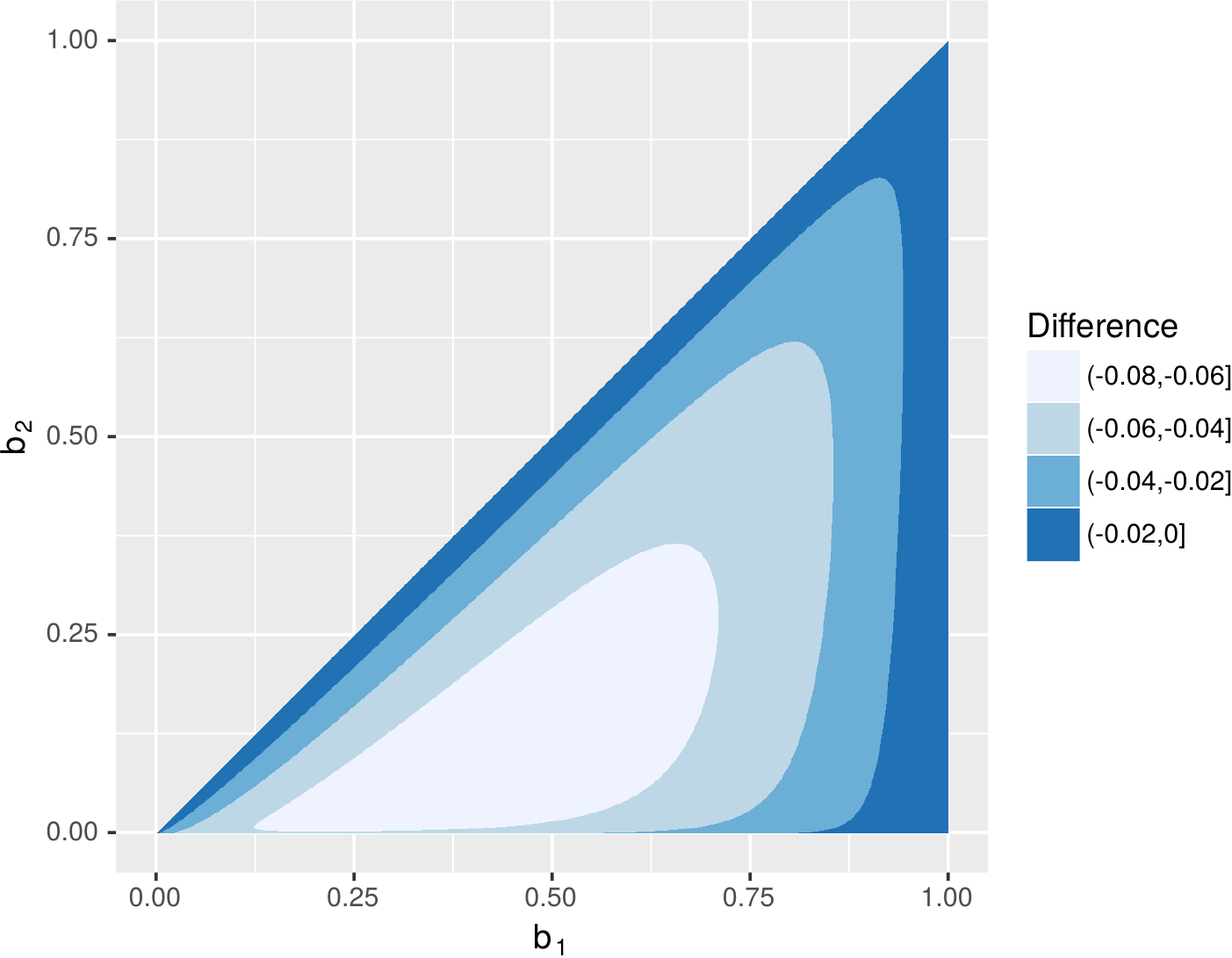}
	\caption{The difference $\rho - \rho_{\text{minhash}}$ comparing \textsc{Chosen Path} and MinHash in terms of Braun-Blanquet similarities $0 < b_2 < b_1 < 1$.}
	\label{fig:minhash}
\end{figure}

\paragraph{Angular LSH.}
Since our vectors are exactly $t$-sparse Braun-Blanquet similarities correspond directly to dot products (which in turn correspond to angles).
Thus we can apply angular LSH such as SimHash~\cite{charikar2002} or cross-polytope LSH~\cite{andoni2015practical}.
As observed in~\cite{christiani2017framework} one can express the $\rho$-value of cross-polytope LSH in terms of dot products as $\rho_\text{angular} = \tfrac{1-b_1}{1+b_1}/\tfrac{1-b_2}{1+b_2}$.
Since the function $f'(z) = (1 + z) \log(z)/(1 - z)$ is negative and monotonically increasing in $[0,1]$ we have that $\rho / \rho_\text{angular} = f'(b_1)/f'(b_2) < 1$, 
i.e., $\rho$ is always smaller than $\rho_\text{angular}$.
In the above example, for $j_1=0.2$ and $j_2=0.1$ we have $\rho_\text{angular} = 0.722...$ which is about $0.078$ more than \textsc{ Chosen Path}.
See Figure~\ref{fig:angular} for a visualization of the difference for the whole parameter space.
\begin{figure}
	\centering
	\includegraphics[width=\textwidth]{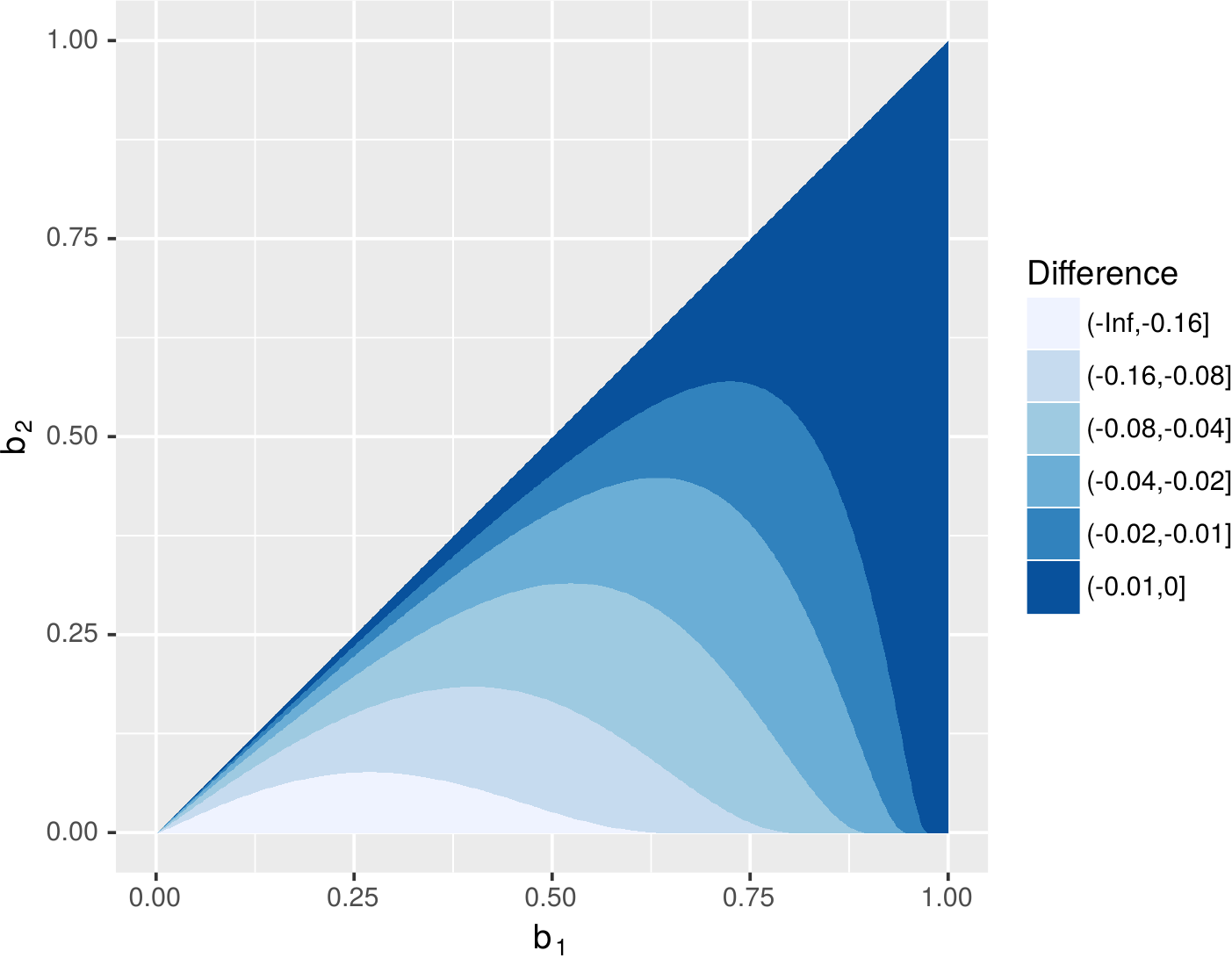}
	\caption{The difference $\rho - \rho_{\text{angular}}$ comparing \textsc{Chosen Path} and angular LSH in terms of Braun-Blanquet similarities $0 < b_2 < b_1 < 1$.}
	\label{fig:angular}
\end{figure}

\paragraph{Data-dependent Hamming LSH.}
The Hamming distance between two $t$-sparse vectors with Braun-Blanquet similarity $b$ is $2t(1-b)$, since the intersection of the vectors has size $tb$.
This means that $(b_1,b_2)$-similarity search under Braun-Blanquet similarity can be reduced to Hamming similarity search with approximation factor $c = (2t(1-b_1))/(2t(1-b_2)) = (1-b_1)/(1-b_2)$.
As mentioned above, the \emph{data dependent} LSH technique of~\cite{andoni2015optimal} achieves $\rho = 1/(2c-1)$ ignoring $o_{n}(1)$ terms. 
In terms of $b_1$ and $b_2$ this is $\rho_\text{datadep} = \frac{1-b_1}{1+b_1-2b_2}$, 
which in incomparable to the $\rho$ of Theorem \ref{thm:upper}.
In Appendix \ref{app:comparison} we show that $\rho < \rho_{\text{datadep}}$ whenever $b_2 \leq 1/5$, or equivalently, whenever $j_2 \leq 1/9$.
Revisiting the above example, for $j_1 = 0.2$ and $j_2 = 0.1$ we have $\rho_\text{datadep} = 0.6875$ which is about $0.043$ more than \textsc{Chosen Path}.
Figure~\ref{fig:data} gives a comparison covering the whole parameter space.
\begin{figure}
	\centering
	\includegraphics[width=\textwidth]{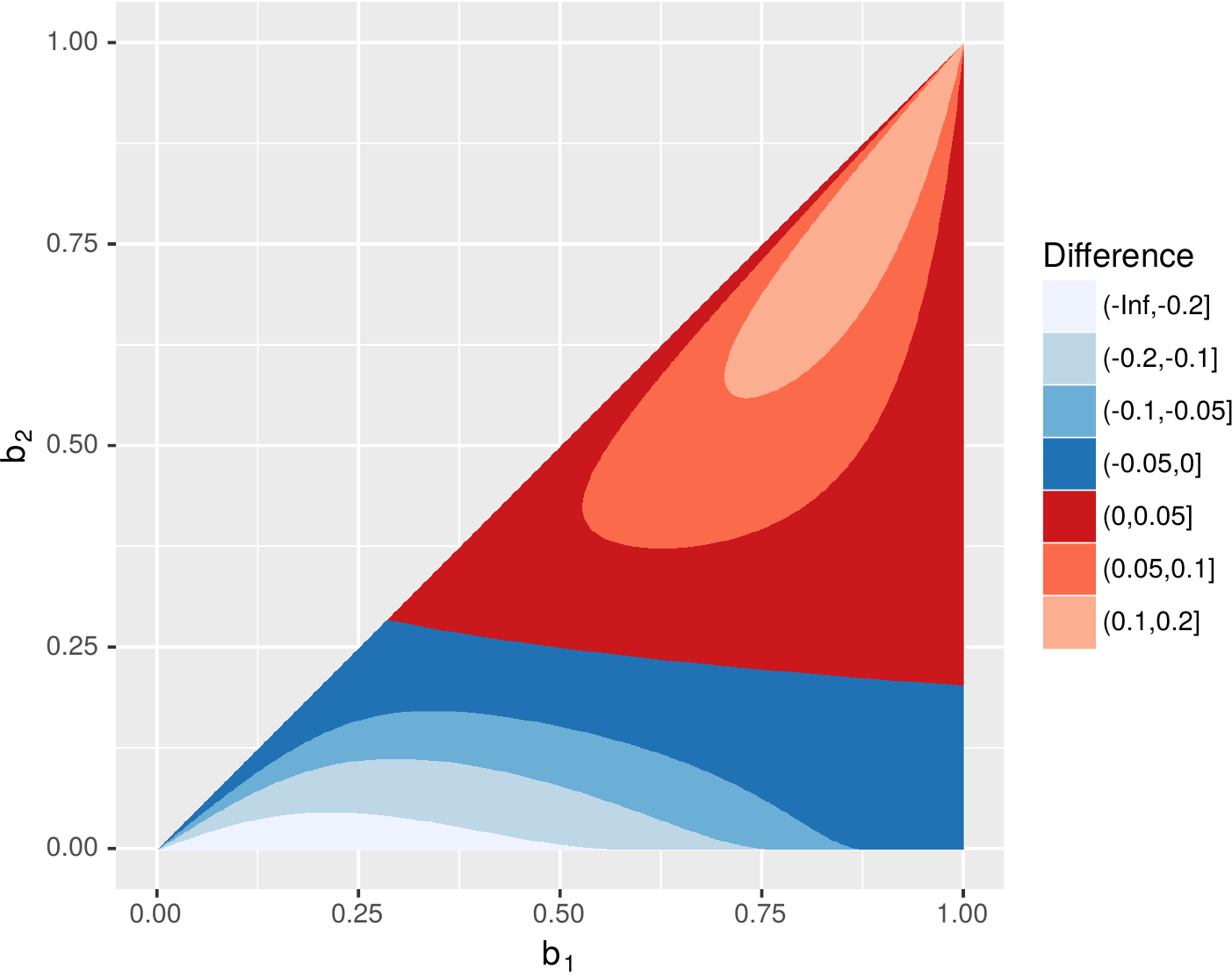}
	\caption{The difference $\rho - \rho_{\text{datadep}}$ comparing \textsc{Chosen Path} and data-dependent LSH in terms of Braun-Blanquet similarities $0 < b_2 < b_1 < 1$. 
	In the area of the parameter space that is colored blue we have that $\rho \leq \rho_{\text{datadep}}$ while for the red area it holds that $\rho > \rho_{\text{datadep}}$.}
	\label{fig:data}
\end{figure}
\section{Lower bound} \label{sec:lower}
In this section we will show a locality-sensitive hashing lower bound for $\bitcube{d}$ under Braun-Blanquet similarity. 
We will first show that LSH lower bounds apply to the class of solutions to the approximate similarity search problem that are based on locality-sensitive maps, thereby including our own upper bound.
Next we will introduce some relevant tools from the literature, in particular the LSH lower bounds for Hamming space by O'Donnell et al. \cite{odonnell2014optimal} which we use, 
through a reduction, to show LSH lower bounds under Braun-Blanquet similarity.

\paragraph{Lower bounds for locality-sensitive maps.}
Because our upper bound is based on a locality-sensitive map $\LSM_{B}$ and not LSH-based we first show that LSH lower bounds apply to LSM-based solutions. 
This is not too surprising as both the LSH and LSF frameworks produce LSM-based solutions.
We note that the idea of showing lower bounds for a more general class of algorithms that encompasses both LSH and LSF was used by Andoni et al.~\cite{andoni2017optimal} 
in their list-of-points data structure lower bound for the space-time tradeoff of solutions to the approximate near neighbor problem in the random data regime.
We use the approach of Christiani \cite{christiani2017framework} to convert an LSM family into an LSH family using MinHash. 
\begin{lemma}\label{lem:lsmtolsh}
	Suppose we have a $(s_1, s_2, m_1, m_2)$-sensitive family of maps $\LSM$. 
	Then we can construct a $(s_1, s_2, p_1, p_2)$-sensitive family of hash functions~$\LSH$ with $p_1 = 1/8m$ and $p_2 = m_{2}/m$ where $m = \lceil 8 m_{1} \rceil$.
\end{lemma}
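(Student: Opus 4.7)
The plan is a \textsc{MinHash}-style reduction applied to a \emph{padded} version of the map, where the padding artificially inflates the universe so that the collision-probability formulas come out cleanly. The padding is essential because the LSM definition provides no lower bound on $\E[|M(x)|]$, and without such a bound one cannot control the $p_2$ side of the reduction.

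First, for each $x \in X$ introduce a disjoint block $T_x$ of exactly $m$ fresh symbols with $T_x \cap T_y = \emptyset$ for $x \neq y$ and $T_x \cap R = \emptyset$. Sample $M \sim \LSM$ and work with the padded map $M'(x) := M(x) \cup T_x$. The key invariants for $x \neq y$ are
\begin{equation*}
M'(x) \cap M'(y) = M(x) \cap M(y), \qquad |M'(x) \cup M'(y)| = |M(x) \cup M(y)| + 2m,
\end{equation*}
so the padding cancels out of intersections but always contributes $2m$ to unions.

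Next, apply the filter-sequence trick from the proof of Lemma \ref{lem:lsftolsh}-style reductions (as in the symmetric LSF-to-LSH argument recalled in Chapter~\ref{ch:filters}): sample an infinite i.i.d.\ sequence of ``filters'' $F_1, F_2, \ldots$ from a distribution $\mathcal{D}$ on subsets of $X$ whose one-point marginal is proportional to $\E[|M(x)|] + m$ and whose two-point marginal is proportional to $\E[|M(x) \cap M(y)|]$. Concretely, $\mathcal{D}$ is the mixture that, in the appropriate proportions, either samples $M \sim \LSM$ together with a uniform $r \in R$ and outputs $F = \{x' : r \in M(x')\}$, or samples a uniform $z \in X$ and outputs the singleton $\{z\}$ (the second component models drawing an element from some $T_z$; it contributes $m$ to the one-point marginal of $z$ and zero to every pair marginal). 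Define $h(x) = \min\{i : x \in F_i\}$. The standard geometric-waiting-time computation gives
\begin{equation*}
\Pr[h(x) = h(y)] \;=\; \frac{\Pr_{F \sim \mathcal{D}}[x, y \in F]}{\Pr_{F \sim \mathcal{D}}[x \in F \;\lor\; y \in F]} \;=\; \frac{\E[|M(x) \cap M(y)|]}{\E[|M(x) \cup M(y)|] + 2m}.
\end{equation*}

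Finally, verify both sensitivity conditions directly from this formula. For $\simil(x,y) \leq s_2$ the numerator is at most $m_2$ while the denominator is at least $2m$, so
$\Pr[h(x) = h(y)] \leq m_2/(2m) < m_2/m = p_2$. For $\simil(x,y) \geq s_1$ the numerator satisfies $\E[|M(x) \cap M(y)|] \geq \Pr[M(x) \cap M(y) \neq \emptyset] \geq 1/2$, and using $m \geq 8m_1$ the denominator is at most $2m_1 + 2m \leq m/4 + 2m = 9m/4$, whence $\Pr[h(x) = h(y)] \geq 2/(9m) > 1/(8m) = p_1$.

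The only real conceptual step is the padding; once $\mathcal{D}$ is set up correctly the rest is routine algebra. The mildly delicate point is defining $\mathcal{D}$ so that the singleton-filter contribution exactly produces the additive $+m$ on one-point marginals without polluting the pair marginals; this is where the choice of $T_x$'s being mutually disjoint (and disjoint from $R$) is used.
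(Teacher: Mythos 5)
Your proof is correct, but it takes a genuinely different route from the paper. The paper pads a \emph{single} sampled map $M$ to a set $\tilde{M}(x)$ of fixed size exactly $m$ (using dummy elements tagged by $x$, truncating when $|M(x)| \geq m$) and applies ordinary MinHash with one random permutation; the collision probability is then an \emph{expectation of a Jaccard ratio}, and the $p_1$ bound is obtained by a union bound combining $\Pr[M(x) \cap M(y) = \emptyset] \leq 1/2$ with two applications of Markov's inequality, $\Pr[|M(x)| \geq m] \leq 1/8$, which is exactly where the choice $m = \lceil 8 m_1 \rceil$ enters. You instead use the min-index-over-i.i.d.-filters device (the same one as in Lemma~\ref{lem:lsftolsh}), drawing a \emph{fresh} $M \sim \LSM$ for each filter and mixing in uniform singletons to simulate the padding; this turns the collision probability into a clean \emph{ratio of expectations}, $\E[|M(x) \cap M(y)|] / (\E[|M(x) \cup M(y)|] + 2m)$, so Markov's inequality and the truncation case disappear and you even get slightly better constants ($p_1 \geq 2/(9m)$, $p_2 \leq m_2/(2m)$). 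The price is that each hash function consumes an unbounded sequence of independent LSM samples, which is harmless here since the lemma is used purely information-theoretically (to transfer the LSH lower bound), but would be a drawback if one cared about the efficiency of the resulting family. Two small presentational points: the padded map $M'(x) = M(x) \cup T_x$ and the blocks $T_x$ never actually appear in the formal argument — the singleton mixture component is what realizes the $+m$ on the one-point marginals — so they could be dropped or flagged as motivation only; and you should state the mixture weight explicitly (taking $\lambda = |R|/(|R| + m|X|)$ for the bucket-preimage component makes the proportionality constants cancel exactly and yields the displayed formula).
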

\begin{proof}
We sample a function $h$ from $\LSH$ by sampling a function $M$ from $\LSM$, modify $M$ to output a set of fixed size, and apply MinHash to the resulting set. 
For $M \in \LSM$ we define the function $\tilde{M}$ where we ensure that the size of the output set is $m$. 
We note that the purpose of this step is to be able to simultaneously lower bound $p_1$ and upper bound $p_2$ for $\LSH$ when we apply MinHash to the resulting sets.
\begin{equation*}
\tilde{M}(x) =
\begin{cases}
\{ (x, 1), \dots, (x, m) \} & \text{if } |M(x)| \geq m, \\
\{ (x, 1), \dots, (x, m - |M(x)|) \} \cup M(x) & \text{otherwise.} 
\end{cases}
\end{equation*}
We proceed by applying MinHash to the set $\tilde{M}(x)$. Let $\pi$ denote a random permutation of the range of $\tilde{M}$ and define  
\begin{equation*}
	h(x) = \argmin_{z \in \tilde{M}(x)} \pi(z).
\end{equation*}
We then have 
\begin{equation*}
\Pr[h(x) = h(y)] = \sum_{\xi} \Pr[ J(\tilde{M}(x), \tilde{M}(y)) = \xi] \cdot \xi
\end{equation*}
summing over the finite set of all possible Jaccard similarities $\xi = a/b$ with $a, b \in \{0, 1, \dots, 2m \}$.
It is now fairly simple to lower bound $p_1$ and upper bound $p_2$. 
Assume that $x, y$ satisfy that $\simil(x, y) \geq s_1$. 
To lower bound $p_1$ we use a union bound together with Markov's inequality to bound the following probability:
\begin{align*}
	&\Pr[\tilde{M}(x) \cap \tilde{M}(y) = \emptyset] \\ 
	&\qquad \leq \Pr[M(x) \cap M(y) = \emptyset \lor |M(x)| \geq m \lor |M(y)| \geq m] \\
&\qquad \leq \Pr[M(x) \cap M(y) = \emptyset] + \Pr[|M(x)| \geq m] + \Pr[|M(y)| \geq m] \\
&\qquad \leq 1/2 + 1/8 + 1/8 
\end{align*}
We therefore have that $\Pr[\tilde{M}(x) \cap \tilde{M}(y) \neq \emptyset] \geq 1/4$. 
In the event of a nonempty intersection the probability of collision is given by $J(\tilde{M}(x) \cap \tilde{M}(y)) \geq 1/2m$ allowing us to conclude that $p_1 \geq 1/8m$.

Bounding the collision probability for distant pairs of points $x, y$ with $\sim(x, y) \leq s_2$ we get
\begin{equation*}
\sum_{\xi} \Pr[J(\tilde{M}(x), \tilde{M}(y)) = \xi] \cdot \xi \leq (1/m) \sum_{i = 1}^{\infty}\Pr[|\tilde{M}(x) \cap \tilde{M}(y)|] \cdot i = \frac{m_{2}}{m}.
\end{equation*}
\end{proof}
We are now ready to justify the statement that LSH lower bounds apply to LSM, allowing us to restrict our attention to proving LSH lower bounds for Braun-Blanquet similarity.
\begin{corollary}\label{cor:lsmtolsh}
Suppose that we have an LSM-based solution to the $(s_1, s_2)$-similarity search problem with query time $O(n^{\rho})$.
Then there exists a family $\LSH$ of locality-sensitive hash functions with $\rho(\LSH) = \rho + O(1/\log n)$.  
\end{corollary}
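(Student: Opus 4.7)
The plan is to read off parameter bounds on the underlying $(s_1, s_2, m_1, m_2)$-sensitive map family from the stated query time, feed it into Lemma \ref{lem:lsmtolsh} to obtain an LSH family, and then compute the resulting $\rho$-value explicitly.

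First, recall from Section \ref{sparse:sec:preliminaries} that an LSM-based solution has query time $O(m_1 + n m_2 |q| + T_M)$. If this quantity is bounded by $n^\rho$ (suppressing the $|q|$ and $T_M$ factors, which affect only lower order terms), then the family must satisfy $m_1 \le n^\rho$ and $m_2 \le n^{\rho - 1}$. Without loss of generality we may assume $m_1 \ge 1$; otherwise the corollary is trivial.

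Next, apply Lemma \ref{lem:lsmtolsh} to $\LSM$ to obtain a $(s_1, s_2, p_1, p_2)$-sensitive hash family with
\begin{equation*}
p_1 \ge \frac{1}{8m}, \qquad p_2 \le \frac{m_2}{m}, \qquad m = \lceil 8 m_1 \rceil.
\end{equation*}
The resulting LSH exponent is
\begin{equation*}
\rho(\LSH) = \frac{\log(1/p_1)}{\log(1/p_2)} \le \frac{\log(8m)}{\log(m / m_2)}.
\end{equation*}
Substituting the bounds $m = O(n^\rho)$ and $m_2 \le n^{\rho-1}$ gives $\log(8m) \le \rho \log n + O(1)$ and $\log(m/m_2) \ge \log(m_1/m_2) \ge \log n$. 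Therefore
\begin{equation*}
\rho(\LSH) \le \frac{\rho \log n + O(1)}{\log n} = \rho + O(1/\log n),
\end{equation*}
which is exactly the claim.

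The derivation is essentially routine once Lemma \ref{lem:lsmtolsh} is in hand; the only mild subtlety is justifying that the $|q|$ and $T_M$ contributions to the query time can be absorbed into the $O(n^\rho)$ bound without affecting the exponent (this is standard when $|q|$ is polylogarithmic and the map is efficient), and that the universe over which MinHash is applied in the proof of Lemma \ref{lem:lsmtolsh} can be encoded using $O(\log n)$ bits so that $\LSH$ is a bona fide hash family in our model of computation.
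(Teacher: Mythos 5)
Your overall route is exactly the paper's: read off $m_1 = O(n^{\rho})$ and $nm_2 = O(n^{\rho})$ from the LSM query time and feed the family into Lemma~\ref{lem:lsmtolsh}; the paper leaves the final computation of $\rho(\LSH)$ implicit, and you have simply written it out. That is fine, but one link in your chain is not justified as stated: you bound the denominator via $\log(m/m_2) \geq \log(m_1/m_2) \geq \log n$, i.e.\ you use $m_1 \geq n m_2$. This does not follow from ``WLOG $m_1 \geq 1$'', nor from the query-time bound, which only gives the \emph{upper} bounds $m_1 = O(n^{\rho})$ and $m_2 = O(n^{\rho-1})$; nothing prevents, say, $m_1 = O(1)$ while $m_2 = \Theta(n^{\rho-1})$, in which case $m_1/m_2 \approx n^{1-\rho} \ll n$ and your displayed inequality is false.

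The repair is short, and either of two standard moves works. (i) Since the parameters in the definition of an $(s_1,s_2,m_1,m_2)$-sensitive map are only upper bounds on $\E[|M(x)|]$ and $\E[|M(x)\cap M(y)|]$, the same family is also $(s_1,s_2,\max(m_1,nm_2),m_2)$-sensitive, and $\max(m_1,nm_2)=O(n^{\rho})$; applying Lemma~\ref{lem:lsmtolsh} with this inflated first parameter gives $m/m_2 = \Omega(n)$, after which your inequalities go through verbatim. (ii) Alternatively, keep $m_1$ as is and argue monotonically: with $A=\log(8m)\leq \rho\log n+O(1)$ and $m_2\leq Cn^{\rho-1}$ one has $\log(m/m_2)\geq A+(1-\rho)\log n-O(1)$, and since $A\mapsto A/(A+(1-\rho)\log n-O(1))$ is increasing for $A\geq 0$, plugging in the upper bound on $A$ yields $\rho(\LSH)\leq (\rho\log n+O(1))/(\log n-O(1)) = \rho+O(1/\log n)$ whenever $\rho<1$ is a constant. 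With either fix your argument is correct and coincides with the paper's proof.
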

\begin{proof}
	The existence of the LSM-based solution implies that for every $n$ there exists a $(s_1, s_2, m_1, m_2)$-sensitive family of maps $\LSM$ with $m_1 = O(n^{\rho})$ and $nm_2 = O(n^{\rho})$.
	The upper bound on $\rho$ follows from applying Lemma \ref{lem:lsmtolsh}.  
\end{proof}

\paragraph{LSH lower bounds for Hamming space.}
There exist a number of powerful results that lower bound the $\rho$-value that is attainable by locality-sensitive hashing and related approaches in various settings 
\cite{motwani2007, panigrahy2010lower, odonnell2014optimal, andoni2016tight, christiani2017framework, andoni2017optimal}.
O'Donnell et al. \cite{odonnell2014optimal} showed an LSH lower bound of $\rho = \log(1/p_1) / \log(1/p_2) \geq 1/c - o_{d}(1)$
for $d$-dimensional Hamming space under the assumption that $p_2$ is not too small compared to $d$, i.e.,~$\log(1/p_2) = o(d)$.
The lower bound by O'Donnell et al.\ holds for $(r, cr, p_1, p_2)$-sensitive families for a particular choice of $r$ that depends on $d$, $p_2$, and $c$, 
and where $r$ is small compared to $d$ (for instance, we have that $r = \tilde{\Theta}(d^{2/3})$ when $c$ and $p_2$ are constant).

\medskip

We state a simplified version of the lower bound due to O'Donnell et al.~where $r = \sqrt{d}$ that we will use as a tool to prove our lower bound for Braun-Blanquet similarity. 
The full proof of Lemma \ref{lem:owzsimple} is given in Appendix \ref{sparse:app:lower}.
\begin{lemma} \label{lem:owzsimple}
	For every $d \in \mathbb{N}$, $1/d \leq p_2 \leq 1 - 1/d$, and $1 \leq c \leq d^{1/8}$ 
	every $(\sqrt{d}, c\sqrt{d}, p_1, p_2)$-sensitive hash family $\LSH$ for $\bitcube{d}$ under Hamming distance must have 
	\begin{equation}
		\rho(\LSH) = \frac{\log(1/p_{1})}{\log(1/p_{2})} \geq \frac{1}{c} - O(d^{-1/4}).
	\end{equation}
\end{lemma}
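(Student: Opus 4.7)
The plan is to derive the stated bound as a consequence of the full O'Donnell–Wimmer–Zuckerman theorem by instantiating its parameters carefully and bookkeeping the resulting error. The first step is to recast Hamming distances as noise correlations: for $\alpha \in [-1,1]$ a randomly $\alpha$-correlated pair $(x,y)$ has $x$ uniform in $\bitcube{d}$ and each $y_i$ independently equal to $x_i$ with probability $(1+\alpha)/2$, so $\mathbb{E}[\|x-y\|_1] = (1-\alpha)d/2$ and the Hamming distance is sharply concentrated around its mean by Hoeffding's inequality. Pick $\alpha_1 = 1 - 2/\sqrt{d}$ and $\alpha_2 = 1 - 2c/\sqrt{d}$ so the expected distances are exactly $\sqrt{d}$ and $c\sqrt{d}$; then with probability $1 - e^{-\Omega(d^{1/4})}$ an $\alpha_1$-pair has distance at most $\sqrt{d}+d^{3/8}$ and an $\alpha_2$-pair has distance at least $c\sqrt{d}-d^{3/8}$. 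Under the sensitivity hypothesis on $\LSH$, the collision probabilities $q_i := \Pr_{\alpha_i\text{-corr},h\sim\LSH}[h(x)=h(y)]$ therefore satisfy $q_1 \geq p_1 - e^{-\Omega(d^{1/4})}$ and $q_2 \leq p_2 + e^{-\Omega(d^{1/4})}$, after a negligible shift in $\alpha_1,\alpha_2$ to absorb the $d^{3/8}$ concentration slack.

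The second step is to invoke the hypercontractivity-based core inequality at the heart of \cite{odonnell2014optimal}, which says that for any hash family on $\bitcube{d}$,
\begin{equation*}
\frac{\log(1/q_1)}{\log(1/q_2)} \;\geq\; \frac{\log(1/\alpha_1)}{\log(1/\alpha_2)} \;-\; o_d(1),
\end{equation*}
where the $o_d(1)$ slack is controlled by a function of $d$ and $1/q_2$ and vanishes whenever $q_2 \geq 2^{-o(d)}$. A Taylor expansion $\log(1/\alpha) = (1-\alpha) + O((1-\alpha)^2)$ valid for $\alpha$ close to $1$ gives $\log(1/\alpha_1)/\log(1/\alpha_2) = 1/c + O(c/\sqrt{d})$, and since the hypothesis $c \leq d^{1/8}$ forces $c/\sqrt{d} \leq d^{-3/8}$, the ratio is $1/c + O(d^{-3/8})$.

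The third step converts the bound on $q_i$ back to a bound on $p_i$. The hypothesis $p_2 \in [1/d, 1-1/d]$ ensures that $\log(1/p_2) = \Theta(\log d)$ is bounded away from $0$ and that the additive $e^{-\Omega(d^{1/4})}$ error between $q_i$ and $p_i$ is multiplicatively negligible: $\log(1/q_i) = \log(1/p_i)\,(1 \pm e^{-\Omega(d^{1/4})})$. Combining the three steps yields
\begin{equation*}
\rho(\LSH) = \frac{\log(1/p_1)}{\log(1/p_2)} \;\geq\; \frac{1}{c} \;-\; O(d^{-1/4}),
\end{equation*}
where the $d^{-1/4}$ rate comes from balancing the Taylor-expansion error $d^{-3/8}$ against the hypercontractive slack at the boundary $c = d^{1/8}$ and the concentration correction.

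The main obstacle is bookkeeping rather than any single hard inequality: three distinct error sources — Hoeffding concentration of $\|x-y\|_1$, the Taylor approximation of $\log(1/\alpha)$ near $\alpha = 1$, and the OWZ hypercontractive slack — must be simultaneously controlled and shown to combine to $O(d^{-1/4})$ uniformly over the allowed range of $p_2$ and $c$. The hypotheses $p_2 \geq 1/d$ and $c \leq d^{1/8}$ are both essential: the former keeps $\log(1/p_2)$ large enough to absorb the concentration correction, while the latter keeps the correlations $\alpha_1,\alpha_2$ close enough to $1$ for the Taylor expansion of $\log(1/\alpha)$ to remain sharp.
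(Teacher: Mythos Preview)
Your approach is essentially the paper's: shift the correlations slightly (the paper makes the shifts explicit, taking $\lambda_{p_1}/2 = d^{-1/2} - d^{-5/8}$ and $\lambda_{p_2}/2 = cd^{-1/2} + 2cd^{-5/8}$), apply Chernoff for concentration, invoke the log-convexity of $t \mapsto \K_\LSH(t)$ to get $\ln(1/q_1)/\ln(1/q_2) \geq t_{p_1}/t_{p_2} = \ln(1/\alpha_1)/\ln(1/\alpha_2)$, expand this ratio, and transfer back to $p_1,p_2$. One clarification: that core inequality is \emph{exact} from log-convexity, so the $o_d(1)$ you attach to it is unnecessary and slightly misleading about where the slack originates.

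Your step~3 needs two repairs. First, the claim $\log(1/p_2) = \Theta(\log d)$ is false: the hypothesis $p_2 \in [1/d,\,1-1/d]$ only gives $\Omega(1/d) \le \log(1/p_2) \le O(\log d)$. Second, writing $\log(1/q_1) = \log(1/p_1)(1 \pm e^{-\Omega(d^{1/4})})$ tacitly assumes a lower bound on $\log(1/p_1)$ that you do not have. The paper avoids both issues by bounding
\[
\frac{\ln(1/p_1)}{\ln(1/p_2)} \;\ge\; \frac{\ln(1/q_1)}{\ln(1/q_2)} \;-\; \frac{3\delta}{q_2\,\ln(1/q_2)}
\]
directly, which only requires that $q_2$ be bounded away from $0$ and~$1$; these bounds follow from $p_2 \le 1 - 1/d$ and from $q_2 \ge q_1^{\,t_2/t_1} \ge (p_1/2)^{2c} \ge (2d)^{-2d^{1/8}}$ via log-convexity and $c \le d^{1/8}$.
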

In general, good lower bounds for the entire parameter space $(r, cr)$ are not known, 
although the techniques by O'Donnell et al. appear to yield a bound of $\rho \gtrsim \log(1-2r/d)/\log(1-2cr/d)$.
This is far from tight as can be seen by comparing it to the bit-sampling \cite{indyk1998} upper bound of $\rho = \log(1-r/d)/\log(1-cr/d)$. 
Existing lower bounds are tight in two different settings. 
First, in the setting where $cr \approx d/2$ (random data), lower bounds \cite{motwani2007, dubiner2010bucketing, andoni2016tight} 
match various instantiations of angular LSH \cite{terasawa2007spherical, andoni2014beyond, andoni2015practical}. 
Second, in the setting where $r \ll d$, the lower bound by O'Donnell et al.~\cite{odonnell2014optimal} becomes $\rho \gtrsim \log(1-2r/d)/\log(1-2cr/d) \approx 1/c$, 
matching bit-sampling LSH~\cite{indyk1998} as well as Angular LSH.
\subsection{Braun-Blanquet LSH lower bound}
We are now ready to prove the LSH lower bound from Theorem \ref{thm:lower}.
The lower bound together with Corollary \ref{cor:lsmtolsh} shows that the $\rho$-value of Theorem \ref{thm:upper} 
is best possible up to $o_{d}(1)$ terms within the class of data-independent locality-sensitive maps for Braun-Blanquet similarity.
Furthermore, the lower bound also applies to angular distance on the unit sphere where it comes close to matching the best known upper bounds 
for much of the parameter space as can be seen from Figure~\ref{fig:angular}. 

\paragraph{Proof sketch.}
The proof works by assuming the existence of a $(b_1, b_2, p_1, p_2)$-sensitive family $\LSH_{B}$ 
for $\bitcube{d}$ under Braun-Blanquet similarity with $\rho = \log(1/b_1)/\log(1/b_2) - \gamma$ for some $\gamma > 0$.
We use a transformation $T$ from Hamming space to Braun-Blanquet similarity to show that the existence of $\LSH_{B}$ implies the existence of a $(r, cr, p_{1}', p_{2}')$-sensitive 
family $\LSH_{H}$ for $D$-dimensional Hamming space that will contradict the lower bound of O'Donnell et al.~\cite{odonnell2014optimal} 
as stated in Lemma \ref{lem:owzsimple} for some appropriate choice of $\gamma = \gamma(d, p_{2})$.

We proceed by giving an informal description of a simple ``tensoring'' technique for converting a similarity search problem in Hamming space 
into a Braun-Blanquet set similarity problem for target similarity thresholds $b_1, b_2$.  
For $x \in \bitcube{d}$ define 
$$\tilde{x} = \{(i, x_{i}) \mid i \in [d] \}$$ 
and for a positive integer $\tau$ define $x^{\otimes \tau} = \{ (v_1, \dots, v_\tau) \mid v_i \in \tilde{x} \}$.
We have that $|x^{\otimes \tau}| = |\tilde{x}|^{\tau} = d^\tau$ 
and 
$$B(x^{\otimes \tau}, y^{\otimes \tau}) = |\tilde{x} \cap \tilde{y}|^{\tau} / |\tilde{x}|^\tau = (1 - r/d)^\tau$$ 
where $r = \norm{x - y}_{1}$.
For every choice of constants $0 < b_2 < b_1 < 1$ we can choose $d$, $\tau$, $r$, and $c \geq 1$ such that $(1 - r/d)^\tau \approx b_1$ and $(1 - cr/d)^\tau \approx b_2$.
Now, given an LSH family for Braun-Blanquet with $\rho < \log(1/b_1)/\log(1/b_2)$ we would be able to obtain an LSH family for Hamming space with 
\begin{equation*}
	\rho < \frac{\log(1/b_1)}{\log(1/b_2)} = \frac{\log(1/(1 - r/d))}{\log(1/(1-cr/d))} \leq 1/c.
\end{equation*}
For appropriate choices of parameters this would contradict the O'Donnell et al.~LSH lower bound of $\rho \gtrsim 1/c$ for Hamming space. 
The proof itself is mostly an exercise in setting parameters and applying the right bounds and approximations to make everything fit together with the intuition above.
Importantly, we use sampling in order to map to a dimension that is much lower than the $d^\tau$ from the proof sketch in order to make the proof hold for small values of $p_2$ in relation to $d$. 

\paragraph{Hamming distance to Braun-Blanquet similarity.}
Let $d \in \mathbb{N}$ and let $0 < b_2 < b_1 < 1$ be constant as in Theorem \ref{thm:lower}.
Let $\varepsilon \geq 1/d$ be a parameter to be determined.
We want to show how to use a transformation $T \colon \bitcube{D} \to \bitcube{d}$ from Hamming distance to Braun-Blanquet similarity
together with our family $\LSH_{B}$ to construct a $(r, cr, p_{1}', p_{2}')$-sensitive family $\LSH_{H}$ for $D$-dimensional Hamming space with parameters
\begin{align*}
D &= 2^d \\
r &= \sqrt{D} \\
c &= \frac{\ln(1/(b_2 - \varepsilon))}{\ln(1/(b_1 + \varepsilon))}
\end{align*}
where $p_{1}'$ and $p_{2}'$ remain to be determined.

The function $T$ takes as parameters positive integers $t$, $l$, and~$\tau$.
The output of $T$ consists of $t$ concatenated $l$-bit strings, each of of Hamming weight one.
Each of the $t$ strings is constructed independently at random according to the following process:
Sample a vector of indices ${i} = (i_1, i_2, \dots, i_{\tau})$ uniformly at random from $[D]^{\tau}$ and define 
$x_{{i}} \in \bitcube{\tau}$ as $x_{{i}} = x_{i_{1}} \circ x_{i_{2}} \circ \dots \circ x_{i_{\tau}}$.
Let $z(x) \in \bitcube{2^{\tau}}$ be indexed by $j \in \bitcube{\tau}$ and set the bits of $z(x)$ as follows: 
\begin{equation*}
	z(x)_{j} =
	\begin{cases}
		1 & \text{if } x_{{i}} = j, \\
		0 & \text{otherwise.} 
	\end{cases}
\end{equation*}
Next we apply a random function $g \colon \bitcube{\tau} \to [l]$ in order to map $z(x)$ down to an $l$-bit string ${r}(z(x))$ 
of Hamming weight one while approximately preserving Braun-Blanquet similarity. For $i \in [l]$ we set
\begin{equation*}
	{r}(z(x))_{i} = \bigvee_{j : g(j) = i} z(x)_{j}.
\end{equation*}
Finally we set 
\begin{equation*}
	T(x) = {r}_{1}(z_{1}(x)) \circ {r}_{2}(z_{2}(x)) \circ \dots \circ {r}_{t}(z_{t}(x))
\end{equation*}
where each ${r}_{i}(z_{i}(x))$ is constructed independently at random.

We state the properties of $T$ for the following parameter setting:
\begin{align*}
\tau &= \lfloor \sqrt{D} \ln(1/(b_1 + \varepsilon)) \rfloor \\
l &= \lceil 8/\varepsilon \rceil \\
t &= \lfloor d/l \rfloor.
\end{align*}
\begin{lemma}
	For every $d \in \mathbb{N}$ and $D = 2^{d}$ there exists a distribution over functions of the form $T \colon \bitcube{D} \to \bitcube{d}$ such that for all $x, y \in \bitcube{D}$ and random $T$:
\begin{enumerate}
	\item $|T(x)| = t$.
	\item If $\norm{x-y}_1 \leq r$ then $B(T(x), T(y)) \geq {b_1}$ with probability at least $1 - e^{-t\varepsilon^{2}/2}$.
	\item If $\norm{x-y}_1 > cr$ then $B(T(x), T(y)) < {b_2}$ with probability at least $1 - 2e^{-t\varepsilon^{2}/32}$.
\end{enumerate}
\end{lemma}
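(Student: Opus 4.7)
Property~1 is immediate from the construction: $z_i(x)$ is the one-hot encoding of $x_{\vec{i}} \in \bitcube{\tau}$, so $r_i(z_i(x))$ has a unique $1$ at position $g_i(x_{\vec{i}}) \in [l]$, and concatenating $t$ such blocks yields $|T(x)| = t$ deterministically.

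For properties~2 and~3 the plan is to reduce a single block to a Bernoulli trial and then apply Hoeffding's inequality over the $t$ independent blocks. Let $a = \norm{x-y}_1$ and define $X_i = \1\{g_i(x_{\vec{i}}) = g_i(y_{\vec{i}})\}$; since both $T(x)$ and $T(y)$ have Hamming weight exactly $t$, we have $B(T(x), T(y)) = \frac{1}{t}\sum_{i=1}^{t} X_i$. Conditioning on whether the restricted words $x_{\vec{i}}$ and $y_{\vec{i}}$ agree, and using uniformity of $g_i$,
\[
\mu := \E[X_i] = p + (1-p)/l, \qquad p := (1-a/D)^\tau,
\]
and the $X_i$ are independent across $i$ by the independent sampling of $\vec{i}$ and $g_i$ in different blocks.

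The heart of the argument is to translate the hypotheses on $a$ into bounds on $\mu$ with slack at least $\varepsilon/4$ from the thresholds $b_1, b_2$. For the close case $a \leq r = \sqrt{D}$, the inequality $\ln(1-z) \geq -z - z^2$ applied with $z = 1/\sqrt{D}$, together with the choice $\tau = \lfloor \sqrt{D}\ln(1/(b_1+\varepsilon))\rfloor$, gives $p \geq (b_1+\varepsilon) \cdot (b_1+\varepsilon)^{1/\sqrt{D}}$, which exceeds $b_1 + \varepsilon/2$ once $d$ is large enough that the $O(1/\sqrt{D})$ correction is below $\varepsilon/(2(b_1+\varepsilon))$ (the standing assumption $\varepsilon \geq 1/d$ in the proof of Theorem~\ref{thm:lower} swallows this). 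For the far case $a > cr$, the inequality $\ln(1-z) \leq -z$ and the definition $c = \ln(1/(b_2-\varepsilon))/\ln(1/(b_1+\varepsilon))$ yield $p \leq (b_1+\varepsilon)^c = b_2 - \varepsilon$; adding $(1-p)/l \leq 1/l \leq \varepsilon/8$ keeps $\mu \leq b_2 - 7\varepsilon/8$.

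Finally, Hoeffding's inequality on the independent $\{0,1\}$-valued $X_i$ yields both tails directly: for the close case, $\Pr[B(T(x),T(y)) < b_1] \leq \Pr[\frac{1}{t}\sum_i X_i < \mu - \varepsilon/2] \leq e^{-t\varepsilon^2/2}$; symmetrically, in the far case, $\Pr[B(T(x),T(y)) \geq b_2] \leq e^{-t\varepsilon^2/32}$ with slack $\varepsilon/8$, where the factor of $2$ in the stated bound accommodates a two-sided application (or, equivalently, a union over the independent randomness of $\vec{i}$ and $g$). The main obstacle is purely bookkeeping: verifying that the integer rounding in $\tau$, the lower-order terms from expanding $\ln(1-z)$, and the additive contribution $(1-p)/l$ can all be absorbed into the $\varepsilon/4$ slack budget. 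This dictates how large $d$ must be taken relative to $\varepsilon$, but introduces no qualitative difficulty.
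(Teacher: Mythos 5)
Your proof is correct and follows essentially the same route as the paper: bound the per-block match probability via $(1-\norm{x-y}_1/D)^\tau$ with the $1/l$ collision term absorbed into the $\varepsilon$-slack, then apply Hoeffding over the $t$ independent blocks (the rounding in $\tau$ indeed contributes only an exponentially small correction, well within the budget since $\varepsilon \geq 1/d$). The only cosmetic difference is in the far case, where you apply Hoeffding once to the merged indicator $\1\{g_i(x_{\vec i}) = g_i(y_{\vec i})\}$ instead of the paper's separate bounds on exact-agreement blocks and spurious $g$-collisions followed by a union bound; your single bound is at least as strong as the stated $2e^{-t\varepsilon^2/32}$, so the factor $2$ is simply slack rather than a two-sided issue.
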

\begin{proof}
The first property is trivial. 
For the second property we consider $x, y$ with $\norm{x - y}_1 \leq r$ where we would like to lower bound  
\begin{equation*}
B(T(x), T(y)) = \frac{|T(x) \cap T(y)|}{\max(|T(x)|, |T(y)|)}.
\end{equation*}
We know that $|T(x)| = |T(y)| = t$ so it remains to lower bound the size of the intersection $|T(x) \cap T(y)|$.
Consider the expectation
\begin{equation*}
	\E[|T(x) \cap T(y)|] = t\Pr[z(x) = z(y)].
\end{equation*}
We have that $z(x) = z(y)$ if $x$ and $y$ take on the same value in the $\tau$ underlying bit-positions that are sampled to construct $z$.
Under the assumption that $\varepsilon \geq 1/d$, 
then for $d$ greater than some sufficiently large constant we can use a standard approximation to the exponential function (detailed in Lemma \ref{lem:exp} in Appendix \ref{sparse:app:lower}) to show that
\begin{align*}
	\Pr[z(x) = z(y)] &\geq (1 - r/D)^\tau \\ 
						 &\geq (1 - 1/\sqrt{D})^{\sqrt{D} \ln(1/(b_1 + \varepsilon))} \\
						 &\geq e^{\ln(b_1 + \varepsilon)}(1 - (\ln(b_1 + \varepsilon))^2 / \sqrt{D}) \\
						 &\geq b_1 + \varepsilon/2.
\end{align*}
Seeing as $|T(x) \cap T(y)|$ is the sum of $t$ independent Bernoulli trials we can apply Hoeffding's inequality to yield the following bound:
\begin{equation*}
	\Pr[|T(x) \cap T(y)| \leq b_1 t] \leq e^{-t \varepsilon^{2}/2}. 
\end{equation*}
This proves the second property of $T$.

For the third property we consider the Braun-Blanquet similarity of distant pairs of points $x, y$ with $\norm{x - y}_{1} > cr$.
Again, under our assumption that $\varepsilon \geq 1/d$ and for $d$ greater than some constant we have
\begin{align*}
	\Pr[z(x) = z(y)] &\leq (1 - cr/D)^\tau \\ 
																	  &\leq \frac{\left(1 - \frac{\ln(1/(b_2 - \varepsilon))}{\sqrt{D} \ln(1/(b_1 + \varepsilon))}\right)^{\sqrt{D} \ln(1/(b_1 + \varepsilon))}} {1 - c/\sqrt{D} } \\
						  &\leq (1 + 2c/\sqrt{D})(b_2 - \varepsilon) \\
						  &\leq b_2 - \varepsilon/2.
\end{align*}
There are two things that can cause the event $B(T(x), T(y)) < {b_2}$ to fail.
First, the sum of the $t$ independent Bernoulli trials for the event $z(x) = z(x')$ can deviate too much from its expected value.
Second, the mapping down to $l$-bit strings that takes place from $z(x)$ to ${r}(z(x))$ can lead to an additional increase in the similarity due to collisions.
Let $Z$ denote the sum of the~$t$ Bernoulli trials for the events $z(x) = z(x')$ associated with $T$. 
We again apply a standard Hoeffding bound to show that
\begin{equation*}
	\Pr[Z \geq (b_2 - \varepsilon/4)t] \leq e^{-t\varepsilon^{2}/8}.
\end{equation*}
Let $X$ denote the number of collisions when performing the universe reduction to $l$-bit strings.
By our choice of $l$ we have that $E[X] \leq (\varepsilon/8)t$. Another application of Hoeffding's inequality shows that
\begin{equation*}
	\Pr[X \geq (\varepsilon/4)t] \leq e^{-t\varepsilon^{2}/32}.
\end{equation*}
We therefore get that
\begin{equation*}
	\Pr[|T(x) \cap T(x')| \geq b_2 t] \leq 2e^{-t \varepsilon^{2}/32}. 
\end{equation*}
This proves the third property of $T$.
\end{proof}
\paragraph{Contradiction.}
To summarize, using the random map $T$ together with the LSH family $\LSH_{B}$ we can obtain an $(r, cr, {p_{1}}', {p_{2}}')$-sensitive family $\LSH_{H}$ for $D$-dimensional Hamming space 
with ${p_{1}}' = {p_{1}} - \delta$ and ${p_{2}}' = {p_{2}} + \delta$ for $\delta = 2e^{-t \varepsilon^{2}/32}$. 
For our choice of $c = \frac{\ln(1/(b_2 - \varepsilon))}{\ln(1/(b_1 + \varepsilon))}$ we plug the family $\LSH_{H}$ into the lower bound of Lemma \ref{lem:owzsimple} 
and use that $O(D^{-1/4}) = O(\varepsilon)$ which follows from our constraint that $\varepsilon \geq 1/d$.
\begin{align*}
	\rho(\LSH_{H}) &\geq 1/c - O(D^{-1/4}) \\
				   &= \frac{\ln(1/(1 + \varepsilon/b_{1})) + \ln(1/b_{1})}{\ln(1/(1-\varepsilon/b_{2})) + \ln(1/b_2)} - O(\varepsilon)\\
				   &\geq \frac{\ln(1/b_{1}) - \varepsilon/b_{1}}{\ln(1/b_2) + 2\varepsilon/b_{2}} - O(\varepsilon)\\
				   &= \frac{\ln(1/b_{1})}{\ln(1/b_2)} - O(\varepsilon) 
\end{align*}
Under our assumed properties of $\LSH_{B}$, we can upper bound the value of $\rho$ for $\LSH_{H}$. 
For simplicity we temporarily define $\lambda = 2\delta/{p_{2}}$ and assume that $\lambda / \ln(1/{p_{2}}) \leq 1/2$ and $\ln(1/{p_{2}}) \geq 1$.
The latter property holds without loss of generality through use of the standard LSH powering technique \cite{indyk1998, har-peled2012, odonnell2014optimal} 
that allows us to transform an LSH family with ${p_{2}} < 1$ to a family that has ${p_{2}} \leq 1/e$ without changing its associated $\rho$-value. 
\begin{align*}
	\rho(\LSH_{H}) &= \frac{\ln(1/{p_{1}}')}{\ln(1/{p_{2}}')} = \frac{ \ln(1/{p_{1}}) + \ln(1/(1-\delta/{p_{1}}))}{\ln(1/{p_{2}}) + \ln(1/(1 + \delta/{p_{2}}))} \\
				   &\leq \frac{\ln(1/{p_{1}}) + \lambda}{\ln(1/{p_{2}}) - \lambda} = \frac{\ln(1/{p_{1}}) + \lambda}{(\ln 1/{p_{2}} )(1 - \lambda / (\ln 1/{p_{2}}))} \\
				   &\leq \frac{\ln(1/{p_{1}}) + \lambda}{\ln(1/{p_{2}})}(1 +  2 \lambda / (\ln 1/{p_{2}})) = \frac{\ln(1/{p_{1}})}{\ln(1/{p_{2}})} + O(\delta / {p_{2}}) \\
				   &\leq \frac{\ln(1/b_{1})}{\ln(1/b_{2})} - \gamma  + O(\delta / {p_{2}}). 
\end{align*}
We get a contradiction between our upper bound and lower bound for $\rho(\LSH_{H})$ whenever $\gamma$ violates the following relation that summarizes the bounds:
\begin{equation*}
\frac{\ln(1/b_{1})}{\ln(1/b_2)} - O(\varepsilon) \leq  \rho(\LSH_{H}) \leq \frac{\ln(1/b_{1})}{\ln(1/b_{2})} - \gamma  + O(\delta / {p_{2}}).
\end{equation*}
In order for a contradiction to occur, the value of $\gamma$ has to satisfy
\begin{equation*}
\gamma > O(\varepsilon) + O(\delta / {p_{2}}).
\end{equation*}
By our setting of $t = \lfloor d/l \rfloor$ and $l = \lceil 8/\varepsilon \rceil$ we have that $\delta = e^{-\Omega(d\varepsilon^{3})}$.
We can cause a contradiction for a setting of $\varepsilon^{3} = K\frac{\ln(d/{p_{2}})}{d}$ where $K$ is some constant and where we assume that $d$ is greater than some constant.
The value of $\gamma$ for which the lower bound holds can be upper bounded by
\begin{equation*}
	\gamma = O\left(\frac{\ln(d/{p_{2}})}{d}\right)^{1/3}.
\end{equation*}
This completes the proof of Theorem \ref{thm:lower}.

\section{Equivalent set similarity problems}\label{sec:equivalence}
In this section we consider how to use our data structure for Braun-Blanquet similarity search to support other similarity measures such as Jaccard similarity.
We already observed in the introduction that a direct translation exists between several similarity measures whenever the size of every sets is fixed to $t$.
Call an $(s_1,s_2)$-similarity search problem \emph{($t$,$t'$)-regular} if $P$ is restricted to vectors of weight $t$ and queries are restricted to vectors of weight~$t'$.
Obviously, a $(t,t')$-regular similarity search problem is no harder than the general similarity search problem, but it also cannot be too much easier when expressed as a function of the thresholds $(s_1,s_2)$:
For every pair $(t,t') \in \{0,\dots,d\}^2$ we can construct a ($t$,$t'$)-regular data structure (such that each point $x \in P$ is represented in the $d+1$ data structures with $t=|x|$), 
and answer a query for $q\in\{0,1\}^d$ by querying all data structures with $t'=|q|$.
Thus, the time and space for the general $(s_1,s_2)$-similarity search problem is at most $d+1$ times larger than the time and space of the most expensive ($t$,$t'$)-regular data structure.
This does \emph{not} mean that we cannot get better bounds in terms of other parameters, and in particular we expect that the difficulty of $(t,t')$-regular similarity search problems depends on parameters $t$ and $t'$.

\paragraph{Dimension reduction.}
If the dimension is large a factor of $d$ may be significant.
However, for most natural similarity measures a $(s_1,s_2)$-similarity problem in $d \gg (\log n)^3$ dimensions can be reduced to a logarithmic number of $(s'_1,s'_2)$-similarity problems 
on $P'\subseteq \{0,1\}^{d'}$ in $d'=(\log n)^3$ dimensions with $s'_1 = s_1-O(1/\log n)$ and $s'_2 = s_2+O(1/\log n)$.
Since the similarity gap is close to the one in the original problem, $s'_1 - s'_2 = s_1 - s_2 - O(1/\log n)$, where $s_1$ and $s_2$ are assumed to be independent of $n$, the difficulty ($\rho$-value) remains essentially the same.
First, split $P$ into $\log d$ size classes $P_i$ such that vectors in class $i$ have size in $[2^{i}, 2^{i+1})$.
For each size class the reduction is done independently and works by a standard technique: 
sample a sequence of random sets $I_j\subseteq \{1,\dots,d\}$, $i=1,\dots,d'$, and set $x'_j = \vee_{\ell\in I_j} x_\ell$.
The size of each set $I_j$ is chosen such that $Pr[x'_j = 1] \approx 1/\log(n)$ when $|x|=2^{i+1}$.
By Chernoff bounds this mapping preserves the relative weight of vectors up to size $2^i \log n$ up to an additive $O(1/\log n)$ term with high probability.
Assume now that the similarity measure is such that for vectors in $P_i$ we only need to consider $|q|$ in the range from $2^i/\log n$ to $2^i \log n$ (since if the size difference is larger, the similarity is negligible).
The we can apply Chernoff bounds to the relative weights of the dimension-reduced vectors $x'$, $q'$ and the intersection $x' \cap q'$.
In particular, we get that the Jaccard similarity of a pair of vectors is preserved up to an additive error of $O(1/\log n)$ with high probability.
The class of similarity measures for which dimension reduction to $(\log n)^{O(1)}$ dimensions is possible is large, and we do not attempt to characterize it here.
Instead, we just note that for such similarity measures we can determine the complexity of similarity search up to a factor $(\log n)^{O(1)}$ by only considering regular search problems.

\paragraph{Equivalence of regular similarity search problems.}
We call a set similarity measure on $\{0,1\}^d$ \emph{symmetric} if it can be written in the form $S(q,x) = f_{d,|q|,|x|}(|q\cap x|)$, 
where each function $f_{d,|q|,|x|} \colon \mathbb{N} \rightarrow [0,1]$ is nondecreasing.
All 59 set similarity measures listed in the survey~\cite{choi2010survey}, normalized to yield similarities in $[0,1]$, are symmetric.
In particular this is the case for Jaccard similarity (where $J(q,x) = |q\cap x| /(|q|+|x|-|q\cap x|)$) and for Braun-Blanquet similarity.
For a symmetric similarity measure, the predicate $\simil(q,x)\geq s_1$ is equivalent to the predicate $|q\cap x|\geq i_1$, where $i_1 = \min \{ i \; | \; f_{d,t',t}(i)\geq s_1 \}$, 
and $\simil(q,x) > s_2$ is equivalent to the predicate $|q\cap x|\geq i_2$, where $i_2 = \min \{ i \; | \; f_{d,t',t}(i) > s_2 \}$.
This means that every ($t$,$t'$)-regular $(s_1,s_2)$-similarity search problem on $P\subseteq \{0,1\}^d$ is equivalent to an $(i_1/d,i_2/d)$-similarity search problem on $P$, where $\simil(q,x)=|x\cap q|/d$.
In other words, all symmetric similarity search problems can be translated to each other, and it suffices to study a single one, such as Braun-Blanquet similarity.

\paragraph{Jaccard similarity.}
We briefly discuss Jaccard similarity since it is the most widely used measure of set similarity.
If we consider the problem of $(j_1, j_2)$-approximate Jaccard similarity search in the 
$(t,t')$-regular case with $t\ne t'$
then our Theorem \ref{thm:upper} is no longer guaranteed to yield the lowest value of $\rho$ among competing data-independent approaches such as MinHash and Angular LSH.
To simplify the comparision between different measures we introduce parameters $\beta$ and $b$ defined by $|y| = \beta |x|$ and $b = |x \cap y|/|x|$ (note that $0 \leq b \leq \beta \leq 1$).
The three primary measures of set similarity considered in this paper can then be written as follows:
\begin{align*}
	B(x, y) &= b \\
	J(x, y) &= \frac{b}{1 + \beta - b} \\
	C(x, y) &= \frac{b}{\sqrt{\beta}}
\end{align*}
As shown in Figure \ref{fig:jaccard} among angular LSH, MinHash, and \textsc{Chosen Path}, the technique with the lowest $\rho$-value is different depending on the parameters $(j_1, j_2)$ and asymmetry $\beta$. 
\begin{figure*}
\centering
\subfloat[$\beta = 0.25$]{\includegraphics[width = 0.5\columnwidth]{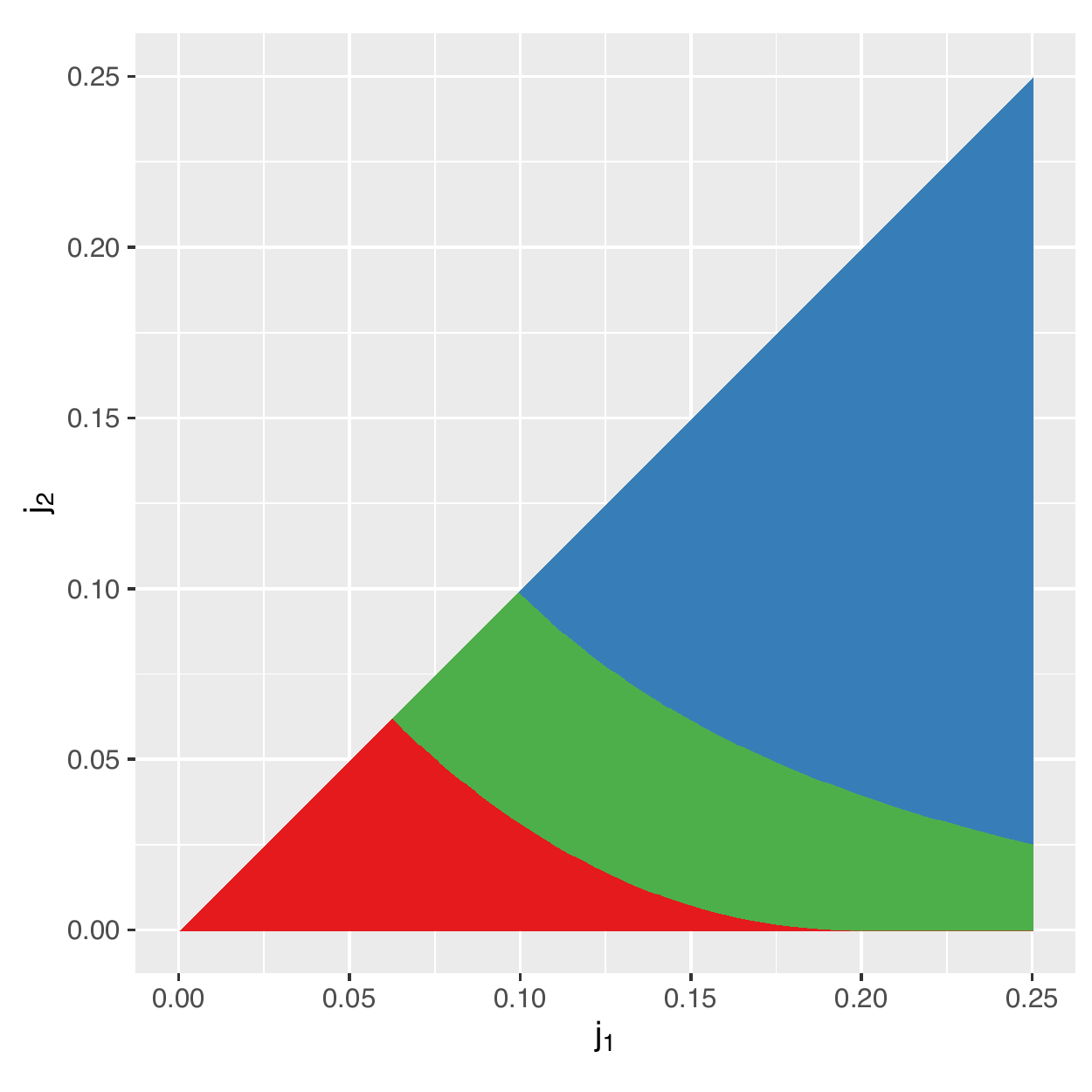}} 
\subfloat[$\beta = 0.5$]{\includegraphics[width = 0.5\columnwidth]{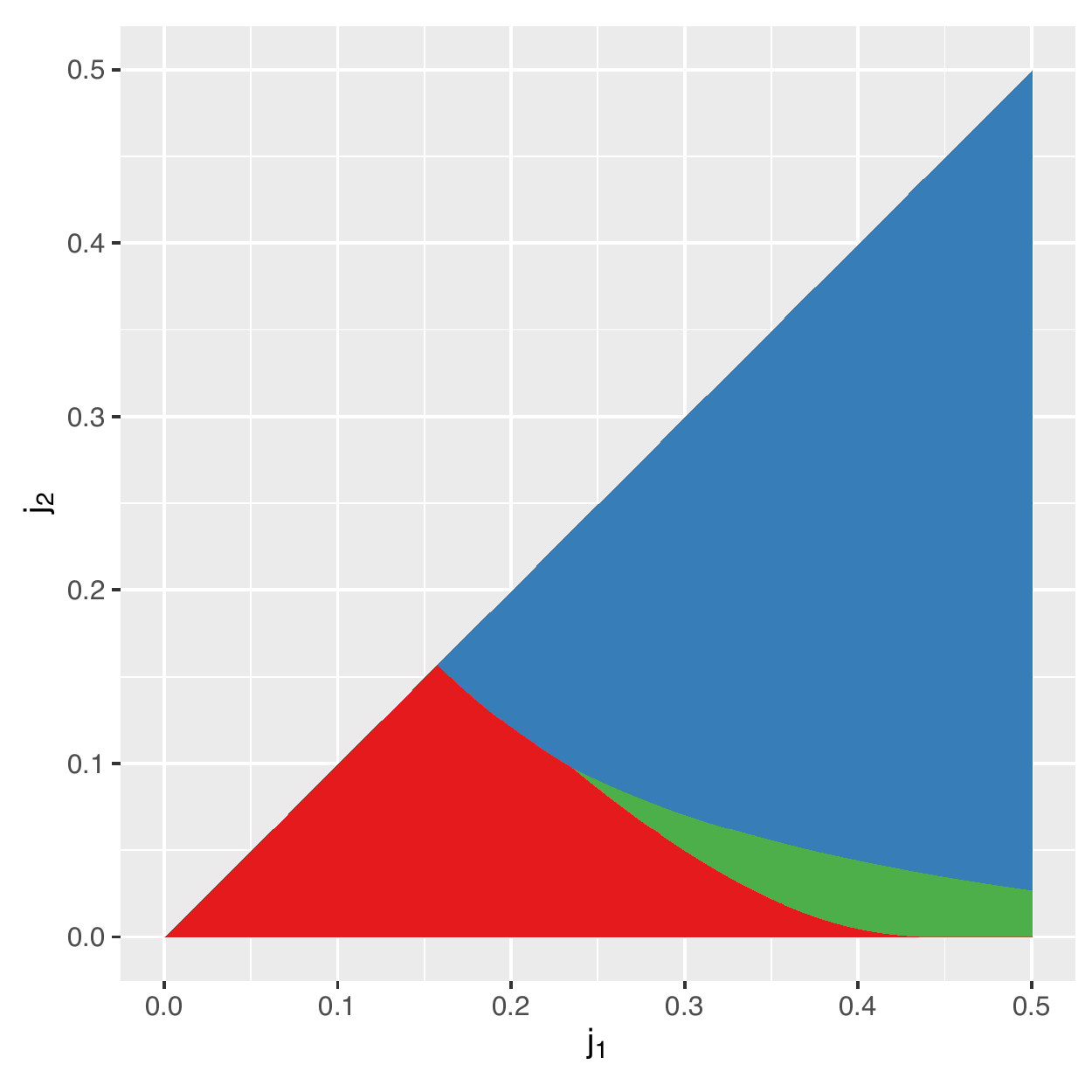}}   \\ 
\subfloat[$\beta = 0.75$]{\includegraphics[width = 0.5\columnwidth]{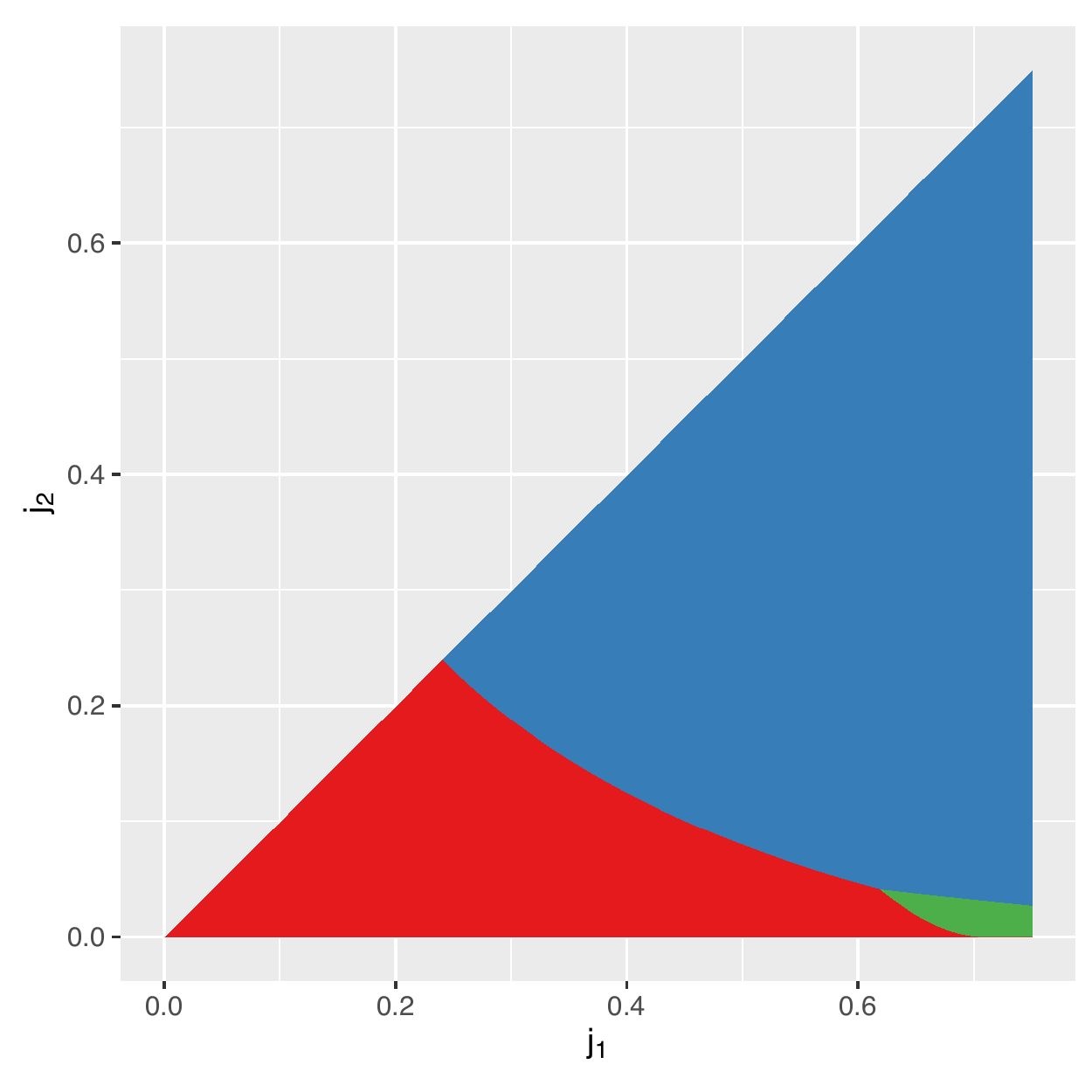}}    
\caption{Solution with lowest $\rho$-value for the $(j_1, j_2)$-approximate Jaccard similarity search problem for different values of $\beta$. 
Blue is angular LSH. Green is MinHash. Red is \textsc{Chosen Path}. 
Note the difference in the axes for different values of $\beta$ as it must hold that $0 \leq j_2 \leq j_1 \leq \beta$.}
\label{fig:jaccard}
\end{figure*}
We know that \textsc{Chosen Path} is optimal and strictly better than the competing data-independent techniques across the entire parameter space $(j_1, j_2)$ when $\beta = 1$, 
but it remains open to find tight upper and lower bounds in the case where $\beta \neq 1$.
\section{Conclusion and open problems}
We have seen that, perhaps surprisingly, there exists a relatively simple way of strictly improving the $\rho$-value for data-independent set similarity search in the case where all sets have the same size.
To implement the required locality-sensitive map efficiently we introduce a new technique based on branching processes 
that could possibly lead to more efficient solutions in other settings.

It remains an open problem to find tight upper and lower bounds on the $\rho$-value for Jaccard and cosine similarity search that hold for the entire parameter space in the general setting with arbitrary set sizes. 
Perhaps a modified version of the \textsc{Chosen Path} algorithm can yield an improved solution to Jaccard similarity search in general.
One approach is to generalize the condition $h_i(p \circ j) < x_{j} / b_{1}|x|$ to use different thresholds for queries and updates.
This yields different space-time tradeoffs when applying the \textsc{Chosen Path} algorithm to Jaccard similarity search.

Another interesting question is if the improvement shown for sparse vectors can be achieved in general for inner product similarity.
A similar, but possibly easier, direction would be to consider \emph{weighted} Jaccard similarity.
\section*{Acknowledgment}
We thank Thomas Dybdahl Ahle for comments on a previous version of this manuscript.
\section{Appendix: Details behind the lower bound}\label{sparse:app:lower}
\subsection{Tools}
For clarity we state some standard technical lemmas that we use to derive LSH lower bounds. 
\begin{lemma}[{Hoeffding \cite[Theorem 1]{hoeffding1963}}] \label{sparse:lem:hoeffding}
	Let $X_1, X_2, \dots, X_n$ be independent random variables satisfying $0 \leq X_i \leq 1$ for $i \in [n]$.
	Define $X = X_1 + X_2 + \dots + X_n$, $Z = X/n$, and $\mu = \E[Z]$, then:
	\begin{itemize}
		\item[-] For $\hat{\mu} \geq \mu$ and $0 < \varepsilon < 1 - \hat{\mu}$ we have that $\Pr[Z - \hat{\mu} \geq \varepsilon] \leq e^{-2n\varepsilon^{2}}$.
		\item[-] For $\hat{\mu} \leq \mu$ and $0 < \varepsilon < \hat{\mu}$ we have that $\Pr[Z - \hat{\mu} \leq - \varepsilon] \leq e^{-2n\varepsilon^{2}}$.
	\end{itemize}
\end{lemma}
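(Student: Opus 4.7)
The plan is to reduce the statement with a general $\hat{\mu}$ to the standard Hoeffding inequality with the true mean $\mu$. First I would observe that for the upper tail, if $\hat{\mu} \geq \mu$ then $\{Z - \hat{\mu} \geq \varepsilon\} \subseteq \{Z - \mu \geq \varepsilon\}$, so $\Pr[Z - \hat{\mu} \geq \varepsilon] \leq \Pr[Z - \mu \geq \varepsilon]$. An entirely symmetric monotonicity argument handles the lower tail when $\hat{\mu} \leq \mu$. It thus suffices to establish the classical statement $\Pr[Z - \mu \geq \varepsilon] \leq e^{-2n\varepsilon^2}$ and its lower-tail counterpart.

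Next I would apply the standard Chernoff technique: for any $t > 0$,
\begin{equation*}
\Pr[Z - \mu \geq \varepsilon] = \Pr\!\left[e^{t(X - n\mu)} \geq e^{tn\varepsilon}\right] \leq e^{-tn\varepsilon} \prod_{i=1}^{n} \E\!\left[e^{t(X_i - \E X_i)}\right],
\end{equation*}
using Markov's inequality and the independence of the $X_i$.

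The main technical step is Hoeffding's lemma: for any random variable $Y$ with $\E[Y] = 0$ and $a \leq Y \leq b$ almost surely, one has $\E[e^{tY}] \leq e^{t^2(b-a)^2/8}$. Since $0 \leq X_i \leq 1$, the centered variables $X_i - \E X_i$ lie in an interval of length at most $1$, so each factor above is bounded by $e^{t^2/8}$. This yields $\Pr[Z - \mu \geq \varepsilon] \leq \exp(nt^2/8 - tn\varepsilon)$, which is minimized by taking $t = 4\varepsilon$ and gives the target bound $e^{-2n\varepsilon^2}$. I expect the proof of Hoeffding's lemma itself (via convexity of $e^{ty}$ to bound $\E[e^{tY}]$ by the MGF of a two-point distribution on $\{a,b\}$, then a Taylor expansion of $\log$) to be the only nontrivial calculation; everything else is bookkeeping. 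The lower tail bound follows by applying the same argument to the variables $1 - X_i \in [0,1]$, whose mean is $1 - \mu$ and for which the upper-tail bound immediately gives the required lower-tail bound on $Z$.
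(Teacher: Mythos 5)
Your proof is correct. Note, however, that the paper does not prove this lemma at all: it is stated in the appendix as a standard tool and cited directly as Theorem~1 of Hoeffding's 1963 paper, so there is no in-paper argument to compare against. Your write-up is the standard route --- reduce the $\hat{\mu}$ version to the true-mean version by the monotonicity observation that $\{Z - \hat{\mu} \geq \varepsilon\} \subseteq \{Z - \mu \geq \varepsilon\}$ when $\hat{\mu} \geq \mu$ (and symmetrically for the lower tail), then prove the classical bound by the Chernoff method together with Hoeffding's lemma $\E[e^{tY}] \leq e^{t^2(b-a)^2/8}$ for centered $Y \in [a,b]$, optimizing at $t = 4\varepsilon$ to get $e^{-2n\varepsilon^2}$, and handling the lower tail via the substitution $X_i \mapsto 1 - X_i$. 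All of these steps are sound; the only cosmetic remark is that the hypotheses $0 < \varepsilon < 1 - \hat{\mu}$ and $0 < \varepsilon < \hat{\mu}$ are not actually needed for your argument (they are inherited from Hoeffding's original statement, where the sharper relative-entropy bound requires them; outside that range the events are vacuous anyway), so your proof in fact establishes a marginally more general statement than the one quoted.
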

\begin{lemma}[{Chernoff \cite[Thm.~4.4 and~4.5]{mitzenmacher2005}}] \label{sparse:lem:chernoff} 
 	Let $X_1, \dots, X_n$ be independent Poisson trials and define $X = \sum_{i=1}^{n}X_i$ and $\mu = \E[X]$.
 	Then, for $0 < \varepsilon < 1$ we have 
 	\begin{itemize}
 		\item[-] $\Pr[X \geq (1+\varepsilon)\mu] \leq e^{-\varepsilon^2 \mu / 3}$.
 		\item[-] $\Pr[X \leq (1-\varepsilon)\mu] \leq e^{-\varepsilon^2 \mu / 2}$.
 	\end{itemize}
\end{lemma}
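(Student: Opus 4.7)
The plan is to prove both tail bounds via the standard Chernoff-Markov technique applied to the moment generating function of $X$. I would first handle the upper tail. For any $t > 0$, Markov's inequality applied to the nonnegative random variable $e^{tX}$ gives $\Pr[X \geq (1+\varepsilon)\mu] \leq \E[e^{tX}]/e^{t(1+\varepsilon)\mu}$. Using independence of the $X_i$ I would factor $\E[e^{tX}] = \prod_{i=1}^n \E[e^{tX_i}]$, and since each $X_i$ is a Poisson trial with $\Pr[X_i = 1] = p_i$, a direct calculation gives $\E[e^{tX_i}] = 1 + p_i(e^t - 1) \leq \exp(p_i(e^t - 1))$ using the inequality $1 + x \leq e^x$. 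Multiplying these bounds together yields $\E[e^{tX}] \leq \exp(\mu(e^t - 1))$.

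Next, I would optimize the parameter $t$. Setting $t = \ln(1+\varepsilon)$ (which is positive for $\varepsilon > 0$) minimizes the resulting bound and gives
\begin{equation*}
\Pr[X \geq (1+\varepsilon)\mu] \leq \left(\frac{e^{\varepsilon}}{(1+\varepsilon)^{1+\varepsilon}}\right)^\mu = \exp\bigl(-\mu\bigl((1+\varepsilon)\ln(1+\varepsilon) - \varepsilon\bigr)\bigr).
\end{equation*}
The remaining task to recover the stated bound $e^{-\varepsilon^2\mu/3}$ is the scalar inequality $(1+\varepsilon)\ln(1+\varepsilon) - \varepsilon \geq \varepsilon^2/3$ for all $0 < \varepsilon < 1$. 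I would verify this by defining $g(\varepsilon) = (1+\varepsilon)\ln(1+\varepsilon) - \varepsilon - \varepsilon^2/3$, noting $g(0) = 0$, and checking that $g'(\varepsilon) = \ln(1+\varepsilon) - 2\varepsilon/3$ is nonnegative on $(0,1)$ by a second derivative argument or by a Taylor expansion of $\ln(1+\varepsilon)$.

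For the lower tail I would proceed symmetrically using $t < 0$: apply Markov to $e^{tX}$ with $t = \ln(1-\varepsilon) < 0$, so that $\Pr[X \leq (1-\varepsilon)\mu] = \Pr[e^{tX} \geq e^{t(1-\varepsilon)\mu}]$. The same MGF bound $\E[e^{tX}] \leq \exp(\mu(e^t - 1))$ holds for all real $t$, so this yields
\begin{equation*}
\Pr[X \leq (1-\varepsilon)\mu] \leq \exp\bigl(-\mu\bigl((1-\varepsilon)\ln(1-\varepsilon) + \varepsilon\bigr)\bigr),
\end{equation*}
and I would conclude by proving the sharper scalar inequality $(1-\varepsilon)\ln(1-\varepsilon) + \varepsilon \geq \varepsilon^2/2$ for $0 < \varepsilon < 1$.

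The only real obstacle is verifying the two elementary scalar inequalities that convert the exact Chernoff bound into the clean subgaussian-looking form; everything else is mechanical application of Markov plus independence. Both inequalities follow from straightforward derivative calculations, with the asymmetry between $1/3$ and $1/2$ arising from the fact that $\ln(1-\varepsilon)$ has a larger quadratic contribution in its Taylor expansion than $\ln(1+\varepsilon)$ does when compared against the linear term $\mp \varepsilon$.
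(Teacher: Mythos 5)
Your proof is correct and is exactly the standard exponential-moment (Chernoff--Markov) argument; the paper itself offers no proof of this lemma, simply citing it as Theorems 4.4 and 4.5 of Mitzenmacher and Upfal \cite{mitzenmacher2005}, where the same MGF bound, the choices $t = \ln(1\pm\varepsilon)$, and the two scalar inequalities $(1+\varepsilon)\ln(1+\varepsilon)-\varepsilon \geq \varepsilon^2/3$ and $(1-\varepsilon)\ln(1-\varepsilon)+\varepsilon \geq \varepsilon^2/2$ appear. The only point to be careful about is the first scalar inequality: a plain Taylor bound on $\ln(1+\varepsilon)$ only covers $\varepsilon \leq 2/3$, so you should use the second-derivative argument together with the endpoint check $\ln 2 - 2/3 > 0$, as you indicate.
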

\begin{lemma}[Bounding the logarithm {\cite{topsoe2007}}] \label{lem:lnbounds}
 	For $x > -1$ we have that $\tfrac{x}{1+x} \leq \ln(1 + x) \leq x$. 
\end{lemma}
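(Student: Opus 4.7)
The plan is to prove the two inequalities separately by elementary single-variable calculus, each time by defining an auxiliary function, locating its critical point, and checking monotonicity. The hypothesis $x > -1$ will be used only to ensure that $1+x > 0$ so that $\ln(1+x)$ and the relevant derivatives are well-defined.

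For the upper bound $\ln(1+x) \leq x$, I would define $f(x) = x - \ln(1+x)$ on $(-1, \infty)$. A direct differentiation gives $f'(x) = 1 - \tfrac{1}{1+x} = \tfrac{x}{1+x}$, which is nonpositive on $(-1, 0]$ and nonnegative on $[0, \infty)$. Hence $f$ attains its global minimum at $x = 0$, where $f(0) = 0$. Consequently $f(x) \geq 0$ for every $x > -1$, which is the desired upper bound.

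For the lower bound $\tfrac{x}{1+x} \leq \ln(1+x)$, I would set $g(x) = \ln(1+x) - \tfrac{x}{1+x}$ on $(-1, \infty)$ and compute
\begin{equation*}
g'(x) = \frac{1}{1+x} - \frac{(1+x) - x}{(1+x)^2} = \frac{1}{1+x} - \frac{1}{(1+x)^2} = \frac{x}{(1+x)^2}.
\end{equation*}
Since $(1+x)^2 > 0$, the sign of $g'(x)$ agrees with the sign of $x$, so $g$ decreases on $(-1,0]$ and increases on $[0,\infty)$. Because $g(0) = 0$, the minimum value of $g$ on its domain is $0$, giving $g(x) \geq 0$ and hence the lower bound. (As a sanity check, substituting $y = -x/(1+x)$, so that $1+x = 1/(1-y)$, reduces this lower bound to the upper bound applied to $-y$, confirming the two halves are really the same inequality viewed from different sides.)

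There is no real obstacle here; the argument is a textbook exercise and the only point requiring care is to verify that the sign analysis of $f'$ and $g'$ covers both $-1 < x < 0$ and $x > 0$ so that $x = 0$ is indeed the global extremum and not merely a local one.
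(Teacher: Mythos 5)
Your proof is correct. The paper does not actually prove this lemma at all --- it is stated with a citation to Tops{\o}e's note on bounds for the logarithm --- so your elementary argument (monotonicity of $x - \ln(1+x)$ and of $\ln(1+x) - \tfrac{x}{1+x}$ via their derivatives, with the global minimum at $x=0$ covering both $-1<x<0$ and $x>0$) is a complete, self-contained justification of exactly what the paper uses. One small slip in your parenthetical sanity check: the substitution should be $y = x/(1+x)$, which gives $1-y = 1/(1+x)$ and hence $1+x = 1/(1-y)$; with $y = -x/(1+x)$ as written the claimed identity fails. This remark is not needed for the proof, so the argument itself stands.
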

\begin{lemma}[Approximating the exponential function {\cite[Prop.~B.3]{motwani2010randomized}}]\label{lem:exp} 
 	For all $t, n \in \real$ with $|t| \leq n$ we have that $e^{t}(1 - \tfrac{t^2}{n}) \leq (1 + \tfrac{t}{n})^n \leq e^t$.
\end{lemma}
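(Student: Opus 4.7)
The plan is to handle the two inequalities separately, with the upper bound being a one-line consequence of Lemma~\ref{lem:lnbounds} and the lower bound following from a multiplicative trick combined with Bernoulli's inequality.

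For the upper bound $(1+t/n)^n \leq e^t$, I would take logarithms (legal since $|t|\leq n$ implies $1+t/n\geq 0$, and the equality case $t=-n$ gives $0\leq e^{-n}$ which is trivial). By Lemma~\ref{lem:lnbounds} applied with $x=t/n > -1$, one has $\ln(1+t/n)\leq t/n$; multiplying by $n$ and exponentiating gives the bound.

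For the lower bound I would use the identity $(1+t/n)(1-t/n) = 1 - t^2/n^2$. Raising to the $n$-th power yields
\begin{equation*}
(1+t/n)^n (1-t/n)^n \;=\; (1 - t^2/n^2)^n \;\geq\; 1 - t^2/n,
\end{equation*}
where the inequality is Bernoulli's inequality applied to $y = t^2/n^2 \in [0,1]$. Now apply the already-established upper bound, but with $t$ replaced by $-t$, to obtain $(1-t/n)^n \leq e^{-t}$. Assuming $|t|<n$ strictly, $(1-t/n)^n$ is positive, so we may divide the previous display by it:
\begin{equation*}
(1+t/n)^n \;\geq\; \frac{1 - t^2/n}{(1-t/n)^n} \;\geq\; (1 - t^2/n)\,e^{t}.
\end{equation*}

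Paragraph on obstacles: The only subtlety is the boundary case $|t| = n$ and the use of Bernoulli with a real exponent. If $t = n$, then $e^t(1-t^2/n) = e^n(1-n) \leq 0 \leq (1+1)^n$, and if $t=-n$ then $e^{-n}(1-n)\leq 0 = 0^n$, so both corner cases are immediate. For real (non-integer) $n>0$, Bernoulli's inequality $(1-y)^n \geq 1-ny$ still holds for $y\in[0,1]$ and $n\geq 1$ (by differentiating $g(y)=(1-y)^n-(1-ny)$, or by convexity of $y\mapsto (1-y)^n$); applications in the paper only require $n$ a positive integer, so even this mild extension is not essential. The main obstacle is therefore purely notational — ensuring that the division step is justified — rather than substantive.
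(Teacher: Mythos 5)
First, a point of reference: the paper never proves this lemma --- it is imported verbatim from Mitzenmacher and Upfal (Prop.~B.3), just as the companion bound Lemma~\ref{lem:exp_upper} is imported from the literature --- so your argument has to be judged on its own rather than against an in-paper proof. Your upper bound is fine: $\ln(1+t/n)\le t/n$ from Lemma~\ref{lem:lnbounds}, with the degenerate case $t=-n$ checked separately, gives $(1+t/n)^n\le e^t$, and the Bernoulli-plus-reciprocal idea for the lower bound is a perfectly good elementary route.

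There is, however, one genuine (though easily repaired) gap in the lower bound. Your final displayed inequality, $\tfrac{1-t^2/n}{(1-t/n)^n}\ge (1-t^2/n)\,e^t$, is obtained by multiplying $(1-t/n)^{-n}\ge e^{t}$ by the factor $1-t^2/n$, and this preserves the direction of the inequality only when $1-t^2/n\ge 0$, i.e.\ $t^2\le n$. In the regime $\sqrt{n}<|t|<n$, which is nonempty whenever $n>1$ (e.g.\ $n=4$, $t=3$), the factor is negative and the displayed inequality is simply false: there $\tfrac{1-t^2/n}{(1-t/n)^n}=-320$ while $(1-t^2/n)e^t\approx-25$. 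The lemma itself is trivial in that regime, since $e^t(1-t^2/n)<0\le(1+t/n)^n$, so the fix is a one-line case split on the sign of $1-t^2/n$ before running your chain --- but as written the chain asserts an inequality that fails for admissible parameters, so the split must be made explicit. Separately, your Bernoulli step and your boundary check at $t=-n$ (where you use $1-n\le 0$) silently assume $n\ge1$. This costs nothing, because the statement as printed, with arbitrary real $n\ge|t|$, is in fact false for $0<n<1$: take $n=0.9$, $t=-0.9$, where the left-hand side is $0.1\,e^{-0.9}>0$ while $(1+t/n)^n=0$. So the intended hypothesis is the $n\ge1$ that Lemma~\ref{lem:exp_upper} states explicitly, and under it your argument, with the sign split added, is correct.
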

\subsection{Proof of Lemma \ref{lem:owzsimple}} 
\paragraph{Preliminaries.}
We will reuse the notation of Section 3.~from O'Donnell et al.~\cite{odonnell2014optimal}.
\begin{definition}
	For $0 \leq \lambda < 1$ we say that $(x, y)$ are $(1-\lambda)$-correlated if $x$ is chosen uniformly at random from $\bitcube{d}$ 
and $y$ is constructed by rerandomizing each bit from $x$ independently at random with probability $\lambda$.
\end{definition}
Let $(x, y)$ be $e^{-t}$-correlated and let $\LSH$ be a family of hash functions on $\bitcube{d}$, then we define
\begin{equation*}
	\K_{\LSH}(t) = \Pr_{\substack{h \sim \LSH \\ (x, y)\, e^{-t}\text{- corr'd}}}[h(x) = h(y)].
\end{equation*}
We have that $\K_{\LSH}(t)$ is a log-convex function which implies the following property that underlies the lower bound: 
\begin{lemma}\label{lem:logconvexity}
	For every family of hash functions $\LSH$ on $\bitcube{d}$, every $t \geq 0$, and $c \geq 1$ we have
	\begin{equation}
		\frac{\ln(1/\K_{\LSH}(t))}{\ln(1/\K_{\LSH}(ct))} \geq \frac{1}{c}.
	\end{equation}
\end{lemma}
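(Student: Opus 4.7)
The hypothesis stated just before the lemma is that $\K_{\LSH}(t)$ is a log-convex function of $t$; in other words, the function $f(t) := \ln \K_{\LSH}(t)$ is convex on $[0,\infty)$. I would take this log-convexity as given (it is the standard consequence, established via Fourier analysis of noise operators, that $\K_{\LSH}(t)=\sum_{S}\widehat{\mathbf{1}_{h^{-1}(r)}}(S)^2 e^{-t|S|}$ is a nonnegative linear combination of exponentials $e^{-t|S|}$, each of which is log-convex in $t$, and a sum of log-convex functions is log-convex). So the only work left is to deduce the ratio bound from convexity of $f$ together with a boundary value.

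The boundary value is $f(0)=0$: when $t=0$, the pair $(x,y)$ is $1$-correlated, i.e.\ $y=x$ almost surely, so $\K_{\LSH}(0)=\Pr_{h}[h(x)=h(x)]=1$. Note also that $f(s)=\ln \K_{\LSH}(s)\leq 0$ for every $s\geq 0$ because $\K_{\LSH}(s)\in[0,1]$.

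Now I would apply convexity of $f$ at the point $t$, written as the convex combination $t = \tfrac{1}{c}\cdot (ct) + \bigl(1-\tfrac{1}{c}\bigr)\cdot 0$ (this uses $c\geq 1$ so that $1/c\in[0,1]$):
\begin{equation*}
f(t)\ \leq\ \tfrac{1}{c} f(ct) + \bigl(1-\tfrac{1}{c}\bigr) f(0) \ =\ \tfrac{1}{c} f(ct).
\end{equation*}
Multiplying through by $-c<0$ (which reverses the inequality) gives $-f(ct) \leq -c f(t)$, i.e.\ $\ln(1/\K_{\LSH}(ct)) \leq c\,\ln(1/\K_{\LSH}(t))$. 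Dividing both sides by $\ln(1/\K_{\LSH}(ct))$, which is nonnegative, yields the claimed inequality
\begin{equation*}
\frac{\ln(1/\K_{\LSH}(t))}{\ln(1/\K_{\LSH}(ct))}\ \geq\ \frac{1}{c}.
\end{equation*}
(The degenerate case $\K_{\LSH}(ct)=1$ makes the right-hand denominator zero; in that case log-convexity forces $\K_{\LSH}(s)=1$ for all $s\in[0,ct]$ too, and the ratio should be interpreted as $1\geq 1/c$, which holds.)

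The only genuine content of the argument is log-convexity of $\K_{\LSH}$, which I am quoting; everything else is a one-line convexity manipulation anchored at $f(0)=0$. I do not anticipate any obstacle beyond handling the corner case when $\K_{\LSH}(ct)=1$, which is immediate.
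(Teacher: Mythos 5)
Your proof is correct and follows essentially the same route as the paper, which likewise treats log-convexity of $\K_{\LSH}$ (via the Fourier/noise-operator decomposition into nonnegative combinations of $e^{-t|S|}$) as the quoted ingredient and derives the ratio bound from convexity of $\ln \K_{\LSH}$ anchored at $\K_{\LSH}(0)=1$. The convexity manipulation $t=\tfrac{1}{c}(ct)+(1-\tfrac{1}{c})\cdot 0$ and the handling of the degenerate case $\K_{\LSH}(ct)=1$ are both fine.
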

The idea behind the proof is to tie $p_1$ to $\sen{t}$ and $p_2$ to $\sen{ct}$ through Chernoff bounds and then apply Lemma \ref{lem:logconvexity} to show that $\rho \gtrsim 1/c$.

\paragraph{Proof.}
Begin by assuming that we have a family $\LSH$ that satisfies the conditions of Lemma \ref{lem:owzsimple}.
Note that the expected Hamming distance betwee $(1-\lambda)$-correlated points $x$ and $y$ is given by $(\lambda/2)d$.
We set $\lambda_{p_{1}}/2 = d^{-1/2} - d^{-5/8}$ and $\lambda_{p_{2}}/2 = cd^{-1/2} + 2cd^{-5/8}$ and let $(x, y)$ denote $(1 - \lambda_{p_{1}})$-correlated random strings 
and $(x, x')$ denote $(1 - \lambda_{p_{2}}q$)-correlated random strings.
By standard Chernoff bounds we get the following guarantees:
\begin{align*}
	\Pr[\norm{x-y}_{1} \geq r] &\leq e^{-\Omega(d^{1/4})}, \\
	\Pr[\norm{x-x'}_1 \leq cr] &\leq e^{-\Omega(d^{1/4})}.
\end{align*}
We will establish a relationship between $\K_{\LSH}(t_{p_{1}})$ and ${p_{1}}$ on the one hand, and $\K_{\LSH}(t_{p_{2}})$ and ${p_{2}}$ on the other hand, 
for the following choice of parameters $t_{p_{1}}$ and $t_{p_{2}}$:
\begin{align*}
	t_{p_{1}} &= -\ln(1 - 2(d^{-1/2} - d^{-5/8})) \\
	t_{p_{2}} &= -\ln(1 - 2c(d^{-1/2} + 2d^{-5/8})).
\end{align*}
By the properties of $\LSH$ and from the definition of $\K_{\LSH}$ we have that
\begin{align*}
	\K_{\LSH}(t_{p_{1}}) &\geq {p_{1}}(1 - \Pr[\norm{x - y}_1 > r]) \geq {p_{1}} - \Pr[\norm{x - y}_1 \geq r] \\
	\K_{\LSH}(t_{p_{2}}) &\leq {p_{2}}(1 - \Pr[\norm{x - x'}_1 \leq cr]) + \Pr[\norm{x - x'}_1 \leq cr] \\
						 &\leq {p_{2}} + \Pr[\norm{x - x'}_1 \leq cr].
\end{align*}
Let $\delta = \max\{\Pr[\norm{x - y}_1 \geq r], \Pr[\norm{x - x'}_1 \leq cr]\} = e^{-\Omega(d^{1/4})}$.
By Lemma \ref{lem:logconvexity} and our setting of $t_{p_{1}}$ and $t_{p_{2}}$ we can use the bounds on the natural logarithm from Lemma \ref{lem:lnbounds} to show the following: 
\begin{align*}
	\frac{\ln(1/\K_{\LSH}(t_{p_{1}}))}{\ln(1/\K_{\LSH}(t_{p_{2}}))} &\geq \frac{t_{p_{1}}}{t_{p_{2}}} = \frac{\ln(1 - 2(d^{-1/2} - d^{-5/8}))}{\ln(1 - 2c(d^{-1/2} + 2d^{-5/8}))} \\
	&\geq \frac{2(d^{-1/2} - d^{-5/8})}{2c(d^{-1/2} + 2d^{-5/8})} -  2(d^{-1/2} - d^{-5/8}) \\
	&\geq \frac{1 - d^{-1/4}}{c + 2d^{-1/4}} - 2(d^{-1/2} - d^{-5/8}) \\
	&= \frac{1}{c} - O(d^{-1/4}).
\end{align*}
We proceed by lower bounding $\rho$ where we make use of the inequalities derived above.
\begin{equation*}
\sen{t_{p_{2}}} - \delta \leq {p_{2}} < {p_{1}} \leq \sen{t_{p_{1}}} + \delta.
\end{equation*}
By Lemma \ref{lem:logconvexity} combined with the restrictions on our parameters,
for $d$ greater than some constant we have that $\sen{t_{p_{2}}} \geq \sen{t_{p_{1}}}^{2c} \geq ({p_{1}}/2)^{2c} \geq (2d)^{-2c} \geq (2d)^{-2d^{1/8}}$.
Furthermore, we lower bound $\ln(1/\sen{t_{p_{2}}})$ by using that $\sen{t_{p_{2}}} \leq {p_{2}} + \delta$ together with the restriction that ${p_{2}} \geq 1 - 1/d$ and the properties of $\delta$.
For $d$ greater than some constant it therefore holds that $\sen{t_{p_{2}}} \leq 1 - 1/2d$ from which it follows that $\ln(1/\sen{t_{p_{2}}}) \geq 1/2d$.   
\begin{align*}
	\frac{\ln(1/{p_{1}})}{\ln(1/{p_{2}})} &\geq \frac{\ln(1/(\sen{t_{p_{1}}} + \delta))}{\ln(1/(\sen{t_{p_{2}}} - \delta))} \\
							  &= \frac{\ln(1/\sen{t_{p_{1}}}) - \ln(1 + \delta/\sen{t_{p_{1}}})}{\ln(1/\sen{t_{p_{2}}}) + \ln(1/(1 - \delta/\sen{t_{p_{2}}}))} \\
							  &\geq \frac{\ln(1/\sen{t_{p_{1}}}) - \delta/\sen{t_{p_{1}}}}{\ln(1/\sen{t_{p_{2}}}) + 2\delta/\sen{t_{p_{2}}}} \\
							  &\geq \frac{\ln(1/\sen{t_{p_{1}}})}{\ln(1/\sen{t_{p_{2}}})} - \frac{3 \delta}{\sen{t_{p_{2}}} \ln(1/\sen{t_{p_{2}}})}.
\end{align*}
By the arguments above we have that 
\begin{equation*}
	\frac{3 \delta}{\sen{t_{p_{2}}} \ln(1/\sen{t_{p_{2}}})} = e^{-\Omega(d^{1/4})} = O(d^{-1/4}).
\end{equation*}
Inserting the lower bound for $\frac{\ln(1/\sen{t_{p_{1}}})}{\ln(1/\sen{t_{p_{2}}})}$ results in the lemma. 

\section{Appendix: Comparisons} \label{app:comparison}
For completeness we state the proofs behind the comparisons between the $\rho$-values obtained by the \textsc{Chosen Path} algorithm and other LSH techniques.
\subsection{MinHash}
For data sets with fixed sparsity and Braun-Blanquet similarities $0 < b_2 < b_1 < 1$ we have that $\rho/\rho_{\text{minhash}} = f(b_{2})/f(b_{1})$ where $f(x) = \log(x/(2-x)) / \log(x)$.
If $f(x)$ is monotone increasing in $(0,1)$ then $\rho/\rho_{\text{minhash}} < 1$.
For $x \in (0,1)$ we have that $\sign(f'(x)) = \sign(g(x))$ where $g(x) = \ln(x) + (2-x) \ln(2-x)$.
The function $g(x)$ equals zero at $x = 1$ and has the derivative $g'(x) = \ln(x) - \ln(2-x)$ which is negative for values of $x \in (0,1)$.
We can thefore see that $f'(x)$ is positive in the interval and it follows that $\rho < \rho_{\text{minhash}}$ for every choice of $0 < b_2 < b_1 < 1$. 

\subsection{Angular LSH}
We have that $\rho/\rho_{\text{angular}} < 1$ if $f(x) = \ln(x) \frac{1+x}{1-x}$ is a monotone increasing function for $x \in (0,1)$.
For $x \in (0,1)$ we have that $\sign(f'(x)) = \sign(g(x))$ where $g(x) = (1-x^2)/2 + x \ln x$.
We note that $g(1) = 0$ and $g'(x) = 1 - x + \ln x$. 
Therefore, if $g'(x) < 0$ for $x \in (0,1)$ it holds that $g(x) > 0$ and $f(x)$ is monotone increasing in the same interval.
We have that $g'(1) = 0$ and $g''(x) = -1 + 1/x > 0$ implying that $g'(x) < 0$ in the interval. 

\subsection{Data-dependent LSH}
\begin{lemma}\label{lem:fixedrho}
Let $0 < b_2 < b_1 < 1$ and fix $\rho = 1/2$ such that $b_1 = \sqrt{b_2}$.
Then we have that $\rho < \rho_{\text{datadep}}$ for every value of $b_2 < 1/4$. 
\end{lemma}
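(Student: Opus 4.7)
The plan is a direct algebraic verification. With $\rho = 1/2$ we have the constraint $b_1 = \sqrt{b_2}$, so substituting into the data-dependent expression gives
\begin{equation*}
\rho_{\text{datadep}} = \frac{1 - b_1}{1 + b_1 - 2b_2} = \frac{1 - \sqrt{b_2}}{1 + \sqrt{b_2} - 2b_2}.
\end{equation*}
The claim $\rho < \rho_{\text{datadep}}$ then reduces to showing
\begin{equation*}
\frac{1 - \sqrt{b_2}}{1 + \sqrt{b_2} - 2b_2} > \frac{1}{2}
\end{equation*}
whenever $0 < b_2 < 1/4$.

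First I would verify that the denominator $1 + \sqrt{b_2} - 2b_2$ is positive on the range of interest, which is immediate since it equals $1 + \sqrt{b_2}(1 - 2\sqrt{b_2})$ and $\sqrt{b_2} < 1/2$. Clearing denominators, the desired inequality becomes $2(1 - \sqrt{b_2}) > 1 + \sqrt{b_2} - 2b_2$, i.e.\ $2b_2 - 3\sqrt{b_2} + 1 > 0$. Setting $x = \sqrt{b_2}$, this is the quadratic inequality $2x^2 - 3x + 1 > 0$, which factors as $(2x - 1)(x - 1) > 0$. For $0 < x < 1/2$ both factors are strictly negative, so the product is strictly positive, which is exactly the regime $0 < b_2 < 1/4$.

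There is no real obstacle here; the only thing to be mildly careful about is the sign of the denominator before cross-multiplying, which is why I would check it explicitly at the outset. The boundary case $b_2 = 1/4$ (equivalently $x = 1/2$) gives equality $\rho = \rho_{\text{datadep}} = 1/2$, consistent with the strict inequality failing at the endpoint.
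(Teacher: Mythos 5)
Your proof is correct, and it takes a slightly different (and in fact more self-contained) route than the paper. The paper fixes $\rho = 1/2$, solves the equation $1/2 = \frac{1-\sqrt{b_2}}{1+\sqrt{b_2}-2b_2}$ to conclude that $\rho = \rho_{\text{datadep}}$ only at $b_2 = 1/4$, and then invokes the (unverified in the text) claim that $\rho_{\text{datadep}}$ is decreasing in $b_2$ along the curve $b_1 = \sqrt{b_2}$; since $\rho_{\text{datadep}}(1/4) = 1/2$, monotonicity gives $\rho_{\text{datadep}} > 1/2$ for $b_2 < 1/4$. You instead prove the inequality directly: after checking the denominator is positive (your factoring $1 + \sqrt{b_2}(1-2\sqrt{b_2})$ suffices, and indeed $1 + x - 2x^2 = (1-x)(1+2x) > 0$ on all of $(0,1)$), cross-multiplication reduces everything to $2x^2 - 3x + 1 = (2x-1)(x-1) > 0$ for $x = \sqrt{b_2} \in (0,1/2)$, with equality exactly at $x = 1/2$. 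Both arguments ultimately hinge on the same quadratic, but your version needs no derivative computation and simultaneously recovers the paper's observation that $b_2 = 1/4$ is the unique crossing point, so it is arguably the cleaner write-up of this lemma; the paper's monotonicity framing has the separate advantage of fitting the style of its neighbouring lemmas (e.g.\ the partial-derivative argument of Lemma~\ref{lem:partial}), which it reuses for the full-parameter-space comparison.
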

\begin{proof}
We will compare $\rho = \log(b_{1})/\log(b_{2})$ and $\rho_{\text{datadep}} = \frac{1 - b_{1}}{1 + b_{1} - 2b_{2}}$ when $\rho$ is fixed at $\rho = 1/2$, or equivalently, $b_{1} = \sqrt{b_{2}}$.
We can solve the quadratic equation $1/2 = \frac{1 - \sqrt{b_2}}{1 + \sqrt{b_2} - 2b_{2}}$ to see that for $0 < b_2 < 1$ we have that $\rho = \rho_{\text{datadep}}$ only when $b_2 = 1/4$. 
The derivative of $\rho_{\text{datadep}}$ with respect to $b_2$ is negative when $b_{1} = \sqrt{b_2}$. 
Under this restriction we therefore have that $\rho < \rho_{\text{datadep}}$ for $b_2 < 1/4$ which is equivalent to $j_2 < 1/7$ in the fixed-weight setting.
\end{proof}

To compare $\rho$-values over the full parameter space we use the following two lemmas. 
\begin{lemma} \label{lem:partial}
For every choice of fixed $0 < \rho < 1$ let $b_{2} = b_{1}^{1/\rho}$. 
Then $\rho_{\text{datadep}} = \frac{1 - b_{1}}{1 + b_{1} - 2b_{2}}$ is decreasing in $b_1$ for $b_1 \in (0,1)$.
\end{lemma}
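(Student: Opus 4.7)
The plan is to substitute $b_2 = b_1^{1/\rho}$ and treat the resulting function of a single variable. Set $\alpha = 1/\rho > 1$ and $x = b_1$, so that
\begin{equation*}
f(x) = \frac{1-x}{1+x-2x^{\alpha}}.
\end{equation*}
I want to show $f'(x) < 0$ for $x \in (0,1)$.

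First I would check that the denominator $g(x) = 1 + x - 2x^\alpha$ is strictly positive on $(0,1)$ so that $f$ is well-defined and the sign of $f'$ is determined by the sign of its numerator. Since $g(0) = 1$, $g(1) = 0$, and $g'(x) = 1 - 2\alpha x^{\alpha-1}$ has a single zero in $(0,1)$ at which $g''<0$, the function $g$ is unimodal on $[0,1]$ with value $1$ at the left endpoint and $0$ at the right endpoint, so $g>0$ on $[0,1)$.

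Next I would compute $f'(x)$ using the quotient rule; its sign matches that of the numerator
\begin{equation*}
N(x) = -(1+x-2x^\alpha) - (1-x)\bigl(1 - 2\alpha x^{\alpha-1}\bigr).
\end{equation*}
Expanding and simplifying gives
\begin{equation*}
N(x) = -2 + 2\alpha x^{\alpha-1} - 2(\alpha-1) x^\alpha.
\end{equation*}
Now observe that $N(1) = -2 + 2\alpha - 2(\alpha-1) = 0$ and
\begin{equation*}
N'(x) = 2\alpha(\alpha-1)x^{\alpha-2}(1 - x),
\end{equation*}
which is strictly positive on $(0,1)$ because $\alpha > 1$. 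Hence $N$ is strictly increasing on $(0,1)$ with $N(1) = 0$, so $N(x) < 0$ on $(0,1)$. Combined with $g(x)^2 > 0$, this gives $f'(x) < 0$ on $(0,1)$, establishing that $\rho_{\text{datadep}}$ is strictly decreasing in $b_1$ along the curve $b_2 = b_1^{1/\rho}$.

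The calculation is essentially mechanical; the only mild obstacle is ensuring the denominator $g(x)$ does not vanish inside $(0,1)$ and keeping the algebra in $N(x)$ clean enough to factor its derivative and exploit the boundary value $N(1)=0$.
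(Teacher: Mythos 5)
Your proof is correct and follows essentially the same route as the paper: your numerator $N(x) = -2 + 2\alpha x^{\alpha-1} - 2(\alpha-1)x^{\alpha}$ is, up to the positive factor $2x^{1/\rho}/\rho$, exactly the paper's sign function $g(x) = -\rho x^{-1/\rho} + \rho - 1 + x^{-1}$, and both arguments conclude by noting this quantity vanishes at $x=1$ and is increasing on $(0,1)$, hence negative there. Your explicit check that the denominator $1+x-2x^{\alpha}$ stays positive is a small addition the paper leaves implicit (it follows even more directly from $b_2 < b_1$, giving $1+b_1-2b_2 > 1-b_1 > 0$).
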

\begin{proof}
The sign of the derivative of $\rho_{\text{datadep}}$ with respect to $b_1$ is equal to the sign of the function $g(x) = -\rho x^{-1/\rho} + \rho - 1 + x^{-1}$ for $x \in (0,1)$.
We have that $g(1) = 0$ and $g'(x) = x{-1/p - 1} - x^{-2} > 0$ for $x \in (0,1)$ which shows that $g(x) < 0$ in the interval.
\end{proof}

\begin{lemma} \label{lem:onefifth}
For $1/5 = b_2 < b_1 < 1$ we have that $\rho < \rho_{\text{datadep}}$.
\end{lemma}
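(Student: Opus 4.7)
}
My plan is to clear denominators and reduce the inequality $\rho < \rho_{\text{datadep}}$ to the positivity of a single auxiliary function on $(1/5,1)$, and then verify positivity by analyzing the first and second derivatives.

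First, I would rewrite the claim. Since $\log(1/5) = -\log 5 < 0$ and $3/5 + b_1 > 0$, the inequality $\frac{\log b_1}{\log(1/5)} < \frac{1-b_1}{3/5 + b_1}$ is equivalent (after multiplying through by the two positive quantities $\log 5$ and $3/5+b_1$) to
\begin{equation*}
g(b_1) := (1-b_1)\log 5 + (3/5 + b_1)\log b_1 \;>\; 0 \quad \text{for } b_1 \in (1/5, 1).
\end{equation*}
The two boundary checks are immediate: $g(1/5) = (4/5)\log 5 + (4/5)\log(1/5) = 0$ and $g(1) = 0 + (8/5)\log 1 = 0$, matching the fact that at $b_1 = b_2 = 1/5$ both exponents equal $1$, and at $b_1 = 1$ both equal $0$.

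Next, I would study the shape of $g$ via its derivatives:
\begin{equation*}
g'(b_1) = -\log 5 + \log b_1 + \tfrac{3}{5b_1} + 1, \qquad g''(b_1) = \tfrac{1}{b_1} - \tfrac{3}{5 b_1^2} = \tfrac{5 b_1 - 3}{5 b_1^2}.
\end{equation*}
The second derivative vanishes only at $b_1 = 3/5$, being negative on $(1/5, 3/5)$ and positive on $(3/5, 1)$; hence $g'$ is strictly decreasing on $(1/5, 3/5)$ and strictly increasing on $(3/5, 1)$. Evaluating at the three relevant points, $g'(1/5) = 4 - 2\log 5 > 0$, $g'(3/5) = 2 - 2\log 5 + \log 3 < 0$, and $g'(1) = 8/5 - \log 5 < 0$. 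Combined with the monotonicity structure of $g'$, this shows $g'$ has exactly one zero $b^\ast \in (1/5, 3/5)$ and is negative throughout $(b^\ast, 1)$. Therefore $g$ is strictly increasing on $(1/5, b^\ast)$ and strictly decreasing on $(b^\ast, 1)$; combined with $g(1/5) = g(1) = 0$, this forces $g > 0$ on the open interval $(1/5, 1)$, which is exactly the desired inequality.

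The main obstacle is ruling out an interior zero of $g$ in $(1/5,1)$, since a naive endpoint check gives $g(1/5) = g(1) = 0$ and leaves open the possibility that $g$ dips below zero somewhere in between. The derivative analysis above handles this cleanly: the convexity change at $b_1 = 3/5$ means $g'$ could in principle cross zero twice, but the explicit numerical values of $g'$ at $b_1 = 1/5$, $3/5$, and $1$ (which need only be bounded, not computed exactly) pin down the sign pattern of $g'$ and hence the unimodal shape of $g$. The only place the argument is sensitive to the specific value $b_2 = 1/5$ is in the signs of $g'(1/5)$ and $g'(1)$; indeed, as $b_2$ grows past $1/5$, the analogous sign at the left endpoint flips, which is consistent with the fact (visible in Figure~\ref{fig:data}) that $\rho > \rho_{\text{datadep}}$ can occur for larger $b_2$.
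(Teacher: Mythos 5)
Your proof is correct. It differs from the paper's argument in a few worthwhile ways. The paper works directly with $f(b_1) = \rho - \rho_{\text{datadep}}$, observes that $f'(b_1)=0$ reduces to a quadratic equation with a single root in $[1/5,1]$, and then imports the sign of $f$ at the interior point $b_1 = 1/\sqrt{5}$ from Lemma~\ref{lem:fixedrho} (the $\rho = 1/2$ case); together with $f(1/5)=f(1)=0$ this forces $f<0$ on the open interval. You instead clear denominators to get the elementary function $g(b_1) = (1-b_1)\log 5 + (3/5+b_1)\log b_1$ and establish unimodality of $g$ via the sign change of $g''$ at $b_1 = 3/5$ plus the signs of $g'$ at $1/5$, $3/5$, and $1$, so your argument is self-contained: it needs neither Lemma~\ref{lem:fixedrho} nor an explicit evaluation of the function at an interior point, and it avoids discussing roots of a quadratic. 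The price is that your conclusion hinges on the rather delicate inequality $g'(1) = 8/5 - \ln 5 < 0$ (true because $e^{1.6} \approx 4.953 < 5$, but worth stating explicitly since it decides whether $g'$ could re-cross zero on $(3/5,1)$ and thereby create a local minimum of $g$ that might dip below zero); the other sign checks, $g'(1/5) = 4 - 2\ln 5 > 0$ and $g'(3/5) = 2 + \ln 3 - 2\ln 5 < 0$, are comfortable. Your closing remark that the left-endpoint sign flips as $b_2$ grows past $1/5$ is consistent with the paper's Figure~\ref{fig:data} and with the fact that $1/5$ is exactly the threshold used in the corollary that follows.
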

\begin{proof}
For fixed $b_2 = 1/5$ consider $f(b_1) = \rho - \rho_{\text{datadep}}$ as a function of $b_1$ in the interval $[1/5, 1]$.
We want to show that $f(b_1) < 0$ for $b_1 \in (1/5,1)$. 
In the endpoints the function takes the value $0$. 
Between the endpoints we find that $f'(b_1) = \frac{1}{\ln(5)b_1} + \frac{8/5}{(3/5 + b_1)^2}$ and that $f'(b_1) = 0$ is a quadratic form with only one solution $b_{1}^{*}$ in $[1/5,1]$.
By Lemma \ref{lem:fixedrho} we know that that for $b_2 = 1/5$ and $b_1 = 1/\sqrt{5}$ it holds that $f(b_1) < 0$.
Since $f(1/5) = f(1) = 0$, $f'(b_1) = 0$ only in a single point in $[1/5,1]$, and $f(1/\sqrt{5}) < 0$ we can conclude that the lemma holds.
\end{proof}

\begin{corollary}
For every choice of $b_1, b_2$ satisfying $0 < b_2 \leq 1/5$ and $b_2 < b_1 < 1$ we have that $\rho < \rho_{\text{datadep}}$.
\end{corollary}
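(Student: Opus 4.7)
\noindent The plan is to reduce the general case $0 < b_2 \leq 1/5$, $b_2 < b_1 < 1$ to the boundary case $b_2 = 1/5$ already handled in Lemma \ref{lem:onefifth}, using Lemma \ref{lem:partial} to control how $\rho_{\text{datadep}}$ varies along a level curve of $\rho$. Case $b_2 = 1/5$ is immediate, so the substantive work is only for $b_2 < 1/5$.

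First I would fix an arbitrary admissible pair $(b_1, b_2)$ with $0 < b_2 < 1/5$ and $b_2 < b_1 < 1$, and set $\rho^\ast = \log(b_1)/\log(b_2)$. Observe that $\rho^\ast \in (0,1)$, because $0 < b_2 < b_1 < 1$ forces $\log(b_2) < \log(b_1) < 0$. Consider the one-parameter locus $\{(t^{\rho^\ast}, t) : t \in (0,1)\}$, which is the level set on which $\rho$ is identically $\rho^\ast$. Since $\rho^\ast < 1$ and $t \in (0,1)$, we have $t^{\rho^\ast} > t$, so this locus always lies in the admissible region $b_2 < b_1 < 1$. By construction it passes through the original point at $t = b_2$.

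Next I would slide along this level curve from $(b_1, b_2)$ to the endpoint $(\tilde{b}_1, 1/5)$ where $\tilde{b}_1 = (1/5)^{\rho^\ast}$. Since $b_2 < 1/5$, this corresponds to increasing $t$ from $b_2$ to $1/5$, and the first coordinate $t^{\rho^\ast}$ increases monotonically from $b_1$ to $\tilde{b}_1 \in (1/5, 1)$. By Lemma \ref{lem:partial}, the quantity $\rho_{\text{datadep}} = (1-b_1)/(1 + b_1 - 2b_2)$ is strictly decreasing in $b_1$ along this level curve, so its value at the original point strictly exceeds its value at the endpoint. Lemma \ref{lem:onefifth} applied at the endpoint (which satisfies its hypotheses, $b_2 = 1/5$ and $\tilde{b}_1 \in (1/5,1)$) yields $\rho^\ast < \rho_{\text{datadep}}(\tilde{b}_1, 1/5)$. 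Chaining the two inequalities gives $\rho^\ast < \rho_{\text{datadep}}(b_1, b_2)$, as desired.

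The main obstacle is essentially already packaged into Lemma \ref{lem:partial} and Lemma \ref{lem:onefifth}; once those are in hand, the corollary is a short monotonicity-plus-boundary-value argument. If one had to produce everything from scratch, the real difficulty would sit inside Lemma \ref{lem:onefifth}: one must analyze the one-variable function $f(b_1) = \log(b_1)/\log(1/5) - (1-b_1)/(6/5 - b_1)$ on the interval $[1/5, 1]$, verify $f(1/5) = f(1) = 0$, show via the quadratic equation $f'(b_1) = 0$ that there is exactly one interior critical point, and then fix the sign using a known interior value such as $b_1 = 1/\sqrt{5}$, where Lemma \ref{lem:fixedrho} guarantees $f(1/\sqrt{5}) < 0$. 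That sign-tracking across a transcendental function is the only step that requires honest (if routine) calculation.
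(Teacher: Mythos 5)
Your proposal is correct and follows essentially the same route as the paper: move along the level curve $b_2 = b_1^{1/\rho}$ to the boundary case $b_2 = 1/5$, apply Lemma \ref{lem:onefifth} there, and transfer the inequality back using the monotonicity of $\rho_{\text{datadep}}$ along that curve from Lemma \ref{lem:partial}. The only difference is the direction in which you traverse the level curve, which is purely presentational.
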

\begin{proof}
If $b_2 = 1/5$ the property holds by Lemma \ref{lem:onefifth}.
If $b_2 < 1/5$ we define new variables $\hat{b}_2, \hat{b}_2$, setting $\hat{b}_1 = \hat{b}_{1}^{\rho(b_1, b_2)}$ and initially consider $\hat{b}_{2} = 1/5$.
In this setting we again have that $\rho(\hat{b}_{1}, \hat{b}_{2}) < \rho_{\text{datadep}}(\hat{b}_{1},\hat{b}_{2})$.
According to Lemma \ref{lem:partial} it holds that $\rho_{\text{datadep}}$ is decreasing in $b_2$ for fixed $\rho$.
Therefore, as $\hat{b}_{2}$ decreases to $\hat{b}_{2} = b_2$ where $\hat{b}_{1} = b_1$ we have that $\rho(\hat{b}_{1}, \hat{b}_{2}) = \rho$ remains constant while $\rho_{\text{datadep}}$ increases. 
Since it held that $\rho < \rho_{\text{datadep}}$ at the initial values of $\hat{b}_{1}, \hat{b}_{2}$ it must also hold for $b_1, b_2$.
\end{proof}

\paragraph{Numerical comparison of MinHash and Data-dep.~LSH.}
Comparing $\rho_{\text{minhash}}$ to $\rho_{\text{datadep}}$ we can verify numerically that even for $b_2$ fixed as low as $b_2 = 1/23$ 
we can find values of $b_1$ (for example $b_1 = 0.995$ such that $\rho_{\text{minhash}} > \rho_{\text{datadep}}$.
\chapter{Adaptive similarity join}
\sectionquote{A light from the shadows shall spring}
\noindent Set similarity join is a fundamental and well-studied database operator.
It is usually studied in the \emph{exact} setting where the goal is to compute all pairs of sets that exceed a given similarity threshold (measured e.g.~as Jaccard similarity).
But set similarity join is often used in settings where 100\% recall may not be important --- indeed, where the exact set similarity join is itself only an approximation of the desired result set.

We present a new randomized algorithm for set similarity join that can achieve any desired recall up to 100\%, 
and show theoretically and empirically that it significantly improves on existing methods.
The present state-of-the-art exact methods are based on prefix-filtering, the performance of which depends on the prevalence of rare elements in the sets.
Our method is robust against the absence of such structure in the data. 
At 90\% recall our algorithm is often more than an order of magnitude faster than state-of-the-art exact methods, depending on how well a data set lends itself to prefix filtering.
Our experiments on benchmark data sets also show that the method is several times faster than comparable approximate methods.
Our algorithm makes use of recent theoretical advances in high-dimensional sketching and indexing.
\section{Introduction}
It is increasingly important for data processing and analysis systems to be able to work with data that is imprecise, incomplete, or noisy.
\emph{Similarity join} has emerged as a fundamental primitive in data cleaning and entity resolution over the last decade~\cite{augsten2013similarity, chaudhuri2006primitive, sarawagi2004efficient}.
In this paper we focus on \emph{set similarity join}:
Given collections $R$ and $S$ of sets the task is to compute 
$$ R \simjoin S = \{(x,y)\in R\times S \; | \; \simil(x,y)\geq \lambda\}$$
where $\simil(\cdot, \cdot)$ is a similarity measure and $\lambda$ is a threshold parameter.
We deal with sets $x,y \subseteq \{1,\dots,d\}$, where the number $d$ of distinct tokens can be naturally thought of as the dimensionality of the data.

Many measures of set similarity exist~\cite{choi2010survey}, but perhaps the most well-known such measure is the \emph{Jaccard similarity},
$$J(x,y) = |x\cap y|/|x\cup y| \enspace .$$
For example, the sets $x = \{$\texttt{IT, University, Copenhagen}$\}$ and $y = \{$\texttt{University, Copenhagen, Denmark}$\}$  have Jaccard similarity $J(x,y) = 1/2$ which could suggest that they both correspond to the same entity.
In the context of entity resolution we want to find a set~$T$ that contains $(x,y)\in R\times S$ if and only if $x$ and $y$ correspond to the same entity.
The quality of the result can be measured in terms of \emph{precision} $|(R \simjoin S) \cap T|/|T|$ and \emph{recall} $|(R \simjoin S) \cap T|/|R \simjoin S|$, both of which should be as high as possible.
We will be interested in methods that achieve 100\% precision, but that might not have 100\% recall.
We refer to methods with 100\% recall as exact, and others as approximate.
\subsection{Our Contributions}
We present a new approximate set similarity join algorithm: Chosen Path Similarity Join (\textsc{CPSJoin}).
We cover its theoretical underpinnings, and show experimentally that it achieves high recall with a substantial speedup compared to state-of-the-art exact techniques.
The key ideas behind \cpsj are:
\begin{itemize}
	\item A new recursive filtering technique inspired by the recently proposed {\textsc ChosenPath} index for set similarity search~\cite{christiani2017set}, adding new ideas to make the method parameter-free, near-linear space, and adaptive to a given data set.
	\item Apply efficient sketches for estimating set similarity~\cite{li2011theory} that take advantage of modern hardware. 
\end{itemize}

We compare \cpsj to the exact set similarity join algorithms in the comprehensive empirical evaluation of Mann et al.~\cite{mann2016}, using the same data sets, and to other approximate set similarity join methods suggested in the literature.
We find that \cpsj outperforms other approximate methods and scales better than exact methods when the sets are relatively large (100 tokens or more) and the similarity threshold is low (e.g.~Jaccard similarity~0.5) where we see speedups of more than an order of magnitude at 90\% recall.
Our experiments on benchmark datasets show that exact methods are faster in the case of high similarity thresholds, when the average set size is small, and when sets have many rare elements, 
whereas approximate methods are faster in the case of low similarity thresholds and when sets are large.
This finding is consistent with theory and is further corroborated by experiments on synthetic datasets.
\subsection{Related Work}\label{sec:related}
For space reasons we present just a sample of the most related previous work, and refer to the book of Augsten and B{\"o}hlen~\cite{augsten2013similarity} for a survey of algorithms for exact similarity join in relational databases, covering set similarity joins as well as joins based on string similarity.

\paragraph{Exact Similarity Join.}
Early work on similarity join focused on the important special case of detecting \emph{near-duplicates} with similarity close to~1, see e.g.~\cite{broder2000identifying,sarawagi2004efficient}.
A sequence of results starting with the seminal paper of Bayardo et al.~\cite{bayardo2007scaling} studied the range of thresholds that could be handled.
Recently, Mann et al.~\cite{mann2016} conducted a comprehensive study of 7 state-of-the-art algorithms for exact set similarity join for Jaccard similarity threshold $\lambda \in \{0.5,0.6,0.7,0.8,0.9\}$.
These algorithms all use the idea of \emph{prefix filtering}~\cite{chaudhuri2006primitive}, which generates a sequence of candidate pairs of sets that includes all pairs with similarity above the threshold.
The methods differ in how much additional filtering is carried out.
For example,~\cite{xiao2011efficient} applies additional \emph{length} and \emph{suffix} filters to prune the candidate pairs.

Prefix filtering uses an inverted index that for each element stores a list of the sets in the collection containing that element.
Given a set $x$, assume that we wish to find all sets $y$ such that~$|x \cup y| > t$.
A valid result set $y$ must be contained in at least one of the inverted lists associated with any subset of $|x| - t$ elements of $x$, or we would have $|x \cup y| \leq t$. 
In particular, to speed up the search, prefix filtering looks at the elements of $x$ that have the shortest inverted lists.

The main finding by Mann et al.~is that while more advanced filtering techniques do yield speedups on some data sets, an optimized version of the basic prefix filtering method (referred to as ``ALL'') is always competitive within a factor 2.16, and most often the fastest of the algorithms.
For this reason we will be comparing our results against ALL.

\paragraph{Locality-sensitive hashing.}
Locality-sensitive hashing (LSH) is a theoretically well-founded randomized method for generating candidate pairs~\cite{gionis1999}.
A family of locality-sensitive hash functions is a distribution over functions with the property that the probability that similar points (or sets in our case) are more likely to have the same function value.
We know only of a few papers using LSH techniques to solve similarity join.
Cohen et al.~\cite{cohen2001finding} used LSH techniques for set similarity join in a knowledge discovery context before the advent of prefix filtering.
They sketch a way of choosing parameters suitable for a given data set, but we are not aware of existing implementations of this approach.
Chakrabarti et al.~\cite{chakrabarti2015bayesian} improved plain LSH with an adaptive similarity estimation technique, \emph{BayesLSH}, that reduces the cost of checking candidate pairs and typically improves upon an implementation of the basic prefix filtering method by $2$--$20\times$.
Our experiments include a comparison against both methods~\cite{cohen2001finding,chakrabarti2015bayesian}.
We refer to the survey paper~\cite{pagh2015large} for an overview of newer theoretical developments on LSH-based similarity joins, but point out that these developments have not matured sufficiently to yield practical improvements.

\paragraph{Distance estimation.}
Similar to BayesLSH~\cite{chakrabarti2015bayesian} we make use of algorithms for similarity \emph{estimation}, but in contrast to BayesLSH we use algorithms that make use of bit-level parallelism.
This approach works when there exists a way of picking a random hash function $h$ such that
\begin{equation}\label{eq:lsh-able}
\Pr[h(x)=h(y)] = \simil(x,y)
\end{equation}
for every choice of sets $x$ and $y$.
Broder et al.~\cite{broder1997syntactic} presented such a hash function for Jaccard similarity, now known as ``minhash'' or ``minwise hashing''.
In the context of distance estimation, 1-bit minwise hashing of Li and K{\"o}nig~\cite{li2011theory} maps minhash values to a compact sketch, often using just 1 or 2 machine words.
Still, this is sufficient information to be able to estimate the Jaccard similarity of two sets $x$ and $y$ just based on the Hamming distance of their sketches. 

\paragraph{Locality-sensitive mappings.}
Several recent theoretical advances in high-dimensional indexing~\cite{andoni2017optimal,christiani2017framework,christiani2017set} have used an approach that can be seen as a generalization of LSH.
We refer to this approach as locality-sensitive \emph{mappings} (also known as locality-sensitive \emph{filters} in certain settings).
The idea is to construct a function $F$, mapping a set $x$ into a set of machine words, such that:
\begin{itemize}
	\item If $\simil(x,y)\geq \lambda$ then $F(x)\cap F(y)$ is nonempty with some fixed probability~$\varphi > 0$.
	\item If $\simil(x,y) < \lambda$, then the expected intersection size $\E[|F(x)\cap F(y)|]$ is ``small''.
\end{itemize}
Here the exact meaning of ``small'' depends on the difference~\mbox{$\lambda - \simil(x,y)$}, 
but in a nutshell, if it is the case that almost all pairs have similarity significantly below $\lambda$ then we can expect \mbox{$|F(x)\cap F(y)| = 0$} for almost all pairs.
Performing the similarity join amounts to identifying all candidate pairs $x,y$ for which $F(x)\cap F(y) \ne \varnothing$ (for example by creating an inverted index), 
and computing the similarity of each candidate pair.
To our knowledge these indexing methods have not been tried out in practice, probably because they are rather complicated.
An exception is the recent paper~\cite{christiani2017set}, which is relatively simple, and indeed our join algorithm is inspired by the index described in that paper.
\section{Preliminaries}
The \textsc{CPSJoin} algorithm solves the $(\lambda,\varphi)$-similarity join problem with a probabilistic guarantee on recall, formalized as follows:
\begin{definition}
	\label{def:simjoin}
	An algorithm solves the $(\lambda,\varphi)$-similarity join problem with threshold $\lambda\in (0,1)$ and recall probability $\varphi \in (0,1)$ if for every $(x, y) \in S \bowtie_{\lambda} R$ the output $L \subseteq S \bowtie_{\lambda} R$ of the algorithm satisfies $\Pr[(x, y) \in L] \geq \varphi$. 
\end{definition}

It is important to note that the probability is over the random choices made by the algorithm, and \emph{not} over a random choice of $(x,y)$.
This means that for any $(x,y)\in S \bowtie_{\lambda} R$ the probability that the pair is \emph{not} reported in $r$ independent repetitions of the algorithm is bounded by $(1-\varphi)^r$.
For example if $\varphi = 0.9$ it takes just $r=3$ repetitions to bound the recall to at least $99.9\%$.
\subsection{Similarity Measures}\label{sec:reduction}
Our algorithm can be used with a broad range of similarity measures through randomized \emph{embeddings}.
This allows it to be used with, for example, Jaccard and cosine similarity thresholds.

Embeddings map data from one space to another while approximately preserving distances, with accuracy that can be tuned.
In our case we are interested in embeddings that map data to sets of tokens.
We can transform any so-called \emph{LSHable} similarity measure $\simil$, where we can choose $h$ to make (\ref{eq:lsh-able}) hold, into a set similarity measure by the following randomized embedding:
For a parameter $t$ pick hash functions $h_1,\dots,h_t$ independently from a family satisfying~(\ref{eq:lsh-able}).
The embedding of $x$ is the following set of size~$t$:
$$ f(x) = \{ (i,h_i(x)) \; | \; i=1,\dots,t \} \enspace .$$
It follows from~(\ref{eq:lsh-able}) that the expected size of the intersection $f(x)\cap f(y)$ is $t \cdot \simil(x,y)$.
Furthermore, it follows from standard concentration inequalities that the size of the intersection will be close to the expectation with high probability.
For our experiments with Jaccard similarity thresholds $\geq0.5$,  we found that $t=64$ gave sufficient precision for $>90\%$ recall.

In summary we can perform the similarity join $R\simjoin S$ for any LSHable similarity measure by creating two corresponding relations $R'=\{f(x) \; | \; x\in R\}$ and $S'=\{f(y) \; | \; y\in S\}$, 
and computing $R'\simjoin S'$ with respect to the similarity measure
\begin{equation}\label{eq:bb}
B(f(x),f(y)) = |f(x)\cap f(y)|/t \enspace .
\end{equation}
This measure is the special case of \emph{Braun-Blanquet} similarity where the sets are known to have size~$t$~\cite{choi2010survey}.
Our implementation will take advantage of the set size $t$ being fixed, though it is easy to extend to general Braun-Blanquet similarity.

The class of LSHable similarity measures is large, as discussed in~\cite{chierichetti2015}.
If approximation errors are tolerable, even \emph{edit distance} can be handled by our algorithm~\cite{chakraborty2016streaming,zhang2017embedjoin}.
\subsection{Notation}
We are interested in sets $S$ where an element, $x\in S$ is a set with elements from some universe $[d]=\{1,2,3,\cdots,d\}$.
To avoid confusion we sometimes use ``record'' for $x\in S$ and ``token'' for the elements of $x$.
Throughout this paper we will think of a record $x$ both as a set of tokens from $[d]$, as well as a vector from $\{0,1\}^d$, where:
\[
x_i=
\begin{cases}
1\text { if } i\in x\\
0\text { if } i\notin x
\end{cases}
\]
It is clear that 
these representations are equivalent.
The set $\{1,4,5\}$ is equivalent to $(1,0,0,1,1,0,\cdots,0)$, $\{1,d\}$ is equivalent to $(1,0,\cdots,0,1)$, etc.
\section{Overview of approach}\label{sec:overview}
%
Our high-level approach is recursive and works as follows.
To compute $R\simjoin S$ we consider each $x\in R$ and either:
\begin{enumerate}
	\item Compare $x$ to each record in $S$ (referred to as ``brute forcing'' $x$), or
	\item create several subproblems $S_i \simjoin R_i$ with $x\in R_i \subseteq R$, $S_i \subseteq S$, and solve them recursively.
\end{enumerate}
The approach of~\cite{christiani2017set} corresponds to choosing option 2 until reaching a certain level $k$ of the recursion, where we finish the recursion by choosing option~1.
This makes sense for certain worst-case data sets, but we propose an improved parameter-free method that is better at adapting to the given data distribution.
In our method the decision on which option to choose depends on the size of $S$ and the average similarity of $x$ to the records of $S$.
We choose option~1 if $S$ has size below some (constant) threshold, or if the average Braun-Blanquet similarity of $x$ and $S$, $\tfrac{1}{|S|}\sum_{y\in S} B(x,y)$, is close to the threshold~$\lambda$.
In the former case it is cheap to finish the recursion.
In the latter case many records $y\in S$ will have $B(x,y)$ larger than or close to $\lambda$, so we do not expect to be able to produce output pairs with $x$ in sublinear time in $|S|$.

If neither of these pruning conditions apply we choose option~2 and include $x$ in recursive subproblems as described below.
But first we note that the decision of which option to use can be made efficiently for each $x$, since the average similarity of pairs from $R\times S$ can be computed from token frequencies in time $O(t|R|+ t|S|)$.
Pseudocode for a self-join version of \cpsj is provided in Algorithm~\ref{alg:cpsjoin} and~\ref{alg:bruteforce}.
\subsection{Recursion} 
We would like to ensure that for each pair $(x,y)\in R\simjoin S$ the pair is computed in one of the recursive subproblems, i.e., that $(x,y)\in R_i\simjoin S_i$ for some $i$.
In particular, we want the expected number of subproblems containing $(x,y)$ to be at least~1, i.e.,
\begin{equation}\label{eq:expect}
\E[|\{i \;|\; (x,y)\in R_i\simjoin S_i\}|] \geq 1.
\end{equation}
%
To achieve (\ref{eq:expect}) for every pair $(x,y)\in R\simjoin S$ we proceed as follows: for each $i\in \{1,\dots,d\}$ we recurse with probability $1/(\lambda t)$ on the subproblem $R_i\simjoin S_i$ with sets
\begin{align*}
	R_i &= \{ x\in R \mid i\in x \}\\
	S_i &= \{ y\in S \mid i\in y \}
\end{align*}
where $t$ denotes the size of records in $R$ and $S$.
It is not hard to check that (\ref{eq:expect}) is satisfied for every pair $(x,y)$ with $B(x,y)\geq\lambda$.
Of course, expecting one subproblem to contain $(x,y)$ does not \emph{directly} imply a good probability that $(x,y)$ is contained in at least one subproblem.
But it turns out that we can use results from the theory of branching processes to show such a bound;
details are provided in section~\ref{sec:algorithm}.

\section{Chosen Path Similarity Join} \label{sec:algorithm}
The \textsc{CPSJoin} algorithm solves the $(\lambda,\varphi)$-set similarity join~(Definition~\ref{def:simjoin}) for every choice of $\lambda \in (0,1)$ and with a guarantee on $\varphi$ that we will lower bound in the analysis. 

To simplify the exposition we focus on a self-join version where we are given a set $S$ of $n$ subsets of $[d]$ and we wish to report $L \subseteq S \simjoin S$.
Handling a general join $S \simjoin R$ follows the overview in section~\ref{sec:overview} and requires no new ideas: Essentially consider a self-join on $S\cup R$ but make sure to consider only pairs in $S\times R$ for output.
We also make the simplifying assumption that all sets in $S$ have a fixed size~$t$.
As argued in section~\ref{sec:reduction} the general case can be reduced to this one by embedding.

\subsection{Description}
The \text{CPSJoin} algorithm (see Algorithm \ref{alg:cpsjoin} for pseudocode) works by recursively splitting the data set on elements of~$[d]$ that are selected according to a random process, 
forming a recursion tree with $S$ at the root and subsets of $S$ that are non-increasing in size as we get further down the tree.
The randomized splitting has the property that the probability of a pair of sets $(x,y)$ being in a random subproblem is increasing as a function of $|x\cap y|$.

Before each recursive splitting step we run the \textsc{BruteForce} subprocedure (see Algorithm \ref{alg:bruteforce} for pseudocode) that identifies subproblems that are best solved by brute force.
It has two parts:

1. If $S$ is below some constant size, controlled by the parameter \texttt{limit}, we report $S\bowtie_{\lambda}S$ exactly using a simple loop with $O(|S|^2)$ distance computations (\textsc{BruteForcePairs}) and exit the recursion.
In our experiments we have set \texttt{limit} to $250$, with the precise choice seemingly not having a large effect as shown experimentally in Section~\ref{sec:parameters}. 

2. If $S$ is larger than \texttt{limit} the second part activates:
for every $x \in S$ we check whether the expected number of distance computations involving $x$ is going to decrease by continuing the recursion.
If this is not the case, we immediately compare $x$ against every point in $S$ (\textsc{BruteForcePoint}), reporting close pairs, and proceed by removing $x$ from $S$.
The \textsc{BruteForce} procedure is then run again on the reduced set. 

This procedure where we choose to handle some points by brute force crucially separates our algorithm from many other approximate similarity join methods in the literature that typically are LSH-based~\cite{pagh2015simjoin, cohen2001finding}.
By efficiently being able to remove points at the ``right'' time, before they generate too many expensive comparisons further down the tree,
we are able to beat the performance of other approximate similarity join techniques in both theory and practice.
Another benefit of this approach is that it reduces the number of parameters compared to the usual LSH setting where the depth of the tree has to be selected by the user.
\begin{algorithm}
\DontPrintSemicolon
\emph{For $j \in [d]$ initialize $S_j \leftarrow \varnothing$.}\;
$S \leftarrow \textsc{BruteForce}(S, \lambda)$\;
$r \leftarrow \textsc{SeedHashFunction}()$\; 
\For{$x \in S$} {
	\For{$j \in x$} {

		\lIf{$r(j) < \frac{1}{\lambda |x|}$}{$S_{j} \leftarrow S_{j} \cup \{ x \} $}
	}
}
\lFor{$S_{j} \neq \varnothing$} {$\textsc{CPSJoin}(S_{j}, \lambda)$}
\caption{\textsc{CPSJoin}$(S, \lambda)$} \label{alg:cpsjoin}
\end{algorithm}
\begin{algorithm}
\SetKwData{Limit}{limit}
\SetKwArray{Count}{count}
\SetKwInOut{Global}{Global parameters}
\DontPrintSemicolon
\Global{\Limit $\geq 1$, $\varepsilon \geq 0$.}
\emph{Initialize empty map \Count{\,} with default value $0$.}\;
\If{$|S| \leq$ \Limit} {
	$\textsc{BruteForcePairs}(S, \lambda)$\;
	\Return $\varnothing$\;
}
\For{$x \in S$} {
	\For{$j \in x$} {
		\Count{j} $\leftarrow$ \Count{j} + 1\;
	}
}
\For{$x \in S$} {
	\If{$\frac{1}{|S| - 1}\sum_{j \in x}($\Count{j}$ - 1)/t > (1-\varepsilon)\lambda$} {
		$\textsc{BruteForcePoint}(S, x, \lambda)$\;
		\Return $\textsc{BruteForce}(S \setminus \{ x \}, \lambda)$\;
	} 
}
\Return $S$\;
	
\caption{\textsc{BruteForce}$(S, \lambda)$} \label{alg:bruteforce}
\end{algorithm}
\subsection{Comparison to Chosen Path}
\label{sec:comp-chos-path}
The \textsc{CPSJoin} algorithm is inspired by the \textsc{Chosen Path} algorithm~\cite{christiani2017set} for the approximate near neighbor problem 
and uses the same underlying random splitting tree that we will refer to as the Chosen Path Tree.
In the approximate near neighbor problem, the task is to construct a data structure that takes a query point and correctly reports an approximate near neighbor, if such a point exists in the data set.
Using the \textsc{Chosen Path} data structure directly to solve the $(\lambda,\varphi)$-set similarity join problem has several drawbacks that we avoid in the \textsc{CPSJoin} algorithm.
First, the \textsc{Chosen Path} data structure is parameterized in a non-adaptive way to provide guarantees for worst-case data, 
vastly increasing the amount of work done compared to the optimal parameterization when data is not worst-case.
Our recursion rule avoids this and instead continuously adapts to the distribution of distances as we traverse down the tree. 
Secondly, the data structure uses space $O(n^{1+\rho})$ where $\rho > 0$, storing the Chosen Path Tree of size $O(n^\rho)$ for every data point. 
The \textsc{CPSJoin} algorithm, instead of storing the whole tree, essentially performs a depth-first traversal, using only near-linear space in $n$ in addition to the space required to store the output.
Finally, the \textsc{Chosen Path} data structure only has to report a single point that is approximately similar to a query point, and can report points with similarity $< \lambda$.
To solve the approximate similarity join problem the \textsc{CPSJoin} algorithm has to satisfy reporting guarantees for \emph{every} pair of points $(x, y)$ in the exact join.
\subsection{Analysis} \label{sec:analysis}
The Chosen Path Tree for a set $x \subseteq [d]$ is defined by a random process: 
at each node, starting from the root, we sample a random hash function $r \colon [d] \to [0,1]$ and construct children for every element $j \in x$ such that $r(j) < \frac{1}{\lambda |x|}$.
Nodes at depth $k$ in the tree are identified by their path $p = (j_1, \dots, j_k)$. 
Formally, the set of nodes at depth $k > 0$ in the Chosen Path Tree for $x$ is given by
\begin{equation*}
	F_{k}(x) = \left\{ p \circ j \mid p \in F_{k-1}(x) \land r_{p}(j) < \frac{x_j}{\lambda |x|} \right\}
\end{equation*}
where $p \circ j$ denotes vector concatenation and $F_{0}(x) = \varnothing$. 
The subset of the data set $S$ that survives to a node with path $p = (j_1, \dots, j_{k})$ is given by
\begin{equation*}
	S_{p} = \{ x \in S \mid x_{j_1} = 1 \land \dots \land x_{j_{k}} = 1 \}.
\end{equation*}
The random process underlying the Chosen Path Tree belongs to the well studied class of Galton-Watson branching processes~\cite{harris2002theory}.
Originally these where devised to answer questions about the growth and decline of family names in a model of population growth assuming i.i.d.\ offspring for every member of the population across generations~\cite{watson1875}.
In order to make statements about the properties of the \textsc{CPSJoin} algorithm we study in turn the branching processes 
of the Chosen Path Tree associated with a point $x$, a pair of points $(x, y)$, and a set of points $S$.
Note that we use the same random hash functions for different points in $S$.

\paragraph{Brute forcing.}
The \textsc{BruteForce} subprocedure described by Algorithm \ref{alg:bruteforce} takes two global parameters: $\mathtt{limit} \geq 1$ and $\varepsilon \geq 0$.
The parameter $\mathtt{limit}$ controls the minimum size of $S$ before we discard the \cpsj algorithm for a simple exact similarity join by brute force pairwise distance computations.
The second parameter, $\varepsilon > 0$, controls the sensitivity of the \textsc{BruteForce} step to the expected number of comparisons that a point $x \in S$ will generate if allowed to continue in the branching process.
The larger $\varepsilon$ the more aggressively we will resort to the brute force procedure.
In practice we typically think of $\varepsilon$ as a small constant, say $\varepsilon = 0.05$, 
but for some of our theoretical results we will need a sub-constant setting of $\varepsilon \approx 1/\log(n)$ to show certain running time guarantees. 
The \textsc{BruteForce} step removes a point $x$ from the Chosen Path branching process, 
instead opting to compare it against every other point $y \in S$, if it satisfies the condition
\begin{equation*}
	\frac{1}{|S| - 1}\sum_{y \in S \setminus \{ x \}}|x \cap y|/t > (1-\varepsilon)\lambda. \label{eq:bruteforce}
\end{equation*}
In the pseudocode of Algorithm \ref{alg:bruteforce} we let \texttt{count} denote a hash table that keeps track of the number of times each element $j \in [d]$ appears in $S$.
This allows us to evaluate the condition in equation \eqref{eq:bruteforce} for an element $x \in S$ in time $O(|x|)$ by rewriting it as
\begin{equation*}
\frac{1}{|S|-1}\sum_{j \in x} (\mathtt{count}[j] - 1)/t > (1-\varepsilon)\lambda.
\end{equation*}
We claim that this condition minimizes the expected number of comparisons performed by the algorithm:
Consider a node in the Chosen Path Tree associated with a set of points $S$ while running the \cpsj algorithm.
For a point ${x\in S}$, we can either remove it from $S$ immediately at a cost of $|S|-1$ comparisons, or we can choose to let continue in the branching process (possibly into several nodes) and remove it later.
The expected number of comparisons if we let it continue $k$ levels before removing it from every node that it is contained in, is given by
\begin{equation*}
	\sum_{y \in S \setminus \{ x \}} \left(\frac{1}{\lambda}\frac{|x \cap y|}{t}\right)^{k}.
\end{equation*}
This expression is convex and increasing in the similarity \mbox{$|x \cap y|/t$} between $x$ and other points $y \in S$, allowing us to state the following remark: 
\begin{remark}[Recursion]\label{obs:bruteforce}
	Let $\varepsilon = 0$ and consider a set $S$ containing a point $x \in S$ such that $x$ satisfies the recursion condition in equation~\eqref{eq:bruteforce}.
	Then the expected number of comparisons involving $x$ if we continue branching exceeds $|S|-1$ at every depth $k \geq 1$.
        If $x$ does not satisfy the condition, the opposite is observed.
\end{remark}
\paragraph{Tree depth.}
We proceed by bounding the maximal depth of the set of paths in the Chosen Path Tree that are explored by the \cpsj algorithm.
Having this information will allow us to bound the space usage of the algorithm and will also form part of the argument for the correctness guarantee.
Assume that the parameter \texttt{limit} in the \textsc{BruteForce} step is set to some constant value, say $\mathtt{limit} = 100$.
Consider a point $x \in S$ and let $S' = \{ y \in S \mid |x \cap y|/ t \leq (1-\varepsilon)\lambda \}$ be the subset of points in $S$ that are not too similar to $x$.
For every $y \in S'$ the expected number of vertices in the Chosen Path Tree at depth $k$ that contain both $x$ and $y$ is upper bounded by
\begin{equation*}
	\E[|F_{k}(x \cap y)|] = \left(\frac{1}{\lambda}\frac{|x \cap y|}{t}\right)^{k} \leq (1-\varepsilon)^{k} \leq e^{-\varepsilon k}.
\end{equation*}
Since $|S'| \leq n$ we use Markov's inequality to show the following bound: 
\begin{lemma}
	Let $x, y \in S$ satisfy that $|x \cap y|/ t \leq (1-\varepsilon)\lambda$ then the probability that there exists a vertex at depth $k$ 
	in the Chosen Path Tree that contains $x$ and $y$ is at most $e^{-\varepsilon k}$.
\end{lemma}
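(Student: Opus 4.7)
The plan is to compute $\E[|F_k(x) \cap F_k(y)|]$ exactly and then invoke Markov's inequality, since the event in question is precisely that this intersection is nonempty. The computation has essentially been sketched in the paragraph just preceding the lemma; my proof would simply make it explicit.

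First I would unpack what it means for a vertex at depth $k$ to contain both $x$ and $y$. A vertex is identified with a path $p = (j_1, \dots, j_k) \in [d]^k$, and it lies in $F_k(x) \cap F_k(y)$ iff for every $i \in [k]$ we have $r_{p_{i-1}}(j_i) < x_{j_i}/(\lambda|x|)$ \emph{and} $r_{p_{i-1}}(j_i) < y_{j_i}/(\lambda|y|)$. Because $|x|=|y|=t$, both conditions reduce to $j_i \in x \cap y$ together with the single inequality $r_{p_{i-1}}(j_i) < 1/(\lambda t)$. Here I would emphasize the point made in Section \ref{sec:analysis} that the hash functions $r_p$ are \emph{shared} between $x$ and $y$, so the two survival conditions collapse into one event per level rather than being independent.

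Next I would count. For each of the $|x \cap y|^k$ candidate paths in $(x \cap y)^k$, the $k$ independent hash evaluations along the path each succeed with probability $1/(\lambda t)$, so by linearity of expectation
\begin{equation*}
\E[|F_k(x) \cap F_k(y)|] \;=\; \left(\frac{|x \cap y|}{\lambda t}\right)^{\!k}.
\end{equation*}
Applying the hypothesis $|x \cap y|/t \leq (1-\varepsilon)\lambda$ gives the bound $(1-\varepsilon)^k \leq e^{-\varepsilon k}$. Finally, Markov's inequality yields
\begin{equation*}
\Pr\!\bigl[\,|F_k(x) \cap F_k(y)| \geq 1\bigr] \;\leq\; \E[|F_k(x) \cap F_k(y)|] \;\leq\; e^{-\varepsilon k},
\end{equation*}
which is exactly the claim.

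There is no real obstacle; the only subtlety worth highlighting is the shared-randomness point above, which is what prevents the per-level probability from becoming $1/(\lambda t)^2$ and which makes the bound tight enough for the stated inequality. Everything else is bookkeeping and a one-line Markov argument.
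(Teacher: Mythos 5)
Your proposal is correct and follows essentially the same route as the paper: the paper bounds the expected number of depth-$k$ vertices containing both $x$ and $y$ by $\bigl(\tfrac{1}{\lambda}\tfrac{|x\cap y|}{t}\bigr)^{k}\leq(1-\varepsilon)^{k}\leq e^{-\varepsilon k}$ and then applies Markov's inequality, exactly as you do. Your explicit remark about the shared hash functions collapsing the two survival conditions into one per level is consistent with the paper's setup and is the right point to stress.
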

If $x$ does not share any paths with points that have similarity that falls below the threshold for brute forcing, then the only points that remain are ones that will cause $x$ to be brute forced.
This observation leads to the following probabilistic bound on the tree depth:
\begin{lemma}\label{lem:depth}
	With high probability the maximal depth of paths explored by the \cpsj algorithm is $O(\log(n) / \varepsilon)$.
\end{lemma}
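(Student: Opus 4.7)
The plan is to combine the per-pair tail bound of the preceding lemma with a union bound, and then argue that once no low-similarity pair survives at a given depth, the brute-force rule halts the recursion. First I would fix a target depth $k^{*} = C \log(n)/\varepsilon$ for a constant $C$ to be chosen. Call a pair $(x,y)\in S\times S$ \emph{low-similarity} if $|x\cap y|/t \leq (1-\varepsilon)\lambda$. By the preceding lemma, for any single low-similarity pair the probability that some node at depth $k^{*}$ contains both is at most $e^{-\varepsilon k^{*}} = n^{-C}$. A union bound over the at most $\binom{n}{2}$ low-similarity pairs bounds the probability that \emph{any} low-similarity pair co-occurs at depth $k^{*}$ by $n^{-(C-2)}/2$, which is $O(n^{-c})$ for any desired $c$ by taking $C$ large enough.

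Next I would condition on the complementary event and look at an arbitrary node at depth $k^{*}$ whose surviving set $S'$ has size above $\mathtt{limit}$. By the conditioning, every pair $(x,y)\in S'\times S'$ satisfies $|x\cap y|/t > (1-\varepsilon)\lambda$. Hence for each $x \in S'$ the average similarity
\[
\frac{1}{|S'|-1}\sum_{y \in S'\setminus\{x\}} \frac{|x\cap y|}{t} > (1-\varepsilon)\lambda,
\]
so the \textsc{BruteForce} trigger in Algorithm~\ref{alg:bruteforce} fires on $x$, and \textsc{BruteForcePoint} removes $x$. Iterating the \textsc{BruteForce} loop, every element of $S'$ is removed (or $|S'|$ drops below $\mathtt{limit}$ and the base case triggers), so no recursive \textsc{CPSJoin} calls descend below depth $k^{*}$. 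This gives the claimed $O(\log(n)/\varepsilon)$ depth bound with high probability.

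I do not expect any single step to be the main obstacle; the argument is mostly book-keeping. The one point to be careful about is that \textsc{CPSJoin} invokes \textsc{BruteForce} \emph{before} performing a split, so ``halting at depth $k^{*}$'' means that no child subproblems $S_{j}$ are created at any node at that depth, which is exactly what the second paragraph establishes. Another small subtlety is that the bad event used in the union bound is phrased as ``some node in the tree at depth $k^{*}$ contains both $x$ and $y$''---a statement about the shared random hash functions $r_{p}$ and therefore well-defined across all branches simultaneously---so the union bound covers every surviving node at depth $k^{*}$ at once, not just one particular path.
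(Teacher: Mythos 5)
Your proposal is correct and follows essentially the same route as the paper: the per-pair bound $e^{-\varepsilon k}$ from the preceding lemma, a union bound over the at most $\binom{n}{2}$ low-similarity pairs, and the observation that once only pairs with similarity above $(1-\varepsilon)\lambda$ survive in a node, the \textsc{BruteForce} condition (or the \texttt{limit} base case) removes every point before any further splitting. The paper leaves exactly this book-keeping implicit, and your handling of the strict inequality in the trigger and of the union bound applying to all nodes simultaneously is sound.
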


\paragraph{Correctness.}
Let $x$ and $y$ be two sets of equal size $t$ such that $B(x, y) = |x \cap y|/t \geq \lambda$. 
We are interested in lower bounding the probability that there exists a path of length~$k$ in the Chosen Path Tree that has been chosen by both $x$ and~$y$, 
i.e. $\Pr\left[F_{k}(x \cap y)\neq\varnothing\right]$.
Agresti~\cite{agresti1974} showed an upper bound on the probability that a branching process becomes extinct after at most $k$ steps. 
We use it to show the following lower bound on the probability of a close pair of points colliding at depth $k$ in the Chosen Path Tree.
\begin{lemma}[{Agresti~\cite{agresti1974}}]
	If $\simil(x, y) \geq \lambda$ then for every $k > 0$ we have that \mbox{$\Pr[F_{k}(x \cap y) \neq \varnothing] \geq \frac{1}{k+1}$}.
\end{lemma}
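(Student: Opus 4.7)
The plan is to interpret $F_k(x \cap y)$ as the generation-$k$ population of a Galton--Watson branching process and bound the survival probability via a second-moment argument, which in this setting recovers the bound attributed to Agresti. First I would unpack the random process: reading $F_0$ as a single root path, a node $p$ at depth $i-1$ that lies in both $F_{i-1}(x)$ and $F_{i-1}(y)$ produces a child $p \circ j$ exactly when $j \in x \cap y$ and $r_p(j) < 1/(\lambda \max(|x|,|y|))$. Using the independence of the hash functions $r_p$ across nodes, the per-node offspring count is a binomial random variable with mean $m = |x \cap y|/(\lambda \max(|x|, |y|)) = B(x,y)/\lambda$, and the hypothesis $\simil(x,y) \geq \lambda$ translates exactly to $m \geq 1$, i.e.\ the process is critical or supercritical.

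Setting $Z_k = |F_k(x \cap y)|$, I would then apply the Paley--Zygmund inequality
\begin{equation*}
\Pr[Z_k > 0] \geq \frac{(\E Z_k)^2}{\E[Z_k^2]} = \frac{m^{2k}}{m^{2k} + \Var(Z_k)}.
\end{equation*}
The standard Galton--Watson variance recursion $\Var(Z_{i+1}) = \sigma^2 m^i + m^2 \Var(Z_i)$ unrolls to $\Var(Z_k) = \sigma^2 m^{k-1}(m^k - 1)/(m-1)$ when $m > 1$ and to $\sigma^2 k$ when $m = 1$, where $\sigma^2$ is the offspring variance. Since the offspring distribution is binomial with mean $m$, one has $\sigma^2 \leq m$. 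The key algebraic simplification is the identity $(m^k - 1)/(m^k(m-1)) = \sum_{j=1}^{k} m^{-j}$, which removes the apparent singularity at $m = 1$ and yields
\begin{equation*}
\frac{\Var(Z_k)}{m^{2k}} = \frac{\sigma^2}{m}\sum_{j=1}^{k} m^{-j} \leq k,
\end{equation*}
where the inequality uses $\sigma^2 / m \leq 1$ and $m^{-j} \leq 1$ for $m \geq 1$. Combining with Paley--Zygmund gives $\Pr[Z_k > 0] \geq 1/(k+1)$, as claimed.

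The main obstacle is essentially bookkeeping across the critical/supercritical transition: the variance formula has a removable singularity at $m = 1$, and one must confirm that the uniform bound $\Var(Z_k)/m^{2k} \leq k$ holds over the whole range $m \geq 1$. The geometric-sum rewriting above disposes of this cleanly. A minor secondary point is to confirm that $F_k(x \cap y)$ in the statement really refers to the intersection $F_k(x) \cap F_k(y)$ of the two per-point branching processes rather than a single process driven by the set $x \cap y$; both natural readings give a binomial offspring distribution with mean at least $1$ under the hypothesis, so the conclusion is insensitive to this choice of convention.
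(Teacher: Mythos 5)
Your proof is correct, but it takes a different route from the paper: in this chapter the lemma is not proved at all, it is stated as a citation to Agresti's 1974 bound on the extinction-time distribution of a branching process, after observing that the collision process is Galton--Watson with offspring mean $B(x,y)/\lambda \geq 1$. Your argument replaces that citation with a self-contained second-moment computation: Paley--Zygmund, the variance recursion $\Var(Z_i) = \sigma^2 m^{i-1} + m^2 \Var(Z_{i-1})$, the geometric-sum rewriting that removes the apparent singularity at $m=1$, and the bound $\sigma^2 \leq m$ for the binomial offspring law; all steps check out, and the hypothesis $\simil(x,y)\geq\lambda$ is used exactly where needed to get $m\geq 1$. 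It is worth noting that this is essentially the same one-sided Chebyshev (Cantelli) computation that the thesis does carry out explicitly in the preceding chapter for the analogous \textsc{Chosen Path} property, namely $\Pr[M_i(x)\cap M_i(y)\neq\emptyset]\geq w/(i+w)$, which for a single root ($w=1$) is precisely the $1/(k+1)$ bound here; your unrolled variance formula does assume fully independent offspring across nodes, whereas that chapter's direct bound on $\E[|X_i|^2]$ is what lets the thesis get by with pairwise independent hash functions, so if you want your proof to cover the implementation remark following the lemma you would redo the second moment in that style rather than via the classical recursion. What the citation buys the paper is brevity and validity for a general offspring law; what your derivation buys is transparency about which facts are used (only $m\geq 1$ and $\sigma^2\leq m$). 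Two minor points on conventions: the paper's $F_0(x)=\varnothing$ is evidently a typo for the singleton root path, as you implicitly assume, and since all sets in this chapter have the common size $t$, your two readings of $F_k(x\cap y)$ differ only in the threshold $1/(\lambda t)$ versus $1/(\lambda|x\cap y|)$, both giving offspring mean at least $1$ as you observe.
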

The bound on the depth of the Chosen Path Tree for $x$ explored by the \cpsj algorithm in Lemma~\ref{lem:depth} then implies a lower bound on $\varphi$.


\begin{lemma}\label{lem:correctness}
	Let $0 < \lambda < 1$ be constant. Then for every set $S$ of $|S| = n$ points the \cpsj algorithm solves the set similarity join problem with $\varphi = \Omega(\varepsilon / \log(n))$.  
\end{lemma}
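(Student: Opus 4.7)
The plan is to combine Agresti's branching-process lower bound with the tree-depth bound from Lemma~\ref{lem:depth}, and argue that whenever $x$ and $y$ share a chosen path down to the (bounded) leaf level, the algorithm is forced to compare them by some \textsc{BruteForce} step.

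First I would fix a pair $(x,y)$ with $B(x,y) \ge \lambda$ and let $k = c\,\log(n)/\varepsilon$ for a sufficiently large constant $c$, chosen so that Lemma~\ref{lem:depth} guarantees depth $\le k$ with probability $1 - n^{-\Omega(1)}$. Let $E_1$ be the event that $F_k(x \cap y) \ne \varnothing$, i.e.~there is a path $p = (j_1,\ldots,j_k)$ of length $k$ whose tokens all lie in $x \cap y$ and which is chosen (at every level) by the random hash functions sampled along that path. By the Agresti-style bound stated just above the lemma, $\Pr[E_1] \ge 1/(k+1) = \Omega(\varepsilon/\log n)$. Let $E_2$ be the high-probability depth event from Lemma~\ref{lem:depth}.

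Next I would trace the execution of \cpsj along the nodes $\varnothing,(j_1),(j_1,j_2),\ldots,p$, conditioned on $E_1 \cap E_2$. Because $p$ is a chosen path for \emph{both} $x$ and $y$, every intermediate node contains both $x$ and $y$ in its surviving subset $S_{p_i}$ (this is immediate from the definitions of $F_k(\cdot)$ and of $S_{p_i}$, since each $j_i \in x \cap y$). At each such node the algorithm first calls \textsc{BruteForce}. I would then do a short case analysis on what \textsc{BruteForce} can do while both $x$ and $y$ are still present: (i) it calls \textsc{BruteForcePairs} because $|S_{p_i}| \le \texttt{limit}$, which compares $x$ against $y$ and reports the pair; (ii) it calls \textsc{BruteForcePoint} on $x$ or on $y$, which again compares $x$ against $y$ (since the other is still in $S_{p_i}$) and reports the pair; (iii) it returns without touching $x$ or $y$ and the algorithm recurses. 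Under $E_2$, case (iii) cannot persist for more than $k$ levels, so by the time we reach node $p$ the algorithm must have entered case (i) or (ii) at some ancestor (or at $p$ itself), and the pair has been output.

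Finally I would take a union bound: $\Pr[(x,y) \in L] \ge \Pr[E_1 \cap E_2] \ge \Pr[E_1] - \Pr[\overline{E_2}] \ge \Omega(\varepsilon/\log n) - n^{-\Omega(1)} = \Omega(\varepsilon/\log n)$, which is the desired bound on $\varphi$. The main obstacle I anticipate is the case analysis of \textsc{BruteForce} behavior along the path to $p$; in particular one must check that \textsc{BruteForce} never discards $x$ or $y$ \emph{without} comparing them to the other — but this is guaranteed by the fact that \textsc{BruteForcePoint} compares the removed point against the entire remaining set before deletion, so as long as both points coexist at some node along the chosen path, the pair is reported. The condition $\varepsilon = \Omega(1/\log n)$ implicit in Lemma~\ref{lem:depth} keeps both terms in the union bound meaningful.
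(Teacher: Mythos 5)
Your proposal is correct and follows essentially the same route as the paper: the paper's argument is exactly the Agresti bound $\Pr[F_k(x\cap y)\neq\varnothing]\geq 1/(k+1)$ applied at the maximal explored depth $k=O(\log(n)/\varepsilon)$ from Lemma~\ref{lem:depth}, together with the observation (made in the remark following the lemma) that whenever $x$ and $y$ coexist at a node where either is handled by \textsc{BruteForcePairs} or \textsc{BruteForcePoint}, the pair is compared and reported. Your case analysis along the shared path and the final union bound just spell out the details the paper leaves implicit.
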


\begin{remark}
  This analysis is very conservative: if either $x$ or $y$ is removed by the \textsc{BruteForce} step prior to reaching the maximum depth then it only increases the probability of collision.
  We note that similar guarantees can be obtained when using fast pseudorandom hash functions as shown in the paper introducing the \cp algorithm~\cite{christiani2017set}.
\end{remark}

\paragraph{Space usage.}
We can obtain a trivial bound on the space usage of the \cpsj algorithm by combining Lemma \ref{lem:depth} with the observation that every call to $\cpsj$ on the stack uses additional space at most $O(n)$. 
The result is stated in terms of working space: the total space usage when not accounting for the space required to store the data set itself 
(our algorithms use references to data points and only reads the data when performing comparisons) as well as disregarding the space used to write down the list of results. 
\begin{lemma} \label{lem:space}
	With high probability the working space of the \cpsj algorithm is at most $O(n \log (n) /\varepsilon)$.
\end{lemma}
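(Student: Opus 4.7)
The plan is to recognize \cpsj as performing a depth-first traversal of the recursion tree and to bound the working space by (maximum stack depth) $\times$ (maximum per-call space). First I would observe that $\cpsj(S_p,\lambda)$ dispatches recursive calls to its children $S_j$ sequentially rather than materializing the entire Chosen Path Tree simultaneously (in contrast to the index of~\cite{christiani2017set} discussed in Section~\ref{sec:comp-chos-path}), so at any moment the only frames occupying working memory are those lying on a single root-to-leaf path in the recursion tree.

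Next I would bound the local space of a single invocation of $\cpsj(S_p,\lambda)$, excluding the memory held by its descendants. The \textsc{BruteForce} subroutine maintains a \texttt{count} table and a scan over $S_p$; under the fixed-size-$t$ assumption this costs $O(t|S_p|)=O(|S_p|)$ since $t$ is treated as a constant after the embedding in Section~\ref{sec:reduction}. Constructing the buckets $S_j$ amounts to inserting each $x \in S_p$ into every bucket $j \in x$ for which $r(j) < 1/(\lambda|x|)$; the expected number of such insertions per $x$ is $1/\lambda=O(1)$, and a Chernoff bound combined with a union bound over the $O(\log n/\varepsilon)$ levels on the active path shows that the total number of insertions at each level is $O(|S_p|) \leq O(n)$ with high probability. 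Hence each stack frame contributes $O(n)$ to the working space whp.

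Finally, I would invoke Lemma~\ref{lem:depth} to bound the depth of the recursion stack by $O(\log n/\varepsilon)$ with high probability. Multiplying the per-call bound $O(n)$ by this depth, and taking a union bound over the two high-probability events (depth and per-level bucket size), yields working space at most $O(n\log n/\varepsilon)$ whp, which is the statement of Lemma~\ref{lem:space}.

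The only mildly non-routine step is upgrading the bucket-construction cost from an in-expectation estimate to a uniform high-probability bound along the whole active path; this is the obstacle, but it is handled by exactly the same union-bound-over-levels machinery already used to establish Lemma~\ref{lem:depth}, so it adds no essential technical difficulty beyond what has already been set up.
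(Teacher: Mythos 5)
Your proposal is correct and follows essentially the same route as the paper: combine the depth bound of Lemma~\ref{lem:depth} with the observation that each call on the recursion stack uses at most $O(n)$ additional space. The only difference is that your Chernoff/union-bound step for the bucket insertions is unnecessary overhead, since each $x \in S_p$ enters at most $|x| = t = O(1)$ buckets deterministically, so the per-call $O(n)$ bound already holds in the worst case.
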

\begin{remark}
We conjecture that the expected working space is $O(n)$ due to the size of $S$ being geometrically decreasing in expectation as we proceed down the Chosen Path Tree.
\end{remark}
\paragraph{Running time.}
We will bound the running time of a solution to the general set similarity self-join problem that uses several calls to the 
\cpsj algorithm in order to piece together a list of results $L \subseteq S \simjoin S$.
In most of the previous related work, inspired by Locality-Sensitive Hashing, the fine-grainedness of the randomized partition of space, 
here represented by the Chosen Path Tree in the \cpsj algorithm, has been controlled by a single global parameter~$k$~\cite{gionis1999, pagh2015simjoin}.
In the Chosen Path setting this rule would imply that we run the splitting step without performing any brute force comparison until reaching depth $k$ 
where we proceed by comparing $x$ against every other point in nodes containing $x$, reporting close pairs. 
In recent work by Ahle et al.~\cite{ahle2017parameter} it was shown how to obtain additional performance improvements by setting an individual depth $k_{x}$ for every $x \in S$.
We refer to these stopping strategies as global and individual, respectively.
Together with our recursion strategy, this gives rise to the following stopping criteria for when to compare a point $x$ against everything else contained in a node: 
\begin{itemize}
	\item Global: Fix a single depth $k$ for every $x \in S$.
	\item Individual: For every $x \in S$ fix a depth $k_{x}$. 
	\item Adaptive: Remove $x$ when the expected number of comparisons is non-decreasing in the tree-depth.
\end{itemize}
Let $T$ denote the running time of our similarity join algorithm. 
We aim to show the following relation between the running time between the different stopping criteria when applied to the Chosen Path Tree:
\begin{equation*}
	\E[T_{\text{Adaptive}}] \leq \E[T_{\text{Individual}}] \leq \E[T_{\text{Global}}]. 
\end{equation*}
First consider the global strategy. 
We set $k$ to balance the contribution to the running time from the expected number of vertices containing a point, given by $(1/\lambda)^{k}$,
and the expected number of comparisons between pairs of points at depth $k$, resulting in the following expected running time for the global strategy:
\begin{equation*}
	O\left(\min_{k} n (1/\lambda)^{k} +  \sum_{\substack{x,y \in S \\ x \neq y }}  (\simil(x, y) / \lambda)^{k}  \right).
\end{equation*}
The global strategy is a special case of the individual case, and it must therefore hold that $\E[T_{\text{Individual}}] \leq \E[T_{\text{Global}}]$.
The expected running time for the individual strategy is upper bounded by:
\begin{equation*}
	O\left(\sum_{x \in S}\min_{k_{x}} \left( (1/\lambda)^{k_{x}} + \sum_{y \in S \setminus \{ x \}} (\simil(x, y) / \lambda)^{k_{x}} \right) \right).
\end{equation*}
We will now argue that the expected running time of the \cpsj algorithm under the adaptive stopping criteria is no more than a constant factor greater than 
$\E[T_{\text{Individual}}]$ when we set the global parameters of the \textsc{BruteForce} subroutine as follows:
\begin{align*}
	\mathtt{limit} = \Theta(1), \\
	\varepsilon = \frac{\log(1/\lambda)}{\log n}.
\end{align*}
Let $x \in S$ and consider a path $p$ where $x$ is removed in from $S_{p}$ by the \textsc{BruteForce} step. 
Let $k_{x}'$ denote the depth of the node (length of $p$) at which $x$ is removed.
Compared to the individual strategy that removes $x$ at depth $k_{x}$ we are in one of three cases, also displayed in Figure \ref{fig:paths}.
\begin{enumerate}
	\item The point $x$ is removed from $p$ at depth $k_{x}' = k_{x}$.
	\item The point $x$ is removed from $p$ at depth $k_{x}' < k_{x}$.
	\item The point $x$ is removed from $p$ at depth $k_{x}' > k_{x}$.
\end{enumerate}
\begin{figure}[htpb]
  \centering
  \includegraphics[width=0.85\textwidth]{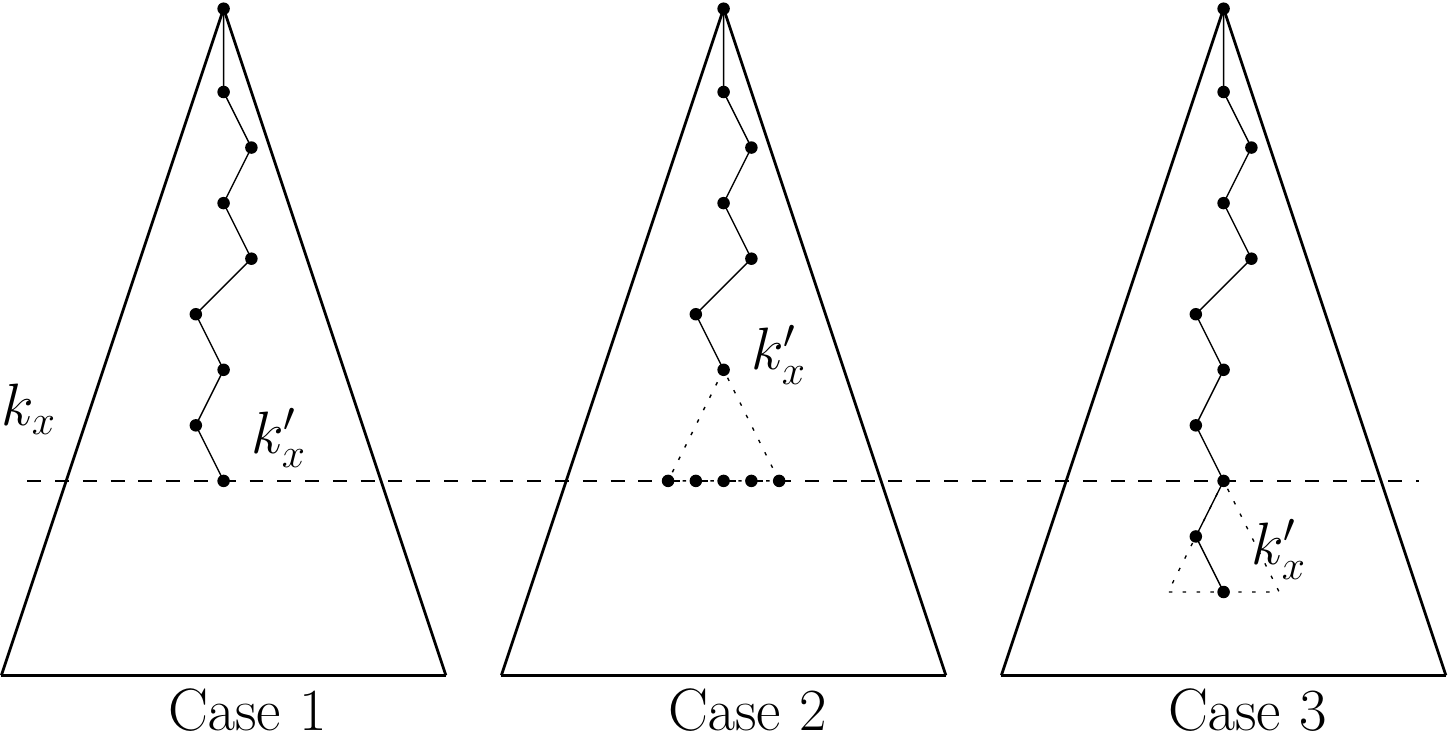}
  \caption{Path termination depth in the Chosen Path Tree}
    \label{fig:paths}
\end{figure}
The underlying random process behind the Chosen Path Tree is not affected by our choice of termination strategy.
In the first case we therefore have that the expected running time is upper bounded by the same (conservative) expression as the one used by the individual strategy.
In the second case we remove $x$ earlier than we would have under the individual strategy. 
For every $x \in S$ we have that $k_{x} \leq 1/\varepsilon$ since for larger values of $k_{x}$ the expected number of nodes containing $x$ exceeds $n$.
We therefore have that $k_{x} - k_{x}' \leq 1/\varepsilon$.
Let $S'$ denote the set of points in the node where $x$ was removed by the \textsc{BruteForce} subprocedure.
There are two rules that could have triggered the removal of $x$: Either $|S'| = O(1)$ or the condition in equation \eqref{eq:bruteforce} was satisfied.
In the first case, the expected cost of following the individual strategy would have been $\Omega(1)$ simply from the $1/\lambda$ children containing $x$ in the next step.
This is no more than a constant factor smaller than the adaptive strategy.
In the second case, when the condition in equation \eqref{eq:bruteforce} is activated we have that the expected number of comparisons 
involving $x$ resulting from $S'$ if we had continued under the individual strategy is at least
\begin{equation*}
	(1-\varepsilon)^{1/\varepsilon}|S'| = \Omega(|S'|)
\end{equation*}
which is no better than what we get with the adaptive strategy.
In the third case where we terminate at depth $k_{x}' > k_{x}$, if we retrace the path to depth $k_{x}$ we know that $x$ was not removed in this node, 
implying that the expected number of comparisons when continuing the branching process on~$x$ is decreasing compared to removing $x$ at depth $k_{x}$.
We have shown that the expected running time of the adaptive strategy is no greater than a constant times the expected running time of the individual strategy.

We are now ready to state our main theoretical contribution, stated below as Theorem \ref{thm:main}. 
The theorem combines the above argument that compares the adaptive strategy against the individual strategy together with Lemma \ref{lem:depth} and Lemma \ref{lem:correctness}, 
and uses $O(\log^{2} n)$ runs of the \cpsj algorithm to solve the set similarity join problem for every choice of constant parameters $\lambda,\varphi$.
\begin{theorem}\label{thm:main}
	For every LSHable similarity measure and every choice of constant threshold $\lambda \in (0,1)$ and probability of recall $\varphi \in (0,1)$ 
	we can solve the $(\lambda,\varphi)$-set similarity join problem on every set $S$ of $n$ points using working space $\tilde{O}(n)$ and with expected running time
	\begin{equation*}
		\tilde{O}\left(\sum_{x \in S}\min_{k_{x}} \left( \sum_{y \in S \setminus \{ x \}} (\simil(x, y) / \lambda)^{k_{x}} + (1/\lambda)^{k_{x}} \right) \right).
	\end{equation*}
\end{theorem}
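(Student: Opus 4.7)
The plan is to show that running $\Theta(\log^2 n)$ independent instances of the \cpsj algorithm, parameterized with $\mathtt{limit} = \Theta(1)$ and $\varepsilon = \log(1/\lambda)/\log n$, solves the $(\lambda, \varphi)$-similarity join problem within the claimed bounds. First I would handle general LSHable similarity measures by invoking the reduction from Section \ref{sec:reduction}: replace each record $x$ by its $t$-token sketch $f(x)$ for $t = \Theta(\log n)$ so that estimated Braun-Blanquet similarity on the sketches preserves the threshold $\lambda$ up to a small slack with high probability over all $\binom{n}{2}$ pairs. This reduction only multiplies sizes by a polylogarithmic factor and preserves both the recall guarantee and the form of the running time bound (since $|x \cap y|/t \to B(f(x), f(y))$).

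Next I would derive the expected running time of one call to \cpsj. The key work has been done in the discussion of global vs.\ individual vs.\ adaptive stopping criteria: by definition, $\E[T_{\text{Individual}}] \le \E[T_{\text{Global}}]$, and I would invoke the three-case analysis comparing the adaptive strategy (which is what \textsc{BruteForce} implements) to the individual one for each fixed $x \in S$. Case 1 ($k_x' = k_x$) is immediate; Case 3 ($k_x' > k_x$) follows because \textsc{BruteForce} only continues past depth $k_x$ when the expected comparison count is non-increasing, by Remark~\ref{obs:bruteforce}. The delicate part, and the main obstacle, is Case 2 ($k_x' < k_x$): I must argue that premature termination by \textsc{BruteForce} does not cost more than the individual strategy would have paid. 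For the $\mathtt{limit}$-triggered branch this is trivial (only $O(1)$ comparisons), and for the $(1-\varepsilon)\lambda$-triggered branch I use the bound $k_x \le 1/\varepsilon$ (else the expected number of surviving nodes containing $x$ already exceeds $n$) together with $(1-\varepsilon)^{1/\varepsilon} = \Omega(1)$ to show that the individual strategy would already have performed $\Omega(|S'|)$ comparisons on the same subtree. This gives $\E[T_{\text{Adaptive}}] = O(\E[T_{\text{Individual}}])$ matching the bound in the theorem statement, up to the polylogarithmic overhead absorbed in $\tilde O(\cdot)$.

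For correctness, Lemma \ref{lem:correctness} gives that one call of \cpsj reports any fixed pair $(x,y) \in S \simjoin S$ with probability at least $\Omega(\varepsilon/\log n) = \Omega(1/\log^2 n)$ with our setting of $\varepsilon$. Running $r = \Theta(\log^2 n \cdot \log(1/(1-\varphi)))$ independent repetitions and taking the union of their outputs drives the per-pair failure probability below $1 - \varphi$, which is the required $(\lambda, \varphi)$-guarantee from Definition \ref{def:simjoin}. This introduces only a polylogarithmic multiplicative factor, again absorbed into $\tilde O(\cdot)$. Finally, the space bound follows from Lemma \ref{lem:space}: each invocation uses working space $O(n \log(n)/\varepsilon) = O(n \log^2 n)$ with high probability, and since the $r$ invocations are performed sequentially they reuse the same working memory, giving overall $\tilde O(n)$ working space on top of the input and output.
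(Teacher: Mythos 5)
Your proposal is correct and follows essentially the same route as the paper: it assembles the adaptive-versus-individual stopping-criterion comparison (with the same three-case analysis and the same parameter choices $\mathtt{limit}=\Theta(1)$, $\varepsilon=\log(1/\lambda)/\log n$), invokes Lemma~\ref{lem:correctness} and the depth/space lemmas, uses $O(\log^2 n)$ independent repetitions for the recall guarantee, and handles general LSHable measures via the embedding of Section~\ref{sec:reduction}. No substantive differences from the paper's argument.
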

\section{Implementation} \label{sec:implementation}
We implement an optimized version of the \cpsj algorithm for solving the Jaccard similarity self-join problem.
In our experiments (described in Section \ref{join:sec:experiments}) we compare the \cpsj algorithm against 
the approximate methods of MinHash LSH~\cite{gionis1999,broder1997syntactic} and BayesLSH~\cite{chakrabarti2015bayesian},
as well as the AllPairs~\cite{bayardo2007scaling} exact similarity join algorithm.
The code for our experiments is written in C\texttt{++} and uses the benchmarking framework and data sets of the recent experimental survey on exact similarity join algorithms by Mann et al.~\cite{mann2016}.  
For our implementation we assume that each set $x$ is represented as a list of 32-bit unsigned integers. 
We proceed by describing the details of each implementation in turn. 

\subsection{Chosen Path Similarity Join} \label{sec:implementation_cpsj}
The implementation of the \cpsj algorithm follows the structure of the pseudocode in Algorithm \ref{alg:cpsjoin} and Algorithm \ref{alg:bruteforce},
but makes use of a few heuristics, primarily sampling and sketching, in order to speed things up. 
The parameter setting is discussed and investigated experimentally in section \ref{sec:parameters}.

\paragraph{Preprocessing.}
Before running the algorithm we use the embedding described in section~\ref{sec:reduction}.
Specifically $t$ independent MinHash functions $h_1, \dots, h_t$ are used to map each set $x \in S$ to a list of $t$ hash values $(h_1(x), \dots, h_t(x))$.
The MinHash function is implemented using Zobrist hashing~\cite{zobrist1970new} from 32 bits to 64 bits with 8-bit characters.
We sample a MinHash function $h$ by sampling a random Zobrist hash function $g$ and let $h(x) = \arg\!\min_{j \in x} g(j)$. 
Zobrist hashing (also known as simple tabulation hashing) has been shown theoretically to have strong MinHash properties and is very fast in practice~\cite{patrascu2012power,thorup2017fast}.  
We set $t = 128$ in our experiments, see discussion later. 

During preprocessing we also prepare sketches using the 1-bit minwise hashing scheme of Li and K{\"o}nig~\cite{li2011theory}. 
Let $\ell$ denote the length in 64-bit words of a sketch $\hat{x}$ of a set $x \in S$. 
We construct sketches for a data set $S$ by independently sampling $64 \times \ell$ MinHash functions $h_i$ and Zobrist hash functions $g_i$ that map from 32 bits to 1 bit. 
The $i$th bit of the sketch $\hat{x}$ is then given by $g_i(h_i(x))$.
In the experiments we set $\ell = 8$.

\paragraph{Similarity estimation using sketches.}
We use 1-bit minwise hashing sketches for fast similarity estimation in the \textsc{BruteForcePairs} and \textsc{BruteForcePoint} subroutines of the \textsc{BruteForce} step of the \cpsj algorithm.
Given two sketches, $\hat{x}$ and $\hat{y}$, we compute the number of bits in which they differ by going through the sketches word for word, computing the popcount of their XOR using the \texttt{gcc} builtin \texttt{\_mm\_popcnt\_u64} that translates into a single instruction on modern hardware. 
Let $\hat{J}(x, y)$ denote the estimated similarity of a pair of sets $(x, y)$. If $\hat{J}(x, y)$ is below a threshold $\hat{\lambda} \approx \lambda$, we exclude the pair from further consideration. If the estimated similarity is greater than $\hat{\lambda}$ we compute the exact similarity and report the pair if $J(x, y) \geq \lambda$. 

The speedup from using sketches comes at the cost of introducing false negatives:
A pair of sets $(x, y)$ with $J(x, y) \geq \lambda$ may have an estimated similarity less than $\hat{\lambda}$, causing us to miss it. 
We let $\delta$ denote a parameter for controlling the false negative probability of our sketches and set $\hat{\lambda}$ such that for sets $(x, y)$ with $J(x, y) \geq \lambda$ we have that $\Pr[\hat{J}(x, y) < \hat{\lambda}] < \delta$. 
In our experiments we set the sketch false negative probability to be $\delta = 0.05$.

\paragraph{Splitting step.}
In the recursive step of the \cpsj algorithm~(Algorithm \ref{alg:cpsjoin}) the set $S$ is split into buckets $S_j$ using the following heuristic: 
Instead of sampling a random hash function and evaluating it on each element $j \in x$, 
we sample an expected $1/\lambda$ elements from $[t]$ and split $S$ according to the corresponding minhash values from the preprocessing step.
This saves the linear overhead in the size of our sets $t$, reducing the time spent placing each set into buckets to $O(1)$.
Internally, a collection of sets $S$ is represented as a C\texttt{++} \texttt{std::vector<uint32\_t>} of set ids.
The collection of buckets $S_j$ is implemented using Google's \texttt{dense\_hash} hash map implementation from the \texttt{sparse\_hash} package~\cite{sparsehash}.

\paragraph{BruteForce step.}
Having reduced the overhead for each set $x \in S$ to $O(1)$ in the splitting step, we wish to do the same for the \textsc{BruteForce} step (described in Algorithm \ref{alg:bruteforce}), 
at least in the case where we do not call the \textsc{BruteForcePairs} or \textsc{BruteForcePoint} subroutines.
The main problem is that we spend time $O(t)$ for each set when constructing the \texttt{count} hash map and estimating the average similarity of $x$ to sets in $S \setminus \{x\}$.
To get around this we construct a 1-bit minwise hashing sketch $\hat{s}$ of length $64 \times \ell$ for the set $S$ using sampling and our precomputed 1-bit minwise hashing sketches.
The sketch $\hat{s}$ is constructed as follows: Randomly sample $64 \times \ell$ elements of $S$ and set the $i$th bit of $\hat{s}$ to be the $i$th bit of the $i$th sample from $S$.
This allows us to estimate the average similarity of a set $x$ to sets in $S$ in time $O(\ell)$ using word-level parallelism. 
A set $x$ is removed from $S$ if its estimated average similarity is greater than $(1 - \varepsilon)\lambda$. 
To further speed up the running time we only call the \textsc{BruteForce} subroutine once for each call to \cpsj, 
calling \textsc{BruteForcePoint} on all points that pass the check rather than recomputing $\hat{s}$ each time a point is removed.
Pairs of sets that pass the sketching check are verified using the same verification procedure as the \all implementation by Mann et al.~\cite{mann2016}.
In our experiments we set the parameter $\varepsilon = 0.1$.
Duplicates are removed by sorting and performing a single linear scan.

\paragraph{Repetitions.}
In theory, for any constant desired recall $\varphi \in (0,1)$ it suffices with $O(\log^2 n)$ independent repetitions of the \cpsj algorithm. 
In practice this number of repetitions is prohibitively large and we therefore set the number of independent repetitions used in our experiments to be fixed at ten. 
With this setting we were able to achieve more than $90\%$ recall across all datasets and similarity thresholds.

\subsection{MinHash LSH} \label{sec:minhash}
We implement a locality-sensitive hashing similarity join using MinHash according to the pseudocode in Algorithm~\ref{alg:minhash}.
A single run of the \mh algorithm can be divided into two steps: 
First we split the sets into buckets according to the hash values of $k$ concatenated MinHash functions $h(x) = (h_1(x), \dots, h_k(x))$.
Next we iterate over all non-empty buckets and run \textsc{BruteForcePairs} to report all pairs of points with similarity above the threshold $\lambda$.
The \textsc{BruteForcePairs} subroutine is shared between the \mh and \cpsj implementation.
\mh therefore uses 1-bit minwise sketches for similarity estimation in the same way as in the implementation of the \cpsj algorithm described above. 

The parameter $k$ can be set for each dataset and similarity threshold $\lambda$ to minimize the combined cost of lookups and similarity estimations performed by algorithm.
This approach was mentioned by Cohen et al.~\cite{cohen2001finding} but we were unable to find an existing implementation.
In practice we set $k$ to the value that results in the minimum estimated running time when running the first part (splitting step) of the algorithm for values of $k$ in the range $\{2, 3, \dots, 10\}$ and estimating the running time by looking at the number of buckets and their sizes. 
Once $k$ is fixed we know that each repetition of the algorithm has probability at least $\lambda^k$ of reporting a pair $(x, y)$ with $J(x, y) \geq \lambda$. 
For a desired recall $\varphi$ we can therefore set $L = \lceil \ln(1/(1-\varphi)) / \lambda^k \rceil$.
In our experiments we report the actual number of repetitions required to obtain a desired recall rather than using the setting of $L$ required for worst-case guarantees.
\begin{algorithm}
\SetKwInOut{Params}{Parameters}
\SetKwArray{Buckets}{buckets}
\DontPrintSemicolon
\Params{$k \geq 1, L \geq 1$.}
\For{$i \leftarrow 1$ \KwTo $L$} {
\emph{Initialize hash map \Buckets{\,}.}\;
\emph{Sample $k$ MinHash fcts.} $h \leftarrow (h_1, \dots, h_k)$\;
\For{$x \in S$} {
	\Buckets{$h(x)$} $\leftarrow$ \Buckets{$h(x)$} $\cup$ $\{x\}$\; 
}
\For{$S' \in$ \Buckets} {
	$\textsc{BruteForcePairs}(S', \lambda)$\;
}
}
\caption{\textsc{MinHash}$(S, \lambda)$} \label{alg:minhash}
\end{algorithm}

\subsection{AllPairs}
To compare our approximate methods against a state-of-the-art exact similarity join we use Bayardo et al.'s \all algorithm~\cite{bayardo2007scaling} as recently implemented in the set similarity join study by Mann et al.~\cite{mann2016}. 
The study by Mann et al. compares implementations of several different exact similarity join methods and finds that the simple \all algorithm is most often the fastest choice. 
Furthermore, for Jaccard similarity, the \all algorithm was at most $2.16$ times slower than the best out of six different competing algorithm across all the data sets and similarity thresholds used, 
and for most runs \all is at most $11\%$ slower than the best exact algorithm (see Table 7 in Mann et al.~\cite{mann2016}). 
Since our experiments run in the same framework and using the same datasets and with the same thresholds as Mann et al.'s study, 
we consider their \all implementation to be a good representative of exact similarity join methods for Jaccard similarity.  

\subsection{BayesLSH}
For a comparison against previous experimental work on approximate similarity joins we use an implementation of \blsh in C as provided by the \blsh authors~\cite{chakrabarti2015bayesian, bayeslsh}.
The BayesLSH package features a choice between \all and LSH as candidate generation method. 
For the verification step there is a choice between \blsh and \blsh-lite.
Both verification methods use sketching to estimate similarities between candidate pairs.
The difference between BayesLSH and BayesLSH-lite is that the former uses sketching to estimate the similarity of pairs that pass the sketching check, 
whereas the latter uses an exact similarity computation if a pair passes the sketching check.
Since the approximate methods in our \cpsj and \mh implementations correspond to the approach of BayesLSH-lite we restrict our experiments to this choice of verification algorithm.
In our experiments we will use \blsh to represent the fastest of the two candidate generation methods, combined with BayesLSH-lite for the verification step.
\section{Experiments} \label{join:sec:experiments}
We run experiments using the implementations of \cpsj, \mh, \blsh, and \all described in the previous section.
In the experiments we perform self-joins under Jaccard similarity for similarity thresholds $\lambda \in \{0.5, 0.6, 0.7, 0.8, 0.9 \}$.
We are primarily interested in measuring the join time of the algorithms, but we also look at the number of candidate pairs $(x,y)$ considered by the algorithms during the join as a measure of performance.  
Note that the preprocessing step of the approximate methods only has to be performed once for each set and similarity measure, 
and can be re-used for different similarity joins, we therefore do not count it towards our reported join times.
In practice the preprocessing time is at most a few minutes for the largest data sets.

\paragraph{Data sets.}
The performance is measured across $10$ real~world data sets along with $4$ synthetic data sets described in Table \ref{tab:datasets}. 
All datasets except for the TOKENS datasets were provided by the authors of~\cite{mann2016} where descriptions and sources for each data set can also be found. 
Note that we have excluded a synthetic ZIPF dataset used in the study by Mann et al.\cite{mann2016} due to it having no results for our similarity thresholds of interest.
Experiments are run on versions of the datasets where duplicate records are removed and any records containing only a single token are ignored.
\begin{table}
	\centering
	\caption{Dataset size, average set size, and average number of sets that a token is contained in.}
	\label{tab:datasets}
	\footnotesize
	\begin{tabular}{lrrr} \toprule
		Dataset & \# sets / $10^6$ & avg. set size & sets / tokens \\\midrule
		AOL        & $7.35$ &   $3.8$ & $18.9$ \\
		BMS-POS    & $0.32$ &   $9.3$ & $1797.9$ \\
		DBLP       & $0.10$ &  $82.7$ & $1204.4$ \\
		ENRON      & $0.25$ & $135.3$ & $29.8$ \\
		FLICKR     & $1.14$ &  $10.8$ & $16.3$ \\
		LIVEJ      & $0.30$ &  $37.5$ & $15.0$ \\
		KOSARAK    & $0.59$ &  $12.2$ & $176.3$ \\
		NETFLIX    & $0.48$ & $209.8$ & $5654.4$ \\
		ORKUT      & $2.68$ & $122.2$ & $37.5$ \\
		SPOTIFY    & $0.36$ &  $15.3$ & $7.4$ \\
		UNIFORM    & $0.10$ &  $10.0$ & $4783.7$ \\
		TOKENS10K  & $0.03$ & $339.4$ & $10000.0$ \\
		TOKENS15K  & $0.04$ & $337.5$ & $15000.0$ \\
		TOKENS20K  & $0.06$ & $335.7$ & $20000.0$ \\ \bottomrule
	\end{tabular}

\end{table}

In addition to the datasets from the study of Mann et al. we add three synthetic datasets TOKENS10K, TOKENS15K, and TOKENS20K, designed to showcase the robustness of the approximate methods.
These datasets have relatively few unique tokens, but each token appears in many sets. 
Each of the TOKENS datasets were generated from a universe of $1000$ tokens ($d = 1000$) and each token is contained in respectively, $10,000$, $15,000$, and $20,000$ different sets as denoted by the name.
The sets in the TOKENS datasets were generated by sampling a random subset of the set of possible tokens,
rejecting tokens that had already been used in more than the maximum number of sets ($10,000$ for TOKENS10K).
To sample sets with expected Jaccard similarity $\lambda'$ the size of our sampled sets should be set to $(2\lambda'/(1+\lambda'))d$. 
For $\lambda' \in \{0.95, 0.85, 0.75, 0.65, 0.55\}$ the TOKENS datasets each have $100$ random sets planted with expected Jaccard similarity $\lambda'$.
This ensures an increasing number of results for our experiments where we use thresholds $\lambda \in \{0.5, 0.6, 0.7, 0.8, 0.9 \}$. 
The remaining sets have expected Jaccard similarity $0.2$.
We believe that the TOKENS datasets give a good indication of the performance on real-world data that has the property that most tokens appear in a large number of sets.

\paragraph{Recall.}
In our experiments we aim for a recall of at least $90\%$ for the approximate methods. 
In order to achieve this for the \cpsj and \mh algorithms we perform a number of repetitions after the preprocessing step, stopping when the desired recall has been achieved.
This is done by measuring the recall against the recall of \all and stopping when reaching $90\%$.
In practice this approach is not feasible as the size of the true result set is not known.
However, it can be efficiently estimated using sampling if it is not too small.
Another approach is to perform the number of repetitions required to obtain the theoretical guarantees on recall as described for \cpsj in Section \ref{sec:analysis} and for \mh in Section \ref{sec:minhash}.
Unfortunately, with our current analysis of the \cpsj algorithm the number of repetitions required to guarantee theoretically a recall of $90\%$ far exceeds the number required in practice as observed in our experiments where ten independent repetitions always suffice. 
For \blsh using LSH as the candidate generation method, the recall probability with the default parameter setting is $95\%$, although we experience a recall closer to $90\%$ in our experiments.

\paragraph{Hardware.}
All experiments were run on an Intel Xeon E5-2690v4 CPU at 2.60GHz with $35$MB L$3$,$256$kB L$2$ and $32$kB L$1$ cache and $512$GB of RAM.
Since a single experiment is always confined to a single CPU core we ran several experiments in parallel~\cite{tange2011gnu} to better utilize our hardware.
\subsection{Results}
\paragraph{Join time.}
Table \ref{tab:jointimes} shows the average join time in seconds over five independent runs, when approximate methods are required to have at least $90\%$ recall.
We have omitted timings for \blsh since it was always slower than all other methods, and in most cases it timed out after 20 minutes when using LSH as candidate generation method.
The join time for \mh is always greater than the corresponding join time for \cpsj except in a single setting: the dataset KOSARAK with threshold $\lambda = 0.5$.
Since \cpsj is typically $2-4\times$ faster than \mh we can restrict our attention to comparing \all and \cpsj where the picture becomes more interesting.
\begin{table*}
	\caption{Join time in seconds for \all (ALL) and \cpsj (CP) with recall $\ge90\%$.}
	\label{tab:jointimes}
\resizebox{\textwidth}{!}{%
	\footnotesize
\csvreader[tabular=l|rr|rr|rr|rr|rr
, table head=\toprule&\multicolumn{2}{c}{Threshold $0.5$}&\multicolumn{2}{c}{Threshold $0.6$}&\multicolumn{2}{c}{Threshold $0.7$}&\multicolumn{2}{c}{Threshold $0.8$}&\multicolumn{2}{c}{Threshold $0.9$}\\
	\midrule Dataset & CP  & ALL & CP & ALL & CP & ALL & CP & ALL & CP & ALL \\ \midrule
, table foot=, head to column names
, late after line=\\
, late after last line =\\\bottomrule]{results_table.csv}{}
{\File& \AA &  \ACP & \BA & \BCP & \CA & \CCP &\DA & \DCP &\EA & \ECP}
}
\end{table*}

Figure \ref{fig:speed} shows the join time speedup that \cpsj achieves over \all. 
We achieve speedups of between $2-50\times$ for most of the datasets, with greater speedups at low similarity thresholds.
For a number of the datasets the \cpsj algorithm is slower than \all for the thresholds considered here.
Comparing with Table \ref{tab:datasets} it seems that \cpsj generally performs well on most data sets where tokens are contained in a large number of sets on average (NETFLIX, UNIFORM, DBLP),
but is beaten by \all on datasets that have a lot of ``rare'' tokens (SPOTIFY, FLICKR, AOL).
This is logical because rare tokens are exploited by the sorted prefix-filtering in \all.
Without rare tokens \all will be reading long inverted indexes.
This is a common theme among all the current state-of-the-art exact methods examined in \cite{mann2016}, including \pp.
\cpsj is robust in the sense that it does not depend on the presence of rare tokens in the data to perform well.
This difference is showcased with the synthetic TOKEN data sets.

\begin{figure}
  \centering
  \includegraphics[width=0.70\textwidth]{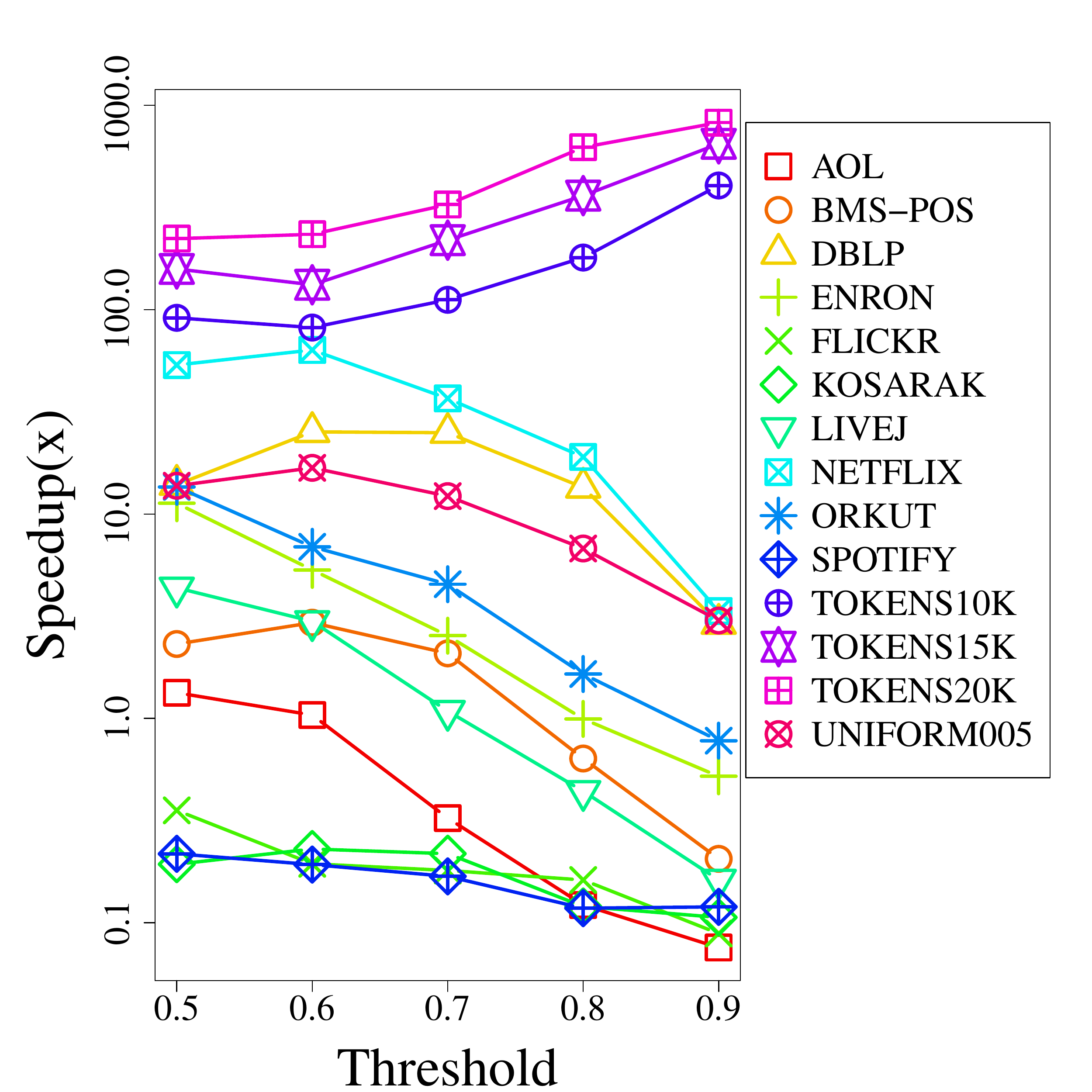}
  \caption{Join time of \cpsj with at least $90\%$ recall relative to \all.}
  \label{fig:speed}
\end{figure}

\paragraph{BayesLSH.}
The poor performance of \blsh compared to the other algorithms (\blsh was always slower) can most likely be tracked down to differences in the implementation of the candidate generation methods of \blsh.
The \blsh implementation uses an older implementation of \all compared to the implementation by Mann et al.~\cite{mann2016} which was shown to yield performance improvements by using a more efficient verification procedure. 
The LSH candidate generation method used by \blsh corresponds to the \mh splitting step, but with $k$ (the number of hash functions) fixed to one. 
Our technique for choosing $k$ in the \mh algorithm, aimed at minimizing the total join time, typically selects $k \in \{3,4,5,6\}$ in the experiments. 
It is therefore likely that \blsh can be competitive with the other techniques by combining it with other candidate generation procedures.
Further experiments to compare the performance of BayesLSH sketching to 1-bit minwise sketching for different parameter settings and similarity thresholds would also be instructive.

\paragraph{TOKEN datasets.}
The TOKENS datasets clearly favor the approximate join algorithms where \cpsj is two to three orders of magnitude faster than \all.
By increasing the number of times each token appears in a set we can make the speedup of \cpsj compared to \all arbitrarily large as shown by the progression from TOKENS10 to TOKENS20.
The \all algorithm generates candidates by searching through the lists of sets that contain a particular token, starting with rare tokens.
Since every token appears in a large number of sets every list will be long.

Interestingly, the speedup of \cpsj is even greater for higher similarity thresholds.
We believe that this is due to an increase in the gap between the similarity of sets to be reported and the remaining sets that have an average Jaccard similarity of $0.2$.
This is in line with our theoretical analysis of \cpsj and most theoretical work on approximate similarity search 
where the running time guarantees usually depend on the approximation factor.

\paragraph{Candidates and verification.}
Table \ref{tab:candidates} compares the number of pre-candidates, candidates, and results generated by the \all and \cpsj algorithms where the desired recall for \cpsj is set to be greater than $90\%$.
For \all the number of pre-candidates denotes all pairs $(x, y)$ investigated by the algorithm that pass checks on their size so that it is possible that $J(x, y) \geq \lambda$.
The number of candidates is simply the number of unique pre-candidates as duplicate pairs are removed explicitly by the \all algorithm.

For \cpsj we define the number of pre-candidates to be all pairs $(x, y)$ considered by the \textsc{BruteForcePairs} and \textsc{BruteForcePoint} subroutines of Algorithm \ref{alg:bruteforce}.
The number of candidates are pre-candidate pairs that pass size checks (similar to \all) and the 1-bit minwise sketching check as described in Section \ref{sec:implementation_cpsj}.
Note that for \cpsj the number of candidates may still contain duplicates as this is inherent to the approximate method for candidate generation. 
Removing duplicates though the use of a hash table would drastically increase the space usage of \cpsj.
For both \all and \cpsj the number of candidates denotes the number of points that are passed to the exact similarity verification step of the \all implementation of Mann et al.~\cite{mann2016}.

Table \ref{tab:candidates} shows that for \all there is not a great difference between the number of pre-candidates and number of candidates, 
while for \cpsj the number of candidates is usually reduced by one or two orders of magnitude for datasets where \cpsj performs well.
For datasets where \cpsj performs poorly such as AOL, FLICKR, and KOSARAK there is less of a decrease when going from pre-candidates to candidates.
It would appear that this is due to many duplicate pairs from the candidate generation step and not a failure of the sketching technique.
\subsection{Parameters} \label{sec:parameters}
To investigate how parameter settings affect the performance of the \cpsj algorithm we run experiments where we vary the brute force parameter \texttt{limit},
the brute force aggressiveness parameter $\varepsilon$, and the sketch length in words $\ell$.
Table \ref{tab:parameters} shows the parameter settings used during theses experiments and the final setting used for our join time experiments.
\begin{table}
\caption{Parameters of the \cpsj algorithm, their setting during parameter experiments, and their setting for the final join time experiments}
\label{tab:parameters}
\footnotesize
\centering
	\begin{tabular}{llrr} \toprule
		Parameter & Description & Test & Final \\ \midrule
		\texttt{limit} & Brute force limit & $100$ & $250$ \\
		$\ell$ & Sketch word length & $4$ & $8$ \\
		$t$ & Size of MinHash set & $128$ & $128$ \\
		$\varepsilon$ & Brute force aggressiveness & $0.0$ & $0.1$ \\ 
		$\delta$ & Sketch false negative prob. & $0.1$ & $0.05$ \\ \bottomrule 
	\end{tabular}
\end{table}
\begin{figure*}
\centering
\subfloat[\texttt{limit} $\in \{10, 50, 100, 250, 500 \}$]{\includegraphics[width = 0.5\textwidth]{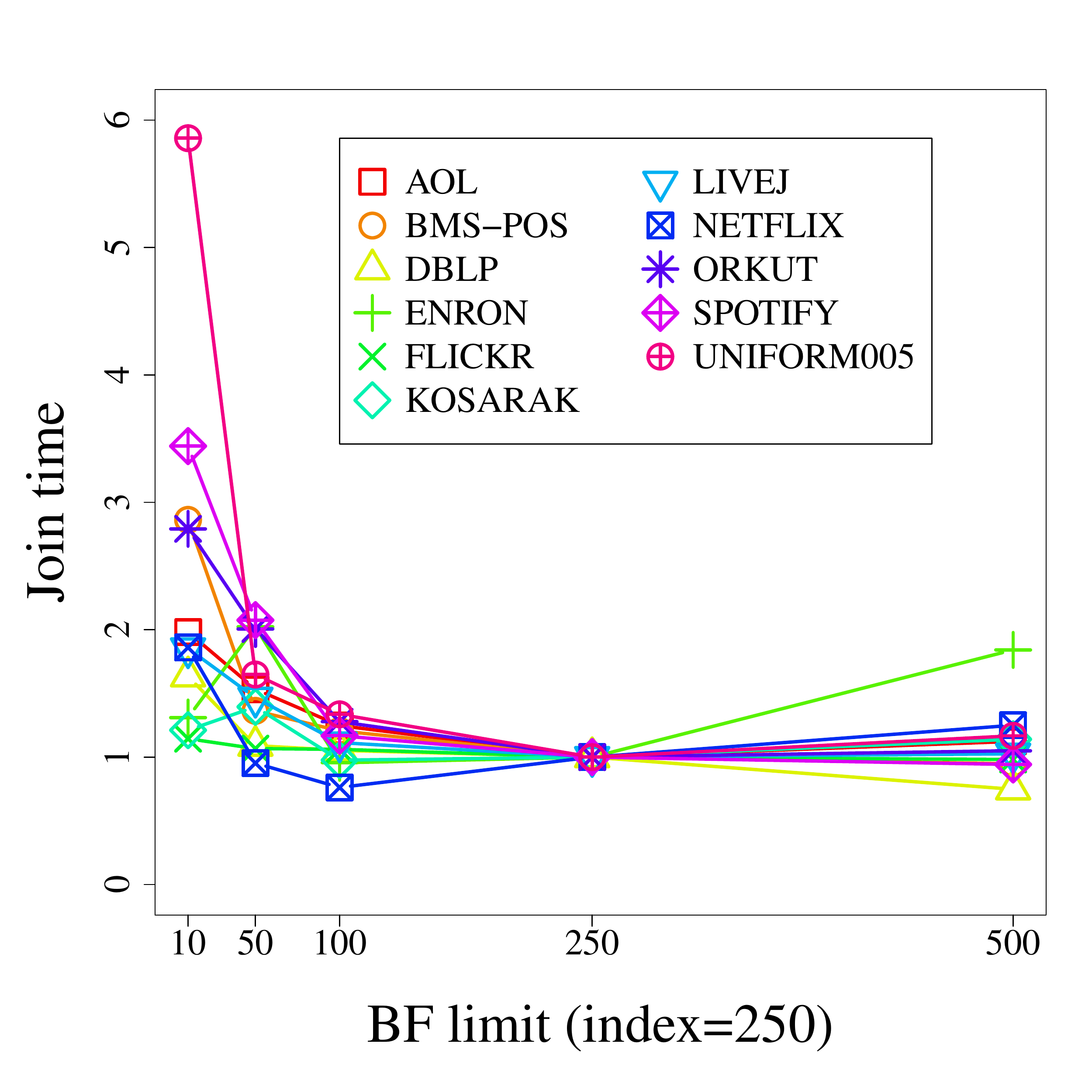}} 
\subfloat[$\varepsilon \in \{0.0, 0.1, 0.2, 0.3, 0.4, 0.5 \}$]{\includegraphics[width = 0.5\textwidth]{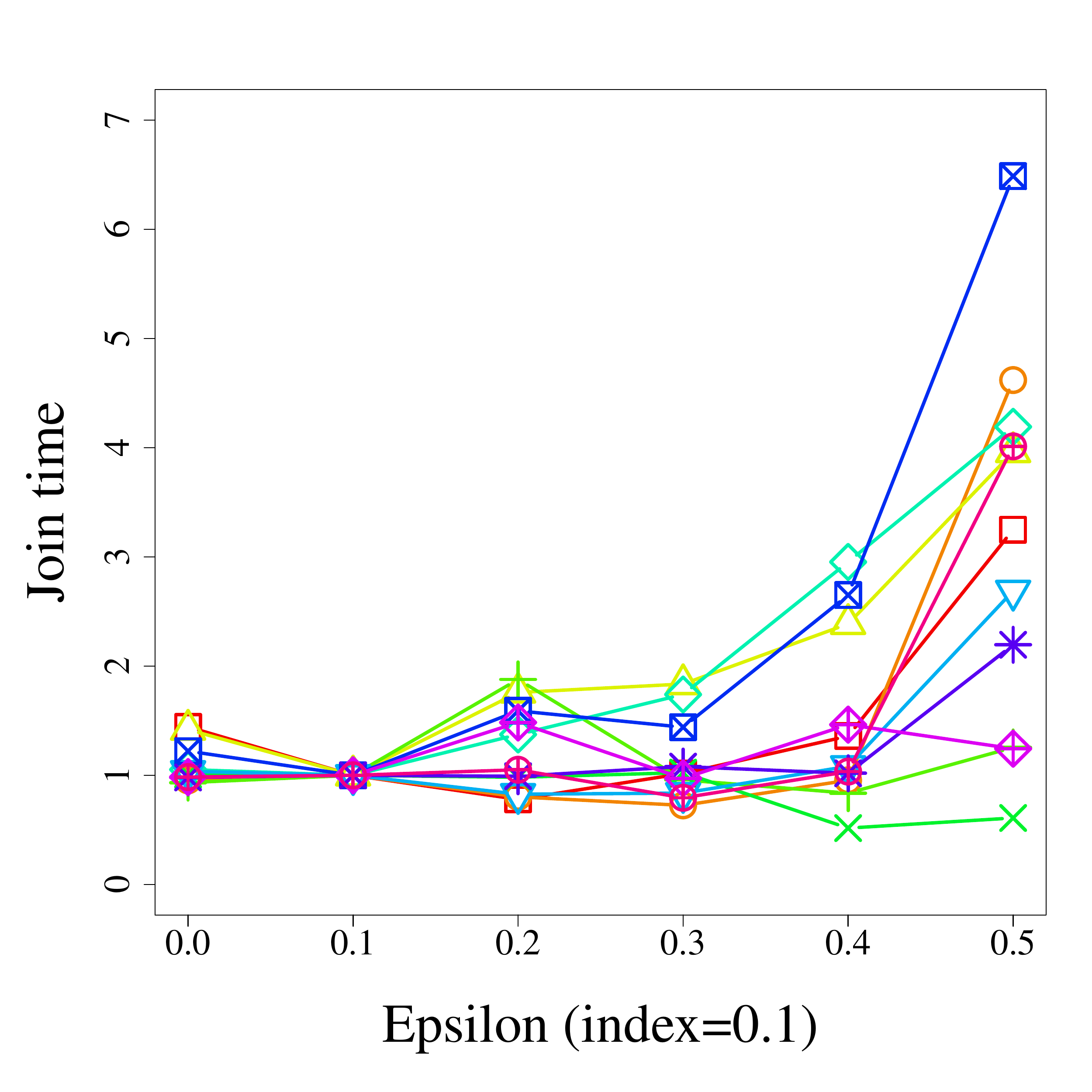}} \\
\subfloat[$w \in \{1, 2, 4, 8, 16 \}$]{\includegraphics[width = 0.55\textwidth]{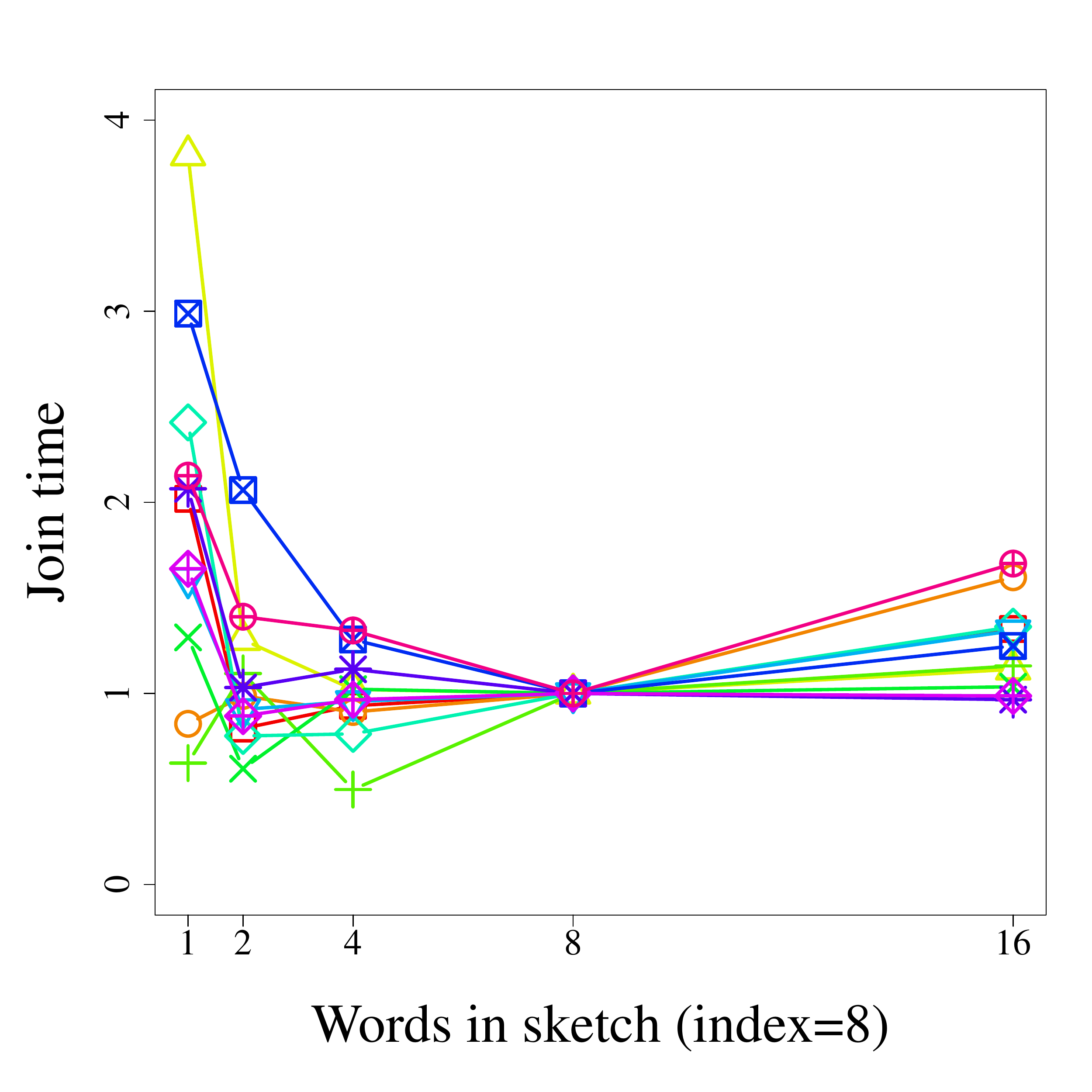}}    
\caption{Relative join time for \cpsj with at least $80\%$ recall and similarity threshold $\lambda = 0.5$ for different parameter settings of \texttt{limit}, $\varepsilon$, and $w$.} 
\label{fig:parameters}
\end{figure*}
Figure \ref{fig:parameters} shows the \cpsj join time for different settings of the parameters.
By picking one parameter at a time we are obviously ignoring possible interactions between the parameters, but the stability of the join times lead us to believe that these interactions have limited effect.

Figure \ref{fig:parameters} (a) shows the effect of the brute force limit on the join time. 
Lowering \texttt{limit} causes the join time to increase due to a combination of spending more time splitting sets into buckets and due to the lower probability of recall from splitting at a deeper level. The join time is relatively stable for \texttt{limit} $\in \{100, 250, 500\}$.  

Figure \ref{fig:parameters} (b) shows the effect of brute force aggressiveness on the join time. 
As we increase $\varepsilon$, sets that are close to the other elements in their buckets are more likely to be removed by brute force comparing them to all other points.
The tradeoff here is between the loss of probability of recall by letting a point continue in the $\cp$ branching process versus the cost of brute forcing the point.
The join time is generally increasing with $\varepsilon$, but it turns out that $\varepsilon = 0.1$ is a slightly better setting than $\varepsilon = 0.0$ for almost all data sets.

Figure \ref{fig:parameters} (c) shows the effect of sketch length on the join time.
There is a trade-off between the sketch similarity estimation time and the precision of the estimate, leading to fewer false positives.
For a similarity threshold of $\lambda = 0.5$ using only a single word negatively impacts the performance on most datasets compared to using two or more words.
The cost of using longer sketches seems neglible as it is only a few extra instructions per similarity estimation so we opted to use $\ell = 8$ words in our sketches.
\begin{table}
\caption{Number of pre-candidates, candidates and results for ALL and CP with at least $90\%$ recall.}
\label{tab:candidates}
\scriptsize
\centering
\renewcommand*{\arraystretch}{1.05}
\begin{tabular}{l|rr|rr}
\toprule
Dataset    &   \multicolumn{2}{c}{Threshold $0.5$}&\multicolumn{2}{c}{Threshold $0.7$} \\ 
		   &\multicolumn{1}{c}{ALL}&\multicolumn{1}{c}{CP} &	\multicolumn{1}{c}{ALL}&\multicolumn{1}{c}{CP} 	  \\ \midrule
           & 8.5E+09 & 7.4E+09 & 6.2E+08 & 2.9E+09 \\
AOL        & 8.5E+09 & 1.4E+09 & 6.2E+08 & 3.1E+07 \\
           & 1.3E+08 & 1.2E+08 & 1.6E+06 & 1.5E+06 \\ \hline
           & 2.0E+09 & 9.2E+08 & 2.7E+08 & 3.3E+08 \\
BMS-POS    & 1.8E+09 & 1.7E+08 & 2.6E+08 & 4.9E+06 \\
           & 1.1E+07 & 1.0E+07 & 2.0E+05 & 1.8E+05 \\ \hline
           & 6.6E+09 & 4.6E+08 & 1.2E+09 & 1.3E+08 \\
DBLP       & 1.9E+09 & 4.6E+07 & 7.2E+08 & 4.3E+05 \\
           & 1.7E+06 & 1.6E+06 & 9.1E+03 & 8.5E+03 \\ \hline
           & 2.8E+09 & 3.7E+08 & 2.0E+08 & 1.5E+08 \\
ENRON      & 1.8E+09 & 6.7E+07 & 1.3E+08 & 2.1E+07 \\
           & 3.1E+06 & 2.9E+06 & 1.2E+06 & 1.2E+06 \\ \hline
           & 5.7E+08 & 2.1E+09 & 9.3E+07 & 9.0E+08 \\
FLICKR     & 4.1E+08 & 1.1E+09 & 6.3E+07 & 3.8E+08 \\
           & 6.6E+07 & 6.1E+07 & 2.5E+07 & 2.3E+07 \\ \hline
           & 2.6E+09 & 4.7E+09 & 7.4E+07 & 4.2E+08 \\
KOSARAK    & 2.5E+09 & 2.1E+09 & 6.8E+07 & 2.1E+07 \\
           & 2.3E+08 & 2.1E+08 & 4.4E+05 & 4.1E+05 \\  \hline
           & 9.0E+09 & 2.8E+09 & 5.8E+08 & 1.2E+09 \\
LIVEJ      & 8.3E+09 & 3.6E+08 & 5.6E+08 & 1.8E+07 \\
           & 2.4E+07 & 2.2E+07 & 8.1E+05 & 7.6E+05 \\ \hline
           & 8.6E+10 & 1.3E+09 & 1.0E+10 & 4.3E+08 \\
NETFLIX    & 1.3E+10 & 3.1E+07 & 3.4E+09 & 6.4E+05 \\
           & 1.0E+06 & 9.5E+05 & 2.4E+04 & 2.2E+04 \\ \hline
           & 5.1E+09 & 1.1E+09 & 3.0E+08 & 7.2E+08 \\
ORKUT      & 3.9E+09 & 1.3E+06 & 2.6E+08 & 8.1E+04 \\
           & 9.0E+04 & 8.4E+04 & 5.6E+03 & 5.3E+03 \\ \hline
           & 5.0E+06 & 1.2E+08 & 4.7E+05 & 8.5E+07 \\
SPOTIFY    & 4.8E+06 & 3.1E+05 & 4.6E+05 & 2.7E+03 \\
           & 2.0E+04 & 1.8E+04 & 2.0E+02 & 1.9E+02 \\ \hline
           & 1.5E+10 & 1.7E+08 & 8.1E+09 & 4.9E+07 \\
TOKENS10K  & 4.1E+08 & 5.7E+06 & 4.1E+08 & 1.9E+06 \\
           & 1.3E+05 & 1.3E+05 & 7.4E+04 & 6.9E+04 \\ \hline
           & 3.6E+10 & 3.0E+08 & 1.9E+10 & 8.1E+07 \\
TOKENS15K  & 9.6E+08 & 7.2E+06 & 9.6E+08 & 1.9E+06 \\
           & 1.4E+05 & 1.3E+05 & 7.5E+04 & 6.9E+04 \\ \hline
           & 6.4E+10 & 4.4E+08 & 3.4E+10 & 1.0E+08 \\
TOKENS20K  & 1.7E+09 & 8.8E+06 & 1.7E+09 & 1.9E+06 \\
           & 1.4E+05 & 1.4E+05 & 7.9E+04 & 7.4E+04 \\ \hline
           & 2.5E+09 & 3.7E+08 & 6.5E+08 & 1.3E+08 \\
UNIFORM005 & 2.0E+09 & 9.5E+06 & 6.1E+08 & 3.9E+04 \\
           & 2.6E+05 & 2.4E+05 & 1.4E+03 & 1.3E+03 \\ \bottomrule 
\end{tabular}
\end{table}

\section{Conclusion}
We provided experimental and theoretical results on a new randomized set similarity join algorithm, \cpsj, and compared it experimentally to state-of-the-art exact and approximate set similarity join algorithms.
\cpsj is typically $2-4$ times faster than previous approximate methods.
Compared to exact methods it obtains speedups of more than an order of magnitude on real-world datasets, while keeping the recall above $90\%$.
Among the datasets used in these experiments we note that NETFLIX and FLICKR represents two archetypes.
On average a token in the NETFLIX dataset appears in more than $5000$ sets while on average a token in the FLICKR dataset appears in less than $20$ sets. 
Our experiments indicate that \cpsj brings large speedups to the NETFLIX type datasets, while it is hard to improve upon the perfomance of \all on the FLICKR type.

A direction for future work could be to tighten and simplify the theoretical analysis.
We conjecture that the running time of the algorithm can be bounded by a simpler function of the sum of similarities between pairs of points in $S$.
We note that recursive methods such as ours lend themselves well to parallel and distributed implementations since most of the computation happens in independent, recursive calls. Further investigating this is an interesting possibility.

\section*{Acknowledgment}
The authors would like to thank Willi Mann for making the source code and data sets of~\cite{mann2016} available, Aniket Chakrabarti for information about the implementation of BayesLSH, and the anonymous reviewers for useful suggestions.
\chapter{Lower bounds for asymmetric locality-sensitive hashing} \label{ch:asymmetric}
\sectionquote{Deep roots are not reached by the frost}
\noindent A locality-sensitive hashing (LSH) scheme~\cite{indyk1998} for a set of objects $X$ equipped with a pairwise measure of similarity $\simil \colon X \times X \to \real$ is a distribution $\LSH$ over functions $h \colon X \to R$ such that the probability that a pair of points in $X$ collide under a randomly sampled $h \sim \LSH$ is a function of their similarity. 
That is, there exists a collision probability function (CPF) $f \colon \real \to [0,1]$ such that for every $x, y \in X$ we have $\Pr_{h \sim \LSH}[h(x) = h(y)] = f(\simil(x, y))$. 

To apply locality-sensitive hashing to solve the problem of approximate $(s_1, s_2)$-similarity search we are interested in finding the distribution $\LSH$ with a CPF $f$ that magnifies the gap between the collision probability $f(s_1)$ of points with similarity at least $s_1$ and the collision probability $f(s_2)$ of points with similarity at most $s_2 < s_1$ where we assume that $f$ is non-decreasing. 
The performance of the solution will be governed by the parameter $\rho$ defined by $f(s_1) = f(s_2)^\rho$.  

Asymmetric locality-sensitive hashing~\cite{shrivastava2014asymmetric} extends locality-sensitive hashing by using a distribution $\ALSH$ over pairs of functions $h, g \colon X \to R$ such that $\Pr_{(h,g) \sim \ALSH}[h(x) = g(y)] = f(\simil(x, y))$. 
The use of asymmetry allows us to construct schemes where the probability of collision between identical points is less than $1$.
Applications of asymmetric locality-sensitive hashing include maximal inner product search~\cite{shrivastava2014asymmetric}, orthogonal vector search~\cite{vijayanarasimhan2014hashing}, annulus queries~\cite{aumuller2017distance}, privacy-preserving similarity estimation~\cite{aumuller2017distance}, and spherical range reporting~\cite{aumuller2017distance}.
This chapter presents the lower bounds for asymmetric locality-sensitive hashing from~\cite{aumuller2017distance} and extends them slightly.
\paragraph{Contribution.}
We show lower bounds for asymmetric locality-sensitive hashing schemes for the Boolean hypercube $\cube{d}$ under cosine similarity $\simil(x, y) = \ip{x}{y}/d$.
Specifically, we will both upper and lower bound the probability of randomly $\alpha$-correlated points $x, y \in \cube{d}$ colliding under $(h,g) \sim \ALSH$ as a function of $\alpha$ and the probability that randomly $0$-correlated points collide under $(h, g) \sim \ALSH$. 
Note that $(x, y$) being randomly $0$-correlated simply corresponds to $x$ and $y$ being independent and uniformly distributed in $\cube{d}$.
\begin{theorem} \label{thm:asymmetric_lower}
	Let $\ALSH$ be a distribution over function pairs $h, g \colon \cube{d} \to R$, 
	and define $\hat{f} \colon [-1,1] \to [0,1]$ as $\hat{f}(\alpha) = \Pr[h(x) = g(y)]$ where $x, y$ are randomly $\alpha$-correlated and $(h,g) \sim \ALSH$.  
	Then, for every $\alpha \in (-1, 1)$ we have 
	\begin{equation*}
		\hat{f}(0)^{\frac{1+|\alpha|}{1-|\alpha|}} \leq \hat{f}(\alpha) \leq \hat{f}(0)^{\frac{1-|\alpha|}{1+|\alpha|}}. 
	\end{equation*}
\end{theorem}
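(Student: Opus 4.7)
The plan is to reduce to a deterministic pair $(h, g)$ and non-negative $\alpha$, then apply (reverse) hypercontractive inequalities pointwise to each level set of the pair, and finally aggregate via a carefully chosen three-sequence Hölder inequality. First I would pass to deterministic $(h, g)$ in the support of $\ALSH$: since $t \mapsto t^{(1-\alpha)/(1+\alpha)}$ is concave and $t \mapsto t^{(1+\alpha)/(1-\alpha)}$ is convex on $[0,1]$, Jensen's inequality propagates both pointwise bounds over the randomness. Further, I would reduce to $\alpha \in [0,1)$ by replacing $g$ by $y \mapsto g(-y)$, which flips the sign of the correlation while preserving $\hat{f}(0)$. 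Set $A_r = h^{-1}(r)$, $B_r = g^{-1}(r)$ and let $T_\alpha$ denote the Bonami--Beckner noise operator on $\cube{d}$, so that $\hat{f}(\alpha) = \sum_r \ip{T_\alpha \1_{A_r}}{\1_{B_r}}$ and $\hat{f}(0) = \sum_r \mu(A_r)\mu(B_r)$, with the partition constraints $\sum_r \mu(A_r) = \sum_r \mu(B_r) = 1$.

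For the upper bound, I would invoke the two-function Bonami--Beckner inequality $\ip{T_\alpha f}{g} \leq \|f\|_{1+\alpha} \|g\|_{1+\alpha}$ to get $\ip{T_\alpha \1_{A_r}}{\1_{B_r}} \leq (\mu(A_r)\mu(B_r))^{1/(1+\alpha)}$, then apply the three-sequence Hölder inequality with exponents $\tfrac{1+\alpha}{\alpha}, \tfrac{1+\alpha}{\alpha}, \tfrac{1+\alpha}{1-\alpha}$ (whose reciprocals sum to $1$) to the factorization
$$(\mu(A_r)\mu(B_r))^{1/(1+\alpha)} = \mu(A_r)^{\alpha/(1+\alpha)} \cdot \mu(B_r)^{\alpha/(1+\alpha)} \cdot (\mu(A_r)\mu(B_r))^{(1-\alpha)/(1+\alpha)}.$$
Two of the resulting factors collapse via $\sum_r \mu(A_r) = \sum_r \mu(B_r) = 1$, yielding $\hat{f}(\alpha) \leq \hat{f}(0)^{(1-\alpha)/(1+\alpha)}$.

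For the lower bound, I would appeal to Borell's reverse hypercontractive inequality: for non-negative $f, g$ and $p, q \in (0, 1)$ with $(1-p)(1-q) \geq \alpha^2$, one has $\ip{T_\alpha f}{g} \geq \|f\|_p \|g\|_q$. Taking $p = q = 1-\alpha$ and applying to indicators gives the mirror pointwise bound $\ip{T_\alpha \1_{A_r}}{\1_{B_r}} \geq (\mu(A_r)\mu(B_r))^{1/(1-\alpha)}$. I would then apply the \emph{reverse} three-sequence Hölder inequality with mixed-sign exponents $-\tfrac{1-\alpha}{\alpha}, -\tfrac{1-\alpha}{\alpha}, \tfrac{1-\alpha}{1+\alpha}$ (two negative, one positive, reciprocals summing to $1$) to the mirror factorization
$$(\mu(A_r)\mu(B_r))^{1/(1-\alpha)} = \mu(A_r)^{-\alpha/(1-\alpha)} \cdot \mu(B_r)^{-\alpha/(1-\alpha)} \cdot (\mu(A_r)\mu(B_r))^{(1+\alpha)/(1-\alpha)},$$
obtaining $\hat{f}(\alpha) \geq \hat{f}(0)^{(1+\alpha)/(1-\alpha)}$ after the two normalization factors collapse to $1$.

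The hard part will be the lower bound's aggregation step: the reverse Hölder inequality with two negative exponents is not a routine tool, and I would need to verify it holds under the positivity conditions on $(\mu(A_r))$ and $(\mu(B_r))$. Indices with $\mu(A_r) = 0$ or $\mu(B_r) = 0$ must be removed from the sum before invoking reverse Hölder; fortunately they are harmless since they contribute $0$ to the pointwise reverse hypercontractive bound and to $\hat{f}(0)$ alike. A secondary technicality is confirming that the tightest choice of $p, q$ in Borell's inequality really is $p = q = 1-\alpha$ and that it yields the claimed exponent $1/(1-\alpha)$; this is the point at which the two Hölder configurations for the upper and lower bound become mirror images of each other.
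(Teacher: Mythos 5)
Your proposal is correct and follows essentially the same route as the paper's proof: decompose the collision probability over level sets of $(h,g)$, apply pointwise (reverse) hypercontractive bounds with the symmetric exponent choice to get $(\mu(A_r)\mu(B_r))^{1/(1\pm\alpha)}$, aggregate using that the level-set measures sum to one, apply Jensen over the draw of $(h,g)$, and handle negative $\alpha$ by the sign flip $g(y)\mapsto g(-y)$. The only cosmetic differences are that the paper quotes the (reverse) small-set expansion theorems and reaches your pointwise bounds via the algebraic step comparing $(1\pm\alpha)(a^2+b^2)$ with $a^2 \mp 2\alpha ab + b^2$, and it replaces your three-sequence (reverse) H\"older aggregation by the equivalent inequality $\sum_i (p_iq_i)^c \gtrless \left(\sum_i p_iq_i\right)^{2c-1}$ for probability vectors, proved by two applications of Jensen's inequality, which also sidesteps the mixed-sign reverse H\"older step you flagged (that step is in fact valid for strictly positive sequences, and the zero-measure level sets are harmless as you note).
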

As $d$ increases the empirical correlation of a pair of randomly $\alpha$-correlated points will be concentrated around $\alpha$. 
Theorem \ref{thm:asymmetric_lower} can therefore be used to show that for $\alpha \geq 0$,
asymmetric locality-sensitive hashing for $(\alpha, 0)$-cosine similarity search must have $\rho = \log(f(\alpha)) / \log(f(0)) \geq \frac{1-\alpha}{1+\alpha}$ up to lower order terms.
This lower bound on the $\rho$-value matches an existing lower bound for standard locality-sensitive hashing \cite{motwani2007, andoni2016tight} and matches the upper bound of cross-polytope LSH~\cite{andoni2015practical}, showing that asymmetry cannot help provide better hashing schemes for this problem.
Implicitly, this extension of the standard LSH lower bound to asymmetric LSH for $(\alpha, 0)$-cosine similarity search already follows from the space-time tradeoff lower bounds for similarity search shown independently by Andoni et al.~\cite{andoni2017optimal} and Christiani~\cite{christiani2017framework}. 

Let $\alpha \geq 0$ and suppose we are interested in solving the $(0, -\alpha)$-similarity search problem under negative cosine similarity, i.e.\ we want to report points with cosine similarity at most $0$ while avoiding points with cosine similarity greater than $\alpha$. 
Theorem \ref{thm:asymmetric_lower} lower bounds the $\rho$-value of asymmetric locality-sensitive hashing schemes for this problem by $\rho = \log(f(0)) / \log(f(\alpha)) \geq \frac{1-\alpha}{1+\alpha}$ up to lower order terms.
This can again be matched by a simple asymmetric version of cross-polytope LSH: 
Let $\LSH$ denote the cross-polytope LSH, then we sample $(h, g) \sim \ALSH$ by sampling $h' \sim \LSH$ and setting $h(x) = h'(x)$ and $g(x) = h'(-x)$.  
For more details and related work see~\cite{aumuller2017distance}.
\paragraph{Techniques.}
The proof of Theorem \ref{thm:asymmetric_lower} combines the small-set expansion theorems by O'Donnell~\cite{odonnell2014analysis} with techniques inspired by the LSH lower bound of Motwani et al.~\cite{motwani2007}.
The (reverse) small-set expansion theorem (lower) upper bounds the probability that randomly $\alpha$-correlated points $(x, y)$ 
end up in a pair of subsets $A, B$ of the Boolean hypercube, as a function of the size of these subsets.  
We are able to extend these bounds for pairs of subsets of the Boolean hypercube to distributions over pairs of functions through a sequence of applications of primarily Jensen's inequality.
\section{Preliminaries}
We begin by introducing the required tools from~\cite{odonnell2014analysis}, starting with the definition of randomly $\alpha$-correlated pairs of points.
\begin{definition}\label{def:correlation}
	For $-1 \leq \alpha \leq 1$ and $x, y \in \cube{d}$ we say that $(x, y)$ is randomly $\alpha$-correlated 
	if the pairs $(x_i, y_i)$ are i.i.d.\ with $x_i$ uniform in $\cube{d}$ and 
	\begin{equation*}
		y_i =
		\begin{cases}
			x_i & \text{with probability } \frac{1 + \alpha}{2}, \\
			-x_i & \text{with probability } \frac{1 - \alpha}{2}. 
		\end{cases}
	\end{equation*}
\end{definition}
Next we define a probabilistic version of the CPF for cosine similarity on the Boolean hypercube.
Theorem \ref{thm:asymmetric_lower} provides upper and lower bounds on the probabilistic CPF. 
\begin{definition}[Probabilistic CPF]
	Let $\ALSH$ be a distribution over pairs $h, g \colon \cube{d} \to R$. 
    We define the probabilistic CPF $\hat{f} \colon [-1,1] \to [0,1]$ by 
	\begin{equation*}
		\hat{f}(\alpha) = \Pr_{\substack{(h,g) \sim \ALSH \\ \text{$(x, y)$ $\alpha$-corr.} }}[h(x) = g(y)].
	\end{equation*}
\end{definition}
The proof of the lower bound will make use of the following technical inequality that follows from two applications of Jensen's inequality.
\begin{lemma} \label{lem:inequality} 
	Let $p, q$ denote discrete probability distributions, then for every $c \geq 1$ we have that 
	\begin{equation*}
		\sum_{i} (p_i q_i)^c \geq \left(\sum_{i} p_i q_i \right)^{2c-1} 
	\end{equation*}
	with reverse inequality for $c \leq 1$.
\end{lemma}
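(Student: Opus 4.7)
The plan is to prove the inequality by two successive applications of Jensen's inequality, one treating $p$ as the reference measure and one treating $q$ as the reference measure. The parameter $(2c-1)$ on the right-hand side will emerge naturally as $c + (c-1)$, which is exactly how the exponent splits between the two Jensen steps.

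First I would rewrite the left-hand side in the form
\begin{equation*}
\sum_i (p_i q_i)^c \;=\; \sum_i p_i \cdot \bigl(p_i^{(c-1)/c}\, q_i\bigr)^c,
\end{equation*}
so that $p$ appears as a probability measure and the remaining factor is raised to the $c$-th power. For $c \geq 1$ the map $x \mapsto x^c$ is convex, so Jensen applied with measure $p$ gives
\begin{equation*}
\sum_i p_i \cdot \bigl(p_i^{(c-1)/c}\, q_i\bigr)^c \;\geq\; \left(\sum_i p_i \cdot p_i^{(c-1)/c}\, q_i\right)^{\!c} \;=\; \left(\sum_i p_i^{(2c-1)/c}\, q_i\right)^{\!c}.
\end{equation*}

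Next I would apply Jensen a second time, this time with $q$ as the reference measure and $\phi(x) = x^{(2c-1)/c}$, which is convex precisely because $c \geq 1$ implies $(2c-1)/c \geq 1$. This yields
\begin{equation*}
\sum_i q_i \cdot p_i^{(2c-1)/c} \;\geq\; \left(\sum_i q_i\, p_i\right)^{\!(2c-1)/c}.
\end{equation*}
Raising to the $c$-th power and chaining the two inequalities gives $\sum_i (p_i q_i)^c \geq \bigl(\sum_i p_i q_i\bigr)^{2c-1}$, as desired. For $c \leq 1$ both power functions used above become concave on $[0,1]$, so Jensen reverses in both steps and the chain of inequalities flips direction, giving the stated reverse inequality.

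I do not expect any real obstacle: the argument is essentially an exponent-bookkeeping exercise once the right decomposition $c = 1 \cdot (c-1) + 1$ of the exponent is spotted. The main thing to verify carefully is that the exponent $(2c-1)/c$ is at least $1$ for $c \geq 1$ and at most $1$ for $c \in (0,1]$, so that the direction of Jensen's inequality is correct in both applications; the boundary case $c=1$ gives equality trivially.
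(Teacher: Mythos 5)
Your proposal is correct and follows essentially the same route as the paper's own proof: the identical decomposition $\sum_{i} p_i\bigl(p_i^{1-1/c} q_i\bigr)^c$ followed by two applications of Jensen's inequality with the maps $x \mapsto x^c$ and $x \mapsto x^{2-1/c}$ (your exponents $(c-1)/c$ and $(2c-1)/c$ are these same quantities), and the same reversal-by-concavity remark for $c \leq 1$. The only caveat, which the paper's proof shares, is that for $c < 1/2$ the exponent $2-1/c$ is negative, so $x \mapsto x^{2-1/c}$ is convex rather than concave there; this does not affect the regime $c \in [1/2,1]$ in which the lemma is actually invoked.
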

\begin{proof}
	Assume $c \geq 1$.
	By Jensen's inequality, using the fact that $x \mapsto x^c$ and $x \mapsto x^{2 - 1/c}$ are convex we have that 
	\begin{equation*}
	\sum_{i} (p_i q_i)^c = \sum_{i} p_i(p_i^{1 - 1/c} q_i)^c 
						 \geq \left( \sum_i p_i^{2 - 1/c} q_i \right)^c 
						 \geq \left(\sum_{i} p_i q_i \right)^{2c-1}. 
	\end{equation*}
	For $c \leq 1$ we have that $x \mapsto x^c$ and $x \mapsto x^{2 - 1/c}$ are concave and the inequality is reversed.
\end{proof}
\section{Lower bounding the collision probability}
The reverse small-set expansion theorem lower bounds the probability that random $\alpha$-correlated points $(x, y)$ 
end up in a pair of subsets $A, B$ of the Hamming cube, as a function of the size of these subsets.  
In the following, for $A \subset \cube{d}$ we refer to the quantity $|A|/2^d$ as the \emph{volume} of $A$.
\begin{theorem}[Rev.\ Small-Set Expansion~\cite{odonnell2014analysis}]\label{thm:expansionreverse}
Let $0 \leq \alpha \leq 1$. Let $A, B \subseteq \cube{d}$ have volumes $\exp(-a^{2}/2)$, $\exp(-b^{2}/2)$, respectively, where $a, b \geq 0$. 
Then we have that
\begin{equation*}
	\Pr_{\substack{(x, y) \\ \alpha \text{-corr.}}} [x \in A, y \in B]\geq \exp\left(-\frac{1}{2}\frac{a^2 + 2\alpha a b + b^2}{1 - \alpha^2}\right).
\end{equation*}
\end{theorem}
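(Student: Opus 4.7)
The plan is to derive Theorem~2.1 from Borell's two-function reverse hypercontractive inequality applied to the indicator functions of $A$ and $B$. Writing the probability we want in operator form,
$$\Pr_{\substack{(x,y)\\ \alpha\text{-corr.}}}[x\in A,\ y\in B] \;=\; \E\bigl[\1_A(x)\,\1_B(y)\bigr] \;=\; \langle \1_A,\; T_{\alpha}\,\1_B\rangle,$$
where $T_{\alpha}$ is the Bonami--Beckner noise operator on $\cube{d}$, reframes the problem as lower-bounding a correlation under noise. I would then invoke Borell's reverse hypercontractive inequality: for every pair of non-negative functions $f,g\colon\cube{d}\to\R$ and every $p,q\in(0,1)$ satisfying $(1-p)(1-q)=\alpha^{2}$,
$$\langle f,\; T_{\alpha}\, g\rangle \;\ge\; \|f\|_{p}\,\|g\|_{q}.$$

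Specialising to $f=\1_A$ and $g=\1_B$ and using $\|\1_A\|_{p}=(|A|/2^{d})^{1/p}=\exp(-a^{2}/(2p))$ (and similarly for $B$) yields
$$\Pr[x\in A,\ y\in B] \;\ge\; \exp\!\left(-\frac{a^{2}}{2p}-\frac{b^{2}}{2q}\right).$$
Since this holds for every feasible $(p,q)$, the tightest lower bound is obtained by minimising $a^{2}/p + b^{2}/q$ over the hyperbola $(1-p)(1-q)=\alpha^{2}$. Parameterising $s=1-p$, $t=1-q$ with $st=\alpha^{2}$ and setting Lagrange derivatives to zero gives the stationary point $s=\alpha(b+a\alpha)/(a+b\alpha)$, $t=\alpha(a+b\alpha)/(b+a\alpha)$. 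Substituting $1-s=a(1-\alpha^{2})/(a+b\alpha)$ and $1-t=b(1-\alpha^{2})/(b+a\alpha)$ back and simplifying gives exactly
$$\min\ \left(\frac{a^{2}}{1-s}+\frac{b^{2}}{1-t}\right) \;=\; \frac{a^{2}+2\alpha ab+b^{2}}{1-\alpha^{2}},$$
which is precisely the exponent appearing in Theorem~2.1.

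The principal technical effort sits in establishing Borell's reverse hypercontractive inequality in the first place. The standard route is to prove the two-point ($d=1$) case directly by elementary calculus after the right parameterisation, and then tensorise: the noise semigroup factorises across coordinates, and the two-function $L^{p}$--$L^{q}$ bound tensorises thanks to a reverse H\"older argument. Once the one-dimensional case is in hand, the $d$-dimensional statement and hence Theorem~2.1 follow by the purely algebraic optimisation above; the hard analytic work is front-loaded in the hypercontractive step, and will be the main obstacle if one tries to give a self-contained proof rather than cite the result.
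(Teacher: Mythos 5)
Your derivation is correct: writing the probability as $\langle \1_A, T_{\alpha}\1_B\rangle$, applying Borell's two-function reverse hypercontractive inequality for exponents with $(1-p)(1-q)=\alpha^{2}$, and optimizing $a^{2}/p+b^{2}/q$ over that hyperbola does give exactly the exponent $\tfrac{a^{2}+2\alpha ab+b^{2}}{1-\alpha^{2}}$ (your stationary point and the resulting simplification check out, and in fact one only needs to plug in that particular $(p,q)$, not prove it is the minimizer). The paper itself does not prove this theorem but imports it from O'Donnell's book, and your argument is essentially the proof given in that cited source, so this is the same approach rather than a new one.
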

In the following lemma we convert the lower bound in Theorem \ref{thm:expansionreverse} into a lower bound on the probabilistic CPF.
\begin{lemma} \label{lem:revexphash}
	For every distribution $\ALSH$ over function pairs $h, g \colon \cube{d} \to R$ and $\alpha \in [0,1)$ we have that $\hat{f}(\alpha) \geq \hat{f}(0)^{\frac{1+\alpha}{1-\alpha}}$.
\end{lemma}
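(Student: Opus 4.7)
The plan is to condition on a fixed pair $(h,g)$ in the support of $\ALSH$, establish a pointwise lower bound on the conditional collision probability at correlation $\alpha$ in terms of the conditional collision probability at correlation $0$, and then integrate over $\ALSH$ using Jensen's inequality.

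First, for a fixed pair $(h,g)$ with $h,g \colon \cube{d}\to R$, I would write the conditional collision probability as a sum over the range:
\begin{equation*}
\Pr_{(x,y) \text{ $\alpha$-corr.}}[h(x)=g(y)] = \sum_{r \in R}\Pr[x \in A_r,\, y \in B_r],
\end{equation*}
where $A_r = h^{-1}(r)$ and $B_r = g^{-1}(r)$. Writing $p_r = |A_r|/2^d = \exp(-a_r^2/2)$ and $q_r = |B_r|/2^d = \exp(-b_r^2/2)$ brings the sets into the parameterization required by the Reverse Small-Set Expansion theorem (Theorem~\ref{thm:expansionreverse}), which then gives a per-$r$ lower bound of $\exp(-(a_r^2 + 2\alpha a_r b_r + b_r^2)/(2(1-\alpha^2)))$.

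Second, I would clean up this exponent using the elementary bound $2\alpha a_r b_r \leq \alpha(a_r^2 + b_r^2)$, valid for $\alpha \geq 0$, to obtain
\begin{equation*}
\Pr[x \in A_r,\, y \in B_r] \;\geq\; \exp\!\left(-\tfrac{a_r^2 + b_r^2}{2(1-\alpha)}\right) \;=\; (p_r q_r)^{1/(1-\alpha)}.
\end{equation*}
Since the $0$-correlated collision probability conditioned on $(h,g)$ equals $\sum_r p_r q_r$, and since $1/(1-\alpha) \geq 1$, I can then invoke Lemma~\ref{lem:inequality} with $c = 1/(1-\alpha)$ to conclude
\begin{equation*}
\Pr_{(x,y) \text{ $\alpha$-corr.}}[h(x)=g(y)] \;\geq\; \Bigl(\Pr_{(x,y) \text{ indep.}}[h(x)=g(y)]\Bigr)^{(1+\alpha)/(1-\alpha)}.
\end{equation*}

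Finally, I would take expectation over $(h,g)\sim \ALSH$. The left-hand side averages to $\hat{f}(\alpha)$, and since $z \mapsto z^{(1+\alpha)/(1-\alpha)}$ is convex on $[0,1]$ (the exponent is at least~$1$), Jensen's inequality gives $\E[Z^{(1+\alpha)/(1-\alpha)}] \geq (\E Z)^{(1+\alpha)/(1-\alpha)} = \hat{f}(0)^{(1+\alpha)/(1-\alpha)}$, yielding the claim.

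The only mildly delicate point is keeping the inequalities consistent in direction across the three tools (Reverse SSE, Lemma~\ref{lem:inequality}, and Jensen), since all three are lower bounds but applied to different quantities; the AM-GM step $2a_rb_r \leq a_r^2+b_r^2$ is what makes the exponent collapse to the clean form $(p_r q_r)^{1/(1-\alpha)}$ and is the real engine of the proof. I do not foresee technical obstacles beyond careful bookkeeping.
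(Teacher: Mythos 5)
Your proposal is correct and follows essentially the same route as the paper's proof: condition on a fixed pair $(h,g)$, apply the reverse small-set expansion theorem to each pair of preimages $h^{-1}(r), g^{-1}(r)$, simplify the exponent via $2\alpha a_r b_r \leq \alpha(a_r^2+b_r^2)$, invoke Lemma~\ref{lem:inequality} with $c = 1/(1-\alpha)$ to compare against $\sum_r p_r q_r = \hat{f}_{h,g}(0)$, and finish with Jensen's inequality over $\ALSH$. The bookkeeping of inequality directions you flag is handled exactly as in the paper, so there is nothing to add.
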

\begin{proof}
For a function $h \colon \cube{d} \to R$ define its inverse image $h^{-1} \colon R \to 2^{\cube{d}}$ by $h^{-1}(i) = \{ x \in \cube{d} \mid h(x) = i \}$.
For a pair of functions $(h, g) \in \DSH$ and $i \in R$ we define $a_{h,i}, b_{g,i} \geq 0$ such that
$|h^{-1}(i)|/2^{d} = \exp(-a_{h,i}^{2}/2)$ and $|g^{-1}(i)|/2^{d} = \exp(-b_{g,i}^{2}/2)$.
For fixed $(h, g)$ define $\hat{f}_{h,g}(\alpha) = \Pr_{\text{$(x, y)$ $\alpha$-corr.}}[h(x) = g(y)]$. We obtain a lower bound on $\hat{f}(\alpha)$ as follows:
\begin{align*}
	\hat{f}(\alpha) &= \mathop{\E}_{(h, g) \sim \ALSH} \left[ \sum_{i \in R} \Pr_{\text{$(x, y)$ $\alpha$-corr.}}[h(x) = g(y) = i] \right] \\
                 &\stackrel{(1)}{\geq} \mathop{\E}_{(h, g) \sim \ALSH} \left[ \sum_{i \in R} \exp\left(-\frac{1}{2}\frac{a_{h,i}^{2} + 2\alpha a_{h,i}b_{g,i} + b_{g,i}^{2}}{1 - \alpha^2}\right) \right]  \\
                 &\stackrel{(2)}{\geq} \mathop{\E}_{(h, g) \sim \ALSH} \left[ \sum_{i \in R} \exp\left(-\frac{1}{2}\frac{a_{h,i}^{2} + b_{g,i}^{2}}{1 - \alpha}\right) \right] \\
                 &\stackrel{(3)}{\geq} \mathop{\E}_{(h, g) \sim \ALSH} \hat{f}_{h,g}(0)^\frac{1 + \alpha}{1 - \alpha} 
                 \stackrel{(4)}{\geq} \left( \mathop{\E}_{(h, g) \sim \ALSH} \hat{f}_{h,g}(0) \right)^\frac{1 + \alpha}{1 - \alpha} \\ 
				 &= \hat{f}(0)^\frac{1 + \alpha}{1 - \alpha}. 
\end{align*}
Here, (1) is due to Theorem \ref{thm:expansionreverse}, 
(2) follows from the fact that $(1+\alpha)(a^2 + b^2) \geq a^2 + 2\alpha a b + b^2$, 
(3) follows from Lemma~\ref{lem:inequality} with $c = 1/(1-\alpha)$, 
and (4) follows from a standard application of Jensen's Inequality.
\end{proof}
\section{Upper bounding the collision probability}
We can re-apply the techniques behind Lemma \ref{lem:revexphash} to upper bound the probabilistic CPF. 
This is similar to the well-studied problem of constructing LSH lower bounds and our results match known LSH bounds~\cite{motwani2007,andoni2016tight}, 
showing that the asymmetry afforded by $\ALSH$ does not help us when we wish to maximize the gap in the collision probability between random and $\alpha$-correlated points as measured by the $\rho$-value.
As with Lemma~\ref{lem:revexphash}, the following theorem by O'Donnell~\cite{odonnell2014analysis} is the foundation of our upper bound on the probilistic CPF.
\begin{theorem}[Gen.\ Small-Set Expansion]\label{thm:expansion}
Let $0 \leq \alpha \leq 1$. Let $A, B \subseteq \cube{d}$ have volumes $\exp(-a^{2}/2)$, $\exp(-b^{2}/2)$ and assume $0 \leq \alpha b \leq a \leq b$. Then,
\begin{equation*}
	\Pr_{\substack{(x, y) \\ \alpha \text{-corr.}}}[x \in A, y \in B] \leq \exp\left(-\frac{1}{2}\frac{a^2 - 2\alpha a b + b^2}{1 - \alpha^2}\right).
\end{equation*}
\end{theorem}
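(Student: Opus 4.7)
The plan is to prove Theorem \ref{thm:expansion} via the standard hypercontractivity route from the analysis of Boolean functions, since this is the classical generalized small-set expansion bound. First I would set $f = \mathbf{1}_A$ and $g = \mathbf{1}_B$ and rewrite the collision probability as a bilinear form under the $\alpha$-noise operator $T_\alpha$ (the operator that, applied to $g$, averages $g(y)$ over $y$ drawn $\alpha$-correlated with the input):
\[
\Pr_{\substack{(x,y) \\ \alpha\text{-corr.}}}[x \in A,\, y \in B] \;=\; \E_{x}\bigl[f(x)\,(T_\alpha g)(x)\bigr] \;=\; \langle f, T_\alpha g\rangle.
\]
This reduces the probabilistic statement to estimating an inner product of two indicator functions after one is smoothed by $T_\alpha$.

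Next I would apply H\"older's inequality with dual exponents $1/p + 1/q = 1$, and then the Bonami--Beckner hypercontractive inequality, which gives $\|T_\alpha g\|_q \leq \|g\|_{q'}$ whenever $q - 1 \geq (q'-1)/\alpha^2$. Taking equality $q' = 1 + \alpha^2(q-1)$ and using that $f,g$ are indicators of sets with the stated volumes yields
\[
\langle f, T_\alpha g\rangle \;\leq\; \|f\|_p\,\|g\|_{q'} \;=\; \exp\!\left(-\tfrac{a^2}{2p} - \tfrac{b^2}{2q'}\right).
\]
The problem then collapses to a single-variable optimization: minimize the exponent by choosing $p \in (1,\infty)$, equivalently $u = 1/p \in (0,1)$, with $1/q' = (1-u)/(1-u(1-\alpha^2))$.

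Setting the derivative in $u$ to zero gives $1 - u^*(1-\alpha^2) = \alpha b / a$, and substituting back reproduces exactly $(a^2 - 2\alpha ab + b^2)/(1-\alpha^2)$ as the maximum of $a^2/p + b^2/q'$, which delivers the claimed bound after dividing by two and exponentiating. The hypothesis $0 \leq \alpha b \leq a \leq b$ is precisely what I expect to be the main (though routine) obstacle: I need to verify that $u^* = (a-\alpha b)/(a(1-\alpha^2))$ lies in the interior $(0,1)$, and both inequalities in the hypothesis are used here, one for $u^* \geq 0$ and the other (via $a\alpha \leq b$) for $u^* \leq 1$. Without them the unconstrained optimum would fall outside the valid range of the H\"older exponent, and one would have to settle for a boundary value of $p$ that yields a strictly weaker inequality; thus the role of the assumption is really to certify that the tight hypercontractive tradeoff is achievable.
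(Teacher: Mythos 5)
Your argument is correct, but note that the paper does not prove Theorem \ref{thm:expansion} at all: it is quoted from O'Donnell's book \cite{odonnell2014analysis} and used as a black box in the proof of Lemma \ref{lem:exphash}, so there is no internal proof to compare against. What you propose is essentially the standard textbook derivation of that cited result: writing the probability as $\E_x[\1_A(x)\, T_\alpha \1_B(x)]$, applying H\"older together with the Bonami--Beckner hypercontractive inequality (equivalently, invoking the two-function hypercontractivity theorem), and optimizing the exponents. The optimization is carried out correctly: the stationary point $1 - u^*(1-\alpha^2) = \alpha b/a$ yields
\begin{equation*}
\frac{a^2}{p} + \frac{b^2}{q'} = \frac{a^2 - 2\alpha a b + b^2}{1-\alpha^2},
\end{equation*}
and the hypothesis $0 \leq \alpha b \leq a \leq b$ is exactly what places $u^*$ in the admissible range (with the boundary cases $a = \alpha b$ and $\alpha = 0$ still giving the claimed bound via $p = \infty$ resp.\ $p = 1$). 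One small slip: the hypercontractive condition is written with the inequality reversed --- $\norm{T_\alpha g}_q \leq \norm{g}_{q'}$ holds when $q' - 1 \geq \alpha^2 (q-1)$, not when $q - 1 \geq (q'-1)/\alpha^2$ --- but since you immediately take the equality case $q' = 1 + \alpha^2(q-1)$, which is the admissible and tight choice, the subsequent computation is unaffected.
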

In the proof of the upper bound on the probabilistic CPF we can without loss of generality assume that the family $\ALSH$ satisfies the volume restrictions that $0 \leq \alpha b \leq a \leq b$ from Theorem \ref{thm:expansion} on the parts of $(h, g) \sim \ALSH$. 
The reason is that for $a = \alpha b$ the upper bound becomes $\Pr_{x \in \cube{d}}[x \in B]$ which is an upper bound for every $0 \leq a < \alpha b$ as well.
For completeness we include the proof of the upper bound.
\begin{lemma} \label{lem:exphash}
	For every distribution $\ALSH$ over function pairs $h, g \colon \cube{d} \to R$ and $\alpha \in [0,1)$ we have that $\hat{f}(\alpha) \leq \hat{f}(0)^{\frac{1-\alpha}{1+\alpha}}$.
\end{lemma}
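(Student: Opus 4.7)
The plan is to mirror the structure of the proof of Lemma~\ref{lem:revexphash}, but replacing the reverse small-set expansion theorem with the generalized small-set expansion theorem (Theorem~\ref{thm:expansion}), and using the reverse direction of Lemma~\ref{lem:inequality} together with Jensen's inequality for a concave function. Before starting, I would invoke the volume remark noted just before the lemma: on any part $i \in R$ where the restriction $0 \leq \alpha b_{g,i} \leq a_{h,i} \leq b_{g,i}$ (or its symmetric version) fails, the trivial bound from Theorem~\ref{thm:expansion} at the boundary $a = \alpha b$ still applies, so we may assume WLOG the hypothesis of the theorem holds on every part.

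First I would decompose, as in the proof of Lemma~\ref{lem:revexphash},
\begin{equation*}
  \hat{f}(\alpha) = \mathop{\E}_{(h,g)\sim \ALSH} \sum_{i \in R} \Pr_{\text{$(x,y)$ $\alpha$-corr.}}[h(x) = g(y) = i],
\end{equation*}
and parametrize volumes by $|h^{-1}(i)|/2^d = \exp(-a_{h,i}^{2}/2)$ and $|g^{-1}(i)|/2^d = \exp(-b_{g,i}^{2}/2)$. Applying Theorem~\ref{thm:expansion} part-wise bounds the inner probability by $\exp\!\bigl(-\tfrac{1}{2}\tfrac{a^2 - 2\alpha a b + b^2}{1-\alpha^2}\bigr)$. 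The key elementary inequality in this direction is $a^2 - 2\alpha a b + b^2 \geq (1-\alpha)(a^2+b^2)$, which follows from AM--GM ($2ab \leq a^2+b^2$) and $\alpha \geq 0$; this simplifies the exponent to $-(a^2+b^2)/\bigl(2(1+\alpha)\bigr)$.

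Next, I would rewrite $\exp(-(a_{h,i}^2+b_{g,i}^2)/2) = \bigl(|h^{-1}(i)|/2^d\bigr)\bigl(|g^{-1}(i)|/2^d\bigr)$, which is precisely $p_i q_i$ where $p_i = \Pr[h(x)=i]$ and $q_i = \Pr[g(y)=i]$ for uniform independent $x, y$. Summing over $i$ gives
\begin{equation*}
  \sum_{i \in R} (p_i q_i)^{1/(1+\alpha)} \leq \left(\sum_{i \in R} p_i q_i\right)^{(1-\alpha)/(1+\alpha)} = \hat{f}_{h,g}(0)^{(1-\alpha)/(1+\alpha)},
\end{equation*}
using Lemma~\ref{lem:inequality} with $c = 1/(1+\alpha) \leq 1$, whose conclusion reverses to the direction we need. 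Finally, since $(1-\alpha)/(1+\alpha) \in (0,1]$, the map $x \mapsto x^{(1-\alpha)/(1+\alpha)}$ is concave, so Jensen's inequality yields $\E\bigl[\hat{f}_{h,g}(0)^{(1-\alpha)/(1+\alpha)}\bigr] \leq \hat{f}(0)^{(1-\alpha)/(1+\alpha)}$, completing the bound.

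I do not expect a real obstacle: the only delicate point is the bookkeeping around the volume hypothesis of Theorem~\ref{thm:expansion}, which is already justified in the preamble to the lemma, and the sign reversals in both Lemma~\ref{lem:inequality} and Jensen's inequality (concave versus convex), which align perfectly because $c = 1/(1+\alpha) \leq 1$ is precisely the regime where the reversed Lemma~\ref{lem:inequality} gives an upper bound and where the relevant power function is concave.
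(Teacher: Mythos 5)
Your proof is correct and follows essentially the same route as the paper's: the same part-wise decomposition, Theorem~\ref{thm:expansion} applied with the volume restriction handled as in the preamble, the inequality $(1-\alpha)(a^2+b^2) \leq a^2 - 2\alpha ab + b^2$, the reversed Lemma~\ref{lem:inequality} with $c = 1/(1+\alpha)$, and a final application of Jensen's inequality for the concave power $(1-\alpha)/(1+\alpha)$.
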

\begin{proof}
Using the same notation as in the proof of Lemma \ref{lem:revexphash} we derive the upper bound as follows:
\begin{align*}
	\hat{f}(\alpha) &= \mathop{\E}_{(h, g) \sim \ALSH} \left[ \sum_{i \in R} \Pr_{\text{$(x, y)$ $\alpha$-corr.}}[h(x) = g(y) = i] \right] \\
                 &\stackrel{(1)}{\leq} \mathop{\E}_{(h, g) \sim \ALSH} \left[ \sum_{i \in R} \exp\left(-\frac{1}{2}\frac{a_{h,i}^{2} - 2\alpha a_{h,i}b_{g,i} + b_{g,i}^{2}}{1 - \alpha^2}\right) \right]  \\
                 &\stackrel{(2)}{\leq} \mathop{\E}_{(h, g) \sim \ALSH} \left[ \sum_{i \in R} \exp\left(-\frac{1}{2}\frac{a_{h,i}^{2} + b_{g,i}^{2}}{1 + \alpha}\right) \right] \\
                 &\stackrel{(3)}{\leq} \mathop{\E}_{(h, g) \sim \ALSH} \hat{f}_{h,g}(0)^\frac{1 + \alpha}{1 - \alpha} 
                 \stackrel{(4)}{\leq} \left( \mathop{\E}_{(h, g) \sim \ALSH} \hat{f}_{h,g}(0) \right)^\frac{1 - \alpha}{1 + \alpha} \\ 
				 &= \hat{f}(0)^\frac{1 - \alpha}{1 + \alpha}. 
\end{align*}
Here, (1) is due to Theorem \ref{thm:expansion}, 
(2) follows from the fact that $(1 - \alpha)(a^2 + b^2) \leq a^2 -2\alpha a b + b^2$, 
(3) follows from Lemma~\ref{lem:inequality} with $c = 1/(1+\alpha)$, 
and (4) follows from a standard application of Jensen's Inequality.
\end{proof}
\section{Extension to negative correlation}
We can extend the bounds from Lemma \ref{lem:revexphash} and Lemma \ref{lem:exphash} to negative values of $\alpha$, essentially mirroring the bounds around $\alpha = 0$, 
by showing that an asymmetric LSH scheme $\ALSH$ with a value of $\hat{f}(\alpha)$ that is too low or too high for negative values of $\alpha$ will contradict the bounds we have for positive values of $\alpha$.
The extension of the bounds to negative correlation uses the following lemma to show contradictions:
\begin{lemma} \label{lem:correlation_symmetry}
	Suppose we have an asymmetric LSH $\ALSH'$ for $\cube{d}$ with probabilistic CPF $\hat{f'}$,
	then the we can construct an asymmetric LSH $\ALSH$ for $\cube{d}$ with probabilistic CPF $\hat{f}(\alpha) = \hat{f}'(-\alpha)$.
\end{lemma}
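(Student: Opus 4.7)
The plan is to reduce the claim to a very simple symmetry of Definition~\ref{def:correlation}: if $(x,y)$ is randomly $\alpha$-correlated, then $(x,-y)$ is randomly $(-\alpha)$-correlated. Indeed, with $y'_i = -y_i$ we have $y'_i = -x_i$ with probability $(1+\alpha)/2$ and $y'_i = x_i$ with probability $(1-\alpha)/2$, which by definition means $(x,y')$ is $(-\alpha)$-correlated. So negating the second argument swaps the sign of the correlation parameter while preserving the product distribution structure.

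Using this observation, I would define $\ALSH$ by sampling $(h', g') \sim \ALSH'$ and setting $h(x) = h'(x)$ and $g(y) = g'(-y)$. The resulting pair $(h,g)$ is a function pair on $\cube{d}$, and its probabilistic CPF can be computed directly: for any $\alpha \in [-1,1]$,
\begin{equation*}
\hat{f}(\alpha) \;=\; \Pr_{\substack{(h,g)\sim\ALSH \\ (x,y)\,\alpha\text{-corr.}}}[h(x)=g(y)] \;=\; \Pr_{\substack{(h',g')\sim\ALSH' \\ (x,y)\,\alpha\text{-corr.}}}[h'(x)=g'(-y)] \;=\; \hat{f}'(-\alpha),
\end{equation*}
where the last equality uses that $(x,-y)$ is $(-\alpha)$-correlated whenever $(x,y)$ is $\alpha$-correlated.

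There is no real obstacle here; the only thing to check is that the substitution $y \mapsto -y$ genuinely sends the $\alpha$-correlated distribution to the $(-\alpha)$-correlated distribution, which follows immediately from inspecting Definition~\ref{def:correlation} coordinate by coordinate. With Lemma~\ref{lem:correlation_symmetry} in hand, the negative-$\alpha$ cases of Theorem~\ref{thm:asymmetric_lower} follow by applying Lemmas~\ref{lem:revexphash} and \ref{lem:exphash} to $\ALSH'$ at correlation $|\alpha|$ and translating the bounds on $\hat{f}'(|\alpha|)$ back to bounds on $\hat{f}(-|\alpha|) = \hat{f}'(|\alpha|)$.
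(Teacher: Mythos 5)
Your proposal is correct and matches the paper's own proof: both define $\ALSH$ by sampling $(h',g')\sim\ALSH'$ and setting $h(x)=h'(x)$, $g(y)=g'(-y)$, and both rest on the observation that if $(x,y)$ is randomly $\alpha$-correlated then $(x,-y)$ is randomly $(-\alpha)$-correlated. Your coordinate-wise verification of that symmetry is a fine (slightly more explicit) spelling-out of the same step the paper states without proof.
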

\begin{proof}
	We sample $(h, g) \sim \ALSH$ by sampling $(h', g') \sim \ALSH'$ and setting $h(x) = h'(x)$ and $g(y) = g'(-y)$. 
	The property that $\hat{f}(\alpha) = \hat{f}'(-\alpha)$ follows from the fact that if $(x, y)$ are randomly $\alpha$-correlated then $(x, -y)$ are randomly $-\alpha$-correlated.
\end{proof}
We can now extend the bounds on the probabilistic CPF to prove Theorem \ref{thm:asymmetric_lower}.
For the upper bound, suppose for $\alpha \in [0, 1)$ that there exists a family $\ALSH'$ with $\hat{f}'(-\alpha) > \hat{f}'(0)^{\frac{1-\alpha}{1 + \alpha}}$.
According to Lemma \ref{lem:correlation_symmetry} this implies the existence of a family $\ALSH$ with $\hat{f}(\alpha) > \hat{f}(0)^{\frac{1-\alpha}{1 + \alpha}}$, contradicting Lemma \ref{lem:exphash}.
The argument for extending the lower bound is identical, completing the proof of Theorem \ref{thm:asymmetric_lower}.
\section{Conclusion and open problems}
We have shown upper and lower bounds on the collision probability of asymmetric locality-sensitive hashing schemes.
The bounds are tight up to lower order terms and match an existing symmetric LSH scheme on the unit sphere and its natural asymmetric extension.
An interesting consequence of the upper bound is that asymmetry cannot improve the $\rho$-value for the standard similarity search problem for random instances in the Boolean hypercube. 

The application of asymmetric locality-sensitive hashing to orthogonal vector search seeks a family $\ALSH$ with a CPF $f$ that peaks at $f(0)$ and is decreasing in $|\alpha|$~\cite{vijayanarasimhan2014hashing}.
Currently the best known upper bound on the $\rho$-value for orthogonal vector search is $\rho = \log(f(0))/\log(\max(f(-\alpha), f(\alpha))) = \frac{1 - \alpha^2}{1 + \alpha^2}$ which can be achieved by tensoring the input (squaring the cosine similarity between pairs of points) and applying the standard cross-polytope LSH family, or by combining standard cross-polytope LSH and its negated asymmetric version~\cite{aumuller2017distance} using the same technique as in Lemma \ref{lem:correlation_symmetry}. 
We conjecture that the current upper bound is tight implying that the following bound must hold:
\begin{equation*}
\max(\hat{f}(-\alpha), \hat{f}(\alpha)) \geq \hat{f}(0)^{\frac{1 + \alpha^2}{1 - \alpha^2}}.
\end{equation*}
This bound can be characterized as a two-sided bound whereas Theorem \ref{thm:asymmetric_lower} only provides one-sided upper and lower bounds.
The problem of obtaining a two-sided lower bound is related to the open symmetric Gaussian problem~\cite{odonnell2012open}.
\chapter{Optimal Boolean locality-sensitive hashing} \label{ch:bool}
\sectionquote{The crownless again shall be king}
\begin{theorem} \label{thm:bool_main}
For $0 \leq \beta < \alpha < 1$ the distribution $\LSH$ over Boolean functions $h \colon \cube{d} \to \{-1, 1\}$ that minimizes the expression 
\begin{equation*} 
	\rho_{\alpha, \beta} = \frac{\log(1/\Pr_{\substack{h \sim \LSH \\ (x, y) \text{ $\alpha$-corr.}}}[h(x) = h(y)])}{\log(1/\Pr_{\substack{h \sim \LSH \\ (x, y) \text{ $\beta$-corr.}}}[h(x) = h(y)])}
\end{equation*}
assigns nonzero probability only to members of the set of dictator functions $h(x) = \pm x_i$. 
\end{theorem}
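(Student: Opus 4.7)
The plan is to recast the optimization problem in Fourier-analytic terms and then reduce it to a planar convex-geometry argument. For any Boolean $h \colon \cube{d} \to \{-1, 1\}$ and $\rho \in [-1, 1]$, the collision probability under randomly $\rho$-correlated inputs equals $(1 + \sum_{S \subseteq [d]} \rho^{|S|} \hat h(S)^2)/2$. Averaging over $h \sim \LSH$ and collecting Fourier mass by level, the two collision probabilities become $p_\alpha = (1 + \sum_{k \geq 0} a_k \alpha^k)/2$ and $p_\beta = (1 + \sum_{k \geq 0} a_k \beta^k)/2$, where $a_k := \E_{h \sim \LSH}[\sum_{|S|=k} \hat h(S)^2]$ is nonnegative and $\sum_k a_k = 1$ by Parseval. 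Thus it suffices to show that among all probability distributions $(a_k)_{k \geq 0}$ on $\mathbb{N}_0$, the quantity $\rho_{\alpha,\beta}$ is minimized when $a_1 = 1$, which corresponds to uniform mixtures of dictator functions.

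The next step is to pass to a planar convex region. The pair $(S_\beta, S_\alpha) := (\sum a_k \beta^k, \sum a_k \alpha^k)$ ranges over the convex hull of the discrete curve $\{(\beta^k, \alpha^k) : k \geq 0\} \subset [0,1]^2$, and a short manipulation shows that the sublevel set $\{\rho_{\alpha,\beta} \geq \rho^*\}$, with $\rho^* := \log(2/(1+\alpha))/\log(2/(1+\beta))$ denoting the dictator value, is exactly $\{1 + S_\alpha \leq 2^{1-\rho^*}(1 + S_\beta)^{\rho^*}\}$. Because $\rho^* \in (0,1)$ the right-hand side is a concave function of $S_\beta$, so this region is itself convex. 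Hence it suffices to verify that every extreme point $(\beta^k, \alpha^k)$ lies in the region, i.e.\ that $1 + \alpha^k \leq 2^{1-\rho^*}(1 + \beta^k)^{\rho^*}$ for all integers $k \geq 0$. Equality holds at $k = 0$ (both sides equal $2$) and at $k = 1$ (by the definition of $\rho^*$), so everything hinges on the inequality for $k \geq 2$.

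I expect this final one-variable inequality to be the main obstacle. After setting $H(\rho) := (1+\rho)\log(2/(1+\rho))$ and rearranging, it is equivalent to the clean statement $H(\rho^k) \geq k\rho^{k-1} H(\rho)$ for all $\rho \in (0,1)$ and integer $k \geq 1$, or, equivalently, to monotonicity in $\rho$ of $R(\rho, k) := \log(2/(1+\rho^k))/\log(2/(1+\rho))$ for each fixed $k$. I would try to establish this by a direct calculus argument: after the substitution $v = \rho^k$ it reduces to showing that the auxiliary function $\psi(v) := (-v \log v)/((1+v)\log(2/(1+v)))$ is monotonically increasing on $(0,1)$, which in turn amounts to a sign determination for a single elementary expression in $v$; a Taylor expansion near $v = 1$ (where $\psi$ has a double contact with its limit) confirms the expected sign as a sanity check. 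As a backup I would consider either an inductive argument on $k$ combined with a multiplicative estimate for $H$, or a reformulation via the softplus-type identity $\log(2/(1+e^{-s})) = s/2 - \log\cosh(s/2)$, whose concavity may give a cleaner proof. Once the vertex inequality is in hand, the convex-hull containment immediately concludes the proof that no Boolean distribution can beat the dictator $\rho$-value.
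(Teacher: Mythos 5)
Your proposal is correct in structure and, although it ultimately meets the paper's argument at the same core inequality, it reaches it by a genuinely different route. The paper proceeds in three optimization steps: it first shows the optimal weight at level zero is $w_0^*=0$ (Lemma~\ref{lem:w_0}, a separate derivative computation), then relaxes to continuous exponents and uses a weight-exchange argument (Lemma~\ref{lem:gamma}) to reduce to a single monomial $s(x)=x^{\gamma}$, and finally shows $\gamma \mapsto \log\frac{1+\alpha^{\gamma}}{2}\big/\log\frac{1+\beta^{\gamma}}{2}$ is increasing (Lemma~\ref{lem:derivative_gamma}). You instead note that $\rho_{\alpha,\beta}\ge\rho^*$ is equivalent to $(S_\beta,S_\alpha)$ lying under the concave curve $S_\beta\mapsto 2^{1-\rho^*}(1+S_\beta)^{\rho^*}-1$, so by convexity it suffices to check the generating points $(\beta^k,\alpha^k)$; this absorbs the paper's $w_0^*=0$ step (the vertex $(1,1)$ simply sits on the boundary) and replaces the exchange argument entirely, while delivering the uniform lower bound $\rho_{\alpha,\beta}\ge\rho^*$ for every mixture rather than locating the optimum of a relaxed problem. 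Your remaining ``main obstacle'' is not actually open: after the substitution $v=x^{\gamma}$, the paper's auxiliary function $g(x)$ in the proof of Lemma~\ref{lem:derivative_gamma} equals $\gamma/\psi(x^{\gamma})$, so monotonicity of $\psi(v)=\frac{-v\log v}{(1+v)\log(2/(1+v))}$ on $(0,1)$ is exactly what that lemma establishes, and it does so with two applications of $\log z\le z-1$ (Lemma~\ref{lem:log_inequality}); your planned direct calculus attack goes through in the same way.

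Two small points you should make explicit to match the theorem as stated. First, from $a_1=1$, i.e.\ $\E_{h\sim\LSH}[W^1[h]]=1$ with $W^1[h]\le 1$, you get $W^1[h]=1$ almost surely, and you still need the observation that a Boolean function with all Fourier weight at level one is a dictator $h(x)=\pm x_i$ (the paper's Lemma~\ref{lem:dictator}); note also that any mixture of dictators, not only the uniform one, attains $\rho^*$. Second, the theorem claims every minimizer is supported only on dictators, so you need strictness: the boundary curve is strictly concave and the vertex inequality for $k\ge 2$ is strict (equivalently $\psi$ is strictly increasing), which shows that any distribution placing positive mass at level $0$ or at levels $\ge 2$ lands strictly inside the region and hence has $\rho_{\alpha,\beta}>\rho^*$. (For distributions with infinite Fourier-level support, $(S_\beta,S_\alpha)$ lies in the closed convex hull of the vertices, which is harmless since your region is closed.) With these additions your argument is complete and, in my view, cleaner than the paper's three-lemma reduction.
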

\section{Introduction}
We will be studying Boolean functions, i.e., functions that for a positive integer $d$ can be written in the form
\begin{equation*}
	h \colon \{-1, 1\}^d \to \{-1, 1\}.
\end{equation*}
We are concerned with the behavior of such Boolean functions on input pairs $x, y \in \cube{d}$ that are randomly generated.
\begin{definition}\label{bool:def:correlation}
	For $-1 \leq \alpha \leq 1$ and $x \in \cube{d}$ we let $N_{\alpha}(x)$ denote the distribution over $\cube{d}$ where each component of $y \sim N_{\alpha}(x)$ is independently distributed according to 
	\begin{equation*}
		y_i =
		\begin{cases}
			x_i & \text{with probability } \frac{1 + \alpha}{2}, \\
			-x_i & \text{with probability } \frac{1 - \alpha}{2}. 
		\end{cases}
	\end{equation*}
	We say that $(x, y)$ is randomly $\alpha$-correlated if $x$ is uniformly distributed over $\cube{d}$ and $y \sim N_{\alpha}(x)$.
\end{definition}
Let $\LSH$ denote a distribution over functions $h \colon \{-1, 1\}^d \to R$ where $R$ is a finite set and define 
\begin{equation*}
	p_\alpha = \Pr_{\substack{h \sim \LSH \\ (x, y) \text{ $\alpha$-corr.}}}[h(x) = h(y)].
\end{equation*}
For $0 \leq \beta < \alpha \leq 1$ we wish to characerize the distributions that minimize the expression
\begin{equation} \label{eq:rho}
	\rho_{\alpha, \beta} = \frac{\log(1/p_\alpha)}{\log(1/p_\beta)} 
\end{equation}
when we restrict $\LSH$ to be a distribution over Boolean functions $h \colon \cube{d} \to \{-1, 1\}$.
The expression for $\rho_{\alpha, \beta}$ in equation \eqref{eq:rho} is a well-known quantity in the study of approximate near neighbor search governing the query time and space usage of solutions based on locality-sensitive hashing~\cite{indyk1998}.
\section{Related work}
Indyk and Motwani~\cite{indyk1998} introduced the uniform distribution over the set of dictator functions as a family of locality-sensitive hash functions for the Boolean hypercube.
O'Donnell et al.~\cite{odonnell2014optimal} showed that for general families $\LSH$ it must hold that $\rho_{\alpha, \beta} \geq \log(1/\alpha) / \log(1/\beta)$. 
This matches the upper bound of Indyk and Motwani~\cite{indyk1998} when $\alpha, \beta$ approach $1$. 
Another line of work\cite{panigrahy2008geometric, andoni2016tight} using hypercontractive inequalities showed that $\rho_{\alpha, 0} \geq (1-\alpha)/(1+\alpha)$, 
matching the upper bound of Andoni et al.~\cite{andoni2015optimal}.

The question of finding lower bounds for $\rho_{\alpha, \beta}$ for every choice of $0 \leq \beta < \alpha \leq 1$ is still open. 
In this note we answer the question for distributions over \emph{Boolean} functions, showing that the upper bound of Indyk and Motwani is optimal.
The resulting $\rho$-value is given by
\begin{equation*} 
	\rho_{\alpha, \beta} = \frac{\log((1 + \alpha)/2)}{\log((1 + \beta)/2)}. 
\end{equation*}
\section{Preliminaries}
We will be using tools from the Fourier analysis of Boolean functions to find the minimum of $\rho_{\alpha, \beta}$.
For a more detailed overview we refer to the book by O'Donnell~\cite{odonnell2014analysis}.
We will be using the fact that Boolean functions can be uniquely expressed as multilinear polynomials:
\begin{theorem}
	Every function $f \colon \cube{d} \to \mathbb{R}$ can be uniquely expressed as a multilinear polynomial
	\begin{equation*}
		f(x) = \sum_{S \subseteq [d]}\hat{f}(S)x^S
	\end{equation*}
	where $\hat{f}(S) \in \mathbb{R}$ and $x^S = \prod_{i \in S}x_i$. 
\end{theorem}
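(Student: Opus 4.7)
The plan is to parametrize any distribution $\LSH$ over Boolean functions by its expected Fourier spectrum
\begin{equation*}
\gamma_k \;=\; \E_{h\sim\LSH} \sum_{|S|=k} \hat{h}(S)^2, \qquad k=0,1,\dots,d,
\end{equation*}
and reduce the whole problem to a single-variable inequality. Using the noise-stability identity $\E_{(x,y)\text{ $t$-corr.}}[h(x)h(y)] = \sum_S t^{|S|}\hat{h}(S)^2$ together with Parseval's identity $\sum_S\hat{h}(S)^2=1$ (which holds because $h(x)^2=1$), the collision probability of $\LSH$ at correlation $t$ is
\begin{equation*}
p_t \;=\; \tfrac{1}{2}\Bigl(1+\sum_k \gamma_k\, t^k\Bigr) \;=\; \sum_k \gamma_k\, p_t^{(k)}, \qquad p_t^{(k)} := \tfrac{1+t^k}{2}.
\end{equation*}
Thus $p_\alpha$ and $p_\beta$ are convex combinations (with common weights $\gamma_k \ge 0$, $\sum_k\gamma_k=1$) of the ``pure level-$k$'' collision probabilities $p_t^{(k)}$, and dictators correspond exactly to $\gamma_1=1$.

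Let $c := \log(2/(1+\alpha))/\log(2/(1+\beta)) \in (0,1)$ denote the dictator value of $\rho_{\alpha,\beta}$; it will suffice to show $p_\alpha \le p_\beta^c$ for every admissible $(\gamma_k)$. The main technical lemma I plan to establish is
\begin{equation*}
p_\alpha^{(k)} \;\le\; \bigl(p_\beta^{(k)}\bigr)^{c} \qquad\text{for every integer } k\ge 1,
\end{equation*}
with strict inequality for $k\ge 2$. Equivalently, the function $f_k(t):=\log(2/(1+t^k))/\log(2/(1+t))$ is non-decreasing on $(0,1)$, strictly so for $k\ge 2$. The proof plan for this lemma uses the integral representation $\log(2/(1+t)) = \int_t^1 ds/(1+s)$; a change of variables $u=s^k$ recasts ``$f_k$ non-decreasing'' as monotonicity of the auxiliary function $\mu(s) := s^{k-1}(1+s)/(1+s^k)$ on $[t,1]$. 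A direct computation yields $s(1+s)(1+s^k)\cdot(\log\mu)'(s) = (k-1)+ks-s^k$, and the right-hand side is nonnegative on $[0,1]$ since its derivative $k(1-s^{k-1})$ is nonnegative and its value at $s=0$ is $k-1\ge 0$. This calculus reduction is where the main work lies, and I expect it to be the principal obstacle; everything else follows by convexity arguments.

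Given the single-level inequality, the overall bound follows from Jensen's inequality applied to the concave map $x\mapsto x^c$ (concave since $c\in(0,1)$):
\begin{equation*}
p_\alpha \;=\; \sum_k \gamma_k\, p_\alpha^{(k)} \;\le\; \sum_k \gamma_k\bigl(p_\beta^{(k)}\bigr)^c \;\le\; \Bigl(\sum_k \gamma_k\, p_\beta^{(k)}\Bigr)^{\!c} \;=\; p_\beta^c,
\end{equation*}
so $\rho_{\alpha,\beta}\ge c$, with equality achieved by the dictator distribution. For the claimed characterization of minimizers, I trace equality through both steps. The first inequality is strict for $k\ge 2$, so equality forces $\gamma_k=0$ for all $k\ge 2$. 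Jensen's inequality is strict unless all $p_\beta^{(k)}$ with $\gamma_k>0$ coincide; since $p_\beta^{(0)}=1\neq(1+\beta)/2=p_\beta^{(1)}$ (using $\beta<1$), this forces $\gamma_0=0$. Hence $\gamma_1=1$, so $\LSH$-almost-every $h$ has $\hat{h}(S)=0$ for every $|S|\neq 1$, giving $h(x)=\sum_i \hat{h}(\{i\})\,x_i$. Flipping a single bit $x_j\to -x_j$ changes $h$ by $2\hat{h}(\{j\})$, which must lie in $\{0,\pm 2\}$ for $h$ to remain $\pm 1$-valued; combined with $\sum_i \hat{h}(\{i\})^2=1$, this forces exactly one coefficient to equal $\pm 1$ and the rest to vanish, so $h=\pm x_i$ for some $i$, a dictator function.
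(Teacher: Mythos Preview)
Your proposal does not address the stated theorem (the Fourier expansion of real-valued functions on $\cube{d}$), which the paper simply cites as standard without proof. You have instead written a proof of the chapter's main result, Theorem~\ref{thm:bool_main}, that dictator functions minimize $\rho_{\alpha,\beta}$ among distributions over Boolean hash functions.

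As a proof of \emph{that} theorem your argument is correct and takes a genuinely different route from the paper. The paper proceeds in three stages: first show $w_0^*=0$ by a derivative computation (Lemma~\ref{lem:w_0}); then relax to continuous exponents and show via a weight-rearrangement argument that the optimum over convex combinations is attained at a pure power $s(x)=x^\gamma$ (Lemma~\ref{lem:gamma}); finally show $\rho(\gamma)$ is increasing in $\gamma\ge 1$ (Lemma~\ref{lem:derivative_gamma}). You collapse the first two stages into a single Jensen step for the concave map $x\mapsto x^c$, reducing everything to the per-level inequality $p_\alpha^{(k)}\le (p_\beta^{(k)})^c$, which is equivalent to the paper's third stage. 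Your calculus for that inequality---monotonicity of $\mu(s)=s^{k-1}(1+s)/(1+s^k)$ via an integral representation and the one-line verification that $(k-1)+ks-s^k\ge 0$ on $[0,1]$---is also cleaner than the paper's direct manipulation of $\partial\rho/\partial\gamma$, which requires two successive applications of $\log z<z-1$. What the paper's longer route buys is an intermediate structural statement (Lemma~\ref{lem:gamma}) of some independent interest; what your route buys is brevity and a more transparent reason why mixing Fourier levels can only hurt. One small gap in your equality analysis: the degenerate case $\gamma_0=1$ also satisfies equality in both of your displayed inequalities (trivially, since then $p_\alpha=p_\beta=1$); you should observe separately that this gives $\rho$ undefined (or equal to $1$ by convention), hence not a minimizer.
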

For $S \subseteq [d]$ we refer to $\hat{f}(S)$ as the Fourier coefficient of $f$ on $S$. 
The two following Theorems define an inner product between Boolean function and shows how it relates to their Fourier coefficents.  
\begin{theorem}[Plancherel's Theorem]
	For any $f, g \colon \cube{d} \to \mathbb{R}$
	\begin{equation*}
		\ip{f}{g} = \E_{x \sim \cube{d}}[f(x)g(x)] = \sum_{S \subseteq [d]}\hat{f}(S)\hat{g}(S).
	\end{equation*}
\end{theorem}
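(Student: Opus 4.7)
The plan is to prove Plancherel's identity by establishing that the characters $\chi_S(x) = x^S = \prod_{i \in S} x_i$ form an orthonormal basis of the space of real-valued functions on $\cube{d}$ with respect to the inner product $\ip{f}{g} = \E_{x \sim \cube{d}}[f(x) g(x)]$, and then expanding $f$ and $g$ in this basis. The uniqueness part of the preceding multilinear representation theorem guarantees that the Fourier coefficients $\hat{f}(S)$ and $\hat{g}(S)$ appearing in the sum are precisely the coefficients obtained in this basis expansion, so the identity will follow from bilinearity of the inner product.

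First I would verify orthonormality of the characters. For any $U \subseteq [d]$, using independence of the coordinates $x_i$ under the uniform distribution on $\cube{d}$, I would write $\E[\chi_U(x)] = \prod_{i \in U} \E[x_i]$. Since $\E[x_i] = 0$ for $x_i$ uniform on $\{-1, 1\}$, this product vanishes unless $U = \emptyset$, in which case it equals $1$. Now for $S, T \subseteq [d]$, note that $\chi_S(x)\chi_T(x) = \prod_{i \in S} x_i \prod_{i \in T} x_i = \prod_{i \in S \triangle T} x_i = \chi_{S \triangle T}(x)$, where the second equality uses that $x_i^2 = 1$ for $x_i \in \{-1, 1\}$. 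Combining these observations gives
\begin{equation*}
\ip{\chi_S}{\chi_T} = \E[\chi_{S \triangle T}(x)] = \1[S = T],
\end{equation*}
which is the desired orthonormality.

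Next I would write $f = \sum_{S \subseteq [d]} \hat{f}(S) \chi_S$ and $g = \sum_{T \subseteq [d]} \hat{g}(T) \chi_T$ using the multilinear representation, and expand by bilinearity:
\begin{equation*}
\ip{f}{g} = \sum_{S, T \subseteq [d]} \hat{f}(S)\hat{g}(T) \ip{\chi_S}{\chi_T} = \sum_{S \subseteq [d]} \hat{f}(S)\hat{g}(S),
\end{equation*}
where the second equality applies the orthonormality established above to kill all cross terms.

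The only subtle point will be making sure the Fourier coefficients $\hat{f}(S)$ in the statement of Plancherel match the expansion coefficients I am using. This is immediate from the uniqueness clause of the multilinear representation theorem cited just before the statement, so no real obstacle arises; the proof is essentially a clean application of orthonormality once the character product identity $\chi_S \chi_T = \chi_{S \triangle T}$ and the mean-zero property $\E[x_i] = 0$ are in hand.
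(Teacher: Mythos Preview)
Your proof is correct and is the standard argument. The paper does not actually prove this statement: it is quoted as a preliminary fact from O'Donnell's book \cite{odonnell2014analysis}, so there is no paper proof to compare against, but your orthonormality-of-characters approach is exactly the textbook derivation one would find there.
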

The concept of Fourier weight will be useful when characterizing the how Boolean functions behave on noisy inputs:
\begin{definition} \label{def:fourier_weight}
	For $f \colon \cube{d} \to R$ define the Fourier weight of $f$ at degree $k \in [d]$ by
	\begin{equation*}
		W^{k}[f] = \sum_{\substack{S \subseteq [d] \\ |S| = k}}\hat{f}(S)^2.
	\end{equation*}
\end{definition}
Consider Plancherel's Theorem with $f = g$ and where $f$ is Boolean-valued.
In this case we get that the sum of the squared Fourier coefficients of $f$ equals 1. 
This result is known as Parseval's Theorem and we will make use of it to determine where to place to Fourier weight of $f$ in order to minimize $\rho$.  
\begin{theorem}[Parseval's Theorem]
	For any $f \colon \cube{d} \to \{-1, 1\}$ 
	\begin{equation*}
		\ip{f}{f} = \E_{x \sim \cube{d}}[f(x)^2] = \sum_{S \subseteq [d]}\hat{f}(S)^2 = \sum_{i = 0}^{d}W^{i}[f] = 1.
	\end{equation*}
\end{theorem}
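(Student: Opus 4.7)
The plan is to translate the question into Fourier-analytic terms and then solve a small two-dimensional optimization. Writing $\Pr[h(x)=h(y)] = (1+\E[h(x)h(y)])/2$ and using the noise-stability identity $\E[h(x)h(y)] = \sum_{k=0}^d \alpha^k W^k[h]$ for $\alpha$-correlated $(x,y)$ and Boolean $h$, then averaging over $h \sim \LSH$, I obtain $p_\gamma = (1+\sum_k \gamma^k w_k)/2$ for $\gamma \in \{\alpha,\beta\}$, where $w_k = \E_{h\sim\LSH}[W^k[h]]$. Parseval's theorem guarantees $\sum_k w_k = 1$ with $w_k \geq 0$, so $(w_0,\ldots,w_d)$ is a probability distribution on $\{0,\ldots,d\}$. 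A dictator $h(x) = \pm x_i$ satisfies $W^1[h]=1$ and $W^k[h]=0$ for $k \neq 1$, so the theorem's conclusion corresponds exactly to showing that $w = e_1$ uniquely minimizes the resulting expression.

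Setting $u(t) = \log(2/(1+t))$ and $s_\gamma(w) = \sum_k \gamma^k w_k$, the problem becomes minimizing $\rho(w) = u(s_\alpha(w))/u(s_\beta(w))$ over the simplex. Since $\rho$ depends on $w$ only through $(s_\beta,s_\alpha)$, I reduce to a two-dimensional optimization over the convex hull $K$ of $\{(\beta^k,\alpha^k) : k=0,\ldots,d\}$, with $w=e_1$ corresponding to $(\beta,\alpha)$. I then argue $\rho$ is quasi-concave on $K$: since $u$ is decreasing, the superlevel set $\{\rho \geq c\}$ rewrites as $\{s_\alpha \leq 2^{1-c}(1+s_\beta)^c - 1\}$, and the right-hand side is concave in $s_\beta$ for $c \in (0,1]$. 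On $K$ we have $s_\alpha \geq s_\beta$ since $\alpha^k \geq \beta^k$ for every $k$, so $\rho \leq 1$ everywhere and all relevant superlevel sets are convex. Consequently the minimum of $\rho$ restricted to any chord of $K$ is attained at an endpoint, so the global minimum lies at an extreme point of $K$. Because the points $(\beta^k,\alpha^k)$ all lie on the concave curve $y = x^{\log\alpha/\log\beta}$ (or trivially on the axes when $\beta=0$), they are all extreme points of $K$; excluding $(1,1)$, which corresponds to constant functions and at which $\rho$ is undefined, it suffices to compare the finite vertex values $\rho_k := u(\alpha^k)/u(\beta^k)$ for $k = 1,\ldots,d$.

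The main obstacle is the scalar inequality $\rho_1 \leq \rho_k$ for integer $k \geq 2$, which is equivalent to monotonicity of $x \mapsto u(x^k)/u(x)$ on $(0,1)$. After differentiating and simplifying, this reduces to showing the auxiliary function $\phi(y) := y\log(1/y) / [(1+y)\log(2/(1+y))]$ is non-decreasing on $(0,1)$, or equivalently that $\log((1+y)/2)(\log y + 1 + y) \geq y \log y$ for all $y \in (0,1)$. I plan to establish this by splitting at the unique root $y_0 \in (0,1)$ of $\log y + 1 + y = 0$: for $y \leq y_0$ the inequality is immediate from the signs of the three factors, while for $y \in [y_0,1]$ both sides vanish at $y=1$ and a second-order Taylor expansion shows the difference is of order $(1-y)^2$ near $1$, after which ruling out interior sign changes of the difference closes the argument. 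Strictness of the monotonicity of $\phi$, combined with strict decrease of $\rho$ along every non-degenerate chord of $K$ that terminates at $(\beta,\alpha)$, then yields the uniqueness statement that only distributions $\LSH$ supported entirely on dictator functions attain the minimum. A minor boundary subtlety is that $\rho$ is undefined at $(1,1)$, but approaching $(1,1)$ along any chord from $(\beta^k,\alpha^k)$ gives a directional limit $(1-\alpha^k)/(1-\beta^k) > \rho_1$, so the infimum is genuinely attained at $(\beta,\alpha)$ as claimed.
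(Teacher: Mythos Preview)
Your proposal does not prove the stated theorem. The statement you were given is Parseval's Theorem, a standard preliminary fact that the paper states without proof. Your write-up is instead a proof sketch for Theorem~\ref{thm:bool_main} (that distributions supported on dictator functions uniquely minimize $\rho_{\alpha,\beta}$ among Boolean LSH families); you even invoke Parseval's Theorem as an ingredient rather than deriving it as a conclusion. As a proof of Parseval's Theorem, the proposal is simply off-target.

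If your intent was Theorem~\ref{thm:bool_main}, then your route is genuinely different from the paper's. The paper works directly with the weight vector $(w_0,\dots,w_d)$ and proceeds in three analytic steps: (i) Lemma~\ref{lem:w_0} eliminates $w_0$ via a derivative computation; (ii) Lemma~\ref{lem:gamma} uses a weight-shifting argument to show that, for fixed $s(\beta)=\beta^\gamma$, the monomial $s(x)=x^\gamma$ maximizes $s(\alpha)$; (iii) Lemma~\ref{lem:derivative_gamma} shows $\rho(\gamma)$ is increasing in $\gamma\geq 1$. You instead reduce to a two-dimensional optimization over the convex hull $K$ of $\{(\beta^k,\alpha^k)\}$, use quasi-concavity of $\rho$ on $K$ to reduce to extreme points, and then compare the vertex values $\rho_k$. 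Both approaches ultimately hinge on essentially the same scalar monotonicity inequality; your geometric reduction is more conceptual and cleanly replaces the paper's ad hoc weight-shifting step, while the paper's approach needs less convexity machinery. One caveat: your final step (``ruling out interior sign changes \ldots\ closes the argument'') is not yet a proof, and the uniqueness claim for non-vertex points deserves an explicit sentence (strict quasi-concavity, or strict concavity of the boundary curve, suffices).
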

In order to study the behavior of Boolean functions under noise we introduce the noise operator $T_\alpha$.
\begin{definition}
	For $\alpha \in [-1,1]$ the noise operator with parameter $\alpha$ is the linear operator $T_{\alpha}$ on functions $f \colon \cube{d} \to \mathbb{R}$ defined by 
	\begin{equation*}
		T_{\alpha}f(x) = \E_{y \sim N_{\alpha}(x)}[f(y)].
	\end{equation*}
\end{definition}
The Fourier expansion of $T_{\alpha}f(x)$ is given by $\sum_{S \subseteq [d]}\alpha^{|S|}\hat{f}(S)x^S$ and it follows from Plancherel's Theorem that
\begin{align} 	
	\ip{f}{T_\alpha g} &= \E_{x \sim \cube{d}}[f(x) \E_{y \sim N_{\alpha}(x)}[g(y)]] \nonumber \\
	&= \E_{\substack{(x, y) \text{ $\alpha$-corr.}}}[f(x)g(y)] = \sum_{S \subseteq [d]}\alpha^{|S|}\hat{f}(S)\hat{g}(S). \label{eq:noise_plancherel}
\end{align}
In the analysis of our problem the following inequality will be used several times.
For the remainder of this Chapter we will use $\log x$ to denote the natural logarithm of $x$.
\begin{lemma}\label{lem:log_inequality}
	For $x > 0$ we have $\log x \leq x - 1$ with equality if and only if $x = 1$.
\end{lemma}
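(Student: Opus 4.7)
The plan is to prove this standard inequality by a direct calculus argument on the function $f(x) = x - 1 - \log x$ for $x > 0$. First I would note that $f$ is differentiable on $(0, \infty)$ with derivative $f'(x) = 1 - 1/x$, so $f'(x) < 0$ for $x \in (0,1)$, $f'(1) = 0$, and $f'(x) > 0$ for $x > 1$. Hence $f$ is strictly decreasing on $(0,1]$ and strictly increasing on $[1, \infty)$, so $f$ attains its unique global minimum at $x = 1$, where $f(1) = 1 - 1 - \log 1 = 0$.

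It follows immediately that $f(x) \geq 0$ for all $x > 0$, i.e., $\log x \leq x - 1$, with equality iff $x = 1$ by the strict monotonicity on either side of $1$. An equally short alternative, if a calculus-free presentation is preferred, is to use concavity of $\log$: the tangent line to $\log$ at $x = 1$ is $y = x - 1$, and by concavity any concave function lies below each of its tangent lines, with equality only at the point of tangency. Either route is essentially one line once the setup is written down, so there is no real obstacle; the only choice is stylistic (calculus vs.\ concavity), and I would pick the $f(x) = x - 1 - \log x$ derivative argument for its self-containedness.
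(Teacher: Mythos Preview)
Your argument is correct; the paper simply states this standard inequality without proof, so your self-contained derivative argument on $f(x)=x-1-\log x$ is entirely appropriate and fills in what the paper omits.
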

\section{Bit-sampling is optimal}
Our approach will be to minimize $\rho_{\alpha, \beta}$ subject to the constraint that members of $\LSH$ are Boolean functions $h \colon \cube{d} \to \{-1, 1\}$.
We begin by making some observations to simplify the problem.
For $h \sim \LSH$ we can directly relate the noise-sensitivity under random $\alpha$-correlated inputs to the collision probability.
\begin{align*}
	\E_{\substack{h \sim \LSH \\ (x, y) \text{ $\alpha$-corr.}}}[h(x)h(y)] 
	&= \Pr_{\substack{h \sim \LSH \\ (x, y) \text{ $\alpha$-corr.}}}[h(x) = h(y)] - \Pr_{\substack{h \sim \LSH \\ (x, y) \text{ $\alpha$-corr.}}}[h(x) \neq h(y)] \\
	&= p_\alpha - (1 - p_\alpha) \\
	&= 2p_\alpha - 1.
\end{align*}
Using Equation \eqref{eq:noise_plancherel} we can write $p_\alpha$ as follows:
\begin{equation*}
p_\alpha = (1 + \E_{\substack{h \sim \LSH \\ (x, y) \text{ $\alpha$-corr.}}}[h(x)h(y)])/2 = (1 + \sum_{i = 0}^{d}\alpha^{i} w_i)/2 
\end{equation*}
where we use $w_i$ to denote the expected Fourier weight of $h \sim \LSH$ at degree $i$ defined by $w_i = \E_{h \sim \LSH} \sum_{i = 0}^{d}W^{i}[h]$.
From Plancherel's Theorem we have that $\sum_{i = 0}^{d}w_i = 1$. 
We will now consider how to set $w_{0}, w_{1}, \dots, w_d$ to minimize the expression
\begin{equation*}
	\rho_{\alpha, \beta} = \frac{\log((1 + \sum_{i = 0}^{d}\alpha^{i} w_i)/2)}{\log((1 + \sum_{i = 0}^{d}\beta^{i} w_i)/2)}. 
\end{equation*}
An optimal solution $w^{*}_{0}, \dots, w^{*}_d$ for this problem will yield an optimal solution to the original problem,
provided there actually exists a Boolean-valued function satisfying the weight assignment.
We will show that the assignment $w^{*}_{1} = 1$ and $w^{*}_i = 0$ for $i \neq 1$ minimizes $\rho_{\alpha, \beta}$.
The distribution $\LSH$ therefore only assigns positive probability to functions $h$ that have all their Fourier weight concentrated at degree $1$. 
It turns out that a Boolean function satisfies this weight assignment if and only if it is a dictator function. 
Lemma \ref{lem:dictator} is well-known and is the answer to exercise 1.19 in~\cite{odonnell2014analysis}. 
We include the proof for completeness.
\begin{lemma} \label{lem:dictator}
	Let $f \colon \{-1,1\}^d \to \{-1, 1\}$ and suppose that $W^{1}[f] = 1$, then $f(x) = \pm x_i$.
\end{lemma}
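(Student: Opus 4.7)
The plan is to combine Parseval's theorem with the constraint $f(x)^2 = 1$ (coming from $f$ being Boolean-valued) to force $f$ to be a single coordinate projection up to sign. First I would observe that $W^1[f] = 1$ together with Parseval's theorem yields $\sum_{i=0}^{d} W^i[f] = 1$, so $W^k[f] = 0$ for every $k \neq 1$. This means all Fourier coefficients $\hat{f}(S)$ with $|S| \neq 1$ vanish, so the Fourier expansion collapses to $f(x) = \sum_{i=1}^{d} c_i x_i$ where $c_i := \hat{f}(\{i\})$, and by Parseval $\sum_i c_i^2 = 1$.

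Next I would exploit that $f$ takes values in $\{-1,1\}$, so $f(x)^2 = 1$ identically on $\cube{d}$. Expanding $\bigl(\sum_i c_i x_i\bigr)^2$ and using $x_i^2 = 1$ gives
\begin{equation*}
1 = \sum_{i} c_i^2 + 2 \sum_{i < j} c_i c_j\, x_i x_j = 1 + 2 \sum_{i<j} c_i c_j\, x_i x_j
\end{equation*}
for every $x \in \cube{d}$. Since the monomials $\{x_i x_j\}_{i<j}$ are linearly independent as functions on the cube (they are distinct Fourier characters), this forces $c_i c_j = 0$ for every pair $i < j$. Hence at most one coefficient $c_i$ is nonzero, and combined with $\sum_i c_i^2 = 1$ we get $c_i \in \{-1, +1\}$ for exactly one index $i$, yielding $f(x) = \pm x_i$.

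The argument is short and the only mildly delicate point is the vanishing of the cross terms, which could equivalently be phrased directly in Fourier terms: the identity $f^2 \equiv 1$ means all Fourier coefficients of the constant function $1$ other than $\hat{1}(\emptyset)$ are zero, and computing $\widehat{f^2}(\{i,j\}) = 2 c_i c_j$ via the Fourier expansion of $f$ immediately gives $c_i c_j = 0$. No step presents a real obstacle.
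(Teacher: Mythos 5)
Your proof is correct. It begins exactly as the paper's does: Parseval forces all Fourier weight onto degree one, so $f(x) = \sum_i c_i x_i$ with $\sum_i c_i^2 = 1$. Where you diverge is in how the Boolean constraint is cashed in. The paper picks an index $j$ with $\hat{f}_j \neq 0$, fixes the other $d-1$ coordinates, and notes that flipping $x_j$ changes the value of $f$ by $2\hat{f}_j$; since both values lie in $\{-1,1\}$ and differ, this forces $\hat{f}_j = \pm 1$, and Parseval then kills the remaining coefficients. You instead square: $f^2 \equiv 1$ gives $\sum_{i<j} c_i c_j\, x_i x_j \equiv 0$, and since the degree-two characters $x_i x_j$ are linearly independent (indeed orthonormal), every cross product $c_i c_j$ vanishes, so at most one coefficient survives and it must be $\pm 1$. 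Both arguments are short and valid; yours stays entirely inside the Fourier formalism (the vanishing cross terms are just the coefficients $\widehat{f^2}(\{i,j\})$ of the constant function $1$), while the paper's is a pointwise, coordinate-flipping argument that avoids invoking uniqueness of the Fourier expansion a second time. Nothing in your write-up needs repair.
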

\begin{proof}
	From Parseval's Theorem we know that $\sum_i W^{i}[f] = 1$ and it follows that $\hat{f}(S) = 0$ for $|S| \neq 1$. 
	The function $f$ can therefore be written in the form $f(x) = \sum_{i = 1}^{d}\hat{f}_i x_i$ where $\hat{f}_i = \hat{f}(S)$ for $S = \{i\}$.
	By the condition $W^{1}[f] = 1$ there exists $j \in [d]$ such that $\hat{f}_j \neq 0$. 
	Fix the $d-1$ components $x_{i \neq j}$ of $x$ and note that since $f$ maps to $\{-1, 1\}$ the sum $f(x) = \hat{f}_j x_j + \sum_{i \neq j}\hat{f}_i x_i$ must satisfy $f(x) = \pm1$ when $x_j = \pm1$.
	For $\hat{f}_j \neq 0$ this is only possible when $\hat{f}_j = \pm 1$ which implies that $\hat{f}_{i} = 0$ for $i \neq j$.
	It follows that $f$ must be one of the $2d$ functions of the form $f(x) = \pm x_i$.
\end{proof}
\subsection{Optimal Fourier weight at degree zero}
We begin by arguing that we can restrict our attention to showing that dictator functions are optimal in the case where $0 < \beta < \alpha < 1$.
If $\alpha = 1$ then for $w_{1} = 1$ we have that $\rho = 0$ which is the best we can hope for (but this could also be achieved by other weight assignments, hence the statement of the main theorem is for $\alpha < 1$.).
For $\beta = 0$ the following Lemma showing that $w_{0}^{*} = 0$ combined with the fact that for this setting we maximize $p_\alpha$ by setting $w_{1} = 1$ shows that the dictator functions are optimal.
We will now show that an optimal solution has no Fourier weight at degree zero.
\begin{lemma} \label{lem:w_0}
	$w_{0}^{*} = 0$.
\end{lemma}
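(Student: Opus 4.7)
The plan is to show that any assignment with $w_0 = c > 0$ can be strictly improved by pushing its weight off degree zero. I would set $\hat w_i = w_i/(1-c)$ for $i \geq 1$ and $\hat w_0 = 0$; this is a valid assignment, and a direct computation yields the key identity $p_\gamma(w) = c + (1-c)\hat p_\gamma$ for $\gamma \in \{\alpha,\beta\}$, where $\hat p_\gamma$ denotes the collision probability under $\hat w$. In words, turning on $w_0$ amounts to interpolating each $\hat p_\gamma$ linearly toward $1$ with weight $c$. Since $\alpha > \beta$ and $\sum_{i\geq 1}\hat w_i = 1$, one has $\hat p_\alpha > \hat p_\beta$ strictly.

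Next I would reduce $\rho(\hat w) \leq \rho(w)$ to a monotonicity statement about a single scalar function. All four logarithms appearing in $\rho(\hat w)$ and $\rho(w)$ are negative and all relevant ratios are positive, so cross-multiplying shows the desired inequality is equivalent to
\begin{equation*}
\psi_c(\hat p_\alpha) \leq \psi_c(\hat p_\beta), \qquad \text{where} \qquad \psi_c(x) = \frac{\log x}{\log(c+(1-c)x)}.
\end{equation*}
Combined with $\hat p_\alpha > \hat p_\beta$, it therefore suffices to prove that $\psi_c$ is strictly decreasing on $(0,1)$. The most error-prone part of the proof is precisely this sign bookkeeping, since every logarithm involved is negative.

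The monotonicity of $\psi_c$ is the technical heart of the argument. Differentiating $\psi_c$ and multiplying the resulting numerator by the positive factor $x(c+(1-c)x)$ produces the expression $g(c+(1-c)x) - (1-c)g(x)$, where $g(y) = y\log y$. Since $g$ is strictly convex on $(0,\infty)$ (as $g''(y) = 1/y > 0$) and $g(1) = 0$, Jensen's inequality applied to the convex combination $c\cdot 1 + (1-c)\cdot x$ gives
\begin{equation*}
g(c+(1-c)x) \;\leq\; c\cdot g(1) + (1-c)\, g(x) \;=\; (1-c)\, g(x),
\end{equation*}
with strict inequality whenever $x \neq 1$. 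This yields $\psi_c'(x) < 0$ on $(0,1)$, which together with $\hat p_\alpha > \hat p_\beta$ gives $\rho(\hat w) < \rho(w)$ strictly whenever $c > 0$. Hence any optimum must have $w_0^* = 0$.
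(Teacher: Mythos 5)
Your proof is correct, and it takes a genuinely different route from the paper's. The paper argues locally: it writes $w_1 = 1 - \sum_{j\neq 1} w_j$, differentiates $\rho$ with respect to $w_0$, and reduces the sign of $\partial\rho/\partial w_0$ to showing that $x \mapsto \frac{1+s(x)}{1-x}\log\frac{1+s(x)}{2}$ is decreasing on $(0,1)$, which it verifies using $\log z \leq z-1$ together with a monomial-by-monomial check of $s'(x)(1-x)+s(x)\leq 1$ --- so it exploits the polynomial structure of $s$ and moves the offending mass specifically onto $w_1$. You instead argue globally: you delete the entire degree-zero mass and renormalize, and the only fact about the remaining weights you use is $\hat{p}_\alpha > \hat{p}_\beta$; the mixture identity $p_\gamma = c + (1-c)\hat{p}_\gamma$ reduces everything to monotonicity of the single scalar function $\psi_c(x)=\log x/\log(c+(1-c)x)$, which follows from convexity of $g(y)=y\log y$ and $g(1)=0$ via Jensen. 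I checked the sign bookkeeping (all four logarithms are indeed negative, since $\hat{p}_\gamma \in [1/2,1)$ and hence $c+(1-c)\hat{p}_\gamma<1$, so $A/B\leq A'/B'$ is equivalent to $AB'\leq A'B$, i.e.\ to $\psi_c(\hat{p}_\alpha)\leq\psi_c(\hat{p}_\beta)$) and the derivative computation; both are right, and your argument even gives a slightly stronger conclusion than the paper's, namely that removing a positive $w_0$ strictly decreases $\rho$, rather than only that the partial derivative at the optimum is positive. What each approach buys: the paper's version is stylistically uniform with its later lemmas (same $\log z\leq z-1$ toolkit and derivative bookkeeping), while yours is shorter, avoids the case analysis over monomials, and isolates the conceptual reason degree-zero weight hurts (mixing in an always-colliding component pushes both collision probabilities linearly toward $1$, which can only increase the ratio of log-probabilities). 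One small omission: your rescaling $\hat{w}_i=w_i/(1-c)$ is undefined at $c=1$, where $p_\alpha=p_\beta=1$ and $\rho$ degenerates; that corner must be dispatched separately (as the paper does in one line, noting $w_1=1$ beats it) before concluding $w_0^*=0$.
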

\begin{proof}
	If $w_{0} = 1$ we have $\rho = 1$ and it is clear that $\rho < 1$ if we set $w_{1} = 1$. 
	Suppose that $0 < w_{0}^{*} < 1$. 
	We will show that in this case we can move some weight from $w_{0}$ to $w_{1}$ and decrease the value of $\rho$.
	For a given weight assignment define $s(\alpha) = \sum_{i} \alpha^i w_i$ and write $w_{1}$ as $w_{1} = 1 - \sum_{j \neq i}w_j$. 
	The partial derivative of $\rho = \log((1 + s(\alpha))/2)/\log((1 + s(\beta))/2)$ with respect to $w_{0}$ is given by
	\begin{equation*}
		\frac{\partial \rho}{\partial w_{0}} = \frac{\frac{\partial s(\alpha) / \partial w_{0}}{1 + s(\alpha)}\log \frac{1 + s(\beta)}{2} 
		- \frac{\partial s(\beta) / \partial w_{0}}{1 + s(\beta)}\log \frac{1 + s(\alpha)}{2}}
		{\log^2 \frac{1 + s(\beta)}{2}}.
	\end{equation*}
	By rearranging and using that $\partial s(\alpha) / \partial w_{0} = 1 - \alpha$ we find that $\frac{\partial \rho}{\partial w_{0}} > 0$ is equivalent to  
	\begin{equation*} 
		\frac{1 + s(\beta)}{1 - \beta} \log \frac{1 + s(\beta)}{2} > \frac{1 + s(\alpha)}{1 - \alpha} \log \frac{1 + s(\alpha)}{2}.
	\end{equation*}
	It suffices to show that the function $g(x) = \frac{1 + s(x)}{1 - x} \log \frac{1 + s(x)}{2}$ is decreasing for $0 < x < 1$. 
	\begin{equation} \label{eq:condition}
		\frac{\partial g}{\partial x} = \frac{s'(x)(1 - x) + (1 + s(x))}{(1-x)^2} \log \frac{1 + s(x)}{2} + \frac{s'(x)}{1 - x}.
	\end{equation}
	Rewriting, this is equivalent to showing that 
	\begin{equation*}
		(s'(x)(1-x) + 1 + s(x)) \log \frac{1 + s(x)}{2} + (1-x)s'(x) < 0.
	\end{equation*}
	By the assumption that $0 < w_0 < 1$ we have that $0 < s(x) < 1$ and using Lemma \ref{lem:log_inequality} we get that $\log((1 + s(x))/2) < (s(x) - 1)/2$.
	The condition in equation \eqref{eq:condition} then simplifies to showing that $s'(x)(1-x) + s(x) \leq 1$.
	The function $s(x) = \sum_i w_i x^i$ is a weighted sum of simple monomials where the weights sum to one.
	It therefore suffices to show that the inequality holds for every monomial $s_k(x) = x^k$ where $k = \{0, 1, \dots, d \}$.
	For $k = 0$ and $k = 1$ we have $s_{k}'(x)(1-x) + s_{k}(x) = 1$ satisfying the desired inequality.
	For $k \geq 2$ we have $s_{k}'(x)(1-x) + s_{k}(x) = kx^{k-1} + (k-1)x^k$.
	We see that $s_{k}(0) = 0$ and $s_{k}(1) = 1$ and by inspecting the derivative of $s_{k}(x)$ we see that it is increasing for $x \in (0,1)$. 
	It follows that the inequality is satisfied, completing the proof.
\end{proof}
\subsection{A continuous optimization problem}
In order to simplify the problem of minimizing $\rho$ we will optimize over a larger space.
In particular we will let $W$ denote a collection of pairs $(w, \kappa)$ such that $\sum_{w \in W} w = 1$ where we restrict $\kappa \in \mathbb{R}$ to satisfy $\kappa \geq 1$.
We define $s(x) = \sum_{(w, \kappa) \in W} w x^\kappa$ and we will now attempt to specify the function $s$ that minimizes  
\begin{equation*}
	\rho_{\alpha, \beta} = \frac{\log \frac{1 + s(\alpha)}{2}}{\log \frac{1 + s(\beta)}{2}}
\end{equation*}
subject to the constraint that $s(\beta) = b \leq \beta$ is fixed. 
The constraint that $s(\beta) \leq \beta$ follows from the restrictions on $s$.
We can therefore write $b = \beta^\gamma$ for some $\gamma \geq 1$.
For fixed $s(\beta)$ it is clear that we minimize $\rho$ by maximizing $s(\alpha)$. 
\begin{lemma} \label{lem:gamma}
	For fixed $s(\beta) = \beta^\gamma$ we maximize $s(\alpha)$ by setting $s(x) = x^\gamma$.
\end{lemma}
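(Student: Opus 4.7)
\textbf{Proof plan for Lemma \ref{lem:gamma}.} The plan is to reinterpret $s$ as an expectation and reduce the optimization to a single application of Jensen's inequality. View the collection $W$ as defining a probability distribution on the set of exponents $\kappa \in [1,\infty)$, with $\Pr[K = \kappa] = w$ for $(w, \kappa) \in W$ (treating repeated $\kappa$'s by summing weights). Then $s(x) = \E[x^K]$ for every $x \in (0,1)$, and the constraint becomes $\E[\beta^K] = \beta^\gamma$ while the objective is $\E[\alpha^K]$.

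The key step will be a change of variable. Let $U = \beta^K$, so $U$ takes values in $(0, \beta]$ (since $K \geq 1$ and $0 < \beta < 1$), and observe that
\begin{equation*}
\alpha^K = \alpha^{\log U / \log \beta} = U^{c}, \qquad \text{where } c = \frac{\log \alpha}{\log \beta}.
\end{equation*}
Because $0 < \beta < \alpha < 1$ both logarithms are negative and $|\log \alpha| < |\log \beta|$, so $c \in (0,1)$. The function $u \mapsto u^c$ is therefore strictly concave on $(0, \infty)$, and Jensen's inequality gives
\begin{equation*}
\E[\alpha^K] \;=\; \E[U^c] \;\leq\; (\E[U])^c \;=\; (\beta^\gamma)^c \;=\; \alpha^\gamma,
\end{equation*}
with equality if and only if $U$ is almost surely constant, i.e.\ $K \equiv \gamma$ almost surely. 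This means the optimizer places all mass on a single exponent, and the value of that exponent is forced by the constraint $\beta^\kappa = \beta^\gamma$ to be $\kappa = \gamma$, giving $s(x) = x^\gamma$. Note that $\gamma \geq 1$ is automatic from $s(\beta) = \beta^\gamma \leq \beta$ (which in turn follows because $s$ is a convex combination of $x^\kappa$ with $\kappa \geq 1$ and $0 < \beta < 1$), so the limiting choice is admissible.

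There is no serious obstacle here; the only thing to be careful about is matching the sign of $c - 1$ to the correct direction of Jensen's inequality, and confirming that the extremal distribution $K \equiv \gamma$ actually lies in the enlarged feasible set (it does, since we allow $\kappa \in \mathbb{R}$ with $\kappa \geq 1$, not just integer exponents). The substitution $U = \beta^K$ is what unlocks the concavity; the rest is mechanical.
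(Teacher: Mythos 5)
Your proof is correct, and it takes a genuinely different route from the paper. The paper argues locally: it fixes the weight $w$ on the exponent $\gamma$, notes that if $w<1$ there must be exponents $\gamma_0<\gamma<\gamma_1$ carrying positive weight, and shows that shifting weight from $\gamma_0$ and $\gamma_1$ onto $\gamma$ (in the ratio $\varphi(\beta) = \frac{\beta^{\gamma_0}-\beta^\gamma}{\beta^\gamma-\beta^{\gamma_1}}$ that preserves $s(\beta)$) strictly increases $s(\alpha)$; this reduces to proving that $\varphi(x)$ is decreasing on $(0,1)$, which the paper does by a derivative computation and the inequality $\log z \leq z-1$. You instead argue globally: writing $s(x)=\E[x^K]$ for a random exponent $K\geq 1$, substituting $U=\beta^K$ and observing $\alpha^K=U^c$ with $c=\log\alpha/\log\beta\in(0,1)$, a single application of Jensen's inequality for the strictly concave map $u\mapsto u^c$ gives $\E[\alpha^K]\leq(\E[U])^c=\alpha^\gamma$, with equality exactly for the point mass $K\equiv\gamma$ (which is feasible since $\gamma\geq 1$). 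Your argument is shorter and calculus-free, it directly yields both the upper bound $\alpha^\gamma$ and uniqueness of the maximizer (the paper's exchange argument only shows any non-degenerate configuration can be improved, leaving attainment of the supremum implicit), and it applies verbatim to arbitrary, even continuous, distributions over exponents. It is essentially the "simple proof via properties of the ratio of logarithms" that the paper itself asks for in its open problems section; the one hypothesis to keep explicit is nonnegativity of the weights, which is what licenses the probabilistic reinterpretation, and which the paper also uses implicitly.
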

\begin{proof}
Let $w$ denote the weight on the exponent $\gamma$ in the specification $W$ of $s$. 
We will prove that if $w < 1$ then we can increase $s(\alpha)$ by rearranging the weights of $s$ to put more weight onto $(w, \gamma)$.
Note that if $w < 1$ and we have a valid configuration of weights (in the sense that $s(\beta) = \beta^\gamma$) there must exist exponents $\gamma_{0} < \gamma < \gamma_{1}$ such that there is positive weight on $\gamma_{0}$ and $\gamma_{1}$. 
If all the remaining weight was concentrated to either side of $\gamma$ the condition $s(\beta) = \beta^\gamma$ would be violated.
We will now move $\varepsilon_{0}$ weight from $w_{0}$ to $w$ and $\varepsilon_{1}$ weight from $w_{1}$ to $w$ where we set $\varepsilon_{0}, \varepsilon_{1}$ to ensure that $s(\beta) = \beta^\gamma$ after the move.
It turns out that this condition is satisfied for the following ratio
\begin{equation*}
	\varphi(\beta) = \varepsilon_{1} / \varepsilon_{0} = \frac{\beta^{\gamma_{0}} - \beta^\gamma}{\beta^\gamma - \beta^{\gamma_{1}}} > 0.
\end{equation*}
The change in $s(\alpha)$ due to the rearrangement of weights can be shown to be positive if $\varphi(\alpha) < \varphi(\beta)$.
Therefore, it suffices to show that $\varphi(x)$ is decreasing for $0 < x < 1$ when $\gamma_{0} < \gamma < \gamma_{1}$.
To simplify further, we define $\lambda_{0} = \gamma_{0} - \gamma$ and $\lambda_{1} = \gamma_{1} - \gamma$ which satisfy $\lambda_{0} < 0 < \lambda_{1}$.
Rewriting $\varphi(x) = -(1 - x^\lambda_{0})/(1 - x^\lambda_{1})$ and differentiating we get
\begin{align*}
	\frac{\partial \varphi}{\partial x} &= \lambda_{0} x^{\lambda_{0}}(1 - x^{\lambda_{1}} - \lambda_{1} x^{\lambda_{1}}(1-x^{\lambda_{0}}) < 0 \\
										&\iff \frac{\lambda_{0} x^{\lambda_{0}}}{1 - x^{\lambda_{0}}} > \frac{\lambda_{1} x^{\lambda_{1}}}{1 - x^{\lambda_{1}}}. 
\end{align*}
It suffices to show that $\psi(x) = \frac{xa^x}{1- a^x}$ is decreasing in $x$ for $a \in (0,1)$.
We have that $\psi'(x) = a^{x}(1 - a^{x}) + a^x \log a^x$. 
Define $z = a^x$ and note that $z > 0$ and $z \neq 1$. 
We have that $z(1-z) + z \log z < 0 \iff (1 - z) + \log z < 0$ and by Lemma \ref{lem:log_inequality} we see that $\log z < z - 1$, completing the proof. 
\end{proof}
\subsection{Univariate analysis}
According to Lemma \ref{lem:gamma} we can now restrict our attention to the problem of finding $\gamma \geq 1$ that minimizes the function
\begin{equation*}
	\rho(\gamma) = \frac{\log \frac{1 + \alpha^\gamma}{2}}{\log \frac{1 + \beta^\gamma}{2}} \,.
\end{equation*}
We will show the derivative of $\rho$ is positive, implying that it is minimized when $\gamma = 1$.
\begin{lemma} \label{lem:derivative_gamma}
	$\rho'(\gamma) > 0$.
\end{lemma}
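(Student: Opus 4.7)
The plan is to reduce the inequality $\rho'(\gamma) > 0$ to the monotonicity of a single one-variable function on $(0,1)$. Since $\rho(\gamma) > 0$, it is equivalent and more convenient to show $(\log\rho)'(\gamma) > 0$. Writing
\[
\log\rho(\gamma) = G(\alpha^\gamma) - G(\beta^\gamma), \qquad G(y) := \log\bigl(\log(2/(1+y))\bigr),
\]
and differentiating with respect to $\gamma$ gives
\[
(\log\rho)'(\gamma) = G'(\alpha^\gamma)\,\alpha^\gamma\log\alpha \;-\; G'(\beta^\gamma)\,\beta^\gamma\log\beta.
\]
A short computation yields $G'(y) = -1/\bigl((1+y)\log(2/(1+y))\bigr)$, so that after the substitution $u = x^\gamma$ (which is monotone in $x$ for $\gamma \geq 1$, and converts $\log x$ into $(\log u)/\gamma$) the relevant quantity simplifies to $\tfrac{1}{\gamma}H(x^\gamma)$, where
\[
H(u) \;:=\; \frac{-u\log u}{(1+u)\log(2/(1+u))}.
\]
Thus proving $\rho'(\gamma) > 0$ for all $0 < \beta < \alpha < 1$ amounts to showing that $H$ is strictly increasing on $(0,1)$, because $\alpha > \beta$ forces $\alpha^\gamma > \beta^\gamma$.

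Second, I would verify the endpoint behavior of $H$, both as a sanity check and as a stepping stone: $H(0^{+}) = 0$ (since $-u\log u \to 0$ while the denominator approaches $\log 2$), and $H(1^{-}) = 1$ by a single application of L'H\^opital's rule. This is consistent with $H$ being monotone increasing from $0$ to $1$, and also confirms the expected limiting behavior $\rho(\gamma) \to \rho(1)$ and $\rho(\gamma) \to \rho_{\text{IM}}$.

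Third, I would establish the monotonicity $H'(u) > 0$ on $(0,1)$ directly. Writing $A(u) = -u\log u$ and $B(u) = (1+u)\log(2/(1+u))$, one has $A'(u) = -\log u - 1$ and $B'(u) = \log(2/(1+u)) - 1$, and the sign of $H'$ equals the sign of $N(u) := A'(u)B(u) - A(u)B'(u)$. A direct expansion shows
\[
N(u) \;=\; \log\!\frac{2}{1+u}\cdot\bigl(-\log u - 1 - u\bigr) \;+\; u\log u.
\]
For $u \in (0, e^{-1-u})$ both summands are positive, so the claim is immediate; for the remaining range I would bound $N$ from below by exploiting the identity
\[
-u\log u + (1+u)\log(1+u) \;=\; (1+u)\, h_2\!\bigl(u/(1+u)\bigr),
\]
where $h_2$ is the binary entropy, which recasts $B$ and $N$ in a form where concavity of $h_2$ applies. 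The endpoint values $N(0^+) = N(1^-) = 0$, combined with a sign analysis of $N'$, should close the argument.

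The main obstacle is the last step: unlike the simpler $\rho_{\alpha,0}$ setting (where the ratio reduces to a product over a single variable), here both $A$ and $B$ vanish at $u = 1$ and the two contributions $A'B$ and $-AB'$ have opposing signs for $u > 1/e$. The delicate part will be showing that the positive contribution $-AB' > 0$ dominates the negative contribution $A'B < 0$ throughout $u \in (1/e,1)$; I expect this to require either the entropy reformulation above or a convexity argument comparing $\log(-\log((1+u)/2))$ with $\log(-\log u)$ viewed as functions on $(0,1)$. Once $H$ is shown to be strictly increasing, Theorem~\ref{thm:bool_main} follows by combining Lemma~\ref{lem:w_0}, Lemma~\ref{lem:gamma}, and the present lemma to conclude that the optimal weight profile has all mass at degree one, and then applying Lemma~\ref{lem:dictator}.
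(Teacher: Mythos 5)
Your reduction is sound and is, in substance, the same as the paper's: with $u=x^{\gamma}$ the paper's auxiliary function $g(x)=\frac{1+x^{\gamma}}{x^{\gamma}\log x}\log\frac{1+x^{\gamma}}{2}$ equals $\gamma/H(x^{\gamma})$, so ``$g$ decreasing in $x$'' and ``$H$ increasing in $u$'' are the same claim, and your identity $(\log\rho)'(\gamma)=\tfrac{1}{\gamma}\bigl(H(\alpha^{\gamma})-H(\beta^{\gamma})\bigr)$ is correct. The problem is that the heart of the lemma --- strict monotonicity of $H$ on $(0,1)$ --- is not actually proved, and the partial computations you give are faulty. The expansion of $N=A'B-AB'$ has a sign slip: the correct expression is $N(u)=(-\log u-1-u)\log\frac{2}{1+u}-u\log u$; with your ``$+u\log u$'' one gets, e.g., $N(1/2)\approx-0.58<0$, and your remark that ``both summands are positive'' for small $u$ only makes sense with the corrected sign. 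Also $N(0^{+})$ is not $0$: the first term behaves like $(-\log u)\log 2$ and diverges to $+\infty$, so the proposed ``endpoint values plus sign analysis of $N'$'' scheme does not start from the facts you state.

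More importantly, the regime you yourself flag as the obstacle --- $u$ near $1$, where $-\log u-1-u<0$ and the two terms of $N$ nearly cancel --- is left at ``should close the argument''. That is exactly where the lemma is delicate: $N(u)\to 0$ as $u\to 1^{-}$ with high-order tangency (numerically $N(0.9)\approx 3\cdot 10^{-3}$), so whatever bound you use must be tight to third order at $u=1$; neither the binary-entropy identity nor an unspecified convexity comparison is shown to deliver this, so the proof is incomplete as it stands. The paper closes this step with elementary means, applying the bound of Lemma~\ref{lem:log_inequality} ($\log z\le z-1$) twice to the equivalent condition $-(1+x^{\gamma}+\gamma\log x)\log\frac{1+x^{\gamma}}{2}+\gamma x^{\gamma}\log x<0$. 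An equally elementary completion of your route is possible --- in the hard case use $\log z\ge 1-1/z$ with $z=\frac{1+u}{2}$, i.e.\ $\log\frac{1+u}{2}\ge\frac{u-1}{1+u}$, which (after multiplying by the negative factor $-(1+u+\log u)$) reduces the claim to $1-u^{2}+(1+u^{2})\log u<0$, i.e.\ $\log t>\frac{2(t-1)}{t+1}$ with $t=1/u^{2}>1$ --- but some such argument must actually be supplied before the lemma, and hence Theorem~\ref{thm:bool_main}, follows.
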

\begin{proof}
From inspecting the derivative of $\rho$ with respect to $\gamma$ we see that
\begin{align*}
&\frac{\partial \rho}{\partial \gamma} > 0 \\
		&\iff \frac{\alpha^{\gamma} \log \alpha}{1 + \alpha^{\gamma}} \log \frac{1 + \beta^\gamma}{2} - \frac{\beta^{\gamma} \log \beta}{1 + \beta^{\gamma}} \log \frac{1 + \alpha^\gamma}{2} > 0 \\ 
		&\iff \frac{1 + \beta^{\gamma}}{\beta^{\gamma} \log \beta} \log \frac{1 + \beta^\gamma}{2} >  \frac{1 + \alpha^{\gamma}}{\alpha^{\gamma} \log \alpha} \log \frac{1 + \alpha^\gamma}{2}.  
\end{align*}
Therefore it suffices to show that the function $g(x) = \frac{1 + x^{\gamma}}{x^{\gamma} \log x} \log \frac{1 + x^\gamma}{2}$ is decreasing for $0 < x < 1$ and $\gamma \geq 1$.
From inspecting $g'(x)$ we see that the condition that $g'(x) < 0$ is equivalent to
\begin{equation*}
-(1 + x^\gamma + \gamma \log x) \log \frac{1 + x^\gamma}{2} + \gamma x^{\gamma} \log x < 0 
\end{equation*}
If $1 + x^\gamma + \gamma \log x \geq 0$ then the condition is satisfied and we are done.
Otherwise we can use the fact that $-(1 + x^\gamma + \gamma \log x) > 0$ together with Lemma \ref{lem:log_inequality} to produce following derivation: 
\begin{align*}
&-(1 + x^\gamma + \gamma \log x) \log \frac{1 + x^\gamma}{2} + \gamma x^{\gamma} \log x \\
&\qquad < -(1 + x^\gamma + \log x^\gamma)\frac{x^\gamma - 1}{2} + x^\gamma \log x^\gamma \\
&\qquad = 1 - x^{2\gamma} + (1 + x^\gamma) \log x^\gamma 
\end{align*}
Reapplying Lemma \ref{lem:log_inequality} we see that $(1 + x^\gamma) \log x^\gamma < (1 + x^\gamma)(x^\gamma - 1) = -(1 - x^{2\gamma})$ completing the proof. 
\end{proof}
\subsection{Stating the result}
We will now summarize how the results from the previous subsections yield Theorem \ref{thm:bool_main}. 
To find the the distribution over Boolean functions minimizing $\rho$ we first considered the optimal weight assignment in the expression $s(x) = \sum_i w_i \alpha^i$ subject to the constraint that $\sum_i w_i = 1$. 
Finding an optimal assignment does not guarantee that we have solved the problem, because there may not exist a Boolean function with a given weight assignment, 
but if one or more Boolean functions that satisfy the optimal assignment exists we will have solved the problem. 
In Lemma \ref{lem:w_0} we showed that an optimal solution $w_0^*, w_1^*, \dots w_d^*$ must have $w_0^* = 0$.
Therefore the optimal solution can only have non-zero weight on exponents $k \geq 1$.
Next, in Lemma \ref{lem:gamma}, we argued that if we allow continuous exponents $k \in \mathbb{R}$ with $k \geq 1$ in $s(x)$ then the problem of minimizing $\rho$ becomes the problem of selecting $\gamma \geq 1$ where $s(x) = x^\gamma$.
Lemma \ref{lem:derivative_gamma} showed that $\rho(\gamma)$ is increasing, so to minimize $\rho$ we want to set $\gamma = 1$.
The conclusion from these optimization problems is that we minimize $\rho$ by setting $w_1^* = 1$.
Finally Lemma \ref{lem:dictator} shows that the subset of the Boolean functions with $w_1 = 1$ is exactly the set of dictator functions $f(x) = \pm x_i$. 
Together with the fact that $w_1^* = 1$ is a unique minimum of $\rho$ in the weight assignment problem we get Theorem \ref{thm:bool_main}
\section{Open problems}
\paragraph{Orthogonal search.}
It appears that the same techniques can be used to show that pairs of functions of the form $f(x) = x_i x_j$, $g(y) = -x_i x_j$ minimize the function
\begin{equation*}
	\frac{\log(1/\min(p_{\alpha},p_{-\alpha}))}{\log(1/\max(p_{\beta}, p_{-\beta}))}.
\end{equation*}

\paragraph{Extension to negative correlation.}
It seems likely that the dictator functions or bit-sampling minimizes $\rho$ for the entire interval $-1 \leq \beta < \alpha \leq 1$. 
Unfortunately the current proof breaks down in places.

\paragraph{General hash functions.}
Showing tight bounds for hash function with an arbitrary range is an interesting open problem.
For orthogonal search this is an open problem even in the case of $\rho_{\alpha, 0}$.
For more information see the symmetric Gaussian problem in \cite{odonnell2012open}. 

Investigating what the implications of the results in this paper for functions with an arbitrary range through the use of $1$-bit hashing is an interesting problem.

\paragraph{Simpler proof.}
The current proof seems needlessly complicated and inelegant. Perhaps some properties of the ratio of logarithms could be established and a simple proof would follow.

\paragraph{Exact relation to standard LSH.}
Find out when an LSH for Hamming space in the standard formulation violates the lower bound in this paper.

\paragraph{Asymmetric LSH and set similarity search.}
Investigate whether the same techniques can be used to show optimal 1-bit schemes for a boolean anti-LSH and for set similarity search.

\part{Pseudorandomness}
\chapter{Generating $k$-independent random variables in constant time} \label{ch:rng}
\sectionquote{The old that is strong does not wither}
\noindent The generation of pseudorandom elements over finite fields is fundamental to the time, space and randomness complexity of randomized algorithms and data structures. 
We consider the problem of generating $k$-independent random values over a finite field $\GF$ in a word RAM model equipped with constant time addition and multiplication in $\GF$, and present the first nontrivial construction of a generator that outputs each value in \emph{constant time}, not dependent on~$k$.
Our generator has period length $|\GF|\poly \log k$ and uses $k \poly(\log k) \log |\GF|$ bits of space, which is optimal up to a $\poly \log k$ factor.
We are able to bypass Siegel's lower bound on the time-space tradeoff for \mbox{$k$-independent} functions by a restriction to sequential evaluation.  
\section{Introduction}
Pseudorandom generators transform a short random seed into a longer output sequence. 
The output sequence has the property that it is indistinguishable from a truly random sequence by algorithms with limited computational resources. 
Pseudorandom generators can be classified according to the algorithms (distinguishers) that they are able to fool.
An algorithm from a class of algorithms that is fooled by a generator can have its randomness replaced by the output of the generator,
while maintaining the performance guarantees from the analysis based on the assumption of full randomness.
When truly random bits are costly to generate or supplying them in advance requires too much space, a pseudorandom generator can reduce the time, space and randomness complexity of an algorithm.

This paper presents an explicit construction of a pseudorandom generator that outputs a $k$-independent sequence of values in \emph{constant time} per value, not dependent on $k$, on a word RAM~\cite{hagerup1998}. 
The generator works over an arbitrary finite field that allows constant time addition and multiplication over $\GF$ on the word RAM.

Previously, the most efficient methods for generating $k$-independent sequences were either based on multipoint evaluation of degree $k-1$ polynomials, or on direct evaluation of constant time hash functions.
Multipoint evaluation has a time complexity of $O(\log^{2} k \log \log k)$ field operations per value while hash functions with constant evaluation time use excessive space for non-constant $k$ by Siegel's lower bound~\cite{siegel2004}.
We are able to get the best of both worlds: constant time generation and near-optimal seed length and space usage. 

\paragraph{Significance.}
In the analysis of randomized algorithms and in the hashing literature in particular, $k$-independence has been the dominant framework for limited randomness. 
Sums of $k$-independent variables have their $j$th moment identical to fully random variables for $j \leq k$ which preserves many properties of full randomness.
For output length $n$, $\Theta(\log n)$-independence yields Chernoff-Hoeffding bounds~\cite{schmidt1995} and random graph properties~\cite{alon2008},
while $\Theta(\poly \log n)$-independence suffices to fool $AC^{0}$ circuits~\cite{braverman2010}.

Our generator is particularly well suited for randomized algorithms with time complexity $O(n)$ that use a sequence of $k$-independent variables of length $n$, for non-constant $k$.
For such algorithms, the generation of $k$-independent variables in constant time by evaluating a hash function over its domain requires space $O(n^{\epsilon})$ for some constant $\epsilon > 0$. 
In contrast, our generator uses space $O(k \poly \log k)$ to support constant time generation. 
Algorithms for randomized load balancing such as the simple process of randomly throwing $n$ balls into $n$ bins fit the above description and presents an application of our generator.
Using the bounds by Schmidt et al.~\mbox{\cite[Theorem 2]{schmidt1995}} it is easy to show that $\Theta(\log n / \log \log n)$-independence suffices to obtain a maximal load of any bin of $O(\log n / \log \log n)$ with high probability.
This guarantee on the maximal load is asymptotically the same as under full randomness. 
Using our generator, we can allocate each ball in constant time using space $O(\log n \poly \log \log n)$ compared to the lower bound of $O(n^{\epsilon})$ of hashing-based approaches to generating $k$-independence. 
In Section \ref{sec:loadbalancing} we show how our generator improves upon existing solutions to a dynamic load balancing problem.

The generation of pseudorandomness for Monte Carlo experiments presents another application.
Limited independence between Monte Carlo experiments can be shown to yield Chernoff-like bounds on the deviation of an estimator from its expected value. 
Consider a randomized algorithm~$\A(Y)$ that takes $m$ random elements from $\GF$ encoded as a string $Y$ and returns a value in the interval $[0,1]$.    
Let $\mu_{\A} > 0$ denote the expectation of the value returned by~$\A(Y)$ under the assumption that $Y$ encodes a truly random input.
Define the estimator
\begin{equation*}
\hat{\mu}_{\A} = \frac{1}{t}\sum_{i=1}^{t}\A(Y_{i}).
\end{equation*}
Due to a result by Schmidt et al. \cite[Theorem 5]{schmidt1995}, for every choice of constants $\epsilon, \alpha > 0$,
it suffices that $Y_{1}, \dots, Y_{t}$ encodes a sequence of $\Theta(m \log t)$-independent variables over $\GF$ to yield the following high probability bound on the deviation of $\hat{\mu}_{\A}$ from $\mu_{\A}$.
\begin{equation*}
Pr[|\hat{\mu}_{\A} - \mu_{\A}| \geq \epsilon\mu_{\A}] \leq O(t^{-\alpha}). 
\end{equation*}
We hope that our generator can be a useful tool to replace heuristic methods for generating pseudorandomness in applications where theoretical guarantees are important. 
In order to demonstrate the practicality of our techniques, we present experimental results on a variant of our generator in Section \ref{sec:experiments}. 
Our experiments show that $k$-independent values can be generated nearly as fast as output from heuristic pseudorandom generators, even for large $k$.

\paragraph{Methods.}
Our construction is a surprisingly simple combination of bipartite unique neigbor expanders with multipoint polynomial evaluation.
The basic, probabilistic construction of our generator proceeds in two steps: 
First we use multipoint evaluation to fill a table with \mbox{$\Theta(k)$-independent} values from a finite field, using an average of $\poly \log k$ operations per table entry.
Next we apply a bipartite unique neighbor expander with constant outdegree and with right side nodes corresponding to entries in the table and a left side that is $\poly \log k$ times larger than the right side.
For each node in the left side of the expander we generate a $k$-independent value by returning the sum of its neighboring table entries.
Our main result stated in Theorem 1 uses the same idea, but instead of relying on a single randomly constructed expander graph, 
we employ a cascade of explicit constant degree expanders and show that this is sufficient for constant time generation.   

\paragraph{Relation to the literature.}
Though the necessary ingredients have been known for around 10 years, we believe that a constant time generator has evaded discovery by residing in a blind spot between the fields of hashing and pseudorandom generators. 
The construction of constant time $k$-independent \emph{hash functions} has proven to be a difficult task, and a fundamental result by Siegel~\cite{siegel2004} shows a time-space tradeoff that require hashing-based generators with sequence length $n$ to use $O(n^{\epsilon})$ space for some constant $\epsilon > 0$. 
On the other hand, from the point of view of pseudorandom generators, a generator of $k$-independent variables, for super-constant~$k$, can not be used as an efficient method of derandomization: 
A lower bound by Chor et al.~\cite{chor1985} shows that the sample space of such generators must be superpolynomial in their output length.
Specifically, a $k$-independent generator that outputs $n$ bits must have a seed length of $\Omega(k \log n)$ bits.
Consequently, research shifted towards generators that produce other types of outputs such as biased sequences or almost $k$-independent variables \cite{alon1992, naor1993, goldreich2010}. 

It is relevant to ask whether there already exist constructions of constant time pseudorandom generators on the word RAM that can be used instead of generators that output $k$-independent variables. 
For example, Nisan's pseudorandom generator~\cite{nisan1992} uses constant time to generate a pseudorandom word and has remarkably strong properties: 
Every algorithm running in $\textsc{space}(s)$ that uses $n$ random words can have its random input replaced by the output of a constant time generator with seed length $O(s \log n)$. 
The probability that the outcome of the algorithm differs when using pseudorandomness as opposed to statistical randomness is exponentially decreasing in the seed length. 

In spite of this strong result, there are many natural applications where the restrictions on Nisan's model means that we cannot use his generator directly to replace the use of a $k$-generator. 
An example is the analysis that uses a union bound over all subsets of $k$ words of a randomly generated structure described by $n$ words.
Algorithms shown to be derandomized by Nisan's generator are restricted to one-way access to the output of the generator. 
Therefore the output of Nisan's generator can not be used to derandomize an algorithm that tests for the events of the union bound without using excessive space.
In this case, \mbox{$k$-independence} can directly replace the use of full randomness without changing the analysis.
\subsection{Our contribution}
We present three improved constructions of \emph{$k$-generators}, formally defined in Section~\ref{rng:sec:preliminaries}, that are able to generate a sequence of $k$-independent values over a finite field $\GF$.
Our results are stated in a word RAM model equipped with constant time addition and multiplication in $\GF$.     
Our main result is a fully explicit generator:
\begin{theorem}\label{thm:explicit}
For every finite field $\GF$ with constant time arithmetic there exists a data structure that for every choice of $k$ is an explicit constant time $k$-generator with
range $\GF$, period $|\GF|\poly \log k$, and seed length, space usage and initialization time $k \poly \log k$.
\end{theorem}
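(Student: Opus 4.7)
}
The plan is to combine two ingredients: (i) a fast multipoint-evaluation step that produces a small $k$-independent ``seed table'' over $\GF$, and (ii) an explicit, constant-degree bipartite expander that lifts $k$-independence from this seed table to a much larger output sequence at constant cost per value. The overall structure mirrors the probabilistic construction sketched in the introduction, with Capalbo et al.'s explicit expanders replacing the random graph.

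First I would formalize the \emph{expander-to-generator} lemma: if $G=(U\cup V,E)$ is a left $d$-regular bipartite graph with the property that for every $S\subseteq U$ with $|S|\le k$ some vertex of $S$ has a \emph{unique} neighbor in $V$ (i.e.\ a neighbor not shared with any other vertex of $S$), and if $T\colon V\to\GF$ is a $k$-wise independent assignment, then the sequence $f(u)=\sum_{v\in N(u)} T[v]$, indexed by $u\in U$, is $k$-wise independent over $\GF$. The proof is the standard ``peeling'' argument on the unique-neighbor property together with the observation that addition in $\GF$ is a group operation. Evaluating $f(u)$ costs $d=O(1)$ table lookups plus $d-1$ additions, giving constant time per output value once $T$ is in place.

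Next I would build $T$: partition a degree-$(2k-1)$ polynomial $P\in\GF[X]$ with uniformly random coefficients into blocks and use fast multipoint evaluation~\cite{gathen2013} to evaluate $P$ on $\Theta(k)$ prescribed points in total time $k\poly\log k$, giving $\poly\log k$ amortized cost per table entry and $2k$-wise independence. The seed length is $O(k\log|\GF|)$ bits. Since the generator only needs to reveal one output at a time, the amortized cost of filling $T$ is charged to initialization, preserving the $O(1)$-per-output bound for all subsequent queries.

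The main obstacle, and the technically loaded step, is realizing the neighbor function $\Gamma\colon U\to V^d$ of a suitable expander in constant time on a word RAM. I would take a small base expander from Capalbo et al.~\cite{capalbo2002} with the required unique-neighbor expansion on $k$-sets and then \emph{cascade} it: iteratively apply their constant-degree graph product so that after $i$ rounds we have an expander with the same right-side alphabet (of size $\Theta(k)$) but a left-side of size $|V|\cdot(\poly\log k)^i$, preserving left-degree $O(1)$ and the $k$-set unique-neighbor property. Running the cascade for $O(\log_{\poly\log k}(|\GF|/k))$ rounds yields $|U|=|\GF|\poly\log k$, matching the required period. The cascade blows up the \emph{description} of $\Gamma$ by a $\poly\log k$ factor (stored once at initialization, within the promised space bound), but each neighbor query still drills down a constant-depth product and costs $O(1)$ word-RAM operations, using constant-time arithmetic in $\GF$ to manipulate vertex labels. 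Combining the expander-to-generator lemma with $k$-independence of $T$ (with the constant absorbed into the constant that multiplies $k$ when we size the polynomial) completes the construction: period $|\GF|\poly\log k$, initialization and space $k\poly\log k$, and $O(1)$ time per output as required.
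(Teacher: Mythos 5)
Your two ingredients (multipoint evaluation for a small highly independent table, plus unique-neighbor expanders and Siegel's composition lemma) are the same ones the paper uses, but the way you combine them cannot work. You fill one table $T$ of $\Theta(k)$ field elements once at initialization and then try to reach period $|\GF|\poly\log k$ through a single cascaded expander with left side $|\GF|\poly\log k$, right side $\Theta(k)$, constant degree, and $O(1)$ evaluation time. If such an object existed, composing it with $T$ would give a $k$-independent \emph{hash function} with random access: domain $|\GF|\poly\log k$, constant cell probes, and space $k\poly\log k$ --- exactly what Siegel's cell-probe lower bound rules out once $|\GF|$ is superpolynomial in $k$ (constant probes force space polynomial in the domain size). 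The obstruction also shows up concretely in your cascade: you need $\Theta\bigl(\log(|\GF|/k)/\log\log k\bigr)$ rounds, which is not constant; graph composition multiplies the left degree, and each application of the composition lemma costs a factor $d$ in the independence required of the base table, so after that many rounds both the per-output cost and the required independence are polynomial in $|\GF|$. Keeping the degree constant while the right side stays $\Theta(k)$ is impossible anyway (already $2$-uniqueness forces distinct neighborhoods, capping the left side at $\poly(k)$ for constant $d$). A smaller, fixable slip: the composition lemma needs the table to be $dk$-independent, not $k$-independent (one conditions on up to $|\Gamma(S)|\leq dk$ cells), so a degree-$(2k-1)$ polynomial is undersized unless $d\leq 2$.

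The paper's proof avoids all of this by exploiting sequentiality, which is precisely how it sidesteps Siegel's bound. The period $|\GF|$ comes from the bottom generator, not the expander: a $d^{t}k$-independent polynomial is evaluated by fast multipoint evaluation over successive batches of points sweeping all of $\GF$, at $\poly\log k$ amortized cost per value, and the table is refilled for every batch. On top of this sits a cascade of only $t=O(\log\log k)$ explicit Capalbo et al.\ expanders, each with \emph{constant} imbalance $c$ and constant degree, extended across batches by block-diagonal stacking; its sole job is to blow each batch up by a factor $c^{t}=\poly\log k$, just enough to amortize the FFT cost down to $O(1)$ per output. Each level interleaves the previous level's tabulated batch with a fresh expander and loses a factor $d$ of independence, which is why the bottom table has size $m=O(d^{t}k)=k\poly\log k$ and why the period is $|\GF|\poly\log k$ rather than arbitrary. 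To repair your write-up you would need to replace ``fill $T$ once'' by per-batch refilling via multipoint evaluation, and replace the single giant expander by such a short, constant-imbalance cascade used only for amortization.
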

We further investigate how the space usage and seed length may be reduced by employing a probabilistic construction that has a certain probability of error:
\begin{theorem} \label{thm:existence}
For every finite field $\GF$ with constant time arithmetic and every choice of constants $\varepsilon$, $\delta > 0$ there exists a data structure 
that for every choice of $k$ is a constant time $k$-generator with failure probability $\delta$, range~$\GF$, period $|\GF|$, seed length $O(k)$, 
space usage $O(k \log^{2+\varepsilon}k)$, and initialization time $O(k \poly \log k)$.
\end{theorem}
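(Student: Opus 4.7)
The plan is to realize the two-step probabilistic construction sketched in the introduction, replacing the cascade of explicit expanders used in Theorem~\ref{thm:explicit} with a single randomly chosen bipartite expander.

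In the first step I would generate a seed table of $\Theta(k)$-independent values over $\GF$. Sample a uniformly random degree-$(k-1)$ polynomial $P$ over $\GF$, represented by its $k$ coefficients; this contributes $O(k)$ words to the seed length. Evaluate $P$ at $r = \Theta(k \log^{1+\varepsilon} k)$ fixed distinct points using the fast multipoint evaluation algorithm in time $O(r \poly \log r) = O(k \poly \log k)$, and store the results in a table $T \in \GF^r$. The entries of $T$ are $k$-wise independent by the classical property of polynomial hashing over a field.

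In the second step I would stretch $T$ into a $k$-independent sequence of length $|\GF|$ using a bipartite graph $G$ with left vertex set $[|\GF|]$, right vertex set identified with $T$, and a constant left degree $d = d(\varepsilon,\delta)$. The $i$-th output of the generator is defined as $g(i) := \sum_{j \in \Gamma(i)} T[j]$, computable in constant time by summing $d = O(1)$ field elements. The standard expander-to-independence reduction (as used in Siegel's construction) guarantees that the sequence $g(1),g(2),\dots$ is $k$-independent provided $G$ is a unique-neighbor expander on every subset of the left side of size at most $k$. A routine probabilistic argument shows that if $d$ is a sufficiently large constant depending on $\varepsilon$ and $\delta$ and $r = \Theta(k \log^{1+\varepsilon} k)$, then a truly random left-$d$-regular bipartite graph with these parameters satisfies the expansion property for all such subsets with probability at least $1-\delta$, via a union bound over the $\binom{|\GF|}{k}$ candidate subsets combined with a Chernoff-style bound on the per-subset failure probability.

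The main obstacle, and the real content of the argument, is that $G$ cannot be stored explicitly within the $O(k \log^{2+\varepsilon} k)$ space budget: naively tabulating the $d$ neighbors of each of the $|\GF|$ left vertices would cost $\Omega(|\GF|)$. My plan is to define the neighborhoods implicitly via a hash function $h\colon [|\GF|]\to[r]^d$ sampled from a simple family (e.g.\ tabulation hashing) whose seed fits in $o(k)$ words, and to show that the induced graph inherits the unique-neighbor expansion property with probability at least $1-\delta$. Establishing this is the hard part and is exactly the type of ``expansion from limited independence'' statement that Chapter~\ref{ch:expanders} is designed to supply; the extra $\log^\varepsilon k$ factor in the table size provides the slack needed to absorb the weakening of the hash family relative to full randomness. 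Granting such a result, each output requires one evaluation of $h$ plus $d = O(1)$ lookups and additions in $\GF$, giving constant-time generation; the seed length is $O(k)$ words (polynomial coefficients plus hash seed), the total state stored after initialization is $O(k \log^{2+\varepsilon} k)$ field elements, and the initialization time is dominated by multipoint evaluation at $O(k \poly \log k)$.
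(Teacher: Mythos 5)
The central step of your plan fails, even if you grant yourself a fully random graph. You take the left vertex set of the expander to be all of $[|\GF|]$, the right side to be a table of size $r=\Theta(k\log^{1+\varepsilon}k)$, and constant left degree $d$. Once $|\GF| > r^{d}$ --- which is the typical regime, since the theorem must hold for every $k$, in particular for $k\ll|\GF|^{1/d}$ --- two left vertices necessarily have identical neighbor sets by pigeonhole, so their outputs coincide with probability one and the sequence is not even pairwise independent; a fortiori no unique-neighbor expansion on subsets of size up to $k$ can hold. Quantitatively, the ``routine'' union bound you invoke ranges over $\binom{|\GF|}{i}$ subsets for each $i\leq k$, while the per-subset failure probability is only of order $(id/r)^{\Theta(id)}$, so it forces $r=|\GF|^{\Omega(1/d)}$, i.e.\ a right side polynomial in the domain size rather than in $k$. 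In effect you are constructing a constant-time $k$-independent \emph{hash function} over $[|\GF|]$ in space $k\poly\log k$, which is exactly what Siegel's cell-probe lower bound rules out; consequently the ``expansion from limited independence'' lemma you defer to cannot exist for these parameters, because even a truly random graph with this shape fails.

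The paper's proof escapes this precisely by exploiting sequential output, which your construction never uses. Its random expander has left side of size only $cm=\Theta(k\log^{2+\varepsilon}k)$ --- one output batch --- and right side $m=O(k\log^{\tilde{\varepsilon}}k)$ for some $\tilde{\varepsilon}<\varepsilon$, with the polylogarithmic imbalance $c$ chosen large enough that refilling the $m$-table by multipoint evaluation of a single degree-$\Theta(k)$ polynomial over $\GF$ is amortized to $O(1)$ work per emitted value (note also that Lemma~\ref{rng:lem:expanderhashing} needs the table to be $dk$-independent, so degree $\Theta(dk)$ rather than $k-1$; harmless since $d=O(1)$, but your degree-$(k-1)$ choice is too small). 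The same graph, tabulated at initialization, is reused for every batch with fresh table contents; any $k$ outputs touch at most $dk$ cells corresponding to distinct evaluation points of that one polynomial, so the peeling argument through $k$-uniqueness applies batch by batch, and the union bound only has to cover subsets of a left side of size $k\poly\log k$ --- this is where the constant degree $d=d(\varepsilon,\delta)$ and failure probability $\delta$ come from. The graph is drawn using $O(k)$-independent randomness to keep the seed at $O(k)$, and storing it accounts for the $O(k\log^{2+\varepsilon}k)$ space and the $O(k\poly\log k)$ initialization time. Your design would need to be restructured along these lines; as written, the step you postpone as ``the hard part'' is not merely hard but false.
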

Finally, we improve existing $k$-generators with optimal space complexity:
\begin{theorem} \label{thm:fastmultipoint}
For every finite field $\GF$ that supports computing the discrete Fourier transform of length $k$ in $O(k \log k)$ operations, 
there exists a data structure that, for every choice of $k$ and given a primitive element $\omega$, 
is an explicit $O(\log k)$~time $k$-generator with range $\GF$, period $|\GF|$, seed length $k$, space usage $O(k)$, and initialization time $O(k \log k)$.
\end{theorem}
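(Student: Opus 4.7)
The plan is to use the classical $k$-independent hash family based on uniformly random polynomials of degree less than $k$ over $\GF$, but to arrange the evaluation points so that each batch of $k$ consecutive outputs can be computed in $O(k \log k)$ time via a single length-$k$ DFT, giving amortized $O(\log k)$ per output. The seed consists of coefficients $a_0, \ldots, a_{k-1}$ drawn independently and uniformly from $\GF$, so the polynomial $f(x) = \sum_{i=0}^{k-1} a_i x^i$ is a $k$-independent hash function and its values at any $k$ distinct evaluation points form a $k$-independent sequence. The space and seed length are therefore $O(k)$ field elements.

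To exploit the length-$k$ FFT we use the group structure of $\GF \setminus \{0\}$. The hypothesis that a length-$k$ DFT is available implies $k \mid |\GF| - 1$, so $\zeta := \omega^{(|\GF|-1)/k}$ is a primitive $k$-th root of unity. The multiplicative group decomposes into $q = (|\GF|-1)/k$ cosets of $\langle \zeta \rangle$, namely $\omega^j \langle \zeta \rangle$ for $j = 0, \ldots, q-1$. For the $j$-th batch we evaluate $f$ at the $k$ points of this coset, $\omega^j \zeta^l$ for $l = 0, \ldots, k-1$. Expanding,
\begin{equation*}
f(\omega^j \zeta^l) = \sum_{i=0}^{k-1} (a_i \omega^{ij}) \zeta^{il},
\end{equation*}
which is precisely the length-$k$ DFT applied to the twisted coefficient vector $c_i^{(j)} := a_i \omega^{ij}$. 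The data structure maintains this vector together with a length-$k$ output buffer, plus the precomputed powers $(\omega^0, \omega^1, \ldots, \omega^{k-1})$ (built once in $O(k)$ time during initialization). Advancing from batch $j$ to batch $j+1$ costs $O(k)$ via the componentwise update $c_i^{(j+1)} = c_i^{(j)} \cdot \omega^i$ and $O(k \log k)$ for the FFT that refills the buffer; each call to the generator then returns the next buffer entry in amortized $O(\log k)$ time. Letting $j$ run through $0, 1, \ldots, q-1$ enumerates $\GF \setminus \{0\}$ exactly once, giving period $|\GF|-1$, which we extend to $|\GF|$ by also emitting $f(0) = a_0$.

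Correctness and $k$-independence follow immediately from the randomness of the coefficients together with the fact that all evaluation points used across all batches are distinct, so I do not anticipate a substantial obstacle. The only delicate points are mostly bookkeeping: verifying that the coset traversal lines up with the primitive element $\omega$ across batch boundaries, and handling the initialization so that the stated $O(k \log k)$ bound covers both the coefficient sampling and the first FFT. If worst-case rather than amortized $O(\log k)$ generation time is desired, one may deamortize in the standard way by interleaving the computation of the next batch with the outputs of the current one.
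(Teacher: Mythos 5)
Your proposal is correct and follows essentially the same route as the paper: decompose $\GF^*$ into the $(|\GF|-1)/k$ cosets of the subgroup generated by $\omega_k=\omega^{(|\GF|-1)/k}$, observe that evaluating the degree-$(k-1)$ polynomial on the coset $\omega^j\langle\omega_k\rangle$ is a length-$k$ DFT applied to the twisted coefficients $a_i\omega^{ij}$, and update the twist in $O(k)$ per batch for amortized $O(\log k)$ per output. Your explicit handling of the element $0$ and of the divisibility condition $k\mid|\GF|-1$ are minor bookkeeping points the paper leaves implicit.
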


Table~\ref{rng:tab:results} summarizes our results along with previous methods of generating sequences of $k$-independent values over $\GF$. 
The length of the output is at least~$|\GF|$ for all the different methods.

\begin{table}[htpb]
  \centering
  \small
    \begin{tabular}{llll}
    \toprule
	\textbf{Construction}                     & \textbf{Time}            & \textbf{Space}                      & \textbf{Comment}  \\ \midrule
    Polynomials \cite{joffe1974,wegman1981}   & $O(k)$                   & $O(k)$                              & \\
	Multipoint \cite{gathen2013}              & $O(\log^2 k \log\log k)$ & $O(k \log k)$                   	   & \\
	Multipoint \cite{bostan2005}              & $O(\log k \log\log k)$   & $O(k)$                          	   & Requires $\omega$. \\ 
	Siegel \cite{siegel2004}                  & $O(1)$                   & $O(|\GF|^{\varepsilon})$            & Probabilistic. \\
	Theorem \ref{thm:explicit}                & $O(1)$                   & $k \poly \log k$                    & Explicit.  \\
	Theorem \ref{thm:existence}               & $O(1)$                   & $O(k \log^{2 + \varepsilon}k)$      & Probabilistic. \\
	Theorem \ref{thm:fastmultipoint}          & $O(\log k)$              & $O(k)$                              & Requires $\omega_{k}$, FFT. \\
   \bottomrule
    \end{tabular}
\caption{
Overview of generators that produce a $k$-independent sequence over a finite field $\GF$. 
We use $\varepsilon$ to denote an arbitrary positive constant and $\omega$ and $\omega_{k}$ to denote, respectively, a primitive element and a $k$-th root of unity of $\GF$.
The unit for space is the number of elements of $\GF$ that need to be stored, i.e., a factor $\log_2 |\GF|$ from the number of bits. 
Probabilistic constructions rely on random generation of objects for which no explicit construction is known, and may fail with some probability.
}
\label{rng:tab:results}
\end{table}
\paragraph{Overview of paper}
In Section \ref{rng:sec:preliminaries} we define \mbox{$k$-generators} and related concepts and review results that lead up to our main results.
Section \ref{sec:explicit} presents the details of our explicit construction of constant time generators. 
In Section \ref{sec:probabilistic} we apply the same techniques with a probabilistic expander construction to obtain generators with improved space and randomness complexity. 
Section \ref{sec:faster} presents an algorithm for evaluating a polynomial over all elements of $\GF$ that improves existing generators with optimal space.
Section \ref{sec:wordRAM} shows how arithmetic over $\GF_{p}$ can be implemented in constant time on a standard word RAM with integer multiplication and also reviews algorithms and the state of hardware support for $\GF_{2^{w}}$.
Section \ref{sec:loadbalancing} applies our generator to improve the time-space tradeoff of previous solutions to a load balancing problem.
Section \ref{sec:experiments} presents experimental results on the generation time of different $k$-generators for a range of values of $k$. 
\section{Preliminaries} \label{rng:sec:preliminaries}
We begin by defining two fundamental concepts:
\begin{definition}
A sequence $(X_{1}, X_{2}, \dots, X_{n})$ of $n$ random variables with finite range $R$ is an \emph{$(n,k)$-sequence} if the variables at every set of $k$ positions in the sequence are independent and uniformly distributed over $R$. 
\end{definition}
\begin{definition}
A family of functions ${\mathcal{F} \subseteq \{f \mid f \colon U \to R \}}$ is \emph{$k$-independent} if for every set of $k$ distinct inputs $x_{1}, x_{2}, \dots, x_{k}$ 
it holds that $f(x_{1}), f(x_{2}), \dots, f(x_{k})$ are independent and uniformly distributed over $R$ when $f$ is selected uniformly at random from $\mathcal{F}$. 
We say that a function $f$ selected uniformly at random from $\mathcal{F}$ is a \emph{$k$-independent function}.
\end{definition}
We now give a formal definition of the generator data structure. 
\begin{definition}
A \emph{$k$-generator} with range $R$, period $n$ and failure probability $\delta$ is a data structure with the following properties:
\begin{itemize}
\item It supports an initialization operation that takes a random seed $s$ as input.
\item After initialization it supports an \texttt{emit()} operation that returns a value from $R$. 
\item There exists a set $B$ such that $\Pr[s \in B] \leq \delta$ and conditioned on $s \not\in B$ the sequence $(X_{1}, X_{2}, \dots, X_{n})$ of values returned by \texttt{emit()} is an $(n,k)$-sequence. 
\end{itemize}
A $k$-generator is \emph{explicit} if the initialization and emit operation has time complexity $\poly k$ and the probability of failure is zero. 
We refer to a $k$-generator as a constant time $k$-generator if the \texttt{emit()} operation has time complexity $O(1)$, not dependent on $k$.
\end{definition}

A $k$-generator differs from a data structure for representing a $k$-independent hash function by only allowing sequential access to the underlying $(n,k)$-sequence. 
It is this restriction on generators that allows us to obtain a better time-space tradeoff for the problem of generating $k$-independent variables than is possible by using a $k$-independent hash function directly as a generator.  
We are interested in the following parameters of $k$-generators: seed length, period, probability of failure, space needed by the data structure, the time complexity of the initialization operation and the time complexity of a single \texttt{emit()} operation.

\paragraph{Model of computation.}
Our results are stated in the word RAM model of computation with word length ${w = \Theta(\log|\GF|)}$ bits. 
In addition to the standard bit manipulation and integer arithmetic instructions, we also assume the ability to perform arithmetic operations $(+, -, \times)$ over $\GF$ in constant time. 
In the context of our results that use abelian groups $(A, +)$ we assume that an element of $A$ can be stored in a constant number of words and that addition can be performed in constant time. 

Let $\GF_{q}$ denote a field of cardinality $q = p^{z}$ for $p$ prime and $z$ a positive integer.
Constant time arithmetic in $\GF_{p}$ is supported on a standard word RAM with integer multiplication \cite{granlund1994}. 
Section \ref{sec:wordRAM} presents additional details about the algorithms required to implement finite field arithmetic over $\GF_p$ and $\GF_{2^{w}}$ and how they relate to a standard word RAM with integer multiplication.
\subsection{$k$-independent functions from the literature} \label{sec:hashing}
We now review the literature on $k$-independent functions and how they can be used to construct $k$-generators.
We distinguish between a $k$-independent function $f : U \to R$ and a $k$-independent hash function by letting the latter refer to a data structure 
that after initialization supports random access to the $(n,k)$-sequence defined by evaluating $f$ over~$U$.  
There exists an extensive literature that focuses on how to construct $k$-independent hash functions that offer a favorable tradeoff between representation space and evaluation time \cite{dietzfelbinger2012}. 
We note that a family of $k$-independent hash functions can be used to construct a $k$-generator by setting the seed to a random function in the family.

\paragraph{Constant time $k$-independent hash functions.}
A fundamental cell probe lower bound by Siegel \cite{siegel2004} shows that a data structure to support constant time evaluation of $f$ on every input in $U$ 
cannot use less than $\Omega(|U|^{\epsilon})$ space for some constant $\epsilon > 0$. 
This bound holds even for amortized constant evaluation time over functions in the family and elements in the domain.
From Siegel's lower bound, it is clear that we cannot use $k$-independent hash functions directly to obtain a constant time $k$-generator that uses only $O(k \poly \log k)$ words of space.

Known constructions of $k$-independent hash functions with constant evaluation time are based on expander graphs.
Siegel \cite{siegel2004} gave a probabilistic construction of a family of \mbox{$k$-independent} hash functions in the word RAM model based on an iterated product of bipartite expander graphs. 
Thorup \cite{thorup2013} showed that a simple tabulation hash function with high probability yields the type of expander graphs required by Siegel's construction.
Unfortunately only randomized constructions of the expanders required by these hash functions are known, introducing a positive probability of error in \mbox{$k$-generators} based on them.

\paragraph{Polynomials.}
Here we briefly review the classic construction of $k$-independent functions based on polynomials over finite fields.  
\begin{lemma}[Joffe \cite{joffe1974}, Carter and Wegman \cite{wegman1981}] \label{lem:kpoly}
For every choice of finite field $\GF$ and every $k \leq |\GF|$, let $\mathcal{H}_{k} \subset \GF[X]$ be the family of polynomials of degree at most $k-1$ over $\GF$.
${\mathcal{H}_{k} \subset \{ f \mid f \colon \GF \to \GF \}}$ is a family of $k$-independent functions.
\end{lemma}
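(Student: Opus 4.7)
The plan is to verify $k$-independence directly by a counting argument based on Lagrange interpolation. First I would observe that $\mathcal{H}_k$ can be parameterized by the $k$-tuple of coefficients $(a_0,a_1,\dots,a_{k-1}) \in \GF^k$, so $|\mathcal{H}_k| = |\GF|^k$ and sampling $f$ uniformly from $\mathcal{H}_k$ is the same as sampling these $k$ coefficients independently and uniformly from $\GF$.

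Next, fix any $k$ distinct inputs $x_1,\dots,x_k \in \GF$ and any target values $y_1,\dots,y_k \in \GF$. I would show that there is exactly one polynomial $f \in \mathcal{H}_k$ satisfying $f(x_i) = y_i$ for all $i \in \{1,\dots,k\}$. Existence is given by the Lagrange interpolation formula
\begin{equation*}
f(X) = \sum_{i=1}^{k} y_i \prod_{j \neq i} \frac{X - x_j}{x_i - x_j},
\end{equation*}
where the denominators are nonzero because the $x_i$ are distinct and $\GF$ is a field. Uniqueness follows from the standard fact that a nonzero polynomial over a field of degree at most $k-1$ has at most $k-1$ roots: if $f_1, f_2 \in \mathcal{H}_k$ both interpolate the same $k$ pairs, then $f_1 - f_2 \in \mathcal{H}_k$ has at least $k$ roots, forcing $f_1 = f_2$.

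Finally, combining these two facts, the map $f \mapsto (f(x_1),\dots,f(x_k))$ is a bijection from $\mathcal{H}_k$ to $\GF^k$, so for $f$ chosen uniformly from $\mathcal{H}_k$,
\begin{equation*}
\Pr[f(x_1) = y_1 \land \dots \land f(x_k) = y_k] = \frac{1}{|\mathcal{H}_k|} = \frac{1}{|\GF|^k},
\end{equation*}
which is exactly the $k$-independence condition. There is no real obstacle here; the only nontrivial ingredient is the ``at most $k-1$ roots'' property of polynomials over a field, which is the reason the statement needs $\GF$ to be a field rather than a general commutative ring.
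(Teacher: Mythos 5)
Your proof is correct: the bijection between coefficient vectors and value vectors at $k$ distinct points, established via Lagrange interpolation for existence and the degree/root bound for uniqueness, is exactly what $k$-independence requires. The paper itself states this lemma with a citation and no proof, and the only hint it gives (in the chapter on $k$-independent hashing) is that the result ``follows from properties of the Vandermonde matrix: every subset of $k$ rows is linearly independent'' --- which is just the linear-algebra phrasing of the same bijection you construct explicitly, so your argument is essentially the standard route the paper has in mind.
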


An advantage of using families of polynomials as hash functions is that they use near optimal randomness, allow any choice of $k \leq |\GF|$, and have no probability of failure. 
It can also be noted that in the case where $k = O(\log |\GF|)$ and we are restricted to linear space $O(k)$, 
polynomial hash functions evaluated using Horner's scheme are optimal \mbox{$k$-independent} hash functions \cite{larsen2012, siegel2004}.   

Using slightly more space and for sufficiently large $k$, a data structure by Kedlaya and Umans \cite{kedlaya2008} supports evaluation of a polynomial of degree $k$ over $\GF$.
The space usage and preprocessing time of their data structure is $k^{1 + \epsilon}\log^{1 + o(1)}|\GF|$ for constant $\epsilon > 0$.
After preprocessing a polynomial $f$, the data structure can evaluate $f$ in an arbitrary point of $\GF$ using time $\poly(\log k)\log^{1 + o(1)}|\GF|$.         

\paragraph{Multipoint evaluation.}
Using algorithms for multipoint evaluation of polynomials we are able to obtain a \mbox{$k$-generator} with $\poly \log k$ generation time and space usage that is linear in $k$. 
Multipoint evaluation of a polynomial~${f \in \GF[X]}$ of degree at most $k-1$ in $k$ arbitrary points of~$\GF$ has a time complexity of $O(k \log^{2} k \log \log k)$ in the word RAM model that supports field operations~\mbox{\cite[Corollary 10.8]{gathen2013}}. 
Bostan and Schost \cite{bostan2005} mention an algorithm for multipoint evaluation of $f$ over a geometric progression of $k$ elements with running time $O(k \log k \log \log k)$. 
In order to use this method to construct a $k$-generator with period $|\GF|$ 
it is necessary to know a primitive element $\omega$ of $\GF_{q}$ so we can perform multipoint evaluation over $\GF^{*} = \{\omega^{0}, \omega^{1}, \dots, \omega^{q-2} \}$. 
Given the prime factorization of $q - 1$ there exists a Las Vegas algorithm for finding $\omega$ with expected running time $O(\log^{4} q)$~\mbox{\cite[Chapter 11]{shoup2009}}. 
In the following lemma we summarize the properties of $k$-generators based on multipoint evaluation of polynomials over finite fields.         

\begin{lemma}[{Gathen and Gerhard \cite[Corollary 10.8]{gathen2013}, Bostan and Schost \cite{bostan2005}}] \label{lem:multipoint}
For every finite field $\GF$ there exists for every $k \leq |\GF|$ and bijection $\pi : [|\GF|] \to \GF$ an explicit $k$-generator with period $|\GF|$ and seed length $k$.
The space required by the generator and the initialization and generation time depends on the choice of $\pi$ and multipoint evaluation algorithm.
\begin{itemize}
\item For arbitrary choice of $\pi$ there exists a $k$-generator with generation time $O(\log^{2} k \log \log k)$, intialization time $O(k \log^{2} k \log \log k)$ and space usage $O(k \log k)$. 
\item Given a primitive element $\omega$ of $\GF$ and a bijection $\pi(i) = \omega^{i}$ there exists a generator with generation time $O(\log k \log \log k)$, initialization time $O(k \log k \log \log k)$ and space usage $O(k)$.
\end{itemize}
\end{lemma}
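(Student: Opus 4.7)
The plan is to construct the generator directly from Lemma~\ref{lem:kpoly}. Sample coefficients $a_0,\dots,a_{k-1}$ uniformly and independently from $\GF$ as the seed, form the polynomial $f(X) = \sum_{i=0}^{k-1} a_i X^i$, and on the $j$-th call to \texttt{emit()} return $f(\pi(j-1))$. Since the family of degree-at-most-$(k-1)$ polynomials is $k$-independent, any $k$ values in the output sequence $f(\pi(0)), f(\pi(1)), \dots$ are independent and uniform over $\GF$; hence the sequence is an $(|\GF|,k)$-sequence. The seed length is $k$ elements of $\GF$ and the period is $|\GF|$ because $\pi$ is a bijection onto $\GF$. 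All of this is immediate; the real content of the lemma is how quickly the emissions can be delivered.

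To bound the per-emit time I would use a standard blocking/buffering trick. Precompute the values $f(\pi(jk)), f(\pi(jk+1)), \dots, f(\pi((j+1)k-1))$ in a batch using a single invocation of a multipoint evaluation algorithm, store the $k$ outputs in a buffer, and serve the next $k$ calls to \texttt{emit()} from the buffer in $O(1)$ time each. When the buffer empties, compute the next block. The amortized time per emit then equals (cost of batch evaluation)$/k$, and a standard deamortization that spreads the next batch's work across the current block of $k$ emissions converts the amortized bound to a worst-case bound without changing the asymptotics.

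For the first bullet, plug in the subproduct-tree multipoint evaluation algorithm of Gathen and Gerhard~\cite[Cor.~10.8]{gathen2013}, which evaluates a degree-$(k-1)$ polynomial at $k$ arbitrary points of $\GF$ in $O(k \log^2 k \log\log k)$ field operations using $O(k \log k)$ storage (the tree has $O(\log k)$ levels, each holding $O(k)$ field elements). This yields amortized generation time $O(\log^2 k \log\log k)$, initialization time equal to one batch cost $O(k \log^2 k \log\log k)$, and space $O(k \log k)$. For the second bullet, when $\pi(i) = \omega^i$ the evaluation points form a geometric progression, so the Bostan--Schost algorithm~\cite{bostan2005} applies and runs in $O(k \log k \log\log k)$ operations using only $O(k)$ space, giving amortized generation time $O(\log k \log\log k)$ with space $O(k)$.

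The hard part of the proof is essentially delegated to the cited multipoint evaluation algorithms, so there is no substantial obstacle to overcome. The only point that needs a brief sanity check is that the two bijections in the statement exhaust $\GF$ (for the second bullet, $\omega$ is primitive so $\{\omega^0,\omega^1,\dots,\omega^{|\GF|-2}\}=\GF^*$; extending to all of $\GF$ costs one extra precomputed value $f(0)$ at the end of the period, which does not affect the stated bounds).
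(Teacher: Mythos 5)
Your proposal is correct and matches the paper's (implicit) argument: the lemma is stated there as a direct summary of combining the $k$-independent polynomial family of Lemma~\ref{lem:kpoly} with batched multipoint evaluation — Gathen--Gerhard for arbitrary $\pi$ and Bostan--Schost over the geometric progression $\pi(i)=\omega^i$ — with the per-emit cost obtained by amortizing one batch of $k$ evaluations over $k$ calls. Your added remarks on deamortization and on handling the element $0$ outside $\GF^*$ are minor points the paper glosses over, and they do not change the approach.
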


\paragraph{Space lower bounds.}
Since randomness can be viewed as a resource like time and space, we are naturally interested in generators that can output long $k$-independent sequences using as few random bits as possible. 
Families of \mbox{$k$-independent functions} $f : U \rightarrow R$ with $U = R$ and $k \leq |U|$ will trivially have to use at least $k \log |U|$ random bits --- a bound matched by polynomial hash functions. 
We are often interested in generators with $|U| \gg |R|$, for example if we wish to use a generator for randomized load balancing in the heavily loaded case. 
A lower bound by Chor et al.~\cite{chor1985} shows that even in this case the minimal seed length required for $k$-independence is $\Omega(k \log |U|)$ for every $|R| \leq |U|$.
\subsection{Expander graphs}
All graphs in this paper are bipartite with $cm$ vertices on the left side, $m$ vertices on the right side and left outdegree $d$.
Graphs are specified by their edge function $\Gamma : [cm] \times [d] \to [m]$ where the notation $[n]$ is used to denote the set $\{0,1,\dots,n-1\}$.
Let $S$ be a subset of left side vertices. 
For convenience we use $\Gamma(S)$ to denote the neighbors of $S$. 
\begin{definition}
The bipartite graph $\Gamma : [cm] \times [d] \to [m]$ is \mbox{\emph{$(c,m,d,k)$-unique}} ($k$-unique) 
if for every $S \subseteq [cm]$ with $|S| \leq k$ there exists $y \in \Gamma(S)$ such that $y$ has a unique neighbor in $S$. 
An expander graph is \emph{explicit} if it has a deterministic description and $\Gamma$ is computable in time polynomial in $\log cm + \log d$. 
\end{definition}
The performance of our generator constructions are directly tied to the parameters of such expanders. 
In particular, we would like explicit expanders that simultanously have a low outdegree $d$, are highly unbalanced and are $k$-unique for $k$ as close to $m$ as possible.
A direct application of a result by Capalbo et al. \cite[Theorem 7.1]{capalbo2002} together with an equivalence relation between different types of expander graphs from Ta-Shma et al. \cite[Theorem 8.1]{tashma2007} yields explicit constructions of unbalanced unique neighbor expanders.\footnote{We state the results here without the restriction from \cite{capalbo2002} that $c$ and $m$ are powers of two. We do this to simplify notation and it only affects constant factors in our results.}  
\begin{lemma}[Capalbo et al. {\cite[Theorem 7.1]{capalbo2002}}] \label{lem:explicit}
For every choice of $c$ and $m$ there exists a $(c,m,d,k)$-unique expander with $d = \poly \log c$ and $k = \Omega(m/d)$. For constant $c$ the expander is explicit. 
\end{lemma}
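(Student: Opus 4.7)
The plan is to derive Lemma \ref{lem:explicit} by plugging an explicit unbalanced lossless expander construction of Capalbo et al. into the standard equivalence between strong enough vertex expansion and unique-neighbor expansion. Essentially, I want to produce a bipartite graph $\Gamma : [cm] \times [d] \to [m]$ that expands subsets of size at most $k$ by a factor strictly greater than $d/2$, since this already forces the existence of a unique neighbor in $\Gamma(S)$ for every such $S$.

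First, I would invoke Capalbo et al.'s Theorem 7.1, which for every $c$ and $m$ and every small constant $\varepsilon > 0$ gives a bipartite graph with left set $[cm]$, right set $[m]$, left degree $d = \poly \log c$ (in the parameter regime where the imbalance $c$ drives the degree), and the \emph{lossless} expansion property that $|\Gamma(S)| \geq (1-\varepsilon) d |S|$ for every $S \subseteq [cm]$ with $|S| \leq k$, where $k = \Omega(m/d)$. For constant $c$ their construction is explicit in the sense of Definition above (its neighbor function is computable in time polynomial in $\log(cm) + \log d$).

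Next, I would convert vertex expansion into unique-neighbor expansion via the following elementary pigeonhole argument, which is also the content of Ta-Shma et al.'s Theorem 8.1: fix $\varepsilon < 1/2$ once and for all, so $(1-\varepsilon)d > d/2$. The total number of edges leaving any $S$ with $|S| \leq k$ is exactly $d|S|$. If every $y \in \Gamma(S)$ had at least two neighbors in $S$, then $d|S| \geq 2|\Gamma(S)| > d|S|$, a contradiction. Therefore some $y \in \Gamma(S)$ is a unique neighbor, and the graph is $(c,m,d,k)$-unique with $d = \poly \log c$ and $k = \Omega(m/d)$, as desired. Explicitness for constant $c$ is inherited verbatim from Capalbo et al.

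The main obstacle I expect is bookkeeping the parameter dependence $d = \poly \log c$ rather than $\poly \log (cm)$. Capalbo et al.'s theorem states the degree in terms of $\log N$ where $N = cm$, so I need to be in the regime where the $\log m$ contribution is absorbed (for instance because the construction's degree depends only on $\log(N/M)$ in this range, or because $m$ is treated as a function of $c$). This is essentially a matter of unpacking their statement and choosing the correct variant of their construction; once the parameter ranges line up, the rest of the argument is the short pigeonhole reduction above.
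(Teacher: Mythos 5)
Your proposal matches the paper's treatment: the lemma is obtained exactly as you describe, by citing Capalbo et al.'s Theorem 7.1 (degree polylogarithmic in the imbalance $N/M = c$, expansion on sets of size $\Omega(m/d)$) together with the standard conversion from lossless vertex expansion to unique-neighbor expansion via Ta-Shma et al.'s equivalence, which is the pigeonhole step you spell out. Your worry about $d = \poly\log c$ versus $\poly\log(cm)$ is resolved directly by the statement of Capalbo et al.'s theorem, whose degree depends only on the ratio of the two sides; the paper only adds a footnote that dropping their power-of-two restriction on $c$ and $m$ affects constant factors.
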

We note the following simple technique for constructing a larger $k$-unique expander from a smaller $k$-unique expander.
\begin{lemma} \label{lem:stacking}
Let $\Gamma$ be a $(c,m,d,k)$-unique expander with $cm \times m$ adjacency matrix $M$.
For any positive integer $b$ define $\Gamma^{(b)}$ as the bipartite graph with block diagonal adjacency matrix $M^{(b)} = \diag(M, \dots, M)$ with $b$ blocks in the diagonal.
Then $\Gamma^{(b)}$ is a $(c, bm, d, k)$-unique expander.
\end{lemma}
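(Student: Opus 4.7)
The plan is to exploit the block-diagonal structure: $\Gamma^{(b)}$ is just $b$ vertex-disjoint copies of $\Gamma$ glued side by side, so every local property of $\Gamma$ lifts immediately to $\Gamma^{(b)}$ as long as we can localize the witness to a single block.

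First I would verify the basic parameters. The adjacency matrix $M^{(b)} = \diag(M, \dots, M)$ has dimensions $bcm \times bm$, so $\Gamma^{(b)}$ has $bcm$ left vertices and $bm$ right vertices, matching the aspect ratio $c = bcm / bm$ required for a $(c, bm, d, \cdot)$-expander. Since each row of $M^{(b)}$ is a row of $M$ (padded with zeros outside its diagonal block), the left outdegree is preserved at $d$.

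Next I would verify the $k$-uniqueness property. Given any $S \subseteq [bcm]$ with $1 \leq |S| \leq k$, partition $S = S_1 \cup \dots \cup S_b$ where $S_i$ consists of the vertices of $S$ lying in the $i$-th left block $[(i-1)cm, icm)$. Pick any index $i$ with $S_i \neq \varnothing$; by construction $|S_i| \leq |S| \leq k$. Applying the $(c,m,d,k)$-uniqueness of $\Gamma$ to $S_i$, viewed as a subset of the left vertices of the $i$-th copy, yields a right vertex $y$ in the $i$-th right block with exactly one neighbor in $S_i$ under $\Gamma$. Because $M^{(b)}$ has no edges between distinct blocks, $y$ has no neighbors at all in $S_j$ for $j \neq i$, so the neighborhood of $y$ in $S$ under $\Gamma^{(b)}$ coincides with its neighborhood in $S_i$ under $\Gamma$, which is a singleton. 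Thus $y$ is a unique-neighbor witness for $S$, establishing the $k$-uniqueness of $\Gamma^{(b)}$.

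There is no real obstacle here beyond bookkeeping; the only thing to be careful about is that uniqueness is preserved (not merely expansion), which follows because disjoint blocks contribute no spurious extra neighbors of $y$ in $S$. The same argument would in fact give a stronger conclusion, namely that $\Gamma^{(b)}$ is $k$-unique in each block separately, but the claimed statement is the cleanest form for the applications in Section~\ref{sec:explicit}.
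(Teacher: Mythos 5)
Your proof is correct, and it is exactly the straightforward block-decomposition argument the paper has in mind: the lemma is stated there without proof as a simple observation, and your localization of the witness to a single nonempty block (using that cross-block edges are absent, so no spurious neighbors are introduced) is precisely the missing bookkeeping. Nothing further is needed.
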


\paragraph{From expanders to independence.}
By associating each right vertex in a $(c,m,d,k)$-unique expander with a position in a $(m,dk)$-sequence over an abelian group $(A,+)$, we can generate a $(cm,k)$-sequence over $A$.
This approach was pioneered by Siegel and has been used in different constructions of families of $k$-independent hash functions~\cite{siegel2004, thorup2013}.    
\begin{lemma}[Siegel {\cite[Lemma 2.6, Corollary 2.11]{siegel2004}}] \label{rng:lem:expanderhashing}
Let $\Gamma : [cm] \times [d] \to [m]$ be a $k$-unique expander and let $h : [m] \rightarrow A$ be a $dk$-independent function with range an abelian group. 
Let $g : [cm] \rightarrow A$ be defined as  
\begin{equation*}
g(x) = \sum_{y \in \Gamma(\{x\})}h(y).
\end{equation*}
Then $g$ is a $k$-independent function.
\end{lemma}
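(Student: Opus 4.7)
The plan is to fix any $k$ distinct inputs $x_1,\ldots,x_k\in[cm]$, set $S=\{x_1,\ldots,x_k\}$ and $V=\Gamma(S)$, and show that the random vector $(g(x_1),\ldots,g(x_k))$ is uniformly distributed on $A^k$. Since $|V|\le dk$ and $h$ is $dk$-independent, the restriction $h|_V$ is uniform on $A^{|V|}$. Each $g(x_i)=\sum_{y\in \Gamma(\{x_i\})} h(y)$ is a fixed $\mathbb{Z}$-linear (in fact $0/1$) combination of the entries of $h|_V$, so the map $\phi:A^{|V|}\to A^k$ given by $v\mapsto (g(x_1),\ldots,g(x_k))$ is a group homomorphism described by a $k\times |V|$ incidence matrix $M$ (with $M_{i,y}=1$ iff $y\in\Gamma(\{x_i\})$). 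A surjective homomorphism of finite abelian groups pushes the uniform distribution on the domain forward to the uniform distribution on the codomain, so it suffices to show $\phi$ is surjective, i.e., that $M$ admits an invertible $k\times k$ integer submatrix.

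The key step is to produce such a submatrix by peeling via the $k$-unique property. Apply $k$-uniqueness to $S_k:=S$ to obtain a right vertex $y_k\in V$ whose unique neighbor in $S_k$ is some $x_{i_k}$; then set $S_{k-1}:=S_k\setminus\{x_{i_k}\}$, apply $k$-uniqueness again (valid since $|S_{k-1}|<k$), obtain $y_{k-1}$ with unique neighbor $x_{i_{k-1}}\in S_{k-1}$, and iterate down to $S_1$. The vertices $y_1,\ldots,y_k$ are automatically distinct: once $y_j$ is selected as the unique neighbor witness for $x_{i_j}$ in $S_j$, removing $x_{i_j}$ leaves $y_j$ with \emph{no} neighbors in $S_{j-1}$, so $y_j\notin\Gamma(S_{j'})$ for any $j'<j$ and cannot be chosen at a later step. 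By construction, the $k\times k$ submatrix of $M$ with rows indexed by $(x_{i_1},\ldots,x_{i_k})$ and columns by $(y_1,\ldots,y_k)$ is lower triangular with ones on the diagonal: column $y_j$ restricted to rows $\{x_{i_1},\ldots,x_{i_j}\}$ has a unique nonzero entry in the diagonal row $x_{i_j}$. Such a matrix is invertible over $\mathbb{Z}$, and back-substitution produces, for any target $w\in A^k$, a preimage $v\in A^{|V|}$ supported on $\{y_1,\ldots,y_k\}$ with $\phi(v)=w$. Thus $\phi$ is surjective, which finishes the proof.

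The main conceptual obstacle is justifying the distinctness of the peeled right vertices $y_1,\ldots,y_k$ and translating the peeling order into the triangular structure of the incidence submatrix; everything else is routine linear algebra over abelian groups combined with the observation that $dk$-independence of $h$ delivers uniformity of $h|_V$. No quantitative estimates or delicate calculations are required.
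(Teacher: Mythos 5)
Your proof is correct. Note that the paper itself gives no proof of this lemma --- it is quoted from Siegel's paper --- so there is nothing internal to compare against; your argument is a valid self-contained proof and rests on exactly the standard idea behind Siegel's: repeatedly invoke $k$-uniqueness to peel off, for each key, a right vertex seen by no other remaining key. The only difference is packaging: Siegel-style proofs phrase the peeling probabilistically (condition on $h$ outside the unique neighbor $y$ of some $x\in S$, observe $g(x)$ is uniform given the rest, and induct on $S\setminus\{x\}$), whereas you phrase it algebraically, extracting a unit lower-triangular $k\times k$ incidence submatrix and concluding via a surjective homomorphism of finite abelian groups pushing uniform to uniform; your distinctness argument for the peeled vertices $y_1,\dots,y_k$ and the use of $dk$-independence to get uniformity of $h$ on $\Gamma(S)$ are both sound.
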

\section{Explicit constant time generators} \label{sec:explicit}
In this section we show how to obtain a constant time \mbox{$k$-generator} by combining an explicit $ \poly k$-generator with a cascading composition of unbalanced unique neighbor expanders. 
Our technique works by generating a small number of highly independent elements in an abelian group and then successively applying constant degree expanders to produce a greater number of less independent elements. 
We continue this process up until the point where the final number of elements is large enough to match the cost of generating the smaller batch of highly independent elements.    

The generator has two components.
The first component is an explicit $m$-generator $g_{0} : [n] \to A$ with period $n$ and range an abelian group $A$.
The second component is an explicit sequence $\left(\Gamma_{i}\right)^{t}_{i = 1}$ of unbalanced unique neighbor expanders.
The expanders are constructed such that the left side of the $i$th expander matches the right side of the $(i+1)$th expander.
By Lemma \ref{lem:explicit}, for every choice of imbalance $c$, target independence $k$ and length of the expander sequence $t$ there exists a sequence of expanders with the property that 
\begin{equation}
\Gamma_{i} \text{ is } (c, c^{i-1}m, d, d^{t-i}k)\text{-unique}, \label{eq:expandersequence}
\end{equation}
for $m = O(d^{t}k)$ and $d = \poly \log c$.
For constant $c$ each expander in the sequence is explicit.

We now combine the explicit $m$-generator $g_{0}$ and the sequence of expanders $\left(\Gamma_{i}\right)^{t}_{i = 1}$ to define the $k$-independent function $g_{t}$.
Let $b = m/n$ and assume for simplicity that $m$ divides $n$. 
For each $\Gamma_{i}$ we use the technique from Lemma \ref{lem:stacking} to construct a $(c, c^{i-1}n, d, d^{t-i}k)$-unique expander $\Gamma_{i}^{(b)}$.
Let $x_{i}$ denote a number in $[c^{i}n]$ corresponding to a vertex in the right side of $\Gamma_{i}^{(b)}$. 
We are now ready to give a recursive definition of $g_{i} : [c^{i}n] \to A$.
\begin{equation}
g_{i}(x_{i}) = 
\sum \limits_{x_{i-1} \in \Gamma_{i}^{(b)}(\{x_{i}\})} g_{i-1}(x_{i-1}), \quad 1 \leq i \leq t.
\label{eq:gexplicit}
\end{equation}
\begin{lemma}
$g_{i}$ is $d^{t-i}k$-independent. 
\end{lemma}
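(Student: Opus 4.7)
The plan is to prove the statement by induction on $i$, using Lemma~\ref{rng:lem:expanderhashing} as the workhorse at each step. The key observation is that the parameters in the sequence \eqref{eq:expandersequence} have been chosen so that the ``input independence'' required at level $i-1$ by Lemma~\ref{rng:lem:expanderhashing} matches exactly the ``output independence'' promised by the inductive hypothesis at level $i-1$.

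For the base case $i = 0$, I would argue that $g_0$ is an explicit $m$-generator by hypothesis, and hence the values it emits are $m$-wise independent. Since the parameters in \eqref{eq:expandersequence} are set so that $m = O(d^{t} k)$, by choosing the constant factor in the construction appropriately we have $m \geq d^{t} k$, so $g_0$ is in particular $d^{t} k$-independent, matching the claim for $i = 0$.

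For the inductive step, assume that $g_{i-1} \colon [c^{i-1}n] \to A$ is $d^{t-i+1} k$-independent. The graph $\Gamma_{i}^{(b)}$ obtained via Lemma~\ref{lem:stacking} is $(c, c^{i-1}n, d, d^{t-i}k)$-unique, i.e.\ it is a $k'$-unique expander with left degree $d$ for $k' = d^{t-i}k$. Applying Lemma~\ref{rng:lem:expanderhashing} with this expander and the ``hash function'' $g_{i-1}$, the requirement is that $g_{i-1}$ be $d k' = d^{t-i+1} k$-independent, which is exactly the inductive hypothesis. The conclusion of Lemma~\ref{rng:lem:expanderhashing} is that the sum over neighbors, which is precisely $g_i$ as defined in \eqref{eq:gexplicit}, is $k' = d^{t-i} k$-independent, completing the induction.

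The only subtlety worth flagging is verifying that Lemma~\ref{lem:stacking} preserves unique-neighbor expansion with the same parameter $k$ when stacking $b = m/n$ copies along the diagonal, which is essentially immediate because any subset $S$ of size at most $k$ on the left of $\Gamma_i^{(b)}$ has all its edges contained in a single block (the blocks are vertex-disjoint), so one of the constituent copies of $\Gamma_i$ sees a set of size at most $k$ on its left and hence produces a unique neighbor by $k$-uniqueness of $\Gamma_i$. With this observation in hand, no further calculation is needed and the induction goes through cleanly.
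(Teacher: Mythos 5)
Your proposal is correct and follows essentially the same route as the paper: induction on $i$, with the base case given by the independence of $g_0$ and the inductive step an application of Lemma~\ref{rng:lem:expanderhashing} to the stacked expander $\Gamma_i^{(b)}$, whose $(c, c^{i-1}n, d, d^{t-i}k)$-uniqueness from \eqref{eq:expandersequence} matches the independence supplied by the hypothesis. One small inaccuracy in your side remark on Lemma~\ref{lem:stacking}: a set $S$ of size at most $k$ need not have all its edges in a single block, but the conclusion still holds because restricting $S$ to any nonempty block gives a set of size at most $k$ whose unique neighbor (guaranteed by $k$-uniqueness of $\Gamma_i$) remains unique for all of $S$, since distinct blocks have disjoint right vertex sets.
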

\begin{proof}
We proceed by induction on $i$. 
By definition, $g_{0} : [n] \to A$ is $d^{t}k$-independent.
Assume by induction that $g_{i} : [c^{i}n] \to A$ is $d^{t-i}k$-independent.
By definition $\Gamma_{i+1}^{(b)}$ is a $(c, c^{i}n, d, d^{t-(i+1)}k)$-unique expander.
Applying Lemma \ref{rng:lem:expanderhashing} we have that $g_{i+1} : [c^{i+1}n] \to A$ is $d^{t-(i+1)}k$-independent.
\end{proof}

We will now show that $g_{t}$ supports fast sequential evaluation and prove that we can use $g_{t}$ to construct an explicit constant time $k$-generator from any explicit \mbox{$m$-generator}, for an appropriate choice of $m$.
Divide the domain of each $g_{i}$ evenly into $b = n/m$ batches of size $c^{i}m$ corresponding to each block of the adjacency matrix of $\Gamma_{i}$ used to construct $\Gamma_{i}^{(b)}$ and index the batches by $j \in [b]$. 
In order to evaluate $g_{i+1}$ over batch number $j$ it suffices to know $\Gamma_{i+1}$ and the values of $g_{i}$ over batch number $j$.
Fast sequential evaluation of $g_{t}$ is achieved in the following steps.
First we tabulate the sequence of expanders $\left(\Gamma_{i}\right)^{t}_{i = 1}$ such that $\Gamma_{i}(\{x_{i}\})$ can be read in $d$ operations.
Secondly, to evaluate $g_{t}$ over batch $j$, we begin by tabulating the output of $g_{0}$ over batch $j$ and then successively apply our tabulated expanders to produce tables for the output of $g_{1}, g_{2}, \dots, g_{t}$ over batch $j$.

Given tables for the sequence of expanders and assuming that the generator underlying $g_{0}$ has been initialized, we now consider the average number of operations used per output when performing batch-evaluation of $g_{t}$.
The number of values output is $c^{t}m$.
The cost of emitting $m$ values from $g_{0}$ is by definition at most $\poly(m)$.
The cost of producing tables for the output of $g_{1}, g_{2}, \dots, g_{t}$ for the current batch is given by $\sum_{i=1}^{t}dc^{i}m = O(dc^{t}m)$ for $c > 1$.
The average number of operations used per output when performing batch-evaluation of $g_{t}$ is therefore bounded from above by
\begin{equation}
\frac{O(dc^{t}m) + \poly m}{c^{t}m} = O(d) + \frac{\poly m}{c^{t}}. \label{eq:averagetime}
\end{equation}
The following lemma states that we can obtain a constant time $k$-generator from every explicit $m$-generator by setting $t = O(\log k)$ and choosing $c$ to be an appropriately large constant.  
\begin{lemma} \label{lem:general}
Let $A$ be an abelian group with constant time addition. 
Suppose there exists an explicit $m$-generator with range $A$, period $n$ and space usage $\poly m$.
Then there exists a positive constant $\epsilon$ such that for every $k \leq m^{\epsilon}$ 
there exists an explicit constant time $k$-generator with range $A$, period $n$, and seed length, space usage and initialization time $\poly k$.
\end{lemma}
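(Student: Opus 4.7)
The plan is to instantiate the cascade from (\ref{eq:gexplicit}) with $c$ a suitably large constant and $t = \Theta(\log k)$. The key observation is that once $c$ is fixed, the outdegree $d = \poly\log c$ is also a constant, so $d^{t}k = k^{O(1)}$ and therefore the independence $d^{t}k$ required of the base generator $g_{0}$ is polynomial in $k$. By picking $\epsilon$ small enough I can guarantee $d^{t}k \leq m$ whenever $k \leq m^{\epsilon}$, which is exactly what is needed to treat $g_{0}$ (an $m$-generator) as a $d^{t}k$-independent function. Correctness of $g_{t}$ then follows by induction on the cascade length using Lemma \ref{rng:lem:expanderhashing} together with (\ref{eq:expandersequence}).

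The next step would be to bound resources. Tabulating the expander $\Gamma_{i}$ takes $O(dc^{i}m)$ words, so the total space and preprocessing time for the expander tables is $O(dc^{t}m)$, which together with the $\poly m$ space for the base generator is $\poly k$ in the intended regime $m = \Theta(d^{t}k) = \poly k$. From (\ref{eq:averagetime}), the amortized emission cost per output is $O(d) + \poly(m)/c^{t}$, and because $c^{t} = k^{\Theta(\log c)}$, choosing $c$ a large enough constant makes the second term bounded. Thus $g_{t}$ supports constant time sequential evaluation once the per-batch tabulation of $g_{0}$'s outputs is amortized.

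The hard part will be the interplay between the three parameters $c$, $t$, and $\epsilon$: raising $c$ both helps absorb $\poly m$ into $c^{t}$ and enlarges the outdegree $d$, while $t$ must be $\Theta(\log k)$ both to keep $d^{t}k$ inside $m$ and to obtain enough amplification from $m$ to $c^{t}m$ left vertices at the top of the cascade. I would resolve this by first fixing $c$ large enough that for $t = C\log k$ the quantity $c^{t}$ dominates the polynomial cost of emitting $m$ values from $g_{0}$, then fixing $C$ so that simultaneously $d^{t}k \leq m$, and finally taking $\epsilon > 0$ small enough that $k \leq m^{\epsilon}$ forces both inequalities uniformly. Explicitness is preserved throughout since, for constant $c$, Lemma \ref{lem:explicit} supplies explicit expanders of constant outdegree, their tabulation is deterministic, and $g_{0}$ is explicit by assumption.
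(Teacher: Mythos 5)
Your proposal is correct and is essentially the paper's own argument: the same cascade $g_{t}$ built from the explicit constant-degree expanders of Lemma \ref{lem:explicit}, the same parameter regime $c = O(1)$ chosen large relative to the $m$-generator's polynomial exponent (exploiting $d = \poly\log c$), $t = \Theta(\log k)$, $m = \Theta(d^{t}k) = \poly k$, and the same amortized cost analysis via equation \eqref{eq:averagetime}; your $\epsilon$-bookkeeping ($d^{t}k \leq m$ whenever $k \leq m^{\epsilon}$) just makes explicit what the paper folds into the choice $m = O(d^{t}k)$. The only detail the paper adds beyond your sketch is the trivial deamortization — each \texttt{emit()} performs a constant amount of next-batch preparation — so that the $O(1)$ bound holds per call rather than merely in the amortized sense you state.
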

\begin{proof}
The sequence of expanders $\left(\Gamma_{i}\right)^{t}_{i = 1}$ with the properties given in \eqref{eq:expandersequence} exists for $m = O(d^{t}k)$ and $d = \poly \log c$ and is explicit for $c$ constant.
By inserting $m = O(d^{t}k)$ into equation \eqref{eq:averagetime} it can be seen that the average number of operations is constant for $c = O(1)$ and $t = O(\log k)$ with constants that depend on the parameters of the $m$-generator. 
The $k$-generator is initialized by initializing the $m$-generator, finding and tabulating the sequence of expanders and producing the first batch of values, all of which can be done in $\poly k$ time and space.
After initialization, each call to \texttt{emit()} will return a value from the current batch and use a constant number of operations for the task of preparing the next batch of outputs.    
\end{proof}

We now show our main theorem about explicit constant time $k$-generators over finite fields. 
The construction uses an explicit $m$-generator based on multipoint evaluation. 
Combined with the approach of Lemma \ref{lem:general} this yields a near-optimal time-space tradeoff for $k$-generation.

\begin{customthm}{\ref{thm:explicit}}[Repeated]
For every finite field $\GF$ with constant time arithmetic there exists a data structure that for every choice of $k$ is an explicit constant time $k$-generator with
range $\GF$, period $|\GF|\poly \log k$, and seed length, space usage and initialization time $k \poly \log k$.
\end{customthm}
\begin{proof}
Fix the choice of finite field $\GF$. 
By Lemma~\ref{lem:multipoint} there exists an explicit $m$-generator in $\GF$ for $m \leq |\GF|$ with period $|\GF|$ that uses time $O(m \log^{3} m)$ to emit $m$ values.
Fix some constant $c > 1$ and let $\left(\Gamma_{i}\right)^{t}_{i = 1}$ denote an explicit sequence of constant degree expanders with the properties given by \eqref{eq:expandersequence}.
The average number of operations per $k$-independent value output by $g_{t}$ when performing batch evaluation is given by
\begin{equation}
\frac{O(dc^{t}m) + O(m \log^{3} m)}{c^{t}m} = O(d) + \frac{O(\log^{3} d^{t}k)}{c^{t}}. \label{eq:averagetimefield}
\end{equation}
Setting $t = O(\log \log k)$ and following the approach of Lemma \ref{lem:general} we obtain a $k$-generator with the stated properties. 
\end{proof}
Based on the discussion in a paper by Capalbo \cite{capalbo2005} that introduces unbalanced unique neighbor expanders for concrete values of $c$ and $d$, it appears likely that the constants hidden in Theorem \ref{thm:explicit} for the current best explicit constructions make our explicit generators unsuited for practical use since $c$ is close to $1$ when $d$ is reasonably small. 
The next section explores how randomly generated unique neighbor expanders can be used to show stronger existence results and yield $k$-generators with tractable constants.  
%
\section{Constant time generators with optimal seed length} \label{sec:probabilistic}
Randomly constructed expanders of the type used in this paper have stronger properties than known explicit constructions, and can be generated with an overwhelming probability of success.
There is no known efficient algorithm for verifying whether a given graph is a unique neighbor expander.
Therefore randomly generated expanders cannot be used to replace explicit constructions without some probability of failure.

In this section we apply the probabilistic method to show the existence of $k$-generators with better performance characteristics than those based on known explicit constructions of expanders. 
We are able to show the existence of constant time generators with optimal seed length that use $O(k\log^{2+\varepsilon}k)$ words of space for any constant $\varepsilon > 0$. 
Furthermore, such generators can be constructed for any choice of constant failure probability $\delta > 0$. 
The generators we consider in this section use only a single expander graph but are otherwise identical to the generators described in Section \ref{sec:explicit}.
Using a single expander graph suffices for constant time generation because the probabilistic constructions are powerful enough to support an imbalance of $c = \poly \log k$ while maintaining constant degree.
This imbalance is enough to amortize the cost of multipoint evaluation in a single expansion step as opposed to the sequence of explicit expanders employed in Theorem \ref{thm:explicit}. 
Our arguments are a straightforward application of the probabilistic method, but we include them for completeness and because we are interested in somewhat nonstandard parameters.

We consider the following randomized construction of a $(c,m,d,k)$-unique expander $\Gamma$. 
For each vertex $x$ in $[cm]$, we add an edge between $x$ and each distinct node of $d$ nodes selected uniformly at random from $[m]$.  
By a standard argument, the graph can only fail to be unique neighbor expander if there exists a subset $S$ of left hand side vertices with $|S| \leq k$ such that $|\Gamma(S)| \leq \lfloor d|S|/2 \rfloor$ \cite[Lemma 2.8]{siegel2004}.
In the following we assume that $kd \leq m$. 
\begin{align}
&\Pr[\Gamma \text{ is not a unique neighbor expander}] \notag \\
&\leq \Pr[\exists S \subseteq [cm], |S| \leq k : |\Gamma(S)| \leq \lfloor d|S|/2 \rfloor] \notag \\
&\leq \sum_{\substack{S \subseteq [cm]\\ |S| \leq k}} \Pr[|\Gamma(S)| \leq \lfloor d|S|/2 \rfloor] \notag \\
&\leq \sum_{i = 1}^{k} \binom{cm}{i} \binom{m}{\lfloor id/2 \rfloor} \left(\frac{\lfloor id/2 \rfloor}{m}\right)^{id} \notag \\ 
&\leq \sum_{i=1}^{k} \left(\frac{cme}{i}\right)^{i} \left(\frac{me}{id/2}\right)^{id/2} \left(\frac{id/2}{m}\right)^{id} \notag \\
&= \sum_{i=1}^{k} \left( ec (i/m)^{d/2 - 1} (d e/2)^{d/2} \right)^{i} \label{eq:probexpander}
\end{align}
If the expression in the outer parentheses in \eqref{eq:probexpander} can be bounded from above by $1/2$ for $i = 1,2,\dots,k$, then the expander exists.
We also note that the randomized expander construction can be performed using $dk$-independent variables without changing the result in~\eqref{eq:probexpander}. 
Let $\gamma > 1$ be a number that may depend on $k$ and let $\delta$ denote an upper bound on the probability that the randomized construction fails. 
By setting $m = O(dk\gamma)$ we are able to obtain the following expression for the relation between $\delta$, the imbalance $c$ and the left outdegree bound $d$.
\begin{equation}
\delta = \frac{cd}{\gamma^{d/2 - 1}} \label{eq:expanderparameters}
\end{equation}
Equation \eqref{eq:expanderparameters} reveals tradeoffs for the parameters of the randomly constructed $k$-unique expander graphs.
For example, increasing $\gamma$ makes it possible to make the graph more unbalanced while maintaining the same upper bound on the probability of failure $\delta$. 
The increased imbalance comes at the cost of an increase in $m$, the size of the right side of the graph. 
Similarly it can be seen how increasing $d$ can be used to reduce the probability of error.
Setting the parameters to minimize the space occupied by the expander while maintaining constant outdegree and by extension constant generation time, we obtain Theorem~\ref{thm:existence}.  
\begin{customthm}{\ref{thm:existence}}[Repeated]
For every finite field $\GF$ with constant time arithmetic and every choice of constants $\varepsilon$, $\delta > 0$ there exists a data structure 
that for every choice of $k$ is a constant time $k$-generator with failure probability $\delta$, range~$\GF$, period $|\GF|$, seed length $O(k)$, 
space usage $O(k \log^{2+\varepsilon}k)$, and initialization time $O(k \poly \log k)$.
\end{customthm}
\begin{proof}
Let $\tilde{\varepsilon} < \varepsilon$ be a constant and set $\gamma = \log^{\tilde{\varepsilon}}k$.
Choosing $d$ to be a sufficiently large constant (dependent on $\tilde{\varepsilon}$), equation \eqref{eq:expanderparameters} shows that 
for every $\delta > 0$ there exists a $(c, m, d, k)$-unique expander $\Gamma$ with $c = \Omega(\log^{2+\varepsilon}k)$ and $m = O(k\gamma)$.
Using multipoint evaluation, the right side vertices of $\Gamma$ can be associated with $\Theta(k)$-independent variables over $\GF$ using $O(k \log^{2 + \varepsilon}k)$ operations.
By the properties of $\Gamma$ and applying Lemma \ref{rng:lem:expanderhashing} we are able to generate batches of $k$-independent variables of size $\Omega(k\log^{2+\varepsilon}k)$ using $O(k \log^{2+\varepsilon}k)$ operations.
The seed length of $O(k)$ holds by the observation that randomized construction of the expander only requires $O(k)$-independence.
The $O(k \poly \log k)$ initialization time is obtained by using multipoint evaluation to construct a table for $\Gamma$.
\end{proof}
\section{Faster multipoint evaluation for $k$-generators} \label{sec:faster}
This section presents an improved generator based directly on multipoint evaluation of a polynomial hash function $h \in \mathcal{H}_{k}$ over a finite field.
For our purpose of generating an $(n,k)$-sequence from $h$, we are free to choose the order of elements of $\GF$ in which to evaluate $h$.  
We present an algorithm for the systematic evaluation of $h$ over disjoint size $k$ subsets of $\GF$ using Fast Fourier Transform (FFT) algorithms.
Our technique yields a $k$-generator over $\GF$ with generation time $O(\log k)$, and space usage and seed length that is optimal up to constant factors.
The algorithm depends upon the structure of $\GF$, similarly to other FFT algorithms over finite fields \cite{bhattacharya2004}. 

The nonzero elements of $\GF$ form a multiplicative cyclic group $\GF^{*}$ of order $q-1$. 
The multiplicative group has a primitive element $\omega$ which generates $\GF^{*}$.
\begin{equation*}
\GF^{*} = \{ \omega^{0}, \omega^{1}, \omega^{2}, \dots, \omega^{q-2} \}.
\end{equation*}
For $k$ that divides $q-1$, we can construct a multiplicative subgroup $S_{k,0}^{*}$ of order $k$ with $\omega_{k} = \omega^{(q-1)/k}$ as the generating element. 
$S_{k,0}^{*}$ contains $k$ distinct elements of $\GF$. 
Define for $j = 0,1, \dots, (q-1)/k - 1$, 
\begin{equation*}
S_{k,j}^{*} = \omega^{j}S_{k,0} = \{ \omega^{j}\omega_{k}^{0}, \omega^{j}\omega_{k}^{1}, \dots, \omega^{j}\omega_{k}^{k-1} \}.  
\end{equation*}
Viewed as subsets of $\GF^{*}$ the sets $S_{k,j}^{*}$ form an exact cover of $\GF^{*}$. 
We now consider how to evaluate a degree $k-1$ polynomial $h(x) \in \GF[X]$ in the points of $S_{k,j}^{*}$. The polynomial takes the form
\begin{equation*}
h(x) = a_{0}x^{0} + a_{1}x^{1} + \dots + a_{k-1}x^{k-1}.
\end{equation*}
Rewriting the polynomial evaluation over $S_{k,j}^{*}$ in matrix notation:
\begin{equation*}
\colvec{ h(\omega^{j}\omega_{k}^{0}) \\  h(\omega^{j}\omega_{k}^{1}) \\ h(\omega^{j}\omega_{k}^{2}) \\ \vdots \\ h(\omega^{j}\omega_{k}^{k-1})} = 
\colvec{
	\omega_{k}^{0\cdot0} & \omega_{k}^{0\cdot1} & \dots & \omega_{k}^{0\cdot(k-1)} \\
	\omega_{k}^{1\cdot0} & \omega_{k}^{1\cdot1} & \dots & \omega_{k}^{1\cdot(k-1)} \\
	\omega_{k}^{2\cdot0} & \omega_{k}^{2\cdot1} & \dots & \omega_{k}^{2\cdot(k-1)} \\
	\vdots & \vdots &  & \vdots \\
	\omega_{k}^{(k-1)\cdot0} & \omega_{k}^{(k-1)\cdot1} & \dots & \omega_{k}^{(k-1)\cdot(k-1)}
}
\colvec{\omega^{j \cdot 0}a_{0} \\ \omega^{j \cdot 1}a_{1} \\ \omega^{j \cdot 2}a_{2} \\ \vdots \\ \omega^{j \cdot (k-1)}a_{k-1}}
\end{equation*}
We assume that the coefficients of $h$ and $\omega^{j}$ are given and consider algorithms for efficient evaluation of the matrix-vector product.
The coefficients $\tilde{a}_{j,i} = \omega^{j \cdot i}a_{i}$ for $i = 0,1,\dots,k-1$ can be found in $O(k)$ operations and define a polynomial $\tilde{h}_{j}(x) = \sum_{i=0}^{k-1}\tilde{a}_{i,j}x^{i}$. 
Evaluating $\tilde{h}_{0}(x)$ over $S_{k,0}^{*}$ corresponds to computing the Discrete Fourier Transform over a finite field.
\begin{customthm}{\ref{thm:fastmultipoint}}[Repeated]
For every finite field $\GF$ that supports computing the discrete Fourier transform of length $k$ in $O(k \log k)$ operations, 
there exists a data structure that, for every choice of $k$ and given a primitive element $\omega$, 
is an explicit $O(\log k)$~time $k$-generator with range $\GF$, period $|\GF|$, seed length $k$, space usage $O(k)$, and initialization time $O(k \log k)$.
\end{customthm}
\begin{proof}
Evaluation of $\tilde{h}_{j}(x)$ over $S_{k,j}^{*}$ takes $O(k \log k)$ operations by assumption. 
For every batch $j$ starting at $j = 0$, the value of $\omega^{j}$ is stored and used to compute the coefficients of $\tilde{h}_{j+1}(x)$ in $O(k)$ operations.
\end{proof}
We now discuss the validity of the assumption that we are able to compute the DFT over a finite field in $O(k \log k)$ operations.
Assume that $k \mid (q - 1)$ and that $\omega_{k}$ is known.
If $k$ is highly composite there exist Fast Fourier Transforms for computing the DFT in $O(k \log k)$ field operations~\cite{duhamel1990}.
If $k$ is not highly composite there exists an algorithm for computing the DFT in $O(kz \log kz )$ operations for fields of cardinality $q = p^{z}$ in our model of computation~\cite{preparata1977}. 
For $q = p^{O(1)}$ this reduces to the desired $O(k \log k)$ operations.
\section{Finite field arithmetic on the word RAM} \label{sec:wordRAM}
Throughout the paper we have used as our model of computation a modified word RAM with constant time arithmetic $(+,-, \times)$ over a finite field $\GF$.
In this section we show how our model relates to the more standard $\emph{multiplication model}$ defined as a word RAM with constant time arithmetic $(+, -, \times)$ over the integers $[2^{w}]$ for $w$-bit words \cite{hagerup1998}.

Arithmetic over $\GF_{p}$ for prime $p$ is integer arithmetic modulo $p$. 
We now argue that arithmetic operations over $\GF_{p}$ can be performed in $O(1)$ operations in the multiplication model.
Every integer $x$ can be written on the form $x = qp + r$ for non-negative integers $q, r$ with $r < p$.
Assume that $x$ can be represented in a constant number of words.
The problem of computing $r = x \bmod p$ can be solved by an integer division and $O(1)$ operations in the multiplication model due to the identity $r = x -  \lfloor x/p \rfloor p$.
An algorithm by Granlund and Montgomery \cite{granlund1994} computes $\lfloor x/p \rfloor$ for any constant $p$ using $O(1)$ operations in the multiplication model which gives the desired result.

Another finite field of interest is $\GF_{2^{w}}$ due to the correspondence between field elements and bit vectors of length $w$.
We will argue that a word RAM model that supports constant time multiplication over $\GF_{2^{w}}$ is not unrealistic considering current hardware.
Addition in $\GF_{2^{w}}$ has direct support in standard CPU instruction sets through the XOR operation.
A multiplication of two elements $x$ and $y$ in $\GF_{2^{w}}$ can be viewed as a two-step process.
First, we perform a carryless multiplication $z = x \cdot y$ of the representation of $x$ and $y$ as polynomials in $F_{2}[X]$. 
Second, we use a modular reduction to bring the product $x \cdot y$ back into $\GF_{2^{w}}$, similarly to modular arithmetic over $\GF_{p}$. 
Recently, hardware manufacturers have included partial support for multiplication in $\GF_{2^{w}}$ with the CLMUL instruction for carryless multiplication \cite{gueron2014}. 
The modular reduction step is performed by dividing $x \cdot y$ by an irreducible polynomial $g$ and returning the remainder.
Irreducible polynomials $g$ that can be represented as sparse binary vectors with constant weight results in a constant time algorithm for modular reduction as presented by Gueron and Kounavis \cite{gueron2014}.
We briefly introduce the computation underlying the algorithm to show that its complexity depends on the number of \texttt{1}s in the binary representation of $g$.
Let $L^{w}$ and $M^{w}$ be functions that return the $w$ least, respectively most, significant bits of their argument as represented in $\GF_{2^{2w}}$.  
The complexity of Gueron and Kounavis' algorithm for modular reduction of $z = x \cdot y$ is determined by the complexity of evaluating the expression
\begin{equation}
L^{w}(L^{w}(g)\cdot M^{w}(M^{w}(z) \cdot g)). \label{eq:clmul}
\end{equation}
Evaluating $L^{w}$ and $M^{w}$ is standard bit manipulation. 
For $g$ of constant weight, the carryless multiplications denoted by $\cdot$ in equation $\eqref{eq:clmul}$ can be implemented as a constant number of bit shifts and XORs. 
For every $w \leq 10000$ an irreducible trinomial or pentanomial ($g$ of weight at most 5) has been found~\cite{seroussi1998}.
Together with the hardware support for convolutions this allows us to implement fast multiplication over fields of practical interest. 
\section{A load balancing application} \label{sec:loadbalancing}
We next consider how our new generator yields stronger guarantees for load balancing.
Our setting is motivated by applications such as splitting a set of tasks of unknown duration among a set of $m$ machines, in order to keep the load as balanced as possible.
Once a task is assigned to a machine, it cannot be reassigned, i.e., we do not allow \emph{migration}.
For simplicity we consider the \emph{unweighted} case where we strive to keep the \emph{number} of tasks on each machine low, and we assume that $m$ divides $|\GF|$ for some field $\GF$ with constant time operations on a word RAM.
Suppose that each machine has capacity (e.g.~memory enough) to handle $b$ tasks at once, and that we are given a sequence of $t$ tasks $T_1,\dots,T_t$, where we identify each task with its duration (an interval in $\R$).
Now let $k = mb$ and suppose that we use our constant time $k$-generator to determine for each $i=1,\dots,t$ which machine should handle $T_i$.
(We emphasize that this is done without knowledge of $T_i$, and without coordination with the machines.)
Compared to using a fully random choice this has the advantage of requiring only $k\poly\log k$ words of random bits, which in turn may make the algorithm faster if random number generation is a bottleneck.
Yet, we are able to get essentially the same guarantee on load balancing as in the fully random case.
To see this let $L(x) = \{ i \; | \; x\in T_i\}$ be the set of tasks active at time $x$, and let $L_q(x)$ be the subset of $L(x)$ assigned to machine $q$ using our generator.
We have:
\begin{lemma}\label{lem:error}
For $\varepsilon > 0$, if $|L(x)| (1+\varepsilon) < mb$ then
$\Pr[\max_q |L_q(x)| > b] < m \exp(-\varepsilon^2 b / 3)$.
\end{lemma}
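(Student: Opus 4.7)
The plan is to reduce the analysis to a standard Chernoff-plus-union-bound argument by exploiting the fact that, at any fixed time $x$, the $k$-independence of the generator actually behaves as \emph{full} independence on the variables that matter. First I would fix $q\in\{1,\dots,m\}$ and $x$ and write
\[
|L_q(x)| \;=\; \sum_{i\in L(x)} Y_i^{(q)}, \qquad Y_i^{(q)} = \mathbf{1}[\text{task $i$ is assigned to machine $q$}],
\]
where the assignment of task $i$ is obtained by taking the $i$-th output of the $k$-generator and reducing it modulo $m$ (which is why we assumed $m$ divides $|\GF|$, so each $Y_i^{(q)}$ is genuinely Bernoulli$(1/m)$).

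The first key observation is that $k = mb$ together with the hypothesis $|L(x)|(1+\varepsilon)<mb$ gives $|L(x)| < k$. Because the generator emits a $k$-independent sequence, any $|L(x)|$ of its outputs are fully mutually independent, and this property is preserved under the deterministic map ``reduce mod $m$, then test equality with $q$.'' Hence the family $\{Y_i^{(q)} : i \in L(x)\}$ is a set of independent Bernoulli$(1/m)$ random variables, with mean $\mu_q := |L(x)|/m$.

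Next I would apply a standard multiplicative Chernoff bound to each machine. Writing $b = (1+\delta)\mu_q$, the hypothesis $|L(x)|(1+\varepsilon)<mb$ rearranges to $\mu_q < b/(1+\varepsilon)$, and therefore $\delta > \varepsilon$. Using the Chernoff bound $\Pr[X \ge (1+\delta)\mu] \le \exp(-\mu\delta^2/3)$ for sums of independent Bernoullis (e.g.\ Lemma~\ref{sparse:lem:chernoff}), a short calculation of the form $\mu_q \delta^2 = (b-\mu_q)^2/\mu_q$ and optimization against the worst-case $\mu_q \le b/(1+\varepsilon)$ yields $\Pr[|L_q(x)| > b] \le \exp(-\varepsilon^2 b/3)$. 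Finally, a union bound over the $m$ machines gives the stated $m\exp(-\varepsilon^2 b/3)$.

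The only subtle step is the very first one, namely that $k$-independence suffices here. The main obstacle I anticipate is therefore not the Chernoff calculation itself but being precise about the argument that a $k$-wise independent sequence behaves fully independently when restricted to fewer than $k$ coordinates, and that this independence is preserved after composing with the deterministic map $\,\cdot\bmod m$ followed by the indicator $\mathbf{1}[\cdot = q]$. Once this is cleanly stated, the rest is a routine Chernoff-plus-union-bound computation; matching the exact constant $\varepsilon^2 b/3$ (versus slightly weaker forms such as $\varepsilon^2 b/(3(1+\varepsilon))$) may require the tighter Chernoff form $\Pr[X\ge a] \le (e\mu/a)^a e^{-\mu}$ applied with $a=b$ and $\mu \le b/(1+\varepsilon)$, after verifying that $1-\ln(1+\varepsilon)-1/(1+\varepsilon) \le -\varepsilon^2/3$ for the relevant range of $\varepsilon$.
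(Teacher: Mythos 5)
Your proposal is correct and follows essentially the same route as the paper: since $|L(x)| < mb = k$, the $k$-independent assignment is fully independent on the tasks active at time $x$, so each $|L_q(x)|$ is binomial with mean at most $b/(1+\varepsilon)$, and a Chernoff bound for one machine followed by a union bound over the $m$ machines gives the claim. Your closing remark about the constant (the naive multiplicative form yields $\exp(-\varepsilon^2 b/(3(1+\varepsilon)))$, so a slightly tighter Chernoff form is needed to get exactly $\varepsilon^2 b/3$) is in fact more careful than the paper's one-line invocation of the bound.
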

\begin{proof}
Since $|L(x)| < mb = k$ we have that the assignment of tasks in $L(x)$ to machines is uniformly random and independent.
This means that the number of tasks assigned to each machine follows a binomial distribution with mean $b/(1+\varepsilon)$, and we can apply a Chernoff bound of $\exp(-\varepsilon^2 b / 3)$ on the probability that more than $b$ tasks are assigned to a particular machine.
A union bound over all $m$ machines yields the result.
\end{proof}

Lemma~\ref{lem:error} allows us to give a strong guarantee on the probability of exceeding the capacity $b$ of a machine at any time, assuming that the average load is bounded by $b/(1+\varepsilon)$.
In particular, let $S\subseteq\R$ be a set of size at most $2t$ such that every workload $L(y)$ is equal to $L(x)$ for some $x\in S$.
The existence of $S$ is guaranteed since the $t$ tasks are intervals, and they have at most $2t$ end points.
This means that
$$\sup_{x\in\R} \max_q |L_q(x)| = \max_{x\in S} \max_q |L_q(x)|,$$
so a union bound over $x\in S$ gives
$$\Pr[\sup_{x\in\R} \max_q |L_q(x)| > b] < 2tm \exp(-\varepsilon^2 b / 3) \enspace .$$

For constant $\varepsilon$ and whenever $b = \omega(\log k)$ and $tm = 2^{o(b)}$ we get an error probability that is exponentially small in $b$.
Such a strong error guarantee can not be achieved with known constant time hashing methods~\cite{siegel2004,pagh2008,dietzfelbinger2003,thorup2013} in reasonable space, since they all have an error probability that decreases polynomially with space usage.
Even if explicit constructions for the expanders needed in Siegel's hash functions were found, the resulting space usage would be polynomially higher than with our $k$-generator.
\section{Experiments} \label{sec:experiments}
This section contains experimental results of an implementation of a $k$-generator over $\GF_{2^{64}}$. 
There are two main components to the generator: an algorithm for filling a table of size $m$ with $dk$-independent variables and a bipartite unbalanced expander graph.

For the first component, we use an implementation of Gao-Mateer's additive FFT \cite[Algorithm 2.]{gao2010}.
Utilizing the Gao-Mateer algorithm we can generate a batch of $k$ elements of an $(|\GF|, k)$-sequence using space $O(k)$ and $O(k \log^{2} k)$ operations on a word RAM that supports arithmetic over $\GF$.
The additive complexity of the FFT algorithm is $O(k \log^{2} k)$ while the multiplicative complexity is $O(k \log k)$.
Addition in $\GF_{2^{64}}$ is implemented as an XOR-operation on 64-bit words. 
Multiplication is implemented using the PCLMUL instruction along with the techniques for modular reduction by Gueron et
al. \cite{gueron2014} outlined in Section \ref{sec:wordRAM}.

For the second component we introduce a slightly different type of expander graphs that only work in the special case of
fields of characteristic two. 
Let $\GF_{2^{w}}$ be a field of characteristic two and let $M$ be a $cm \times m$ adjacency matrix of a graph $\Gamma$ where each entry of $M$ is viewed as an element of $\GF_{2^{w}}$.    
By a similar argument to the one used in Lemma \ref{rng:lem:expanderhashing} the linear system $Mx$ defines a $(cm, k)$-sequence if $x$ is a vector of \mbox{$dk$-independent} variables over $\GF_{2^{w}}$ and $M$ has row rank at least $k$.
We consider randomized constructions of $M$ over $\GF_{2}$ with at most $d$ \texttt{1}s in each row and row rank at least $k$.
It is easy to see that a matrix $M$ over $\GF_{2}$ with these properties also defines a matrix with the same properties over $\GF_{2^{w}}$.
Since $k$-uniqueness of $\Gamma$ implies that $M$ has row rank $k$, but not the other way around, 
we are able to obtain better performance characteristics of generators over $\GF_ {2^{w}}$ by focusing on randomized constructions of $M$.

The matrix $M$ is constructed in the following way.
Independently, for each $i \in [cm]$ sample $d$ integers uniformly with replacement from $[m]$ and define the $i$th row of $M$ as the vector constructed by taking the zero vector and adding~\texttt{1}s in the $d$ positions sampled for row $i$.
Observe that if $M$ does not have row rank at least $k$ then some non-empty subset of at most $k$ rows of $M$ sum to the zero vector.
In order for a non-empty set of vectors over $\GF_{2}^{m}$ to sum to the zero vector, the bit-parity must be even in each of the $m$ positions of the sum. 
The sum of any $i$ rows of $M$ corresponds to a balls and bins process that distributes $id$ balls into $m$ bins, independently and uniformly at random.
Let $id$ be an even number. Then there are $(id - 1)!!$ ways of ordering the balls into pairs and the probability that the outcome is equal to any particular pairing is $(1/m)^{id/2}$. 
This yields the following upper bound on the probability that a subset of $i$ rows sums to zero: 
\begin{equation}
\beta_{pair}(i, d, m) = (id-1)!!\! \left(\frac{1}{m}\right)^{id/2}. \label{eq:libound} 
\end{equation}
A comparison between this bound and the bound for $k$-uniqueness from equation \eqref{eq:probexpander} shows that, for each term in the sum, 
the multiplicative factor applied to the binomial coefficient $\binom{cm}{i}$ is exponentially smaller in $id$ for the bound in \eqref{eq:libound}.

The pair-based approach which yields the bound $\beta_{pair}$ overestimates the probability of failure on subsets of size $i$, increasingly as $id$ grows large compared to $m$. 
We therefore introduce a different bound based on the Poisson approximation to the binomial distribution: 
the number of balls in each of the $m$ positions can approximately be modelled as independent Poisson distributed variables \cite[Ch. 5.4]{mitzenmacher2005}. 
The probability that that the parity is even in each of the $m$ positions in a sum of $i$ rows is bounded by
\begin{equation}
\beta_{poisson}(i, d, m) = e\sqrt{id}\left(\frac{1+e^{-2\frac{id}{m}}}{2}\right)^{m}, \label{eq:poibound} 
\end{equation}
where we use the same approach as Mitzenmacher et al.~\cite{mitzenmacher2014}.
For any given subset of rows of $M$, we are free to choose between the two bounds. 
The probability that a randomly constructed matrix $M$ fails to have rank at least $k$ can be bounded from above using a union bound over subsets of rows of $M$.
\begin{equation}
\delta \leq \sum_{i=1}^{k}\binom{cm}{i}\min(\beta_{pair}(i, d, m), \beta_{poisson}(i, d, m)). \label{eq:combinedbound}
\end{equation}

We now consider the generation time of our implementation. 
Let $FFT_{dk}$ denote the time taken by the FFT algorithm to generate a $dk$-independent value and let $RA_{d,m}$ denote the time it takes to perform $d$ random accesses in a table of size $m$.  
The time taken to generate a value by the implementation of our generator is then given by
\begin{equation}
T = \frac{FFT_{dk}}{c} + RA_{d,m}. \label{eq:generationtime}
\end{equation}
In our experiments, the choice of parameters for the expander graphs were based on a search for the fastest generation time over every combination of imbalance $c \in \{16, 32, 64 \}$ and outdegree $d \in \{ 4, 8, 16 \}$.
Given choices of $d$, $c$ and independence $k$, the size of the right side of the expander $m$ was increased until existence could be guaranteed by the bound in \eqref{eq:combinedbound}.
The generator in the experiments had the restriction that $m \leq 2^{26}$ and we have measured $RA_{d,m}$ assuming that
the expander is read sequentially from RAM. 
The experiments were run on a machine with an Intel Core i5-4570 processor with 6MB cache and 8GB of RAM.

Table \ref{tab:experimentalresults} shows the generation time in nanoseconds per 64-bit output using Horner's scheme, Gao-Mateer's FFT and the implementation of our generator (FFT + Expander).
For the implementation of the generator, we also show the parameters of the randomly generated expander that yielded the fastest generation time among expanders in the search space.

The generation time for Horner's scheme is approximately linear in $k$ and logarithmic in $k$ for the FFT, as predicted by theory. 
The FFT is faster than using Horner's scheme already at $k = 64$ and orders of magnitude faster for large $k$.
Using our implementation of Gao-Mateer's FFT algorithm we are able to evaluate a polynomial of degree $2^{20}-1$ in $2^{20}$ points in less than a second. 
The same task takes over an hour when using Horner's rule, even with both algorithms using the same underlying implementation of algebraic operations in the field.

For small values of $k$, our generator is an order of magnitude faster than the FFT and comes close to the performance of the 64-bit C++11 implementation of the popular Mersenne Twister.
Our generator uses 25 nanoseconds to output a 1024-independent value. This is equivalent to an output of over 300MB/s.
The Mersenne Twister uses around 4 nanoseconds to generate a 64-bit value.

In practice, the memory hierarchy appears to be the primary obstacle to maintaining a constant generation time as $k$ increases.
Our generator reads the expander graphs sequentially and performs random lookups into the table of $dk$-independent values.
As $k$ grows large, the table can no longer fit into cache and for large imbalance $c$, the expander can no longer be stored in main memory.
Searching a wider range of expander parameters could easily yield a faster generation time, potentially at the cost of a larger imbalance $c$ or higher probability of failure $\delta$.

\begin{table}[htpb]
	\centering
	\begin{tabular}{rrr|rrrrr}
	
	\toprule
    
\multirow{2}{*}{$k$} & \multicolumn{1}{c}{Horner} & \multicolumn{1}{c|}{FFT}  & \multicolumn{5}{c}{FFT + Expander}       \\ 
           
& \multicolumn{1}{c}{ns} &   \multicolumn{1}{c|}{ns}  & \multicolumn{1}{c}{$c$} &  \multicolumn{1}{c}{$m$}    & \multicolumn{1}{c}{$d$} & \multicolumn{1}{c}{$\delta$}    & \multicolumn{1}{c}{ns}    \\ \midrule 
$2^{5}$    &       177 &  243  & 64 & $2^{13}$ &  8 & $10^{-7}$   &  15   \\    
$2^{6}$    &       361 &  294  & 64 & $2^{14}$ &  8 & $10^{-8}$   &  16   \\ 
$2^{7}$    &       730 &  338  & 64 & $2^{15}$ &  8 & $10^{-9}$   &  19   \\ 
$2^{8}$    &      1470 &  375  & 64 & $2^{16}$ &  8 & $10^{-10}$  &  23   \\ 
$2^{9}$    &      2950 &  412  & 64 & $2^{17}$ &  8 & $10^{-11}$  &  24   \\ 
$2^{10}$   &      5902 &  449  & 64 & $2^{18}$ &  8 & $10^{-12}$  &  25   \\ 
$2^{11}$   &     11808 &  487  & 32 & $2^{18}$ &  8 & $10^{-12}$  &  35   \\ 
$2^{12}$   &     23627 &  523  & 64 & $2^{18}$ & 16 & $10^{-29}$  &  43   \\ 
$2^{13}$   &     47183 &  561  & 32 & $2^{18}$ & 16 & $10^{-29}$  &  54   \\ 
$2^{14}$   &     94429 &  599  & 64 & $2^{22}$ &  8 & $10^{-15}$  &  68   \\ 
$2^{15}$   &    188258 &  638  & 64 & $2^{23}$ &  8 & $10^{-16}$  &  69   \\ 
$2^{16}$   &    376143 &  678  & 64 & $2^{24}$ &  8 & $10^{-17}$  &  77   \\ 
$2^{17}$   &    751781 &  719  & 64 & $2^{25}$ &  8 & $10^{-18}$  &  85   \\ 
$2^{18}$   &   1505016 &  765  & 64 & $2^{26}$ &  8 & $10^{-19}$  &  93   \\ 
$2^{19}$   &   3015969 &  808  & 32 & $2^{26}$ &  8 & $10^{-19}$  & 110   \\ 
$2^{20}$   &   6082313 &  864  & 64 & $2^{26}$ & 16 & $10^{-46}$  & 175   \\ \bottomrule

	\end{tabular}
\caption{Generation time in nanoseconds per 64-bit value using Horner's scheme, Gao-Mateer's FFT and an implementation of our constant-time generator}
\label{tab:experimentalresults}
\end{table}

\section*{Acknowledgment}
We are grateful to Martin Dietzfelbinger who gave feedback on an early version of the paper, allowing us to significantly enhance the presentation.
\chapter{Near-optimal $k$-independent hashing} \label{ch:expanders}
\sectionquote{Not all those who wander are lost}
\noindent We consider the following fundamental problems:
	\begin{itemize}
		\item Constructing $k$-independent hash functions with a space-time tradeoff close to Siegel's cell-probe lower bound (SICOMP 2004).
		\item Constructing representations of unbalanced expander graphs having small size and allowing fast computation of the neighbor function.
	\end{itemize}
	It is not hard to show that these problems are intimately connected in the sense that a good solution to one of them leads to a good solution to the other one.
	In this paper we exploit this connection to present efficient, recursive constructions of $k$-independent hash functions (and hence expanders with a small representation).
	While the previously most efficient construction (Thorup, FOCS 2013) 
	needed time quasipolynomial in Siegel's lower bound, our time bound is
	just a logarithmic factor from
	the lower bound.

\section{Introduction}
The problem of designing explicit unbalanced expander graphs with near-optimal parameters is of major importance in theoretical computer science.
In this paper we consider bipartite graphs with edge set $E \subset U \times V$ where ${|U| \gg |V|}$.
Vertices in $U$ have degree $d$ and expansion is desired for subsets $S \subset U$ with $|S| \leq k$ for some parameter~$k$.
Such expanders have numerous applications (e.g.~hashing~\cite{siegel2004}, routing~\cite{broder1999}, sparse recovery~\cite{indyk2010}, membership~\cite{buhrman2002}), 
yet coming up with explicit constructions that have close to optimal parameters has proved elusive.
At the same time it is easy to show that choosing $E$ at random will give a graph with essentially optimal parameters.
This means that we can efficiently and with a low probability of error produce a description of an optimal unbalanced expander that takes space proportional to $|U|$.
Storing a complete description is excessive for most applications that, provided access to an explicit construction, would use space proportional to $|V|$.    
On the other hand, explicit constructions can be represented using constant space, 
but the current best explicit constructions have parameters $d$ and $|V|$ that are polynomial in the optimal parameters of the probabilistic constructions~\cite{guruswami2009}.
Furthermore, existing explicit constructions have primarily aimed at optimizing the parameters of the expander, with the evaluation time of the neighbor function being of secondary interest, as long as it can be bounded by $\poly \log u$.
This evaluation time is excessive in applications that, provided access to the neighbor function of an optimal expander, would use time proportional to $d$, where $d$ is typically constant or at most logarithmic in $|U|$. 

In this paper we focus on optimizing the parameters of the expander while minimizing the space usage of the representation and the evaluation time of the neighbor function.
We present randomized constructions of unbalanced expanders in the standard word RAM model.
Our constructions have near-optimal parameters, use space close to $|V|$, and support computing the $d$ neighbors of a vertex in time close to~$d$.

\paragraph{Hash functions and expander graphs.} There is a close connection between $k$-independent hash functions and expanders. 
A $k$-independent function with appropriate parameters will, with some probability of failure, represent the neighbor function of a graph that expands on subsets of size $k$.
We refer to this as going from independence to expansion, and the fact follows from the standard union bound analysis of probabilistic constructions of expanders.
Going in the other direction, from expansion to independence, was first used by Siegel~\cite{siegel2004} as a technique for showing the existence of $k$-independent hash functions with evaluation time that does not depend on $k$.
We follow in Siegel's footsteps and a long line of work (see e.g.~\cite{dietzfelbinger2012} for an overview) that focuses on the space-time tradeoff of $k$-independent hash functions over a universe of size $u = |U|$.

Ideally, we would like to construct a data structure in the word RAM model that takes as input parameters $u$, $k$, and~$t$, and returns a $k$-independent hash function over $U$.
The hash function should use space $k(u/k)^{1/t}$ and have evaluation time $O(t)$, matching up to constant factors the space-time tradeoff of Siegel's cell probe lower bound for $k$-independent hashing~\cite{siegel2004}.
We present the first construction that comes close to matching the space-time tradeoff of the cell probe lower bound.

\paragraph{Method.} Our work is inspired by Siegel's graph powering approach \cite{siegel2004} and by recent advances in tabulation hashing~\cite{thorup2013}, 
showing that it is possible to efficiently describe expanders in space much smaller than $u$.
Our main insight is that it is possible to make simple, recursive expander constructions by alternating between strong unbalanced expanders and highly random hash functions.
Similarly to previous work, we follow the procedure of letting a $k$-independent function represent a bipartite graph $\Gamma$ that expands on subsets of size $k$. 
We then apply a graph product to $\Gamma$ in order to increase the size of the universe covered by the graph while retaining expander properties.
At each step of the recursion we return to $k$-independence by combining the graph product with a table of random bits, leaving us with a new $k$-independent function that covers a larger universe.
By combining the technique of alternating between expansion and independence with a new and more efficient graph product, we can improve upon existing randomized constructions of unbalanced expanders.    

\subsection{Our contribution}
Table~\ref{tab:results} compares previous upper and lower bounds on $k$-independent hashing with our results, as presented in Corollaries 1, 2, and 3.
As can be seen, most results present a trade-off between time and space controlled by a parameter $t$.
Tight lower and upper bounds have been known only in the cell probe model,
but our new construction nearly matches the cell probe lower bound by Siegel~\cite{siegel2004}.

The time bound for the construction using explicit expanders~\cite{guruswami2009} uses the degree of the expander as a conservative lower bound, 
based on the possibility that the neighbor function in their construction can be evaluated in constant time in the word RAM model.
The time bound that follows directly from their work is $\poly \log u$. 
While the constant factors in the exponent of the space usage of~\cite{siegel2004,thorup2013} have likely not been optimized, their techniques do not seem to be able to yield space close to the cell probe lower bound. 

As can be seen our construction polynomially improves either space or time compared to each of the previously best trade-offs.
We also find our construction easier to describe and analyze than the results of~\cite{guruswami2009,kedlaya2008,thorup2013}, with simplicity comparable to that of Siegel's influential paper~\cite{siegel2004}. 

\begin{table}[t]
	\renewcommand\arraystretch{1.5}
	\caption{Space-time tradeoffs for $k$-independent hash functions}
	\footnotesize
    \begin{tabular}{|l|l|l|} \hline
	\textbf{ Reference}  &  \textbf{ Space}   &  \textbf{ Time}  \\ \hline\hline
	Polynomials~\cite{joffe1974,wegman1981}  & $k$ & $O(k)$ \\ \hline
	Preprocessed polynomials~\cite{kedlaya2008} & $k^{1 + \varepsilon}(\log u)^{1 + o(1)}$ & $(\poly \log k) (\log u)^{1 + o(1)}$ \\ \hline
	Expanders~\cite{guruswami2009} + \cite{siegel2004} & $k^{1+ \varepsilon}d^{2}$ &  $d = O(\log(u)\log(k))^{1 + 1/\varepsilon}$ \\ \hline
    Expander powering \cite{siegel2004} & $k^{(1 - \varepsilon)t}u^{\varepsilon} + u^{1/t}$ & $O(1/\varepsilon)^{t}$ \\ \hline
    Double tabulation \cite{thorup2013} & $k^{5t} + u^{1/t}$ & $O(t)$ \\ \hline
    Recursive tabulation \cite{thorup2013} & $\poly k + u^{1/t}$ & $O(t^{\log t})$  \\ \hline 
    Corollary \ref{cor:simple}                              & $ku^{1/t}t^{3}$ & $O(t^{2} + t^{3}\log (k) / \log(u))$  \\ \hline 
    Corollary \ref{cor:majorityrecursion}                              & $k^{2}u^{1/t}t^{2}$ & $O(t \log t + t^{2} \log (k) /  \log (u))$  \\ \hline
    Corollary \ref{cor:linear}    & $ku^{1/t}t$ & $O(t \log t)$  \\ 
	\hline\hline
	    Cell probe lower bound~\cite{siegel2004} & $k(u/k)^{1/t}$            & $t < k$ \text{ probes}      \\ \hline
	    Cell probe upper bound~\cite{siegel2004}  &  $k(u/k)^{1/t}t$            & $O(t)$ \text{ probes}   \\
	\hline
    \end{tabular}

	\vspace{1em}
	
\scriptsize
Table notes: Space-time tradeoffs for $k$-independent hash functions from a domain of size $u$, with the trade-off controlled by a parameter $t$. 
Time bounds in the last two rows are number of cell probes, and remaining rows refer to the word RAM model with word size $\Theta(\log u)$.
Leading constants in the space bounds are omitted.
We use $t$ to denote an arbitrary positive integer parameter that controls the trade-off, and 
We use $\varepsilon$ to denote an arbitrary positive constant.
*Corollary 3 relies on the assumption $k = u^{O(1/t)}$.
\label{tab:results}
\end{table}
Like all other randomized constructions our data structures come with an error probability, 
but this error probability is \emph{universal} in the sense that if the construction works then it provides independent hash values on \emph{every subset} of at most $k$ elements from $U$.
This is in contrast to other known constructions~\cite{dietzfelbinger2003,pagh2008} that give independence with high probability on each \emph{particular} set of at most $k$ elements, 
but will fail almost surely if independence for a superpolynomial number of subsets is needed.

\paragraph{Applications.} 
Efficient constructions of highly random functions is of fundamental interest with many applications in computer science.
A $k$-independent function can, without changing the analysis, replace a fully random function in applications that only rely on $k$-subsets of inputs mapping to random values.
We can therefore view $k$-independent functions as space and randomness efficient alternatives to fully random functions, capable of providing compact representations of complex structures such as expander graphs over very large domains.
Apart from the construction of expander graphs with a small description, as an example application, $k$-independent functions with a universal error probability can be used to construct ``real-time'' dictionaries that are able to handle extremely long (in expectation) sequences of insertion and deletion operations in constant time per operation before failing.

Let $\tau > 1$ be a constant parameter.
We use a $k$-independent hash function with $k=w^{O(\tau)}$ to split a set of $n$ machine words of $w$ bits into $O(n)$ subsets such that each subset has size at most $k$, with probability at least $1 - 2^{-w^\tau}$.
Handling each subset with Thorup's recent construction of dictionaries for sets of size $w^{O(\tau)}$ using time $O(\tau)$ per operation~\cite{thorup2014} we get a dynamic dictionary in which, with high probability, every operation in a sequence of length $\ell < 2^{O(w^\tau)}$ takes constant time.
In comparison the hash functions of~\cite{DM90,dietzfelbinger2003,pagh2008} can only guarantee that sequences of length $\ell <\text{poly}(n)$ operations, where $n < 2^w$, succeed with high probability.
The splitting hash function needs space $u^{\Omega(1)}$, which might exceed the space usage of an individual dictionary, but this can be seen as a shared resource that is used for many dictionaries (in which case we bound the total number of operations before failure).

\section{Background and overview}
In the analysis of randomized algorithms we often assume access to a fully random function of the form~${f : [u] \to [r]}$ where $[n]$ denotes the set $\{0,1, \dots, n-1\}$.
To represent such a function we need a table with $u$ entries of $\log r$ bits.
This is impractical in applications such as hashing based dictionaries where we typically have that $u \gg r$ and the goal is to use space $O(r)$ to store $r$ elements of $[u]$.
Fortunately, the analysis that establishes the performance guarantees of a randomized algorithm can often be modified to work even in the case where the function $f$ has weaker randomness properties.

One such concept of limited randomness is $k$-independence, first introduced to computer science in the 1970s through the work of Carter and Wegman on universal hashing \cite{carter1977}.  
A family of functions from $[u]$ to $[r]$ is $k$-independent if, for every subset of $[u]$ of cardinality at most $k$, 
the output of a random function from the family evaluated on the subset is independent and uniformly distributed in $[r]$.
Trivially, the family of all functions from $[u]$ to $[r]$ is $k$-independent, but representing a random function from this family uses too much space.
It was shown in \cite{joffe1974} that for every finite field $\GF$ the family of functions that consist of all polynomials over $\GF$ of degree at most $k-1$ is $k$-independent.
A function from this family can be represented using near-optimal space \cite{chor1985} by storing the $k$ coefficients of the polynomial.
The mapping defined by a function $f$ from a $k$-independent polynomial family over $\GF = \{x_{1}, x_{2}, \dots, x_{u}\}$ takes the form 
\begin{equation}
\colvec{f(x_{1}) \\ f(x_{2}) \\ \vdots \\ f(x_{u})} = 
\colvec{ 
	x_{1}^{0} & x_{1}^{1} & \dots & x_{1}^{k-1} \\
	x_{2}^{0} & x_{2}^{1} & \dots & x_{2}^{k-1} \\
	\vdots    & \vdots    & \vdots & \vdots \\
	x_{u}^{0} & x_{u}^{1} & \dots & x_{u}^{k-1} \\
}
\colvec{a_{0} \\ a_{1} \\ \vdots \\ a_{k-1}}. \label{eq:vandermonde}
\end{equation}
The $k$-independence of the polynomial family follows from properties of the Vandermonde matrix: every subset of $k$ rows is linearly independent. 
The problem with this construction is that the Vandermonde matrix is dense, resulting in an evaluation time of $\Omega(k)$ if we simply store the coefficients of the polynomial.
The lower bounds by Siegel \cite{siegel2004}, and later Larsen \cite{larsen2012}, as presented in Table 1, show that a data structure for evaluating a polynomial of degree $k-1$ using time $t < k$ must use space at least $k(u/k)^{1/t}$.   
Kedlaya and Umans~\cite{kedlaya2008} give an elegant upper bound for faster polynomial evaluation when allowing preprocessing,
but their space-time tradeoff is still far from the lower bound for $k$-independent functions.

The quest for $k$-independent families of functions with evaluation time $t < k$ can be viewed as attempts to construct compact representations of sparse matrices that fill the same role as the Vandermonde matrix.
We are interested in compact representations that support fast computation of the sparse row associated with an element $x \in [u]$. 
An example of a sparse matrix with these properties is the adjacency matrix of a bipartite expander graph with sufficiently strong expansion properties.
For the purposes of constructing $k$-independent hash functions we are primarily interested in expanders that are highly unbalanced.

\paragraph{Expander hashing.} Prior constructions of fast and highly random hash functions has followed Siegel's approach of combining expander graphs with tables of random words.
If $\Gamma$ is a $k$-unique expander graph (see Definition \ref{def:k-unique}) then we can construct a $k$-independent function by composing it with a simple tabulation function $h$.
This approach would yield optimal $k$-independent hash functions if we had access to explicit expanders with optimal parameters that could be evaluated in time proportional to the left outdegree.
Unfortunately, no explicit construction of a $k$-unique expander with optimal parameters is known.

Siegel~\cite{siegel2004} addresses this problem by storing a smaller randomly generated $k$-unique expander, say, one that covers a universe of size $u^{1/t}$.
By the $k$-independent hashing lower bound, if an expander with $|U| = u^{1/t}$ has degree $d$, then in order for it to be $k$-unique it must have a right hand side of size $|V| \geq k(u^{1/t}/k)^{1/d}$.
To give a space efficient construction of a $k$-unique expander that covers a universe of size $u$, Siegel repeatedly applies the Cartesian product to the graph.
Applying the Cartesian product $t$ times to a $k$-unique expander results in a graph that remains $k$-unique but with the left degree and size of the left and right vertex sets raised to the power $t$.
Using space $u^{1/t}$ to store an expander with degree $t$, it follows from the lower bound that the expander resulting from repeatedly applying the Cartesian product must have
\begin{equation*}
|V'| \geq (k(u^{1/t}/k)^{1/d})^{t} = k^{(1 - 1/d)t}u^{1/d}.
\end{equation*}
Setting $d = 1/\varepsilon$, the randomly generated $k$-unique expander that forms the basis of the construction has degree $O(1/\epsilon)$, leading to the expression in Table~1.
Since we need to store $|V'|$ random words in a table in order to create a $k$-independent hash function, 
Siegel's graph powering approach offers a space-time tradeoff that is far from the lower bound from our perspective where both $u$, $k$, and $t$ are parameters to the hash function.

Thorup~\cite{thorup2013} shows that, for the right choice of parameters, a simple tabulation hash function is likely to form a compact representation of a $k$-unique expander.
A simple tabulation function takes a string $x = (x_{1}, x_{2}, \dots, x_{c})$ of $c$ characters from some input alphabet $\bit{n} = \{0,1\}^{n}$, and returns a string of $d$ characters from some output alphabet $\bit{m} = \{0,1\}^{m}$.
The simple tabulation function $h : \bit{n}^{c} \to \bit{m}^{d}$ is evaluated by taking the exclusive-or of $c$ table-lookups 
\begin{equation*}
h(x) = h_{1}(x_{1}) \oplus h_{2}(x_{2}) \oplus \dots \oplus h_{c}(x_{c})
\end{equation*}
where $h_{i} : \bit{n} \to \bit{m}^{d}$ are random functions for $1 \leq i \leq c$.
The advantage of a simple tabulation function compared to a fully random function is that we only need to store the random character tables $h_{1}, h_{2}, \dots, h_{c}$.
Thorup is able to show that for~${d \geq 6c}$ a simple tabulation function is $k$-unique with a low probability of failure when $k \leq (2^m)^{1/5c}$. 
Setting $n = m$ and composing the $k$-unique expander resulting from a single application of simple tabulation with another simple tabulation function, 
Thorup first constructs a hash function with space usage~$u^{1/c}$, independence~$u^{\Omega(1/c^2)}$, and evaluation time~$O(c)$. 
He then presents a second trade-off with space~$u^{1/c}$, independence~$u^{\Omega(1/c)}$, and time~$O(c^{\log c})$ that comes from applying simple tabulation recursively to the output of a simple tabulation function. 
Similar to Siegel's upper bound, the space usage of Thorup's upper bounds with respect to $k$ is much larger than the lower bound as can be seen from Table~\ref{tab:results} where the space-time tradeoff of his results have been parameterized in terms of the independence $k$.\footnote{ 
It should be noted that Thorup's analysis is not tuned to optimize the polynomial dependence on $k$, and that he gives stronger concrete bounds for some realistic parameter settings.}

\paragraph{Explicit constructions.}
The literature on explicit constructions has mostly focused on optimizing the parameters of the expander, 
with the evaluation time of the neighbor function being of secondary interest, as long as it is bounded by $\poly \log u$.
As can be seen from Siegel's cell probe lower and upper bounds, optimal constructions of $k$-independent hash functions have evaluation time in the range $t = 1$ to $t = \log u$.
Therefore, an explicit construction, even if we had one with optimal parameters, would without further guarantees on the running time not be enough to solve our problem of constructing efficient expanders. 
Here we briefly review the construction given by Guruswami et al. \cite{guruswami2009}.
It is, to our knowledge, currently the best explicit construction of unbalanced bipartite expanders in terms of the parameters of the graph.
Their construction and its analysis is, similarly to the polynomial hash function in equation \eqref{eq:vandermonde}, algebraic in nature and inspired by techniques from coding theory, 
in particular Parvaresh-Vardy codes and related list-decoding algorithms~\cite{parvaresh2005}.
In their construction, a vertex $x$ is identified with its Reed-Solomon message polynomial over a finite field $\GF$.
The $i$th neighbor of $x$ is found by taking a sequence of powers of the message polynomial over an extension field, evaluating each of the resulting polynomials in the $i$th element of $\GF$, and concatenating the output.
In contrast, the constructions presented in this paper only use the subset of standard word RAM instructions that can be implemented in $AC^{0}$.
In Table 1 we have assumed that we can evaluate their neighbor function in constant time as a conservative lower bound on the performance of their construction in the word RAM model. 
Other highly unbalanced explicit constructions given in \cite{capalbo2002, tashma2007} offer a tradeoff where either one of $d$ or $|V|$ is quasipolynomial in the lower bound.
In comparison, the construction by Guruswami et al.~is polynomial in both of these parameters. 
\section{Our constructions}
In this section we present three randomized constructions of efficient expanders in the word RAM model.
Each construction offers a different tradeoff between space, time, and the probability of failure.
We present our constructions as data structures, with the randomness generated by the model during an initialization phase.
The initialization time of our data structures is always bounded by their space usage, and to simplify the exposition we therefore only state the latter.
Alternatively, our constructions could be viewed directly as randomized algorithms, taking as input a list of parameters, a random seed, and a vertex $x \in [u]$ and returning the list of neighbors of $x$.
The hashing corollaries presented in Table~1 follow directly from our three main theorems using Siegel's expander hashing technique.
\subsection{Model of computation}
The algorithms presented in this paper are analyzed in the standard word RAM model with word size $w$ as defined by Hagerup~\cite{hagerup1998}, 
modeling what can be implemented in a standard programming language like \texttt{C}~\cite{kernighan1988}.
In order to show how our algorithms benefit from word-level parallelism we use~$w$ as a parameter in the analysis.
To simplify the exposition we impose the natural restriction that, for a given choice of parameters to a data structure, the word size is large enough to address the space used by the data structure.
In other words, our results are stated with $w$ as an unrestricted parameter, but are only valid when we actually have random access in constant time.

The data structures we present require access to a source of randomness in order to initialize the character tables of simple tabulation functions.
To accomodate this we augment the model with an instruction that uses constant time to generate a uniformly random and independent integer in $[r]$ where $r \leq 2^{w}$.
We note that our constructions use only the subset of arithmetic instructions required for evaluating a simple tabulation function, i.e, standard bit manipulation instructions, integer addition, and subtraction.
Our results therefore hold in a version of the word RAM model that only uses instructions that can be implemented in $AC^{0}$, known in the literature as the restricted model~\cite{hagerup1998} or the Practical RAM~\cite{miltersen1996}. 
\subsection{Notation and definitions}
Let $\bit{n} = \{0,1\}^{n}$ denote the alphabet of $n$-bit strings, and let $x = (x_{1}, x_{2}, \dots, x_{c}) \in \bit{n}^{c}$ denote a string of $n$-bit characters of length $c$.
We define a concatenation operator $\conc$ that takes as input two characters $x \in \bit{n}$ and $y \in \bit{m}$, and concatenates them to form $x \conc y \in \bit{n + m}$.
The concatenation operator can also be applied to strings of equal length where it performs component-wise concatenation.
Given strings $x \in \bit{n}^{c}$ and $y \in \bit{m}^{c}$ the concatenation $x \conc y$ is an element of $\bit{n+m}^{c}$ with the $i$th component of $x \conc y$ defined by $(x \conc y)_{i} = x_{i} \conc y_{i}$.
We also define a prefix operator. 
Given $x \in \bit{n}$ and a positive integer $m$, in the case where $m \leq n$ we use $x[m] \in \bit{m}$ to denote the $m$-bit prefix of $x$.
In the case where $m > n$ we pad the prefix such that $x[m] \in \bit{m}$ denotes $x[n] \conc 0^{m-n}$ where $0^{m-n}$ is the string of $m-n$ bits all set to $0$.

We will present word RAM data structures that represent functions of the form~${\Gamma : \bit{n}^{c} \to \bit{m}^{d}}$.
The function $\Gamma$ defines a $d$-regular bipartite graph with input set $\bit{n}^{c}$ and output set $\{1, 2, \dots, d\} \times \bit{m}$.
For $S \subseteq \bit{n}^{c}$ we overload $\Gamma$ and define $\Gamma(S) = \{ (i, \Gamma(x)_{i}) \mid x \in S \}$, i.e., $\Gamma(S)$ is the set of outputs of $S$.

We are interested in constructing functions where every subset $S$ of inputs of size at most $k$ contains an input that has many unique neighbors, formally:
\begin{definition} \label{def:k-unique}
Let $\Gamma : \bit{n}^{c} \to \bit{m}^{d}$ be a function satisfying the following property:   
\begin{equation*}
	\forall S \subseteq \bit{n}^{c}, |S| \leq k, \exists x \in S : |\Gamma(\{x\}) {\setminus} \Gamma(S {\setminus} \{ x \})| > l.
\end{equation*}
Then, for $l = 0$ we say that $\Gamma$ is \emph{$k$-unique}. 
If further $l = d/2$ we say that $\Gamma$ is \emph{$k$-majority-unique}.
\end{definition}

For completeness we define the concept of $k$-independence:
\begin{definition} \label{def:kindependence}
Let $k$ be a positive integer and let $\mathcal{F}$ be a family of functions from $U$ to $R$.
We say that~$\mathcal{F}$ is a \mbox{\emph{$k$-independent}} family of functions if,
for every choice of $k$ distinct keys~${x_{1}, \dots, x_{k}}$ and arbitrary values~${y_{1}, \dots, y_{k}}$, 
then, for $f$ selected uniformly at random from $\mathcal{F}$ we have that 
\begin{equation*}
\Pr[f(x_{1}) = y_{1} \land f(x_{2}) = y_{2} \land \dots \land f(x_{k}) = y_{k}] = |R|^{-k}.
\end{equation*}
We say that $f$ is $k$-independent when it is selected uniformly at random from a family of $k$-independent functions.
\end{definition}
Simple tabulation functions are an important tool in our constructions. 
Our data structures can be made to consist entirely of simple tabulation functions and our evaluation algorithms can be viewed as a sequence of adaptive calls to this collection of simple tabulation functions.  
\begin{definition}
Let $(R, \oplus)$ denote an abelian group.
A \emph{simple tabulation function} $h : \bit{n}^{c} \to R$ with $k$-independent character tables is defined by
\begin{equation*}
h(x) = \bigoplus_{i=1}^{c}h_{i}(x_{i}) 
\end{equation*}
where each \emph{character table} $h_{i} : \bit{n} \to R$ is a $k$-independent function. 
\end{definition}
In this paper we consider simple tabulation functions with character tables that operate either on bit strings under the exclusive-or operation, $R = (\bit{m}, \oplus)$, 
or on sets of non-negative integers modulo some integer $r$, $R = ([r], +)$.
For our constructions of $k$-independent functions we will generally assume that $r = u^{O(1)}$.
\subsection{From k-uniqueness to k-independence}
In his seminal paper Siegel~\cite{siegel2004} showed how a $k$-unique function can be combined with a table of random elements in order to define a $k$-independent family of functions.
In his paper on the expansion properties of tabulation hash functions, 
Thorup~\cite[Lemma 2]{thorup2013} used a slight variation of Siegel's technique that makes use of the position-sensitive structure of the bipartite graph defined by $\Gamma : \bit{n}^{c} \to \bit{m}^{d}$.
This is the version we state here.
\begin{lemma}[Siegel {\cite{siegel2004}}, Thorup {\cite{thorup2013}}] \label{lem:expanderhashing}
Let $\Gamma : \bit{n}^{c} \to \bit{m}^{d}$ be $k$-unique and let $h : \bit{m}^{d} \to R$ be a simple tabulation function with $k$-independent character tables.
Then $h \circ \Gamma$ defines a family of $k$-independent functions.
We sample a function from the family by sampling the character tables of $h$.
\end{lemma}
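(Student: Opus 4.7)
The plan is to prove the $k$-independence by a standard peeling argument: I would fix an arbitrary subset $S = \{x_1, \dots, x_\ell\} \subseteq \bit{n}^{c}$ with $\ell \leq k$ distinct inputs and an arbitrary target tuple $(y_1, \dots, y_\ell) \in R^\ell$, then show
\[
\Pr_{h}\left[\, (h \circ \Gamma)(x_j) = y_j \text{ for all } j = 1, \dots, \ell \,\right] = |R|^{-\ell}.
\]
The hope is that the $k$-unique property of $\Gamma$ lets us, at every step, identify a "fresh" character-table entry for one of the remaining inputs, and then the $k$-independence of each character table $h_i$ can be used to conclude that this entry is uniform and independent of everything revealed so far.

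Concretely, I would first use $k$-uniqueness to iteratively extract a \emph{peeling order} $\pi$ on $\{1, \dots, \ell\}$ together with witness positions $(i_j)_{j=1}^{\ell} \in [d]^\ell$, as follows. Starting from $S_\ell = S$, at step $j$ the current set $S_j$ has size $j \leq k$, so by Definition \ref{def:k-unique} there exists an element $x_{\pi(j)} \in S_j$ and a coordinate $i_j$ such that the pair $(i_j, \Gamma(x_{\pi(j)})_{i_j})$ does not appear in $\Gamma(S_j \setminus \{x_{\pi(j)}\})$. Remove $x_{\pi(j)}$ from $S_j$ to form $S_{j-1}$ and repeat. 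The output is an ordering in which, for every $j$, the lookup coordinate $\Gamma(x_{\pi(j)})_{i_j}$ of table $h_{i_j}$ does not occur in any of the evaluations $h \circ \Gamma$ on $x_{\pi(1)}, \dots, x_{\pi(j-1)}$.

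Next I would reveal the character-table lookups by processing the inputs in reverse peeling order $j = 1, 2, \dots, \ell$, showing inductively that $h(\Gamma(x_{\pi(j)}))$ is uniform on $R$ conditioned on any outcome of $h(\Gamma(x_{\pi(1)})), \dots, h(\Gamma(x_{\pi(j-1)}))$. The witness property guarantees that the particular entry $h_{i_j}\!\left(\Gamma(x_{\pi(j)})_{i_j}\right)$ has not been queried during the previous rounds. Moreover, across all inputs in $S$ the table $h_{i_j}$ is queried on at most $|S| \leq k$ distinct values, so by the $k$-independence of $h_{i_j}$ this fresh entry is uniform on $R$ and independent of all previously exposed table entries. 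Since $h(\Gamma(x_{\pi(j)}))$ equals this fresh entry combined via $\oplus$ with lookups that are fixed by the conditioning, it is itself uniform and independent of $h(\Gamma(x_{\pi(1)})), \dots, h(\Gamma(x_{\pi(j-1)}))$. Chaining the conditional probabilities gives the desired $|R|^{-\ell}$, which yields $k$-independence.

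The main obstacle is bookkeeping: one must make sure the witness coordinate $(i_j, \Gamma(x_{\pi(j)})_{i_j})$ is fresh not only against the $\ell - 1$ other inputs abstractly, but against the \emph{specific} character-table entries that have already been exposed at this point of the reveal. That is handled by the inductive exposure order, since at the moment we reveal lookups for $x_{\pi(j)}$ the table $h_{i_j}$ has so far only been probed at positions $\Gamma(x_{\pi(j')})_{i_j}$ for $j' < j$, and by construction all of those differ from $\Gamma(x_{\pi(j)})_{i_j}$. Combined with the fact that each individual character table receives at most $k$ queries in total (so its $k$-independence is strong enough), the argument goes through and the lemma follows.
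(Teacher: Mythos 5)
Your proof is correct and follows essentially the same argument the paper relies on: the lemma is stated there without proof, citing Siegel and Thorup, and their proof is exactly your peeling argument (use $k$-uniqueness to order the keys so each has a character-table position untouched by the keys exposed before it, then use the $k$-independence of that table, together with the independence of the $d$ tables, to see that the fresh entry makes the output uniform). One small tightening: the remaining lookups of $x_{\pi(j)}$ are not literally \emph{fixed} by conditioning on the previous outputs alone, so either condition on all table entries probed by $S$ other than the fresh entry $h_{i_j}\!\left(\Gamma(x_{\pi(j)})_{i_j}\right)$ (which determines both the earlier outputs and the other lookups of $x_{\pi(j)}$ while leaving the fresh entry uniform), or simply note that those other lookups are independent of the fresh entry; either way the chaining yields $|R|^{-\ell}$ as you claim.
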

\subsection{From k-independence to k-uniqueness}
A $k$-independent function has the same properties as a fully random function when considering $k$-subsets of inputs.
We can therefore use the standard analysis of randomized constructions of bipartite expanders to show that, for the right choice of parameters, a $k$-independent function is likely to be $k$-unique.
For completeness we provide an analysis here.
In our exposition it will be convenient to parameterize the $k$-uniqueness or $k$-majority-uniqueness of our constructions in terms of a positive integer~$\kappa$ such that $k = 2^{\kappa}$.
\begin{lemma} \label{lem:expanderparameters}
For every choice of positive integers $c$, $n$, $\kappa$ let~$\Gamma : \bit{n}^{c} \to \bit{m}^{d}$ be a $2^{\kappa}$-independent function.
Then,
\begin{itemize}
\item[--] for $m \geq n + \kappa + 1$ and $d \geq 4c$ we have that $\Gamma$ is $2^{\kappa}$-unique with probability at least $1 - 2^{-dn/2}$.
\item[--] for $m \geq n + \kappa + 4$ and $d \geq 8c$ we have that $\Gamma$ is $2^{\kappa}$-majority-unique with probability at least $1 - 2^{-dn/4}$.
\end{itemize}
\end{lemma}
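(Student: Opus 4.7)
The plan is to follow the classical probabilistic-expander argument, reducing uniqueness to a bound on neighborhood shrinkage and then applying a union bound that exploits the $2^\kappa$-independence of $\Gamma$.

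First I would establish the following double-counting reduction. If $\Gamma$ fails to be $2^\kappa$-unique, then there must exist a witness set $S$ with $|S| \leq 2^\kappa$ in which every $x \in S$ has at most $0$ unique neighbors, which means every edge leaving $S$ is shared with at least one other vertex of $S$; hence $|\Gamma(S)| \leq d|S|/2$. More generally, if every $x \in S$ has at most $l$ unique neighbors, counting edges gives $|\Gamma(S)| \leq |S|(d+l)/2$, so failure of $2^\kappa$-majority-uniqueness ($l = d/2$) yields a witness with $|\Gamma(S)| \leq 3d|S|/4$.

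Next I would estimate, for a fixed $S$ of size $s \leq 2^\kappa$ and a fixed target set $T \subseteq [d] \times \bit{m}$ of size at most $\tau$, the probability that $\Gamma(S) \subseteq T$. Writing $T_i = \{y : (i,y) \in T\}$, the position-sensitive structure of $\Gamma$ together with $s \leq 2^\kappa$ and the $2^\kappa$-independence of $\Gamma$ gives $\Pr[\Gamma(S) \subseteq T] = \prod_{i=1}^{d} (|T_i|/2^m)^{s}$. Applying AM--GM to $\sum_i |T_i| \leq \tau$, this is maximized when the $|T_i|$ are equal, yielding the clean bound $(\tau/(d\, 2^m))^{sd}$. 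A union bound over the $\binom{d\, 2^m}{\tau} \leq (e d\, 2^m/\tau)^\tau$ choices of $T$ and the $\binom{2^{nc}}{s} \leq (e\, 2^{nc}/s)^s$ choices of $S$ then gives, for the $k$-unique case with $\tau = sd/2$, a per-$s$ failure probability of the form
\begin{equation*}
\left(\frac{e\, 2^{nc}}{s}\right)^{s} \left(\frac{2 e\, 2^m}{s}\right)^{sd/2} \left(\frac{s}{2\cdot 2^m}\right)^{sd}
= \left[(2e)^{d/2+1}\left(\frac{s}{2^{m}}\right)^{d/2} 2^{nc-n}\cdot 2^{n}/s \right]^{s},
\end{equation*}
which I would rearrange and simplify.

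Finally I would verify that the parameter settings make the exponent negative with a comfortable margin. For the $2^\kappa$-unique case, substituting $d \geq 4c$ and $m \geq n + \kappa + 1$ (so $s/2^m \leq 2^{-n-1}$ since $s \leq 2^\kappa$) makes the bracketed expression at most $2^{-dn/2 - \Omega(s)}$ after collecting powers of $2$, and summing the resulting geometric series over $s = 1, \dots, 2^\kappa$ stays below $2^{-dn/2}$. The $2^\kappa$-majority-unique case is analogous with $\tau = 3sd/4$, and the stronger requirements $d \geq 8c$ and $m \geq n + \kappa + 4$ are exactly what is needed to compensate for the weaker shrinkage ratio $3/4$ in place of $1/2$, yielding $2^{-dn/4}$. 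The only delicate step is pinning down the constants in this last calculation; this is where slack in $d/c$ and in $m - n - \kappa$ is consumed, so the main obstacle is bookkeeping rather than any new idea.
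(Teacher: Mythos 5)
Your proposal follows essentially the same route as the paper's proof: the edge-counting/pigeonhole reduction to the shrinkage condition $|\Gamma(S)|\le d|S|/2$ (resp.\ $(3/4)d|S|$ for majority-uniqueness), the per-pair collision probability $\prod_{i}(|T_i|/2^m)^{|S|}$ from $2^{\kappa}$-independence, the AM--GM step giving $(\tau/(d2^m))^{d|S|}$, and a union bound over $S$ and $T$ followed by substituting the parameters. One caution about the bookkeeping you defer: your displayed ``equality'' is not exact --- the per-$s$ root simplifies to $e\,(e/2)^{d/2}(s/2^m)^{d/2}\,2^{nc}/s$, which is smaller than your $(2e)^{d/2+1}(s/2^m)^{d/2}\,2^{nc}/s$ by a factor $2^{d+1}$, and this matters because with the looser constant the final check breaks for small $n$ (e.g.\ $n\le 2$, where the surviving factor $e^{d/2}$ swamps $2^{-nd/4}$), whereas the exact constant gives $e\,(e/4)^{d/2}2^{-nd/4}/s$ after using $m\ge n+\kappa+1$, $s\le 2^{\kappa}$ and $d\ge 4c$, which decays in $d$ and lets the geometric sum over $s\ge 2$ stay below $2^{-dn/2}$ for every $n\ge 1$ (and similarly for the majority-unique case with the stronger parameter requirements).
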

\begin{proof}
We will give the proof for $k$-majority-uniqueness. The proof for $k$-uniqueness uses the same technique.
By a standard argument based on the pigeonhole principle, for~$\Gamma$ to be $k$-majority-unique it suffices that for all $S \subseteq \bit{n}^{c}$ with~$|S| \leq k$ we have that $|\Gamma(S)| > (3/4)d|S|$.
Given that $\Gamma$ is $k$-independent, we will now bound the probability that there exists a subset $S$ with $|S| \leq k$ such that $|\Gamma(S)| \leq (3/4)d|S|$.
For every pair of sets $(S, B)$ satisfying that $S \subseteq \bit{n}^{c}$ with $|S| \leq k$ and $B \subseteq \{1, 2, \dots, d \} \times \bit{m}$ with $|B| = (3/4)d|S|$, 
the probability that $\Gamma(S) \subseteq B$ is given by $\prod_{i = 1}^{d}(|B_{i}|/2^{m})^{|S|}$ where $B_{i} = \{ (i, y) \in B \}$.
By the inequality of the arithmetic and geometric means we have that
\begin{equation*}
\prod_{i = 1}^{d}\left(\frac{|B_{i}|}{2^{m}}\right)^{|S|} \leq \left(\frac{|B|}{d2^{m}}\right)^{d|S|}.
\end{equation*}
This allows us to ignore the structure of $B$, and obtain a union bound that matches that of the standard non-compartmentalized probabilistic construction of bipartite expanders.
The probability that $\Gamma$ fails to be $k$-majority-unique is upper bounded by
\begin{equation*}
\sum^{k}_{i=2} \binom{2^{cn}}{i} \binom{d2^{m}}{(3/4)di} \left( \frac{(3/4)di}{d2^{m}} \right)^{di}. 
\end{equation*}
For every choice of positive integers $c$, $n$, $\kappa$, for ${m \geq  n + \kappa + 4}$ and $d \geq 8 c $ we get a probability of failure less than $2^{-2cn}$.
\end{proof}
\subsection{A recursive construction of a k-unique function}
In this section we introduce a recursive construction of a $k$-unique function of the form $\Gamma : \bit{n}^{c} \to \bit{m}^{d}$.
We obtain $\Gamma$ as the last in a sequence $\Gamma_{1}, \Gamma_{2}, \dots, \Gamma_{c}$ of $k$-unique functions~${\Gamma_{i}: \bit{n}^{i} \to \bit{m}^{d}}$.
Each $\Gamma_{i}$ for $i>1$ is defined in terms of $\Gamma_{i-1}$.
At the bottom of the recursion we tabulate a $k$-independent function $\Gamma_{1} : \bit{n} \to \bit{m}^{d}$.
In the general step we apply $\Gamma_{i-1}$ to the length $i-1$ prefix of the key $(x_{1}, x_{2}, \dots, x_{i-1})$, concatenate the result vector component-wise with the $i$th character $x_{i}$, 
and apply a simple tabulation function $h_{i} : \bit{m + n}^{d} \to \bit{m}^{d}$.
The recursion is therefore given by
\begin{equation}
\Gamma_{i}((x_1, x_2, \dots, x_i)) = h_{i}(\Gamma_{i-1}((x_1, x_2, \dots, x_{i-1})) \conc (x_i, x_i, \dots, x_i)) \label{eq:simple}
\end{equation}
where $(x_i, x_i, \dots, x_i)$ denotes the string of $x_i$ repeated $d$ times.
The following theorem summarizes the properties of $\Gamma$ in the word RAM model.

\begin{theorem} \label{thm:simple}
There exists a randomized data structure that takes as input positive integers $c$, $n$, $\kappa$ 
and initializes a function $\Gamma : \bit{n}^{c} \to \bit{n + \kappa + 1}^{4c}$.
In the word RAM model with word size $w$ the data structure satisfies the following:
\begin{itemize}
	\item The space usage is $O(2^{2n + \kappa}c^{3}(n + \kappa)/w)$ words.  
	\item The evaluation time of $\Gamma$ is $O(c^{2} + c^{3}(n + \kappa)/w)$.
	\item $\Gamma$ is $2^{\kappa}$-unique with probability at least $1 - 2^{-cn}$.
\end{itemize}
\end{theorem}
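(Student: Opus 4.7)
The plan is to establish $2^\kappa$-uniqueness of $\Gamma = \Gamma_c$ by induction on $i$, alternating between Lemma~\ref{lem:expanderhashing} (to lift $k$-uniqueness of a ``base layer'' through a simple tabulation to $k$-independence of the next layer) and Lemma~\ref{lem:expanderparameters} (to recover $k$-uniqueness from $k$-independence with high probability). The base map $\Gamma_1 \colon \bit{n} \to \bit{m}^d$ with $m = n+\kappa+1$ and $d = 4c \geq 4$ is stored as a fully tabulated uniform function and is therefore trivially $2^\kappa$-independent; the first part of Lemma~\ref{lem:expanderparameters} then gives $2^\kappa$-uniqueness with failure probability at most $2^{-dn/2} = 2^{-2cn}$.

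For $i \geq 2$, assuming $\Gamma_{i-1}$ is $2^\kappa$-unique, I would first show that the intermediate map $\Phi_i(x) = \Gamma_{i-1}(x_1,\dots,x_{i-1}) \conc (x_i,\dots,x_i)$ from $\bit{n}^i$ to $\bit{m+n}^d$ is itself $2^\kappa$-unique. Given $S \subseteq \bit{n}^i$ with $|S| \leq 2^\kappa$, let $S'$ be its set of distinct length-$(i-1)$ prefixes. If $|S'| = 1$ then all keys in $S$ share a prefix but differ in $x_i$, so coordinate $1$ of $\Phi_i$ already separates them uniquely; otherwise $|S'| \geq 2$, and $k$-uniqueness of $\Gamma_{i-1}$ supplies a prefix $p^* \in S'$ and coordinate $(j, y)$ with $\Gamma_{i-1}(p^*)_j = y$ unmatched in $\Gamma_{i-1}(S' \setminus \{p^*\})$, so for any $x^* \in S$ with prefix $p^*$ the $\Phi_i$-coordinate $(j, y \conc x_i^*)$ is unique in $\Phi_i(S)$ because keys with other prefixes fail the $y$-component and sibling keys with prefix $p^*$ fail the $x_i^*$-tag. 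Lemma~\ref{lem:expanderhashing} then applies with $\Phi_i$ playing the role of ``$\Gamma$'' and $h_i$ the role of ``$h$'' (viewing $h_i$ as a simple tabulation of $d$ characters of $\bit{m+n}$ whose fully tabulated character tables are $k$-independent), giving that $\Gamma_i = h_i \circ \Phi_i$ is $2^\kappa$-independent, and a second invocation of Lemma~\ref{lem:expanderparameters} lifts this to $2^\kappa$-uniqueness with conditional failure probability $\leq 2^{-2cn}$. A union bound over the $c$ levels then yields the claimed overall failure probability $\leq c \cdot 2^{-2cn} \leq 2^{-cn}$.

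The resource bounds follow from direct accounting. Space is dominated by the $h_i$'s, each of which stores $d$ character tables of $2^{m+n} = O(2^{2n+\kappa})$ entries holding $dm$ bits, contributing $O(2^{2n+\kappa} c^2(n+\kappa)/w)$ words per level and $O(2^{2n+\kappa} c^3(n+\kappa)/w)$ words in total (the base table for $\Gamma_1$ is smaller by a $2^n$-factor and is absorbed). For evaluation, each level performs $d = 4c$ character-table lookups that retrieve and XOR $dm = O(c(n+\kappa))$ bits, i.e.\ $O(c(n+\kappa)/w)$ words, giving $O(c + c^2(n+\kappa)/w)$ time per level and $O(c^2 + c^3(n+\kappa)/w)$ time across the $c$ levels.

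The delicate point I expect is verifying that the unique-neighbor property of $\Phi_i$ survives the prefix-collision case cleanly, since the whole recursion rests on recovering a genuine unique-neighbor coordinate at every level before reapplying Lemma~\ref{lem:expanderparameters}; the concatenation with the repeated tag $(x_i,\dots,x_i)$ is precisely what makes the two sub-cases ($|S'|=1$ and $|S'| \geq 2$) go through without losing the $2^\kappa$ independence bound. Once this invariant is in hand, the remainder is a careful union bound and straightforward table-size and lookup-cost arithmetic.
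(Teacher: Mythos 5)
Your proposal is correct and follows essentially the same route as the paper: tabulate $\Gamma_1$ and the simple tabulation functions $h_i$, alternate Lemma~\ref{lem:expanderhashing} and Lemma~\ref{lem:expanderparameters} up the recursion, and union bound the $c$ levels with failure $c\cdot 2^{-2cn}\leq 2^{-cn}$, with identical space and time accounting. The only difference is that you spell out the step the paper dismisses as ``easy to see'' — that the concatenated map $\Gamma_{i-1}(x_1,\dots,x_{i-1})\conc(x_i,\dots,x_i)$ inherits $2^\kappa$-uniqueness — and your two-case argument (shared prefix vs.\ distinct prefixes) for it is valid.
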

\begin{proof}
Set $m = n + \kappa + 1$ and $d = 4c$.
We initialize $\Gamma$ by tabulating a $k$-independent function $\Gamma_{1} : \bit{n} \to \bit{m}^{d}$ and simple tabulation functions $h_{2}, h_{3}, \dots, h_{c} \colon \bit{m + n}^{d} \to \bit{m}^{c}$.
In total we need to store $c$ functions that each have $O(c)$ character tables with $O(2^{2n + \kappa})$ entries of $O(c(n + \kappa))$ bits.
The space usage is therefore $O(2^{2n + \kappa}c^{3}(n + \kappa)/w)$ words. 
The same bound holds for the time to initialize the data structure. 

The evaluation time of $\Gamma$ can be found by considering the recursion given in equation \eqref{eq:simple}. 
At each of the $c$ steps we perform $O(c)$ lookups and take the exclusive-or of $O(c)$ bit strings of length $O(c(n + \kappa))$.
The total evaluation time is therefore $O(c^{2} + c^{3}(n + \kappa)/w)$.

Conditioned on $\Gamma_{i-1}$ being $k$-unique, it is easy to see that $\Gamma_{i}$ is $k$-unique, and by Lemma \ref{lem:expanderhashing} we have that $\Gamma_{i}$ is $k$-independent.
For our choice of parameters, according to Lemma \ref{lem:expanderparameters} the probability that $\Gamma_{i}$ fails to be $k$-unique is less than $2^{-2cn}$.
Therefore, $\Gamma$ is $k$-unique if $\Gamma_{1}, \Gamma_{2}, \dots, \Gamma_{c}$ are $k$-unique.
This happens with probability at least $1 - c2^{-2cn} \geq 1 - 2^{-cn}$. 
\end{proof}

Combining Theorem \ref{thm:simple} and Lemma \ref{lem:expanderhashing}, we get $k$-independent hashing in the word RAM model.
We state our result in terms of a data structure that represents a family of functions~$\mathcal{F}$.
The family is defined as in Lemma \ref{lem:expanderhashing} and is represented by a particular instance of a function $\Gamma$, constructed using Theorem \ref{thm:simple}, 
together with the parameters of a family of simple tabulation functions.

\begin{corollary} \label{cor:simple}
\family
\begin{itemize}
	\item The space used to represent $\mathcal{F}$, as well as a function $f \in \mathcal{F}$, is $O(ku^{1/t}t^{2}(\log u + t \log k)/w)$ words.
	\item The evaluation time of $f$ is $O(t^{2} + t^{2}(\log u +  t \log k)/w)$.
	\item $\mathcal{F}$ is a $k$-independent family with probability at least $1 - 1/u$.
\end{itemize}
\end{corollary}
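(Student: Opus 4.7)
The plan is to derive Corollary~\ref{cor:simple} by a direct composition of Theorem~\ref{thm:simple} with Lemma~\ref{lem:expanderhashing}, choosing the internal parameters so that the $2^{2n+\kappa}$ factor in Theorem~\ref{thm:simple}'s space bound collapses to the desired $u^{1/t}$ rather than $u^{2/t}$. Concretely, I would set $c = 2t$, $n = \lceil (\log u)/c \rceil$, and $\kappa = \lceil \log k \rceil$. This ensures $cn \geq \log u$, so the universe $[u]$ embeds injectively into $\bit{n}^c$, and $2n \leq (\log u)/t + O(1)$, so $2^{2n+\kappa} = O(u^{1/t}k)$.

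Next, I would invoke Theorem~\ref{thm:simple} with these parameters to obtain a function $\Gamma : \bit{n}^c \to \bit{n+\kappa+1}^{4c}$ that is $k$-unique with probability at least $1 - 2^{-cn} \geq 1 - 1/u$, space $O(2^{2n+\kappa}c^3(n+\kappa)/w) = O(ku^{1/t}t^2(\log u + t\log k)/w)$ words, and evaluation time $O(c^2 + c^3(n+\kappa)/w) = O(t^2 + t^2(\log u + t\log k)/w)$. To obtain the hash family, I would sample a simple tabulation function $h : \bit{m}^d \to [r]$ whose $d = O(t)$ character tables are independent uniform lookup tables (trivially $k$-independent). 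By Lemma~\ref{lem:expanderhashing}, $h \circ \Gamma$ defines a $k$-independent family $\mathcal{F}$, where $\Gamma$ is shared across the family and a single function $f \in \mathcal{F}$ is pinned down by the tables of $h$. Evaluating $h$ costs $O(t)$ lookups plus an $O(t\log r / w)$ combine step, and its character tables occupy $O(t \cdot u^{1/t}k \log r /w)$ words, both of which are absorbed into the stated bounds under the convention $\log r = O(\log u)$.

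The only delicate point—and essentially the one nontrivial step—is the parameter balancing. With the naive choice $c = t$ one would have $n \approx (\log u)/t$ and $2^{2n+\kappa} \approx u^{2/t}k$, inflating the space by a factor of $u^{1/t}$. Doubling $c$ halves $n$ and recovers the claimed bound without disturbing the evaluation time (still $O(t^2)$ up to the word-parallelism terms) or the failure probability (still at most $2^{-cn} \leq 1/u$). Everything else is bookkeeping: verifying that the space from the $h$ layer is dominated, that the embedding $[u] \hookrightarrow \bit{n}^c$ is computable in constant time under our word-size assumption, and that the failure event of Theorem~\ref{thm:simple} is the only source of randomness-dependent failure, since $k$-independence of $h \circ \Gamma$ holds deterministically whenever $\Gamma$ is $k$-unique.
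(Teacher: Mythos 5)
Your proposal is correct and follows essentially the same route as the paper: identical parameter choices ($c = 2t$, $n = \lceil(\log u)/2t\rceil$, $\kappa = \lceil\log k\rceil$), an application of Theorem~\ref{thm:simple}, and composition with a simple tabulation function via Lemma~\ref{lem:expanderhashing}, with the $h$ layer's cost dominated by that of $\Gamma$. The extra remarks on why $c = 2t$ rather than $c = t$ is needed and on the deterministic nature of $k$-independence given $k$-uniqueness are sound bookkeeping that the paper leaves implicit.
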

\begin{proof}
We apply Theorem \ref{thm:simple}, setting $c = 2t$, $n = \lceil (\log u)/2t \rceil$, $\kappa = \lceil \log k \rceil$.
This gives a function ${\Gamma : \bit{n}^{c} \to \bit{n + \kappa + 1}^{4c}}$ that is $k$-unique over $[u]$ with probability at least $1 - 1/u$.
To sample a function from the family we follow the approach of Lemma \ref{lem:expanderhashing} and compose $\Gamma$ with a simple tabulation function $h : \bit{n + \kappa + 1}^{4c} \to [r]$.
The space used to store $\Gamma$ follows directly from Theorem \ref{thm:simple} and dominates the space used by $h$.
Similarly, the evaluation time of $h \circ \Gamma$ is dominated by the time it takes to evaluate $\Gamma$.
\end{proof}

\begin{remark}
For every integer $\tau \geq 1$ we can construct a family $\mathcal{F}^{(\tau)}$ that is $k$-independent with probability at least ${1 - u^{-\tau}}$ at the cost of increasing the space usage and evaluation time by a factor $\tau$.
The family is defined by
\begin{equation*}
\mathcal{F}^{(\tau)} = \{ f = \bigoplus_{i = 1}^{\tau} f_{i} \mid f_{i} \in \mathcal{F}_{i} \}
\end{equation*} 
where each $\mathcal{F}_{i}$ is constructed independently.
\end{remark}

\begin{remark}
The recursion in equation \eqref{eq:simple} is well suited for sequential evaluation where the task is to evaluate $\Gamma$ in an interval of $[u]$ in order to generate a sequence of $k$-independent random variables.
To see this, note that once we have evaluated $\Gamma$ on a key $x = (x_{1}, x_{2}, \dots, x_{c})$, a change in the last character only changes the last step of the recursion.   
It follows that we can generate $k$-independent variables using amortized time $O(t)$ and space close to $O(ku^{1/t})$.
To our knowledge, this presents the best space-time tradeoff for the generation of $k$-independent variables in the case where we do not have access to multiplication over a suitable finite field as in \cite{christiani2014}.  
\end{remark}
\subsection{A divide and conquer recursion}
In this section we introduce a data structure for representing a $k$-majority-unique function that offers a faster evaluation time at the cost of using more space.
As in the construction from Theorem~\ref{thm:simple} we use the technique of alternating between expansion and independence, 
but rather than reading a single character at the time, we view the key as composed of two characters $x =(x_{1}, x_{2})$ and recurse on each. 
In the previous section we increased the size of the domain of our $k$-unique function by concatenating part of the key.
If we use only a few large characters this approach becomes very costly in terms of the space required to store the simple tabulation function $h_i$ in the composition $h_{i}(\Gamma_{i-1}((x_1, x_2, \dots, x_{i-1})) \conc (x_i, x_i, \dots, x_i))$.
To be able to efficiently recurse on large characters we show that the function $\Upsilon((x_{1}, x_{2})) = \Gamma(x_{1}) \conc \Gamma(x_{2})$ is $k$-unique when $\Gamma$ is $k$-majority-unique.   
\begin{lemma} \label{lem:interleaving}
Let $\Gamma : \bit{n}^{c} \to \bit{m}^{d}$ be a $k$-majority-unique function.
Then the function $\Upsilon : \bit{n}^{c} \times  \bit{n}^{c} \to \bit{2m}^{d}$ defined by $\Upsilon((x_{1}, x_{2})) = \Gamma(x_{1}) \conc \Gamma(x_{2})$ is $k$-unique.
\end{lemma}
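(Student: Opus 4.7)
The plan is to prove the lemma by a two-stage application of the majority-uniqueness property to the two coordinate projections of $T$, then use the fact that two subsets of $[d]$ of size strictly greater than $d/2$ must intersect.

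First I would fix an arbitrary $T \subseteq \bit{n}^{c} \times \bit{n}^{c}$ with $|T| \leq k$ and define the projections $S_1 = \{x_1 : (x_1, x_2) \in T\}$ and $S_2 = \{x_2 : (x_1, x_2) \in T\}$, both of size at most $k$. Applying $k$-majority-uniqueness of $\Gamma$ to $S_1$ gives an element $x_1^\star \in S_1$ together with a set $U_1 \subseteq [d]$ of positions, with $|U_1| > d/2$, such that for every $i \in U_1$ and every $y_1 \in S_1 \setminus \{x_1^\star\}$ we have $\Gamma(y_1)_i \neq \Gamma(x_1^\star)_i$.

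Next I would restrict attention to the fiber $R_2 = \{x_2 : (x_1^\star, x_2) \in T\} \subseteq S_2$, which again has size at most $k$. A second application of $k$-majority-uniqueness, now to $R_2$, produces $x_2^\star \in R_2$ and a set $U_2 \subseteq [d]$ with $|U_2| > d/2$ such that $\Gamma(y_2)_i \neq \Gamma(x_2^\star)_i$ for every $i \in U_2$ and every $y_2 \in R_2 \setminus \{x_2^\star\}$. Since $|U_1| + |U_2| > d$, pigeonhole gives some $i^\star \in U_1 \cap U_2$, and I would claim that $(x_1^\star, x_2^\star)$ is a witness of $k$-uniqueness of $\Upsilon$ with unique neighbor at coordinate $i^\star$.

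The verification is then a straightforward case split on an arbitrary $(y_1, y_2) \in T \setminus \{(x_1^\star, x_2^\star)\}$: if $y_1 \neq x_1^\star$, the choice $i^\star \in U_1$ separates the first halves $\Gamma(y_1)_{i^\star}$ and $\Gamma(x_1^\star)_{i^\star}$, so the concatenated values differ; otherwise $y_1 = x_1^\star$ forces $y_2 \in R_2 \setminus \{x_2^\star\}$, and $i^\star \in U_2$ separates the second halves. Either way, $\Upsilon(y_1, y_2)_{i^\star} \neq \Upsilon(x_1^\star, x_2^\star)_{i^\star}$, so $(i^\star, \Upsilon(x_1^\star, x_2^\star)_{i^\star})$ is a unique neighbor of $(x_1^\star, x_2^\star)$ in $T$ under $\Upsilon$. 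I do not expect any real obstacle here; the only subtlety is keeping straight that majority-uniqueness is a property of the whole set $S$ (giving \emph{some} element with many unique neighbors), and taking the fiber after the first application is exactly what allows the second application to be compatible with the first.
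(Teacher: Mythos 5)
Your proof is correct and follows essentially the same route as the paper: apply $k$-majority-uniqueness first to the projection onto the first coordinate, then to the fiber over the chosen first coordinate, and combine the two ``more than $d/2$'' guarantees. The only cosmetic difference is that you intersect the two position sets directly by pigeonhole, whereas the paper performs the same combination via a short inclusion--exclusion count on the neighbor sets.
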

\begin{proof}
Let $x = (x_{1}, x_{2})$ denote an element of $\bit{n}^{c} \times \bit{n}^{c}$. 
For $S \subseteq \bit{n}^{c} \times  \bit{n}^{c}$ define $S_{1,a} = \{ x \in S \mid x_{1} = a \}$.
The following holds for every $x = (x_{1}, x_{2}) \in S$.
\begin{align} 
|\Upsilon(\{x\}) {\setminus} \Upsilon(S {\setminus} \{x\})| &= |\Upsilon(\{x\}) {\setminus} (\Upsilon(S {\setminus} S_{1,x_{1}}) \cup \Upsilon(S_{1,x_{1}} {\setminus} \{x\}))| \nonumber \\
&= |(\Upsilon(\{x\}) {\setminus} \Upsilon(S {\setminus} S_{1,x_{1}})) \cap (\Upsilon(\{x\}) {\setminus} \Upsilon(S_{1,x_{1}} {\setminus} \{x\}))| \nonumber \\
&\geq |\Upsilon(\{x\}) {\setminus} \Upsilon(S {\setminus} S_{1,x_{1}})| + |\Upsilon(\{x\}) {\setminus} \Upsilon(S_{1,x_{1}} {\setminus} \{x\})| \nonumber \\
   &\qquad- |\Upsilon(\{x\})|. \label{eq:interleavingbound}
\end{align}
We will show that for every $S \subseteq \bit{n}^{c} \times \bit{n}^{c}$ with $|S| \leq k$ there exists a key $(x_{1}, x_{2}) \in S$ such that $|\Upsilon(\{x\}) {\setminus} \Upsilon(S {\setminus} \{x\})| > 0$. 
We begin by choosing the first component of $x$.
Let $\pi_{j}(S) = \{ x_{j} \mid x \in S \}$ denote the set of $j$th components of elements of $S$. 
By the $k$-majority-uniqueness of $\Gamma$, considering the set $\pi_{1}(S)$, we have that
\begin{equation*}
\exists x_{1} \in \pi_{1}(S) : \forall x \in S_{1,x_{1}} : |\Upsilon(\{x\}) {\setminus} \Upsilon(S {\setminus} S_{1,x_{1}})| > d/2.
\end{equation*}
Fix $x_{1}$ with this property and consider the choice of $x_{2}$.
By the $k$-majority-uniqueness of $\Gamma$, considering the set $\pi_{2}(S)$, we have that
\begin{equation*}
 \forall x_{1} \in \pi_{1}(S) : \exists x_{2} \in \pi_{2}(S_{1,x_{1}}) : |\Upsilon(\{x\}) {\setminus} \Upsilon(S_{1,x_{1}} {\setminus} \{x\})| > d/2.
\end{equation*}
We can therefore always find a key $(x_{1}, x_{2}) \in S$ such that both $|\Upsilon(\{x\}) {\setminus} \Upsilon(S {\setminus} S_{1,x_{1}})| > d/2$, $|\Upsilon(\{x\}) {\setminus} \Upsilon(S_{1,x_{1}} {\setminus} \{x\})| > d/2$ are satisfied.
The result follows from equation \eqref{eq:interleavingbound} where we use the fact that $|\Upsilon(\{x\})| = d$.
\end{proof}

We will give a recursive construction of a $k$-majority-unique function of the form $\Gamma_{i} : \bit{n}^{2^{i}} \to \bit{m}^{2^{i + 3}}$.
Let $h_{i} : \bit{2m}^{2^{i +2}} \to \bit{m}^{2^{i+3}}$ be a simple tabulation function.
For $i > 0$ the recursion takes the following form.  
\begin{equation*}
	\Gamma_{i}((x_1, \dots, x_{2^i})) = h_{i}(\Gamma_{i-1}((x_1, \dots, x_{2^{i-1}})) \conc \Gamma_{i-1}((x_{2^{i-1} + 1}, \dots, x_{2^i}))).
\end{equation*}
At the bottom of the recursion we tabulate a $k$-independent function $\Gamma_{0}$.

\begin{theorem} \label{thm:majorityrecursion}
There exists a randomized data structure that takes as input positive integers $\lambda$, $n$, $\kappa$ and initializes a function $\Gamma : \bit{n}^{2^{\lambda}} \to \bit{n + \kappa + 4}^{2^{\lambda + 3}}$.
In the word RAM model with word size $w$ the data structure satisfies the following:
\begin{itemize}
	\item The space usage is $O(2^{2(n + \kappa + \lambda)}(n + \kappa)/w)$ words. 
	\item The evaluation time of $\Gamma$ is $O(2^{\lambda}(\lambda + 2^{\lambda}(n + \kappa)/w))$.
	\item $\Gamma$ is $2^{\kappa}$-majority-unique with probability at least $1 - 2^{-2n + 1}$.
\end{itemize}
\end{theorem}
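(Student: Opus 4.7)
The plan is to prove the theorem by induction on $i \in \{0, 1, \dots, \lambda\}$, showing at each step that $\Gamma_i : \bit{n}^{2^i} \to \bit{m}^{d_i}$ is $2^\kappa$-majority-unique, where I set $m = n + \kappa + 4$ and $d_i = 2^{i+3}$. The three levers are already available: Lemma \ref{lem:expanderhashing} converts $k$-uniqueness into $k$-independence after composition with simple tabulation, Lemma \ref{lem:expanderparameters} promotes $k$-independence back to $k$-majority-uniqueness when $m \geq n+\kappa+4$ and $d \geq 8c$, and Lemma \ref{lem:interleaving} is the new ingredient that turns a $k$-majority-unique function into a $k$-unique function on the doubled domain. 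Alternating these three lemmas is precisely what the recursion is built to exploit.

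For the base case, I would tabulate $\Gamma_0 : \bit{n} \to \bit{m}^8$ with fully independent outputs (hence $2^\kappa$-independent), and apply Lemma \ref{lem:expanderparameters} with $c=1$, $d=8$, obtaining $2^\kappa$-majority-uniqueness with failure probability at most $2^{-2n}$. For the inductive step, assuming $\Gamma_{i-1}$ is $2^\kappa$-majority-unique, Lemma \ref{lem:interleaving} yields that $\Upsilon_{i-1}((x_1,x_2)) = \Gamma_{i-1}(x_1) \conc \Gamma_{i-1}(x_2)$ is $2^\kappa$-unique as a map $\bit{n}^{2^i} \to \bit{2m}^{d_{i-1}}$. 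Composing with the simple tabulation function $h_i : \bit{2m}^{2^{i+2}} \to \bit{m}^{2^{i+3}}$ (whose $2^{i+2}$ character tables are $2^\kappa$-independent), Lemma \ref{lem:expanderhashing} gives that $\Gamma_i = h_i \circ \Upsilon_{i-1}$ is $2^\kappa$-independent, and then Lemma \ref{lem:expanderparameters} with $c = 2^i$ and $d = d_i = 8c$ upgrades it to $2^\kappa$-majority-unique with failure probability at most $2^{-d_i n/4} = 2^{-2^{i+1} n}$. A union bound across all $\lambda+1$ levels bounds the total failure probability by $\sum_{i=0}^{\lambda} 2^{-2^{i+1} n} \leq 2^{-2n+1}$.

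For the resource bounds, the space at level $i$ is dominated by storing $h_i$: its $2^{i+2}$ character tables each have $2^{2m}$ entries of $d_i m$ bits, giving $O(2^{2i} \cdot 2^{2(n+\kappa)}(n+\kappa)/w)$ words, and the geometric sum across levels is dominated by $i = \lambda$, matching the claimed $O(2^{2(n+\kappa+\lambda)}(n+\kappa)/w)$. For the evaluation time, the recurrence $T(i) = 2T(i-1) + O(2^i + 2^{2i}(n+\kappa)/w)$ from the two recursive calls and the cost of $h_i$ (both the $2^{i+2}$ lookups and the XOR of $2^{i+2}$ strings of $d_i m$ bits) unwinds to $T(\lambda) = O(2^\lambda \lambda + 2^{2\lambda}(n+\kappa)/w) = O(2^\lambda(\lambda + 2^\lambda(n+\kappa)/w))$.

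The main subtlety I expect is verifying that the ``majority'' slack is exactly what makes the recursion close: Lemma \ref{lem:interleaving} consumes one factor of $d/2$ worth of unique neighbors in order to produce a $k$-unique (no longer majority) function on the doubled domain, which is then fed into Lemma \ref{lem:expanderhashing} and restored to majority-uniqueness by Lemma \ref{lem:expanderparameters}. This is why $d_i$ must double at each level to keep $d_i = 8c$ exactly, and any attempt to economize on the degree at a single level would break the chain. I also want to check that $h_i$ only needs $2^\kappa$-independent character tables (rather than something stronger) for Lemma \ref{lem:expanderhashing} to apply, which is what keeps the per-level space cost in check and ultimately makes the construction improve over Theorem \ref{thm:simple}.
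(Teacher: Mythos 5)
Your proposal is correct and follows essentially the same route as the paper's proof: an induction over the $\lambda$ levels that alternates Lemma~\ref{lem:interleaving} (majority-uniqueness to uniqueness on the squared domain), Lemma~\ref{lem:expanderhashing} (uniqueness plus simple tabulation to $2^{\kappa}$-independence), and Lemma~\ref{lem:expanderparameters} (back to majority-uniqueness with failure probability $2^{-2^{i+1}n}$), finished by the same union bound $\sum_{i=0}^{\lambda}2^{-2^{i+1}n}\leq 2^{-2n+1}$ and the same space and time accounting via the recurrence $T(i)\leq 2T(i-1)+O(2^{i}(1+2^{i}(n+\kappa)/w))$. Your added remarks on why $d_i=8c$ must double per level and why $2^{\kappa}$-independent character tables suffice are consistent with the paper's construction.
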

\begin{proof}
Let $m = n + \kappa + 4$.
We initialize $\Gamma$ by tabulating $\Gamma_{0}$ and the character tables of the simple tabulation functions $h_{1}, h_{2}, \dots, h_{\lambda}$ where $h_{i} : \bit{2m}^{2^{i +2}} \to \bit{m}^{2^{i+3}}$.
In total we have $O(2^{\lambda})$ character tables with $O(2^{2(n + \kappa)})$ entries of $O(2^{\lambda}(n + \kappa))$ bits, resulting in the space bound.

Let $T(i)$ denote the evaluation time of $\Gamma_{i}$.
For $i = 0$ we can evaluate $\Gamma_{0}$ by performing a single lookup in $O(1)$ time.
For $i > 0$ evaluating $h_{i} \circ (\Gamma_{i-1} \conc \Gamma_{i-1})$ takes two evalutions of $\Gamma_{i-1}$ followed by evaluating $h_{i}$ on their concatenated output using $O(2^{i}(1 + 2^{i}(n + \kappa)/w))$ operations.
The recurrence takes the form
\begin{equation*}
T(i) \leq 
\begin{cases} 2T(i-1) + O(2^{i}(1 + 2^{i}(n + \kappa)/w)) & \mbox{if } i > 0 \\ 
O(1) & \mbox{if } i = 0 
\end{cases}
\end{equation*}
The solution to the recurrence is $O(2^{i}(i + 2^{i}(n + \kappa)/w))$.

We now turn our attention to the probability that $\Gamma_{i} = h_{i} \circ (\Gamma_{i-1} \conc \Gamma_{i-1})$ fails to be $k$-majority-unique.
Conditional on $\Gamma_{i-1}$ being $k$-majority-unique, by Lemma \ref{lem:interleaving} we have that $(\Gamma_{i-1} \conc \Gamma_{i-1})$ is $k$-unique and composing it with $h_{i}$ gives us a $k$-independent function.
For our choice of parameters, according to Lemma \ref{lem:expanderparameters} the probability that $\Gamma_{i}$ fails to be $k$-majority-unique is less than $2^{-2^{i + 1}n}$.
Therefore, $\Gamma$ is $k$-majority-unique if $\Gamma_{0}, \Gamma_{1}, \dots, \Gamma_{\lambda}$ are $k$-majority-unique.
This happens with probability at least $1 - \sum_{i=0}^{\lambda} 2^{-2^{i + 1}n} \geq 1 - 2^{-2n + 1}$. 
\end{proof}

\begin{remark}
	The recursion behind Lemma \ref{thm:majorityrecursion} is well suited for parallelization. 
If we have $c$ processors working in lock-step with some small shared memory we can evaluate $\Gamma$ with domain $\bit{n}^{c}$ in time~$O(c)$. 
\end{remark}

\begin{corollary} \label{cor:majorityrecursion}
\family
\begin{itemize}
	\item The space used to represent $\mathcal{F}$, as well as a function $f \in \mathcal{F}$, is $O(k^{2}u^{1/t}t(\log u + t\log k)/w)$ words.
	\item The evaluation time of $f$ is $O(t \log t + t (\log u + t \log k)/w)$.  
	\item $\mathcal{F}$ is $k$-independent with probability at least $1 - u^{-1/t}$.
\end{itemize}
\end{corollary}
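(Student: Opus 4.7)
The plan is to mirror the proof of Corollary \ref{cor:simple}, combining Theorem \ref{thm:majorityrecursion} with Lemma \ref{lem:expanderhashing}. The only substantive work is choosing the parameters $\lambda$, $n$, $\kappa$ in Theorem \ref{thm:majorityrecursion} so that the resulting $\Gamma$ covers the domain $[u]$ and the claimed bounds fall out of the theorem's guarantees. I expect no real obstacle, just bookkeeping with floors and ceilings.

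First, I would set $\lambda = \lceil \log(2t) \rceil$ so that $2^{\lambda} = \Theta(t)$, then $n = \lceil (\log u)/2^{\lambda} \rceil = \Theta((\log u)/t)$, and $\kappa = \lceil \log k \rceil$. With these settings $\bit{n}^{2^{\lambda}}$ contains $[u]$ via an arbitrary injection $\pi$. Theorem \ref{thm:majorityrecursion} then produces a function $\Gamma \colon \bit{n}^{2^{\lambda}} \to \bit{n+\kappa+4}^{2^{\lambda+3}}$ that is $k$-majority-unique (and hence $k$-unique by Definition \ref{def:k-unique}) with probability at least $1 - 2^{-2n+1} \geq 1 - u^{-1/t}$, after absorbing the factor of $2$ into the ceiling on $n$ if needed.

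Next I would read off the two performance bounds. The evaluation time from Theorem \ref{thm:majorityrecursion} becomes
\begin{equation*}
O\!\left(2^{\lambda}\big(\lambda + 2^{\lambda}(n+\kappa)/w\big)\right) = O\!\left(t\log t + t(\log u + t\log k)/w\right),
\end{equation*}
and the space becomes
\begin{equation*}
O\!\left(2^{2(n+\kappa+\lambda)}(n+\kappa)/w\right) = O\!\left(u^{1/t} k^{2} t^{2}\,\big((\log u)/t + \log k\big)/w\right) = O\!\left(k^{2}u^{1/t}t(\log u + t\log k)/w\right),
\end{equation*}
as claimed. Note that $2^{2n} = O(u^{1/t})$ is exactly why we pair $\lambda$ with $2t$ rather than $t$: it is the quadratic blow-up in the space bound of Theorem \ref{thm:majorityrecursion} that forces $n \approx (\log u)/(2t)$, and this is the sole reason the $k^{2}$ factor appears in the final space bound.

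Finally, to turn $\Gamma$ into the hash family $\mathcal{F}$, I would apply Lemma \ref{lem:expanderhashing} and sample a simple tabulation function $h \colon \bit{n+\kappa+4}^{2^{\lambda+3}} \to [r]$ with $k$-independent character tables, defining $\mathcal{F}$ as the family of compositions $h \circ \Gamma \circ \pi$ over the choice of $h$. Conditional on $\Gamma$ being $k$-unique, the lemma gives $k$-independence of $\mathcal{F}$, so the overall failure probability is still at most $u^{-1/t}$. The character tables of $h$ contain $O(2^{\lambda}\cdot 2^{n+\kappa}\log r / w)$ words, which for $r = u^{O(1)}$ is dominated by the space used by $\Gamma$; likewise evaluating $h$ is dominated by evaluating $\Gamma$. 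This completes the bounds.
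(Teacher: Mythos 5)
Your proposal is correct and follows essentially the same route as the paper: it applies Theorem \ref{thm:majorityrecursion} with $\lambda \approx \log t + 1$, $n \approx (\log u)/(2t)$ (with the extra $+1$ you mention to absorb the constant in the failure probability), and $\kappa = \lceil \log k \rceil$, then composes $\Gamma$ with a simple tabulation function via Lemma \ref{lem:expanderhashing}. The parameter bookkeeping and the observation that the space/time of the outer tabulation function are dominated by those of $\Gamma$ match the paper's (terser) argument.
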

\begin{proof}
Apply Theorem \ref{thm:majorityrecursion} with parameters $\lambda = \lceil \log t \rceil + 1$, $n = \lceil (\log u) / 2t \rceil + 1$, and $\kappa = \lceil \log k \rceil$.
This gives is a function $\Gamma$ that is $k$-unique over $[u]$ with probability at least $1 - u^{-1/t}$.
The family $\mathcal{F}$ is defined by the composition of $\Gamma$ with a suitable simple tabulation function following the approach of Lemma \ref{lem:expanderhashing}.
\end{proof}
\begin{remark}
If we have access to a standard $t$-independent polynomial hash function that outputs elements of $[r]$ then the error probability can be reduced to $1/u$ at no additional cost provided that the evaluation time of such a function does not exceed the evaluation time of $f \in \mathcal{F}$. 
The trick is to add (modulo $r$) the output of such a guaranteed $t$-independent function to the output of our original function from $\mathcal{F}$.
The reason is that most of the error probability comes from the ``small sets'' in the union bound in Lemma \ref{lem:expanderparameters} and these will now be dealt with by our $t$-independent function.
\end{remark}
\subsection{Balancing time and space}
Theorem \ref{thm:simple} yielded a $k$-unique function over $\bit{n}^{c}$ with an evaluation time of about $O(c^{2})$ while using linear space in $k$.
Theorem \ref{thm:majorityrecursion} resulted in an evaluation time of about~$O(c \log c)$, using quadratic space in $k$.
Under a mild restriction on $k$, the two techniques can be combined to obtain an evaluation time of $O(c \log c)$ and linear space in $k$.
We take the construction from Theorem \ref{thm:majorityrecursion} as our starting point, 
but instead of tabulating the character tables of $h_{1}, \dots, h_{\lambda}$ we replace them with more space efficient $k$-independent functions that we construct using Theorem \ref{thm:simple}.

\begin{theorem} \label{thm:linear}
There exists a randomized data structure that takes as input positive integers $\lambda$, $n$, $\kappa = O(n)$ 
and initializes a function $\Gamma : \bit{n}^{2^{\lambda}} \to \bit{n + \kappa + 4}^{2^{\lambda + 3}}$.
In the word RAM model with word size $w$ the data structure satisfies the following:
\begin{itemize}
	\item The space usage is $O(2^{n + \kappa + 2\lambda}n/w)$ words.  
	\item The evaluation time of $\Gamma$ is $O(2^{\lambda}(\lambda + 2^{\lambda}n/w))$.
	\item $\Gamma$ is $2^{\kappa}$-majority-unique with probability at least $1 - 2^{-n + 1}$.
\end{itemize}
\end{theorem}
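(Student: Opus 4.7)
The plan is to follow the divide-and-conquer recursion from Theorem \ref{thm:majorityrecursion} essentially verbatim, but to avoid the quadratic-in-$k$ blow-up in space coming from the simple-tabulation functions $h_i : \bit{2m}^{2^{i+2}} \to \bit{m}^{2^{i+3}}$. In Theorem \ref{thm:majorityrecursion} the character tables of each $h_i$ are stored explicitly, which costs $\Theta(2^{2m}) = \Theta(2^{2(n+\kappa)})$ words per table. To bring this down to something linear in $k=2^{\kappa}$, I would instead implement each character table as a $k$-independent hash function from $\bit{2m}$ to $\bit{m}^{2^{i+3}}$ built via Corollary \ref{cor:simple} (i.e. Theorem \ref{thm:simple} composed with a final tabulation step, as in Lemma \ref{lem:expanderhashing}). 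Concretely, I would split the $2m$-bit input into $c' = O(1)$ characters of $n' = \Theta(n)$ bits each; the assumption $\kappa = O(n)$ is exactly what allows such a splitting to keep $c'$ constant while still hitting a per-table space bound of roughly $2^{n+\kappa+\lambda}\,\mathrm{poly}(n)/w$ words.

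With these nested $k$-independent character tables in place, the recursion remains
\begin{equation*}
\Gamma_i((x_1,\dots,x_{2^i})) = h_i\!\left(\Gamma_{i-1}((x_1,\dots,x_{2^{i-1}})) \conc \Gamma_{i-1}((x_{2^{i-1}+1},\dots,x_{2^i}))\right),
\end{equation*}
and the correctness argument is identical to that of Theorem \ref{thm:majorityrecursion}: conditioned on $\Gamma_{i-1}$ being $k$-majority-unique, Lemma \ref{lem:interleaving} makes $\Gamma_{i-1}\conc\Gamma_{i-1}$ a $k$-unique function, and Lemma \ref{lem:expanderhashing} together with Lemma \ref{lem:expanderparameters} upgrades its composition with $h_i$ to a $k$-independent function that, with high probability, is in fact $k$-majority-unique.

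For the space bound I would sum across the $\lambda$ levels: there are $\sum_{i=1}^{\lambda} 2^{i+2} = O(2^{\lambda})$ character tables in total, and by the parameter choice above each costs $O(2^{n+\kappa+\lambda} n / w)$ words, yielding the claimed $O(2^{n+\kappa+2\lambda} n / w)$. The time analysis is the same recurrence $T(i) \le 2T(i-1) + O(2^i(1 + 2^i n/w))$ as in Theorem \ref{thm:majorityrecursion} — each $h_i$ evaluation on $O(2^i)$ constant-sized characters now costs $O(1)$ per character rather than a single word-aligned lookup, but this only changes hidden constants — giving $T(\lambda) = O(2^{\lambda}(\lambda + 2^{\lambda}n/w))$.

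The main obstacle, and the place that forces the weaker failure bound $1 - 2^{-n+1}$ (versus $1 - 2^{-2n+1}$ in Theorem \ref{thm:majorityrecursion}), is the union bound over failure events. I need each of the $O(2^\lambda)$ inner Theorem \ref{thm:simple} constructions to succeed in producing $k$-independent character tables, and each of the $\lambda$ outer composition steps to yield a $k$-majority-unique $\Gamma_i$ by Lemma \ref{lem:expanderparameters}. Each inner construction fails with probability $2^{-\Omega(c'n')} = 2^{-\Omega(n)}$ (by Theorem \ref{thm:simple} with $c'=O(1)$, $n' = \Theta(n)$), so the combined union bound contributes $O(2^{\lambda})\cdot 2^{-\Omega(n)}$, absorbed into $2^{-n+1}$ when the hidden constants in the parameter choice are tuned — this is the tightest point of the argument and is what dictates the precise relation between $n'$, $c'$, and $\kappa$. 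Once these constants are chosen correctly, the three bullet points of the theorem follow directly.
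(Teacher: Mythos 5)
Your overall plan — keep the divide-and-conquer recursion of Theorem \ref{thm:majorityrecursion} and replace the explicitly tabulated character tables $h_{i,j}\colon \bit{2m}\to\bit{m}^{2^{i+3}}$ by space-efficient $k$-independent functions built from Theorem \ref{thm:simple}, exploiting $\kappa=O(n)$ to split $\bit{2m}$ into $O(1)$ characters of $\Theta(n)$ bits — is exactly the idea behind the paper's proof, and your space and time accounting lands within the stated bounds. The gap is in the failure probability, and it is precisely the point you flag as ``the tightest point of the argument.'' You instantiate a \emph{separate} inner Theorem \ref{thm:simple} construction for each of the $\sum_{i\le\lambda}2^{i+2}=\Theta(2^{\lambda})$ character tables, so you must union-bound over $\Theta(2^{\lambda})$ independent failure events, each of probability $2^{-c'n'}=2^{-\Theta(n)}$ with a \emph{fixed} constant in the exponent (you cannot let $c'$ grow, or the per-character evaluation cost and the space per table are no longer $O(1)$ and $O(2^{n+\kappa+\lambda}n/w)$). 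Since $\lambda$, $n$ and $\kappa$ are independent input parameters, the term $2^{\lambda}\cdot 2^{-\Theta(n)}$ cannot be absorbed into $2^{-n+1}$ uniformly: for example with $\kappa$ and $n$ small and $\lambda\gg n+\kappa$ the bound is vacuous, and even in the regime of Corollary \ref{cor:linear} (where $2^{\lambda}\approx t$ and $n\approx(\log u)/t$) it breaks once $t\log t \gg \log u$. So no tuning of the hidden constants rescues the claimed probability $1-2^{-n+1}$ for all admissible parameters.

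The paper's proof avoids this by \emph{sharing} the expensive inner object: it builds a single $k$-unique function $\Upsilon\colon\bit{\hat n}^{\hat c}\to\bit{\hat n+\kappa+1}^{4\hat c}$ via Theorem \ref{thm:simple} (with $\hat c=O(1)$, $\hat n=\lceil n/2\rceil$, so $\bit{2m}$ embeds in $\bit{\hat n}^{\hat c}$), and realizes every character table as $h_{i,j}=g_{i,j}\circ\Upsilon$, where only the simple tabulation functions $g_{i,j}$ are fresh and tabulated. Conditioned on the single event that $\Upsilon$ is $k$-unique (failure $\le 2^{-n-1}$), each $h_{i,j}$ is $k$-independent by Lemma \ref{lem:expanderhashing}, and the $h_{i,j}$ are mutually independent because the $g_{i,j}$ are; the outer recursion then fails with probability at most $\sum_{i}2^{-2^{i+1}n}\le 2^{-2n+1}$ exactly as in Theorem \ref{thm:majorityrecursion}, giving $2^{-2n+1}+2^{-n-1}<2^{-n+1}$ for \emph{every} $\lambda$. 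Your argument becomes correct if you adopt this sharing step (one $\Upsilon$, many cheap $g_{i,j}$) in place of per-table instantiations of Corollary \ref{cor:simple}; as written, it proves the theorem only under an additional assumption of the form $\lambda\le n+2\kappa-O(1)$.
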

\begin{proof}
At the top level, the recursion underlying $\Gamma$ takes the same form as in Theorem \ref{thm:majorityrecursion}.
\begin{equation*}
\Gamma_{i} = h_{i} \circ (\Gamma_{i-1} \conc \Gamma_{i-1}).
\end{equation*}
The functions $h_{i} : \bit{2m}^{2^{i+2}} \to \bit{m}^{2^{i+3}}$ are simple tabulation functions with $m = n + \kappa + 4$.
Each $h_{i}$ is constructed from~$2^{i+2}$ character tables $h_{i,j} : \bit{2m} \to \bit{m}^{2^{i+3}}$.
Theorem~\ref{thm:majorityrecursion} only assumes that the character tables $h_{i,j}$ are $k$-independent functions.
We will apply Theorem~\ref{thm:simple} to construct a function~$\Upsilon$ that we for each character table $h_{i,j}$ compose with a simple tabulation function $g_{i,j}$ in order to construct $h_{i,j}$.
By the restriction that $\kappa = O(n)$ we have that $m = O(n)$.
We set the parameters of~$\Upsilon$ to $\hat{c} = O(1)$, $\hat{n} = \lceil n/2 \rceil$, $\hat{\kappa} = \kappa$ such that $\bit{2m}$ can be embedded in $\bit{\hat{n}}^{\hat{c}}$.
Furthermore,~$\Upsilon$ uses~$O(2^{n + \kappa}n/w)$ words of space, can be evaluated in $O(1)$ operations, and is $k$-unique with probability at last ${1 - 2^{-n-1}}$.
Because $\Upsilon$ has $O(1)$ output characters, the time to evaluate~$h_{i,j} = g_{i,j} \circ \Upsilon$ is no more than a constant times the word length of the output of $h_{i,j}$.
The time to evaluate $\Gamma$ therefore only increases by a constant factor compared to the evaluation time in Theorem~\ref{thm:majorityrecursion}.

The probability of failure of $\Gamma$ to be $k$-majority-unique is the same as in Theorem \ref{thm:majorityrecursion}, provided that $\Upsilon$ does not fail to be $k$-unique.
This gives a total probability of failure of less than $2^{-2n + 1} + 2^{-n -1} < 2^{-n + 1}$.

We only store a single $\Upsilon$ and the character tables of $g_{i,j}$ that we use to simulate the character tables $h_{i,j}$.
From the parameters of $\Upsilon$ we have that $g_{i,j}$ uses $O(1)$ character tables with $O(2^{n+\kappa})$ entries of $O(2^{i}n/w)$ words.
The space usage is dominated by the $O(2^{\lambda})$ character tables of $h_{\lambda}$ that use space $O(2^{n + \kappa + 2\lambda}n/w)$ in total.
\end{proof}

\begin{corollary} \label{cor:linear}
There exists a randomized data structure that takes as input positive integers $u$, $r$, $t$, $k = u^{O(1/t)}$ and selects a family of functions $\mathcal{F}$ from $[u]$ to $[r]$. 
In the word RAM model with word length $w$ the data structure satisfies the following:
\begin{itemize}
	\item The space used to represent $\mathcal{F}$, as well as a function $f \in \mathcal{F}$, is $O(ku^{1/t}t(\log u)/w)$ words.
	\item The evaluation time of $f$ is $O(t \log t + t(\log u)/w)$.
	\item $\mathcal{F}$ is $k$-independent with probability at least $1 - u^{-1/t}$.
\end{itemize}
\end{corollary}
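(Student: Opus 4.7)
The plan is to apply Theorem~\ref{thm:linear} with appropriately tuned parameters to obtain a $k$-unique function over a domain that contains $[u]$, then compose it with a simple tabulation function in the spirit of Lemma~\ref{lem:expanderhashing}, exactly as done for Corollary~\ref{cor:majorityrecursion}. The only novelty compared to that corollary is calibrating the parameters so that linear (rather than quadratic) dependence on $k$ falls out of Theorem~\ref{thm:linear}, and checking that the standing restriction $\kappa = O(n)$ of Theorem~\ref{thm:linear} is precisely the hypothesis $k = u^{O(1/t)}$.

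Concretely, I will invoke Theorem~\ref{thm:linear} with $\lambda = \lceil \log t \rceil$, $n = \lceil (\log u)/t \rceil + 1$, and $\kappa = \lceil \log k \rceil$. With these settings $n \cdot 2^{\lambda} \geq \log u$, so $[u]$ embeds into $\bit{n}^{2^{\lambda}}$, and the hypothesis $k = u^{O(1/t)}$ translates into $\kappa = O(n)$, which is the only algebraic constraint imposed by the theorem. Plugging into the three bounds of Theorem~\ref{thm:linear}: the space becomes $O(2^{n+\kappa+2\lambda} n/w) = O(u^{1/t} \cdot k \cdot t^{2} \cdot (\log u)/(tw)) = O(k u^{1/t} t (\log u)/w)$; the evaluation time becomes $O(2^{\lambda}(\lambda + 2^{\lambda} n/w)) = O(t \log t + t(\log u)/w)$; and the failure probability is at most $2^{-n+1} \leq u^{-1/t}$. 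Since $k$-majority-uniqueness implies $k$-uniqueness, the resulting $\Gamma$ is $k$-unique with probability at least $1 - u^{-1/t}$.

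To move from $k$-uniqueness to $k$-independence I will follow Lemma~\ref{lem:expanderhashing}: define $\mathcal{F}$ as the family of functions $h \circ \Gamma$, where $h \colon \bit{n+\kappa+4}^{2^{\lambda+3}} \to [r]$ is a simple tabulation function with $k$-independent character tables, and sample $f \in \mathcal{F}$ by sampling the character tables of $h$. The lemma then guarantees that, conditional on $\Gamma$ being $k$-unique, the resulting family is $k$-independent; combined with the previous paragraph this gives the claimed success probability $1 - u^{-1/t}$. The character tables of $h$ can be realized by polynomial hashing of constant degree $O(\log k / \log u^{1/t}) = O(1)$ into $[r]$, so the space and evaluation cost of $h$ are both subsumed by those of $\Gamma$ (a $2^{\lambda+3} = O(t)$-way XOR of single-character lookups, each taking $O((\log u)/w)$ time on words read from the table), and the bounds in the statement follow.

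I do not expect any genuine obstacle: the work has already been done in Theorem~\ref{thm:linear}, and the remaining task is essentially arithmetic on the parameters. The only point that deserves a careful sentence is why the hypothesis $k = u^{O(1/t)}$ is exactly what is needed --- namely because it is the condition that the characters of the top-level simple tabulation in Theorem~\ref{thm:linear} have range $\bit{m}$ with $m = O(n)$, which is what enables the substitution of the large character tables by the constant-time construction of Theorem~\ref{thm:simple} and hence the saving of a factor of $k$ in space compared to Corollary~\ref{cor:majorityrecursion}.
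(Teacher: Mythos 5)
Your proposal is correct and follows essentially the same route as the paper: the identical parameter choice $\lambda = \lceil \log t \rceil$, $n = \lceil (\log u)/t \rceil + 1$, $\kappa = \lceil \log k \rceil$ in Theorem~\ref{thm:linear}, followed by composition with a simple tabulation function as in Lemma~\ref{lem:expanderhashing}, with the arithmetic on space, time, and failure probability checking out. One small caveat: the aside that the character tables of $h$ can be realized by constant-degree polynomials is not right, since Lemma~\ref{lem:expanderhashing} needs $k$-independent character tables --- but as these tables are over the small domain $\bit{n+\kappa+4}$ they can simply be stored as tables of random values within the stated space bound, so this does not affect the argument.
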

\begin{proof}
Apply Theorem \ref{thm:linear} with parameters $\lambda = \lceil \log t \rceil$, $n = \lceil (\log u) / t \rceil + 1$, and $\kappa = \lceil \log k \rceil$.
This gives is a function $\Gamma$ that is $k$-unique over $[u]$ with probability at least $1 - u^{-1/t}$.
The family $\mathcal{F}$ is defined by the composition of $\Gamma$ with a suitable simple tabulation function following the approach of Lemma \ref{lem:expanderhashing}.
\end{proof}
\subsection{An improvement for space close to $k$}\label{sec:close-to-k-improvement}
In this section we present a different space efficient version of the divide-and-conquer recursion.
The new recursion is based on an extension of the ideas behind the graph product from Lemma \ref{lem:interleaving}.
In Lemma \ref{lem:interleaving} we use expansion properties over subsets of size $k$ and concatenate the output characters of $\Gamma$, resulting in an output domain of size at least $k^{2}$.
By using stronger expansion properties and modifying our graph concatenation product to fit the structure of the key set, we are able to reduce the space usage at the cost of using more time.
We now introduce a property of $d$-regular bipartite graph that we call $k$-super-majority-uniqueness: 
More than half the input vertices have more than $d/2$ unique neighbors.
\begin{definition} \label{def:k-super-majority-unique}
Let $\Gamma : \bit{n}^{c} \to \bit{m}^{d}$ be a function satisfying the following property:   
\begin{equation*}
	\forall S \subseteq \bit{n}^{c}, |S| \leq k \colon |\{ x \in S \mid  |\Gamma(\{x\}) {\setminus} \Gamma(S {\setminus} \{ x \})| > d/2\}| > |S|/2.
\end{equation*}
Then we say that $\Gamma$ is \emph{$k$-super-majority-unique}.
\end{definition}
\paragraph{Overview of approach.}
The exposition is quite technical but the underlying idea is simple:
Suppose we have a set $S$ of $k$ keys of the form $(x_1, x_2)$.
Then the set of first components $\pi_1(S) = \{ x_1 \mid (x_1, x_2) \in S \}$ has size $|\pi_{1}(S) = k^\varepsilon$ for some $\varepsilon \in (0, 1)$.
By the pigeonhole principle there must exist a subset $A \subseteq \pi_{1}(S)$ of size at at least $k^{\varepsilon}/2$ such that for every $a \in A$ we have that $|\{ (a, x_2) \in S\}| \leq 2k^{1-\varepsilon}$, otherwise we would have more than $k$ keys!
Therefore, the function $\Gamma((x_1, x_2)) = \Gamma_1(x_1) \conc \Gamma_2(x_2)$ is $k$-unique on the set $S$ provided that $\Gamma_1$ is $k^{\varepsilon}$-super-majority-unique and $\Gamma_2$ is $2k^{1-\varepsilon}$-majority-unique.
By forming these combinations for $O(\log k)$ choices of $\varepsilon$ we can cover all possible structures of two-character keys, ensuring that $\Gamma$ is $k$-unique while keeping the size of the right-hand side close to $O(k)$.
For a given choice of $\varepsilon$ we obtain $\Gamma_1$ and $\Gamma_2$ by componentwise concatenation of differently sized prefixes of the output a $k$-independent function.
Based on this idea we can then build a divide-and-conquer recursion similar to the one behind Theorem \ref{thm:majorityrecursion}.
The technical details behind this idea is deferred to Appendix \ref{app:prefix}
\section{Conclusion}
We have presented new constructions of $k$-independent hash functions that come close to Siegel's lower bound on the space-time tradeoff for such functions.
An interesting open problem is whether the gap to the lower bound can be closed.
From the perspective of efficient expanders it would be very interesting to achieve space $o(k)$ while preserving computational efficiency.
Of course, such a result is not possible via $k$-independence.

\section*{Acknowledgements}
We thank the STOC reviewers for insightful comments that helped us improve the exposition.

\section{Appendix: Details behind the prefix technique} \label{app:prefix}

The following lemma shows how we can construct a $k$-unique function over $U^{2}$ from a set of $k$-super-majority-unique functions over $U$.
\begin{lemma} \label{lem:superconcatenation}
Let $q$ be a positive integer. 
For $j = 1, 2, \dots, q$ let $\Gamma_{j} : \bit{\kappa}^{c} \to \bit{m_{j}}^{d}$ be $\min(2k^{j/q}, k)$-super-majority-unique and set $m = \max_{j}(m_{j} + m_{q-j+1})$.
Then the function $\Gamma : \bit{\kappa}^{c} \times \bit{\kappa}^{c} \to \bit{m}^{dq}$ defined for $(j,l) \in \{1, \dots, q\} \times \{1,\dots, d\}$ by
\begin{equation*}
	\Gamma(x_{1}, x_{2})_{(j-1)d + l} = (\Gamma_{j}(x_{1})_{l} \conc \Gamma_{q-j+1}(x_{2})_{l})[m] 
\end{equation*}
is $k$-unique. 
\end{lemma}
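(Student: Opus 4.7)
The plan is to show that every subset $S \subseteq \bit{\kappa}^c \times \bit{\kappa}^c$ with $|S| \leq k$ contains a key $(a, b)$ with a unique neighbor under $\Gamma$. Writing $\pi_i(S)$ for the projection to the $i$-th component and $F_a = \{(a, x_2) \in S\}$ for the fiber above $a \in \pi_1(S)$, I will set $s_1 := |\pi_1(S)|$ and choose $j \in \{1, \dots, q\}$ to be the smallest index with $s_1 \leq k^{j/q}$. This choice simultaneously guarantees that $\Gamma_j$ is super-majority-unique on $\pi_1(S)$ (whose size is at most $\min(2k^{j/q}, k)$) and that $\Gamma_{q-j+1}$ is super-majority-unique on $\pi_2(F_a)$ whenever $|F_a| \leq 2k/s_1$, since $2k/s_1 \leq 2k^{(q-j+1)/q}$ by $s_1 > k^{(j-1)/q}$.

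The key combinatorial step is a pigeonhole argument producing an $a \in \pi_1(S)$ that is both \emph{small-fiber} ($|F_a| \leq 2k/s_1$) and \emph{good} under $\Gamma_j$ (more than $d/2$ unique neighbors in $\Gamma_j(\pi_1(S))$). Both conditions hold on strictly more than half of $\pi_1(S)$: the first by averaging, since otherwise $\sum_a |F_a| > (s_1/2)(2k/s_1) = k$, contradicting $|S| \leq k$; the second directly from $\Gamma_j$'s super-majority-uniqueness. Inclusion--exclusion gives a common $a$. Applying majority-uniqueness of $\Gamma_{q-j+1}$ to $B := \pi_2(F_a)$ then yields some $b \in B$ with more than $d/2$ unique neighbors under $\Gamma_{q-j+1}$ in $\Gamma_{q-j+1}(B)$.

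With $a, b$ in hand, let $L_1 \subseteq \{1, \dots, d\}$ be the set of coordinates where $\Gamma_j(a)$ is unique in $\Gamma_j(\pi_1(S))$, and $L_2$ the set where $\Gamma_{q-j+1}(b)$ is unique in $\Gamma_{q-j+1}(B)$; by construction $|L_1|, |L_2| > d/2$, so $L_1 \cap L_2 \neq \emptyset$. For any $l \in L_1 \cap L_2$ and the output index $(j-1)d + l$, the value
\[
\Gamma(a, b)_{(j-1)d + l} = (\Gamma_j(a)_l \conc \Gamma_{q-j+1}(b)_l)[m]
\]
differs from the corresponding coordinate of every other $(a', b') \in S$: if $a' \neq a$ then the first $m_j$ bits differ by $l \in L_1$, while if $a' = a$ and $b' \neq b$ then $b' \in B$ and the next $m_{q-j+1}$ bits differ by $l \in L_2$. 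The choice $m = \max_j(m_j + m_{q-j+1})$ ensures the prefix operator merely pads with zeros and preserves both halves, so the bit-level disagreements carry through to the final output string.

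The main obstacle I anticipate is twofold: keeping the exponent bookkeeping $(j-1)/q$ and $(q-j+1)/q$ aligned so that the fiber-size bound $2k/s_1$ lands inside the independence parameter of $\Gamma_{q-j+1}$, and handling the boundary case $s_1 = 1$, where the range condition $k^{(j-1)/q} < s_1$ fails. The degenerate case is actually easy to dispatch separately: take $j = 1$ so that $\Gamma_q$, which is $\min(2k, k) = k$-super-majority-unique and hence applicable to any $B \subseteq \pi_2(S)$ of size at most $k$, supplies the unique second-component neighbor, while the first-component term $\Gamma_1(a)_l$ is constant across $S$ and therefore does not interfere with uniqueness at position $l$.
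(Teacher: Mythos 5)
Your proposal is correct and follows essentially the same route as the paper's proof: pick the scale $j$ from $|\pi_{1}(S)|$, use super-majority-uniqueness of $\Gamma_{j}$ together with a pigeonhole/averaging argument on $|S| \leq k$ to find a good first component whose fiber has size at most $\min(2k^{(q-j+1)/q}, k)$, then apply $\Gamma_{q-j+1}$ on that fiber and intersect the two sets of more-than-$d/2$ unique coordinates as in Lemma~\ref{lem:interleaving}. The only deviations are bookkeeping ones (intersecting two strict-majority subsets of $\pi_{1}(S)$ instead of pigeonholing among the good elements, and a separate treatment of $|\pi_{1}(S)| = 1$, which the paper's closed-interval choice of $j$ absorbs automatically).
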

\begin{proof}
Consider a set of keys $S \subseteq \bit{n}^{c} \times \bit{n}^{c}$ with $|S| \leq k$.
We will show that there exists an index $j \in \{ 1, \dots, q \}$ and a key $x = (x_{1},x_{2}) \in S$ 
such that $x$ has a unique neighbor with respect to $S$ and $\Gamma_{j} \conc \Gamma_{q-i+1}$.
Consider the set of first components of the set of keys $\pi_{1}(S)$. 
For some $j \in \{1, \dots, q\}$ we must have that $k^{(j-1)/q} \leq |\pi_{1}(S)| \leq k^{j/q}$.
By the super-majority-uniqueness properties of $\Gamma_{j}$ there must exist more than $k^{(j-1)/q}/2$ first components $x_{1} \in \pi_{1}(S)$ such that $\Gamma_{j}(x_{1})$ has more than $d/2$ unique neighbors with respect to $\pi_{1}(S)$.
Furthermore, because $|S| \leq k$, there exists at least one such $x_{1}$ that is a component of at most $\min(2k^{(q-j+1)/q}, k)$ keys.
Following a similar argument to the proof of Lemma \ref{lem:interleaving}, by the majority-uniqueness properties of $\Gamma_{q-j+1}$ there exists $x_{2} \in S_{1,x_{1}}$ such that we get a unique neighbor. 
\end{proof}
In the following lemma we use a single $k$-independent function to represent a set of $k$-super-majority-unique functions such that the concatenated product of these functions is $k$-unique.
The proof of the lemma is omitted since it follows from using the approach of Lemma \ref{lem:expanderparameters} to obtain expansion $|\Gamma(S)| > (7/8)d|S|$, 
and then applying Lemma \ref{lem:superconcatenation} to obtain the $k$-uniqueness property. 
\begin{lemma} \label{lem:superexpanderparameters}
For every choice of positive integers $c$, $q$, $\kappa$, let~$f : \bit{\kappa}^{c} \to \bit{2\kappa + 12}^{16cq}$ be a~${2^{\kappa}}$-independent function.
For $j = 1, \dots, q$ define $\Gamma_{j} : \bit{\kappa}^{c} \to \bit{\lceil ((j + 1)/q)\kappa \rceil + 12}^{16cq}$ by
\begin{equation*}
\Gamma_{j}(x)_{l} = f(x)_{l}[\lceil ((j + 1)/q)\kappa \rceil + 12] \textnormal{ for } l \in \{1, \dots, 16cq \}.
\end{equation*}
Let $m = \lceil (1 + 3/q)\kappa \rceil + 26$. 
Then the function $\Gamma : \bit{\kappa}^{c} \times \bit{\kappa}^{c} \to \bit{m}^{16cq^{2}}$ defined for $(j,l) \in \{1, \dots, q\} \times \{1,\dots, 16cq\}$ by
\begin{equation*}
\Gamma(x_{1}, x_{2})_{(j-1)16cq + l} = (\Gamma_{j}(x_{1})_{l} \conc \Gamma_{q-j+1}(x_{2})_{l})[m] 
\end{equation*}
is $k$-unique with probability at least $1 - 2^{-2c\kappa}$ .
\end{lemma}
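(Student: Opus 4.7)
\textbf{Proof plan for Lemma \ref{lem:superexpanderparameters}.} My plan is to show that each $\Gamma_j$ is $\min(2k^{j/q},k)$-super-majority-unique with high probability, and then invoke Lemma \ref{lem:superconcatenation} directly. The first observation I would make is that since $f$ is $2^\kappa$-independent and each $\Gamma_j$ is obtained by applying a deterministic, coordinate-wise prefix operation to $f$, each $\Gamma_j$ is itself $2^\kappa$-independent as a function $\bit{\kappa}^c \to \bit{m_j}^{16cq}$, where $m_j = \lceil((j+1)/q)\kappa\rceil + 12$. This is the crucial property that lets us plug each $\Gamma_j$ into a probabilistic expansion analysis.

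Next I would run the union-bound argument of Lemma \ref{lem:expanderparameters}, but tuned to give the stronger expansion bound $|\Gamma_j(S)| > (7/8)d|S|$ for every $S\subseteq \bit{\kappa}^c$ with $|S| \leq s_j := \min(2k^{j/q}, k)$. The standard calculation, using $\binom{n}{k}\leq (en/k)^k$, gives that the failure probability for $\Gamma_j$ at set size $i$ is bounded by roughly
\begin{equation*}
  \Bigl(\tfrac{2^{c\kappa} e}{i}\Bigr)^i \Bigl(\tfrac{e \cdot 2^{m_j}}{(7/8)i}\Bigr)^{(7/8)di}\Bigl(\tfrac{(7/8)i}{2^{m_j}}\Bigr)^{di},
\end{equation*}
with $d = 16cq$. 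Simplifying shows that this is at most $2^{-\Omega(c\kappa)}$ per $j$ as soon as $m_j \geq \log s_j + \kappa/q + O(1)$. Since $\log s_j \leq j\kappa/q + 1$ and $m_j \geq (j+1)\kappa/q + 12$, there are enough spare bits in the prefix for the bound to go through. Summing over $i \leq s_j$ and then over $j \in \{1,\dots,q\}$ should yield a total failure probability at most $2^{-2c\kappa}$ for appropriate constants; a short calculation with the exponents in $d = 16cq$ will confirm this, and it is essentially this bookkeeping of constants that I expect to be the main obstacle.

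Once each $\Gamma_j$ enjoys $(7/8)d|S|$-expansion on sets of size $s_j$, I translate that into $s_j$-super-majority-uniqueness: if $U\subseteq S$ denotes the set of vertices with at most $d/2$ unique neighbors, then $|\Gamma_j(S)| \leq d(|S|-|U|) + (d/2)|U|$, so $(7/8)d|S| < d|S| - (d/2)|U|$ forces $|U| < |S|/4 < |S|/2$. Thus more than half of $S$ has strictly more than $d/2$ unique neighbors, which is exactly Definition \ref{def:k-super-majority-unique}.

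Finally I would apply Lemma \ref{lem:superconcatenation} with these $\Gamma_j$'s, using $d=16cq$ and $m_j = \lceil((j+1)/q)\kappa\rceil + 12$. The output length demanded by that lemma is $\max_j(m_j + m_{q-j+1})$, and since $\lceil((j+1)/q)\kappa\rceil + \lceil((q-j+2)/q)\kappa\rceil \leq \lceil(1 + 3/q)\kappa\rceil + 2$, this is at most $\lceil(1+3/q)\kappa\rceil + 26 = m$, matching exactly the stated output width. The concatenated product therefore has degree $16cq \cdot q = 16cq^2$, domain $\bit{\kappa}^c \times \bit{\kappa}^c$, and is $k$-unique, completing the argument. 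The only non-routine step is the calibration in the second paragraph: one must check that the somewhat short prefix lengths $m_j$ really do suffice simultaneously for the stronger $(7/8)$-expansion on all the nested size thresholds $s_j$, while keeping the union-bound failure below $2^{-2c\kappa}$.
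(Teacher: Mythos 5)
Your proposal follows exactly the route the paper intends (the paper only sketches it): argue that each $\Gamma_j$ inherits $2^{\kappa}$-independence because truncating the characters of $f$ to $m_j \leq 2\kappa+12$ bits preserves uniformity, run the union bound of Lemma~\ref{lem:expanderparameters} tuned to expansion $|\Gamma_j(S)| > (7/8)d|S|$ on sets of size $\min(2k^{j/q},k)$, and then invoke Lemma~\ref{lem:superconcatenation}; your calibration of the prefix lengths and the degree $d=16cq$ is exactly the bookkeeping the constants were chosen to support, and it does go through. One small repair is needed in the pigeonhole step: the inequality $|\Gamma_j(S)| \leq d(|S|-|U|) + (d/2)|U|$ is false in general (an output shared by two vertices of $U$ is covered by neither term; e.g.\ two vertices of degree $d$ sharing exactly $d/2$ neighbors give $|\Gamma_j(S)| = \tfrac{3}{2}d$ while your bound gives $d$). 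The correct count lets each shared output absorb at least two edges: writing $u$ for the total number of unique neighbors, $|\Gamma_j(S)| \leq (d|S|+u)/2 \leq d|S| - (d/4)|U|$, so expansion above $(7/8)d|S|$ forces $|U| < |S|/2$, which is still exactly what Definition~\ref{def:k-super-majority-unique} requires.
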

\noindent We remind the reader that the notation $x[m]$ is used to denote the zero-padded $m$-bit prefix of $x$.
Taking the prefix of the concatenated output characters of $\Gamma_{j}$ and $\Gamma_{q-j+1}$ is done with the sole purpose of padding the output characters of $\Gamma$ to uniform length.

We now define a randomized recursive construction of a $k$-unique function similar to the one in Theorem \ref{thm:majorityrecursion}.
The parameters of the data structure are $\lambda$, $\kappa$, and $q$.
The parameters $\lambda$ and $\kappa$ determine the size of the universe and the desired $k$-uniqueness.
The parameter $q$ controls the space-time tradeoff of the character tables used in the recursion.
At the outer level of the recursion, for $i = 1, \dots, \lambda$, we repeatedly square the size of the domain, 
constructing $k$-unique functions of the form $\Gamma_{i} : \bit{\kappa}^{2^{i}} \to \bit{2\kappa + 26}^{144 \cdot 2^{i}}$.	
At level $i$ of the recursion, we obtain a $k$-independent function by composing $\Gamma_{i}$ with a simple tabulation function $h_{i+1} : \bit{2\kappa + 26}^{144 \cdot 2^{i}} \to \bit{2 \kappa + 12}^{48 \cdot 2^{i+1}}$.
The output of this function is then used to construct $\Gamma_{i+1}$, following the approach of Lemma \ref{lem:superexpanderparameters} with the parameter $q$ set to $3$.
For $i = 1, 2, \dots, \lambda$ the recursion is described by the following set of equations
\begin{equation*}
\begin{aligned}
\Gamma_{i}(x_{1}, x_{2})_{(j-1)48 \cdot 2^{i} + l} &= (\Gamma_{i,j}(x_{1})_{l} \conc \Gamma_{i,4-j}(x_{2})_{l})[2\kappa + 26]\\
\Gamma_{i,j}(x_{s})_{l} &= h_{i}(\Gamma_{i-1}(x_{s}))_{l}[\lceil ((j + 1)/3)\kappa \rceil + 12]\\
\Gamma_{0}(x_{s})_{l} &= \id^{(48)}(x_{s})_{l}[2\kappa + 26]
\end{aligned}
\end{equation*}
where the indices are $j \in \{1,2,3\}$, $l \in \{1, \dots, 48 \cdot 2^{i} \}$, and $s \in \{1, 2\}$.
We have defined $\Gamma_{0}$ by simply repeating the input $48$ times, padded to length $2\kappa + 26$, to ensure that it fits into the recursion. 
In practice we only require $h_{1} \circ \Gamma_{0}$ be be $k$-independent over domain $\bit{\kappa}$.  

To further reduce the space usage we apply the technique from Lemma \ref{lem:superexpanderparameters} to implement the character tables of $h_{i}$.
Each character table has domain $\bit{2\kappa + 26}$. 
We view this domain as consisting of two characters of length $\kappa' = \kappa + 13$.
We apply Lemma \ref{lem:superexpanderparameters} with parameters $c = 1$, $q$, and $\kappa = \kappa'$ to construct a function 
$\Upsilon : \bit{\kappa'}^{2} \to \bit{\lceil (1 + 3/q)\kappa' \rceil + 26}^{16q^{2}}$ that is $k$-unique with probability at least $1 - 2^{-2\kappa'}$. 
To facilitate fast evaluation we tabulate the $k$-independent function $f' : \bit{\kappa'} \to \bit{\lceil (1 + 1/q)\kappa' \rceil + 12}^{16q}$ used to construct $\Upsilon$.
The $j$th character table of $h_{i}$ is constructed by composing $\Upsilon$ with an appropriate simple tabulation function,
\begin{equation*}
h_{i,j} = \Upsilon \circ g_{i,j},
\end{equation*}
where $g_{i,j} : \bit{\lceil (1 + 1/q)\kappa' \rceil + 12}^{16q} \to \bit{2\kappa + 12}^{48 \cdot 2^{i}}$ is tabulated.

\begin{theorem} \label{thm:prefixrecursion}
There exists a randomized data structure that takes as input positive integers $\lambda$, $\kappa$, $q$, and initializes a function $\Gamma : \bit{\kappa}^{2^{\lambda}} \to \bit{2\kappa + 26}^{48 \cdot 2^{\lambda}}$.
In the word RAM model with word length $w$ the data structure satisifes the following:
\begin{itemize}
	\item The space usage is $O(2^{(1 + 3/q)\kappa + 2\lambda} q^{2} \kappa /w)$. 
	\item The evaluation time of $\Gamma$ is $O(2^{\lambda} q^{2} (\lambda + 2^{\lambda}\kappa/w))$.
	\item With probability at least $1 - 2^{-2(\kappa - 1)}$ we have that $\Gamma$ is $2^{\kappa}$-unique.
\end{itemize}
\end{theorem}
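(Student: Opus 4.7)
The plan is to follow the template of Theorem~\ref{thm:majorityrecursion}, establishing the three claims (space, time, failure probability) essentially independently, with the bulk of the work being a careful accounting of the parameters that survive after substituting $q=3$ into the outer recursion and the general $q$ into the construction of the character tables of $h_i$. Correctness is handled by induction on $i$, while space and time are obtained by summing the contributions across the $\lambda$ levels.

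For correctness I would prove by induction on $i$ that, conditioned on $\Gamma_{i-1}$ being $2^\kappa$-unique and on the single auxiliary object $\Upsilon$ being $2^\kappa$-unique, the function $\Gamma_i$ is $2^\kappa$-unique. The inductive step goes as follows: the composition $h_i \circ \Gamma_{i-1}$ is $2^\kappa$-independent by Lemma~\ref{lem:expanderhashing}, because each character table $h_{i,j} = \Upsilon \circ g_{i,j}$ is itself $2^\kappa$-independent (again by Lemma~\ref{lem:expanderhashing} applied to the fixed $\Upsilon$ and the fresh random $g_{i,j}$). With $q=3$ in the outer application of Lemma~\ref{lem:superexpanderparameters}, the prefix-concatenation step producing $\Gamma_i$ from the three prefix-slices $\Gamma_{i,1},\Gamma_{i,2},\Gamma_{i,3}$ is $2^\kappa$-unique with probability at least $1 - 2^{-2\cdot 48\cdot 2^{i-1}\kappa}$. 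The geometric sum over $i=1,\dots,\lambda$ plus the single $1-2^{-2\kappa'}$ contribution from $\Upsilon$ with $\kappa' = \kappa+13$ is bounded by $2^{-2(\kappa-1)}$, matching the theorem.

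For space, I would note that only two kinds of tables are stored: the function $f'$ underlying $\Upsilon$, of size $O(2^{\kappa'} q \kappa'/w)$, and, for each level $i\in\{1,\dots,\lambda\}$ and each character table of $h_i$, a simple tabulation function $g_{i,j}$ with $O(q)$ character tables, each of $2^{\lceil(1+1/q)\kappa'\rceil+12}$ entries of total output length $O(2^i \kappa)$ bits. Multiplying by the $O(2^{i})$ character tables of $h_i$ and summing the geometric series in $i$, the dominant term is the level $i=\lambda$ contribution, giving $O(2^{(1+3/q)\kappa + 2\lambda} q^2 \kappa/w)$, where the factor $2^{3\kappa/q}$ absorbs both the $(1+1/q)\kappa'$ exponent from the $g_{i,j}$ tables and the extra prefix padding inherent in $\Upsilon$. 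For time, let $T(i)$ denote the evaluation time of $\Gamma_i$; then $T(i) \leq 2T(i-1) + E_i$, where $E_i$ is the cost of evaluating $h_i$. Since $h_i$ has $O(2^{i})$ character tables, each of which requires one evaluation of $\Upsilon$ (costing $O(q^2)$ operations via the $16q^2$ output characters of Lemma~\ref{lem:superexpanderparameters}) plus an XOR of $O(q)$ words of length $O(2^i \kappa/w)$, we get $E_i = O(2^i q^2(1 + 2^i\kappa/w))$, and solving the recurrence yields the claimed $O(2^\lambda q^2(\lambda + 2^\lambda \kappa/w))$.

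The main obstacle I expect is not any single step but rather keeping the parameter bookkeeping consistent throughout, in particular verifying that the output character length of $\Gamma_i$ remains $2\kappa+26$ at every level (so that the recursion is well-typed and $h_{i+1}$ can be applied uniformly), and that the same single $\Upsilon$ with parameter $\kappa' = \kappa+13$ suffices for the character tables at every level $i$. The former reduces to checking that the prefix $(\Gamma_{i,j}(x_1) \conc \Gamma_{i,4-j}(x_2))[2\kappa+26]$ is long enough, which is immediate from $\lceil((j+1)/3)\kappa\rceil + 12 + \lceil((4-j+1)/3)\kappa\rceil + 12 \geq 2\kappa + 24$; the latter follows because the character-table domain $\bit{2\kappa+26}$ does not depend on $i$, so one fixed $\Upsilon$ and fresh per-$(i,j)$ outer tabulations $g_{i,j}$ suffice. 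Once these invariants are in hand, the union bound, the space sum, and the recurrence all collapse to the stated bounds.
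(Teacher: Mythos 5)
Your overall route is the same as the paper's: condition on the single $\Upsilon$ and on $\Gamma_1,\dots,\Gamma_\lambda$ being $2^\kappa$-unique, get $2^\kappa$-independence of $h_i\circ\Gamma_{i-1}$ from Lemma~\ref{lem:expanderhashing}, apply Lemma~\ref{lem:superexpanderparameters} with outer parameter $3$ at each level, bound space by the level-$\lambda$ tables, and solve the recurrence $T(i)\leq 2T(i-1)+O(2^iq^2(1+2^i\kappa/w))$. Two of your intermediate claims, however, do not hold as stated. First, the per-level failure probability you quote from Lemma~\ref{lem:superexpanderparameters} is wrong: the lemma gives $1-2^{-2c\kappa}$ where $c$ is the number of \emph{input} characters of the underlying independent function, so at level $i$ (where $c=2^{i-1}$) the guarantee is $1-2^{-2^i\kappa}$, not $1-2^{-2\cdot 48\cdot 2^{i-1}\kappa}$. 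This is harmless for the conclusion, since $\sum_{i\geq 1}2^{-2^i\kappa}+2^{-2\kappa'}<2^{-2(\kappa-1)}$ still holds, but you are claiming far more than the lemma delivers.

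The second issue is a real internal inconsistency between your space paragraph and your correctness paragraph. You account for each $g_{i,j}$ as $O(q)$ character tables indexed by $\bit{\lceil(1+1/q)\kappa'\rceil+12}$, i.e.\ by the output characters of $f'$. But your appeal to Lemma~\ref{lem:expanderhashing} for the $2^\kappa$-independence of the character table $h_{i,j}$ requires the simple tabulation to act on the output characters of the \emph{$k$-unique} function $\Upsilon$, i.e.\ $g_{i,j}$ must consist of $16q^2$ tables over $\bit{\lceil(1+3/q)\kappa'\rceil+26}$. Tabulating over $f'$'s output positions would only give independence if $f'$ itself (or the plain concatenation $f'(y_1)\conc f'(y_2)$ in the sense of Lemma~\ref{lem:interleaving}) were $k$-unique, which fails for these parameters — its output characters have only about $(1+1/q)\kappa'$ bits, far short of the roughly $\kappa'+\kappa$ bits needed; the whole point of the prefix pairing in Lemma~\ref{lem:superconcatenation} is to avoid that. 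With the correct structure the space per $g_{\lambda,j}$ is $O(q^2\,2^{(1+3/q)\kappa}\,2^\lambda\kappa/w)$ words, and summing over the $O(2^\lambda)$ character tables of $h_\lambda$ yields exactly the stated bound $O(2^{(1+3/q)\kappa+2\lambda}q^2\kappa/w)$ — so nothing breaks, but your remark that the factor $2^{3\kappa/q}$ ``absorbs'' the padding is covering the fact that your counted object is not the one your independence argument uses. Fix the type of $g_{i,j}$ and both paragraphs become consistent and match the theorem.
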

\begin{proof}
The total space usage is dominated by the simple tabulation functions used to implement the character tables of $h_{\lambda}$.
There are $O(2^{\lambda})$ simple tabulation functions $g_{i,j}$. 
Each of these has $O(q^{2})$ character tables with a domain of size $O(2^{(1 + 3/q)\kappa})$ that map to bit strings of length $O(2^{\lambda}\kappa)$.
This gives a total space usage of $O(2^{(1 + 3/q)\kappa + 2\lambda)} q^{2} \kappa /w)$.

Let $T(i)$ denote the evaluation time of $\Gamma_{i}$.
For $i = 1$ we can evaluate $\Gamma_{1}$ by performing a constant number of lookups into $h_{0}$ and combine prefixes of the output in $O(1)$ time.
For $i > 1$ evaluating $\Gamma_{i}$ takes two evaluations of $\Gamma_{i-1}$ and an additional amount of work combining prefixes that is only a constant factor greater than the time required to read the output of $h_{i} \circ \Gamma_{i-1}$.
Evaluating $h_{i}$ is performed by $O(2^{i})$ evaluations of character tables of the form $g_{i,j} \circ \Upsilon$.
The degree of $\Upsilon$ is $O(q^{2})$ and it has an evaluation time that is proportional to the degree.
We therefore perform $O(q^{2})$ lookups into the character tables of $g_{i,j}$ where we read bit strings of length $O(2^{i}\kappa)$.
The recurrence describing the evaluation time of $\Gamma_{i}$ takes the form
\begin{equation*}
T(i) \leq 
\begin{cases} 2T(i-1) + O(2^{i}q^{2}(1 + 2^{i}\kappa/w)) & \mbox{if } i > 1 \\ 
O(1) & \mbox{if } i = 1. 
\end{cases}
\end{equation*}
The solution to the recurrence is $O(2^{i}q^{2}(i + 2^{i}(n + \kappa)/w))$.

The construction fails if $\Upsilon$ fails to be $k$-unique or if $\Gamma_{1}, \dots, \Gamma_{\lambda}$ fails to be $k$-unique. 
According to Lemma \ref{lem:superexpanderparameters} this happens with probability less than $2^{-2\kappa'} + \sum_{i=1}^{\lambda}2^{-2^{i}\kappa} < 2^{-2(\kappa - 1)}$   
\end{proof}

\begin{corollary}\label{cor:prefixcorollary}
There exists a randomized data structure that takes as input positive integers $u$, $r = u^{O(1)}$, $t$, $k$ and selects a family of functions $\mathcal{F}$ from $[u]$ to $[r]$. 
In the word RAM model with word length $w$ the data structure satisfies the following:
\begin{itemize}
\item The space used to represent $\mathcal{F}$, as well as a function $f \in \mathcal{F}$, is $O(ku^{1/t}t^{2}\log(k)/w)$ words.
\item The evaluation time of $f$ is $O(t^{2} (\log(k)/\log u)(\log(\log(u)/\log k) + \log(u)/w))$.
\item $\mathcal{F}$ is $k$-independent with probability at least $1 - k^{-2}$.
\end{itemize}
\end{corollary}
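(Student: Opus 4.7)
The plan is to mirror the approach used for Corollaries~\ref{cor:simple}, \ref{cor:majorityrecursion}, and~\ref{cor:linear}: instantiate Theorem~\ref{thm:prefixrecursion} with carefully chosen parameters to produce a $k$-unique function $\Gamma$ whose domain contains $[u]$, and then compose $\Gamma$ with a simple tabulation function with $k$-independent character tables via Lemma~\ref{lem:expanderhashing} to obtain the family $\mathcal{F}$. The interesting part here, compared to the earlier corollaries, is that we have a new free parameter $q$ in Theorem~\ref{thm:prefixrecursion} that lets us trade space against time, so the parameter choice must be made more carefully to obtain the claimed bounds.

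Concretely, I would set $\kappa = \lceil \log k \rceil$ so that $2^\kappa = \Theta(k)$ and so that the failure probability $2^{-2(\kappa-1)}$ from Theorem~\ref{thm:prefixrecursion} becomes $O(k^{-2})$, matching the target $1 - k^{-2}$. I would set $\lambda = \lceil \log(\log(u)/\log(k)) \rceil$ so that the domain $\bit{\kappa}^{2^\lambda}$ has size at least $2^{\log u} = u$, and $[u]$ can be embedded naturally. The crucial choice is $q = \max(3,\lceil 3t\log(k)/\log(u) \rceil)$. The motivation is to make the ``extra'' factor $k^{3/q}$ in the space bound of Theorem~\ref{thm:prefixrecursion} collapse to $u^{1/t}$: indeed, $k^{3/q} \le k^{\log(u)/(t\log k)} = u^{1/t}$ whenever $3t\log(k)/\log(u) \geq 3$.

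Once these choices are in place, verifying the three bullet points becomes a routine plug-in exercise. For space, Theorem~\ref{thm:prefixrecursion} gives $O(2^{(1+3/q)\kappa + 2\lambda} q^2 \kappa / w)$; substituting $2^\kappa = \Theta(k)$, $k^{3/q} = O(u^{1/t})$, $2^{2\lambda} = O((\log u/\log k)^2)$, and $q^2 (\log u/\log k)^2 = O(t^2)$, the product collapses to $O(k\,u^{1/t}\,t^2\,\log(k)/w)$, as required, and the cost of the composed simple tabulation function from Lemma~\ref{lem:expanderhashing} is dominated by this. For evaluation time, Theorem~\ref{thm:prefixrecursion} gives $O(2^\lambda q^2(\lambda + 2^\lambda\kappa/w))$; substituting the same quantities yields exactly $O(t^2 (\log(k)/\log u)(\log(\log(u)/\log k) + \log(u)/w))$.

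The main obstacle I anticipate is handling the degenerate parameter regimes cleanly: specifically when $k$ is small relative to $u^{1/t}$ (so the ``natural'' choice of $q$ would drop below the constant $3$ required by the internal use of Lemma~\ref{lem:superexpanderparameters}), and when $\log(u)/\log(k)$ is $O(1)$ so that $\lambda = 0$ and the outer recursion of Theorem~\ref{thm:prefixrecursion} is trivial. In both cases one has to argue that clamping $q$ (respectively $\lambda$) from below at a constant does not violate any of the three stated bounds, since the dominant terms in the space and time bounds are controlled by $k u^{1/t}$ rather than by the logarithmic factors. With that care taken, the parameter choices above yield the corollary.
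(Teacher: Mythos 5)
Your parameter choices essentially coincide with the paper's proof: it takes $\kappa=\lceil\log k\rceil+1$, $\lambda=\lceil\log(\log(u)/\log k)\rceil+1$ and $q=\lceil 3t\log(k)/\log u\rceil$, and then composes $\Gamma$ with a simple tabulation function as in Lemma~\ref{lem:expanderhashing}. The extra $+1$ in $\kappa$ is what turns your bound $2^{-2(\kappa-1)}=O(k^{-2})$ into the stated failure probability $\leq k^{-2}$, and no clamping of $q$ is needed: since $q\geq 3t\log(k)/\log u$ one always has $2^{3\kappa/q}=O(u^{1/t})$, even when $q\in\{1,2\}$, and Lemma~\ref{lem:superexpanderparameters} only needs $q$ to be a positive integer (the internal recursion of Theorem~\ref{thm:prefixrecursion} fixes its own $q=3$ separately). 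These are minor constant-level adjustments.

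The genuine gap is in your final sentence of the plug-in computation, where you assert that ``the cost of the composed simple tabulation function from Lemma~\ref{lem:expanderhashing} is dominated by this.'' The output characters of $\Gamma$ lie in $\bit{2\kappa+26}$, so each character table of the composed function $h$ is a $k$-independent function on a domain of size $\Theta(k^{2})$. Storing these character tables as plain tables costs $\Theta(2^{\lambda}k^{2}\log(r)/w)$ words, which is \emph{not} dominated by $O(ku^{1/t}t^{2}\log(k)/w)$ exactly in the regime $k\gg u^{1/t}$ that this corollary is designed for (space close to $k$); replacing them by degree-$(k-1)$ polynomials would restore the space bound but destroy the stated evaluation time. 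This is why the earlier corollaries could get away with the domination claim (there the character-table domains have size about $ku^{\Theta(1/t)}$) but this one cannot. The paper closes the gap by implementing the character tables of $h$ with the same $\Upsilon$-based prefix construction used inside Theorem~\ref{thm:prefixrecursion} (i.e., $\Upsilon$ composed with small simple tabulation functions), which keeps both the space and the time within the claimed bounds. Once you add this step (or an equivalent compact implementation of the $k$-independent character tables of $h$), the rest of your calculation goes through.
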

\begin{proof}
Assume without loss of generality that $k \leq u$ and apply Theorem \ref{thm:prefixrecursion} with parameters 
$\lambda = \lceil \log(\log(u)/\log k) \rceil + 1$, $\kappa = \lceil \log k \rceil + 1$, and $q = \lceil 3t \log(k)/\log u \rceil$.
This gives is a function $\Gamma$ that is $k$-unique over $[u]$ with probability at least $1 - k^{-2}$.
We compose $\Gamma$ with a suitable simple tabulation function $h$ that maps to elements of $[r]$.
Implementing $h$ using $\Upsilon$ we get the same bounds on the space usage, evaluation time, and probability of failure as for the data structure used to represent $\Gamma$.
\end{proof}

\begin{remark}
The construction in Corollary \ref{cor:prefixcorollary} presents an improvement in the case where we wish to minimize the space usage. 
For $w = \Theta(\log u)$ and $t = \lceil \log u \rceil$ we get a space usage of $O(k \log(u) \log( k))$ and an evaluation time of $O(\log (u)\log (k)\log (\log (u) /\log (k)))$.
In comparison, for these parameters Corollary \ref{cor:simple} gives a space usage of $O(k \log^{2} u)$ and an evalution time of $O(\log^{2}(u)\log(k))$. 
\end{remark}

\bibliographystyle{abbrv}
\bibliography{thesis}
\end{document}